\def\@journal{arXiv.org}
\newif\ifdetail
\newif\ifnarrow
\def\CHECKED#1{}
\newtheorem{theorem}{Theorem}[section]
\newtheorem{corollary}[theorem]{Corollary}
\newtheorem{lemma}[theorem]{Lemma}
\newtheorem{proposition}[theorem]{Proposition}
\newdefinition{definition}[theorem]{Definition}
\newtheorem{example}[theorem]{Example}
\newproof{proof}{Proof}
\def\[{\begin{eqnarray*}}
\def\]{\end{eqnarray*}}
\newenvironment{Enumerate}{\enumerate[{\kern8pt}1.]\itemsep=3pt}%
    {\endenumerate}
\def\fnnum{\the\@mpfn}
\def\@beginproof#1#2{\trivlist\let\baselinestretch\@blstr
   \item[\hskip\labelsep{\bf #1.}]\rmfamily}
\def\@opargbeginproof#1#2#3{\trivlist\let\baselinestretch\@blstr
      \item[\hskip\labelsep{\bf #1\ (#3).}]\rmfamily}
\def\pushlabel{\let\olabel=\label
    \let\outer@currentlabel=\@currentlabel
    \def\label##1{\begingroup\let\t@currentlabel=\@currentlabel
	\def\@currentlabel{\outer@currentlabel.\t@currentlabel}\olabel{##1}%
	\endgroup\ignorespaces}}
\def\appendixeqno{\def\theequation{A.\@arabic\c@equation}%
    \setcounter{equation}0}
\def\ilabel{\@ifnextchar*\ilabel@\ilabel@@}
\def\ilabel@#1#2{\xdef\ilabel@tmp##1{%
	\noexpand\index{##1|hyperlink{\ilabel@last}}}%
	\ilabel@tmp{#2}\ignorespaces}
\def\ilabel@@#1#2{\hypertarget{#1}{}\label{#1}\gdef\ilabel@last{#1}%
    \index{#2|hyperlink{#1}}\ignorespaces}
\def\itemlabel#1{\olabel{item-#1}\label{#1}\hypertarget{#1}{}\ignorespaces}
\def\itemref#1{\hyperlink{#1}{\ref*{item-#1}}}
\newtoks\x@footnote@toks
\def\xfootnote#1{\ifx\hline\LT@hline\footnote{#1}\relax\else
    \footnote{}\begingroup\toks0={#1}%
    \edef\x@set@footnote@toks{%
	\global\noexpand\x@footnote@toks={\the\x@footnote@toks
    \noexpand\footnotetext[\the\c@footnote]{\the\toks0}}}%
    \x@set@footnote@toks
    \endgroup\fi}
\def\emitxfootnote{\the\x@footnote@toks\global\x@footnote@toks={}}
\def\cal{\mathcal}
\let\othinmskip=\,
\def\,{\penalty5000\othinmskip}
\def\not#1{\mathrel{\setbox0=\hbox{\(\mathord{#1}\)}
                          \setbox1=\hbox{\(\mathord{/}\)}
                \dimen0=\wd0 \advance\dimen0 by -\wd1 \divide\dimen0 by 2
                \hbox{\hspace{\dimen0}\copy1
                                \hspace{-\wd1}\hspace{-\dimen0}\copy0}}}
\def\mathch#1{{\mathchoice
		{\let\msty\displaystyle\let\ssty\scriptstyle #1}
		{\let\msty\textstyle\let\ssty\scriptstyle #1}
		{\let\msty\scriptstyle\let\ssty\scriptscriptstyle #1}
		{\let\msty\scriptscriptstyle\let\ssty\scriptscriptstyle #1}}}
\def\models{\mathrel{\mathch{%
    \setbox0=\hbox{$\msty \mathord{\vert}$}
    \dimen0=\wd0 \divide\dimen0 by 4
    \mathsurround=0pt\hbox to \dimen0{\copy0\hskip-5pt\hfil}\mathord{=}}}}
\let\Section=\section
\def\Subsection#1{\subsection{#1}\vskip5pt}
\def\Case#1{\vskip4pt\par\noindent{\it Case: #1}}
\def\Cases#1{\vskip4pt\par\noindent{\it Cases: #1}}
\def\Subcase#1{\vskip4pt\par\noindent{\it Subcase: #1}}
\def\Hbox#1#2{\setbox0=\hbox{#2}\wd0=#1\copy0}
\def\Qed{\par\vskip-1.75\baselineskip\hfill\qed}
\def\mtight{\medmuskip=0.8\medmuskip \thickmuskip=0.8\thickmuskip}
\def\Mtight{\medmuskip=0.6\medmuskip \thickmuskip=0.6\thickmuskip}
\def\mathbf#1{\mathord{\mathchoice{\mbox{\rm\bf #1}}{\mbox{\rm\bf #1}}
		 {\mbox{\scriptsize\rm\bf #1}}{\mbox{\tiny\rm\bf #1}}}}
\def\Underbrace#1#2{\underbrace{#2}_{{}^{\mbox{\scriptsize \(#1\)}}}}
\def\NL#1{\\&&#1}\else\def\NL#1{}\fi
\def\na#1{\Eqncomment{#1}\noalign{\vskip3pt}}
\def\bnfdef{\mathrel{{:}{:}{=}}}
\def\bnfor{\mathrel{\mid}}
\def\r#1{\hbox{\rm \((#1)\)}}
\def\Y{\mathbf{Y}}
\def\g{\gamma}
\def\G{\Gamma}
\def\P{\Pi}
\def\IV{\mathbf{Var}}
\def\IE{\mathbf{Exp}}
\def\PT{\mathbf{PTExp}}
\def\CTE{\mathbf{CTExp}^{\scriptstyle\!{\eqtyp}}}
\def\CTEp{\mathbf{CTExp}^{\scriptstyle\!{\peqtyp}}}
\def\TFTE{\mathbf{TF}}
\def\PFTE{\mathbf{PF}}
\def\NFTE{\mathbf{NF}}
\def\TV{\mathbf{T\hskip-.5pt Var}}
\def\TE{\mathbf{TExp}}
\def\ienv{{\mathord{\rho}}}
\def\Ienv{\mathord{\tilde{\ienv}}}
\def\tenv{{\mathord{\eta}}}
\def\ttenv{{\mathord{\xi}}}
\def\Pow#1{{\cal P}(#1)}
\def\to{\penalty2000\mathbin{\rightarrow}\penalty1000}
\def\conj{\mathbin{\wedge}}
\def\impl{\mathbin{\supset}}
\def\Conj{\mathbin{\times}}
\def\Impl{\penalty2000\mathbin{\rightarrow}\penalty1000}
\def\fx{\mu}
\def\fix#1{\fx #1\mskip-2mu{.}\penalty2000\mskip1mu}
\def\II{{\mathord{\,\cal I}}}
\def\IIt{\II^{\scriptscriptstyle{+}}\!}
\def\I#1{\II(#1)}
\def\It#1{\IIt(#1)}
\def\Ip#1{\II'(#1)}
\def\Il#1{\II_{\scriptscriptstyle \!L}(#1)}
\def\Ilp#1{\II'_{\scriptscriptstyle \!L}(#1)}
\def\V{{\cal V}}
\def\NF{\cal N}
\def\Kh{{\cal K}_h}
\def\Kn{{\cal K}_n}
\def\F{{\cal F}}
\def\N{\mathbb{N}}
\def\W{{\cal W}}
\def\VI#1#2{\mathord{[\![{#1}]\!]}^{\V}_{#2}}
\def\zfset#1#2{\{\;#1\;\penalty1000|\penalty1000\;#2\;\}}
\def\Zfset#1#2{\left\{\;#1\;\penalty1000\left|\penalty1000\;#2\;\right.\right\}}
\def\Choice#1{\left\{\setbox0=\hbox{\raise 2pt\hbox{\(\displaystyle
    \renewcommand\arraystretch{1.4}
    \arraycolsep=8pt
    \begin{array}{@{\mskip4mu}ll}#1\end{array}\)}}
    \dimen0=\ht0\advance\dimen0 by -6pt\ht0=\dimen0
    \dimen0=\dp0\advance\dimen0 by -6pt\dp0=\dimen0
    \copy0\right.} 
\def\zero{\mathbf{0}}
\def\suc{\mathbf{s}}
\def\nat{\mathbf{nat}}
\def\dothinrel#1{{\setbox0=\hbox{$\mathsurround=0pt\msty #1$}
	            \setbox1=\hbox{$\mathsurround=0pt\msty \rightarrow$}
		    \ht0=\ht1 \dp0=\dp1 \hbox{\copy0}}}
\def\thinrel#1{\mathrel{\mathch{\dothinrel{#1}}}}
\def\dostrel#1#2{{
	\setbox0=\hbox{$\mathsurround=0pt\ssty O #1$}
	\setbox1=\hbox{$\mathsurround=0pt\scriptstyle O #1$}
	\setbox2=\hbox{$\mathsurround=0pt\msty #2$}
	\setbox3=\hbox{$\mathsurround=0pt\ssty #1$}
		     \dimen0=7 \ht0 \advance\dimen0 by \ht1
		     \multiply\dimen0 by 110 \divide\dimen0 by 800
		     \advance\dimen0 by -\dp0
		     \advance\dimen0 by -\ht2
		     \setbox1=\hbox{\lower\dimen0
			 \hbox{$\mathsurround=0pt\ssty\strut#1$}}
		     \dimen1=\ht3 \advance\dimen1 by -\dimen0
		     \dp1=0pt\ht1=\dimen1
		     \stackrel{\hbox{\copy1}}{#2}}}
\def\strel#1#2{\mathrel{\mathch{\dostrel{#1}{#2}}}}
\def\doannorel#1#2{{
	\setbox0=\hbox{$\mathsurround=0pt\ssty O #2$}
	\setbox1=\hbox{$\mathsurround=0pt\scriptstyle O #2$}
	\setbox2=\hbox{$\mathsurround=0pt\ssty #2$}
	\setbox3=\hbox{$\mathsurround=0pt\msty #1$}
		     \dimen0=3 \ht0 \advance\dimen0 by \ht1
		     \multiply\dimen0 by 200 \divide\dimen0 by 400
		     \advance\dimen0 by -\ht2
		     \advance\dimen0 by -\ht3
		     \setbox0=\hbox{\raise\dimen0
				 \hbox{$\mathsurround=0pt\ssty#2$}}
		     \ht0=\ht2
		     \dp0=0pt
		     \setbox1=\hbox{$\mathsurround=0pt\msty
				     \mathop{#1}\limits_{\hbox{\copy0}}$}
		     \dimen0=0pt
		     \ifdim \ht2 > 2pt
		     \advance\dimen0 by 5pt
		     \fi
		     \dp1=\dimen0
		     \hbox{\copy1}}}
\def\annorel#1#2{\mathrel{\mathch{\doannorel{#1}{#2}}}}
\def\tuple#1{{\langle\,}#1{\,\rangle}}
\def\Ifr#1#2#3{\infer[\mbox{\rm\small\(#1\)}]{#3}{#2}}
\def\ifr#1#2#3{\infer[\mbox{\rm\small(\(#1\))}]{#3}{#2}}
\def\Vdots#1{\lower 4pt
    \hbox{\vbox to #1pt{\leaders\vbox to 4pt{\vfil\hbox{.}}\vfill}}}
\def\Derive#1#2{\setbox0=\hbox{\Vdots{#1}}
              \setbox1=\hbox{\(\displaystyle\strut\mskip8mu minus 4mu#2\)}
              \dimen0=\ht0 \advance\dimen0 by \dp0
              \advance\dimen0 by -\dp1 \advance\dimen0 by -\ht1
              \divide\dimen0 by 2
              \copy0 \raise\dimen0\copy1 \kern-\wd1}
\def\derive#1{\Derive{15}{#1}}
\def\stack#1#2{\setbox0=\hbox{\(\displaystyle\hss #2 \hss\)}
               \setbox1=\hbox{\(\displaystyle \mbox{\(#1\)}
                                                \atop \hbox{\copy0}\)}
               \dimen0=\dp1 \advance\dimen0 by -\dp0
               \raise\dimen0\hbox{\copy1}}
\def\subt#1#2{#1\mathrel{\vdash}\penalty1000 #2}
\def\typ#1#2{#1\mathrel{\vdash}\penalty1000 #2}
\def\Typ#1#2#3{#2\mathrel{\vdash_{#1}\,}\penalty1000 #3}
\def\LAmtyp#1#2{\Typ{\mathbf{LA}\fx\mskip-5mu}{#1}{#2}}
\def\Rlz#1{\mathrel{\mathbf{r}_{#1}}}
\def\O{\mathord{\bullet}}
\def\t{\top}
\def\tvariant{\(\mskip-2mu\top\mskip-2mu\)-variant}
\def\tail#1{{#1}^{t}}
\def\C{{\cal C}}
\def\B{\mathop{\mbox{!}}}
\def\Bar#1{\setbox0=\hbox{\(#1\)} \ifdim\wd0 < 8pt
    \overline{#1}%
    \else
    \overline{#1\hskip-.5pt}\hskip.5pt%
    \fi}
\newbox\subtypbox
\def\subtyp{\strel{\mbox{$\mathsurround=0pt\prec$}}%
    {\setbox0=\hbox{\lower 2pt\hbox{$\mathsurround=0pt=$}}\ht0=2.8pt\copy0}}
\def\psubtyp{\preceq}
\def\rsubtyp{\psubtyp}
\def\eqtyp{\cong}
\def\peqtyp{\simeq}
\def\npeqtyp{\mathrel{\not\simeq}}
\def\geqtyp{\sim}
\def\ngeqtyp{\mathrel{\not\sim}}
\def\Lequiv{\mathrel{\leftrightarrow}}
\def\LAmequiv{\mathrel{\leftrightarrow_{\mathbf{LA}\fx}\!}}
\def\component{\le}
\def\lam#1{\lambda{#1}.\:}
\def\app#1{#1\mskip1mu
	{\setbox0=\hbox{\(\mathsurround=0pt#1\)}\ifdim\wd0>18pt\,\fi}}
\def\pair#1#2{{\langle\,}{#1},\:{#2}{\,\rangle}}
\def\pleft{\mathbf{p}_{\mathbf{1}}\,}
\def\pright{\mathbf{p}_{\mathbf{2}}\,}
\def\proj#1{\mathbf{p}_{\mathbf{#1}}\,}
\def\ileft{\mathbf{i}_{\mathbf{1}}\,}
\def\iright{\mathbf{i}_{\mathbf{2}}\,}
\def\true{\top}
\def\dminus{\mathbin{\strel{.}{\thinrel{-}}}}
\def\ct{\annorel{\rightarrow}{{}_\beta}}
\def\ctc{\strel{\ast}{\ct}}
\def\cteq{\annorel{\leftrightarrow}{{}_\beta}}
\def\beq{\annorel{=}{{}_\beta}}
\def\lm{$\lambda{\fx}$}
\def\lA{$\lambda${\bf A}}
\def\miK4{{\bf miK4}}
\def\GL{{\bf GL}}
\def\iGL{{\bf iGL}}
\def\miGL{{\bf miGL}}
\def\miGLC{{\bf miGLC}}
\def\LA{{\bf LA}}
\def\LAm{{\bf LA}$\fx$}
\newbox\Natbox
\def\Nat{\copy\Natbox}
\def\Dom{\mathit{Dom}}
\def\FV#1{\mathit{FV}(#1)}
\def\FTV#1{\mathit{FTV}(#1)}
\def\ETV#1{\mathit{ETV}(#1)}
\def\PETV#1{\mathit{ETV}^{+}(#1)}
\def\NETV#1{\mathit{ETV}^{-}(#1)}
\def\PNETV#1{\mathit{ETV}^{\pm}(#1)}
\def\NPETV#1{\mathit{ETV}^{\mp}(#1)}
\def\Shift#1{\mathord{{}^{\ulcorner}\mskip-2mu#1\mskip-1mu{}^{\urcorner}}}
\newbox\Shiftbox
\def\Canon#1{#1^{c\scriptscriptstyle\eqtyp}}
\def\Canonp#1{#1^{c\scriptscriptstyle\peqtyp}}
\def\Canong#1{#1^c}
\def\Dp{\mathit{dp}}
\def\Odp{{\mathord{\Dp_{\O}}}}
\def\POdp{{\mathord{\Dp^{+}_{\O}}}}
\def\NOdp{{\mathord{\Dp^{-}_{\O}}}}
\def\PNOdp{{\mathord{\Dp^{\pm}_{\O}}}}
\def\NPOdp{{\mathord{\Dp^{\mp}_{\O}}}}
\def\Idp{{\mathord{\Dp_{\Impl}}}}
\def\PIdp{{\mathord{\Dp^{+}_{\Impl}}}}
\def\NIdp{{\mathord{\Dp^{-}_{\Impl}}}}
\def\PNIdp{{\mathord{\Dp^{\pm}_{\Impl}}}}
\def\NPIdp{{\mathord{\Dp^{\mp}_{\Impl}}}}
\def\PDp{{\mathord{\Dp^{+}}}}
\def\NDp{{\mathord{\Dp^{-}}}}
\def\PNDp{{\mathord{\Dp^{\pm}}}}
\def\NPDp{{\mathord{\Dp^{\mp}}}}
\def\acc{\mathrel{\rhd}}
\def\tacc{\strel{\setbox0=\hbox{\lower 2.5pt\hbox{$\scriptstyle\ast$}}%
    \ht0=0pt\dp0=0pt\copy0}{\rhd}}
\def\pacc{\strel{\setbox0=\hbox{\lower 2pt\hbox{$\scriptscriptstyle+$}}%
    \ht0=0pt\dp0=0pt\copy0}{\rhd}}
\def\opacc{\mathrel{\lhd}}
\def\Comp{\mathit{Comp}}
\def\ptilde#1{\setbox0=\hbox{$#1'$}\ht0=6pt\tilde{\copy0}}
\begin{document}
\begin{frontmatter}
\title{A modal typing system for self-referential
    programs and specifications\footnote{%
	Research supported in part by
	2014 Special Researcher Program of Ryukoku University.
	Some preliminary results to this paper
	have been presented in \cite{nakano-lics00,nakano-tacs01}.}
\ifdetail
\\ {\large(An extended version with detailed proofs)}
\fi
	}
\author{Hiroshi Nakano}
\address{Department of Applied Mathematics and Informatics, \\
    Ryukoku University, Seta, Otsu, Shiga 520-2194, Japan \\
    Email: {\tt nakano@math.ryukoku.ac.jp}}
\begin{abstract}
This paper proposes a modal typing system that enables us
to handle self-referential formulae, including
ones with negative self-references,
which on one hand, would introduce a logical contradiction,
namely Russell's paradox, in the conventional setting,
while on the other hand, are necessary to capture a certain class
of programs such as fixed-point combinators
and objects with so-called binary methods in object-oriented programming.
The proposed system provides
a basis for axiomatic semantics of such a wider range
of programs and a new framework for natural construction
of recursive programs in the proofs-as-programs paradigm.
\end{abstract}
\begin{keyword}
typed lambda calculi \sep self-reference \sep proofs-as-programs \sep
modal logic \sep fixed-point combinators \sep object oriented programming \sep
termination verification.
\end{keyword}
\end{frontmatter}

\ifdetail
\clearpage
\tableofcontents

\clearpage
\fi

\Section{Introduction}\label{intro-sec}

Although recursion, or self-reference, plays an indispensable role in both
programs and their specifications, it also introduces serious difficulties
into their formal treatment.
It is still far from obvious how to capture it
in an axiomatic semantics such as the formulae-as-types notion of
construction \cite{howard}.
Only a rather restricted class of recursive programs (and specifications)
has been captured in this direction
as (co)inductive proofs over the (co)inductive data structures
(see e.g.,
\cite{constable,hayashi:nakano,paulin-mohring,kobayashi:tatsuta,tatsuta}),
and, for example, negative self-references, which would be necessary to handle
a certain range of programs such as fixed-point combinators
and objects with so-called binary methods in object-oriented
programming, still remain out of the scope.
In this paper, a modal logic that provides a basis for
capturing such a wider range of programs
in the proofs-as-programs paradigm is proposed.
The logic is presented as a modal typing system with recursive types
for the purpose of presentation.
Its soundness with respect to a realizability interpretation,
which implies the convergence of well-typed programs, is shown.

\subsection*{Difficulty in binary-methods}
Consider, for example, the specification
\(\Nat(n)\) of objects that represent a natural number \(n\) with
a {\em method} which returns an object of \(\Nat(n{+}m)\)
when one of \(\Nat(m)\) is given.
It could be represented by a self-referential specification such as
\begin{eqnarray*}
\Nat(n) &\;\equiv\;& ((n = 0) + (n > 0 \conj \Nat(n{-}1)) \NL{\mskip80mu}
	{}\times (\forall m.\: \Nat(m) \Impl \Nat(n{+}m))),
\end{eqnarray*}
where we assume that \(n\) and \(m\) range over the set of natural numbers;
\({+}\), \({\times}\) and \(\Impl\) are type constructors for direct sums,
direct products and function spaces, respectively;
\({\conj}\) and \(\forall\) have standard logical (annotative) meanings.
Although it is not obvious whether this self-referential specification
is meaningful in a certain mathematical sense,
it could be a first approximation of the specification we want
since this can be regarded as a refined version of recursive types
which have been widely adopted as a basis for object-oriented type systems
\cite{abadi:cardelli,bruce:cardelli:pierce}.
At any rate, if we define an object \(\zero\) as
\[
    \zero \equiv \pair{\ileft \ast}{\lam{x}x},
\]
then it would satisfy \(\Nat(0)\),
where \(\ileft{\!}\) is the injection into the first
summand of direct sums and \(\ast\) is a constant.
We assume that any program satisfies annotative formulae such as \(n = 0\)
whenever they are true.
We can easily define a function that satisfies
\(\forall n.\: \forall m.\: \Nat(n) \Impl \Nat(m) \Impl \Nat(n{+}m)\)
as
\[
    \mathbf{add}\:x\:y \equiv \pright x\:y,
\]
or
\[
    \mathbf{add}'\:x\:y \equiv \pright y\:x,
\]
where \(\pright{\!}\) extracts the second components, i.e.,
the method of addition in this particular case, from pairs.
We could also define the successor function as a recursive
program as
\[
    \suc\: x \equiv
    \pair{\iright{x}}{\lam{y}{\mathbf{add}\:x\:(\suc\:y)}}
\]
or
\[
    \suc'\: x \equiv
    \pair{\iright{x}}{\lam{y}{\mathbf{add}'\:x\:(\suc'\:y)}}.
\]
In spite of the apparent symmetry between
\(\mathbf{add}\) and \(\mathbf{add}'\), which are both supposed to satisfy
the same specification,
the computational behaviors of \(\suc\) and \(\suc'\) are completely different.
We can observe that \(\suc\) works as expected, but \(\suc'\) does not.

For example, \(\pright (\suc\:\zero)\:\zero\) would be evaluated as
\begin{eqnarray*}
\pright (\suc\:\zero)\:\zero
&\rightarrow& (\lam{y}\mathbf{add}\:\zero\:(\suc\:y))\:\zero \\
&\rightarrow& \mathbf{add}\:\zero\:(\suc\:\zero) \\
&\rightarrow& \pright \zero\:(\suc\:\zero) \\
&\rightarrow& (\lam{x}x)\:(\suc\:\zero) \\
&\rightarrow& \suc\:\zero,
\end{eqnarray*}
whereas
\begin{eqnarray*}
\pright (\suc'\:\zero)\:\zero
&\rightarrow& (\lam{y}\mathbf{add}'\:\zero\:(\suc'\:y))\:\zero \\
&\rightarrow& \mathbf{add}'\:\zero\:(\suc'\:\zero) \\
&\rightarrow& \pright (\suc'\:\zero)\:\zero \\
&\rightarrow& \ldots,
\end{eqnarray*}
and more generally, for any objects \(x\) and \(y\)
of \(\Nat(n)\) (for some \(n\)),
\[
\pright (\suc'\:x)\:y
\rightarrow \ldots \rightarrow \pright (\suc'\:y)\:x
\rightarrow \ldots \rightarrow \pright (\suc'\:x)\:y
\rightarrow \ldots.
\]

This sort of divergence would also be quite common
in (careless) recursive definitions of programs even if we did not
have to handle object-oriented specifications like \(\Nat(n)\).
The peculiarity here is the fact
that the divergence is caused by a program, \(\mathbf{add}'\),
which is supposed to satisfy the same specification as \(\mathbf{add}\).
This example shows such a loss of the compositionality of programs
with respect to the specifications that imply
their termination, or convergence.
It also suggests that, to overcome this difficulty,
\(\mathbf{add}\) and \(\mathbf{add}'\)
should have different specifications, and accordingly the definition
of \(\Nat(n)\) should be revised in some way in order to force it.

\subsection*{{\lm} and its logical inconsistency}
The typing system {\lm} (cf. \cite{barendregt})
is a simply-typed lambda calculus
with recursive types, where any form of self-references, including
negative ones, is permitted.
A non-trivial model for such unrestricted recursive types was
developed by MacQueen, Plotkin and Sethi \cite{mps}, and has been widely
adopted as a theoretical basis for object-oriented type systems
\cite{abadi:cardelli,bruce:cardelli:pierce}.

On the other hand, it is well known that logical formulae with
such unrestricted self-references would introduce a contradiction
(variant of Russell's paradox).
Therefore, logical systems must have
certain restrictions on the forms of self-references (if ever allowed)
in order to keep themselves sound;
for example, \(\mu\)-calculus\cite{pratt,kozen} does not allow
negative self-references (see also \cite{gurevich:shelah}).

Through the formulae-as-types notion, this
paradox corresponds to the fact that every type of {\lm}
is inhabited by a
diverging program which does not produce any information;
for example, the \(\lambda\)-term
\((\lam{x}\app{x}{x})(\lam{x}\app{x}{x})\)
can be typed with every type in {\lm}.
Therefore, even with the model mentioned above, types can be regarded
only as partial specifications of programs, and that is considered the reason
why we lost the compositionality of programs in the \(\Nat(n)\) case,
where we regarded convergence of programs as a part
of their specifications.
This shows a contrast with the success of {\lm} as a basis for
type systems of object-oriented program languages, where
the primary purpose of types, i.e., coarse specifications, is
to prevent run-time type errors, and termination of programs
is out of the scope.

The logical inconsistency of {\lm} also implies that
no matter how much types, or specifications, are refined,
convergence of programs cannot be expressed by them,
and must be handled by endowing the typing system
with some facilities for discussing computational properties of programs.
For example,
Constable et al. adopted this approach in their pioneering works
to incorporate recursive definitions
and partial objects into constructive type theory
\cite{constable:mendler,constable:smith}.
However, in this paper, we will pursue
another approach such that types themselves
can express convergence of programs.

\subsection*{Towards the approximation modality}

Suppose that we have a recursive program \(f\) defined by
\[
     f \equiv F(f),
\]
and want to show that \(f\) satisfies a certain specification \(S\).
Since the denotational meaning of \(f\) is given as the least fixed point
of \(F\), i.e., \(f = \sup_{n<\omega}F^n(\bot)\),
a possible way to do that would be
to apply Scott's fixed-point induction\cite{scott} by showing the following:
\begin{enumerate}
\item[--] \(\bot\) satisfies \(S\),
\item[--] \(F(x)\) satisfies \(S\) provided that \(x\) satisfies \(S\), and
\item[--] \(S\) is chain closed.
\end{enumerate}
However, this does not suffice for our purpose if
\(S\) includes some requirement about the convergence of \(f\),
because obviously
\(\bot\), or even \(F^n(\bot)\), could not satisfy the requirement.
So we need more refined approach.
The failure of the naive fixed-point induction above suggests that
the specification to be satisfied by each \(F^n(\bot)\)
inherently depends on \(n\), and the requirement concerning its
convergence must become stronger when \(n\) increases.
This leads us to a layered version of the fixed-point induction scheme
as follows: in order to show that \(f\) satisfies \(S\),
it suffices to find an infinite sequence \(S_0\), \(S_1\), \(S_2\),
\(\ldots\) of properties, or (virtual) specifications, such that
\begin{enumerate}[{\kern8pt}(\romannumeral 1)]
\item \(S = \bigcap_{n<\omega} S_n\),
\item \(S_{n{+}1} \subseteq S_n\),
\item \(\bot\) satisfies \(S_0\),
\item \(F(x)\) satisfies \(S_{n+1}\)
	provided that \(x\) satisfies \(S_n\), and
\item \(S_n\) is chain closed.
\end{enumerate}
For, since \(F^n(\bot) \in S_n\) for every \(n\) by ({\romannumeral 3})
and ({\romannumeral 4}),
we get \(F^k(\bot) \in S_n\) for every \(k \ge n\) by ({\romannumeral 2}).
This and ({\romannumeral 5}) imply \(f \in S_n\) for every \(n\),
and consequently \(f \in S\) by ({\romannumeral 1}).

In this scheme, the sequence \(S_0\), \(S_1\), \(S_2\), \(\ldots\)
can be regarded as a successive approximation of \(S\),
and \(F\) a (higher-order) program which
constructs a program that satisfies \(S_{n{+}1}\) from one that
satisfies \(S_n\).
It should be also noted that \(F\) works independently of \(n\).
This uniformity of \(F\) over \(n\) leads us to consider a formalization
of this scheme in a modal logic, where
the set of possible worlds (in the sense of Kripke semantics) consists
of all non-negative integers, and \(S_n\) in the induction scheme above
corresponds to the interpretation of \(S\) in the world \(n\).
We now write \(x \Rlz{k} S\) to denote the fact
that \(x\) satisfies the interpretation of \(S\) in the world \(k\),
and define a modality, say \(\O\), as
\begin{center}
\(x \Rlz{k} \O S\quad\) iff \(\quad k = 0\;\) or \(\;x \Rlz{k{-}1} S\).
\end{center}
Condition~({\romannumeral 2}) of the induction scheme says
that \(x \Rlz{k} S\) implies \(x \Rlz{l} S\) for every \(l \le k\);
in other words, the interpretation of specifications
should be {\em hereditary} with respect to the accessibility relation \(>\).
In such a modal framework,
the specification to be satisfied by \(F\) can be represented
by \(\O S \Impl S\) provided that the \(\Impl\)-connective is
interpreted in the standard way in each world, and
our induction scheme can be rewritten as
\begin{eqnarray*}\mbox{%
    if \(\bot \Rlz{0} S\) and \(F \Rlz{k} \O S \Impl S\) for every \(k > 0\),
    then \(f \Rlz{k} S\) for every \(k\).}
\end{eqnarray*}
Furthermore, if we assume that \(S_0\) is a trivial specification
which is satisfiable by any program, then,
shifting the possible worlds downwards by one, we can simplify this to
\begingroup
    \def\theequation{$\ast$}
\begin{eqnarray}
    \label{intro-induction}
    \mbox{if \(F \Rlz{k} \O S \Impl S\) for every \(k\),
	then \(f \Rlz{k} S\) for every \(k\)}.
\end{eqnarray}
    \addtocounter{equation}{-1}
\endgroup
Although this assumption about \(S_0\) somewhat restricts our choice of the
sequence \(S_0\), \(S_1\), \(S_2\), \(\ldots\),
it could be thought rather reasonable
because \(S_0\) must be an almost trivial specification that is
even satisfiable by \(\bot\).
Note that \(S_{n{+}1}\) occurring in the induction now
corresponds to the interpretation of \(S\) in the world \(n\),
and \(S_0\) corresponds to the interpretation of \(\O S\) in the world 0.

We so far considered the set of non-negative integers and the greater-than
relation as the set of possible worlds and the accessibility relation,
respectively.
This setting can be easily generalized to any frame
with a (conversely) well-founded accessibility relation.
The standard interpretation of \(\Impl\) implies
some fundamental properties concerning the \(\O\)-modality
over such frames, which introduce a subsumption, or subtyping, relation
over specifications into our modal framework.
First, the hereditary interpretation of specifications
implies the following property.
\[
    \quad x \Rlz{k} S \quad\mbox{implies}\quad x \Rlz{k} \O S
\]
Second, if we adopt the simple semantics of types, i.e.,
\(x \Rlz{k} S \Impl \true\) for every \(x\), \(S\) and \(k\),
where \(\true\) is the universe of (meanings of) programs,
which is satisfiable by any program, then
\[
x \Rlz{k} \O(S \Impl T) \quad\mbox{implies}\quad x \Rlz{k} \O S \Impl \O T.
\]
Note that this is not always the case because
we could consider non-extensional interpretations,
e.g., F-semantics\cite{hindley}, in which
\(\lam{x}{\bot} \Rlz{k} S \Impl \true\) holds, but
\(\bot \Rlz{k} S \Impl \true\) does not.

Furthermore, if the accessibility relation is linear, i.e., not branching,
such as the case of the greater-than relation between non-negative integers,
there holds the following converse property.
\[
    x \Rlz{k} \O S \Impl \O T \quad\mbox{implies}\quad
    x \Rlz{k} \O(S \Impl T)
\]
In such cases, the meaning of
\(\O S \Impl \O T\) and \(\O(S \Impl T)\) are equivalent.

\subsection*{Specification-level self-references}
This modal framework introduced for program-level self-references
also provides a basis for specification-level self-references.
Suppose that we have a self-referential specification such as
\[
    S = \phi(S).
\]
As we saw in the \(\Nat(n)\) case, negative reference to \(S\)
in \(\phi\) can introduce a contradiction in the conventional setting,
and this is still true in our modal framework.
However, in the world \(n\),
we can now refer to the interpretation of \(S\) in any world \(k < n\)
without worrying about the contradiction.
That is, as long as \(S\) occurs only in scopes of the modal operator
\(\O\) in \(\phi\), the interpretation of \(S\) is well-defined and given
as a fixed point of \(\phi\), which is actually shown to be unique.
For example, if \(S\) is defined as \(S = \O S \Impl T\), then
\(S\) could be interpreted in each world as follows.
\begin{eqnarray*}
S_0 &=& \true \Impl T_0 \\
S_1 &=& S_0 \cap ((\true \Impl T_0) \Impl T_1) \\
S_2 &=& S_1 \cap
	((S_0 \cap ((\true \Impl T_0) \Impl T_1)) \Impl T_2) \\[-4pt]
    &\vdots& \\[-4pt]
S_{n{+}1} &=& S_n \cap (S_n \Impl T_{n{+}1}) \\[-4pt]
    &\vdots&
\end{eqnarray*}
where \(S_k\) and \(T_k\) are the interpretations of \(S\) and \(T\)
in the world \(k\), respectively, and
the notations such as \(\true\) and \(\Impl\) are abused to denote
their expected interpretations as well.
This kind of self-references provides us
a method to define the sequence \(S_0\), \(S_1\), \(S_2\), \(\ldots\,\)
for the refined induction scheme
when we derive properties of recursive programs,
and the induction scheme
would be useless if we did not have such a method.

In the following sections, we will see that this form of specification-level
self-references is quite powerful, and captures a wide range of
specifications including
those which are not representable in the conventional setting
such as ones for \(\mathbf{add}\) and \(\mathbf{add}'\) in the \(\Nat(n)\) case.
Furthermore, the modal version (\ref{intro-induction})
of the induction scheme turns out
to be derivable from other properties of the
\(\O\)-modality and such self-referential specifications, where
the derivation corresponds to fixed-point combinators, such as Curry's \(\Y\).
This also gives us a way to construct recursive programs based on
the proofs-as-programs notion.

\subsection*{The plan of the paper}
The plan of the present paper is as follows.
In the next section,
the syntax of our type expressions, which
can be considered as coarse specifications of programs, is given.
The modal semantics of such type expressions is given
in the manner of realizability in Section~\ref{semantics-sec},
in which we interpret
types over Kripke frames with a (conversely) well-founded
accessibility relation.
We discuss formal derivability of equality and subsumption of types
in Sections~\ref{eqtyp-sec} and \ref{subtyp-sec}, respectively,
and show their soundness with respect to the semantics.
In Section~\ref{lA-sec}, we introduce a typed $\lambda$-calculus {\lA}
equipped with the modality and recursive types,
and show its soundness and subject reduction property
in Section~\ref{basic-prop-sec}.
Then, in Section~\ref{conv-sec},
we show convergence of well-typed \(\lambda\)-terms
according to the forms of their type.
In Section~\ref{program-sec},
we present some examples of program derivation
by extending the pure typing system, where
the \(\Nat(n)\)-example presented in this section will be revisited.
In Section~\ref{logic-sec}, we consider {\lA} as a modal logic
by ignoring left hand sides of ``\(:\)'' from typing, and
show an interesting relationship to an intuitionistic version of the logic of
provability.

\Section{Type expressions}\label{TE-sec}

We start with the syntax of type expressions.
As a preparation for it,
we first give one of {\em pseudo type expressions} \(\,\PT\),
which are obtained by adding a unary type constructor \(\O\)
to those of {\lm}, namely the simply typed \(\lambda\)-calculus
extended with recursive types (cf. \cite{barendregt,cardone:coppo}).
Let \(\TV\) be a countably infinite set of type variable symbols
\(X\), \(Y\), \(Z\), \(\ldots\)\,.
\begin{definition}[Pseudo type expressions]
    \ilabel{pt-def}{pseudo type expressions}
    \ilabel*{type expressions!pseudo}
    \ilabel*{syntax!pseudo type expressions}
    \ilabel*{syntax!PTExp@$\protect\PT$}
The syntax of \(\PT\) is defined by
\begin{Eqnarray*}
    \PT & \bnfdef & \TV & (type variables) \\
    & \bnfor & \PT \Impl \PT & (function types) \\
    & \bnfor & \O \PT & (approximative types) \\
    & \bnfor & \fix{\,\TV}\,\PT & (recursive types).
\end{Eqnarray*}
\end{definition}
We assume that \(\Impl\) associates to the right as usual, and
each (pseudo) type constructor associates according to the following
priority.
\[
    \thickmuskip=8mu
    \mbox{(Low)}\qquad \fix{X}{\;} < {\;}\Impl{\;} < {\;\O} \qquad\mbox{(High)}
\]
For example, \(\O\fix{X}\O X \Impl Y \Impl Z\) is the same
as \(\O(\fix{X}{((\O X) \Impl (Y \Impl Z))})\).
    \ilabel{t-def}{top@$\protect\t$}
    \ilabel*{0 top@$\protect\t$}
We use \(\t\) as an abbreviation for \(\fix{X}\O X\) and
use \(\O^n A\) to denote a (pseudo) type expression
\(\Underbrace{n~\mbox{times}}{\O\ldots\O}A\), where \(n \ge 0\).
In the sequel,
we use \(A\), \(B\), \(C\), \(D\), \(\ldots\)
to denote (pseudo) type expressions,
and denote the set of type variables occurring freely in \(A\)
by \(\FTV{A}\) regarding a type variable \(X\) as bound in \(\fix{X}A\).
We regard \(\alpha\)-convertible (pseudo) type expressions as identical.
    \ilabel{subst-def}{substitution of!type expressions}
    \ilabel*{type expressions!substitution of}
    \ilabel*{0 substitution A@$[A/X]$}
We write \(A[B_1/X_1,\ldots,B_n/X_n]\)
to denote the (pseudo) type expression
obtained from \(A\) by substituting \(B_1,\ldots,B_n\) for each
free occurrence of \(X_1,\ldots,X_n\), respectively,
with necessary \(\alpha\)-conversion to avoid accidental
capture of free type variables.
We assume that \([B_1/X_1,\ldots,B_n/X_n]\) associates with
the preceding type expression
with a higher priority than the modal operator~\(\O\).

As mentioned in Section~\ref{intro-sec},
the modal type operator \({\O}\) causes a shift of possible worlds, and
makes a reference to the one-step coarser world,
in which type expressions are interpreted as coarser specifications, i.e.,
larger sets of values or programs, than the ones in the present world.
The (pseudo) type expression \(\t\) corresponds to
the universe into which \(\lambda\)-terms are interpreted.
Hence, every \(\lambda\)-term should have this type in {\lA}.
We adopt a so called {\em simple semantics} of types
(cf. \cite{hindley,hindley-F}), by which the meaning of \(A \Impl \t\) is
identical to the one of \(\t\).
Thus, some syntactically different type expressions may have the same meaning
as the one of \(\t\).
We call such (pseudo) type expressions {\em {\tvariant}s}, which can be
syntactically distinguished from others as follows.

\begin{definition}[{\tvariant}s]
    \ilabel{tvariant-def}{top@$\top$-variants}
    \ilabel{tail-def}{t@$\protect\tail{A}$}
A (pseudo) type expression \(A\) is {\em a {\tvariant}} if and only if
\(\tail{A}
    = \O^{m_0}\fix{X_1}\O^{m_1}\fix{X_2}\O^{m_2}\ldots\fix{X_n}\O^{m_n}X_i\)
for some \(n\), \(m_0\), \(m_1\), \(m_2\), \(\ldots\), \(m_n\),
\(X_1\), \(X_2\), \(\ldots\), \(X_n\) and \(i\) such that
\(1 \le i \le n\),
\(X_i \notin \{\,\,X_{i{+}1},\,X_{i{+}2},\,\ldots,X_n\,\}\)
and \(m_i+m_{i{+}1}+m_{i{+}2}+\ldots+m_n \ge 1\),
where \(\tail{A}\) is defined as follows.
\[
    \tail{X} = X,\qquad
    \tail{(A \Impl B)} = \tail{B},\qquad
    \tail{(\O A)} = \O\tail{A},\qquad
    \tail{(\fix{X}A)} = \fix{X}\tail{A}.
\]
\end{definition}
Note that \(\tail{A}\) always has the form of
    \(\O^{m_0}\fix{X_1}\O^{m_1}\fix{X_2}\O^{m_2}\ldots\fix{X_n}\O^{m_n}Y\)
for some \(n\), \(m_0\), \(m_1\), \(m_2\), \(\ldots\), \(m_n\),
\(X_1\), \(X_2\), \(\ldots\), \(X_n\) and \(Y\),
and it is decidable whether a pseudo type expression is a {\tvariant} or not.
It will be shown that {\tvariant}s are semantically identical to \(\t\).
We can also easily see that the following propositions hold.

\begin{proposition}\label{tvariant-basic}\pushlabel
\begin{Enumerate}
\item \label{var-tvariant}
    \(X\) is not a {\tvariant}.
\item \label{O-tvariant}
    \(\O A\) is a {\tvariant} if and only if so is \(A\).
\item \label{Impl-tvariant}
    \(A \Impl B\) is a {\tvariant} if and only if so is \(B\).
\item \label{tail-tvariant}
    \(A\) is a {\tvariant} if and only if so is \(\tail{A}\).
\item \label{tvariant-rename}
    \(A\) is a {\tvariant} if and only if so is \(A[Y/X]\).
\end{Enumerate}
\end{proposition}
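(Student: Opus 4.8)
The plan is to lean on the fact that being a {\tvariant} is defined purely through the tail \(\tail{A}\) and its canonical shape \(\O^{m_0}\fix{X_1}\O^{m_1}\ldots\fix{X_n}\O^{m_n}Z\) (with \(Z\) a single trailing variable); each of the five items then reduces to a combinatorial observation about this canonical form and about how the tail operation, the constructors \(\O\) and \(\Impl\), and renaming act on it. The key auxiliary fact is the idempotence of the tail, \(\tail{\tail{A}} = \tail{A}\), which I would prove by a routine structural induction on \(A\): the \(\O\)- and recursive-type cases commute the outer tail inward and invoke the induction hypothesis, the \(\Impl\)-case reduces both sides to the tail of the codomain, and the variable case is trivial.

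Granting idempotence, four of the items are immediate. For the tail item, \(A\) is a {\tvariant} exactly when \(\tail{A}\) has the prescribed form, and since \(\tail{\tail{A}} = \tail{A}\) the expression \(\tail{A}\) meets the very same test. For the variable item, \(\tail{X} = X\) is already in canonical form with \(n = 0\), so there is no index \(i\) satisfying \(1 \le i \le n\) (equivalently, the requirement \(m_i + \ldots + m_n \ge 1\) fails for lack of any \(\O\)), whence \(X\) is not a {\tvariant}. For the \(\O\)-item, \(\tail{\O A} = \O\tail{A}\) merely raises \(m_0\) by one, and \(m_0\) enters none of the side conditions, which constrain only the trailing \(Z = X_i\) with \(i \ge 1\), the requirement that \(X_i\) differ from each of \(X_{i+1},\ldots,X_n\), and the sum \(m_i + \ldots + m_n\); so \(\O A\) and \(A\) are {\tvariant}s together. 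For the \(\Impl\)-item, \(\tail{(A \Impl B)} = \tail{B}\) outright, so the two tests literally coincide.

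The renaming item is the one needing genuine care. I would first verify that the tail commutes with renaming, \(\tail{A[Y/X]} = \tail{A}[Y/X]\), by structural induction, choosing the bound variable distinct from \(X\) and \(Y\) in the recursive-type case so that no capture occurs. By the tail item it then suffices to show that \(\tail{A}\) is a {\tvariant} iff \(\tail{A}[Y/X]\) is. Writing \(\tail{A}\) in canonical form and \(\alpha\)-renaming the binders \(X_1,\ldots,X_n\) to avoid both \(X\) and \(Y\), the substitution leaves the modal exponents and the whole binder prefix—hence the two prefix side conditions—untouched, and can only rewrite the trailing variable \(Z\), sending it to \(Y\) precisely when \(Z = X\).

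The crux, and the sole real obstacle, is this last step, where the bound-variable bookkeeping settles the matter. The test demands that the trailing variable be one of the \emph{bound} \(X_i\). Forward: if \(\tail{A}\) passes, then \(Z = X_i\) for some \(i\), and \(X_i \ne X\) by the renaming, so the substitution fixes \(Z\) and \(\tail{A}[Y/X]\) passes as well. Backward: if \(\tail{A}[Y/X]\) passes, its trailing variable is some \(X_i\) with \(X_i \ne Y\) by the renaming; were \(Z = X\), the substitution would make the trailing variable \(Y \ne X_i\), a contradiction, so in fact \(Z \ne X\), the substitution is inert, and \(\tail{A}\) passes too. Thus the test is neither created nor destroyed by renaming, which completes the equivalence and the proposition.
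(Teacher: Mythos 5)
Your proof is correct and follows the same route as the paper, which simply declares the proposition ``Obvious from Definition~\ref{tvariant-def}'': you unfold the definition of {\tvariant} via \(\tail{A}\) and check each item against the canonical form, with the idempotence \(\tail{\tail{A}} = \tail{A}\) and the commutation \(\tail{A[Y/X]} = \tail{A}[Y/X]\) (an instance of the paper's later Proposition~\ref{tail-subst}) as the only auxiliary facts. Your careful handling of the bound-variable bookkeeping in the renaming item is exactly the detail the paper leaves implicit.
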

\begin{proof}
Obvious from Definition~\ref{tvariant-def}.
\qed\CHECKED{2014/04/21}
\end{proof}

Unrestricted use of self-reference in type expressions causes logical
contradictions, i.e., divergence of well-typed programs.
So our typing system only allows self-references in coarser worlds,
i.e., in scopes of the modal operator \({\O}\), or
at positions where the references do not affect the meaning of the whole
type expression.
We say that such references are {\em proper}.
The latter case is included so that the equivalence relations
between type expressions (cf. Definitions~\ref{eqtyp-def} and \ref{peqtyp-def})
preserve properness.
For example, the reference to \(X\) in \(X \Impl \t\) is proper
since \(X \Impl \t\) is semantically identical to \(\t\).

\begin{definition}[Properness]
    \ilabel{proper-def}{proper}
    \ilabel*{type expressions!proper}
A (pseudo) type expression \(A\) is {\em proper}\/ in \(X\)
if and only if\/ \(X\) freely occurs only
(a) in scopes of the \(\O\)-operator in \(A\), or (b) in a {\tvariant}
occurring in \(A\).
In other words,
\begin{Enumerate}
\item A type variable \(Y\) is proper in \(X\)
    if and only if \(Y \not= X\).
\item \(\O A\) is proper in \(X\).
\item \(A \Impl B\) is proper in \(X\) if and only if
    (a) so are both \(A\) and \(B\), or (b) \(B\) is a {\tvariant}.
\item Suppose that \(X \not= Y\).
    Then, \(\fix{Y}A\) is proper in \(X\) if and only if
    (a) so is \(A\) or (b) \(\fix{Y}A\) is a {\tvariant}.
\end{Enumerate}
\end{definition}
For example,
\(\O X\), \(\O(X \Impl Y)\), \(\fix{Y}\O(X \Impl Y)\), \(X \Impl \t\)
and \(\fix{Y}X\Impl\O Y\) are proper in \(X\), and
neither \(X\), \(X \Impl Y\) nor \(\fix{Y}\fix{Z} X \Impl Y\)
is proper in \(X\).
Obviously, the following proposition holds.

\begin{proposition}\label{tvariant-proper}
If\/ \(A\) is a {\tvariant},
then \(A\) is proper in every \(X\).
\end{proposition}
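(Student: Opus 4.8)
The plan is to fix an arbitrary type variable $X$ and show that a \tvariant{} $A$ is proper in $X$ by a direct case analysis on the outermost constructor of $A$, exploiting the fact that the ``escape'' clauses~(b) in Definition~\ref{proper-def} were evidently introduced precisely so that \tvariant{}s always qualify as proper. In particular, I expect that no genuine induction hypothesis will be required: in each case the relevant clause of the definition of properness applies directly, using only the structural facts already recorded in Proposition~\ref{tvariant-basic}.

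First I would dispose of the variable case. If $A$ is a type variable, then by Proposition~\ref{tvariant-basic}(\ref{var-tvariant}) it is not a \tvariant{}, so the hypothesis of the proposition fails and there is nothing to prove; this case is vacuous.

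The three remaining cases each match a clause of Definition~\ref{proper-def}. If $A = \O B$, then $A$ is proper in $X$ immediately by clause~2, without even using the \tvariant{} hypothesis. If $A = B \Impl C$, then since $A$ is a \tvariant{}, Proposition~\ref{tvariant-basic}(\ref{Impl-tvariant}) gives that $C$ is a \tvariant{}, whence clause~3(b) yields that $B \Impl C$ is proper in $X$. If $A = \fix{Y}B$, I would first invoke the convention that $\alpha$-convertible type expressions are identified in order to assume $Y \ne X$ (renaming the bound variable if necessary; in the case where $X$ is itself the bound variable, $X$ simply has no free occurrence and properness holds trivially). Then, because $A = \fix{Y}B$ is a \tvariant{} by hypothesis, clause~4(b) of Definition~\ref{proper-def} directly gives that $\fix{Y}B$ is proper in $X$.

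Since $X$ was arbitrary, this establishes that $A$ is proper in every $X$. The only step demanding any care is the fixed-point case, where one must treat the bound variable correctly via $\alpha$-conversion so that clause~4 (which presupposes $X \ne Y$) is applicable; apart from that, the statement is immediate, since the definition of properness is tailored so that each (b)-clause absorbs exactly the \tvariant{} subcases. Consequently there is no substantive obstacle here, which is why the proposition can fairly be called obvious.
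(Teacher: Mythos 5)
Your proposal is correct and follows essentially the same route as the paper's own proof: a case analysis on the outermost constructor, with the variable case excluded by Proposition~\ref{var-tvariant}, the $\O$-case immediate, the $\Impl$-case via Proposition~\ref{Impl-tvariant} and clause~(b), and the $\fix{}$-case via clause~(b) of Definition~\ref{proper-def}. Your explicit treatment of the bound variable in the $\fix{}$-case is a minor refinement of a step the paper simply labels trivial.
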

\ifdetail
\begin{proof}
By straightforward induction on the structure of \(A\).
Suppose that \(A\) is a {\tvariant}.

\Case{\(A = X\) for some \(X\).}
This case is impossible by Proposition~\ref{var-tvariant}.

\Case{\(A = \O B\) for some \(B\).}
\(A\) is proper in every \(X\) by Definition~\ref{proper-def}.

\Case{\(A = B \Impl C\) for some \(B\) and \(C\).}
In this case, \(C\) is a {\tvariant} by Proposition~\ref{Impl-tvariant}.
Hence, \(A\) is proper in every \(X\) by Definition~\ref{proper-def}.

\Case{\(A = \fix{X}B\) for some \(X\) and \(B\).}
Trivial by Definition~\ref{proper-def}.
\qed\CHECKED{2014/07/10, 07/21}
\end{proof}
\fi 

We can now define the syntax of type expressions of {\lA}.
A type expression of {\lA} is a pseudo type expression such that
\(A\) is proper in \(X\) for any of its subexpressions
in the form of \(\fix{X}{A}\).
We denote the set of well-formed type expressions by \(\TE\),
which, more precisely, can be defined as follows.

\begin{definition}[Type expressions]
    \ilabel{te-def}{type expressions}
    \ilabel*{TExp@{\bf TExp}}
    \ilabel*{syntax!type expressions}
    \ilabel*{syntax!TExp@$\protect\TE$}
We define the set \(\TE\)
of {\em type expressions} to be the smallest set of pseudo type expressions
that satisfy
\begin{Enumerate}
\item \(X \in \TE\) for every type variable \(X\).
\item If \(A \in \TE\) and \(B \in \TE\), then \(A \Impl B \in \TE\).
\item If \(A \in \TE\), then \(\O A \in \TE\).
\item If \(A \in \TE\) and \(A\) is proper in \(X\), then
\(\fix{X}A \in \TE\).
\end{Enumerate}
\end{definition}
For example, \(X\), \(X \Impl Y\), \((\fix{X}\O X \Impl Y)\Impl Z\),
\(\fix{X}X \Impl \t\)
and \(\fix{X}\O\fix{Y}X \Impl Z\) are type expressions, and
neither \(\fix{X}X \Impl Y\) nor \(\fix{X}\fix{Y}X \Impl Y\)
is a type expression.

When we consider free type variables occurring in a type expression,
we can ignore those occurring in {\tvariant}s, because they do not affect
the meaning of the type expression.
We define two sets of type variables that effectively occur
in a type expression.
One consists of those that occur at positive positions,
and the other at negative positions.

\begin{definition}
    \ilabel{etv-def}{ETV@$\protect\ETV{A}$}
Let \(A\) be a (pseudo) type expression.
We define two sets \(\PETV{A}\) and \(\NETV{A}\) of type variables as follows.
\begin{Eqnarray*}
    \PNETV{A} &=& \{\} & (\mbox{\(A\) is a {\tvariant}}) \\
    \PETV{X} &=& \{\, X \,\} \\
    \NETV{X} &=& \{\} \\
    \PNETV{\O A} &=& \PNETV{A} & (\mbox{\(\,\O A\) is not a {\tvariant}}) \\
    \PNETV{A \Impl B} &=& \NPETV{A} \cup \PNETV{B} \hskip43pt &
			 (\mbox{\(A \Impl B\) is not a {\tvariant}}) \\[3pt]
    \PNETV{\fix{X}A} &=&
	\Choice{\hskip-10pt\begin{array}{ll}
	    (\PNETV{A} \cup \NPETV{A}) - \{\,X\,\} &
		\bigg(\mbox{\tabcolsep=0pt\begin{tabular}{l}
			\(\fix{X}A\) is not a {\tvariant} \\[-4pt]
			and \(X \in \NETV{A}\)
		    \end{tabular}}\bigg) \\
	    \PNETV{A} - \{\,X\,\} &
		\bigg(\mbox{\tabcolsep=0pt\begin{tabular}{l}
			\(\fix{X}A\) is not a {\tvariant} \\[-4pt]
			and \(X \not\in \NETV{A}\)
		    \end{tabular}}\bigg)
	    \end{array}}\hskip-300pt
\end{Eqnarray*}
We also define \(\ETV{A}\) as \(\ETV{A} = \PETV{A} \cup \NETV{A}\).
\end{definition}

For example,
\(\PETV{\fix{X}\,\O(X \Impl Y) \Impl Z} = \{\,Z\,\}\),
\(\NETV{\fix{X}\,\O(X \Impl Y) \Impl Z} = \{\,Y\,\}\),
\(\PNETV{\fix{X}\,(Y \Impl Z) \Impl \O X} = \{\}\), and
\(\PNETV{\fix{X}\,\O(X \Impl Y \Impl Z)} = \{\,Y,\,Z\,\}\).
Obviously, \(\PNETV{A} \subseteq \FTV{A}\).
We can easily check that \(\alpha\)-conversion does not
affect the definition of \(\PNETV{A}\)
by Proposition~\ref{tvariant-rename}.

Some propositions in the present paper will be proved
by induction on the following two kinds of structures of type expressions.
The height \(h(A)\) just reflects the syntactical one and
the rank \(r(A)\) does the semantic one in a fixed world.
\begin{definition}\label{height-rank-def}
    \ilabel{height-def}{h@$h(A)$}
    \ilabel*{type expressions!height of}
    \ilabel{rank-def}{r@$r(A)$}
    \ilabel*{type expressions!rank of}
Let \(A\) be a (pseudo) type expression.
We define
\(h(A)\), the {\em height} of\, \(A\), and
\(r(A)\), the {\em rank} of\, \(A\), as follows.
\begin{Eqnarray*}
h(X) &=& 0 \\
h(\O A) &=& h(A) + 1 \\
h(A \Impl B) &=& \max(h(A),\,h(B)) + 1 \\
h(\fix{X}A) &=& h(A) + 1 \\[6pt]
r(A) &=& 0 & (\mbox{\(A\) is a {\tvariant}}) \\
r(X) &=& 0 \\
r(\O A) &=& 0 & (\mbox{\(\,\O A\) is not a {\tvariant}}) \\
r(A \Impl B) &=& \max(r(A),\,r(B)) + 1
    & (\mbox{\(A \Impl B\) is not a {\tvariant}}) \\
r(\fix{X}A) &=& r(A) + 1 & (\mbox{\(\fix{X}A\) is not a {\tvariant}})
\end{Eqnarray*}
\end{definition}

\begin{proposition}\label{tail-basic}\pushlabel
Let \(A\) and \(B\) be (pseudo) type expressions.
\begin{Enumerate}
\item \itemlabel{tail-subst}
    \(\tail{A[B/X]} = \tail{A}[\tail{B}/X]\).
\item \itemlabel{tail-proper}
    If\/ \(A\) is proper in \(X\), then so is \(\tail{A}\).
\item \itemlabel{tail-te}
    If\/ \(A\) is a type expression, then so is \(\tail{A}\).
\item \itemlabel{tail-etv}
    \(\NETV{\tail{A}} = \{\}\) and \(\PETV{\tail{A}} \subseteq \PETV{A}\).
\item \itemlabel{tvariant-etv}
    \(A\) is a {\tvariant} if and only if \(\ETV{A} = \{\}\),
    provided that \(A \in \TE\).
\end{Enumerate}
\end{proposition}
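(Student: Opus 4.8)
The plan is to establish the five items in the order listed, since each relies on its predecessors together with the already-recorded facts about \(\tail{A}\), {\tvariant}s, and \(\ETV{A}\). Items~\ref{tail-subst}--\ref{tail-etv} all proceed by structural induction following the recursive clauses defining \(\tail{A}\); the genuine difficulty is confined to item~\ref{tvariant-etv}, whose non-trivial direction I would handle by a single global argument about \(\tail{A}\) rather than by a fresh induction.

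For item~\ref{tail-subst} I would induct on \(A\), splitting the variable case on whether it equals \(X\); the \(\Impl\)-, \(\O\)- and \(\fx\)-cases follow immediately from the recursive clauses for \(\tail{A}\) and the induction hypothesis, the only delicate case being a recursive type \(\fix{Y}C\), where the \(\alpha\)-convention \(Y \notin \FTV{B}\) together with the obvious inclusion \(\FTV{\tail{B}} \subseteq \FTV{B}\) prevents capture. Item~\ref{tail-proper} again inducts on \(A\): the \(\O\)-case is automatic from Definition~\ref{proper-def}, while the \(\Impl\)- and \(\fx\)-cases branch on the alternatives (a)/(b) of properness, and in the (b)-branch I invoke Proposition~\ref{tail-tvariant} (to keep the relevant subexpression a {\tvariant}) and Proposition~\ref{tvariant-proper} (so that it is then proper in every variable). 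Item~\ref{tail-te} is then a short induction on the construction of \(A \in \TE\), the only interesting clause being a recursive type \(\fix{X}A\), where item~\ref{tail-proper} guarantees that \(\tail{A}\) is still proper in \(X\). For item~\ref{tail-etv} I would first dispose of the case that \(A\) is a {\tvariant} (both sides are empty, using Proposition~\ref{tail-tvariant}) and then propagate through \(\O\), \(\Impl\) (using \(\PETV{\tail{B}} \subseteq \PETV{B} \subseteq \PETV{A \Impl B}\)) and \(\fx\); the key simplification is that once the induction hypothesis supplies \(\NETV{\tail{A}} = \{\}\), the binder cannot lie in \(\NETV{\tail{A}}\), so the \(\fx\)-clause of Definition~\ref{etv-def} collapses to \(\PNETV{\fix{X}\tail{A}} = \PNETV{\tail{A}} - \{X\}\), from which both conclusions follow.

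Item~\ref{tvariant-etv} is the crux. The forward direction is immediate, since the first clause of Definition~\ref{etv-def} gives \(\PETV{A} = \NETV{A} = \{\}\) whenever \(A\) is a {\tvariant}. For the converse I argue the contrapositive: assuming \(A \in \TE\) is not a {\tvariant}, I exhibit a variable in \(\ETV{A}\). Passing to \(\tail{A}\), which lies in \(\TE\) by item~\ref{tail-te} and is again not a {\tvariant} by Proposition~\ref{tail-tvariant}, I write it in its canonical form \(\tail{A} = \O^{m_0}\fix{X_1}\O^{m_1}\ldots\fix{X_n}\O^{m_n}Z\) (the shape noted after Definition~\ref{tvariant-def}). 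The main obstacle---the one place needing an idea rather than bookkeeping---is to show that the tail variable \(Z\) occurs \emph{free}. Here I use \(\tail{A} \in \TE\): if \(Z\) were bound, say \(Z = X_i\) with \(i\) the largest such index, then the failure of the {\tvariant} condition would force \(m_i = \ldots = m_n = 0\), so the subexpression \(\fix{X_i}\fix{X_{i+1}}\ldots\fix{X_n}X_i\) would carry a free, \(\O\)-unguarded occurrence of \(X_i\) lying inside no {\tvariant}; hence the body of its outer \(\fix{X_i}\) would not be proper in \(X_i\), contradicting \(\tail{A} \in \TE\) (item~\ref{tail-te}). Once \(Z\) is known to be free, a direct computation along the canonical form yields \(\PETV{\tail{A}} = \{Z\}\), so by item~\ref{tail-etv} we obtain \(Z \in \PETV{\tail{A}} \subseteq \PETV{A} \subseteq \ETV{A}\), whence \(\ETV{A} \neq \{\}\), as required.
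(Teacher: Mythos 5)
Your proposal is correct and follows essentially the same route as the paper: structural induction for items~1--4 with the same case splits and auxiliary facts (Propositions~\ref{tvariant-basic} and \ref{tvariant-proper}, item~2 feeding item~3), and items~3 and~4 doing the real work in item~5. The only cosmetic difference is that you prove the converse of item~5 by a single contrapositive argument on the canonical shape of \(\tail{A}\) instead of the paper's structural induction, and in doing so you make explicit the step the paper leaves implicit --- that \(\tail{A}\in\TE\) rules out a bound tail variable with \(m_i+\ldots+m_n=0\) --- which is a welcome clarification.
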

\begin{proof}
By straightforward induction on \(h(A)\),
and by cases on the form of \(A\) using
Propositions~\ref{tvariant-basic} and \ref{tvariant-proper}.
Use Item~\itemref{tail-proper} for \itemref{tail-te}, and
use Items~\itemref{tail-te} and \itemref{tail-etv}
for \itemref{tvariant-etv}.
\ifdetail

\paragraph{Proof of \protect\itemref{tail-subst}}
\Case{\(A = Y\) for some \(Y\).}
If \(Y = X\), then \(\tail{A[B/X]} = \tail{X[B/X]} = \tail{B}
    = X[\tail{B}/X] = \tail{A}[\tail{B}/X]\).
Otherwise, i.e.,
if \(Y \not= X\), then \(\tail{A[B/X]} = \tail{Y[B/X]} = \tail{Y} = Y
= Y[\tail{B}/X] = \tail{A}[\tail{B}/X]\).

\Case{\(A = \O C\) for some \(C\).}
\begin{Eqnarray*}
    \tail{A[B/X]} &=& \tail{(\O C[B/X])} \\
	&=& \O \tail{(C[B/X])}
	    & (by Definition~\ref{tail-def}) \\
	&=& \O (\tail{C}[\tail{B}/X]) & (by induction hypothesis) \\
	&=& (\O \tail{C})[\tail{B}/X] \\
	&=& \tail{A}[\tail{B}/X]
	    & (by Definition~\ref{tail-def})
\end{Eqnarray*}

\Case{\(A = C \Impl D\) for some \(C\) and \(D\).}
\begin{Eqnarray*}
    \tail{A[B/X]} &=& \tail{(C[B/X] \Impl D[B/X])} \\
	&=& \tail{D[B/X]}
	    & (by Definition~\ref{tail-def}) \\
	&=& \tail{D}[\tail{B}/X] & (by induction hypothesis) \\
	&=& \tail{A}[\tail{B}/X]
	    & (by Definition~\ref{tail-def})
\end{Eqnarray*}

\Case{\(A = \fix{Y}C\) for some \(Y\) and \(C\).}
We can assume that \(Y \not\in \{\,X\,\} \cup \FTV{B}\)
without loss of generality.
\begin{Eqnarray*}
    \tail{A[B/X]} &=& \tail{(\fix{Y}C[B/X])}
	    & (since \(Y \not\in \{\,X\,\} \cup \FTV{B}\)) \\
	&=& \fix{Y}\tail{C[B/X]}
	    & (by Definition~\ref{tail-def}) \\
	&=& \fix{Y}\tail{C}[\tail{B}/X] & (by induction hypothesis) \\
	&=& (\fix{Y}\tail{C})[\tail{B}/X]
	    & (since \(Y \not\in  \{\,X\,\} \cup \FTV{\tail{B}}
		    \subseteq \{\,X\,\} \cup \FTV{B}\)) \\
	&=& \tail{A}[\tail{B}/X]
	    & (by Definition~\ref{tail-def})
\end{Eqnarray*}

\paragraph{Proof of \protect\itemref{tail-proper}}
Suppose that \(A\) is proper in \(X\).

\Case{\(A = Y\) for some \(Y\).}
Obvious since \(\tail{A} = A\).

\Case{\(A = \O B\) for some \(B\).}
Obviously, \(\tail{A}\) is proper in \(X\)
by Definition~\ref{proper-def},
since \(\tail{A} = \O \tail{B}\).

\Case{\(A = B \Impl C\) for some \(B\) and \(C\).}
\(\tail{A} = \tail{C}\) in this case.
If \(A\) is a {\tvariant}, then \(\tail{C}\) is also a {\tvariant}
by Propositions~\ref{Impl-tvariant} and \ref{tail-tvariant}; and hence,
\(\tail{C}\) is proper in \(X\) by Proposition~\ref{tvariant-proper}.
On the other hand,
if \(A\) is not a {\tvariant}, then
\(C\) is proper in \(X\) by Definition~\ref{proper-def};
and hence, so is \(\tail{C}\) by induction hypothesis.

\Case{\(A = \fix{Y}B\) for some \(Y\) and \(B\).}
We can assume that \(Y \not= X\) without loss of generality.
If \(A\) is a {\tvariant}, then so is \(\tail{A}\)
by Proposition~\ref{tail-tvariant}; and hence,
\(\tail{A}\) is proper in \(X\) by Proposition~\ref{tvariant-proper}.
On the other hand,
if \(A\) is not a {\tvariant}, then
\(B\) is proper in \(X\) by Definition~\ref{proper-def};
and hence, so is \(\tail{B}\) by induction hypothesis.
Therefore, \(\tail{A}\) is also proper in \(X\)
by Definition~\ref{proper-def},
since \(\tail{A} = \fix{Y}\tail{B}\),

\paragraph{Proof of \protect\itemref{tail-te}}
Suppose that \(A \in \TE\).
\Case{\(A = X\) for some \(X\).}
Obvious since \(\tail{A} = A\).

\Case{\(A = \O B\) for some \(B\).}
\(\tail{A} = \O \tail{B} \in \TE\),
since \(\tail{B} \in \TE\) by induction hypothesis.

\Case{\(A = B \Impl C\) for some \(B\) and \(C\).}
\(\tail{A} = \tail{C} \in \TE\),
since \(\tail{C} \in \TE\) by induction hypothesis.

\Case{\(A = \fix{X}B\) for some \(X\) and \(B\).}
Note that \(B\) is proper in \(X\) since \(A \in \TE\).
Therefore, \(\tail{A} = \fix{X}\tail{B} \in \TE\),
since \(\tail{B} \in \TE\) by induction hypothesis, and
since \(\tail{B}\) is also proper in \(X\)
by Item~\itemref{tail-proper} of this proposition.

\paragraph{Proof of \protect\itemref{tail-etv}}
If \(A\) is a {\tvariant}, then so is \(\tail{A}\)
by Proposition~\ref{tail-tvariant}; and hence,
\(\PNETV{\tail{A}} = \PNETV{A} = \{\}\) by Definition~\ref{etv-def}.
So we assume that \(A\) is not a {\tvariant} below,
which also implies
\(\tail{A}\) is not either, by Proposition~\ref{tail-tvariant}.

\Case{\(A = X\) for some \(X\).}
Trivial since \(\NETV{A}= \{\}\) and \(\PETV{\tail{A}} = \PETV{A} = \{\,X\,\}\)
in this case.

\Case{\(A = \O B\) for some \(B\).}
Straightforward by induction hypothesis, since
\(\PNETV{A} = \PNETV{B}\) and
\(\PNETV{\tail{A}} = \PNETV{\O \tail{B}} = \PNETV{\tail{B}}\).

\Case{\(A = B \Impl C\) for some \(B\) and \(C\).}
Since neither \(A\) nor \(\tail{A}\) is a {\tvariant},
\(\PNETV{A} = \NPETV{B} \cup \PNETV{C}\)
and
\(\PNETV{\tail{A}} = \PNETV{\tail{C}}\).
Therefore, \(\NETV{\tail{A}} = \{\}\) and
\(\PETV{\tail{A}} \subseteq \PETV{A}\),
since
\(\NETV{\tail{C}} = \{\}\) and \(\PETV{\tail{C}} \subseteq \PETV{C}\)
by induction hypothesis.

\Case{\(A = \fix{X}B\) for some \(X\) and \(B\).}
By Definition~\ref{tail-def},
    \(\PNETV{\tail{A}} = \PNETV{\fix{X}\tail{B}}\).
Note that \(X \not\in \NETV{\tail{B}}\) by induction hypothesis.
Hence, by Definition~\ref{etv-def},
\begin{eqnarray*}
    \PNETV{\tail{A}} &=& \PNETV{\tail{B}} - \{\,X\,\}, ~\mbox{and} \\
    \PNETV{A} &\supseteq& \PNETV{B} - \{\,X\,\}.
\end{eqnarray*}
Therefore,
\(\NETV{\tail{A}} = \{\}\) and \(\PETV{\tail{A}} \subseteq \PETV{A}\)
because
\(\NETV{\tail{B}} = \{\}\) and \(\PETV{\tail{B}} \subseteq \PETV{B}\)
by induction hypothesis.

\paragraph{Proof of \protect\itemref{tvariant-etv}}
The ``only if'' part is obvious from Definition~\ref{etv-def}.
For the ``if'' part, suppose that \(A \in \TE\) and \(\ETV{A} = \{\}\).

\Case{\(A = X\) for some \(X\).}
This case is impossible by Definition~\ref{etv-def}.

\Case{\(A = \O B\) for some \(B\).}
Since \(\ETV{B} = \ETV{A} = \{\}\),
\(B\) is a {\tvariant} by induction hypothesis;
and hence, so is \(A\) by Proposition~\ref{O-tvariant}.

\Case{\(A = B \Impl C\) for some \(B\) and \(C\).}
Since \(\ETV{C} \subseteq \ETV{A} = \{\}\),
\(C\) is a {\tvariant} by induction hypothesis;
and hence, so is \(A\) by Proposition~\ref{Impl-tvariant}.

\Case{\(A = \fix{X}B\) for some \(X\) and \(B\).}
Let \(\tail{A} =
    \fix{X}\O^{m_0}\fix{Z_1}\O^{m_1}\fix{Z_2}\O^{m_2}\ldots\fix{Z_n}\O^{m_n}Y\).
Note that \(\tail{A} \in \TE\) from \(A \in \TE\),
by Item~\itemref{tail-te} of this proposition.
We get \(\ETV{\tail{A}} = \{\}\) from \(\ETV{A} = \{\}\)
by Item~\itemref{tail-etv} of this proposition; that is,
\(Y \in \{\,X,\,Z_1,\,Z_2,\,\ldots,\,Z_n\,\}\).
Hence, \(A\) is a {\tvariant}.
\fi 
\qed\CHECKED{2014/07/11}
\end{proof}

In the rest of the present paper, we use \(A\), \(B\), \(C\), \(\ldots\)
to denote only type expressions.
Type expressions also have the following basic properties
concerning {\tvariant}s and properness.

\begin{proposition}\pushlabel
\begin{Enumerate}
\item \itemlabel{tvariant-subst1}
    If \(A\) is a {\tvariant}, then so is \(A[B/X]\).
\item \itemlabel{tvariant-subst2}
    If\/ \(A[B/X]\) is a {\tvariant}, then
	(a) \(A\) is already a {\tvariant}, or
	(b) \(X \in \PETV{\tail{A}}\) and \(B\) is a {\tvariant}.
\item \itemlabel{tvariant-proper-subst}
    If \(A[B/X]\) is a {\tvariant} and \(A\) is proper in \(X\),
    then \(\fix{X}A\) is also a {\tvariant}.
\item \itemlabel{tvariant-fix}
    \(\fix{X}A\) is a {\tvariant} if and only if so is \(A[\fix{X}A/X]\).
\end{Enumerate}
\end{proposition}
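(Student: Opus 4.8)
The plan is to push every statement down to the tail normal form, using two facts already available: the identity $\tail{A[B/X]}=\tail A[\tail B/X]$ (Proposition~\ref{tail-basic}) and the equivalence that $A$ is a \tvariant\ exactly when $\tail A$ is (Proposition~\ref{tvariant-basic}). The key observation is that $\tail A$ always has the shape $\O^{m_0}\fix{X_1}\O^{m_1}\ldots\fix{X_n}\O^{m_n}Y$, so its only variable occurrence is the tail variable $Y$; hence $\FTV{\tail A}$ is $\{Y\}$ when $Y$ is free and $\{\}$ when $Y$ is bound, and since $Y$ always occupies a positive position we have $\PETV{\tail A}=\FTV{\tail A}$ and $\NETV{\tail A}=\{\}$. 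Part~(1) is then immediate: if $A$ is a \tvariant\ then so is $\tail A$, its tail variable is bound, and $\FTV{\tail A}=\{\}$; therefore $\tail{A[B/X]}=\tail A[\tail B/X]=\tail A$ is a \tvariant, and so is $A[B/X]$.

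For Part~(2) I would assume that $\tail A[\tail B/X]$ is a \tvariant\ and split on whether $X$ is the free tail variable of $\tail A$. If $X\notin\FTV{\tail A}$ the substitution does nothing, so $\tail A$, and hence $A$, is a \tvariant: this is case~(a). Otherwise $X$ is exactly that free, positive tail variable, so $X\in\PETV{\tail A}$, and $\tail A[\tail B/X]$ is $\tail A$ with $\tail B$ grafted into its innermost position; writing $\tail B=\O^{p_0}\fix{Y_1}\ldots\O^{p_k}W$, the tail variable of the combined spine is $W$. This grafting step is where I expect the real work: one must check that, after $\alpha$-renaming the binders $X_1,\ldots,X_n$ apart from $\FTV{\tail B}$, the tail variable $W$ cannot be captured by any $X_j$, so that being a \tvariant\ forces $W$ to be bound already inside $\tail B$, with the ``not re-bound'' and ``at least one preceding $\O$'' conditions localising entirely to the $\tail B$ part. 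That makes $\tail B$, and hence $B$, a \tvariant, which is case~(b).

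Part~(3) follows by applying Part~(2) to $A[B/X]$. In case~(a), $A$ is a \tvariant, and $\tail{\fix X A}=\fix X\tail A$ merely prepends an outer binder that leaves the internally bound tail variable and all its conditions untouched, so $\fix X A$ is a \tvariant. In case~(b), $\tail A=\O^{m_0}\fix{X_1}\ldots\fix{X_n}\O^{m_n}X$ with $X$ free, and this is where properness enters. Since $A$ is proper in $X$, so is $\tail A$ (Proposition~\ref{tail-basic}), and the free occurrence of $X$ cannot sit inside a \tvariant\ subexpression---any subexpression of $\tail A$ containing it has $X$ as a free tail variable and is therefore not a \tvariant---so it must lie in the scope of some $\O$, forcing $m_0+\cdots+m_n\ge 1$. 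Then in $\fix X\tail A$ the tail variable $X$ is bound by the new outer $\fix X$, is not re-bound, and is preceded by at least one $\O$, so $\fix X A$ is a \tvariant.

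Finally, for Part~(4) the ``if'' direction is Part~(2) applied with $B=\fix X A$: case~(a) makes $A$, and hence $\fix X A$ by prepending a binder as above, a \tvariant, while case~(b) states outright that $\fix X A$ is one. For the ``only if'' direction, if $A$ is a \tvariant\ then Part~(1) already gives that $A[\fix X A/X]$ is; otherwise $\fix X\tail A$ being a \tvariant\ while $\tail A$ is not forces $\tail A=\O^{m_0}\fix{X_1}\ldots\fix{X_n}\O^{m_n}X$ with $X$ free and $m_0+\cdots+m_n\ge 1$, and substituting the closed \tvariant\ $\fix X A$ for this free $X$ again yields a spine whose tail variable is bound with a preceding $\O$, i.e.\ a \tvariant.
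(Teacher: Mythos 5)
Your proof is correct and follows essentially the same route as the paper: both reduce everything to the tail form via \(\tail{A[B/X]} = \tail{A}[\tail{B}/X]\) and Proposition~\ref{tail-tvariant}, then case-split on whether the tail variable of \(\tail{A}\) is \(X\) and analyse the grafted spine, with properness supplying the condition \(m_0+\cdots+m_n \ge 1\) in Item~\itemref{tvariant-proper-subst}. The only cosmetic difference is that you derive the ``if'' direction of Item~\itemref{tvariant-fix} directly from Item~\itemref{tvariant-subst2} rather than from Item~\itemref{tvariant-proper-subst}, which amounts to the same thing.
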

\begin{proof}
Let \(\tail{A}
    = \O^{m_0}\fix{Y_1}\O^{m_1}\fix{Y_2}\O^{m_2}\ldots\fix{Y_n}\O^{m_n}Z\),
where \(Y_i \not\in \FTV{B} \cup \{\,X\,\}\) for every \(i\).
By Proposition~\ref{tail-subst},
\[
    \tail{A[B/X]} = \Choice{\begin{array}{ll}
	\tail{A} & (Z \not= X) \\
	\O^{m_0}\fix{Y_1}\O^{m_1}\fix{Y_2}\O^{m_2}\ldots\fix{Y_n}\O^{m_n}
	    \tail{B} & (Z = X).
	\end{array}}
\]
Therefore,
we get Item~\itemref{tvariant-subst1} by Proposition~\ref{tail-tvariant}
since \(Z \not= X\) if \(A\) is a {\tvariant}, and
similarly Item~\itemref{tvariant-subst2}
since \(Z = X\) implies \(X \in \PETV{\tail{A}}\).
We also get Item~\itemref{tvariant-proper-subst}
since \(\tail{(\fix{X}A)}\) is either \(\fix{X}\tail{(A[B/X])}\) or
\(\fix{X}\O^{m_0}\fix{Y_1}\O^{m_1}\fix{Y_2}\O^{m_2}\ldots\fix{Y_n}\O^{m_n}X\).
The ``if part'' of Item~\itemref{tvariant-fix}
follows \itemref{tvariant-proper-subst}, and the ``only if'' part
is also straightforward since
\(\tail{A[\fix{X}A/X]}\) is either \(\tail{A}\) (in case of \(Z \not= X\)),
or \(\O^{m_0}\fix{Y_1}\O^{m_1}\fix{Y_2}\O^{m_2}\ldots\fix{Y_n}\O^{m_n}
\fix{X}\tail{A}\) (in case of \(Z = X\)).
\qed\CHECKED{2014/07/07}
\end{proof}

\begin{proposition}\label{petv-netv-subst}\pushlabel
\begin{Enumerate}
\item \itemlabel{petv-netv-subst1}
    If \(X \in \PNETV{A}\), \(X \not= Y\) and \(B\) is not a {\tvariant},
    then \(X \in \PNETV{A[B/Y]}\).
\item \itemlabel{petv-netv-subst2}
    If \(X \not\in \PNETV{A}\) and \(X \not\in \ETV{B}\),
    then \(X \not\in \PNETV{A[B/Y]}\).
\end{Enumerate}
\end{proposition}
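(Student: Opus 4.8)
The plan is to prove both items by induction on the height \(h(A)\), carrying along both the \(+\) and the \(-\) readings of the linked superscripts in Definition~\ref{etv-def}. That is, I would establish simultaneously the four concrete claims: for \itemref{petv-netv-subst1}, that \(X \in \PETV{A}\) (resp.\ \(X \in \NETV{A}\)) together with \(X \neq Y\) and \(B\) not a \tvariant gives \(X \in \PETV{A[B/Y]}\) (resp.\ \(X \in \NETV{A[B/Y]}\)); and for \itemref{petv-netv-subst2} the corresponding non-membership statements. Keeping both polarities together is essential, because the clause \(\PNETV{A \Impl B} = \NPETV{A} \cup \PNETV{B}\) forces the contravariant operand to be treated by the opposite-signed instance of the induction hypothesis. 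Since the recursive calls are always to proper subexpressions of \(A\) (with the same or \(\alpha\)-renamed parameters), each has strictly smaller height, so the induction is well founded.

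Before the case analysis I would dispose of the \tvariant subcases once and for all. For \itemref{petv-netv-subst1} the hypothesis \(X \in \PNETV{A}\) already forces \(A\) not to be a \tvariant (else \(\PNETV{A} = \{\}\)), and since \(B\) is not a \tvariant either, Item~\itemref{tvariant-subst2} shows \(A[B/Y]\) is not a \tvariant; hence the structural clause of Definition~\ref{etv-def} applies to both \(A\) and \(A[B/Y]\). For \itemref{petv-netv-subst2}, if \(A\) is a \tvariant then so is \(A[B/Y]\) by Item~\itemref{tvariant-subst1} and the conclusion is immediate, and if \(A[B/Y]\) happens to be a \tvariant the conclusion is again immediate; so I may assume neither is a \tvariant.

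With these reductions the easy cases are routine. The variable case \(A = Z\) splits on \(Z = Y\) and \(Z = X\) and needs only \(\PETV{Z} = \{Z\}\), \(\NETV{Z} = \{\}\) and \(\PETV{B},\NETV{B} \subseteq \ETV{B}\). The case \(A = \O C\) follows from \(\PNETV{\O C} = \PNETV{C}\) (with Proposition~\ref{O-tvariant}) and the induction hypothesis on \(C\). The case \(A = C \Impl D\) uses \(\PNETV{A} = \NPETV{C} \cup \PNETV{D}\): one applies the opposite-signed instance of the same item to \(C\) and the same-signed instance to \(D\), which is exactly where the dual polarities pay off.

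The delicate case — and the one I expect to be the main obstacle — is \(A = \fix{Z}C\), where by \(\alpha\)-conversion I take \(Z \notin \{X,Y\} \cup \FTV{B}\). Here \(\PNETV{\fix{Z}C}\) is defined by a case split on whether the bound variable \(Z\) lies in \(\NETV{C}\), and \(\PNETV{\fix{Z}C[B/Y]}\) by the analogous split on \(\NETV{C[B/Y]}\); the work is to control how this split migrates under the substitution. For \itemref{petv-netv-subst1}, in the subcase \(Z \in \NETV{C}\) I must know that the negative occurrence of \(Z\) survives, i.e.\ \(Z \in \NETV{C[B/Y]}\), which is precisely the \(-\)-reading of \itemref{petv-netv-subst1} applied to \(C\) at \(Z\) (legitimate since \(Z \neq Y\), \(B\) is not a \tvariant and \(h(C) < h(A)\)); the membership of \(X\) then drops out of the two signed instances of \itemref{petv-netv-subst1} on \(C\), while the subcase \(Z \notin \NETV{C}\) needs nothing extra because \(\PNETV{\fix{Z}C[B/Y]}\) always contains \(\PNETV{C[B/Y]} - \{Z\}\). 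Dually, for \itemref{petv-netv-subst2} the subcase \(Z \notin \NETV{C}\) requires \(Z \notin \NETV{C[B/Y]}\), which is the \(-\)-reading of \itemref{petv-netv-subst2} on \(C\) (using \(Z \notin \ETV{B}\)), whereas the subcase \(Z \in \NETV{C}\) is absorbed by the crude bound \(\PNETV{\fix{Z}C[B/Y]} \subseteq (\PNETV{C[B/Y]} \cup \NPETV{C[B/Y]}) - \{Z\}\). This need to track the fate of the bound variable in the flipped polarity — each item appealing to itself one height down — is the only genuine subtlety; everything else is bookkeeping over Definition~\ref{etv-def}.
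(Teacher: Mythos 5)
Your proposal is correct and follows essentially the same route as the paper's own proof: induction on \(h(A)\) with both polarities carried simultaneously, the same \tvariant reductions via Propositions~\ref{tvariant-subst1}, \ref{tvariant-subst2} and \ref{tvariant-etv}, and the same treatment of the \(\fix{Z}C\) case, where the subcase split on \(Z \in \NETV{C}\) is controlled by applying the negative reading of the respective item to \(C\) at \(Z\) one height down. Nothing to add.
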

\begin{proof}
By straightforward induction on \(h(A)\), and by cases on the form of \(A\).
\ifdetail

\paragraph{Proof of \protect\itemref{petv-netv-subst1}}
Suppose that \(B\) is not a {\tvariant},
\(X \in \PNETV{A}\) and \(X \not= Y\).
Note that \(A\) is not a {\tvariant}
by Proposition~\ref{tvariant-etv} since \(\ETV{A} \not= \{\}\).
Furthermore,
\(A[B/Y]\) is not a {\tvariant} either, by Proposition~\ref{tvariant-subst2}.

\Case{\(A = Z\) for some \(Z\).}
Note that \(\NETV{A} = \{\}\) by Definition~\ref{etv-def}.
On the other hand, \(X \in \PETV{A}\) implies \(Z = X\); and hence,
\(\PETV{A[B/Y]} = \PETV{X[B/Y]} = \PETV{X} = \{\,X\,\}\).

\Case{\(A = \O C\) for some \(C\).}
Since \(\PNETV{A} = \PNETV{C}\),
by induction hypothesis, \(X \in \PNETV{C[B/Y]}\); and hence,
\(X \in \PNETV{A[B/Y]}\) because \(\PNETV{A[B/Y]} = \PNETV{C[B/Y]}\).

\Case{\(A = C \Impl D\) for some \(C\) and \(D\).}
Since neither \(A\) nor \(A[B/Y]\) is a {\tvariant},
\(\PNETV{A} = \NPETV{C} \cup \PNETV{D}\)
and
\(\PNETV{A[B/Y]} = \NPETV{C[B/Y]} \cup \PNETV{D[B/Y]}\).
Hence,
we get \(X \in \NPETV{C}\) or \(X \in \PNETV{D}\) from \(X \in \PNETV{A}\).
If \(X \in \NPETV{C}\) is the case, then
by induction hypothesis, \(X \in \NPETV{C[B/Y]}\); and therefore,
\(X \in \PNETV{A[B/Y]}\).
Similarly, we can also get the same from \(X \in \PNETV{D}\).

\Case{\(A = \fix{Z}C\) for some \(Z\) and \(C\).}
We can assume that \(Z \not\in \{\,X,\,Y\,\} \cup \FTV{B}\)
without loss of generality.
Thus, \(A[B/Y] = \fix{Z}{C[B/Y]}\).
If \(Z \in \NETV{C}\), then
we also get \(Z \in \NETV{C[B/Y]}\) by induction hypothesis; and hence,
\(\PNETV{A} = \ETV{C} - \{\,Z\,\}\)
and \(\PNETV{A[B/Y]} = \ETV{C[B/Y]} - \{\,Z\,\}\).
Therefore, we get \(X \in \PNETV{A[B/Y]}\) from \(X \in \PNETV{A}\)
since \(X \in \ETV{C}\) implies \(X \in \ETV{C[B/Y]}\)
by induction hypothesis.
On the other hand, if \(Z \not\in \NETV{C}\), then
\(\PNETV{A} = \PNETV{C} - \{\,Z\,\}\)
and
\(\PNETV{A[B/Y]} \supseteq \PNETV{C[B/Y]} - \{\,Z\,\}\).
Hence, we similarly get \(X \in \PNETV{A[B/Y]}\) from \(X \in \PNETV{A}\)
by induction hypothesis.

\paragraph{Proof of \protect\itemref{petv-netv-subst2}}
Suppose that \(X \not\in \PNETV{A}\) and \(X \not\in \ETV{B}\).
If \(A[B/Y]\) is a {\tvariant}, then
\(X \not\in \PNETV{A[B/Y]}\) by Proposition~\ref{tvariant-etv}.
So we only consider the case when \(A[B/Y]\) is not a {\tvariant} below.
We can also assume that \(A\) is not either,
because \(A[B/Y]\) is a {\tvariant} by Proposition~\ref{tvariant-subst1}
if so is \(A\).

\Case{\(A = Z\) for some \(Z\).}
If \(Z \not= Y\), then trivial since \(A[B/Y] = Z = A\).
On the other hand, if \(Z = Y\), then
\(A[B/Y] = B\); and hence, \(X \not\in \PNETV{A[B/Y]}\)
from \(X \not\in \ETV{B}\).

\Case{\(A = \O C\) for some \(C\).}
Since \(\PNETV{A} = \PNETV{C}\), we get
\(X \not\in \PNETV{C[B/Y]}\)
by induction hypothesis.
Hence, \(X \not\in \PNETV{A[B/Y]}\)
because \(\PNETV{A[B/Y]} = \PNETV{C[B/Y]}\).

\Case{\(A = C \Impl D\) for some \(C\) and \(D\).}
Since neither \(A\) nor \(A[B/Y]\) is a {\tvariant},
\(\PNETV{A} = \NPETV{C} \cup \PNETV{D}\)
and
\(\PNETV{A[B/Y]} = \NPETV{C[B/Y]} \cup \PNETV{D[B/Y]}\).
Hence,
we get \(X \not\in \NPETV{C}\cup\PNETV{D}\) from \(X \in \PNETV{A}\).
Therefore, by induction hypothesis,
\(X \not\in \NPETV{C[B/Y]} \cup \PNETV{D[B/Y]}\), that is,
\(X \not\in \PNETV{A[B/Y]}\).

\Case{\(A = \fix{Z}C\) for some \(Z\) and \(C\).}
We can assume that \(Z \not\in \{\,X,\,Y\,\} \cup \FTV{B}\)
without loss of generality.
Thus, \(A[B/Y] = \fix{Z}{C[B/Y]}\).
If \(Z \in \NETV{C}\), then
\(\PETV{A} = \NETV{A} = \ETV{C} - \{\,Z\,\}\).
Hence, \(X \not\in \ETV{C}\) from \(X \not\in \PNETV{A}\),
and we get \(X \not\in \ETV{C[B/Y]}\)
by induction hypothesis.
Therefore, \(X \not\in \PNETV{A[B/Y]}\),
since \(\PNETV{A[B/Y]} \subseteq \ETV{C[B/Y]} - \{\,Z\,\}\).
On the other hand, if \(Z \not\in \NETV{C}\), then
we get
\(Z \not\in \NETV{C[B/Y]}\) from \(Z \not\in \ETV{B} \subseteq \FTV{B}\)
by induction hypothesis.
Hence,
\(\PNETV{A} = \PNETV{C} - \{\,Z\,\}\)
and
\(\PNETV{A[B/Y]} = \PNETV{C[B/Y]} - \{\,Z\,\}\).
Therefore, we similarly get \(X \not\in \PNETV{A[B/Y]}\)
from \(X \in \PNETV{A}\) by induction hypothesis.
\else 
For Item~\itemref{petv-netv-subst1},
note that \(A\) is not a {\tvariant}
by Proposition~\ref{tvariant-etv} since \(\ETV{A} \not= \{\}\).
Furthermore,
\(A[B/Y]\) is not a {\tvariant} either,
by Proposition~\ref{tvariant-subst2}.
Similarly, for Item~\itemref{petv-netv-subst2},
we only have to handle the case that neither \(A\) nor \(A[B/Y]\)
is a {\tvariant},
because \(\PNETV{A[B/Y]} = \{\}\)
by Proposition~\ref{tvariant-etv} if \(A[B/Y]\) is a {\tvariant},
and because \(A[B/Y]\) is a {\tvariant}
by Proposition~\ref{tvariant-subst1} if so is \(A\).
\fi 
\qed\CHECKED{2014/07/11}
\end{proof}

\begin{proposition}\label{petv-netv-nest}\pushlabel
\begin{Enumerate}
\item \itemlabel{petv-netv-nest1}
    If \(X \in \PETV{A}\) and \(Y \in \PNETV{B}\), then
		\(Y \in \PNETV{A[B/X]}\).
\item \itemlabel{petv-netv-nest2}
    If \(X \in \NETV{A}\) and \(Y \in \PNETV{B}\), then
		\(Y \in \NPETV{A[B/X]}\).
\item \itemlabel{petv-netv-nest3}
    If \(X \not\in \PETV{A}\) and \(Y \not\in \PNETV{A} \cup \NPETV{B}\), then
		\(Y \not\in \PNETV{A\penalty100[B/X]}\).
\item \itemlabel{petv-netv-nest4}
    If \(X \not\in \NETV{A}\) and \(Y \not\in \PNETV{A} \cup \PNETV{B}\), then
		\(Y \not\in \PNETV{A\penalty100[B/X]}\).
\end{Enumerate}
\end{proposition}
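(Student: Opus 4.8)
The plan is to prove the four items simultaneously by induction on the height $h(A)$ (Definition~\ref{height-rank-def}), with a case analysis on the form of $A$. They have to be treated together because the $\pm/\mp$ coupling in the statements turns into genuine mutual dependence at the function-type case: in $A = C \Impl D$ the variable $X$ sits at the opposite polarity inside $C$, so the positive reading of item~\itemref{petv-netv-nest1} must invoke the induction hypothesis for item~\itemref{petv-netv-nest2} on $C$, and dually. Before the case analysis I would dispose of the $\top$-variant situations. If $A$ is a $\top$-variant, then $\PNETV{A} = \{\}$ by Definition~\ref{etv-def} and $A[B/X]$ is again a $\top$-variant by Proposition~\ref{tvariant-subst1}, so $\PNETV{A[B/X]} = \{\}$ as well; this makes items~\itemref{petv-netv-nest1} and~\itemref{petv-netv-nest2} vacuous, their hypothesis $X \in \ETV{A}$ being false, and items~\itemref{petv-netv-nest3} and~\itemref{petv-netv-nest4} trivial. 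For items~\itemref{petv-netv-nest1} and~\itemref{petv-netv-nest2} I may moreover assume $A[B/X]$ is not a $\top$-variant: otherwise Proposition~\ref{tvariant-subst2} forces $B$ to be a $\top$-variant (as $A$ is not), whence $\PNETV{B} = \{\}$ and the hypothesis $Y \in \PNETV{B}$ fails. Hence in the main argument both $A$ and $A[B/X]$ may be assumed not to be $\top$-variants, and the recursive clauses of Definition~\ref{etv-def} apply.

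The base and easy cases are routine. For $A = Z$ a type variable, either $Z = X$, so $A[B/X] = B$ and every item becomes an immediate statement about $\PNETV{B}$ (two of them vacuous), or $Z \neq X$, so $A[B/X] = A$ and all four hold at once. For $A = \O C$ the operator leaves both polarities unchanged, $\PNETV{A} = \PNETV{C}$ and $\PNETV{A[B/X]} = \PNETV{C[B/X]}$, so each item follows from the same item on $C$. The function-type case $A = C \Impl D$, with $\PETV{A} = \NETV{C} \cup \PETV{D}$ and $\NETV{A} = \PETV{C} \cup \NETV{D}$, is where the statements interlock: for item~\itemref{petv-netv-nest1} a witness for $X \in \PETV{A}$ lies either in $\PETV{D}$, handled by item~\itemref{petv-netv-nest1} on $D$, or in $\NETV{C}$, handled by item~\itemref{petv-netv-nest2} on $C$, whose $\mp$-conclusion reinserts $Y$ at exactly the right polarity of $A[B/X]$; items~\itemref{petv-netv-nest2}--\itemref{petv-netv-nest4} split the same way, each calling on its companion across the arrow.

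The recursive case $A = \fix{Z}C$ is the crux and the expected main obstacle. After $\alpha$-renaming so that $Z \notin \{\,X,\,Y\,\} \cup \FTV{B}$, we have $A[B/X] = \fix{Z}(C[B/X])$, and the $\mu$-clause of Definition~\ref{etv-def} branches on whether the bound variable occurs negatively. The first thing to secure is that $\fix{Z}C$ and $\fix{Z}(C[B/X])$ fall into the same branch, i.e.\ that $Z \in \NETV{C}$ iff $Z \in \NETV{C[B/X]}$; since $Z \notin \ETV{B}$ this is precisely Proposition~\ref{petv-netv-subst1} (for the inclusion direction, needing $B$ not a $\top$-variant) together with Proposition~\ref{petv-netv-subst2} (for the other). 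Once the branches agree, each item reduces to its counterpart on $C$ by the induction hypothesis, modulo deleting $Z$, which is harmless since $Z \neq Y$. The delicate point is the negative-occurrence branch $Z \in \NETV{C}$, where Definition~\ref{etv-def} collapses $\PETV{\fix{Z}C} = \NETV{\fix{Z}C} = \ETV{C} - \{\,Z\,\}$: there a single conclusion about $\PNETV{A[B/X]}$ has to be extracted from both the positive and the negative induction hypotheses on $C$ at once. I expect this bookkeeping --- together with the residual subcase in which $B$ is itself a $\top$-variant, so that Proposition~\ref{petv-netv-subst1} is unavailable and the branch may switch away from the negative-occurrence form --- to be the only genuinely fiddly part; there one falls back on the fact that $\PETV{\fix{Z}D} \subseteq \ETV{D} - \{\,Z\,\}$ holds in either branch, so the stronger hypothesis provided by the original negative-occurrence branch still closes the argument.
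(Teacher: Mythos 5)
Your proposal is correct and follows essentially the same route as the paper's proof: simultaneous induction on $h(A)$ with mutual recursion between the paired items at the $\Impl$ case, the $\top$-variant preprocessing via Propositions~\ref{tvariant-etv}, \ref{tvariant-subst1} and \ref{tvariant-subst2}, and Propositions~\ref{petv-netv-subst1} and \ref{petv-netv-subst2} to control the bound variable's polarity in the $\fix{Z}C$ case. The only cosmetic difference is that the paper does not insist that $\fix{Z}C$ and $\fix{Z}(C[B/X])$ land in the same branch of Definition~\ref{etv-def}; it works directly with the one-sided inclusions $\PNETV{C[B/X]} - \{\,Z\,\} \subseteq \PNETV{\fix{Z}(C[B/X])} \subseteq \ETV{C[B/X]} - \{\,Z\,\}$, which is exactly the fallback you identify at the end.
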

\begin{proof}
By simultaneous induction on \(h(A)\), and by cases on the form of \(A\).
\ifdetail

\paragraph{Proof of \protect\itemref{petv-netv-nest1}}

Suppose that \(X \in \PETV{A}\) and \(Y \in \PNETV{B}\).
Hence, neither \(A\) nor \(B\) is a {\tvariant}
by Proposition~\ref{tvariant-etv}; and therefore,
\(A[B/X]\) is not a {\tvariant} either, by Proposition~\ref{tvariant-subst2}.

\Case{\(A = Z\) for some \(Z\).}
In this case,
\(\PNETV{A[B/X]} = \PNETV{B}\), since \(Z = X\) from \(X \in \PETV{A}\).
Hence, \(Y \in \PNETV{A[B/X]}\) from \(Y \in \PNETV{B}\).

\Case{\(A = \O C\) for some \(C\).}
In this case,
since \(\PETV{A} = \PETV{C}\), by the induction hypothesis,
\(Y \in \PNETV{C[B/X]}\), from which we get
\(Y \in \PNETV{A[B/X]}\)
because \(\PNETV{A[B/X]} = \PNETV{C[B/X]}\).

\Case{\(A = C \Impl D\) for some \(C\) and \(D\).}
Since neither \(A\) nor \(A[B/X]\) is a {\tvariant},
\(\PETV{A} = \NETV{C} \cup \PETV{D}\)
and \(\PNETV{A[B/X]} = \NPETV{C[B/X]} \cup \PNETV{D[B/X]}\).
Hence,
we get \(X \in \NETV{C}\cup\PETV{D}\) from \(X \in \PETV{A}\).
Therefore, by the induction hypotheses
for Items~\itemref{petv-netv-nest1} and \itemref{petv-netv-nest2},
\(Y \in \NPETV{C[B/X]} \cup \PNETV{D[B/X]}\), that is,
\(Y \in \PNETV{A[B/X]}\).

\Case{\(A = \fix{Z}C\) for some \(Z\) and \(C\).}
We can assume that \(Z \not\in \{\,X,\,Y\,\} \cup \FTV{B}\)
without loss of generality.
Thus, \(A[B/X] = \fix{Z}{C[B/X]}\).
If \(Z \in \NETV{C}\), then
we also get \(Z \in \NETV{C[B/X]}\) by Proposition~\ref{petv-netv-subst1};
and hence,
\(\PETV{A} = \ETV{C} - \{\,Z\,\}\) and
\(\PNETV{A[B/X]} = \ETV{C[B/X]} - \{\,Z\,\}\).
Therefore,
\(X \in \ETV{C}\) from \(X \in \PETV{A}\),
and we get \(Y \in \ETV{C[B/X]}\)
by the induction hypotheses
for Items~\itemref{petv-netv-nest1} and \itemref{petv-netv-nest2}.
Thus, \(Y \in \PNETV{A[B/X]}\).
On the other hand, if \(Z \not\in \NETV{C}\), then
\(\PETV{A} = \PETV{C} - \{\,Z\,\}\)
and \(\PNETV{A[B/X]} \supseteq \PNETV{C[B/X]} - \{\,Z\,\}\).
Therefore, we similarly get \(Y \in \PNETV{A[B/X]}\)
from \(X \in \PETV{A}\) by the induction hypothesis for the same item.

\paragraph{Proof of \protect\itemref{petv-netv-nest2}}
The proof is almost parallel to the one for Item~\itemref{petv-netv-nest1}.
Suppose that \(X \in \NETV{A}\) and \(Y \in \PNETV{B}\).
Hence, neither \(A\), \(B\) nor \(A[B/X]\) is a {\tvariant}
by Propositions~\ref{tvariant-etv} and \ref{tvariant-subst2}.

\Case{\(A = Z\) for some \(Z\).}
This case contradicts the assumption that \(\NETV{A} \not= \{\}\).

\Case{\(A = \O C\) for some \(C\).}
Since \(\NETV{A} = \NETV{C}\), by induction hypothesis,
\(Y \in \NPETV{C[B/X]}\), from which we get
\(Y \in \NPETV{A[B/X]}\)
because \(\NPETV{A[B/X]} = \NPETV{C[B/X]}\).

\Case{\(A = C \Impl D\) for some \(C\) and \(D\).}
Since neither \(A\) nor \(A[B/X]\) is a {\tvariant},
\(\NETV{A} = \PETV{C} \cup \NETV{D}\)
and
\(\NPETV{A[B/X]} = \PNETV{C[B/X]} \cup \NPETV{D[B/X]}\).
Hence,
we get \(X \in \PETV{C}\cup\NETV{D}\) from \(X \in \NETV{A}\).
Therefore, by the induction hypotheses
for Items~\itemref{petv-netv-nest1} and \itemref{petv-netv-nest2},
\(Y \in \PNETV{C[B/X]} \cup \NPETV{D[B/X]}\), that is,
\(Y \in \NPETV{A[B/X]}\).

\Case{\(A = \fix{Z}C\) for some \(Z\) and \(C\).}
We can assume that \(Z \not\in \{\,X,\,Y\,\} \cup \FTV{B}\)
without loss of generality.
Thus, \(A[B/X] = \fix{Z}{C[B/X]}\).
If \(Z \in \NETV{C}\), then
we also get \(Z \in \NETV{C[B/X]}\) by Proposition~\ref{petv-netv-subst1};
and hence,
\(\NETV{A} = \ETV{C} - \{\,Z\,\}\) and
\(\NPETV{A[B/X]} = \ETV{C[B/X]} - \{\,Z\,\}\).
Therefore,
\(X \in \ETV{C}\) from \(X \in \NETV{A}\),
and we get \(Y \in \ETV{C[B/X]}\) by induction hypothesis.
Thus, \(Y \in \NPETV{A[B/X]}\).
On the other hand, if \(Z \not\in \NETV{C}\), then
\(\NETV{A} = \NETV{C} - \{\,Z\,\}\)
and \(\NPETV{A[B/X]} \supseteq \NPETV{C[B/X]} - \{\,Z\,\}\).
Hence, we similarly get \(Y \in \NPETV{A[B/X]}\)
from \(X \in \NETV{A}\) by induction hypothesis.

\paragraph{Proof of \protect\itemref{petv-netv-nest3}}
Suppose that \(X \not\in \PETV{A}\) and \(Y \not\in \PNETV{A} \cup \NPETV{B}\).
If \(A[B/X]\) is a {\tvariant}, then
\(Y \not\in \PNETV{A[B/X]}\) by Proposition~\ref{tvariant-etv}.
So we only consider the case when \(A[B/X]\) is not a {\tvariant}.
Note that \(A\) is not a {\tvariant} either,
because \(A[B/X]\) is a {\tvariant} by Proposition~\ref{tvariant-subst1}
if so is \(A\).

\Case{\(A = Z\) for some \(Z\).}
In this case, \(Z \not= X\) from \(X \not\in \PETV{A}\); and hence,
\(\PNETV{A[B/X]} = \PNETV{A}\).
Hence, \(Y \not\in \PNETV{A[B/X]}\)
from \(Y \not\in \PNETV{A} \cup \NPETV{B}\).

\Case{\(A = \O C\) for some \(C\).}
Since \(\PNETV{A} = \PNETV{C}\), by induction hypothesis,
\(Y \not\in \PNETV{C[B/X]}\), from which we get
\(Y \not\in \PNETV{A[B/X]}\)
because \(\PNETV{A[B/X]} = \PNETV{C[B/X]}\).

\Case{\(A = C \Impl D\) for some \(C\) and \(D\).}
Since neither \(A\) nor \(A[B/X]\) is a {\tvariant},
\(\PNETV{A} = \NPETV{C} \cup \PNETV{D}\)
and \(\PNETV{A[B/X]} = \NPETV{C[B/X]} \cup \PNETV{D[B/X]}\).
Hence, we get \(X \not\in \NETV{C}\cup\PETV{D}\) from \(X \not\in \PETV{A}\),
and also get \(Y \not\in \NPETV{C} \cup \NPETV{B}\)
and \(Y \not\in \PNETV{D} \cup \NPETV{B}\)
from \(Y \not\in \PNETV{A} \cup \NPETV{B}\).
Therefore, by the induction hypotheses
for Items~\itemref{petv-netv-nest3} and \itemref{petv-netv-nest4},
\(Y \not\in \NPETV{C[B/X]} \cup \PNETV{D[B/X]}\), that is,
\(Y \not\in \PNETV{A[B/X]}\).

\Case{\(A = \fix{Z}C\) for some \(Z\) and \(C\).}
We can assume that \(Z \not\in \{\,X,\,Y\,\} \cup \FTV{B}\)
without loss of generality.
Thus, \(A[B/X] = \fix{Z}{C[B/X]}\).
If \(Z \in \NETV{C}\), then
\(\PETV{A} = \NETV{A} = \ETV{C} - \{\,Z\,\}\); and hence,
we get \(X \not\in \ETV{C}\) from \(X \not\in \PETV{A}\),
and get \(Y \not\in \ETV{C} \cup \NPETV{B}\)
from \(Y \not\in \PNETV{A} \cup \NPETV{B}\).
Therefore, \(Y \not\in \ETV{C[B/X]}\) by induction hypothesis.
We thus get \(Y \not\in \PNETV{A[B/X]}\),
since \(\PNETV{A[B/X]} \subseteq \ETV{C[B/X]} - \{\,Z\,\}\).
On the other hand, if \(Z \not\in \NETV{C}\), then
we get \(Z \not\in \NETV{C[B/X]}\) from \(Z \not\in \ETV{B} \subseteq \FTV{B}\)
by Proposition~\ref{petv-netv-subst2}.
Hence, \(\PNETV{A} = \PNETV{C} - \{\,Z\,\}\)
and \(\PNETV{A[B/X]} = \PNETV{C[B/X]} - \{\,Z\,\}\).
Therefore, we get \(X \not\in \PETV{C}\) from \(X \not\in \PETV{A}\),
and get \(Y \not\in \PNETV{C} \cup \NPETV{B}\)
from \(Y \not\in \PNETV{A} \cup \NPETV{B}\); and hence,
\(Y \not\in \PNETV{C[B/X]}\) by induction hypothesis,
from which we get \(Y \not\in \PNETV{A[B/X]}\).

\paragraph{Proof of \protect\itemref{petv-netv-nest4}}
The proof is almost parallel to the one for Item~\itemref{petv-netv-nest3}.
Suppose that \(X \not\in \NETV{A}\) and \(Y \not\in \PNETV{A} \cup \PNETV{B}\).
It also suffices to consider the case that
neither \(A\) nor \(A[B/X]\) is a {\tvariant}.

\Case{\(A = Z\) for some \(Z\).}
If \(Z = X\), then
\(\PNETV{A[B/X]} = \PNETV{B}\); and hence,
\(Y \not\in \PNETV{A[B/X]}\)
from \(Y \not\in \PNETV{A} \cup \PNETV{B}\).
Otherwise, i.e.,
if \(Z \not= X\), then
\(\PNETV{A[B/X]} = \PNETV{A}\); and hence,
we similarly get \(Y \not\in \PNETV{A[B/X]}\)
from \(Y \not\in \PNETV{A} \cup \PNETV{B}\).

\Case{\(A = \O C\) for some \(C\).}
Since \(\PNETV{A} = \PNETV{C}\), by induction hypothesis,
\(Y \not\in \PNETV{C[B/X]}\), from which we get
\(Y \not\in \PNETV{A[B/X]}\)
because \(\PNETV{A[B/X]} = \PNETV{C[B/X]}\).

\Case{\(A = C \Impl D\) for some \(C\) and \(D\).}
Since neither \(A\) nor \(A[B/X]\) is a {\tvariant},
\(\PNETV{A} = \NPETV{C} \cup \PNETV{D}\)
and \(\PNETV{A[B/X]} = \NPETV{C[B/X]} \cup \PNETV{D[B/X]}\).
Hence, we get \(X \not\in \PETV{C}\cup\NETV{D}\) from \(X \not\in \NETV{A}\),
and also get \(Y \not\in \NPETV{C} \cup \PNETV{B}\)
and \(Y \not\in \PNETV{D} \cup \PNETV{B}\)
from \(Y \not\in \PNETV{A} \cup \PNETV{B}\).
Therefore, by the induction hypotheses
for Items~\itemref{petv-netv-nest3} and \itemref{petv-netv-nest4},
\(Y \not\in \NPETV{C[B/X]} \cup \PNETV{D[B/X]}\), that is,
\(Y \not\in \PNETV{A[B/X]}\).

\Case{\(A = \fix{Z}C\) for some \(Z\) and \(C\).}
We can assume that \(Z \not\in \{\,X,\,Y\,\} \cup \FTV{B}\)
without loss of generality.
Thus, \(A[B/X] = \fix{Z}{C[B/X]}\).
If \(Z \in \NETV{C}\), then
\(\PETV{A} = \NETV{A} = \ETV{C} - \{\,Z\,\}\); and hence,
we get \(X \not\in \ETV{C}\) from \(X \not\in \NETV{A}\),
and get \(Y \not\in \ETV{C} \cup \PNETV{B}\)
from \(Y \not\in \PNETV{A} \cup \PNETV{B}\).
Therefore, \(Y \not\in \ETV{C[B/X]}\) by induction hypothesis.
We thus get \(Y \not\in \PNETV{A[B/X]}\),
since \(\PNETV{A[B/X]} \subseteq \ETV{C[B/X]} - \{\,Z\,\}\).
On the other hand, if \(Z \not\in \NETV{C}\), then
we get \(Z \not\in \NETV{C[B/X]}\) from \(Z \not\in \ETV{B} \subseteq \FTV{B}\)
by Proposition~\ref{petv-netv-subst2}.
Hence, \(\PNETV{A} = \PNETV{C} - \{\,Z\,\}\)
and \(\PNETV{A[B/X]} = \PNETV{C[B/X]} - \{\,Z\,\}\).
Therefore, we get \(X \not\in \NETV{C}\) from \(X \not\in \NETV{A}\),
and get \(Y \not\in \PNETV{C} \cup \PNETV{B}\)
from \(Y \not\in \PNETV{A} \cup \PNETV{B}\); and hence,
\(Y \not\in \PNETV{C[B/X]}\) by induction hypothesis,
from which we get \(Y \not\in \PNETV{A[B/X]}\).
\else 
For Items~\itemref{petv-netv-nest1} and \itemref{petv-netv-nest2},
note that neither \(A\), \(B\) nor \(A[B/X]\) is a {\tvariant}
by Propositions~\ref{tvariant-etv} and \ref{tvariant-subst2}.
In case of \(A = \fix{Z}C\) for some \(Z\) and \(C\),
we use Proposition~\ref{petv-netv-subst1} to show that
\(Z \in \NETV{C}\) implies \(Z \in \NETV{C[B/X]}\).
Similarly,
for Items~\itemref{petv-netv-nest3} and \itemref{petv-netv-nest4},
by Propositions~\ref{tvariant-etv} and \ref{tvariant-subst1},
it suffices to consider the case that
neither \(A\) nor \(A[B/X]\) is a {\tvariant}.
In case of \(A = \fix{Z}C\) for some \(Z\) and \(C\),
we use Proposition~\ref{petv-netv-subst2} to show that
\(Z \not\in \NETV{C}\) implies \(Z \not\in \NETV{C[B/X]}\).
\fi 
\qed\CHECKED{2014/07/11}
\end{proof}

\begin{proposition}\pushlabel
\begin{Enumerate}
\item \itemlabel{rank-proper}
    If\/ \(A\) is proper in \(X\),
    then \(r(A[B/X]) \le r(A)\) for every \(B\).
\item \itemlabel{rank-fix}
    If\/ \(\fix{X}A\) is not a {\tvariant}, then
    \(r(A[\fix{X}A/X]) < r(\fix{X}A)\).
\item \itemlabel{proper-subst1}
    If\/ \(A\) and \(B\) are proper in \(X\), then \(A[B/Y]\)
    is also proper in \(X\) for any \(Y\).
\item \itemlabel{proper-subst2}
    If\/ \(A\) is proper in \(X\), then so is \(A[B/X]\)
    for every \(B\).
\end{Enumerate}
\end{proposition}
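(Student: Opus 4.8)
The plan is to prove all four items by a single induction on the height \(h(A)\), with a case analysis on the outermost constructor of \(A\) and the {\tvariant} case dispatched uniformly at the start. In each item, if \(A\) is a {\tvariant} then \(A[B/X]\) (resp.\ \(A[B/Y]\)) is again a {\tvariant} by Proposition~\ref{tvariant-subst1}, so it has rank \(0\) and is proper in every variable by Proposition~\ref{tvariant-proper}; this settles that case. Thus in the inductive step I may assume \(A\) is not a {\tvariant}, and the key point is that properness then descends to the immediate subexpressions: in the case \(A = C \Impl D\), the \(\Impl\)-clause of Definition~\ref{proper-def} offers the alternative that \(D\) is a {\tvariant}, which would make \(A\) a {\tvariant} by Proposition~\ref{Impl-tvariant}, contrary to assumption; so the other alternative must hold and both \(C\) and \(D\) are proper in \(X\). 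The case \(A = \fix{Y}C\) is analogous, yielding \(C\) proper in \(X\).

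For Item~\itemref{rank-proper} the essential observation is that the modality resets the rank: \(r(\O C) = 0\) whether or not \(\O C\) is a {\tvariant}, so the case \(A = \O C\) holds with \(r(A[B/X]) = 0 = r(A)\) and uses neither the induction hypothesis nor properness. In the variable case \(A = Y\), properness forces \(Y \not= X\), so the substitution is vacuous; in the cases \(A = C \Impl D\) and \(A = \fix{Y}C\) I apply the induction hypothesis to the (now proper) subexpressions and note that the rank of the result is either \(0\), when it happens to be a {\tvariant}, or else bounded by \(\max(r(C),r(D)) + 1\), resp.\ \(r(C) + 1\), which does not exceed \(r(A)\).

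Item~\itemref{rank-fix} then needs no separate induction. Under the standing convention that \(A\) denotes a type expression, \(\fix{X}A \in \TE\) forces \(A\) to be proper in \(X\) by Definition~\ref{te-def}, so Item~\itemref{rank-proper}, taking \(B\) to be \(\fix{X}A\), gives \(r(A[\fix{X}A/X]) \le r(A)\); since \(\fix{X}A\) is not a {\tvariant} we have \(r(\fix{X}A) = r(A) + 1\), and the strict inequality follows at once. Properness is genuinely essential here: for \(A = X \Impl Y\), which is not proper in \(X\), one computes \(r(A[\fix{X}A/X]) > r(\fix{X}A)\), so the inequality fails without the well-formedness hypothesis.

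Items~\itemref{proper-subst1} and~\itemref{proper-subst2} are two parallel inductions of the same shape. The \(\O\)-case is immediate, since any expression \(\O C\) is proper in \(X\) by Definition~\ref{proper-def}; this is precisely where occurrences of the substituted term that fall under a modality are protected, so neither the induction hypothesis nor any assumption on the substituted term is needed there. The \(\Impl\)- and \(\fix\)-cases close by applying the induction hypothesis to the subexpressions, which are proper by the remark of the first paragraph. The two items diverge only in the variable case: for Item~\itemref{proper-subst1} with \(A = Z\) and \(Z = Y\) we get \(A[B/Y] = B\), proper in \(X\) by hypothesis, whereas for Item~\itemref{proper-subst2} properness already gives \(Z \not= X\), so the substitution for \(X\) is vacuous and no assumption on \(B\) is required. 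I anticipate no real obstacle in any item; the only care needed is the consistent handling of the {\tvariant} base cases and the repeated use of Proposition~\ref{Impl-tvariant} to confirm that ``not a {\tvariant}'' together with properness propagates properness to the subexpressions.
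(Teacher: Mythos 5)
Your proposal is correct and follows essentially the same route as the paper: a case analysis on the form of \(A\) by induction on \(h(A)\), with the {\tvariant} cases dispatched first via Propositions~\ref{tvariant-subst1} and \ref{tvariant-proper}, properness pushed to subexpressions using Proposition~\ref{Impl-tvariant}, and Item~\itemref{rank-fix} obtained as an immediate corollary of Item~\itemref{rank-proper} together with \(r(\fix{X}A) = r(A)+1\).
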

\begin{proof}
By straightforward induction on \(h(A)\), and by cases on the form of \(A\)
using Proposition~\ref{tvariant-subst1}, where
Item~\itemref{rank-fix} immediately follows from \itemref{rank-proper}.
Note that every {\tvariant} is proper in any type variable, and that
\(r(A[B/X]) = 0\) for every {\tvariant} \(A\).
Hence, it suffices to only consider the case that \(A\) is not a {\tvariant}.
%
%
\ifdetail

\paragraph{Proof of \protect\itemref{rank-proper}}
Suppose that \(A\) is proper in \(X\).
If \(A[B/X]\) is a {\tvariant}, then it is trivial because
\(r(A[B/X]) = 0\) by Definition~\ref{rank-def}.
Furthermore, if \(A\) is a {\tvariant}, the so is
\(A[B/X]\) by Proposition~\ref{tvariant-subst1}.
Hence, we only consider the case
that neither \(A\) nor \(A[B/X]\) is a {\tvariant}, below.

\Case{\(A = Y\) for some \(Y\).}
Since \(A\) is proper in \(X\),
we get \(Y \not= X\); and hence, \(A[B/X] = A\).

\Case{\(A = \O C\) for some \(C\).}
Trivial since \(r(A[B/X]) = 0\) by Definition~\ref{rank-def} in this case.

\Case{\(A = C \Impl D\) for some \(C\) and \(D\).}
Note that both \(C\) and \(D\) are also proper in \(X\),
since \(A\) is not a {\tvariant}.
Hence, by induction hypothesis and Definition~\ref{rank-def},
\(r(A[B/X]) = r(C[B/X] \Impl D[B/X])
    = \max(r(C[B/X]),\,r(D[B/X])) + 1
    \le \max(r(C),\,r(D)) + 1
    = r(A)\).

\Case{\(A = \fix{Y}C\) for some \(Y\) and \(C\).}
We can assume that \(Y \not\in \{\,X\,\} \cup \FTV{B}\)
without loss of generality.
That is, \(A[B/X] = \fix{Y}{C[B/X]}\).
Note that \(C\) is also proper in \(X\), since \(A\) is not a {\tvariant}.
Hence, by induction hypothesis and Definition~\ref{rank-def},
\(r(A[B/X]) = r(\fix{Y}C[B/X])
    = r(C[B/X]) + 1
    \le r(C) + 1
    = r(\fix{Y}C)\).

\paragraph{Proof of \protect\itemref{rank-fix}}
If \(\fix{X}A\) is not a {\tvariant}, then
\(r(\fix{X}A) = r(A) + 1\) by Definition~\ref{rank-def}; and hence,
\(r(A[\fix{X}A/X]) \le r(A) < r(\fix{X}A)\) by Item~\itemref{rank-proper}
of this proposition.

\paragraph{Proof of \protect\itemref{proper-subst1}}
Suppose that \(A\) and \(B\) are proper in \(X\).
If \(A[B/Y]\) is a {\tvariant}, then \(A[B/Y]\) is proper in \(X\)
by Proposition~\ref{tvariant-proper}.
Furthermore, if \(A\) is a {\tvariant}, the so is
\(A[B/Y]\) by Proposition~\ref{tvariant-subst1}.
Hence, we assume that neither \(A\) nor \(A[B/Y]\) is a {\tvariant}, below.

\Case{\(A = Z\) for some \(Z\).}
If \(Z = Y\), then \(A[B/Y] = B\); and otherwise, \(A[B/Y] = A\).
In either case, \(A[B/Y]\) is proper in \(X\) by assumption.

\Case{\(A = \O C\) for some \(C\).}
Trivial by Definition~\ref{proper-def} since \(A[B/Y] = \O C[B/Y]\).

\Case{\(A = C \Impl D\) for some \(C\) and \(D\).}
Both \(C\) and \(D\) are also proper in \(X\),
since \(A\) is not a {\tvariant}.
Therefore, so are \(C[B/Y]\) and \(D[B/Y]\) by induction hypothesis;
and hence, \(A[B/Y]\) is also proper in \(X\).

\Case{\(A = \fix{Z}C\) for some \(Z\) and \(C\).}
We can assume that \(Z \not\in \{\,X,\,Y\,\} \cup \FTV{B}\)
without loss of generality.
That is, \(A[B/Y] = \fix{Z}{C[B/Y]}\).
Note that \(C\) is also proper in \(X\), since \(A\) is not a {\tvariant}.
Therefore, \(C[B/Y]\) is proper in \(X\) by induction hypothesis;
and hence, so is \(A[B/Y]\) by Definition~\ref{proper-def}.

\paragraph{Proof of \protect\itemref{proper-subst2}}
Suppose that \(A\) is proper in \(X\).
Similarly to the previous item,
we can assume that neither \(A\) nor \(A[B/X]\) is a {\tvariant}.

\Case{\(A = Y\) for some \(Y\).}
Since \(A\) is proper in \(X\),
we get \(Y \not= X\); and hence, \(A[B/X] = A\), which
is proper in \(X\) by assumption.

\Case{\(A = \O C\) for some \(C\).}
Trivial by Definition~\ref{proper-def} since \(A[B/X] = \O C[B/X]\).

\Case{\(A = C \Impl D\) for some \(C\) and \(D\).}
Both \(C\) and \(D\) are also proper in \(X\),
since \(A\) is not a {\tvariant}.
Therefore, so are \(C[B/X]\) and \(D[B/X]\) by induction hypothesis;
and hence, \(A[B/X]\) is also proper in \(X\).

\Case{\(A = \fix{Y}C\) for some \(Y\) and \(C\).}
We can assume that \(Y \not\in \{\,X\,\} \cup \FTV{B}\)
without loss of generality.
That is, \(A[B/X] = \fix{Y}{C[B/X]}\).
Note that \(C\) is also proper in \(X\), since \(A\) is not a {\tvariant}.
Therefore, \(C[B/X]\) is proper in \(X\) by induction hypothesis;
and hence, so is \(A[B/X]\) by Definition~\ref{proper-def}.
\fi 
\qed\CHECKED{2014/07/11}
\end{proof}

\begin{definition}
    \ilabel{depth-def}{depth}
    \ilabel*{dp@$\protect\PNOdp(A), \protect\PNIdp(A)$}
Let \(X\) be a type variable, and \(A\) a type expression.
The {\em positive \(\O\)-depth} \(\POdp(A,\,X)\) and
the {\em negative \(\O\)-depth} \(\NOdp(A,\,X)\)
of \(X\) in \(A\) are defined as follows.
\begin{Eqnarray*}
\def\Infty{\hbox to 70mm{\(\infty\)}}
\PNOdp(A,\,X) &=& \infty & \((\mbox{\(A\) is a {\tvariant}})\) \\
\POdp(X,\,X) &=& 0 & \\
\NOdp(X,\,X) &=& \infty & \\
\PNOdp(Y,\,X) &=& \infty & \((X \not= Y)\) \\
\PNOdp(\O A,\,X) &=& \PNOdp(A,\,X)+1 &
    \((\mbox{\(\,\O A\) is not a {\tvariant}})\) \\
\PNOdp(A \Impl B,\,X) &=& \min(\NPOdp(A,\,X),\,\PNOdp(B,\,X))
    & \((\mbox{\(A \Impl B\) is not a {\tvariant}})\) \\
\PNOdp(\fix{Y}A,\,X) &=&
    \min(\PNOdp(A,\,X),\,\NOdp(A,\,Y) + \NPOdp(A,\,X))\hskip-100pt \\
    && \span\hfill
	$(\mbox{\(X \not= Y\) and \(\fix{Y}A\) is not a {\tvariant}})$
\end{Eqnarray*}
Similarly, the {\em positive} (respectively, {\em negative})
{\em \(\Impl\)-depth} \(\PNIdp(A,\,X)\) of \(X\) in \(A\)
is defined as follows.
\begin{Eqnarray*}
\PNIdp(A,\,X) &=& \infty & \((\mbox{\(A\) is a {\tvariant}})\) \\
\PIdp(X,\,X) &=& 0 \\
\NIdp(X,\,X) &=& \infty \\
\PNIdp(Y,\,X) &=& \infty & \((X \not= Y)\)\\
\PNIdp(\O A,\,X) &=& \PNIdp(A,\,X) &
    \((\mbox{\(\,\O A\) is not a {\tvariant}})\) \\
\PNIdp(A \Impl B,\,X) &=& \min(\NPIdp(A,\,X),\,\PNIdp(B,\,X)) + 1
    & \((\mbox{\(A \Impl B\) is not a {\tvariant}})\) \\
\PNIdp(\fix{Y}A,\,X) &=&
    \min(\PNIdp(A,\,X),\,\NIdp(A,\,Y) + \NPIdp(A,\,X)) \hskip-100pt \\
    && \span\hfill
	$(\mbox{\(X \not= Y\) and \(\fix{Y}A\) is not a {\tvariant}})$
\end{Eqnarray*}
The domains of \({+}\) and \(\min\) are assumed to be naturally extended
to \(\{\, 0,\,1,\,2,\,\ldots,\,\infty \,\}\).
It can be easily verified that \(\alpha\)-conversion does not affect
the definition of \(\PNOdp(A,\,X)\) or \(\PNIdp(A,\,X)\).
We also define the {\em \(\O\)-depth} \(\Odp(A,\,X)\) and
{\em \(\Impl\)-depth} \(\Idp(A,\,X)\) of \(X\) in \(A\) as follows.
\begin{eqnarray*}
\Odp(A,\,X) &=& \min(\POdp(A,\,X),\,\NOdp(A,\,X)) \\
\Idp(A,\,X) &=& \min(\PIdp(A,\,X),\,\NIdp(A,\,X))
\end{eqnarray*}
\end{definition}
For example, let \(A = \fix{X}\,\O(X \Impl \O Y) \Impl Z\) and
\(B = \fix{X}\,\O(X \Impl Y \Impl Z)\).
Then, \(\POdp(A,\,Y) = \PIdp(A,\,Y) = 2\),
\(\NOdp(A,\,Y) = \NIdp(A,\,Y) = \infty\),
\(\POdp(B,\,Y) = 2\), \(\PIdp(B,\,Y) = 3\),
\(\NOdp(B,\,Y) = 1\), and \(\NIdp(B,\,Y) = 2\).

\begin{proposition}\label{depth-basic}\pushlabel
Let \(\Dp\) be either \(\Odp\) or \(\Idp\), and suppose that
\(B\) is not a {\tvariant}.
\begin{Enumerate}
\item \itemlabel{depth-finite-etv}
    \(\PNDp(A,\,X) < \infty\) if and only if \(X \in \PNETV{A}\).
\item \itemlabel{depth-subst0}
    If \(X \not= Y\) and \(Y \not\in \ETV{B}\), then
	\(\PNDp(A[B/X],\,Y) = \PNDp(A,\,Y)\).
\item \itemlabel{depth-subst1}
    If \(X \not= Y\), then \(\PNDp(A[B/X],\,Y) = \min(\PNDp(A,\,Y),\,
	 \PDp(A,\,X) + \PNDp(B,\,Y),\,\NDp(A,\,X) + \NPDp(B,\,Y))\).
\item \itemlabel{depth-subst2}
    \(\PNDp(A[B/X],\,X)
	= \min(\PDp(A,\,X) + \PNDp(B,\,X),\,\NDp(A,\,X) + \NPDp(B,\,X))\).
\item \itemlabel{O-depth-proper}
    \(A\) is proper in \(X\) if and only if \(\Odp(A,\,X) > 0\).
\item \itemlabel{etv-proper}
    If \(X \not\in \ETV{A}\), then \(A\) is proper in \(X\).
\item \itemlabel{NIdp-positive}
    \(\NIdp(A,\,X) > 0\).
\item \itemlabel{I-depth-zero}
    If \(\PIdp(A,\,X) = 0\), then \(A =
	\O^{m_0}\fix{Y_1}\O^{m_1}\fix{Y_2}\O^{m_2}\ldots\fix{Y_n}\O^{m_n}X\)
    for some \(n\), \(m_0\), \(m_1\), \(m_2\), \(\ldots\), \(m_n\),
    \(Y_1\), \(Y_2\), \(\ldots\), \(Y_n\)
    such that \(X \not= Y_i\) for any \(i \in \{\,1,\,2,\,\ldots,\,n\,\}\).
\item \itemlabel{PIdp-positive}
    If \(A\) is proper in \(X\) and \(\fix{X}A\) is not a {\tvariant},
    then \(\PIdp(A,\,X) > 0\).
\end{Enumerate}
\end{proposition}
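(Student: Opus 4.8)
The plan is to prove every item by induction on the height \(h(A)\) (Definition~\ref{height-rank-def}), with a case split on the outermost constructor of \(A\). Because the \(\Impl\)-clause and the recursive clause of Definition~\ref{depth-def} flip polarity --- the definition of \(\PNDp(A\Impl B,X)\) refers to \(\NPDp(A,X)\), and that of \(\PNDp(\fix{Y}A,X)\) to \(\NPDp(A,X)\) --- Items~\itemref{depth-finite-etv}--\itemref{depth-subst2} have to be proved \emph{simultaneously in both polarities}, so that each induction hypothesis is available with the flipped sign. The \tvariant\ subcases are disposed of uniformly: the first clause of Definition~\ref{depth-def} makes every depth of a \tvariant\ equal to \(\infty\), and by Proposition~\ref{tvariant-etv} a type expression is a \tvariant\ exactly when its effective-variable set is empty; so in each case one reduces to non-\tvariant\ subexpressions, using Propositions~\ref{tvariant-subst1} and \ref{tvariant-subst2} to track which subterms stay \tvariant{}s under substitution.

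I would start with Item~\itemref{depth-finite-etv}, the bridge between the depth functions and the sets \(\PETV{A}\), \(\NETV{A}\); its induction follows Definition~\ref{etv-def} clause for clause, the \(+1\) and \(\min\) operations on \(\{0,1,\dots,\infty\}\) matching union and the ``remains finite'' condition. Next I would establish Items~\itemref{depth-subst1} and \itemref{depth-subst2} together, these compositional substitution formulae being the technical core; Item~\itemref{depth-subst0} then falls out as the special case where \(Y\notin\ETV{B}\) forces \(\PNDp(B,Y)=\NPDp(B,Y)=\infty\) by Item~\itemref{depth-finite-etv}. Item~\itemref{O-depth-proper} is then proved by matching the \tvariant/non-\tvariant\ split of the \(\O\)-depth clauses against the clauses of Definition~\ref{proper-def}: positivity of \(\Odp\) on the immediate subterm(s) corresponds to properness clause~(a), and the \tvariant\ clause to clause~(b). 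Item~\itemref{etv-proper} is immediate, since \(X\notin\ETV{A}\) gives \(\POdp(A,X)=\NOdp(A,X)=\infty\) by Item~\itemref{depth-finite-etv}, whence \(\Odp(A,X)=\infty>0\).

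The last three items concern \(\Impl\)-depth. Item~\itemref{NIdp-positive} is a short self-contained induction: the base and variable cases give \(\infty\), the \(\Impl\)-clause contributes an explicit \(+1\), and in the recursive case both arguments of the \(\min\) are positive by the induction hypothesis. Item~\itemref{I-depth-zero} characterises \(\PIdp(A,X)=0\): the \(\Impl\)-clause is excluded by its \(+1\), and in the recursive case \(A=\fix{Y}C\) the loop summand \(\NIdp(C,Y)+\NIdp(C,X)\) is positive by Item~\itemref{NIdp-positive}, so only the \(\O\)- and recursive-spine can survive, forcing the displayed head-variable shape. Finally, Item~\itemref{PIdp-positive} is by contradiction: if \(\PIdp(A,X)=0\), then Item~\itemref{I-depth-zero} gives \(A=\O^{m_0}\fix{Y_1}\O^{m_1}\dots\fix{Y_n}\O^{m_n}X\) with each \(Y_i\ne X\); properness together with Item~\itemref{O-depth-proper} forces \(\POdp(A,X)=m_0+\dots+m_n\ge 1\); but then \(\tail{(\fix{X}A)}=\fix{X}A\) meets the criterion of Definition~\ref{tvariant-def}, so \(\fix{X}A\) is a \tvariant, contradicting the hypothesis.

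I expect the main obstacle to be the recursive case of the substitution formulae, Items~\itemref{depth-subst1}--\itemref{depth-subst2}. There the recursive clause contributes the ``loop'' term \(\NDp(A,Y)+\NPDp(A,X)\) of Definition~\ref{depth-def}, and after \(\alpha\)-renaming the bound variable away from \(X\), the queried variable, and \(\FTV{B}\), one must verify that expanding the three substituted-body depths by the induction hypothesis still collapses to the stated minimum in both polarities at once. This is where the sign-tracking is most delicate; it also leans on Item~\itemref{depth-finite-etv} to make the \(B\)-depths of the renamed bound variable infinite, so that substitution leaves that variable's depth unchanged.
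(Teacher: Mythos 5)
Your proposal is correct and follows essentially the same route as the paper: induction on \(h(A)\) with a case split on the outermost constructor, both polarities carried simultaneously, {\tvariant}s dispatched via the first clause of the depth definition together with Propositions~\ref{tvariant-etv}, \ref{tvariant-subst1} and \ref{tvariant-subst2}, and Items~8 and 9 obtained from Items~1 and 7 exactly as in the paper. The only (harmless, arguably cleaner) deviation is that you obtain Item~2 as the specialization of Item~3 in which \(Y\notin\ETV{B}\) makes the two \(B\)-summands infinite, whereas the paper proves Item~2 by its own direct induction.
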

\begin{proof}
For each item, the proof proceeds
by induction on \(h(A)\), and by cases on the form of \(A\).
Note that all the statements are almost trivial if \(A\) is a {\tvariant}.
Use Item~\itemref{depth-finite-etv}, Propositions~\ref{tvariant-subst1}
and \ref{tvariant-subst2}
for Items~\itemref{depth-subst0},
\itemref{depth-subst1} and \itemref{depth-subst2}.
Use also Item~\itemref{depth-subst1} for \itemref{depth-subst2}.
Use Items~\itemref{depth-finite-etv}, \itemref{NIdp-positive} and
Proposition~\ref{tvariant-etv} for Item~\itemref{I-depth-zero},
from which Item~\itemref{PIdp-positive} immediately follows.
\ifdetail

\paragraph{Proof of \protect\itemref{depth-finite-etv}}
If \(A\) is a {\tvariant}, then
\(\PNETV{A} = \{\}\) and \(\PNDp(A,\,X) = \infty\) by Definitions~\ref{etv-def}
and \ref{depth-def}.
Hence, we assume that \(A\) is not below.

\Case{\(A = Y\) for some \(Y\).}
By Definition~\ref{etv-def},
\(\PETV{A} = \{\,Y\,\}\) and \(\NETV{A} = \{\}\).
On the other hand,
if \(Y = X\), then \(\PDp(A,\,X) = 0\) and \(\NDp(A,\,X) = \infty\).
Otherwise, \(\PNDp(A,\,X) = \infty\).
Hence, \(\PNDp(A,\,X) < \infty\) iff \(X \in \PNETV{A}\).

\Case{\(A = \O B\) for some \(B\).}
Obvious by induction hypothesis, since by definition,
\(\PNETV{A,\,X} = \PNETV{B,\,X}\) and
\(\PNDp(A,\,X) < \infty\) iff \(\PNDp(B,\,X) < \infty\).

\Case{\(A = B \Impl C\) for some \(B\) and \(C\).}
Obvious again by induction hypothesis, since by definition,
\(\PNETV{A,\,X} = \NPETV{B,\,X} \cup \PNETV{C,\,X}\) and
\(\PNDp(A,\,X) < \infty\) iff \(\min(\NPDp(B,\,X),\,\PNDp(C,\,X)) < \infty\).

\Case{\(A = \fix{Y}B\) for some \(Y\) and \(B\).}
We can assume that \(Y \not= X\) without loss of generality.
By Definition~\ref{depth-def},
\begin{eqnarray}
    \label{depth-finite-etv-01}
    \PNDp(A,\,X) &=& \min(\PNDp(B,\,X),\,\NDp(B,\,Y) + \NPDp(B,\,X)).
\end{eqnarray}
If \(Y \in \NETV{B}\), then \(\NDp(B,\,Y) < \infty\) by induction hypothesis.
Hence, by (\ref{depth-finite-etv-01}),
\begin{eqnarray*}
    \PNDp(A,\,X) < \infty &~\mbox{iff}~&
	\min(\PNDp(B,\,X),\,\NPDp(B,\,X)) < \infty, ~\mbox{and} \\
    \PNETV{A} &=& (\PNETV{B} \cup \NPETV{B}) - \{\,Y\,\}
\end{eqnarray*}
by Definition~\ref{etv-def}.
Therefore, since \(X \not= Y\).
\(\PNDp(A,\,X) < \infty\) iff \(X \in \PNETV{A}\)
by induction hypothesis.
On the other hand, if \(Y \not\in \NETV{B}\), then
\(\NDp(B,\,Y) = \infty\) by induction hypothesis.
Hence, by (\ref{depth-finite-etv-01}),
\begin{eqnarray}
    \PNDp(A,\,X) < \infty &~\mbox{iff}~& \PNDp(B,\,X) < \infty,~\mbox{and} \\
    \PNETV{A} &=& \PNETV{B} - \{\,Y\,\}
\end{eqnarray}
by Definition~\ref{etv-def}.
Therefore, \(\PNDp(A,\,X) < \infty\) iff \(X \in \PNETV{A}\)
again by induction hypothesis.

\paragraph{Proof of \protect\itemref{depth-subst0}}
Suppose that \(B\) is not a {\tvariant},
\(X \not= Y\) and \(Y \not\in \ETV{B}\).
Note that \(A[B/X]\) is a {\tvariant} if and only if
so is \(A\) by Propositions~\ref{tvariant-subst1} and \ref{tvariant-subst2}.
Hence, if either is a {\tvariant},
then \(\PNDp(A[B/X],\,Y) = \PNDp(A,\,Y) = \infty\)
by Definition~\ref{depth-def}.
So we only consider the case when neither is a {\tvariant}.

\Case{\(A = Z\) for some \(Z\).}
If \(Z \not= X\), then trivial since \(A[B/X] = Z[B/X] = Z = A\).
If \(Z = X\), then \(A[B/X] = X[B/X] = B\).
Since \(Y \not\in \ETV{A}\) from \(A = X \not= Y\),
and since \(Y \not\in \ETV{B}\),
we get
\(\PNDp(A,\,Y) = \PNDp(B,\,Y) = \infty\)
by Item~\itemref{depth-finite-etv} of this proposition.

\Case{\(A = \O C\) for some \(C\).}
By Definition~\ref{depth-def} and induction hypothesis,
\(\PNOdp(A[B/X],\,Y) = \PNOdp(C[B/X],\,Y) + 1
    = \PNOdp(C,\,Y) + 1 = \PNOdp(A,\,Y)\) and
\(\PNIdp(A[B/X],\,Y) = \PNIdp(C[B/X],\,Y) = \PNIdp(C,\,Y) = \PNIdp(A,\,Y)\).

\Case{\(A = C \Impl D\) for some \(C\) and \(D\).}
Similarly,
\begin{Eqnarray*}
\PNOdp(A[B/X],\,Y) &=& \PNOdp(C[B/X] \Impl D[B/X],\,Y) \\
    &=& \min(\NPOdp(C[B/X],\,Y),\,\PNOdp(D[B/X],\,Y))
    	& (by Definition~\ref{depth-def}) \\
    &=& \min(\NPOdp(C,\,Y),\,\PNOdp(D,\,Y))
    	& (by induction hypothesis) \\
    &=& \PNOdp(A,\,Y)
    	& (by Definition~\ref{depth-def}), and \\
\PNIdp(A[B/X],\,Y) &=& \PNIdp(C[B/X] \Impl D[B/X],\,Y) \\
    &=& \min(\NPIdp(C[B/X],\,Y),\,\PNIdp(D[B/X],\,Y)) + 1
    	& (by Definition~\ref{depth-def}) \\
    &=& \min(\NPIdp(C,\,Y),\,\PNIdp(D,\,Y)) + 1
    	& (by induction hypothesis) \\
    &=& \PNIdp(A,\,Y)
    	& (by Definition~\ref{depth-def}).
\end{Eqnarray*}

\Case{\(A = \fix{Z}C\) for some \(Z\) and \(C\).}
We can assume that \(Z \not\in \{\,X,\,Y\,\}\cup\FTV{B}\)
without loss of generality.
\begin{Eqnarray*}
\PNDp(A[B/X],\,Y) = \PNDp(\fix{Z}C[B/X],\,Y) \mskip-300mu && \\
    &=& \min(\PNDp(C[B/X],\,Y),\,\NDp(C[B/X],\,Z) + \NPDp(C[B/X],\,Y))
    	& (by Definition~\ref{depth-def}) \\
    &=& \min(\PNDp(C,\,Y),\,\NDp(C,\,Z) + \NPDp(C,\,Y))
    	& (by induction hypothesis) \\
    &=& \PNDp(A,\,Y)
    	& (by Definition~\ref{depth-def}).
\end{Eqnarray*}

\paragraph{Proof of \protect\itemref{depth-subst1}}
Suppose that \(B\) is not a {\tvariant} and \(X \not= Y\).
If \(A\) is a {\tvariant}, then so is \(A[B/X]\)
by Proposition~\ref{tvariant-subst1}; and hence,
\(\PNDp(A[B/X],\,Y) = \infty\) and
\(\min(\PNDp(A,\,Y),\,\PDp(A,\,X) + \PNDp(B,\,Y),\,\NDp(A,\,X) + \NPDp(B,\,Y))
    = \min(\infty,\,\infty + \PNDp(B,\,Y),\,\infty + \NPDp(B,\,Y))
    = \infty\)
by Definition~\ref{depth-def}.
Therefore, we assume that \(A\) is not a {\tvariant} below, which implies
that \(A[B/X]\) is not a {\tvariant} either,
by Proposition~\ref{tvariant-subst2}.

\Case{\(A = Z\) for some \(Z\).}
If \(Z \not\in \{\,X,\,Y\,\}\), then
\(\PNDp(A[B/X],\,Y) = \PNDp(Z,\,Y) = \infty\), and
\begin{Eqnarray*}
    \Hbox{20pt}{\(\min(\PNDp(A,\,Y),\,
	\PDp(A,\,X) + \PNDp(B,\,Y),\,\NDp(A,\,X) + \NPDp(B,\,Y))\)} \\
    &=& \min(\PNDp(Z,\,Y),\,
	    \PDp(Z,\,X) + \PNDp(B,\,Y),\,\NDp(Z,\,X) + \NPDp(B,\,Y)) \\
    &=& \min(\infty,\,\infty + \PNDp(B,\,Y),\,\infty + \NPDp(B,\,Y)) \\
    &=& \infty.
\end{Eqnarray*}
If \(Z = X\), then
\(\PNDp(A[B/X],\,Y) = \PNDp(B,\,Y)\), and
\begin{Eqnarray*}
    \Hbox{20pt}%
	{\(\min(\PNDp(A,\,Y),\,
	    \PDp(A,\,X) + \PNDp(B,\,Y),\,\NDp(A,\,X) + \NPDp(B,\,Y))\)} \\
    &=& \min(\PNDp(X,\,Y),\,
	    \PDp(X,\,X) + \PNDp(B,\,Y),\,\NDp(X,\,X) + \NPDp(B,\,Y)) \\
    &=& \min(\infty,\,0 + \PNDp(B,\,Y),\,\infty + \NPDp(B,\,Y)) \\
    &=& \PNDp(B,\,Y).
\end{Eqnarray*}
Finally, if \(Z = Y\), then
\(\PNDp(A[B/X],\,Y) = \PNDp(Y,\,Y)\), and
\begin{Eqnarray*}
    \Hbox{20pt}{\(\min(\PNDp(A,\,Y),\,
	\PDp(A,\,X) + \PNDp(B,\,Y),\,\NDp(A,\,X) + \NPDp(B,\,Y))\)} \\
    &=& \min(\PNDp(Y,\,Y),\,
	    \PDp(Y,\,X) + \PNDp(B,\,Y),\,\NDp(Y,\,X) + \NPDp(B,\,Y)) \\
    &=& \min(\PNDp(Y,\,Y),\,\infty + \PNDp(B,\,Y),\,\infty + \NPDp(B,\,Y)) \\
    &=& \PNDp(Y,\,Y).
\end{Eqnarray*}

\Case{\(A = \O C\) for some \(C\).}
\begin{Eqnarray*}
\Hbox{20pt}{\(\PNOdp(A[B/X],\,Y) = \PNOdp(\O (C[B/X]),\,Y)\)} \\
    &=& \PNOdp(C[B/X],\,Y) + 1
	& (by Definition~\ref{depth-def}) \\
    &=& \min(\PNOdp(C,\,Y),\,
	    \POdp(C,\,X) + \PNOdp(B,\,Y),\,\NOdp(C,\,X) + \NPOdp(B,\,Y)) + 1
	    \mskip-500mu \\
	&& \span\hfill (by induction hypothesis) \\
    &=& \min(\PNOdp(C,\,Y) + 1,\,\POdp(C,\,X) + 1 + \PNOdp(B,\,Y),\,
	    \NOdp(C,\,X) + 1 + \NPOdp(B,\,Y)) \mskip-150mu \\
    &=& \min(\PNOdp(A,\,Y),\,
	    \POdp(A,\,X) + \PNOdp(B,\,Y),\,\NOdp(A,\,X) + \NPOdp(B,\,Y))
	    \mskip-500mu \\
	&& \span\hfill (by Definition~\ref{depth-def}) \\[3pt]
\Hbox{20pt}{\(\PNIdp(A[B/X],\,Y) = \PNIdp(\O (C[B/X]),\,Y)\)} \\
    &=& \PNIdp(C[B/X],\,Y)
	& (by Definition~\ref{depth-def}) \\
    &=& \min(\PNIdp(C,\,Y),\,
	    \PIdp(C,\,X) + \PNIdp(B,\,Y),\,\NIdp(C,\,X) + \NPIdp(B,\,Y))
	    \mskip-500mu \\
	&& \span\hfill (by induction hypothesis) \\
    &=& \min(\PNIdp(A,\,Y),\,
	    \PIdp(A,\,X) + \PNIdp(B,\,Y),\,\NIdp(A,\,X) + \NPIdp(B,\,Y))
	    \mskip-500mu \\
	&& \span\hfill (by Definition~\ref{depth-def})
\end{Eqnarray*}

\Case{\(A = C \Impl D\) for some \(C\) and \(D\).}
\begin{Eqnarray*}
\Hbox{20pt}{\(\PNOdp(A[B/X],\,Y) = \PNOdp(C[B/X] \Impl D[B/X],\,Y)\)} \\
    &=& \min(\NPOdp(C[B/X],\,Y),\,\PNOdp(D[B/X],\,Y)) \\
    &=& \min(\min(\NPOdp(C,\,Y),\, \POdp(C,\,X) + \NPOdp(B,\,Y),\,
		\NOdp(C,\,X) + \PNOdp(B,\,Y)) \mskip-500mu \\
	&& \hphantom{\!\min(}
	    \min(\PNOdp(D,\,Y),\,
		\POdp(D,\,X) + \PNOdp(B,\,Y),\,\NOdp(D,\,X) + \NPOdp(B,\,Y)))
		\mskip-500mu \\
	&& \span\hfill (by induction hypothesis) \\
    &=& \min(\min(\NPOdp(C,\,Y),\,\PNOdp(D,\,Y)),\, \\
	&& \hphantom{\!\min(}
	    \min(\NOdp(C,\,X),\,\POdp(D,\,X)) + \PNOdp(B,\,Y), \\
	&& \hphantom{\!\min(}
	    \min(\POdp(C,\,X),\,\NOdp(D,\,X)) + \NPOdp(B,\,Y)) \\
    &=& \min(\PNOdp(A,\,Y),\,
	    \POdp(A,\,X) + \PNOdp(B,\,Y),\,\NOdp(A,\,X) + \NPOdp(B,\,Y))
	& (by Definition~\ref{depth-def}) \\[3pt]
\Hbox{20pt}{\(\PNIdp(A[B/X],\,Y) = \PNIdp(C[B/X] \Impl D[B/X],\,Y)\)} \\
    &=& \min(\NPIdp(C[B/X],\,Y),\,\PNIdp(D[B/X],\,Y)) + 1 \\
    &=& \min(\min(\NPIdp(C,\,Y),\, \PIdp(C,\,X) + \NPIdp(B,\,Y),\,
		\NIdp(C,\,X) + \PNIdp(B,\,Y)) \mskip-500mu \\
	&& \hphantom{\!\min(}
	    \min(\PNIdp(D,\,Y),\,
		\PIdp(D,\,X) + \PNIdp(B,\,Y),\,\NIdp(D,\,X) + \NPIdp(B,\,Y)))
		+ 1
		\mskip-500mu \\
	&& \span\hfill (by induction hypothesis) \\
    &=& \min(\min(\NPIdp(C,\,Y),\,\PNIdp(D,\,Y)) + 1,\, \\
	&& \hphantom{\!\min(}
	    \min(\NIdp(C,\,X),\,\PIdp(D,\,X)) + 1 + \PNIdp(B,\,Y), \\
	&& \hphantom{\!\min(}
	    \min(\PIdp(C,\,X),\,\NIdp(D,\,X)) + 1 + \NPIdp(B,\,Y)) \\
    &=& \min(\PNIdp(A,\,Y),\,
	    \PIdp(A,\,X) + \PNIdp(B,\,Y),\,\NIdp(A,\,X) + \NPIdp(B,\,Y))
	& (by Definition~\ref{depth-def})
\end{Eqnarray*}

\Case{\(A = \fix{Z}C\) for some \(Z\) and \(C\).}
We can assume that \(Z \not\in \{\,X,\,Y\,\}\cup\FTV{B}\)
without loss of generality.
\begin{Eqnarray*}
\Hbox{20pt}{\(\PNDp(A[B/X],\,Y) = \PNDp(\fix{Z}C[B/X],\,Y)\)} \\
    &=& \min(\PNDp(C[B/X],\,Y),\,\NDp(C[B/X],\,Z) + \NPDp(C[B/X],\,Y))
    	& (by Definition~\ref{depth-def}) \\
    &=& \min(\PNDp(C[B/X],\,Y),\\
	&& \hphantom{\!\min(}
	    \min(\NDp(C,\,Z),\,
		 \PDp(C,\,X) + \NDp(B,\,Z),\,\NDp(C,\,X) + \PDp(B,\,Z))
	     \mskip-500mu \\
	&& \span$\hphantom{\!\min(~}
		+ \NPDp(C[B/X],\,Y))$
	    \hfill (by induction hypothesis) \\
    &=& \min(\PNDp(C[B/X],\,Y), \\
	&& \hphantom{\!\min(}
	    \min(\NDp(C,\,Z),\,\PDp(C,\,X) + \infty,\,\NDp(C,\,X) + \infty)
	     \mskip-500mu \\
	&& \span$\hphantom{\!\min(~}
		+ \NPDp(C[B/X],\,Y))$
	    \hfill (by Item~\itemref{depth-finite-etv}
		    and \(Z \not\in \ETV{B}\)) \\
    &=& \min(\PNDp(C[B/X],\,Y),
	    \NDp(C,\,Z) + \NPDp(C[B/X],\,Y)) \mskip-500mu \\
    &=& \min(\min(\PNDp(C,\,Y),\,
		\PDp(C,\,X) + \PNDp(B,\,Y),\,\NDp(C,\,X) + \NPDp(B,\,Y)),
		\mskip-500mu \\
	&& \hphantom{\!\min(}
	    \NDp(C,\,Z) + \min(\NPDp(C,\,Y),\,
		\PDp(C,\,X) + \NPDp(B,\,Y),\,\NDp(C,\,X) + \PNDp(B,\,Y)))
	    \mskip-500mu \\
    	&&& (by induction hypothesis) \\
    &=& \min(\PNDp(C,\,Y),\,\NDp(C,\,Z) + \NPDp(C,\,Y), \\
	&& \hphantom{\!\min(}
	    \PDp(C,\,X) + \PNDp(B,\,Y),\,
		\NDp(C,\,Z) + \NDp(C,\,X) + \PNDp(B,\,Y), \mskip-500mu \\
	&& \hphantom{\!\min(}
	    \NDp(C,\,X) + \NPDp(B,\,Y),\,
		\NDp(C,\,Z) + \PDp(C,\,X) + \NPDp(B,\,Y)) \mskip-500mu \\
    &=& \min(\min(\PNDp(C,\,Y),\,\NDp(C,\,Z) + \NPDp(C,\,Y)), \\
	&& \hphantom{\!\min(}
	    \min(\PDp(C,\,X),\,\NDp(C,\,Z) + \NDp(C,\,X)) + \PNDp(B,\,Y)),
		\mskip-500mu \\
	&& \hphantom{\!\min(}
	    \min(\NDp(C,\,X),\,\NDp(C,\,Z) + \PDp(C,\,X)) + \NPDp(B,\,Y)))
		\mskip-500mu \\
    &=& \min(\PNDp(A,\,Y),\,
	    \PDp(A,\,X) + \PNDp(B,\,Y),\,
	    \NDp(A,\,X) + \NPDp(B,\,Y)) \mskip-500mu \\
	&&& (by Definition~\ref{depth-def})
\end{Eqnarray*}

\paragraph{Proof of \protect\itemref{depth-subst2}}
Suppose that \(B\) is not a {\tvariant}.
If \(A\) is a {\tvariant}, then so is \(A[B/X]\)
by Proposition~\ref{tvariant-subst1}; and hence,
\(\PNDp(A[B/X],\,X) = \infty\) and
\(\min(\PDp(A,\,X) + \PNDp(B,\,X),\,\NDp(A,\,X) + \NPDp(B,\,X))
    = \min(\infty + \PNDp(B,\,X),\,\infty + \NPDp(B,\,X))
    = \infty\)
by Definition~\ref{depth-def}.
Therefore, we assume that \(A\) is not a {\tvariant} below, which also implies
that \(A[B/X]\) is not either, by Proposition~\ref{tvariant-subst2}.

\Case{\(A = Y\) for some \(Y\).}
If \(Y = X\), then \(\PNDp(A[B/X],\,X) = \PNDp(B,\,X)\), and
\begin{Eqnarray*}
    \Hbox{20pt}%
	{\(\min(\PDp(A,\,X) + \PNDp(B,\,X),\,\NDp(A,\,X) + \NPDp(B,\,X))\)} \\
    &=& \min(\PDp(X,\,X) + \PNDp(B,\,X),\,\NDp(X,\,X) + \NPDp(B,\,X)) \\
    &=& \min(0 + \PNDp(B,\,X),\,\infty + \NPDp(B,\,X)) \\
    &=& \PNDp(B,\,X).
\end{Eqnarray*}
Otherwise, i.e., if \(Y \not= X\), then
\(\PNDp(A[B/X],\,X) = \PNDp(Y,\,X) = \infty\), and
\begin{Eqnarray*}
    \Hbox{20pt}%
	{\(\min(\PDp(A,\,X) + \PNDp(B,\,X),\,\NDp(A,\,X) + \NPDp(B,\,X))\)} \\
    &=& \min(\PDp(Y,\,X) + \PNDp(B,\,X),\,\NDp(Y,\,X) + \NPDp(B,\,X)) \\
    &=& \min(\infty + \PNDp(B,\,X),\,\infty + \NPDp(B,\,X)) \\
    &=& \infty.
\end{Eqnarray*}

\Case{\(A = \O C\) for some \(C\).}
\begin{Eqnarray*}
\Hbox{20pt}{\(\PNOdp(A[B/X],\,X) = \PNOdp(\O (C[B/X]),\,X)\)} \\
    &=& \PNOdp(C[B/X],\,X) + 1
	& (by Definition~\ref{depth-def}) \\
    &=& \min(\POdp(C,\,X) + \PNOdp(B,\,X),\,\NOdp(C,\,X) + \NPOdp(B,\,X)) + 1
	& (by induction hypothesis) \\
    &=& \min(\POdp(C,\,X) + 1 + \PNOdp(B,\,X),\,
	    \NOdp(C,\,X) + 1 + \NPOdp(B,\,X)) \mskip-100mu \\
    &=& \min(\POdp(A,\,X) + \PNOdp(B,\,X),\,\NOdp(A,\,X) + \NPOdp(B,\,X))
	& (by Definition~\ref{depth-def}) \\[3pt]
\Hbox{20pt}{\(\PNIdp(A[B/X],\,X) = \PNIdp(\O (C[B/X]),\,X)\)} \\
    &=& \PNIdp(C[B/X],\,X)
	& (by Definition~\ref{depth-def}) \\
    &=& \min(\PIdp(C,\,X) + \PNIdp(B,\,X),\,\NIdp(C,\,X) + \NPIdp(B,\,X))
	& (by induction hypothesis) \\
    &=& \min(\PIdp(A,\,X) + \PNIdp(B,\,X),\,\NIdp(A,\,X) + \NPIdp(B,\,X))
	& (by Definition~\ref{depth-def})
\end{Eqnarray*}

\Case{\(A = C \Impl D\) for some \(C\) and \(D\).}
\begin{Eqnarray*}
\Hbox{20pt}{\(\PNOdp(A[B/X],\,X) = \PNOdp(C[B/X] \Impl D[B/X],\,X)\)} \\
    &=& \min(\NPOdp(C[B/X],\,X),\,\PNOdp(D[B/X],\,X))
	& (by Definition~\ref{depth-def}) \\
    &=& \min(\min(\POdp(C,\,X) + \NPOdp(B,\,X),\,
		\NOdp(C,\,X) + \PNOdp(B,\,X)) \\
	&& \hphantom{\!\min(}
	    \min(\POdp(D,\,X) + \PNOdp(B,\,X),\,\NOdp(D,\,X) + \NPOdp(B,\,X)))
	& (by induction hypothesis) \\
    &=& \min(\min(\NOdp(C,\,X),\,\POdp(D,\,X)) + \PNOdp(B,\,X), \\
	&& \hphantom{\!\min(}
	    \min(\POdp(C,\,X),\,\NOdp(D,\,X)) + \NPOdp(B,\,X)) \\
    &=& \min(\POdp(A,\,X) + \PNOdp(B,\,X),\,\NOdp(A,\,X) + \NPOdp(B,\,X))
	& (by Definition~\ref{depth-def}) \\[3pt]
\Hbox{20pt}{\(\PNIdp(A[B/X],\,X) = \PNIdp(C[B/X] \Impl D[B/X],\,X)\)} \\
    &=& \min(\NPIdp(C[B/X],\,X),\,\PNIdp(D[B/X],\,X)) + 1
	& (by Definition~\ref{depth-def}) \\
    &=& \min(\min(\PIdp(C,\,X) + \NPIdp(B,\,X),\,
		\NIdp(C,\,X) + \PNIdp(B,\,X)) \\
	&& \hphantom{\!\min(}
	    \min(\PIdp(D,\,X) + \PNIdp(B,\,X),\,\NIdp(D,\,X) + \NPIdp(B,\,X)))
		+ 1 \mskip-200mu \\
	&& \span\hfill (by induction hypothesis) \\
    &=& \min(\min(\NIdp(C,\,X),\,\PIdp(D,\,X)) + 1 + \PNIdp(B,\,X), \\
	&& \hphantom{\!\min(}
	    \min(\PIdp(C,\,X),\,\NIdp(D,\,X)) + 1 + \NPIdp(B,\,X)) \\
    &=& \min(\PIdp(A,\,X) + \PNIdp(B,\,X),\,\NIdp(A,\,X) + \NPIdp(B,\,X))
	& (by Definition~\ref{depth-def})
\end{Eqnarray*}

\Case{\(A = \fix{Y}C\) for some \(Y\) and \(C\).}
We can assume that \(Y \not\in \{\,X\,\}\cup\FTV{B}\)
without loss of generality.
\begin{Eqnarray*}
\Hbox{20pt}{\(\PNDp(A[B/X],\,X) = \PNDp(\fix{Y}C[B/X],\,X)\)} \\
    &=& \min(\PNDp(C[B/X],\,X),\,\NDp(C[B/X],\,Y) + \NPDp(C[B/X],\,X))
    	& (by Definition~\ref{depth-def}) \\
    &=& \min(\PNDp(C[B/X],\,X),\\
	&& \hphantom{\!\min(}
	    \min(\NDp(C,\,Y),\,
		 \PDp(C,\,X) + \NPDp(B,\,Y),\,\NDp(C,\,X) + \PNDp(B,\,Y))
	     \mskip-500mu \\
	&& \hphantom{\!\min(\quad}
		+ \NPDp(C[B/X],\,X))
	    & (by Item~\itemref{depth-subst1}) \\
    &=& \min(\PNDp(C[B/X],\,X), \\
	&& \hphantom{\!\min(}
	    \min(\NDp(C,\,Y),\,\PDp(C,\,X) + \infty,\,\NDp(C,\,X) + \infty)
	     \mskip-500mu \\
	&& \span$\hphantom{\!\min(\mskip3mu}
		+ \NPDp(C[B/X],\,X))$
	    \hfill (by Item~\itemref{depth-finite-etv}
		    and \(Y \not\in \ETV{B}\)) \\
    &=& \min(\PNDp(C[B/X],\,X),
	    \NDp(C,\,Y) + \NPDp(C[B/X],\,X)) \mskip-500mu \\
    &=& \min(\min(\PDp(C,\,X) + \PNDp(B,\,X),\,\NDp(C,\,X) + \NPDp(B,\,X)), \\
	&& \hphantom{\!\min(}
	    \NDp(C,\,Y)
		+ \min(\PDp(C,\,X) + \NPDp(B,\,X),\,\NDp(C,\,X) + \PNDp(B,\,X)))
	    \mskip-500mu \\
    	&&& (by induction hypothesis) \\
    &=& \min(\PDp(C,\,X) + \PNDp(B,\,X),\\
	&& \hphantom{\!\min(}
	    \NDp(C,\,X) + \NPDp(B,\,X),\\
	&& \hphantom{\!\min(}
	    \NDp(C,\,Y) + \PDp(C,\,X) + \NPDp(B,\,X), \\
	&& \hphantom{\!\min(}
	    \NDp(C,\,Y) + \NDp(C,\,X) + \PNDp(B,\,X)) \\
    &=& \min(\min(\PDp(C,\,X),\,\NDp(C,\,Y)+\NDp(C,\,X))+ \PNDp(B,\,X),\\
	&& \hphantom{\!\min(}
	    \min(\NDp(C,\,X),\,\NDp(C,\,Y)+\PDp(C,\,X))+ \NPDp(B,\,X))\\
    &=& \min(\PDp(A,\,X) + \PNDp(B,\,X),\,\NDp(A,\,X) + \NPDp(B,\,X))
    	& (by Definition~\ref{depth-def})
\end{Eqnarray*}

\paragraph{Proof of \protect\itemref{O-depth-proper}}
If \(A\) is a {\tvariant}, then for every \(X\),
\(A\) is proper in \(X\) and \(\PNOdp(A,\,X) = \infty\).
Therefore, we assume that \(A\) is not a {\tvariant} below.

\Case{\(A = Y\) for some \(Y\).}
If \(Y = X\), then
\(A\) is not proper in \(X\) and \(\POdp(A,\,X) = \POdp(X,\,X) = 0\),
that is, \(\Odp(A,\,X) = 0\).
Otherwise, i.e., if \(Y \not= X\), then
\(A\) is proper in \(X\) and \(\PNOdp(A,\,X) = \PNOdp(Y,\,X) = \infty\),
that is, \(\Odp(A,\,X) = \infty > 0\).

\Case{\(A = \O C\) for some \(C\).}
In this case, \(A\) is proper in \(X\) and
\(\PNOdp(A,\,X) = \PNOdp(\O C,\,X) = \PNOdp(C) + 1 > 0\).

\Case{\(A = B \Impl C\) for some \(B\) and \(C\).}
Note that \(C\) is not a {\tvariant} since \(A\) is not.
Hence, \(A\) is proper in \(X\) if and only if so are both \(B\) and \(C\).
On the other hand,
\(\PNOdp(A,\,X) = \PNOdp(B \Impl C,\,X) = \min(\NPOdp(B,\,X),\,\PNOdp(C,\,X))\).
Hence, by induction hypothesis,
\(\Odp(A,\,X) > 0\)
iff \(\min(\Odp(B,\,X),\,\Odp(C,\,X)) > 0\)
iff both \(B\) and \(C\) are proper in \(X\)
iff so is \(A\).

\Case{\(A = \fix{Y}B\) for some \(Y\) and \(B\).}
We can assume that \(Y \not= X\) without loss of generality.
Hence, \(A\) is proper in \(X\) if and only if so is \(B\).
On the other hand,
\(\PNOdp(A,\,X) = \PNOdp(\fix{Y}B,\,X)
    = \min(\PNOdp(B,\,X),\,\NOdp(B,\,Y) + \NPOdp(B,\,X))\).
Since \(B\) is proper in \(Y\),
    \(\NOdp(B,\,Y) > 0\) by induction hypothesis.
Hence, by induction hypothesis again,
\(\Odp(A,\,X) > 0\) iff
\(\Odp(B,\,X) > 0\) iff
\(B\) is proper in \(X\) iff
so is \(A\).

\paragraph{Proof of \protect\itemref{etv-proper}}
Straightforward from Items~\itemref{depth-finite-etv}
and \itemref{O-depth-proper}.

\paragraph{Proof of \protect\itemref{NIdp-positive}}
If \(A\) is a {\tvariant}, then \(\NIdp(A,\,X) = \infty > 0\).
So we assume that \(A\) is not a {\tvariant} below.

\Case{\(A = Y\) for some \(Y\).}
In this case,
\(\NIdp(Y,\,X) = \infty\) by Definition~\ref{depth-def}.

\Case{\(A = \O B\) for some \(B\).}
\(\NIdp(\O B,\,X) = \NIdp(B) > 0\) by induction hypothesis.

\Case{\(A = B \Impl C\) for some \(B\) and \(C\).}
\(\NIdp(B \Impl C,\,X)
    = \min(\PIdp(B,\,X),\,\NIdp(C,\,X)) + 1 > 0\).

\Case{\(A = \fix{Y}B\) for some \(Y\) and \(B\).}
We can assume that \(Y \not= X\) without loss of generality.
\(\NIdp(\fix{Y}B,\,X)
    = \min(\NIdp(B,\,X),\,\NIdp(B,\,Y) + \PIdp(B,\,X)) > 0\),
since \(\NIdp(B,\,X) > 0\) and \(\NIdp(B,\,Y) > 0\) by induction hypothesis.

\paragraph{Proof of \protect\itemref{I-depth-zero}}
Suppose that \(\PIdp(A,\,X) = 0\).
Note that
\(A\) is not a {\tvariant} by Item~\itemref{depth-finite-etv} and
Proposition~\ref{tvariant-etv}.
We show that \(A =
	\O^{m_0}\fix{Y_1}\O^{m_1}\fix{Y_2}\O^{m_2}\ldots\fix{Y_n}\O^{m_n}X\)
for some \(n\), \(m_0\), \(m_1\), \(m_2\), \(\ldots\), \(m_n\),
\(Y_1\), \(Y_2\), \(\ldots\), \(Y_n\)
such that \(X \not= Y_i\) for any \(i \in \{\,1,\,2,\,\ldots,\,n\,\}\).

\Case{\(A = Z\) for some \(Z\).}
We get \(Z = X\) since \(\PIdp(A,\,X) = 0\).
Hence, it suffices to let \(n\) be 0.

\Case{\(A = \O B\) for some \(B\).}
Since \(\PIdp(B,\,X) = \PIdp(A,\,X) = 0\),
by induction hypothesis,
\(B = \O^{m_0'}\fix{Y_1}\O^{m_1}\fix{Y_2}\O^{m_2}\ldots\fix{Y_n}\O^{m_n}X\)
    for some \(n\), \(m_0'\), \(m_1\), \(m_2\), \(\ldots\), \(m_n\),
    \(Y_1\), \(Y_2\), \(\ldots\), \(Y_n\)
    such that \(X \not= Y_i\) for any \(i \in \{\,1,\,2,\,\ldots,\,n\,\}\).
Hence, it suffices to take \(m_0\) as \(m_0 = m_0' + 1\).

\Case{\(A = B \Impl C\) for some \(B\) and \(C\).}
This case is impossible, since
\(\PIdp(A,\,X) = \PIdp(B \Impl C,\,X)
    = \min(\NIdp(B,\,X),\,\PIdp(C,\,X)) + 1 > 0\).

\Case{\(A = \fix{Z}B\) for some \(Z\) and \(B\).}
We can assume that \(Z \not= X\) without loss of generality.
\(\PIdp(A,\,X) = \PIdp(\fix{Z}B,\,X)
    = \min(\PIdp(B,\,X),\,\NIdp(B,\,Z) + \NIdp(B,\,X))\), which implies
\(\PIdp(B,\,X) = 0\),
since \(\PIdp(A,\,X) = 0\) and since \(\NIdp(B,\,X) > 0\)
by Item~\itemref{NIdp-positive} of this proposition.
Therefore, by induction hypothesis,
\(B = \O^{m_1}\fix{Y_2}\O^{m_2}\ldots\fix{Y_n}\O^{m_n}X\)
    for some \(n\), \(m_1\), \(m_2\), \(\ldots\), \(m_n\),
    \(Y_2\), \(Y_3\), \(\ldots\), \(Y_n\)
    such that \(X \not= Y_i\) for any \(i \in \{\,2,\,3,\,\ldots,\,n\,\}\).
It suffices to take
\(m_0\) and \(Y_1\) as \(m_0 = 0\) and \(Y_1 = Z\), respectively.

\paragraph{Proof of \protect\itemref{PIdp-positive}}
Suppose that \(A\) is proper in \(X\).
If \(\PIdp(A,\,X) = 0\), then by Item~\itemref{I-depth-zero}
of this proposition,
\(A = \O^{m_0}\fix{Y_1}\O^{m_1}\fix{Y_2}\O^{m_2}\ldots\fix{Y_n}\O^{m_n}X\)
for some \(n\), \(m_0\), \(m_1\), \(m_2\), \(\ldots\), \(m_n\),
\(Y_1\), \(Y_2\), \(\ldots\), \(Y_n\)
such that \(X \not= Y_i\) for any \(i\).
This implies that \(\fix{X}A\) is a {\tvariant}.
\fi 
\qed\CHECKED{2014/07/11}
\end{proof}

\Section{Semantics of types}\label{semantics-sec}

In this section, we define the semantics of type expressions.
To fix notation, we start with some necessary definitions about
the standard untyped \(\lambda\)-calculus.

\begin{definition}[Untyped \(\lambda\)-terms]
    \ilabel{lambda-term-def}{lambda@$\lambda$-terms}
    \ilabel*{syntax!lambda@$\lambda$-terms}
The syntax of the \(\lambda\)-terms is defined relatively to a set
\(\IV\) of countably infinite {\em individual variables}
(\(f\), \(g\), \(h\), \(x\), \(y\), \(z\), \(\ldots\)).
The set \(\IE\) of {\em \(\lambda\)-terms} is defined
by the following BNF notation.
\begin{Eqnarray*}
    \IE & \bnfdef & \IV & (individual variables) \\
	& \bnfor & {\lam{\IV}{\IE}} & (\(\lambda\)-abstractions) \\
	& \bnfor & {\app{\IE}{\IE}} & (applications)
\end{Eqnarray*}
\end{definition}
We use \(M, N, K, L, \ldots\) to denote \(\lambda\)-terms.
Free and bound occurrences of individual variables and
the notion of \(\alpha\)-convertibility are defined in the standard manner.
Hereafter, we identify \(\lambda\)-terms by this \(\alpha\)-convertibility.
We denote the set of individual variables occurring freely in \(M\)
by \(\FV{M}\), and
use \(M[N_1/x_1,\ldots,N_n/x_n]\) to denote the \(\lambda\)-term
obtained from a \(\lambda\)-term \(M\)
by substituting \(N_1,\ldots,N_n\) for each free occurrence of individual
variables \(x_1,\ldots,x_n\), respectively,
with necessary \(\alpha\)-conversion to avoid accidental
capture of free variables.

\begin{definition}[\(\beta\)-reduction]
    \ilabel{ct-def}{0 beta@$\protect\ct$}
    \ilabel*{beta@$\beta$-reduction}
Let the syntax of {\rm context} \(\C[\,]\) of \(\lambda\)-term be defined
in the standard way as follows.
\[
    \def\mathrel#1{\mskip10mu#1\mskip10mu}
    \C[\,] \bnfdef [\,] \bnfor \lam{\IV}\C[\,]
	\bnfor \app{\C[\,]\,}{\IE} \bnfor \app{\IE}\,{\C[\,]}
\]
The standard notion of \(\beta\)-reduction, a
binary relation \(\ct\) over\/ \(\IE\), is defined by
\[
    \C[\app{(\lam{x}{M}}){N}] \ct \C[M[N/x]],
\]
where \(\C\) is an arbitrary context of \(\lambda\)-term.
\end{definition}
We denote the transitive and reflexive closure of \(\ct\) by \(\ctc\),
and the symmetric closure of \(\ct\) by \(\cteq\).
We define the equivalence relation \(\beq\) to be the transitive
and reflexive closure of \(\cteq\).

\begin{definition}
    \ilabel{map-subst-def}{substitution of!mappings}
    \ilabel*{0 substitution rho@$[t/x]$}
Let \(\rho\) be a mapping from a set \(T\) to a set \(S\),
and let \(x \in T\) and \(v \in S\).
We define a mapping \(\rho[v/x]\) by
\[
\rho[v/x](y) = \Choice{%
	v & (y = x) \\
	\rho(y) & (y \not= x).
    }
\]
\end{definition}

Throughout the present paper, let \(\tuple{\V,\,\cdot,\,\VI{\;}{}}\)
be a syntactical \(\lambda\)-algebra of the untyped \(\lambda\)-calculus.
Each \(\lambda\)-term \(M\) is interpreted
as an element of \(\V\), which is denoted by \(\VI{M}{\rho}\),
where \(\rho\) is {\em an individual environment}
that assigns an element of \(\V\) to each individual variable.
We define syntactical \(\lambda\)-algebra in the standard way
as follows \cite{barendregt}.

\begin{definition}[syntactical \(\lambda\)-algebra]
    \ilabel{lambda-algebra}{lambda algebra@$\lambda$-algebra}
    \ilabel*{syntactical lambda algebra@syntactical $\lambda$-algebra}
A syntactical \(\lambda\)-algebra of \(\IE\) is a tuple
\(\tuple{\V,\,\cdot,\,\VI{\;}{}}\) such that
\begin{Enumerate}
\item \(\V\) : a non-empty set.
\item \({-}\cdot{-} : \V \times \V \to \V\).
\item \(\VI{{-}}{{-}} : \IE \to (\IV \to \V) \to \V\).
\item \(\VI{x}{\ienv} = \ienv(x)\).
\item \(\VI{\app{M}N}{\ienv} = \VI{M}{\ienv} \cdot \VI{N}{\ienv}\).
\item \(\VI{\lam{x}M}{\ienv} \cdot v = \VI{M}{\ienv[v/x]}\).
\item If \(\ienv(x) = \ienv'(x)\) for every \(x \in FV(M)\),
then \(\VI{M}{\ienv} = \VI{M}{\ienv'}\).
\item If \(M \beq N\), then \(\VI{M}{\ienv} = \VI{N}{\ienv}\).
\end{Enumerate}
\end{definition}

\begin{proposition}\label{val-interp-subst}
\(\VI{M[N/x]}{\ienv} = \VI{M}{\ienv[\VI{N}{\ienv}/x]}\).
\end{proposition}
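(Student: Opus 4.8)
The cleanest route here is not structural induction on $M$ but a single, direct appeal to the fact that the interpretation is invariant under $\beta$-equality (the last clause of Definition~\ref{lambda-algebra}). The plan is to observe that $\app{(\lam{x}M)}{N} \ct M[N/x]$ is itself an instance of the $\beta$-rule of Definition~\ref{ct-def}, namely the one obtained with the empty context $\C[\,]=[\,]$, and that the substitution $M[N/x]$ occurring in that rule is literally the capture-avoiding substitution named in the statement. Consequently $\app{(\lam{x}M)}{N} \beq M[N/x]$, and this $\beta$-equality is the only fact about terms we need.

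From there I would chain three of the defining clauses of a syntactical $\lambda$-algebra. First, $\beta$-invariance transports the above equality to $\VI{M[N/x]}{\ienv} = \VI{\app{(\lam{x}M)}{N}}{\ienv}$. Next, the clause for applications rewrites the right-hand side as $\VI{\lam{x}M}{\ienv}\cdot\VI{N}{\ienv}$. Finally, the clause for abstractions, instantiated at $v = \VI{N}{\ienv}$, gives $\VI{\lam{x}M}{\ienv}\cdot\VI{N}{\ienv} = \VI{M}{\ienv[\VI{N}{\ienv}/x]}$. Composing these yields
$\VI{M[N/x]}{\ienv} = \VI{\app{(\lam{x}M)}{N}}{\ienv} = \VI{\lam{x}M}{\ienv}\cdot\VI{N}{\ienv} = \VI{M}{\ienv[\VI{N}{\ienv}/x]}$,
which is exactly the claim, settled uniformly with no case analysis on the shape of $M$.

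The one point that genuinely deserves care is the very first step: one must check that the substitution in the $\beta$-rule coincides \emph{on the nose} with the notation in the statement. Both are defined with the same convention of renaming bound variables to prevent accidental capture, so they agree, and no side condition on $x$, $N$, or the bound variables of $M$ is required. It is worth noting \emph{why} this $\beta$-invariance argument is the preferred one: a naive induction on $M$ would stall at the abstraction case, since the axioms provide no closed expression for $\VI{\lam{y}P}{\ienv}$ (only for its applications $\VI{\lam{y}P}{\ienv}\cdot v$), and a syntactical $\lambda$-algebra need not be weakly extensional, so equality of abstractions cannot in general be recovered from equality of all their applications. Invariance under $\beta$-equality is precisely the hypothesis that bridges this gap, and invoking it at the outset dispatches every form of $M$ at once.
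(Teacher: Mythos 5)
Your proof is correct and follows the same route as the paper's: both pass through $M[N/x] \beq \app{(\lam{x}M)}{N}$ and then chain the $\beta$-invariance, application, and abstraction clauses of Definition~\ref{lambda-algebra}. Your added remark on why a structural induction would stall at the abstraction case is a sensible justification of the approach, but the argument itself is identical to the one in the paper.
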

\begin{proof}
Since \(M[N/x] \beq (\lam{y}M)N\),
\(\VI{M[N/x]}{\ienv} =
\VI{[\app{(\lam{y}M)}N}{\ienv} =
\VI{\lam{y}M}{\ienv}\cdot \VI{N}{\ienv} =
\VI{M}{\ienv[\VI{N}{\ienv}/x]}\).
\qed\CHECKED{2014/04/23, 07/21}
\end{proof}

Roughly, every type expression is interpreted as a subset of \(\V\)
of a syntactical \(\lambda\)-algebra in each possible world of a
Kripke frame, where we consider the following two classes of frames.

\begin{definition}
    \ilabel{wf-frame-def}{well-founded frames}
    \ilabel*{frames!well-founded}
    \ilabel{lA-frame-def}{lambda-A frames@\protect\lA-frames}
    \ilabel*{frames!lambda-A@\protect\lA}
    \ilabel{locally-linear-def}{locally linear}
A {\em well-founded}\/ frame is a pair \(\pair{\W}{\acc}\), which consists of
a non-empty set \(\W\) of {\em possible worlds} and
an {\em accessibility relation} \({\acc}\) on \(\W\) such that
\begin{Enumerate}
\item The relation \({\acc}\) is (conversely) well-founded, i.e.,
    there is no infinite sequence such that
    \(p_0 \acc p_1 \acc p_2 \acc p_3 \acc \ldots\,\).
\end{Enumerate}
We say the accessibility relation \(\acc\) is {\em locally linear}
if and only if it satisfies the following additional condition, and
define {\em {\lA}-frames} as well-founded frames whose accessibility relation
is locally linear.
\begin{Enumerate}
\item[2.]
    If\/ \(p \acc q\), then \(p \tacc r \acc q\) for some \(r\) such
    that \(r \acc s\) implies  \(q \tacc s\) for any \(s\),
    where \(\tacc\) denotes
    the reflexive and transitive closure of\/ \(\acc\).
\end{Enumerate}
\end{definition}
The additional condition says that the accessibility relations is not
{\em locally, in a sense,} branching.
Any well-founded and linear binary relation
\(\acc\) satisfies this condition\footnote{%
    It suffices to let \(r\) be a minimal element such that \(r \acc q\).}.
For example, the set of non-negative integers, or ordinals, and
the ``greater than'' relation \(>\) constitutes a {\lA}-frame.
This condition does not always mean that
the accessibility relation is not branching.
For example, the following is a {\lA}-frame, but branching.
\begin{eqnarray*}
\W &=& \zfset{\pair{n}{m}}{n \in \{\,0,\,1,\,2\,\}~\mbox{and}~m \in \N} \\[5pt]
\pair{n}{m} \acc \pair{n'}{m'} &\;\mbox{iff}\;&
 \Choice{%
     n = 0~\mbox{and}~n' \in \{\,1,\,2\,\},~\mbox{or} \\
     n = n'~\mbox{and}~m > m'}
\end{eqnarray*}

Note that we do not impose that \(\acc\) is transitive.
It will be shown that
our interpretation of types does not depend on
whether \(\acc\) is transitive or not (Proposition~\ref{semantics-transitive}).

\begin{definition}
    \ilabel{tenv-def}{type environments}
A mapping \(t\) from \(\W\) to the power set \(\Pow{\V}\)
of \(\V\) is said to be {\em hereditary} if and only if
\[
    p \acc q ~~\mbox{implies}~~ t(p) \subseteq t(q).
\]
A mapping \(\tenv\) from \(\TV \times \W\) to \(\Pow{\V}\)
is called a {\em type environment}, and
also said to be {\em hereditary} if and only if
\[
    p \acc q ~~\mbox{implies}~~ \tenv(X, p) \subseteq \tenv(X, q)\,
    ~\mbox{for any}~X \in \TV.
\]
\end{definition}

In this paper, we only consider hereditary type environments.
Given a well-founded frame and a hereditary type environment \(\tenv\),
each type expression \(A\) is interpreted
as a hereditary mapping from \(\W\) to \(\Pow{\V}\) as follows.

\begin{definition}[Semantics of types]
    \ilabel{rlz-def}{type expressions!semantics of}
    \ilabel*{semantics!\protect\lA}
    \ilabel{I-def}{I@$\protect\I{A}^{\protect\tenv}_p$}
Let \(\pair{\W}{\acc}\) be a well-founded frame,
and \(\tenv\) a hereditary type environment.
We define a hereditary mapping \(\I{A}^{\tenv}\) from \(\W\) to \(\Pow{\V}\)
for each type expression \(A\) by extending \(\tenv\) as follows,
where we prefer to write \(\tenv(X)_p\) and
\(\I{A}^{\tenv}_p\) rather than
\(\tenv(X,\,p)\) and \(\I{A}^{\tenv}(p)\), respectively.
\begin{Eqnarray*}
\I{A}^\tenv_p &=& \V & (\(A\) is a {\tvariant}) \\
\I{X}^\tenv_p &=& \tenv(X)_p \\
\I{\O A}^\tenv_p &=&
	\Zfset{u}{\mbox{if}~p \acc q,~\mbox{then}~u \in \I{A}^\tenv_q}
	& (\(\,\O A\) is not a {\tvariant}) \\
\I{A \Impl B}^\tenv_p &=&
        \Zfset{\!u}{\mbox{if \(p \tacc q\:\), then
		    \(u \cdot v \in \I{B}^\tenv_q\:\)
		    for every \(v \in \I{A}^\tenv_q\)}}\mskip15mu
	& (\(A \Impl B\) is not a {\tvariant}) \\
\I{\fix{X}A}^\tenv_p &=& \I{A[\fix{X}A/X]}^\tenv_p &
	    (\(\fix{X}A\) is not a {\tvariant})
\end{Eqnarray*}
\end{definition}
Note that the \(\I{A}^{\tenv}_p\) has been defined by induction
on the lexicographic ordering of \(\pair{p}{r(A)}\), where
we consider the transitive closure of \(\acc\)
for the ordering of \(p\), which is also well-founded
since so is \(\acc\).
Because \(r(A[\fix{X}A/X]) < r(\fix{X}A)\) by Proposition~\ref{rank-fix}
when \(\fix{X}A\) is not a {\tvariant},
\(\I{\fix{X}A}^\tenv_p\) is well defined.
We can easily verify the following propositions.

\begin{proposition}\label{rlz-another-def}
The equations other than the first one in Definition~\ref{rlz-def}
hold for any type expression, that is,
the following equations hold.
\begin{Eqnarray*}
\I{X}^\tenv_p &=& \tenv(X)_p \\
\I{\O A}^\tenv_p &=&
	\Zfset{u}{\mbox{if}~p \acc q,~\mbox{then}~u \in \I{A}^\tenv_q} \\
\I{A \Impl B}^\tenv_p &=&
        \Zfset{u}{\mbox{if \(p \tacc q\:\), then
	\(u \cdot v \in \I{B}^\tenv_q\:\)
	for every \(v \in \I{A}^\tenv_q\)}} \\
\I{\fix{X}A}^\tenv_p &=& \I{A[\fix{X}A/X]}^\tenv_p
\end{Eqnarray*}
\end{proposition}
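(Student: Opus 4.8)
The plan is to prove each of the four displayed equations by reducing it to the one situation that its side condition in Definition~\ref{rlz-def} excludes, namely the case where the type expression in question is a {\tvariant}. Since Definition~\ref{rlz-def} already establishes each equation verbatim whenever the relevant expression is \emph{not} a {\tvariant}, it suffices to treat the complementary case: assuming the left-hand side is a {\tvariant}, so that its interpretation equals $\V$ by the first clause of Definition~\ref{rlz-def}, I will check that the right-hand formula also evaluates to $\V$. The whole argument is then a short case analysis on the outermost constructor, driven by the syntactic characterisations of {\tvariant}s in Proposition~\ref{tvariant-basic} together with Proposition~\ref{tvariant-fix}.

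For the variable equation $\I{X}^\tenv_p = \tenv(X)_p$ there is nothing to do, since $X$ is never a {\tvariant} by Proposition~\ref{var-tvariant}, so the equation holds directly from the definition. For the $\O$ case, suppose $\O A$ is a {\tvariant}; then $A$ is also a {\tvariant} by Proposition~\ref{O-tvariant}, whence $\I{A}^\tenv_q = \V$ for every $q$, and consequently $\{u : p \acc q \Rightarrow u \in \I{A}^\tenv_q\} = \V$, matching $\I{\O A}^\tenv_p = \V$. The implication case is parallel: if $A \Impl B$ is a {\tvariant}, then $B$ is one by Proposition~\ref{Impl-tvariant}, so $\I{B}^\tenv_q = \V$ for every $q$; the condition $u \cdot v \in \I{B}^\tenv_q$ is then vacuously satisfied, and the right-hand set is again all of $\V$.

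The fixed-point case proceeds the same way: if $\fix{X}A$ is a {\tvariant}, then $A[\fix{X}A/X]$ is a {\tvariant} by Proposition~\ref{tvariant-fix}, so the first clause of Definition~\ref{rlz-def} gives $\I{A[\fix{X}A/X]}^\tenv_p = \V = \I{\fix{X}A}^\tenv_p$. I do not expect any genuine obstacle here; the verification is routine and the only point demanding care is to invoke the correct equivalence in each case (Propositions~\ref{O-tvariant}, \ref{Impl-tvariant}, and \ref{tvariant-fix}), so that the relevant immediate subexpression is known to be a {\tvariant} and hence has interpretation $\V$ by the already-established first clause.
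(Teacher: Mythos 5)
Your proposal is correct and is exactly the argument the paper intends: the paper's proof simply cites Propositions~\ref{tvariant-basic} and \ref{tvariant-fix}, and your case analysis (reducing each equation to the {\tvariant} case and checking that both sides evaluate to $\V$) is the routine verification those citations stand for.
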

\begin{proof}
Straightforward by Propositions~\ref{tvariant-basic} and \ref{tvariant-fix}.
\qed
\end{proof}

\begin{proposition}\pushlabel
Let \(\tenv\) be a hereditary type environment.
\begin{Enumerate}
\item \itemlabel{rlz-subst-env}
    \(\I{A[B/X]}^\tenv_p = \I{A}^{\tenv[\I{B}^\tenv/X]}_p\).
\item \itemlabel{rlz-hereditary}
    \(\I{A}^{\tenv}\) is a hereditary mapping from \(\W\) to \(\Pow{\V}\);
    that is,
    \(p \acc q\) implies \(\I{A}^{\tenv}_p \subseteq \I{A}^{\tenv}_q\).
\end{Enumerate}
\end{proposition}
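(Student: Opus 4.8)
The plan is to prove both items together by induction on the lexicographic ordering of $\pair{p}{r(A)}$ --- exactly the well-founded measure on which $\I{A}^{\tenv}_p$ was defined, with the transitive closure of $\acc$ ordering the world component. In each item I would first dispose of the case that $A$ is a {\tvariant}: then $\I{A}^{\tenv}_p = \V$ by definition, and for Item~\itemref{rlz-subst-env} the expression $A[B/X]$ is again a {\tvariant} by Proposition~\ref{tvariant-subst1}, so $\I{A[B/X]}^{\tenv}_p = \V$ as well, while for Item~\itemref{rlz-hereditary} heredity is trivial. Having removed this case, I would split on the outermost constructor of $A$ and unfold each side using the constructor-wise equations of Proposition~\ref{rlz-another-def}, which hold for every type expression irrespective of whether its subexpressions or their substitution instances are {\tvariant}s; this is precisely what spares me a cascade of {\tvariant} subcases.

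For Item~\itemref{rlz-hereditary}, fix $p \acc q$. The variable case is immediate from heredity of $\tenv$. The function case is automatic and needs no induction hypothesis: if $u \in \I{C \Impl D}^{\tenv}_p$ and $q \tacc q'$, then $p \acc q \tacc q'$ yields $p \tacc q'$, so the condition defining membership at $p$ already delivers it at $q$. For $\O C$ I would observe that $u \in \I{\O C}^{\tenv}_p$ forces $u \in \I{C}^{\tenv}_q$ (take the accessible world to be $q$ itself), and then transport $u$ along any $q \acc s$ by the induction hypothesis for heredity of $C$ at the source world $q$, which is $\acc$-below $p$ and hence within the measure. For $\fix{Y}C$ I unfold $\I{\fix{Y}C}^{\tenv}_p$ and $\I{\fix{Y}C}^{\tenv}_q$ into $\I{C[\fix{Y}C/Y]}^{\tenv}_p$ and $\I{C[\fix{Y}C/Y]}^{\tenv}_q$, then apply the induction hypothesis to $C[\fix{Y}C/Y]$, whose rank is strictly below $r(\fix{Y}C)$ by Proposition~\ref{rank-fix} at the unchanged source world $p$.

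For Item~\itemref{rlz-subst-env}, write $\tenv' = \tenv[\I{B}^{\tenv}/X]$. The variable case is direct ($Y = X$ gives $\I{B}^{\tenv}_p$ on both sides, $Y \not= X$ gives $\tenv(Y)_p$). For $\O C$ and $C \Impl D$ the two sides unfold into identical shapes over the subexpressions, and the induction hypothesis supplies $\I{C[B/X]}^{\tenv}_q = \I{C}^{\tenv'}_q$ (and likewise for $D$): in the $\O C$ case every relevant $q$ satisfies $p \acc q$ so the world strictly decreases, whereas in the $C \Impl D$ case $q$ ranges over $p \tacc q$ and may equal $p$, so I would rely instead on $r(C), r(D) < r(C \Impl D)$ to keep the measure descending. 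The delicate case is $\fix{Y}C$. Choosing $Y \not= X$ and $Y \notin \FTV{B}$ by $\alpha$-conversion, I unfold the left side to $\I{(C[B/X])[\fix{Y}(C[B/X])/Y]}^{\tenv}_p$ and rewrite its argument, through the routine substitution-composition identity $C[\fix{Y}C/Y][B/X] = C[B/X]\,[(\fix{Y}C)[B/X]/Y]$, into $(C[\fix{Y}C/Y])[B/X]$. The induction hypothesis for substitution then applies to $C[\fix{Y}C/Y]$ --- legitimate exactly because its rank lies below $r(\fix{Y}C)$ by Proposition~\ref{rank-fix} --- converting the left side into $\I{C[\fix{Y}C/Y]}^{\tenv'}_p$, which is the unfolding of the right side $\I{\fix{Y}C}^{\tenv'}_p$.

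I expect the main obstacle to be this $\fix$ case of Item~\itemref{rlz-subst-env}: one must commute the two type-substitutions correctly and, above all, see that the correct quantity to induct on is the rank of the \emph{unsubstituted} expression, so that $r(A[B/X])$ --- which may exceed $r(A)$ --- never enters the recursion, with Proposition~\ref{rank-fix} furnishing the single strict descent that tames the unfolding. A secondary nuisance is the bookkeeping in the function case, where $p \tacc q$ permits $q = p$ and therefore forces the descent to come from the rank rather than the world. Notably, neither item is invoked in the proof of the other, so the two inductions are in fact independent; sharing the measure merely makes a joint presentation convenient.
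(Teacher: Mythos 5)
Your proposal is correct and follows essentially the same route as the paper: induction on the lexicographic ordering of \(\pair{p}{r(A)}\), disposal of the {\tvariant} case via Proposition~\ref{tvariant-subst1}, world-descent for the \(\O\)-cases, rank-descent via Proposition~\ref{rank-fix} for the \(\fx\)-cases, and the same substitution-commutation identity in the \(\fx\) case of Item~\itemref{rlz-subst-env}. Your closing observations (that the two items are in fact independent inductions, and that the proof of Item~\itemref{rlz-subst-env} does not use heredity of \(\tenv\)) also match the paper.
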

\begin{proof}
By straightforward induction
on the lexicographic ordering of \(\pair{p}{r(A)}\),
and by cases on the form of \(A\).
If \(A\) is a {\tvariant}, then so is \(A[B/X]\)
by Proposition~\ref{tvariant-subst1}; and hence,
    \(\I{A[B/X]}^\tenv_p = \I{A}^{\tenv[\I{B}^\tenv/X]}_p
    = \I{A}^\tenv_p = \I{A}^\tenv_q = \V\)
    by Definition~\ref{rlz-def}.
\ifdetail
Therefore, we assume that \(A\) is not below.

\paragraph{Proof of \protect\itemref{rlz-subst-env}}
Induction step proceeds by cases on the form of \(A\), where
the proof does not depend on the hereditarity of \(\tenv\).
Let \(\tenv' = \tenv[\I{B}^\tenv/X]\).

\Case{\(A = Y\) for some \(Y\).}
If \(Y = X\), then \(A[B/X] = B\); and hence,
    \(\I{A[B/X]}^\tenv_p = \I{B}^\tenv_p = \I{X}^{\tenv'}_p\).
Otherwise, \(Y \not= X\) and \(A[B/X] = Y\).
Hence,
    \(\I{A[B/X]}^\tenv_p = \I{Y}^\tenv_p = \I{Y}^{\tenv'}_p\).

\Case{\(A = \O C\) for some \(C\).}
In this case,
\begin{Eqnarray*}
    \I{A[B/X]}^\tenv_p
    &=& \I{\O(C[B/X])}^\tenv_p \\
    &=& \Zfset{u}{\mbox{if}~p \acc q,~\mbox{then}~u \in \I{C[B/X]}^\tenv_q}
	& (by Definition~\ref{rlz-def}) \\
    &=& \Zfset{u}{\mbox{if}~p \acc q,~\mbox{then}~u \in
	    \I{C}^{\tenv'}_q}
	& (by induction hypothesis) \\
    &=& \I{\O C}^{\tenv'}_p
	& (by Definition~\ref{rlz-def}).
\end{Eqnarray*}

\Case{\(A = C \Impl D\) for some \(C\) and \(D\).}
In this case,
\begin{Eqnarray*}
    \I{A[B/X]}^\tenv_p
    = \I{C[B/X] \Impl D[B/X]}^\tenv_p \mskip-290mu \\
    &=& \Zfset{\!u}{\mbox{if \(p \tacc q\:\), then
		    \(u \cdot v \in \I{D[B/X]}^\tenv_q\:\)
		    for every \(v \in \I{C[B/X]}^\tenv_q\)}}
	& (by Definition~\ref{rlz-def}) \\
    &=& \Zfset{\!u}{\mbox{if \(p \tacc q\:\), then
		    \(u \cdot v \in \I{D}^{\tenv'}_q\:\)
		    for every \(v \in \I{C}^{\tenv'}_q\)}}
	& \hskip-34pt (by induction hypothesis) \\
    &=& \I{C \Impl D}^{\tenv'}_p
	& (by Definition~\ref{rlz-def}).
\end{Eqnarray*}

\Case{\(A = \fix{Y}C\) for some \(Y\) and \(C\).}
We can assume that \(Y \not\in \FTV{B} \cup \{\,X\,\}\)
without loss of generality.
\begin{Eqnarray*}
    \I{A[B/X]}^\tenv_p
    &=& \I{\fix{Y}C[B/X]}^\tenv_p
	& (since \(Y \not\in \FTV{B} \cup \{\,X\,\}\)) \\
    &=& \I{C[B/X][\fix{Y}C[B/X]/Y]}^\tenv_p & (by Definition~\ref{rlz-def}) \\
    &=& \I{C[\fix{Y}C/Y][B/X]}^\tenv_p
	& (since \(Y \not\in \FTV{B} \cup \{\,X\,\}\)) \\
    &=& \I{C[\fix{Y}C/Y]}^{\tenv'}_p & (by induction hypothesis) \\
    &=& \I{\fix{Y}C}^{\tenv'}_p & (by Definition~\ref{rlz-def})
\end{Eqnarray*}
Note that \(r(C[\fix{Y}C]) < r(\fix{Y}C)\)
by Proposition~\ref{rank-fix}.

\paragraph{Proof of \protect\itemref{rlz-hereditary}}
Induction step proceeds by cases on the form of \(A\).
Suppose that \(p \acc q\).

\Case{\(A = X\) for some \(X\).}
Since \(\tenv\) is hereditary,
    \(\I{A}^\tenv_p = \tenv(X)_p \subseteq \tenv(X)_q = \I{A}^\tenv_q\).

\Case{\(A = \O B\) for some \(B\).}
Suppose that \(u \in \I{\O B}^\tenv_p\).
Then, \(u \in \I{B}^\tenv_q\) by Definition~\ref{rlz-def}.
Therefore,
if \(q \acc r\),
we get \(u \in \I{B}^\tenv_r\) by induction hypothesis, which means
\(u \in \I{\O B}^\tenv_q\).

\Case{\(A = B \Impl C\) for some \(B\) and \(C\).}
Obvious since \(u \in \I{B\Impl C}^\tenv_p\) implies
\(u \in \I{B \Impl C}^\tenv_q\) by Definition~\ref{rlz-def}.

\Case{\(A = \fix{X}B\) for some \(X\) and \(B\).}
Note that \(r(B[\fix{X}B]) < r(\fix{X}B)\) by Proposition~\ref{rank-fix}.
Therefore,
\begin{Eqnarray*}
    \I{\fix{X}B]}^\tenv_p
    &=& \I{B[\fix{X}B/X]}^\tenv_p & (by Definition~\ref{rlz-def}) \\
    &\subseteq& \I{B[\fix{X}B/X]}^\tenv_ q& (by induction hypothesis) \\
    &=& \I{\fix{X}B]}^\tenv_q & (by Definition~\ref{rlz-def}).
\end{Eqnarray*}
\else
Therefore, it suffices to only handle the case that \(A\) is not.
\fi 
\qed\CHECKED{2014/07/11}
\end{proof}

\begin{proposition}\label{semantics-transitive}\pushlabel
Let \(\pair{\W}{\acc}\) be a well-founded frame, and
    \(\pacc\) the transitive closure of\/ \(\acc\).
\begin{Enumerate}
\item \itemlabel{linear-transitive}
    The accessibility relation \(\pacc\) is locally linear
    if and only if so is\/ \(\acc\).
\item \itemlabel{rlz-transitive}
    Let \(\II\) and \(\IIt\) be the interpretations over
    \(\pair{\W}{\acc}\) and \(\pair{\W}{\pacc}\), respectively.
    \(\I{A}^\tenv_p = \It{A}^\tenv_p\,\) for every
    hereditary type environment \(\tenv\) and \(p \in \W\).
\end{Enumerate}
\end{proposition}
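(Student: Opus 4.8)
The plan is to prove the two items separately. Item~\itemref{linear-transitive} is a purely combinatorial statement about the frame, proved by manipulating $\acc$-chains, whereas Item~\itemref{rlz-transitive} is an induction on the semantics in which the only clause actually sensitive to the choice of $\acc$ versus $\pacc$ is the one for $\O$. The key preliminary observation for both items is that the reflexive and transitive closure of $\pacc$ coincides with $\tacc$ (the reflexive and transitive closure of $\acc$), since $\pacc$ is already transitive; consequently the $\Impl$-clause quantifies over the \emph{same} relation $\tacc$ in both interpretations, and only the $\O$-clause, which quantifies over immediate successors, can differ.

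For Item~\itemref{linear-transitive}, local linearity of $\pacc$ reads: whenever $p \pacc q$ there is $r$ with $p \tacc r \pacc q$ such that $r \pacc s$ implies $q \tacc s$. For the direction ``$\acc$ locally linear $\Rightarrow$ $\pacc$ locally linear'', given $p \pacc q$ I take a chain realizing it, say $p \tacc p' \acc q$, and apply local linearity of $\acc$ to the final edge $p' \acc q$ to obtain $r$ with $p' \tacc r \acc q$ (so $p \tacc r \pacc q$) and $r \acc s \Rightarrow q \tacc s$; the same $r$ works for $\pacc$, since if $r \pacc s$ we may write it as $r \acc t \tacc s$, and then $q \tacc t$ together with $t \tacc s$ gives $q \tacc s$. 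For the converse, given $p \acc q$ (hence $p \pacc q$) I apply local linearity of $\pacc$ to get $r$ with $p \tacc r \pacc q$ and $r \pacc s \Rightarrow q \tacc s$, then take $r'$ to be the $\acc$-predecessor of $q$ on a chain witnessing $r \pacc q$, so that $r \tacc r' \acc q$ and hence $p \tacc r' \acc q$; any $s$ with $r' \acc s$ then satisfies $r \tacc r' \acc s$, i.e.\ $r \pacc s$, so $q \tacc s$, and $r'$ witnesses local linearity of $\acc$ at $p \acc q$.

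For Item~\itemref{rlz-transitive}, I would prove $\I{A}^\tenv_p = \It{A}^\tenv_p$ by induction on the lexicographic ordering of $\pair{p}{r(A)}$, with $p$ ordered by $\pacc$. This is legitimate because $r(A)$ is frame-independent and the ordering used to define $\IIt$ over $\pair{\W}{\pacc}$ is the transitive closure of $\pacc$, which is again $\pacc$. The {\tvariant} case and the variable case are immediate. In the $\Impl$-case both interpretations quantify over $\tacc$, so after applying the induction hypothesis to the rank-smaller subexpressions $A$ and $B$ at all $q$ with $p \tacc q$, the two clauses coincide. The $\fix$-case reduces to the strictly rank-smaller expression $A[\fix{X}A/X]$ at the same $p$ by the rank inequality $r(A[\fix{X}A/X]) < r(\fix{X}A)$ (Proposition~\ref{rank-fix}), and is dispatched by the induction hypothesis.

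The crux is the $\O$-case, which is exactly where $\acc$ and $\pacc$ differ. Here $\I{\O A}^\tenv_p$ quantifies over $\acc$-successors of $p$ and $\It{\O A}^\tenv_p$ over $\pacc$-successors; every such successor is $\pacc$-below $p$, so the induction hypothesis lets me replace $\It{A}^\tenv_q$ by $\I{A}^\tenv_q$ throughout. The inclusion $\It{\O A}^\tenv_p \subseteq \I{\O A}^\tenv_p$ is then trivial because $\acc \subseteq \pacc$. For the reverse inclusion I would use hereditarity: if $u \in \I{A}^\tenv_q$ for every immediate successor $q$, and $p \pacc q'$ via a chain $p \acc q \tacc q'$, then $u \in \I{A}^\tenv_q$, and since $\I{A}^\tenv$ is hereditary along $\acc$ (Proposition~\ref{rlz-hereditary}) and hence along $\tacc$, we get $\I{A}^\tenv_q \subseteq \I{A}^\tenv_{q'}$ and so $u \in \I{A}^\tenv_{q'}$. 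Thus the two $\O$-clauses agree. I expect this $\O$-case, and specifically the appeal to hereditarity to collapse the transitive closure back to a single step, to be the main point of the argument; Item~\itemref{linear-transitive}, while elementary, is the other place requiring care, owing to the chain bookkeeping around $\acc$ versus $\pacc$ described above.
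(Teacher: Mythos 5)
Your proof is correct and follows essentially the same route as the paper's: the same chain bookkeeping for Item~1 (applying local linearity to the last $\acc$-edge of a witnessing chain, and conversely extracting the $\acc$-predecessor of $q$ from a $\pacc$-witness), and for Item~2 the same key step of using hereditarity (Proposition~\ref{rlz-hereditary}) to collapse $\pacc$ back to a single $\acc$-step in the $\O$-case, with all other cases reducing to the fact that the reflexive--transitive closures coincide. The only cosmetic difference is the induction measure: the paper states an induction on $h(A)$, while your lexicographic ordering on $\pair{p}{r(A)}$ is, if anything, the more careful choice for the $\fix{X}A$-case.
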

\begin{proof}
Straightforward.
Use Propositions~\ref{rlz-another-def} and \ref{rlz-hereditary}
for Item~\itemref{rlz-transitive}.
\ifdetail
\paragraph{Proof of \protect\itemref{linear-transitive}}
For the ``if'' part, suppose that \(\acc\) is locally linear,
and that \(p \pacc q\).
We can find a world \(p'\) such that \(p \tacc p' \acc q\).
Then, since \(\acc\) is locally linear,
there exists some \(r\) such that
\begin{eqnarray}
\label{linear-transitive-01}
    && p' \tacc r \acc q,~\mbox{and} \\
\label{linear-transitive-02}
    && r \acc s ~\mbox{implies}~ q \tacc s ~\mbox{for any}~s.
\end{eqnarray}
We get \(p \tacc r \pacc q\) from (\ref{linear-transitive-01}).
Furthermore, if \(r \pacc s\), then
\(r \acc s' \tacc s\) for some \(s'\); and hence,
\(q \tacc s'\) by (\ref{linear-transitive-02}), which implies
\(q \tacc s\).
We thus get that \(\pacc\) is also locally linear.

As for the ``only if'' part, suppose that \(\pacc\) is locally linear,
and that \(p \acc q\).
Since \(\pacc\) is locally linear,
there exists some \(r'\) such that
\begin{eqnarray}
\label{linear-transitive-03}
    && p \tacc r' \pacc q,~\mbox{and} \\
\label{linear-transitive-04}
    && r' \pacc s ~\mbox{implies}~ q \tacc s ~\mbox{for any}~s.
\end{eqnarray}
We can find a world \(r\) such that
\(p \tacc r' \tacc r \acc q\), that is,
\(p \tacc r \acc q\), by (\ref{linear-transitive-03}).
Furthermore, if \(r \acc s\), then
\(r' \tacc r \acc s\), that is, \(r' \pacc s\); and hence,
\(q \tacc s\) by (\ref{linear-transitive-04}).
Therefore, \(\acc\) is also locally linear.

\paragraph{Proof of \protect\itemref{rlz-transitive}}
Straightforward induction on \(h(A)\), and by cases on the form of \(A\).
Use Propositions~\ref{rlz-another-def} and \ref{rlz-hereditary}.
The only non-trivial case is as follows.
\begin{Eqnarray*}
u \in \I{\O A}^\tenv_p
    &~\mbox{iff}~& p \acc q ~\mbox{implies}~u \in \I{A}^\tenv_q
	    & (by Proposition~\ref{rlz-another-def}) \\
    &~\mbox{iff}~& p \acc q \tacc r ~\mbox{implies}~u \in \I{A}^\tenv_r
	    & (by Proposition~\ref{rlz-hereditary}) \\
    &~\mbox{iff}~& p \pacc r ~\mbox{implies}~u \in \I{A}^\tenv_r \\
    &~\mbox{iff}~& u \in \It{\O A}^\tenv_p
	    & (by Proposition~\ref{rlz-another-def})
\end{Eqnarray*}
\Qed
\else
\qed
\fi 
\CHECKED{2014/07/14}
\end{proof}

Adopting such an interpretation of types over a well-founded frame, or
a {\lA}-frame, we can verify the subtyping relations between
\(\O (A \Impl B)\) and \(\O A \Impl \O B\)
discussed in Section~\ref{intro-sec} as follows.

\begin{proposition}\label{subtyp-K-soundness}
Let \(\pair{\W}{\acc}\) be a well-founded frame.
Then,
\(\I{\O(A \Impl B)}^\tenv_p \subseteq \I{\O A \Impl \O B}^\tenv_p\,\)
for any \(A\), \(B\), \(\tenv\) and \(p \in \W\).
\end{proposition}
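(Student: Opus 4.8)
The plan is to chase membership directly through the unfolded semantics. First I would invoke Proposition~\ref{rlz-another-def} so that the defining equations for $\I{\O -}{}$ and $\I{- \Impl -}{}$ may be used uniformly for all arguments, without separately handling the degenerate cases in which some subexpression is a \tvariant. Observe at the outset that local linearity of $\acc$ is \emph{not} required here, nor is the hereditarity of $\tenv$; the inclusion holds over an arbitrary well-founded frame, and the proof rests only on the shape of the accessibility relation.

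So I would fix $u \in \I{\O(A \Impl B)}^\tenv_p$ and unfold the goal $u \in \I{\O A \Impl \O B}^\tenv_p$ using Definition~\ref{rlz-def}. Spelling it out, I must show: for every $r$ with $p \tacc r$, every $w \in \I{\O A}^\tenv_r$, and every $s$ with $r \acc s$, one has $u \cdot w \in \I{B}^\tenv_s$. Accordingly I fix such $r$, $w$, and $s$. From $w \in \I{\O A}^\tenv_r$ and the single step $r \acc s$, the defining equation for $\I{\O A}{}$ immediately yields $w \in \I{A}^\tenv_s$.

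The one step that carries the actual content is producing a world $q$ with $p \acc q$ and $q \tacc s$, at which the hypothesis on $u$ can be applied. This follows because $p \tacc r$ and $r \acc s$ give $p \pacc s$, so the chain from $p$ to $s$ is nonempty and factors through its first step: there is a $q$ with $p \acc q$ and $q \tacc s$. With such a $q$ in hand, the hypothesis $u \in \I{\O(A \Impl B)}^\tenv_p$ together with $p \acc q$ gives $u \in \I{A \Impl B}^\tenv_q$; then $q \tacc s$ and $w \in \I{A}^\tenv_s$ give $u \cdot w \in \I{B}^\tenv_s$, which is exactly what was required.

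Everything outside of that factorization is a routine unfolding of Definition~\ref{rlz-def}, so I anticipate no genuine obstacle. The only point worth stating carefully is the asymmetry between the two modal and implicational clauses (the $\O$-clause quantifies over a single step $\acc$, whereas the $\Impl$-clause quantifies over $\tacc$); it is precisely this asymmetry that lets the single step $r \acc s$ be absorbed into the chain $p \pacc s$ and thereby matched against a one-step successor $q$ of $p$. This is also the reason the converse inclusion would additionally need linearity, whereas the direction asked for here does not.
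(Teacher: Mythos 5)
Your proof is correct, but the key step differs from the paper's. The paper handles the chain $p \tacc r \acc q$ by first pushing $u$ forward along $p \tacc r$ using hereditarity of the interpretation (Proposition~\ref{rlz-hereditary}, which rests on $\tenv$ being hereditary), obtaining $u \in \I{\O(A\Impl B)}^\tenv_r$, and then stripping the $\O$ via the single step $r \acc q$. You instead leave $u$ at $p$ and refactor the composite chain $p \tacc r \acc s$ as $p \acc q \tacc s$, stripping the $\O$ at the \emph{first} step and then exploiting the fact that the $\Impl$-clause already quantifies over $\tacc$. Both arguments are sound; your factorization of a nonempty chain through its first step is valid for any binary relation. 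What your route buys is exactly what you claim: it bypasses Proposition~\ref{rlz-hereditary} entirely, so the inclusion does not depend on the hereditarity of $\tenv$ (though the paper only ever defines $\I{\cdot}^\tenv$ for hereditary environments, so this extra generality is latent rather than usable as stated). What the paper's route buys is uniformity: the same hereditarity lemma is the workhorse of the converse inclusion in Theorem~\ref{peqtyp-KL-soundness}, where your first-step factorization alone would not suffice and local linearity genuinely enters — a point you also identify correctly.
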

\begin{proof}
Suppose that \(u \in \I{\O(A \Impl B)}^{\tenv}_p\).
To show that \(u \in \I{\O A \Impl \O B}^{\tenv}_p\),
suppose also that \(p \tacc r\) and \(v \in \I{\O A}^{\tenv}_r\).
By Proposition~\ref{rlz-another-def},
it suffices to show that
\(u \cdot v \in \I{B}^{\tenv}_q\strut\) for every \(q \opacc r\).
Since
\(u \in \I{\O(A \Impl B)}^{\tenv}_p\) and \(p \tacc r\),
we get
\(u \in \I{\O(A \Impl B)}^{\tenv}_r\strut\) by Proposition~\ref{rlz-hereditary}.
Hence, if \(r \acc q\), then
    \(u \in \I{A \Impl B}^{\tenv}_q\); and therefore,
\(u \cdot v \in \I{B}^{\tenv}_q\)
by Proposition~\ref{rlz-another-def}
because \(v \in \I{A}^{\tenv}_q\) from \(v \in \I{\O A}^{\tenv}_r\strut\).
\qed\CHECKED{2014/07/11}
\end{proof}

\begin{theorem}\label{peqtyp-KL-soundness}
Let \(\pair{\W}{\acc}\) be a well-founded frame.
Then,
\(\I{\O(A \Impl B)}^\tenv_p = \I{\O A \Impl \O B}^\tenv_p\,\)
for any \(A\), \(B\), hereditary \(\tenv\) and \(p \in \W\)
if and only if\/ \(\acc\) is locally linear.
\end{theorem}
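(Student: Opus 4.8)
The plan is to prove the two directions of the equivalence separately, using the inclusion $\I{\O(A \Impl B)}^\tenv_p \subseteq \I{\O A \Impl \O B}^\tenv_p$ from Proposition~\ref{subtyp-K-soundness}, which already holds over an arbitrary well-founded frame, so that in each direction only the reverse inclusion is at stake. Throughout I would work with the uniform equations of Proposition~\ref{rlz-another-def} (so that the {\tvariant} cases need no separate treatment) and freely use hereditarity (Proposition~\ref{rlz-hereditary}). It is also convenient to note, via Proposition~\ref{semantics-transitive}, that both local linearity and the interpretation are unchanged on passing to the transitive closure, so one may assume $\acc$ transitive when that helps.

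First, for the ``if'' direction, assume $\acc$ is locally linear and take $u \in \I{\O A \Impl \O B}^\tenv_p$; I must show $u \in \I{\O(A \Impl B)}^\tenv_p$, i.e.\ that $u \cdot v \in \I{B}^\tenv_r$ whenever $p \acc q \tacc r$ and $v \in \I{A}^\tenv_r$. Since $p \acc q \tacc r$ yields a walk of length at least one from $p$ to $r$, there is an immediate predecessor $r''$ with $p \tacc r'' \acc r$; applying the local-linearity condition of Definition~\ref{locally-linear-def} to $r'' \acc r$ produces a world $r'$ with $p \tacc r'$, $r' \acc r$, and $r' \acc s \Rightarrow r \tacc s$ for all $s$. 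The point is that $v$ then lies in $\I{\O A}^\tenv_{r'}$: for every successor $s$ of $r'$ we have $r \tacc s$, so hereditarity gives $v \in \I{A}^\tenv_r \subseteq \I{A}^\tenv_s$. Feeding $r'$ and $v$ into the hypothesis $u \in \I{\O A \Impl \O B}^\tenv_p$ gives $u \cdot v \in \I{\O B}^\tenv_{r'}$, and since $r' \acc r$ this delivers exactly $u \cdot v \in \I{B}^\tenv_r$.

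For the ``only if'' direction I argue by contraposition: assuming $\acc$ is not locally linear, I construct $A$, $B$, $\tenv$, $p$ with $\I{\O A \Impl \O B}^\tenv_p \neq \I{\O(A \Impl B)}^\tenv_p$. Non-linearity furnishes a pair $p \acc q$ such that every $r$ with $p \tacc r \acc q$ has a successor $s$ with $q \not\tacc s$. Taking $A = X$ and $B = Y$, I would let $U = \{\,s : q \tacc s\,\}$ and define $\tenv(X)_w = \V$ if $w \in U$ and $\emptyset$ otherwise, and $\tenv(Y)_w = \emptyset$ if $w \tacc q$ and $\V$ otherwise; both are hereditary because $U$ is upward closed and $\{\,w : w \tacc q\,\}$ is downward closed under $\acc$. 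Then $\I{\O(X \Impl Y)}^\tenv_p = \emptyset$, since at $r = q$ we have $\tenv(X)_q = \V \neq \emptyset$ but $\tenv(Y)_q = \emptyset$, whereas I claim $\I{\O X \Impl \O Y}^\tenv_p = \V$: for any $r'$ with $p \tacc r'$, either $r' \acc q$, in which case the bad successor supplied by non-linearity lies outside $U$ and forces $\I{\O X}^\tenv_{r'} = \emptyset$, or $r' \not\acc q$, in which case well-foundedness rules out a cycle and every successor $s'$ of $r'$ (necessarily in $U$ when $\I{\O X}^\tenv_{r'} \neq \emptyset$) must satisfy $\tenv(Y)_{s'} = \V$; either way the defining condition of $\I{\O X \Impl \O Y}^\tenv_p$ is met by every $u$. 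As $\V \neq \emptyset$, the two interpretations differ.

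The main obstacle is the counterexample construction in the ``only if'' direction, and specifically checking that $\I{\O X \Impl \O Y}^\tenv_p = \V$: the delicate case is a world $r'$ whose successors all lie in $U$, where I must rule out that any such successor equals or reaches $q$. Here the interplay of the two hypotheses is essential---non-linearity kills the direct predecessors of $q$ (the case $r' \acc q$), while converse well-foundedness forbids the cyclic configuration $q \tacc s' \tacc q$ in the remaining case. The ``if'' direction is comparatively routine, the only genuine choice being to apply local linearity to the last edge $r'' \acc r$ rather than to the first edge $p \acc q$.
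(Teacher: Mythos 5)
Your proof is correct and follows the same overall strategy as the paper's: Proposition~\ref{subtyp-K-soundness} for the inclusion valid over any well-founded frame, local linearity applied to the final edge \(r'' \acc r\) of the path for the reverse inclusion in the ``if'' direction, and a two-type-variable counter-environment witnessing failure in the ``only if'' direction. The only divergence is your valuation of \(Y\) (true exactly off the down-set \(\{\,w : w \tacc q\,\}\), rather than on the worlds accessible from \(q\) as in the paper), which makes hereditarity of \(\tenv(Y)\) immediate without passing to the transitive closure, at the price of the explicit well-foundedness argument excluding cycles through \(q\) — which your Case~2 supplies correctly.
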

\begin{proof}
For the ``if'' part, suppose that \(\acc\) satisfies Condition~2
of Definition~\ref{wf-frame-def}.
We get
\(\I{\O(A \Impl B)}^\tenv_p \subseteq \I{\O A \Impl \O B}^\tenv_p\)
by Proposition~\ref{subtyp-K-soundness}.
To show the opposite,
suppose also that \(u \in \I{\O A \Impl \O B}^{\tenv}_p\),
\(p \acc q' \tacc q\) and \(v \in \I{A}^{\tenv}_q\).
It suffices to show that \(u \cdot v \in \I{B}^{\tenv}_q\strut\).
Let \(p'\) be the element of \(\W\) such that \(p \tacc p' \acc q\).
By Condition~2 of Definition~\ref{wf-frame-def},
there exists some \(r \in \W\) such that
\begin{Eqnarray}
    && p \tacc r, \label{peqtyp-KL-soundness-01} \\
    && r \acc q, ~\mbox{and} \label{peqtyp-KL-soundness-02} \\
    && r \acc s~\mbox{implies}~q \tacc s~\mbox{for any}~s.
	\label{peqtyp-KL-soundness-03}
\end{Eqnarray}
By Proposition~\ref{rlz-hereditary}, we get
\(u \in \I{\O A \Impl \O B}^{\tenv}_r\) from
\(u \in \I{\O A \Impl \O B}^{\tenv}_p\) and (\ref{peqtyp-KL-soundness-01}),
and get \(v \in \I{\O A}^{\tenv}_r\)
from \(v \in \I{A}^{\tenv}_q\) and (\ref{peqtyp-KL-soundness-03}).
Therefore,
\(u \cdot v \in \I{\O B}^{\tenv}_r\) by Proposition~\ref{rlz-another-def}.
We now get \(u \cdot v \in \I{B}^{\tenv}_q\) from this
and (\ref{peqtyp-KL-soundness-02}) by Proposition~\ref{rlz-another-def}.

For the ``only if'' part, suppose that \(\acc\) does not satisfy
Condition~2, i.e.,
there exist some \(p\), \(q \in \W\) such that \(p \acc q\), and
\begin{Eqnarray}
    \label{peqtyp-KL-soundness-04}
    ~\mbox{for every}~r \in \W,~\mbox{if}~p \tacc r \acc q,
	~\mbox{then}~ r \acc s~\mbox{and}~q \not\tacc s
	~\mbox{for some}~s \in \W.
\end{Eqnarray}
Since \(p \tacc p \acc q\), there exists some \(s\) such that
	\(p \acc s\) and \(q \not\tacc s\).
Then, consider the hereditary type environment \(\tenv\) defined as follows.
\begin{Eqnarray}
    \label{peqtyp-KL-soundness-05}
    \tenv(X)_t &=& \Choice{\V & (q \tacc t) \\ \{\} & (q \not\tacc t)} \\[5pt]
    \label{peqtyp-KL-soundness-06}
    \tenv(Y)_t &=& \Choice{\V & (q \acc t) \\ \{\} & (q \not\acc t)}
\end{Eqnarray}
We can see that \(\I{\O(X \Impl Y)}^{\tenv}_p = \{\}\), while
\(\I{\O X \Impl \O Y }^{\tenv}_p = \V\).
In fact, \(\I{X \Impl Y}^{\tenv}_q = \{\}\strut\),
since \(\I{X}^{\tenv}_q = \tenv(X)_q = \V\) and
\(\I{Y}^{\tenv}_q = \tenv(Y)_q = \{\}\) by
(\ref{peqtyp-KL-soundness-05}) and (\ref{peqtyp-KL-soundness-06}), respectively;
and hence, \(\I{\O(X \Impl Y)}^{\tenv}_p = \{\}\) from \(p \acc q\).
Furthermore,
\(\I{\O X \Impl \O Y }^{\tenv}_p = \V\) can be also shown as follows.
Suppose that \(p \tacc t\) and \(\I{\O X }^{\tenv}_t \not= \{\}\strut\),
which means that
\(t \acc s\) implies \(\tenv(X)_s = \V\) for any \(s \in \W\)
by (\ref{peqtyp-KL-soundness-05}); that is,
\[
    t \acc s~\mbox{implies}~q \tacc s~\mbox{for any}~s.
\]
Hence, \(t \not\acc q\) by (\ref{peqtyp-KL-soundness-04}); and therefore,
\(t \acc s\) implies \(q \acc s\) for any \(s\).
Then, by (\ref{peqtyp-KL-soundness-06}),
\[
    t \acc s~\mbox{implies}~\tenv(Y)_s = \V\,~\mbox{for any}~s.
\]
We thus get \(\I{\O Y }^{\tenv}_t = \V\) from
\(\I{\O X }^{\tenv}_t \not= \{\}\) and \(p \tacc t\).
Hence, \(\I{\O X \Impl \O Y }^{\tenv}_p = \V\).
\qed\CHECKED{2014/07/12, 07/24}
\end{proof}

\Section{Equality of types}\label{eqtyp-sec}

In Section~\ref{lA-sec}, we will introduce
a modal typing system {\lA}, which
respect the semantics of type expressions based on {\lA}-frames.
In this section, we discuss formal derivability of equality
between type expressions,
which will be incorporated into the modal typing system.

\begin{definition}[\(\eqtyp\)]
    \ilabel{eqtyp-def}{type expressions!equality of}
    \ilabel*{0 equality@$\protect\eqtyp$}
    \ilabel{Il-def}{I l@$\protect\Il{A}^{\protect\tenv}_p$}
The equivalence relation \(\eqtyp\) on \(\TE\) is defined
to be the smallest binary relation that satisfies the following:
\begingroup
\leftmargini=60pt
\labelsep=10pt
\begin{itemize}\itemsep=0pt
\item[\r{{\eqtyp}\mbox{-reflex}}] \(A \eqtyp A\).
\item[\r{{\eqtyp}\mbox{-symm}}]
    If\/ \(A \eqtyp B\), then \(B \eqtyp A\).
\item[\r{{\eqtyp}\mbox{-trans}}]
    If\/ \(A \eqtyp B\) and \(B \eqtyp C\), then \(A \eqtyp C\).
\item[\r{{\eqtyp}\mbox{-}{\O}}]
    If\/ \(A \eqtyp B\), then \(\O A \eqtyp \O B\).
\item[\r{{\eqtyp}\mbox{-}{\Impl}}]
    If\/ \(A \eqtyp C\) and \(B \eqtyp D\),
    then \(A \Impl B \eqtyp C \Impl D\).
\item[\r{{\eqtyp}\mbox{-}{\Impl}{\t}}]
    \(A \Impl \t \eqtyp \t\).
\item[\r{{\eqtyp}\mbox{-fix}}]
    \(\fix{X}A \eqtyp A[\fix{X}A/X]\).
\item[\r{{\eqtyp}\mbox{-uniq}}]
    If \(A \eqtyp C[A/X]\) and \(C\) is
    proper in \(X\), then \(A \eqtyp \fix{X}C\).
\end{itemize}
\endgroup
\end{definition}
For example, \(\fix{X}{Y \Impl \O X} \eqtyp \t\)
by \r{{\eqtyp}\mbox{-uniq}} since
\((Y \Impl \O X)[\t/X] \eqtyp \t\) as follows.
\begin{Eqnarray*}
(Y \Impl \O X)[\t/X] \ifnarrow\mskip-80mu\\\fi
    &=& Y \Impl \O \fix{X}\O X \\
    &\eqtyp& Y \Impl \fix{X}\O X
	& (by \r{{\eqtyp}\mbox{-reflex}}, \r{{\eqtyp}\mbox{-fix}},
	    \r{{\eqtyp}\mbox{-symm}} and \r{{\eqtyp}\mbox{-}{\Impl}}) \\
    &\eqtyp& \fix{X}\O X  & (by \r{{\eqtyp}\mbox{-}{\Impl}{\t}}) \\
    &=& \t
\end{Eqnarray*}
The intended meaning of \(A \eqtyp B\) is that
the interpretations of \(A\) and \(B\)
are identical in any world of any well-founded frames.
We will see in Proposition~\ref{geqtyp-fix-congr} that
the following rule is derivable from the above.
\begin{quote}
\(A \eqtyp B\) implies \(\fix{X}A \eqtyp \fix{X}B\),
provided that \(A\) or \(B\) is proper in \(X\).
\end{quote}

\begin{definition}[\(\peqtyp\)]
    \ilabel{peqtyp-def}{0 equality L@$\protect\peqtyp$}
We similarly define another equivalence relation \(\peqtyp\) by adding
the following condition to the definition of \(\eqtyp\).
\begingroup
\leftmargini=60pt\labelsep=10pt
\begin{itemize}
\item[\r{{\peqtyp}\mbox{-{\bf K}/{\bf L}}}]
    \(\O(A \Impl B) \peqtyp \O A \Impl \O B\)
\end{itemize}
\endgroup
\end{definition}

This rule reflects the equivalence shown in Theorem~\ref{peqtyp-KL-soundness},
and is only valid for {\lA}-frames.
Noted that \(A \eqtyp B\) implies \(A \peqtyp B\).
Roughly, two type expressions are equivalent modulo \(\eqtyp\),
if their (possibly infinite) type expressions
obtained by indefinite unfolding recursive types occurring in them
are identical modulo the rule \r{{\eqtyp}\mbox{-}{\Impl}\t}.
Also, they are equivalent modulo \(\peqtyp\),
if their indefinite unfoldings are identical modulo the rules
\r{{\eqtyp}\mbox{-}{\Impl}\t} and \r{{\peqtyp}\mbox{-{\bf K}/{\bf L}}}.

\Subsection{Basic properties of the equality}\label{geqtyp-basic-sec}
\ilabel{geqtyp-def}{0 equality Z@$\protect\geqtyp$}
In this subsection,
we discuss some basic properties of \(\eqtyp\) and \(\peqtyp\).
Most of the results are common to both \(\eqtyp\) and \(\peqtyp\).
In the sequel, let \(\geqtyp\) denotes either \(\eqtyp\) or \(\peqtyp\).

\begin{proposition}\label{geqtyp-t}\pushlabel
\begin{Enumerate}
\item \itemlabel{geqtyp-t1}
    \(\t \geqtyp \O \t\).
\item \itemlabel{geqtyp-t2}
    \(\t \geqtyp \fix{X}\O^n X\) for every \(n \ge 1\).
\item \itemlabel{geqtyp-O-fix}	
\(A \geqtyp \O^{n{+}1} A\) if and only if \(A \geqtyp \t\).
\end{Enumerate}
\end{proposition}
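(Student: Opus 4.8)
The three parts build on one another, so the plan is to prove them in order; and since $\peqtyp$ contains every rule of $\eqtyp$, the argument for $\geqtyp$ is identical in either case, using only rules common to both relations.

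First I would establish the auxiliary fact that $\t \geqtyp \O^m\t$ for every $m \ge 0$. The case $m=1$ is immediate from \r{{\eqtyp}\mbox{-fix}}: since $\t = \fix{X}\O X$, one unfolding gives $\t \eqtyp (\O X)[\t/X] = \O\t$, which is exactly Part~\itemref{geqtyp-t1}. The general statement then follows by induction on $m$ (base case reflexivity), applying \r{{\eqtyp}\mbox{-}{\O}} to the induction hypothesis $\t \geqtyp \O^m\t$ to get $\O\t \geqtyp \O^{m{+}1}\t$, and chaining with $\t \geqtyp \O\t$ via \r{{\eqtyp}\mbox{-trans}}.

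For Part~\itemref{geqtyp-t2} I would invoke the uniqueness rule \r{{\eqtyp}\mbox{-uniq}} with $C = \O^n X$. Since $n \ge 1$, $C$ has the form $\O(\cdots)$ and is therefore proper in $X$ by clause~(2) of the definition of properness. Its one-step substitution is $C[\t/X] = \O^n\t$, and the auxiliary fact gives $\t \geqtyp \O^n\t = C[\t/X]$; hence \r{{\eqtyp}\mbox{-uniq}} yields $\t \geqtyp \fix{X}\O^n X$.

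Part~\itemref{geqtyp-O-fix} then splits into two directions. For the ``if'' direction, assume $A \geqtyp \t$; applying \r{{\eqtyp}\mbox{-}{\O}} $(n{+}1)$ times to $\t \geqtyp A$ gives $\O^{n{+}1}\t \geqtyp \O^{n{+}1}A$, and combining with the auxiliary fact $\t \geqtyp \O^{n{+}1}\t$ and the hypothesis yields the chain $A \geqtyp \t \geqtyp \O^{n{+}1}\t \geqtyp \O^{n{+}1}A$. The ``only if'' direction is the heart of the proposition and the step I expect to be the main obstacle: assume $A \geqtyp \O^{n{+}1}A$. The idea is to recognize that $A$ is a fixed point of the operator $X \mapsto \O^{n{+}1}X$. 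Taking $C = \O^{n{+}1}X$, which is proper in $X$, the hypothesis reads $A \geqtyp C[A/X]$, so \r{{\eqtyp}\mbox{-uniq}} gives $A \geqtyp \fix{X}\O^{n{+}1}X$; by Part~\itemref{geqtyp-t2} instantiated at $n{+}1$ this fixed point is $\geqtyp$-equal to $\t$, whence $A \geqtyp \t$ by transitivity. The only delicate point throughout is verifying the properness side-condition required by \r{{\eqtyp}\mbox{-uniq}}, but this is trivial here since the relevant $C$ always begins with $\O$.
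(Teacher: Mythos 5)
Your proof is correct and follows essentially the same route as the paper: Item~1 by one unfolding of $\t = \fix{X}\O X$ via \r{{\eqtyp}\mbox{-fix}}, Item~2 by \r{{\eqtyp}\mbox{-uniq}} applied to $\t \geqtyp \O^n\t$, and the ``only if'' direction of Item~3 by recognizing $A$ as a fixed point of $X \mapsto \O^{n+1}X$ and invoking \r{{\eqtyp}\mbox{-uniq}} together with Item~2. The properness side-conditions you check are exactly the ones the paper relies on.
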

\begin{proof}
Straightforward by \r{{\eqtyp}\mbox{-fix}} and \r{{\eqtyp}\mbox{-uniq}}.
\ifdetail

\paragraph{Proof of \protect\itemref{geqtyp-t1}}
\(\t = \fix{X}\O X \geqtyp \O \fix{X}\O X = \O \t\) by \r{{\eqtyp}\mbox{-fix}}.

\paragraph{Proof of \protect\itemref{geqtyp-t2}}
Since \(\t \geqtyp \O^n \t\) by Item~\itemref{geqtyp-t1},
we get \(\t \geqtyp \fix{X}\O^n X\) by \r{{\eqtyp}\mbox{-uniq}}.

\paragraph{Proof of \protect\itemref{geqtyp-O-fix}}
The ``if'' part is obvious from Item~\itemref{geqtyp-t1}.
For the ``only if'' part, if \(A \geqtyp \O^{n+1} A\),
then, \(A \geqtyp \fix{X}\O^{n+1}X \geqtyp \t\)
by \r{{\eqtyp}\mbox{-uniq}} and Item~\itemref{geqtyp-t2}.
\fi
\qed\CHECKED{2014/07/08}
\end{proof}

In the succeeding sections, we occasionally use the following proposition
in proofs without mention.

\begin{proposition}\label{geqtyp-subst}
If\/ \(A \geqtyp B\) and \(C \geqtyp D\),
then \(A[C/X] \geqtyp B[D/X]\).
\end{proposition}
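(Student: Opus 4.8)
The plan is to reduce the statement to two congruence lemmas and then combine them by transitivity. Since $\geqtyp$ denotes either $\eqtyp$ or $\peqtyp$, and all the rules of Definition~\ref{eqtyp-def} (plus \r{{\peqtyp}\mbox{-{\bf K}/{\bf L}}}) hold for whichever relation is meant, I may argue uniformly. The two lemmas are: (L1) for every type expression $E$, if $C \geqtyp D$ then $E[C/X] \geqtyp E[D/X]$; and (L2) if $A \geqtyp B$ then $A[C/X] \geqtyp B[C/X]$ for every $C$. Granting both, the proposition follows at once: by (L2), $A[C/X] \geqtyp B[C/X]$, and by (L1) applied to the expression $B$, $B[C/X] \geqtyp B[D/X]$; chaining these with \r{{\eqtyp}\mbox{-trans}} yields $A[C/X] \geqtyp B[D/X]$.

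I would prove (L2) first, by induction on the derivation of $A \geqtyp B$. The reflexivity, symmetry, transitivity cases and the congruence rules \r{{\eqtyp}\mbox{-}{\O}}, \r{{\eqtyp}\mbox{-}{\Impl}} (and \r{{\peqtyp}\mbox{-{\bf K}/{\bf L}}}) are immediate, since $[C/X]$ commutes with $\O$ and $\Impl$; \r{{\eqtyp}\mbox{-}{\Impl}\t} is preserved because $\t$ is closed, so $\t[C/X]=\t$. The delicate rules are those involving $\fix$. After $\alpha$-renaming the bound variable $Y$ so that $Y \neq X$ and $Y \notin \FTV{C}$, the standard substitution lemma $E[\fix{Y}A/Y][C/X]=E[C/X][\fix{Y}(A[C/X])/Y]$ turns the substituted instance of \r{{\eqtyp}\mbox{-fix}} into another instance of the same rule. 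For \r{{\eqtyp}\mbox{-uniq}}, with premise $A \eqtyp C'[A/X']$ and $C'$ proper in $X'$, the induction hypothesis together with the same substitution lemma gives $A[C/X] \geqtyp (C'[C/X])[A[C/X]/X']$; to re-apply \r{{\eqtyp}\mbox{-uniq}} I need $C'[C/X]$ proper in $X'$, which follows from Item~\itemref{proper-subst1}, using that $C$ is itself proper in $X'$ because $X' \notin \FTV{C} \supseteq \ETV{C}$ (Item~\itemref{etv-proper}). The conclusion then reads $A[C/X] \geqtyp (\fix{X'}C')[C/X]$.

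Next I would prove (L1) by induction on the structure of $E$. The variable case splits on whether the variable equals $X$ (the two sides are then $C$ and $D$, equivalent by hypothesis) or not (the sides coincide, so \r{{\eqtyp}\mbox{-reflex}} applies); the $\O$ and $\Impl$ cases follow from the induction hypothesis via \r{{\eqtyp}\mbox{-}{\O}} and \r{{\eqtyp}\mbox{-}{\Impl}}. The crux is $E = \fix{Y}E'$, which is exactly where one is tempted to invoke the $\fix$-congruence $P \geqtyp Q \Rightarrow \fix{Y}P \geqtyp \fix{Y}Q$; since that rule (Proposition~\ref{geqtyp-fix-congr}) is only established later, using it here would be circular, so I would derive this single instance directly. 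After $\alpha$-renaming $Y$ away from $X$, $\FTV{C}$ and $\FTV{D}$, set $P = E'[C/X]$ and $Q = E'[D/X]$, so the induction hypothesis gives $P \geqtyp Q$. Then \r{{\eqtyp}\mbox{-fix}} yields $\fix{Y}P \geqtyp P[\fix{Y}P/Y]$, and Lemma (L2) applied to $P \geqtyp Q$ with the substitution $[\fix{Y}P/Y]$ gives $P[\fix{Y}P/Y] \geqtyp Q[\fix{Y}P/Y]$; by transitivity $\fix{Y}P \geqtyp Q[\fix{Y}P/Y]$. Since $E \in \TE$ forces $E'$ proper in $Y$, and $D$ is proper in $Y$, Item~\itemref{proper-subst1} shows $Q$ is proper in $Y$, so \r{{\eqtyp}\mbox{-uniq}} delivers $\fix{Y}P \geqtyp \fix{Y}Q$, as required.

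The main obstacle is precisely this $\fix$ case of (L1): the naive argument demands the $\fix$-congruence of $\geqtyp$, which is not yet available. The device above—unfold once with \r{{\eqtyp}\mbox{-fix}}, push the equivalence through the common unfolding by the already-proved (L2), and fold back with \r{{\eqtyp}\mbox{-uniq}}—avoids the circularity. Everything else is routine, provided one is careful with the $\alpha$-renaming conventions that make substitutions commute and with the preservation of properness (Items~\itemref{proper-subst1} and \itemref{etv-proper}).
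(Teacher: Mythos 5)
Your proof is correct, and the key difficulty is handled by the same device as in the paper, but the decomposition is genuinely different. The paper proves a single generalized statement for \emph{simultaneous} substitutions (Proposition~\ref{appendix-geqtyp-subst}: \(A[\vec{C}/\vec{X}] \geqtyp B[\vec{D}/\vec{X}]\)) by induction on the derivation of \(A \geqtyp B\), supported by Lemma~\ref{appendix-eqtyp-subst-0-lemma}, which is the simultaneous-substitution form of your (L1). The point of that generalization is precisely the \(\fix{Y}\) case of the lemma: after unfolding with \r{{\eqtyp}\mbox{-fix}}, the extra substitution \([\fix{Y}A'[\vec{C}/\vec{X}]/Y]\) is absorbed by appending \(Y\) (with the \emph{same} expression on both sides) to the substitution vector, so the induction hypothesis on the height of \(A'\) applies directly, and \r{{\eqtyp}\mbox{-uniq}} folds back. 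You avoid the generalization by instead proving your (L2) first, as a self-contained induction on the derivation, and then invoking it inside the \(\fix\) case of (L1) to push \(P \geqtyp Q\) through \([\fix{Y}P/Y]\); the main statement then follows by pure transitivity, whereas the paper also needs its lemma inside the \r{{\eqtyp}\mbox{-reflex}} and \r{{\eqtyp}\mbox{-trans}} cases of its main induction. Both arguments rest on the same two ingredients — the unfold-then-fold-via-\r{{\eqtyp}\mbox{-uniq}} trick at \(\fix\), and preservation of properness via Items~\itemref{proper-subst1} and \itemref{etv-proper} — and you correctly identify and avoid the circularity with Proposition~\ref{geqtyp-fix-congr}. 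Your version costs one extra induction but keeps every lemma about a single substitution; the paper's costs the notational overhead of substitution vectors but needs only one auxiliary lemma.
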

\begin{proof}
By induction on the derivation of \(A \geqtyp B\),
and by cases on the last rule applied in it.
The readers should refer to Appendix~\ref{geqtyp-subst-sec}
for a detailed proof.
\qed
\end{proof}

Before showing the soundness of \(\eqtyp\) and \(\peqtyp\)
with respect to the intended semantics of type expressions,
we confirm that
the equality preserves whether a type expression is a {\tvariant} or not,
since some notions introduced so far, such as properness,
\(\PNOdp\), \(\PNIdp\) and \(ETV^{\pm}\), are
defined in a way where {\tvariant}s are treated as an exceptional case.

\begin{proposition}\label{geqtyp-tvariant}
Suppose that \(A \geqtyp B\). Then,
\(A\) is a {\tvariant} if and only if so is \(B\).
\end{proposition}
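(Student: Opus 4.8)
The plan is to argue by induction on the derivation of $A \geqtyp B$, taking as induction hypothesis the full biconditional (``$A'$ is a {\tvariant} iff $B'$ is'' for every premise $A' \geqtyp B'$) and proceeding by cases on the last rule applied. Keeping the biconditional as the hypothesis is essential, since the rules \r{{\eqtyp}\mbox{-symm}} and \r{{\eqtyp}\mbox{-trans}} reverse and compose the two sides; with it, those two cases together with \r{{\eqtyp}\mbox{-reflex}} are immediate.

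Most of the remaining cases reduce directly to the basic {\tvariant} lemmas. For \r{{\eqtyp}\mbox{-}{\O}} I use Proposition~\ref{O-tvariant}: $\O A$ is a {\tvariant} iff $A$ is, and likewise for $\O B$, so the hypothesis on $A \geqtyp B$ transfers. For \r{{\eqtyp}\mbox{-}{\Impl}}, Proposition~\ref{Impl-tvariant} gives that $A \Impl B$ is a {\tvariant} iff $B$ is and $C \Impl D$ iff $D$ is, so only the premise $B \geqtyp D$ is needed. For \r{{\eqtyp}\mbox{-}{\Impl}{\t}} both sides are {\tvariant}s: $\t$ is one by Definition~\ref{tvariant-def}, hence so is $A \Impl \t$ by Proposition~\ref{Impl-tvariant}. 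The modal rule \r{{\peqtyp}\mbox{-{\bf K}/{\bf L}}} is equally routine, since both $\O(A \Impl B)$ and $\O A \Impl \O B$ are a {\tvariant} exactly when $B$ is, by Propositions~\ref{O-tvariant} and \ref{Impl-tvariant}. Finally, \r{{\eqtyp}\mbox{-fix}} is precisely Item~\itemref{tvariant-fix}.

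The rule \r{{\eqtyp}\mbox{-uniq}} is the crux. Its conclusion $A \geqtyp \fix{X}C$ comes from the premise $A \geqtyp C[A/X]$ and the side condition that $C$ is proper in $X$, and the hypothesis gives ``$A$ is a {\tvariant} iff $C[A/X]$ is.'' The forward direction is clean: if $A$ is a {\tvariant}, then so is $C[A/X]$ by the hypothesis, and hence $\fix{X}C$ is a {\tvariant} by Item~\itemref{tvariant-proper-subst}, using properness of $C$ in $X$.

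The backward direction is where I expect the real work. Assuming $\fix{X}C$ is a {\tvariant}, Item~\itemref{tvariant-fix} makes $C[\fix{X}C/X]$ one, and Item~\itemref{tvariant-subst2} splits into two cases. If $C$ itself is a {\tvariant}, then $C[A/X]$ is one by Item~\itemref{tvariant-subst1}, and the hypothesis delivers $A$. The hard case is $C$ not a {\tvariant} with $X \in \PETV{\tail{C}}$, i.e. $X$ is the tail-leaf of $C$: by Item~\itemref{tail-subst} the tail of $C[A/X]$ is then $\tail{C}$ with its leaf $X$ replaced by $\tail{A}$, and this need not be a {\tvariant} just because $\fix{X}C$ is, so the {\tvariant}-calculus alone does not force $A$ to be a {\tvariant}. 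One must use the premise $A \geqtyp C[A/X]$ more deeply than through the bare biconditional. My intended route is an auxiliary rigidity lemma, provable by the same induction on derivations (its \r{{\eqtyp}\mbox{-uniq}} step using Item~\itemref{tail-proper} and \r{{\eqtyp}\mbox{-uniq}} on tails), that $\geqtyp$ commutes with the tail operator, i.e. $A \geqtyp B$ implies $\tail{A} \geqtyp \tail{B}$. Applied to the premise this yields $\tail{A} \geqtyp \tail{C}[\tail{A}/X]$ with leaf $X$, so $\tail{A}$ is equal to a wrapping of $\tail{A}$ by at least one $\O$ (the positivity of the $\O$-count being forced by properness via Item~\itemref{O-depth-proper}); combining this with Item~\itemref{geqtyp-O-fix} and Item~\itemref{tvariant-etv} should collapse $\tail{A}$, and hence $A$, to a {\tvariant}. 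Closing this subcase cleanly—either through the tail-rigidity reduction or by a simultaneous strengthening of the induction—is the principal difficulty of the proof.
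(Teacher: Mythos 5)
Your overall architecture (induction on the derivation with the biconditional as hypothesis, the routine cases dispatched by Propositions~\ref{O-tvariant}, \ref{Impl-tvariant} and \itemref{tvariant-fix}, and the forward direction of \r{{\eqtyp}\mbox{-uniq}} via \itemref{tvariant-proper-subst}) matches the paper, and you correctly locate the crux in the converse direction of \r{{\eqtyp}\mbox{-uniq}}. Your auxiliary tail-rigidity lemma is also legitimate: it is exactly Proposition~\ref{geqtyp-tail}, whose proof does not depend on the present proposition. The chain \(\tail{A} \geqtyp \tail{C}[\tail{A}/X] \geqtyp \O^{m}\tail{A}\) with \(m = m_0+\cdots+m_n \ge 1\) (discarding the vacuous \(\fx\)-binders and using properness via \itemref{tail-proper} and \itemref{O-depth-proper}), hence \(\tail{A} \geqtyp \t\) by \itemref{geqtyp-O-fix}, is all correct. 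The gap is the very last step: concluding from \(\tail{A} \geqtyp \t\) that \(\tail{A}\), hence \(A\), is a {\tvariant} is itself an instance of the statement being proved --- that \(\geqtyp\) never relates a non-{\tvariant} to a {\tvariant} --- and none of the tools you cite delivers it. Proposition~\itemref{tvariant-etv} is a fact about a single expression, not about preservation under \(\geqtyp\); the preservation results (Proposition~\ref{geqtyp-etv}, Theorem~\ref{geqtyp-t-tvariant}) are proved downstream of this proposition via Proposition~\ref{geqtyp-depth}, so invoking them here would be circular. Your reduction thus trades the original hard subcase for an equivalent one.

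The paper closes the subcase by the ``simultaneous strengthening'' you name but do not carry out: it proves, alongside the biconditional, the invariant \(\POdp(\tail{A},Z) = \POdp(\tail{B},Z)\) for all \(Z\). In the \r{{\eqtyp}\mbox{-uniq}} case, assuming \(\fix{X}C\) is a {\tvariant} while \(A\) is not, the leaf variable \(Y'\) of \(\tail{A}\) is free in it, so \(\POdp(\tail{A},Y') < \infty\); but the invariant applied to the premise \(A \geqtyp C[A/X]\), together with \itemref{tail-subst} and \itemref{depth-subst1}, forces \(\POdp(\tail{A},Y') = \POdp(\tail{A},Y') + (m_0+\cdots+m_n)\) with \(m_0+\cdots+m_n > 0\) by properness --- a numerical contradiction. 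This quantitative invariant is the missing idea: it is what converts your qualitative observation ``\(\tail{A}\) is wrapped by at least one \(\O\)'' into an actual impossibility. As written, your proof is incomplete at precisely its decisive point.
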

\begin{proof}
By simultaneous induction on the derivation of \(A \geqtyp B\)
with the claim
\[
    \POdp(\tail{A}, Z) = \POdp(\tail{B}, Z)~\,\mbox{for any}~Z,
\]
and by cases on the last rule used in the derivation.
Note that, by Propositions~\ref{tail-etv} and \ref{depth-finite-etv},
\[
    \NOdp(\tail{A}, Z) = \NOdp(\tail{B}, Z) = \infty~\,\mbox{for any}~Z.
\]
Use Propositions~\ref{tvariant-basic} and \ref{tvariant-fix}.
The cases other than \r{{\eqtyp}\mbox{-fix}} or \r{{\eqtyp}\mbox{-uniq}}
are straightforward.

In case of \r{{\eqtyp}\mbox{-fix}},
there exist some \(X\) and \(C\) such that \(A = \fix{X}C\) and
\(B = C[A/X]\). First, \(A\) is a {\tvariant} if and only if so is \(B\)
by Proposition~\ref{tvariant-fix}.
Therefore, if either \(A\) or \(B\) is a {\tvariant}, so are \(\tail{A}\)
and \(\tail{B}\) by Proposition~\ref{tail-tvariant}; and hence,
\(\POdp(\tail{A}, Z) = \POdp(\tail{B}, Z) = \infty\) by
Definition~\ref{depth-def}.
On the other hand, if neither \(A\) nor \(B\) is a {\tvariant},
then assuming \(X \not= Z\) without loss of generality,
\begin{Eqnarray*}
\POdp(\tail{B},Z) &=&
    \POdp(\tail{C}[\tail{A}/X],Z)
	    & (by Proposition~\ref{tail-subst}) \\
    &=& \min(\POdp(\tail{C},Z),
	\POdp(\tail{C},X) + \POdp(\tail{A},Z),
	\NOdp(\tail{C},X) + \NOdp(\tail{A},Z)) \mskip-500mu \\
	    &&& (by Proposition~\ref{depth-subst1}) \\
    &=& \min(\POdp(\tail{C},Z),
	\POdp(\tail{C},X) + \POdp(\tail{A},Z))
	    & (since \(\NOdp(\tail{C},X) = \NOdp(\tail{A},Z) = \infty\)) \\
    &=& \min(\POdp(\tail{C},Z),\,
	 \POdp(\tail{C},X) +
	    \min(\POdp(\tail{C}, Z),\, \NOdp(\tail{C}, X)+\NOdp(\tail{C}, Z)))
	     \mskip-500mu \\
	     &&& (by Definition~\ref{depth-def}) \\
    &=& \min(\POdp(\tail{C},Z),\,\POdp(\tail{C},X) + \POdp(\tail{C}, Z))
	    & (since \(\NOdp(\tail{C},X) = \NOdp(\tail{C},Z) = \infty\)) \\
    &=& \POdp(\tail{C},Z) \\
    &=& \min(\POdp(\tail{C},Z),\,\NOdp(\tail{C},X) + \NOdp(\tail{C}, Z))
	    & (since \(\NOdp(\tail{C},X) = \NOdp(\tail{C},Z) = \infty\)) \\
    &=& \POdp(\tail{A}, Z) & (by Definition~\ref{depth-def}).
\end{Eqnarray*}

In case of \r{{\eqtyp}\mbox{-uniq}},
there exist some \(X\) and \(C\) such that \(B = \fix{X}C\),
\(A \geqtyp C[A/X]\), and \(C\) is proper in \(X\).
By induction hypothesis, we have
\begin{eqnarray}
\label{geqtyp-tvariant-01}
    && \mbox{\(A\) is a {\tvariant} iff so is \(C[A/X]\), and} \\
\label{geqtyp-tvariant-02}
    && \POdp(\tail{A},\,Z) = \POdp(\tail{C[A/X]},\,Z)~\mbox{for any \(Z\).}
\end{eqnarray}
First, if \(A\) is a {\tvariant}, then
so is \(C[A/X]\) by (\ref{geqtyp-tvariant-01}); and therefore,
so is \(\fix{X}{C}\) by Proposition~\ref{tvariant-proper-subst}.
To show the converse,
assume that
\(\fix{X}C\) is a {\tvariant} whereas \(A\) is not.
We show that this assumption leads to a contradiction.
Let \(\tail{C} =
    \O^{m_0}\fix{X_1}\O^{m_1}\fix{X_2}\O^{m_2}\ldots\fix{X_n}\O^{m_n}Y\),
and
\(\tail{A} =
    \O^{m_0'}\fix{X_1'}\O^{m_1'}\fix{X_2'}\O^{m_2'}\ldots\fix{X_{n'}'}
    \O^{m_{n'}'}Y'\).
Note that \(Y \in \{\,X,X_1,X_2,\ldots,X_n\,\}\) and
\(Y' \not\in \{\,X_1',X_2',\ldots,X_{n'}'\,\}\)
since \(\fix{X}{C}\) is a {\tvariant} and \(A\) is not.
Furthermore, since \(A\) is not, \(\tail{C[A/X]}\)
is not a {\tvariant} either, by (\ref{geqtyp-tvariant-01}) and
Proposition~\ref{tail-tvariant}.
Therefore,
\(\tail{C}\) is not a {\tvariant} either,
by Propositions~\ref{tail-subst} and \ref{tvariant-subst1}; and hence,
\(Y = X\) and \(Y \not= X_i\) for every \(i\).
Note also that \(m_0+m_1+\ldots+m_n > 0\) by Propositions~\ref{tail-proper}
and \ref{O-depth-proper} since \(C\) is proper in \(X\).
Hence,
\begin{Eqnarray*}
\POdp(\tail{A},\, Y') &=& \POdp(\tail{C[A/X]},\, Y')
	& (by (\ref{geqtyp-tvariant-02})) \\
    &=& \POdp(\tail{C}[\tail{A}/X],\, Y')
	& (by Proposition~\ref{tail-subst}) \\
    &=& \POdp(\O^{m_0}\fix{X_1}\O^{m_1}\fix{X_2}\O^{m_2}\ldots\fix{X_n}
	    \O^{m_n}\tail{A},\, Y') \ifnarrow\hskip-70pt\fi \\
    &=& \POdp(\tail{A},\, Y') + m_0+m_1+\ldots+m_n \ifnarrow\hskip-70pt\fi
	& (by Definition~\ref{depth-def}).
\end{Eqnarray*}
However, this is impossible
since \(\POdp(\tail{A},Y') = m_0'+m_1'+\ldots+m_{n'}'< \infty\)
and \(m_0+m_1+\ldots+m_n > 0\).
This completes the proof of the first part.
For the second part, we assume
that neither \(\tail{A}\) nor \(\tail{(\fix{X}C)}\) is a {\tvariant},
since \(\POdp(\tail{A}, Z) = \POdp(\tail{(\fix{X}C)}, Z) = \infty\)
by Definition~\ref{depth-def} if they are.
Note that \(\tail{C}\) is also proper in \(X\) by Proposition~\ref{tail-proper};
and hence, \(\Odp(\tail{C},\,X) > 0\) by Proposition~\ref{O-depth-proper}.
Then, assuming that \(X \not= Z\) without loss of generality,
\begin{Eqnarray*}
    \POdp(\tail{A}, Z) &=& \POdp(\tail{C[A/X]}, Z)
	    & (by (\ref{geqtyp-tvariant-02})) \\
    &=& \POdp(\tail{C}[\tail{A}/X], Z) & (by Proposition~\ref{tail-subst}) \\
    &=& \min(\POdp(\tail{C},Z),\, \POdp(\tail{C},X) + \POdp(\tail{A},Z),
	\NOdp(\tail{C},X) + \NOdp(\tail{A},Z)) \mskip-500mu \\
	    &&& (by Proposition~\ref{depth-subst1}) \\
    &=& \min(\POdp(\tail{C},Z),\, \POdp(\tail{C},X) + \POdp(\tail{A},Z))
	    & (since \(\NOdp(\tail{C},X) = \NOdp(\tail{A},Z) = \infty\)) \\
    &=& \POdp(\tail{C},Z)
	    & \hskip-300pt\hfill (since \(\POdp(\tail{C}, X) > 0\);
		valid even if \(\POdp(\tail{A},Z) = \infty\)) \\
    &=& \min(\POdp(\tail{C},Z),\,\NOdp(\tail{C},X) + \NOdp(\tail{C}, Z))
	    & (since \(\NOdp(\tail{C},X) = \NOdp(\tail{C},Z) = \infty\)) \\
    &=& \POdp(\fix{X}\tail{C},Z) &
	    (by Definition~\ref{depth-def}) \\
    &=& \POdp(\tail{B},Z) &
	    (by Definition~\ref{tail-def}).
\end{Eqnarray*}
\Qed\CHECKED{2014/07/12}
\end{proof}

\Subsection{Soundness of the derivation rules for equality}

We now proceed to show that the equivalence relations \(\eqtyp\) and
\(\peqtyp\) on type expressions well respect the semantics of types
according to their intended frame classes.
The last one step before the proof is to prove the following lemma,
which says that if a type expression is proper in a type variable,
then the interpretation of the type expression in a possible world
does not depend on the interpretation of the type variable in that world.

\begin{lemma}\label{rlz-proper-subst-lemma}
Let \(\pair{\W}{\acc}\) be a well-founded frame,
\(A\) a type expression, and \(p \in \W\).
Let \(\tenv\) and \(\tenv'\) be hereditary type environments
such that
\(\tenv(X)_q = \tenv'(X)_q\)\/ for every \(X \in \TV\) and \(q \in \W\)
such that \(p \acc q\).
If\/ for every \(X\), either (a) \(A\) is proper in \(X\),
or (b) \(\tenv(X)_p = \tenv'(X)_p\),
then \(\I{A}^{\tenv}_p = \I{A}^{\tenv'}_p\).
\end{lemma}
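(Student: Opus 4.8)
The plan is to argue by induction on the lexicographic ordering of $\pair{p}{r(A)}$, precisely the ordering used to define $\I{A}^\tenv_p$ in Definition~\ref{rlz-def}: worlds are ordered so that those \emph{strictly accessible} from $p$ count as smaller (this is well-founded because $\acc$ is conversely well-founded), and ties are broken by the rank $r(A)$. Before starting I would pass, via Proposition~\ref{semantics-transitive}, from $\acc$ to its transitive closure; by Item~\ref{rlz-transitive} this leaves every value $\I{A}^\tenv_p$ unchanged, while arranging that for $q \ne p$ both $p \acc q$ and $p \tacc q$ describe exactly the worlds reachable from $p$. At every such world the hypothesis then gives $\tenv(Z)_q = \tenv'(Z)_q$ for \emph{all} $Z$, which is the fact that drives the guarded cases. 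If $A$ is a {\tvariant}, both sides equal $\V$ and there is nothing to prove, so I assume henceforth that $A$ is not a {\tvariant} and split on its shape.

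The variable and recursion cases are immediate. If $A = X$, then $X$ is not proper in $X$ by Definition~\ref{proper-def}, so clause~(b) forces $\tenv(X)_p = \tenv'(X)_p$ and hence $\I{X}^\tenv_p = \tenv(X)_p = \tenv'(X)_p = \I{X}^{\tenv'}_p$. If $A = \fix{X}B$, then $\I{\fix{X}B}^\tenv_p = \I{B[\fix{X}B/X]}^\tenv_p$ with $r(B[\fix{X}B/X]) < r(\fix{X}B)$ by Proposition~\ref{rank-fix}, so I would apply the induction hypothesis to $B[\fix{X}B/X]$ at the \emph{same} world $p$. To license this I must check that the ``proper-or-(b)'' hypothesis transfers: for $Z = X$ the variable is no longer free after the substitution, so $B[\fix{X}B/X]$ is proper in $X$; for $Z \ne X$, if $\fix{X}B$ is proper in $Z$ then so are $B$ and $\fix{X}B$, and Proposition~\ref{proper-subst1} makes $B[\fix{X}B/X]$ proper in $Z$, whereas if $\fix{X}B$ is not proper in $Z$ then clause~(b) already supplies $\tenv(Z)_p = \tenv'(Z)_p$.

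The modal and arrow cases are the heart of the matter, and both reduce to comparing the interpretations of the immediate subexpressions at the worlds that the relevant clause of Definition~\ref{rlz-def} consults. For $A = \O B$ these are the worlds $q$ with $p \acc q$, all strictly above $p$; there $\tenv$ and $\tenv'$ agree on every variable, so clause~(b) holds for all $Z$ and the induction hypothesis at the strictly smaller world $q$ yields $\I{B}^\tenv_q = \I{B}^{\tenv'}_q$, whence $\I{\O B}^\tenv_p = \I{\O B}^{\tenv'}_p$. For $A = B \Impl C$ the relevant worlds are those with $p \tacc q$: at $q = p$ I use that properness descends to subexpressions — since $A$ is not a {\tvariant}, $A$ proper in $Z$ forces both $B$ and $C$ proper in $Z$, so each inherits the hypothesis at $p$ — and apply the induction hypothesis at lower rank; at each $q$ strictly above $p$ the full agreement of $\tenv$ and $\tenv'$ again gives clause~(b) for every variable, and the induction hypothesis at the smaller world $q$ finishes. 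I expect the genuine obstacle to be exactly the arrow case over worlds strictly above $p$: the clause for $\Impl$ reaches, through $\tacc$, arbitrarily high worlds, so the one-step agreement attached to $p \acc q$ is on its own too weak to control it. This is what the reduction to the transitive closure repairs, since it makes reachability coincide with one-step accessibility and thereby propagates the agreement hypothesis to every world that the interpretation at $p$ can see; checking that this reduction is sound is the one non-routine structural point, and it rests on Proposition~\ref{semantics-transitive}.
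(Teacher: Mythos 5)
Your induction, case split, and handling of the variable, $\fix{X}B$, $\O$, and $\Impl$ cases all coincide with the paper's own proof; the only substantive addition is the preliminary passage to the transitive closure, and that is where the argument fails. Proposition~\ref{semantics-transitive} does let you replace $\acc$ by its transitive closure $\pacc$ without changing any value $\I{A}^\tenv_p$, but it does not upgrade the \emph{hypothesis} of the lemma: you are given $\tenv(X)_q = \tenv'(X)_q$ only at the worlds $q$ with $p \acc q$ in the original frame, whereas the lemma you would then invoke over $\pair{\W}{\pacc}$ demands agreement at every $q$ with $p \pacc q$ --- a strictly stronger condition that the given data do not supply. Your sentence ``at every such world the hypothesis then gives $\tenv(Z)_q = \tenv'(Z)_q$ for all $Z$'' silently converts one-step agreement into agreement on the whole cone above $p$, and that conversion is exactly what is missing.

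The gap cannot be closed by any reduction, because under the literal one-step reading the statement itself fails: take $\W = \{\,p,\,q,\,r\,\}$ with $p \acc q$, $q \acc r$ and nothing else, let $A = \O\O X$ (which is proper in $X$), and choose hereditary $\tenv$, $\tenv'$ agreeing at $p$ and $q$ but with $\tenv(X)_r = \{\} \ne \V = \tenv'(X)_r$; then $\I{\O\O X}^\tenv_p = \tenv(X)_r \ne \tenv'(X)_r = \I{\O\O X}^{\tenv'}_p$. The lemma is true --- and is the form in which it is actually applied in the \r{{\eqtyp}\mbox{-uniq}} and \r{{\rsubtyp}\mbox{-}{\fx}} cases --- only when the agreement hypothesis is read as holding at \emph{every} world strictly below $p$ in the well-founded ordering (every $q$ with $p \pacc q$). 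Under that reading no preprocessing is needed: in the $\O B$ case, and in the $B \Impl C$ case at worlds $q$ with $p \pacc q$, clause (b) holds at $q$ for all variables and the outer agreement hypothesis is inherited at $q$, so the induction hypothesis applies directly, exactly as in the paper's proof. You correctly located the delicate point (the guarded clauses reach beyond the one-step successors of $p$), but the transitive-closure manoeuvre is not a valid repair; what is needed is the stronger, transitively closed form of the agreement hypothesis itself.
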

\begin{proof}
By induction on the lexicographic ordering of \(\pair{p}{r(A)}\).
Suppose that (a) or (b) holds for every \(X \in \TV\).
If \(A\) is a {\tvariant}, then
\(\I{A}^{\tenv}_p = \I{A}^{\tenv'}_p = \V\) by Definition~\ref{rlz-def}.
Therefore, we show that
\(\I{A}^{\tenv}_p = \I{A}^{\tenv'}_p\) by cases on the form of \(A\)
assuming that \(A\) is not a {\tvariant}.

\Case{\(A = Y\) for some \(Y\).}
In this case, \(A\) is not proper in \(Y\).
Hence, \(\I{A}^\tenv_p = \tenv(Y)_p = \tenv'(Y)_p = \I{A}^{\tenv'}_p\)
from (b).

\Case{\(A = \O B\) for some \(B\).}
Since \(\tenv(X)_q = \tenv'(X)_q~\) for every \(X\)
and \(q \opacc p\),
we get \(\I{B}^{\tenv}_q = \I{B}^{\tenv'}_q\) for every \(q \opacc p\)
by induction hypothesis.
Therefore,
\(\I{\O B}^{\tenv}_p
    = \zfset{u}{u \in \I{B}^{\tenv}_q~~\mbox{for every}~q \opacc p}
    = \zfset{u}{u \in \I{B}^{\tenv'}_q~~\mbox{for every}~q \opacc p}
    = \I{\O B}^{\tenv'}_p\)
by Definition~\ref{rlz-def}.

\Case{\(A = B \Impl C\) for some \(B\) and \(C\).}
Note that \(C\) is not a {\tvariant} since \(A\) is not.
Therefore, \(r(B)\), \(r(C) < r(A)\),
and (a) implies both \(B\) and \(C\) are
also proper in \(X\).
Hence, by induction hypothesis, we get
\(\I{B}^{\tenv}_q = \I{B}^{\tenv'}_q\) and
\(\I{C}^{\tenv}_q = \I{C}^{\tenv'}_q\) for every \(q\)
such that \(p \tacc q\).
Therefore, \(\I{B \Impl C}^{\tenv}_p = \I{B \Impl C}^{\tenv'}_p\)
by Definition~\ref{rlz-def}.

\Case{\(A = \fix{Y}C\) for some \(Y\) and \(C\).}
We can assume that \(X \not= Y\) without loss of generality.
Note that \(r(C[\fix{Y}C/Y]) < r(\fix{Y}C)\)
by Proposition~\ref{rank-fix}, and that
\(C[\fix{Y}C/Y]\) is proper in \(X\) if so is \(A\),
by Definition~\ref{proper-def} and Proposition~\ref{proper-subst1}.
Hence,
\begin{Eqnarray*}
\I{\fix{Y}C}^{\tenv}_p
    &=& \I{C[\fix{Y}C/Y]}^{\tenv}_p & (by Definition~\ref{rlz-def}) \\
    &=& \I{C[\fix{Y}C/Y]}^{\tenv'}_p & (by induction hypothesis) \\
    &=& \I{\fix{Y}C}^{\tenv'}_p & (by Definition~\ref{rlz-def}).
\end{Eqnarray*}
\Qed\CHECKED{2014/05/01}
\end{proof}

\begin{theorem}[Soundness of \(\eqtyp\)]
    \ilabel{eqtyp-soundness}{soundness!equal@$\protect\eqtyp$}
Let \(\pair{\W}{\acc}\) be a well-founded frame,
and consider the interpretation in the frame.
If\/ \(A \eqtyp B \), then \(\I{A}^\tenv = \I{B}^\tenv\)
for any type environment \(\tenv\).
\end{theorem}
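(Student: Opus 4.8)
The plan is to proceed by induction on the derivation of \(A \eqtyp B\), arguing by cases on the last rule applied; throughout I fix an arbitrary well-founded frame \(\pair{\W}{\acc}\) and a hereditary type environment \(\tenv\), and show \(\I{A}^\tenv_p = \I{B}^\tenv_p\) for every \(p \in \W\).

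Almost all of the rules reduce directly to the induction hypothesis together with Proposition~\ref{rlz-another-def}, which guarantees that the defining clauses of the semantics hold for \emph{every} type expression, not merely for the non-\tvariant\ ones. The cases \r{{\eqtyp}\mbox{-reflex}}, \r{{\eqtyp}\mbox{-symm}} and \r{{\eqtyp}\mbox{-trans}} are immediate; for \r{{\eqtyp}\mbox{-}{\O}} and \r{{\eqtyp}\mbox{-}{\Impl}} I unfold the clause defining \(\I{\O A}^\tenv_p\) (respectively \(\I{A \Impl B}^\tenv_p\)) and substitute the equal interpretations delivered by the induction hypothesis. For \r{{\eqtyp}\mbox{-}{\Impl}{\t}}, note that \(\t\) is a \tvariant, hence so is \(A \Impl \t\) by Proposition~\ref{Impl-tvariant}, so both sides interpret as \(\V\) by Definition~\ref{rlz-def}. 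Finally, \r{{\eqtyp}\mbox{-fix}} is exactly the last equation of Proposition~\ref{rlz-another-def}.

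The only substantial case is \r{{\eqtyp}\mbox{-uniq}}, where \(B = \fix{X}C\), the expression \(C\) is proper in \(X\), and the induction hypothesis gives \(\I{A}^\tenv = \I{C[A/X]}^\tenv\). Writing \(\sigma = \I{A}^\tenv\) and \(\tau = \I{\fix{X}C}^\tenv\) for the two mappings, which are hereditary by Proposition~\ref{rlz-hereditary}, Proposition~\ref{rlz-subst-env} together with \r{{\eqtyp}\mbox{-fix}} turns the relevant equations into \(\sigma_p = \I{C}^{\tenv[\sigma/X]}_p\) and \(\tau_p = \I{C}^{\tenv[\tau/X]}_p\) for all \(p\). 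I then establish \(\sigma_p = \tau_p\) by well-founded induction on \(p\) along the transitive closure \(\pacc\) of \(\acc\), which is well-founded since \(\acc\) is. Assuming \(\sigma_q = \tau_q\) for every \(q\) with \(p \pacc q\), the environments \(\tenv[\sigma/X]\) and \(\tenv[\tau/X]\) agree on every type variable at every world \(q\) with \(p \acc q\), and can differ only at \(X\) in the world \(p\) itself; since \(C\) is proper in \(X\), Lemma~\ref{rlz-proper-subst-lemma} yields \(\I{C}^{\tenv[\sigma/X]}_p = \I{C}^{\tenv[\tau/X]}_p\), i.e.\ \(\sigma_p = \tau_p\), closing the inner induction.

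I expect \r{{\eqtyp}\mbox{-uniq}} to be the main obstacle, since it expresses uniqueness of the fixed point of the semantic operator induced by \(C\), and the delicate point is matching the hypotheses of Lemma~\ref{rlz-proper-subst-lemma}. Properness of \(C\) in \(X\) is exactly what makes the interpretation at the current world \(p\) insensitive to the value of \(X\) at \(p\), while the well-founded world-induction supplies agreement of \(X\)'s interpretation at all strictly accessible worlds; everything else then collapses to the semantic clauses and the substitution identity of Proposition~\ref{rlz-subst-env}.
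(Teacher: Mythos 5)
Your proposal is correct and follows essentially the same route as the paper: induction on the derivation, with all rules except \r{{\eqtyp}\mbox{-uniq}} dispatched by the semantic clauses, and the \r{{\eqtyp}\mbox{-uniq}} case handled by an inner well-founded induction on worlds that combines Proposition~\ref{rlz-subst-env} with Lemma~\ref{rlz-proper-subst-lemma}, exactly as in the paper (the paper merely organizes the easy cases slightly differently, quotienting out {\tvariant}s up front via Proposition~\ref{geqtyp-tvariant} instead of treating \r{{\eqtyp}\mbox{-}{\Impl}{\t}} directly).
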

\begin{proof}
By induction on the derivation of \(A \eqtyp B\),
and by cases on the last rule in the derivation.
Suppose that \(A \eqtyp B \).
If either \(A\) or \(B\) is a {\tvariant}, then so are both by
Proposition~\ref{geqtyp-tvariant}; and hence,
\(\I{A}^{\tenv}_p = \I{B}^{\tenv}_p = \V\) for every \(p\)
by Definition~\ref{rlz-def}.
Therefore, we show that
\(\I{A}^{\tenv} = \I{B}^{\tenv}\) assuming that neither is a {\tvariant}.

\Cases{\r{{\eqtyp}\mbox{-reflex}},
    \r{{\eqtyp}\mbox{-symm}} and \r{{\eqtyp}\mbox{-trans}}.}
Trivial.

\Case{\r{{\eqtyp}\mbox{-}{\O}}.}
In this case, there exist some \(A'\) and \(B'\) such that
\(A = \O A'\), \(B = \O B'\) and \(A' \eqtyp B'\).
We have \(\I{A'}^\tenv_q = \I{B'}^\tenv_q\) for every \(q \in \W\)
by induction hypothesis.
Therefore,
\(\I{\O A'}^\tenv_p
    = \zfset{u}{u \in \I{A'}^\tenv_q~~\mbox{for every}~q \opacc p}
    = \zfset{u}{u \in \I{B'}^\tenv_q~~\mbox{for every}~q \opacc p}
    = \I{\O B'}^\tenv_p\)
by Definition~\ref{rlz-def}.

\Case{\r{{\eqtyp}\mbox{-}{\Impl}}.}
Similar to the previous case.

\Case{\r{{\eqtyp}\mbox{-}{\Impl}{\t}}.}
Impossible because we assumed that neither \(A\) nor \(B\) is a {\tvariant}.

\Case{\r{{\eqtyp}\mbox{-fix}}.}
Obvious from Definition~\ref{rlz-def}.

\Case{\r{{\eqtyp}\mbox{-uniq}}.}
There exist some \(X\) and \(C\) such that
\(B = \fix{X}C\), \(A \eqtyp C[A/X]\) and \(C\) is proper in \(X\).
By induction hypothesis,
\begin{eqnarray}\label{eqtyp-soundness-01}
    \I{A}^{\tenv'}_p &=& \I{C[A/X]}^{\tenv'}_p~
    \mbox{for every \(\tenv'\)}.
\end{eqnarray}
We show that \(\I{A}^\tenv_p = \I{\fix{X}C}^\tenv_p~\)
for every \(p \in \W\) by induction on \(p\).
The induction hypothesis in this induction is
\begin{eqnarray}
\label{eqtyp-soundness-02}
    \I{A}^\tenv_q = \I{\fix{X}C}^\tenv_q
    ~~\mbox{for every \(q \opacc p\)}.
\end{eqnarray}
Therefore,
\begin{Eqnarray*}
\I{A}^\tenv_p
    &=& \I{C[A/X]}^\tenv_p & (by (\ref{eqtyp-soundness-01})) \\
    &=& \I{C}^{\tenv[\I{A}^\tenv/X]}_p
	& (by Proposition~\ref{rlz-subst-env}) \\
    &=& \I{C}^{\tenv[\I{\fix{X}C}^\tenv/X]}_p
	& (by (\ref{eqtyp-soundness-02})
	    and Lemma~\ref{rlz-proper-subst-lemma}) \\
    &=& \I{C[\fix{X}C/X]}^\tenv_p
	& (by Proposition~\ref{rlz-subst-env}) \\
    &=& \I{\fix{X}C}^\tenv_p
	& (by Definition~\ref{rlz-def}).
\end{Eqnarray*}
\Qed\CHECKED{2014/05/01}
\end{proof}

\begin{theorem}[Soundness of \(\peqtyp\) with respect to {\lA}-frames]
    \ilabel{peqtyp-soundness}{soundness!equal@$\protect\peqtyp$}
Let \(\pair{\W}{\acc}\) be a {\lA}-frame,
\(\tenv\) a hereditary type environment.
If\/ \(A \peqtyp B \), then \(\I{A}^\tenv = \I{B}^\tenv\).
\end{theorem}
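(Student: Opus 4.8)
The plan is to mimic the proof of Theorem~\ref{eqtyp-soundness} (soundness of \(\eqtyp\)), proceeding by induction on the derivation of \(A \peqtyp B\) and by cases on the last rule applied. Since every {\lA}-frame is in particular a well-founded frame, Theorem~\ref{eqtyp-soundness} already establishes, over the given frame \(\pair{\W}{\acc}\), the soundness of every rule that \(\peqtyp\) inherits from \(\eqtyp\). Consequently the cases for \r{{\eqtyp}\mbox{-reflex}}, \r{{\eqtyp}\mbox{-symm}}, \r{{\eqtyp}\mbox{-trans}}, \r{{\eqtyp}\mbox{-}{\O}}, \r{{\eqtyp}\mbox{-}{\Impl}}, \r{{\eqtyp}\mbox{-}{\Impl}{\t}}, \r{{\eqtyp}\mbox{-fix}} and \r{{\eqtyp}\mbox{-uniq}} are verbatim those of that proof; in particular, the nested induction on \(p\) required for \r{{\eqtyp}\mbox{-uniq}}, which relies on Lemma~\ref{rlz-proper-subst-lemma} and Proposition~\ref{rlz-subst-env}, carries over without change.

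As in the \(\eqtyp\) case, I would first dispose of {\tvariant}s. If either \(A\) or \(B\) is a {\tvariant}, then so is the other by Proposition~\ref{geqtyp-tvariant} (which, since \(\geqtyp\) denotes either \(\eqtyp\) or \(\peqtyp\), applies equally to \(\peqtyp\)), and then \(\I{A}^\tenv_p = \I{B}^\tenv_p = \V\) for every \(p \in \W\) by Definition~\ref{rlz-def}. Hence the induction need only treat the case in which neither side is a {\tvariant}.

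The single genuinely new case is the additional rule \r{{\peqtyp}\mbox{-{\bf K}/{\bf L}}}, in which \(A = \O(C \Impl D)\) and \(B = \O C \Impl \O D\) for some type expressions \(C\) and \(D\). Here the required equality \(\I{\O(C \Impl D)}^\tenv_p = \I{\O C \Impl \O D}^\tenv_p\) for every \(p \in \W\) is precisely the content of Theorem~\ref{peqtyp-KL-soundness}, which is applicable exactly because the accessibility relation of an {\lA}-frame is locally linear (Definition~\ref{lA-frame-def}).

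I do not expect any serious obstacle, since the substance of the argument was already carried out in Theorems~\ref{eqtyp-soundness} and \ref{peqtyp-KL-soundness}; the work here is only to assemble these two results within one induction. The point worth emphasising is why the hypothesis that \(\pair{\W}{\acc}\) be an {\lA}-frame, rather than merely a well-founded frame, is indispensable: local linearity is needed solely to validate the K/L rule, and the ``only if'' direction of Theorem~\ref{peqtyp-KL-soundness} shows that this rule genuinely fails over well-founded frames whose accessibility relation is not locally linear. Thus local linearity is exactly the frame condition that renders \(\peqtyp\) sound, and the induction yields \(\I{A}^\tenv = \I{B}^\tenv\) as required.
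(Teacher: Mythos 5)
Your proposal is correct and matches the paper's own proof, which likewise observes that the argument is parallel to that of Theorem~\ref{eqtyp-soundness} with the single additional case \r{{\peqtyp}\mbox{-{\bf K}/{\bf L}}} discharged by Theorem~\ref{peqtyp-KL-soundness}. Your additional remarks on why local linearity is exactly the needed frame condition are accurate but not part of the paper's proof.
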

\begin{proof}
The proof is quite parallel to the one of Theorem~\ref{eqtyp-soundness}.
The only additional task is
to check the rule \r{{\peqtyp}\mbox{-{\bf K}/{\bf L}}},
which is straightforward from Theorem~\ref{peqtyp-KL-soundness}.
\qed\CHECKED{2014/07/12}
\end{proof}

\Subsection{Properties of type expressions preserved by the equality}

At the end of Subsection~\ref{geqtyp-basic-sec}, we showed that
the equality preserves whether a type expression is a {\tvariant} or not.
In this subsection,
we shall confirm the fact that the equivalence relations also preserve
other basic properties of type expressions such as
properness, \(\PNOdp\), \(\PNIdp\) and \(ETV^{\pm}\).
Most of them are proved by induction on the derivation of the equality.

\begin{proposition}\label{geqtyp-tail}
\(A \geqtyp B\) implies \(\tail{A} \geqtyp \tail{B}\).
\end{proposition}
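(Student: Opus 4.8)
The plan is to argue by induction on the derivation of $A \geqtyp B$, with a case analysis on the last rule applied; since $\geqtyp$ denotes either $\eqtyp$ or $\peqtyp$ and the $\peqtyp$-rules extend the $\eqtyp$-rules, every rule I invoke is available in both systems. The congruence structure of $\geqtyp$ handles most cases at once. For \r{{\eqtyp}\mbox{-reflex}}, \r{{\eqtyp}\mbox{-symm}} and \r{{\eqtyp}\mbox{-trans}}, the claim follows immediately by applying the same rule to the induction hypotheses, since $\tail{\cdot}$ is a function on type expressions. For \r{{\eqtyp}\mbox{-}{\O}}, where $A = \O A'$, $B = \O B'$ and $A' \geqtyp B'$, I would use $\tail{(\O A')} = \O\tail{A'}$ together with the induction hypothesis $\tail{A'} \geqtyp \tail{B'}$ and \r{{\eqtyp}\mbox{-}{\O}}. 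For \r{{\eqtyp}\mbox{-}{\Impl}}, where $A = A' \Impl B'$ and $B = C' \Impl D'$ with $B' \geqtyp D'$, I would observe that $\tail{(A' \Impl B')} = \tail{B'}$ and $\tail{(C' \Impl D')} = \tail{D'}$, so that the induction hypothesis applied to $B' \geqtyp D'$ is already exactly $\tail{A} \geqtyp \tail{B}$ (the left premise $A' \geqtyp C'$ is not even needed).

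Two further rules are \emph{absorbing}, in the sense that the two tails coincide syntactically and the conclusion follows by \r{{\eqtyp}\mbox{-reflex}}. For \r{{\eqtyp}\mbox{-}{\Impl}{\t}}, with $A = A' \Impl \t$ and $B = \t$, both $\tail{A} = \tail{\t}$ and $\tail{B} = \tail{\t}$, and $\tail{\t} = \fix{X}\O\tail{X} = \fix{X}\O X = \t$. For \r{{\peqtyp}\mbox{-{\bf K}/{\bf L}}}, with $A = \O(A' \Impl B')$ and $B = \O A' \Impl \O B'$, I would compute $\tail{A} = \O\tail{(A' \Impl B')} = \O\tail{B'}$ and $\tail{B} = \tail{(\O B')} = \O\tail{B'}$, so again $\tail{A} = \tail{B}$.

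The substantive cases are \r{{\eqtyp}\mbox{-fix}} and \r{{\eqtyp}\mbox{-uniq}}, and both rest on the facts that $\tail{\cdot}$ commutes with substitution and preserves properness. For \r{{\eqtyp}\mbox{-fix}}, where $A = \fix{X}C$ and $B = C[\fix{X}C/X]$, I would rewrite $\tail{A} = \fix{X}\tail{C}$ and, by Item~\itemref{tail-subst} of Proposition~\ref{tail-basic}, $\tail{B} = \tail{C}[\fix{X}\tail{C}/X]$; since $C$ is proper in $X$ (as $\fix{X}C \in \TE$), so is $\tail{C}$ by Item~\itemref{tail-proper}, and $\tail{C} \in \TE$ by Item~\itemref{tail-te}, so $\tail{A} \geqtyp \tail{B}$ is precisely the instance of \r{{\eqtyp}\mbox{-fix}} for $\tail{C}$. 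For \r{{\eqtyp}\mbox{-uniq}}, where $B = \fix{X}C$ with $A \geqtyp C[A/X]$ and $C$ proper in $X$, the induction hypothesis gives $\tail{A} \geqtyp \tail{(C[A/X])} = \tail{C}[\tail{A}/X]$, again by Item~\itemref{tail-subst}; since $\tail{C}$ is proper in $X$ by Item~\itemref{tail-proper}, applying \r{{\eqtyp}\mbox{-uniq}} yields $\tail{A} \geqtyp \fix{X}\tail{C} = \tail{B}$. I expect these last two cases to be the only real work: everything else is bookkeeping with the defining clauses of $\tail{\cdot}$, whereas here one must match the unfolding and uniqueness rules against the substituted tail via Item~\itemref{tail-subst} and check that the properness side-condition survives the passage from $C$ to $\tail{C}$.
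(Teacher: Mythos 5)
Your proof is correct and follows essentially the same route as the paper's: induction on the derivation with a case split on the last rule, the congruence and absorbing cases handled by the defining clauses of \(\tail{\cdot}\), and the \r{{\eqtyp}\mbox{-fix}} and \r{{\eqtyp}\mbox{-uniq}} cases discharged via Items~\itemref{tail-subst} and \itemref{tail-proper} of Proposition~\ref{tail-basic}. The only difference is presentational: you make explicit the well-formedness and properness checks that the paper leaves implicit in the fix case.
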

\begin{proof}
By induction on the derivation of \(A \geqtyp B\),
and by cases on the last rule used in the derivation.
We employ Proposition~\ref{tail-basic} in the case that last rule is
\r{{\eqtyp}\mbox{-fix}} or \r{{\eqtyp}\mbox{-uniq}}.
\ifdetail
Suppose that \(A \geqtyp B\).

\Cases{\r{{\eqtyp}\mbox{-reflex}},
    \r{{\eqtyp}\mbox{-symm}} and \r{{\eqtyp}\mbox{-trans}}.}
Trivial.

\Case{\r{{\eqtyp}\mbox{-}{\O}}.}
In this case, \(A = \O A'\), \(B = \O B'\) and \(A' \eqtyp B'\)
for some \(A'\) and \(B'\).
By induction hypothesis, \(\tail{A'} \geqtyp \tail{B'}\).
Hence, \(\tail{A} \geqtyp \tail{B}\)
since \(\tail{A} = \O \tail{A'}\) and \(\tail{B} = \O \tail{B'}\)
by Definition~\ref{tail-def}.

\Case{\r{{\eqtyp}\mbox{-}{\Impl}}.}
Similar to the previous case.

\Case{\r{{\eqtyp}\mbox{-}{\Impl}{\t}}.}
In this case, \(A = A' \Impl \t\) and \(B = \t\) for some \(A'\).
Hence, \(\tail{A} = \tail{t}\) by Definition~\ref{tail-def}.

\Case{\r{{\eqtyp}\mbox{-fix}}.}
\(A = \fix{X}C\) and \(B = C[\fix{X}C/X]\) for some \(X\) and \(C\).
Hence,
\begin{Eqnarray*}
    \tail{A} &=& \fix{X}\tail{C} & (by Definition~\ref{tail-def}) \\
	&\geqtyp& \tail{C}[\fix{X}\tail{C}/X] & (by \r{{\eqtyp}\mbox{-fix}}) \\
	&\geqtyp& \tail{C}[\tail{(\fix{X}C)}/X]
	    & (by Definition~\ref{tail-def}) \\
	&=& \tail{C[\fix{X}C/X]} & (by Proposition~\ref{tail-subst}) \\
	&=& \tail{B}.
\end{Eqnarray*}

\Case{\r{{\eqtyp}\mbox{-uniq}}.}
\(B = \fix{X}C\) for some \(X\) and \(C\) such that \(A \geqtyp C[A/X]\) and
\(C\) is proper in \(X\).
Note that \(\tail{C}\) is also proper in \(X\)
by Proposition~\ref{tail-proper}.
Hence,
\begin{Eqnarray*}
    \tail{A} &\geqtyp& \tail{C[A/X]} & (by induction hypothesis) \\
	&=& \tail{C}[\tail{A}/X] & (by Proposition~\ref{tail-subst}) \\
	&=& \fix{X}\tail{C} & (by \r{{\eqtyp}\mbox{-uniq}}) \\
	&=& \tail{(\fix{X}C)} & (by Definition~\ref{tail-def}) \\
	&=& \tail{B}.
\end{Eqnarray*}

\Case{\r{{\peqtyp}\mbox{-{\bf K}/{\bf L}}}.}
This case only applies to \(\peqtyp\).
In this case, \(A = \O(C \Impl D)\) and \(B = \O C \Impl \O D\) for some
\(C\) and \(D\).
Hence, \(\tail{A} = \O \tail{D} = \tail{B}\).
\fi 
\qed\CHECKED{2014/07/08}
\end{proof}

\begin{proposition}\label{geqtyp-depth}
    If\/ \(A \geqtyp B\), then \(\PNOdp(A,\,X) = \PNOdp(B,\,X)\) and
    \(\PNIdp(A,\,X) = \PNIdp(B,\,X)\).
\end{proposition}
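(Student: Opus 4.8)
The plan is to argue by induction on the derivation of \(A \geqtyp B\), with a case analysis on the last rule applied, and to treat all four assertions at once by writing \(\Dp\) for either \(\Odp\) or \(\Idp\) and proving \(\PNDp(A,\,X) = \PNDp(B,\,X)\) in both polarities. Before the case analysis I would dispose of \tvariant s once and for all: by Proposition~\ref{geqtyp-tvariant}, \(A\) is a \tvariant{} if and only if \(B\) is, and if they are, then \(\PNDp(A,\,X) = \PNDp(B,\,X) = \infty\) directly from the first clause of Definition~\ref{depth-def}. Hence in every remaining case I may assume that neither \(A\) nor \(B\) is a \tvariant, which is exactly what licenses the non-trivial recursive clauses of the depth definition.

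The rules \r{{\eqtyp}\mbox{-reflex}}, \r{{\eqtyp}\mbox{-symm}} and \r{{\eqtyp}\mbox{-trans}} are immediate, since equality of depths is itself reflexive, symmetric and transitive. For the congruence rules \r{{\eqtyp}\mbox{-}{\O}} and \r{{\eqtyp}\mbox{-}{\Impl}} I would pass non-\tvariant ness to the immediate subexpressions using Propositions~\ref{O-tvariant} and \ref{Impl-tvariant}, and then substitute the induction hypothesis into the corresponding recursive clause of Definition~\ref{depth-def}. The rule \r{{\eqtyp}\mbox{-}{\Impl}{\t}} cannot arise, because both of its sides are \tvariant s. For \r{{\peqtyp}\mbox{-{\bf K}/{\bf L}}} I would just compute both depths of \(\O(A \Impl B)\) and of \(\O A \Impl \O B\) from the definition: the outer \(\O\) contributes \(1\) to the \(\O\)-depth, \(0\) to the \(\Impl\)-depth, and in either case distributes over the \(\min\) of the two argument depths, so the two sides coincide.

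The substantial work is in \r{{\eqtyp}\mbox{-fix}} and \r{{\eqtyp}\mbox{-uniq}}, and this is where I expect the main obstacle to lie. For \r{{\eqtyp}\mbox{-fix}}, with \(A = \fix{X}C\) and \(B = C[\fix{X}C/X]\), I would take \(X \neq Z\) by \(\alpha\)-conversion, expand \(\PNDp(C[\fix{X}C/X],\,Z)\) by the substitution identity of Proposition~\ref{depth-basic}, item~\itemref{depth-subst1}, and then substitute the values of \(\PNDp(\fix{X}C,\,Z)\) and \(\NPDp(\fix{X}C,\,Z)\) supplied by the \(\fix\)-clause of Definition~\ref{depth-def}. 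Collapsing the resulting nested minima requires nothing more than non-negativity of depths: every surplus term has the shape (a non-negative quantity) \({+}\) (a term already present), so it is redundant and the whole expression reduces to \(\PNDp(\fix{X}C,\,Z)\). The same \(\fix\)-clause has identical shape for \(\Odp\) and \(\Idp\), so the two computations run in parallel.

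The genuinely delicate step is \r{{\eqtyp}\mbox{-uniq}}, where \(B = \fix{X}C\), \(C\) is proper in \(X\), and the induction hypothesis gives \(\PNDp(A,\,Z) = \PNDp(C[A/X],\,Z)\). Expanding the right-hand side by Proposition~\ref{depth-basic}, item~\itemref{depth-subst1}, turns this into a coupled fixed-point equation that expresses \(\POdp(A,\,Z)\) in terms of \(\NOdp(A,\,Z)\) and vice versa (and likewise for \(\Idp\)), and I must show its solution agrees with the \(\fix\)-clause value for \(\fix{X}C\). The extra ingredient needed here, beyond non-negativity, is that \(C\) is proper in \(X\), which forces the self-coupling coefficients \(\PDp(C,\,X)\) and \(\NDp(C,\,X)\) to be \emph{strictly} positive: for \(\Odp\) by Proposition~\ref{depth-basic}, item~\itemref{O-depth-proper}, and for \(\Idp\) by items~\itemref{NIdp-positive} and \itemref{PIdp-positive} (the latter using that \(\fix{X}C\) is not a \tvariant). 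Strict positivity is precisely what makes the terms carrying \(\PNDp(A,\,Z)\) on the right of the fixed-point equation redundant, so the equation has a unique solution, which then collapses to \(\PNDp(\fix{X}C,\,Z)\) after one further substitution and simplification.
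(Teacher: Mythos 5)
Your proposal is correct and follows essentially the same route as the paper's proof: induction on the derivation, dismissing \tvariant{}s up front via Proposition~\ref{geqtyp-tvariant}, expanding the \r{{\eqtyp}\mbox{-fix}} and \r{{\eqtyp}\mbox{-uniq}} cases with the substitution identity of Proposition~\ref{depth-subst1}, and using strict positivity of \(\Dp(C,\,X)\) from properness (Propositions~\ref{O-depth-proper}, \ref{NIdp-positive}, \ref{PIdp-positive}) to collapse the resulting minima. Your observation that the \r{{\eqtyp}\mbox{-fix}} case needs only non-negativity while \r{{\eqtyp}\mbox{-uniq}} genuinely requires strict positivity to break the self-coupling is accurate and, if anything, slightly sharper than the paper's presentation.
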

\begin{proof}
By induction on the derivation of \(A \geqtyp B\),
and by cases on the rule applied last.
Suppose that \(A \geqtyp B\).
The cases other than
\r{{\eqtyp}\mbox{-fix}} or \r{{\eqtyp}\mbox{-uniq}} are again straightforward.
Let \(\Dp\) be either \(\Odp\) or \(\Idp\).
If either \(A\) or \(B\) is a {\tvariant}, then
so are they by Proposition~\ref{geqtyp-tvariant}; and therefore,
\(\PNDp(A,\,X) = \PNDp(B,\,X) = \infty\) by Definition~\ref{depth-def}.
Hence, we assume that neither is.
\ifdetail

\Cases{\r{{\eqtyp}\mbox{-reflex}},
    \r{{\eqtyp}\mbox{-symm}} and \r{{\eqtyp}\mbox{-trans}}.}
Trivial.

\Case{\r{{\eqtyp}\mbox{-}{\O}}.}
In this case, \(A = \O A'\), \(B = \O B'\) and \(A' \eqtyp B'\)
for some \(A'\) and \(B'\).
By induction hypothesis, \(\PNDp(A',\, X) = \PNDp(B',\,X)\).
Hence, \(\PNDp(A,\,X) = \PNDp(B,\,X)\)
by Definition~\ref{depth-def}.

\Case{\r{{\eqtyp}\mbox{-}{\Impl}}.}
Similar to the previous case.

\Case{\r{{\eqtyp}\mbox{-}{\Impl}{\t}}.}
We do not have to consider this case
since neither \(A\) nor \(B\) is {\tvariant} by assumption.

\Case{\r{{\eqtyp}\mbox{-fix}}.} There
\else 
In case of \r{{\eqtyp}\mbox{-fix}}, there
\fi 
exist some \(Y\) and \(C\) such that \(A = \fix{Y}C\) and \(B = C[A/Y]\),
where we can assume that \(Y \not= X\) without loss of generality.
Note that \(\Dp(C, Y) > 0\) by Propositions~\ref{O-depth-proper},
\ref{NIdp-positive} and \ref{PIdp-positive},
since \(C\) is proper in \(Y\).
Therefore,
\begin{Eqnarray*}
\Hbox{20pt}{\(\PNDp(B,X) = \PNDp(C[A/Y],X)\)} \\
    &=& \min(\PNDp(C,X),
	\PDp(C,Y) + \PNDp(A,X),
	\NDp(C,Y) + \NPDp(A,X))
	    & (by Proposition~\ref{depth-subst1}) \\
    &=& \min(\PNDp(C,X),\, \\
	&& \hphantom{\min(}
	 \PDp(C,Y) +
	    \min(\PNDp(C, X),\, \NDp(C, Y)+\NPDp(C, X)),
	    \hskip-200pt \\
	&& \hphantom{\min(}
	 \NDp(C,Y) +
	    \min(\NPDp(C, X),\, \NDp(C, Y)+\PNDp(C, X)))
	    & (by Definition~\ref{depth-def}) \\
    &=& \min(\PNDp(C,X),\, \\
	&& \hphantom{\min(}
	 \PDp(C,Y) +
	    \min(\PNDp(C, X),\, \NDp(C, Y)+\NPDp(C, X)),
	    \hskip-200pt \\
	&& \hphantom{\min(}
	 \NDp(C,Y) + \NPDp(C, X))&
	 (since \(\NDp(C, Y) > 0\)) \\
    &=& \min(\PNDp(C,X),\, \NDp(C,Y) + \NPDp(C, X)) &
	 (since \(\PDp(C, Y) > 0\)) \\
    &=& \PNDp(\fix{Y}C, X) & (by Definition~\ref{depth-def}).
\end{Eqnarray*}

\ifdetail
\Case{\r{{\eqtyp}\mbox{-uniq}}.} There
\else 
In case of \r{{\eqtyp}\mbox{-uniq}}, there 
\fi 
exist some \(Y\) and \(C\) such that \(B = \fix{Y}C\),
\(A \geqtyp C[A/Y]\), and \(C\) is proper in \(Y\),
where we again assume that \(Y \not= X\) without loss of generality.
In this case, \(C[A/Y]\) is not a {\tvariant} either,
by Proposition~\ref{geqtyp-tvariant}.
Note that \(\Dp(C, Y) > 0\) by Propositions~\ref{O-depth-proper},
\ref{NIdp-positive} and \ref{PIdp-positive},
since \(C\) is proper in \(Y\).
Therefore,
\begin{Eqnarray*}
\Hbox{10pt}{\(\PNDp(A,\,X) = \PNDp(C[A/Y], X)\)}
	    &&& (by induction hypothesis) \\
    &=& \min(\PNDp(C,X),\, \PDp(C,Y) + \PNDp(A,X), \NDp(C,Y) + \NPDp(A,X))
	    & (by Proposition~\ref{depth-subst1}) \\
    &=& \min(\PNDp(C,X),\, \NDp(C,Y) + \NPDp(A,X))
	    & \hskip-500pt\hfill
		(since \(\PDp(C, Y) > 0\);
		    valid even if \(\PNDp(A,X) = \infty\)) \\
    &=& \min(\PNDp(C,X),\, \NDp(C,Y) + \min(\NPDp(C,X),\,
		\NDp(C,Y) + \PNDp(A,X))) \mskip-20mu
	    & \hskip-100pt\hfill (by the equation so far) \\
    &=& \min(\PNDp(C,X),\,\NDp(C,Y) + \NPDp(C,X))
	    & \hskip-500pt\hfill
		(since \(\NDp(C, Y) > 0\);
		    valid even if \(\PNDp(A,X) = \infty\)) \\
    &=& \PNDp(\fix{Y}C,X)
	    & (by Definition~\ref{depth-def}).
\end{Eqnarray*}
\ifdetail
\Case{\r{{\peqtyp}\mbox{-{\bf K}/{\bf L}}}.}
This case only applies to \(\peqtyp\).
In this case, \(A = \O(C \Impl D)\) and \(B = \O C \Impl \O D\) for some
\(C\) and \(D\).
Hence,
\begin{eqnarray*}
    \PNOdp(\O (C \Impl D),\,X) &=& \PNOdp(C \Impl D,\,X) + 1 \\
	&=& \min(\NPOdp(C,\,X),\,\PNOdp(D,\,X)) + 1 \\
	&=& \min(\NPOdp(C,\,X) + 1,\,\PNOdp(D,\,X) + 1) \\
	&=& \min(\NPOdp(\O C,\,X),\,\PNOdp(\O D,\,X)) \\
	&=& \NPOdp(\O C \Impl \O D,\,X),~\mbox{and} \\[3pt]
    \PNIdp(\O (C \Impl D),\,X) &=& \PNIdp(C \Impl D,\,X) \\
	&=& \min(\NPIdp(C,\,X),\,\PNIdp(D,\,X)) + 1\\
	&=& \min(\NPIdp(C,\,X) + 1,\,\PNIdp(D,\,X) + 1) \\
	&=& \min(\NPIdp(\O C,\,X),\,\PNIdp(\O D,\,X)) \\
	&=& \NPIdp(\O C \Impl \O D,\,X).
\end{eqnarray*}
\fi 
\Qed\CHECKED{2014/07/08}
\end{proof}

\begin{proposition}\pushlabel
Suppose that \(A \geqtyp B\).
\begin{Enumerate}
\item \label{geqtyp-etv}
    \(\PNETV{A} = \PNETV{B}\).
\item \label{geqtyp-proper}
    \(A\) is proper in \(X\) if and only if so is \(B\).
\end{Enumerate}
\end{proposition}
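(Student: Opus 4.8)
The plan is to obtain both items as immediate corollaries of Proposition~\ref{geqtyp-depth}, which already establishes that the equality $\geqtyp$ preserves all four depths $\POdp$, $\NOdp$, $\PIdp$ and $\NIdp$. The idea is to re-express each of the two properties in question purely in terms of these depths, so that their invariance under $\geqtyp$ follows with no further induction on the derivation of $A \geqtyp B$.

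For Item~\ref{geqtyp-etv}, I would invoke Item~\itemref{depth-finite-etv} of Proposition~\ref{depth-basic}, which characterizes membership in $\PETV{\,\cdot\,}$ and $\NETV{\,\cdot\,}$ by finiteness of the corresponding $\O$-depth. For any type variable $X$ this gives the chain
\[
    X \in \PETV{A} &\Leftrightarrow& \POdp(A,\,X) < \infty \\
    &\Leftrightarrow& \POdp(B,\,X) < \infty \\
    &\Leftrightarrow& X \in \PETV{B},
\]
where the middle step is Proposition~\ref{geqtyp-depth}. Replacing $\POdp$ by $\NOdp$ throughout yields $\NETV{A} = \NETV{B}$ in exactly the same way, and the two equalities together give $\PNETV{A} = \PNETV{B}$.

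For Item~\ref{geqtyp-proper}, I would use Item~\itemref{O-depth-proper} of Proposition~\ref{depth-basic}, by which $A$ is proper in $X$ exactly when $\Odp(A,\,X) > 0$. Since $\Odp(A,\,X) = \min(\POdp(A,\,X),\,\NOdp(A,\,X))$ and both $\O$-depths are preserved by $\geqtyp$ (Proposition~\ref{geqtyp-depth}), we have $\Odp(A,\,X) = \Odp(B,\,X)$; hence $\Odp(A,\,X) > 0$ if and only if $\Odp(B,\,X) > 0$, i.e.\ $A$ is proper in $X$ if and only if so is $B$.

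I do not expect any real obstacle at this stage: the substantive work---proving invariance of the depths under $\geqtyp$, including the delicate treatment of {\tvariant}s and of the fixed-point rules \r{{\eqtyp}\mbox{-fix}} and \r{{\eqtyp}\mbox{-uniq}}---has already been carried out in Proposition~\ref{geqtyp-depth}. The only point worth a sanity check is the {\tvariant} boundary case, but it is automatically consistent: on a {\tvariant} every depth is $\infty$ and the expression is proper in every variable, so both characterizations above degenerate correctly. Thus the proposition reduces to the short bookkeeping argument sketched here.
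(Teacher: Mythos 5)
Your proposal is correct and is exactly the paper's argument: the paper's own proof reads ``Straightforward by Propositions~\ref{depth-finite-etv}, \ref{O-depth-proper} and \ref{geqtyp-depth}'', i.e.\ it reduces both items to the finiteness/positivity characterizations via the depths and then invokes the invariance of the depths under \(\geqtyp\), just as you do.
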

\begin{proof}
Straightforward by Propositions~\ref{depth-finite-etv},
\ref{O-depth-proper} and \ref{geqtyp-depth}.
\qed\CHECKED{2014/04/28, 07/08}
\end{proof}

By the facts obtained so far,
we can show the following derived rules.

\begin{proposition}\pushlabel
\begin{Enumerate}
\item \itemlabel{geqtyp-fix-congr}
    If\/ \(A \geqtyp B\), then \(\fix{X}A \geqtyp \fix{X}B\).
\item \itemlabel{geqtyp-fix-nest}
    \(\fix{X}A[X/Y] \geqtyp \fix{X}A[A[X/Y]/Y]\).
\end{Enumerate}
\end{proposition}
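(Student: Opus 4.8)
The plan is to derive both items from the uniqueness rule $\r{{\eqtyp}\mbox{-uniq}}$, unfolding with $\r{{\eqtyp}\mbox{-fix}}$ and propagating $\geqtyp$ through substitutions by Proposition~\ref{geqtyp-subst}. Throughout, since the displayed recursive types are type expressions, their bodies are proper in $X$, and Proposition~\ref{geqtyp-proper} transfers properness along $\geqtyp$.

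For Item~\itemref{geqtyp-fix-congr}, assume $A \geqtyp B$. As $\fix{X}A \in \TE$, $A$ is proper in $X$, hence so is $B$ by Proposition~\ref{geqtyp-proper}, so $\fix{X}B$ is well-formed. I would apply $\r{{\eqtyp}\mbox{-uniq}}$ with $C := B$ and with its subject instantiated to $\fix{X}A$, which reduces the goal to checking $\fix{X}A \geqtyp B[\fix{X}A/X]$. This follows from the chain $\fix{X}A \geqtyp A[\fix{X}A/X] \geqtyp B[\fix{X}A/X]$, whose first step is $\r{{\eqtyp}\mbox{-fix}}$ and whose second is Proposition~\ref{geqtyp-subst} applied to $A \geqtyp B$ and the reflexivity $\fix{X}A \geqtyp \fix{X}A$. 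Then $\r{{\eqtyp}\mbox{-uniq}}$ yields $\fix{X}A \geqtyp \fix{X}B$.

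For Item~\itemref{geqtyp-fix-nest}, write $B := A[X/Y]$ and $L := \fix{X}B = \fix{X}A[X/Y]$, so the goal is $L \geqtyp \fix{X}A[B/Y]$, and I again aim at $\r{{\eqtyp}\mbox{-uniq}}$ with $C := A[B/Y]$ and subject $L$, reducing it to $L \geqtyp (A[B/Y])[L/X]$. Unfolding $L$ by $\r{{\eqtyp}\mbox{-fix}}$ gives $L \geqtyp B[L/X]$; the substitution identity $(M[N/Y])[L/X] = (M[L/X])[\,N[L/X]/Y\,]$, valid here because $X \not= Y$ and $Y \notin \FV{L}$ (every free $Y$ of $A$ was consumed in forming $B$), then computes $B[L/X] = A[L/X][L/Y]$; call this $(\ast)$, so $L \geqtyp A[L/X][L/Y]$. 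The same identity gives $(A[B/Y])[L/X] = A[L/X][\,(B[L/X])/Y\,] = A[L/X][\,(A[L/X][L/Y])/Y\,]$. Since $(\ast)$ reads $A[L/X][L/Y] \geqtyp L$, substituting it into the $Y$-slot of the fixed outer term $A[L/X]$ via Proposition~\ref{geqtyp-subst} (with $A[L/X] \geqtyp A[L/X]$ reflexive) yields $(A[B/Y])[L/X] \geqtyp A[L/X][L/Y] \geqtyp L$. Hence $L \geqtyp (A[B/Y])[L/X]$, and $\r{{\eqtyp}\mbox{-uniq}}$ delivers $L \geqtyp \fix{X}A[B/Y]$, as required; its properness side-condition, that $A[A[X/Y]/Y]$ is proper in $X$, holds because the right-hand side is a type expression (alternatively by Proposition~\ref{proper-subst1}, once one notes that $A[X/Y]$ proper in $X$ forces $A$ proper in $X$, a variable-for-variable substitution altering neither $\O$-scopes nor the {\tvariant}s occurring in a type expression).

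The main obstacle is not conceptual but the bookkeeping around substitution and the side-conditions: one must choose bound variables so that the substitution lemma applies ($X \not= Y$, $Y \notin \FV{L}$), apply Proposition~\ref{geqtyp-subst} at exactly the right occurrence (the $Y$-slot of $A[L/X]$), and secure properness of $A[A[X/Y]/Y]$ in $X$ so that $\r{{\eqtyp}\mbox{-uniq}}$ is licensed. The properness point is where the care is genuinely needed, and it is handled by Proposition~\ref{proper-subst1} together with the preservation of {\tvariant}s under renaming.
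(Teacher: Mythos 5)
Your argument for both items is the paper's own: unfold with \r{{\eqtyp}\mbox{-fix}}, transport \(\geqtyp\) through substitutions by Proposition~\ref{geqtyp-subst}, and close with \r{{\eqtyp}\mbox{-uniq}}. Item~1 is verbatim the paper's proof, and your computation for Item~2 is the paper's chain with the simultaneous substitution bracketed the other way round; your explicit discharge of the properness side conditions (left tacit in the paper) is also correct.

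The one real defect is in Item~2: you assume \(X \not= Y\) without warrant. The proposition does not exclude \(X = Y\) --- the paper's proof explicitly opens with ``whether \(X = Y\) or not'' --- and the substitution identity \((M[N/Y])[L/X] = (M[L/X])[N[L/X]/Y]\) on which your whole computation rests is false when \(X = Y\): the left side is then \(M[N[L/X]/X]\), while the right side collapses to \(M[L/X]\) because \(M[L/X]\) has no free \(X\). The repair is a one-liner. When \(X = Y\) the claim reads \(\fix{X}A \geqtyp \fix{X}A[A/X]\); from \(\fix{X}A \geqtyp A[\fix{X}A/X]\) and Proposition~\ref{geqtyp-subst} one gets \(\fix{X}A \geqtyp A[\fix{X}A/X] \geqtyp A[A[\fix{X}A/X]/X] = (A[A/X])[\fix{X}A/X]\), and \r{{\eqtyp}\mbox{-uniq}} (whose properness premise holds by Proposition~\ref{proper-subst1}) finishes. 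With that case added, your proof is complete.
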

\begin{proof}
For Item~\itemref{geqtyp-fix-congr}, suppose that \(A \geqtyp B\).
Then, \(\fix{X}A \geqtyp A[\fix{X}A/X] \geqtyp B[\fix{X}A/X]\)
by \r{{\eqtyp}\mbox{-fix}} and Proposition~\ref{geqtyp-subst}.
Therefore, \(\fix{X}A \geqtyp \fix{X}B\) by \r{{\eqtyp}\mbox{-uniq}}.
For Item~\itemref{geqtyp-fix-nest},
whether \(X = Y\) or not,
\begin{Eqnarray*}
    \fix{X}A[X/Y] &\geqtyp& A[X/Y][\fix{X}A[X/Y]/X]
	& (by \r{{\eqtyp}\mbox{-fix}}) \\
    &\geqtyp& A[\fix{X}A[X/Y]/Y][\fix{X}A[X/Y]/X]
	& (since \(X \not\in \FTV{\fix{X}A[X/Y]}\)) \\
    &\geqtyp& A[A[X/Y][\fix{X}A[X/Y]/X]/Y][\fix{X}A[X/Y]/X]
	& (by \r{{\eqtyp}\mbox{-fix}} and Proposition~\ref{geqtyp-subst}) \\
    &\geqtyp& A[A[X/Y]/Y][\fix{X}A[X/Y]/X]
	& \hskip-300pt\hfill
	(since \(X \not\in \FTV{A[X/Y][\fix{X}A[X/Y]/X]}\)).
\end{Eqnarray*}
Hence, \(\fix{X}A[X/Y] \geqtyp \fix{X}A[A[X/Y]/Y]\) by \r{{\eqtyp}\mbox{-uniq}},
again.
\qed\CHECKED{2014/07/13}
\end{proof}

Proposition~\ref{geqtyp-tvariant} says
that the relation \(\geqtyp\) preserves
whether a type expression is a {\tvariant} or not.
Furthermore, we can show that
a type expression \(A\) is a {\tvariant}
if and only if \(A \geqtyp \t\).
It does not depend on which of \(\eqtyp\) and \(\peqtyp\)
we consider as the equality between types.
Because it is syntactically decidable whether a type expression
is a {\tvariant} or not,
it is decidable whether \(A \geqtyp \t\) or not,

\begin{lemma}\label{geqtyp-tail-t-lemma}\pushlabel
Let \(\tail{A}
    = \O^{m_0}\fix{X_1}\O^{m_1}\fix{X_2}\O^{m_2}\ldots\fix{X_n}\O^{m_n}Y\)
for some \(n\), \(m_0\), \(m_1\), \(m_2\), \(\ldots\), \(m_n\),
\(X_1\), \(X_2\), \(\ldots\), \(X_n\) and \(Y\).
\begin{Enumerate}
\item \label{geqtyp-tail-t1-lemma}
    \(A \geqtyp \t\) if the following is the case.
    \begin{enumerate}
    \item[(a)] \(Y = X_i\) for some \(i\) such that
	\(Y \notin \{\,\,X_{i{+}1},\,X_{i{+}2},\,\ldots,X_n\,\}\),
	and \(m_i+m_{i{+}1}+m_{i{+}2}+\ldots+m_n \ge 1\).
    \end{enumerate}
\item \label{geqtyp-tail-t2-lemma} 
    \(A[\t/Y] \geqtyp \t\) if the following is the case.
    \begin{enumerate}
    \item[(b)] \(Y \notin \{\,\,X_1,\,X_2,\,\ldots,X_n\,\}\).
    \end{enumerate}
\end{Enumerate}
\end{lemma}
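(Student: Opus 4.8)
The plan is to prove the two items together by a single induction on the height $h(A)$, distinguishing cases according to the outermost constructor of $A$ and reading off how the prescribed shape of $\tail{A}$, and with it conditions (a) and (b), is inherited by the immediate subexpressions through the clauses $\tail{(\O B)} = \O\tail{B}$, $\tail{(B \Impl C)} = \tail{C}$ and $\tail{(\fix{Z}C)} = \fix{Z}\tail{C}$ of Definition~\ref{tail-def}. Throughout I will lean on three elementary equivalences, each combined with the corresponding congruence rule: $\O\t \geqtyp \t$ (the first item of Proposition~\ref{geqtyp-t} with \r{{\eqtyp}\mbox{-symm}}), $\fix{Z}\t \geqtyp \t$ (since $Z \notin \FTV{\t}$, by \r{{\eqtyp}\mbox{-fix}}), and $B \Impl \t \geqtyp \t$ (by \r{{\eqtyp}\mbox{-}{\Impl}{\t}}). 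Together with \r{{\eqtyp}\mbox{-}{\O}}, \r{{\eqtyp}\mbox{-}{\Impl}} and the derived congruence $\fix{Z}{\cdot}$ (Proposition~\ref{geqtyp-fix-congr}), these let me turn any subexpression already known to be $\geqtyp\t$ into $\t$ again after prefixing a leading $\O$, an antecedent, or an outer $\fix{}{}$.

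For Item~2 the variable $Y$ is free in $\tail{A}$, so condition (b) passes unchanged to whichever subexpression carries the tail. The base case $A = Y$ gives $A[\t/Y] = \t$ outright. In the three inductive cases $A = \O B$, $A = B \Impl C$ and $A = \fix{Z}C$ (where $Z = X_1 \ne Y$ by (b), so the substitution commutes with the binder), I push $[\t/Y]$ inward, apply the Item~2 hypothesis to the tail-bearing subexpression, and reattach the stripped constructor with the wrapping equivalences above. Thus Item~2 closes using only the Item~2 hypothesis at strictly smaller height.

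For Item~1, condition (a) is exactly the assertion that $A$ is a {\tvariant}, so the case $A = X$ is vacuous and the cases $A = \O B$ (which forces $m_0 \ge 1$, merely decrementing $m_0$) and $A = B \Impl C$ again strip the constructor, transfer condition (a) to the subexpression, invoke the Item~1 hypothesis, and reattach. The one delicate case is $A = \fix{Z}C$, where $\tail{A} = \fix{Z}\tail{C}$ forces $Z = X_1$. When the tail variable $Y = X_i$ has index $i \ge 2$, the body $C$ itself satisfies condition (a), so the Item~1 hypothesis gives $C \geqtyp \t$ and hence $A \geqtyp \fix{Z}\t \geqtyp \t$. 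The remaining subcase $i = 1$, i.e.\ $Y = Z$, is the crux: now $C$ has a free tail variable and meets the hypothesis of Item~2, so the Item~2 hypothesis (at height $h(C) < h(A)$) yields $C[\t/Z] \geqtyp \t$; since $A = \fix{Z}C \in \TE$ forces $C$ to be proper in $Z$ (Definition~\ref{te-def}), the uniqueness rule \r{{\eqtyp}\mbox{-uniq}} then delivers $\t \geqtyp \fix{Z}C = A$.

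The main obstacle is exactly this interdependence: closing Item~1 requires Item~2, because tying off a recursively bound tail variable is precisely the office of \r{{\eqtyp}\mbox{-uniq}}, whose premise is an Item~2 statement about the body. Consequently the two claims must be threaded through one common induction on $h(A)$, with Item~1 appealing to the Item~2 hypothesis but never the reverse. What remains is routine bookkeeping: verifying that removing a constructor strictly lowers the height and that the (re-indexed) shapes of $\tail{B}$ and $\tail{C}$ still satisfy the relevant condition; beyond Definition~\ref{tail-def} and the well-formedness of $A$, no further substitution lemmas are needed.
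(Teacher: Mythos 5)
Your proof is correct and follows essentially the same route as the paper's: a simultaneous induction on \(h(A)\) with a case split on the outermost constructor, where the crux \(A = \fix{Z}C\), \(Y = Z\) subcase is handled exactly as in the paper by feeding the Item~2 induction hypothesis \(C[\t/Z] \geqtyp \t\) into \r{{\eqtyp}\mbox{-uniq}} via properness of \(C\) in \(Z\). The only (cosmetic) divergence is the \(i \ge 2\) subcase, where you pass through \(\fix{Z}C \geqtyp \fix{Z}\t \geqtyp \t\) by Proposition~\ref{geqtyp-fix-congr} while the paper derives \(C[\t/Z] \geqtyp \t\) by Proposition~\ref{geqtyp-subst} and applies \r{{\eqtyp}\mbox{-uniq}} directly; the two are interchangeable.
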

\begin{proof}
By simultaneous induction on \(h(A)\), and by cases on the form of \(A\).

\Case{\(A = Z\) for some \(Z\).}
In this case, \(Y = Z\), \(n = 0\) and \(m_0 = 0\) since \(\tail{A} = Z\).
Hence, (a) is not the case.
The second item is trivial since \(A[\t/Y] = Y[\t/Y] = \t\).

\Case{\(A = B \Impl C\) for some \(B\) and \(C\).}
In this case, \(\tail{A} = \tail{C}\).
If (a) holds, then
\(C \geqtyp \t\) by induction hypothesis.
Hence, \(A \geqtyp B \Impl \t \geqtyp \t\)
by \r{{\eqtyp}\mbox{-}{\Impl}{\t}}.
On the other hand,
if (b) is the case, then
\(C[\t/Y] \geqtyp \t\) by induction hypothesis.
Hence, \(A[\t/Y] = B[\t/Y] \Impl C[\t/Y] \geqtyp B[\t/Y] \Impl \t
\geqtyp \t\) by \r{{\eqtyp}\mbox{-}{\Impl}{\t}}.

\Case{\(A = \O B\) for some \(B\).}
In this case, \(\tail{A} = \O \tail{B}\).
If (a) holds, then
\(B \geqtyp \t\) by induction hypothesis.
Therefore, \(A \geqtyp \O \t \geqtyp \t\) by Proposition~\ref{geqtyp-t1}.
In case (b),
\(B[\t/Y] \geqtyp \t\) by induction hypothesis.
Hence, \(A[\t/Y] \geqtyp \O B[\t/Y] \geqtyp \O \t \geqtyp \t\)
by Proposition~\ref{geqtyp-t1}, again.

\Case{\(A = \fix{Z}B\) for some \(Z\) and \(B\).}
In this case, \(\tail{(\fix{Z}B)} = \fix{Z}\tail{B}\), \(m_0 = 0\),
\(Z = X_1\), \(n \ge 1\) and
\(\tail{B} = \O^{m_1}\fix{X_2}\O^{m_2}\ldots\fix{X_n}\O^{m_n}Y\).
First, suppose that (a) holds.
If \(i \ge 2\), then
\(B \geqtyp \t\) by induction hypothesis.
Hence, \(B[\t/Z] \geqtyp \t[\t/Z] = \t\)
by Proposition~\ref{geqtyp-subst}; and therefore,
\(\fix{Z}B \geqtyp \t\) by \r{{\eqtyp}\mbox{-uniq}}.
On the other hand, if \(i = 1\), i.e., \(Y = Z\), then
\(B[\t/Z] \geqtyp \t\) by induction hypothesis,
since \(Z = Y \not\in \{\,\,X_2,\,X_3,\,\ldots,X_n\,\}\).
Therefore, again \(A = \fix{Z}B \geqtyp \t\) by \r{{\eqtyp}\mbox{-uniq}}.
Second, suppose that (b) holds.
We get \(B[\t/Y] \geqtyp \t\) by induction hypothesis.
Hence, \(A[\t/Y] = (\fix{Z}B)[\t/Y] = \fix{Z}B[\t/Y] \geqtyp
    \fix{Z}\t \geqtyp \t[\fix{Z}\t/Z] = \t\)
by Proposition~\ref{geqtyp-fix-congr} and \r{{\eqtyp}\mbox{-fix}}.
\qed\CHECKED{2014/04/26, 04/29, 07/08}
\end{proof}

\begin{theorem}\label{geqtyp-t-tvariant}
A type expression \(A\) is a {\tvariant} if and only if \(A \geqtyp \t\).
\end{theorem}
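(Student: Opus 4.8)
The plan is to treat the two implications separately, since each reduces almost at once to a result already in hand.

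For the ``only if'' direction, I would suppose that \(A\) is a {\tvariant} and invoke Definition~\ref{tvariant-def}: this says exactly that \(\tail{A} = \O^{m_0}\fix{X_1}\O^{m_1}\fix{X_2}\O^{m_2}\ldots\fix{X_n}\O^{m_n}X_i\) for some \(n\), \(m_0,\ldots,m_n\), \(X_1,\ldots,X_n\) and \(i\) with \(1 \le i \le n\), \(X_i \notin \{\,X_{i+1},\ldots,X_n\,\}\), and \(m_i + m_{i+1} + \ldots + m_n \ge 1\). Taking \(Y = X_i\), this is verbatim hypothesis~(a) of Lemma~\ref{geqtyp-tail-t-lemma}, so its first item yields \(A \geqtyp \t\) immediately.

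For the ``if'' direction, I would first record that \(\t\) is itself a {\tvariant}: since \(\t = \fix{X}\O X\) we have \(\tail{\t} = \fix{X}\O X\), which is of the required form with \(n = 1\), \(m_0 = 0\), \(m_1 = 1\), and \(X_1 = Y = X\) (here \(i = 1 = n\), so the set \(\{\,X_{i+1},\ldots,X_n\,\}\) is empty and \(m_1 = 1 \ge 1\)). Then, assuming \(A \geqtyp \t\), Proposition~\ref{geqtyp-tvariant} applies with \(B = \t\) and tells us that \(A\) is a {\tvariant} if and only if \(\t\) is; since \(\t\) is, so is \(A\).

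The essential observation is that no new work is required: Lemma~\ref{geqtyp-tail-t-lemma} already supplies one implication once its hypothesis~(a) is matched against the definition of a {\tvariant}, and Proposition~\ref{geqtyp-tvariant} (preservation of {\tvariant}-hood under \(\geqtyp\)) supplies the other once we note that \(\t\) is a {\tvariant}. Consequently I anticipate no real obstacle; the only point demanding care is checking that the syntactic shape prescribed by Definition~\ref{tvariant-def} coincides exactly with condition~(a) of the lemma, which it does.
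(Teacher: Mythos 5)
Your proof is correct. The ``only if'' half is exactly the paper's argument: match Definition~\ref{tvariant-def} against hypothesis~(a) of Lemma~\ref{geqtyp-tail-t-lemma} and apply its first item. For the ``if'' half you take a genuinely different, and in fact shorter, route than the paper. You observe that \(\t = \fix{X}\O X\) is itself a {\tvariant} and then invoke Proposition~\ref{geqtyp-tvariant} (preservation of {\tvariant}-hood under \(\geqtyp\)) with \(B = \t\); this is legitimate, since that proposition is established earlier by an independent induction on the derivation of the equality, so there is no circularity. The paper instead works at the level of tails: from \(A \geqtyp \t\) it derives \(\tail{A} \geqtyp \t\) via Proposition~\ref{geqtyp-tail}, uses \(\PETV{\t} = \{\}\) and Proposition~\ref{geqtyp-etv} to force the terminal variable of \(\tail{A}\) to be one of the bound \(X_i\), and then appeals to Proposition~\ref{tail-te} to secure the remaining side conditions of Definition~\ref{tvariant-def}. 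Your version buys brevity by reusing the heavy lifting already done in Proposition~\ref{geqtyp-tvariant}; the paper's version is more self-contained at this point and makes explicit how the syntactic shape of \(\tail{A}\) is recovered, but both are sound.
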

\begin{proof}
The ``only if'' part is straightforward from
Lemma~\ref{geqtyp-tail-t1-lemma}.
For the ``if'' part, suppose that \(A \geqtyp \t\), and
let \(\tail{A}
    = \O^{m_0}\fix{X_1}\O^{m_1}\fix{X_2}\O^{m_2}\ldots\fix{X_n}\O^{m_n}Y\).
We get \(\tail{A} \geqtyp \tail{\t} = \t\) by Proposition~\ref{geqtyp-tail}
and Definition~\ref{tail-def}.
Hence, \(Y = X_i\) for some \(i\) (\(1 \le i \le n\))
by Definition~\ref{etv-def} and
Proposition~\ref{geqtyp-etv}, because \(\PETV{\t} = \{\}\).
Therefore, taking the largest such \(i\), we also establish
\(Y \not\in \{\,\,X_{i{+}1},\,X_{i{+}2},\,\ldots,X_n\,\}\) and
\(m_i+m_{i{+}1}+m_{i{+}2}+\ldots+m_n \ge 1\) by Proposition~\ref{tail-te}.
\qed\CHECKED{2014/04/29, 07/08}
\end{proof}

By Theorem~\ref{geqtyp-t-tvariant},
we can show other basic properties about the equality of
type expressions as follows.
In the succeeding sections, we might use Theorem~\ref{geqtyp-t-tvariant}
in proofs without mention.

\begin{proposition}\label{geqtyp-basic}\pushlabel
\begin{Enumerate}
\item \itemlabel{geqtyp-var-t}
    \(X \ngeqtyp \t\). 
\item \itemlabel{geqtyp-O-t}
    \(\O A \geqtyp \t\) if and only if \(A \geqtyp \t\).
\item \itemlabel{geqtyp-Impl-t}
    \(A \Impl B \geqtyp \t\) if and only if \(B \geqtyp \t\).
\item \itemlabel{geqtyp-tail-t}
    \(A \geqtyp \t\) if and only if \(\tail{A} \geqtyp \t\).
\end{Enumerate}
\end{proposition}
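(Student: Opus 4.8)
The plan is to reduce each of the four biconditionals about $\geqtyp$-equivalence with $\t$ to the corresponding purely syntactic fact about {\tvariant}s, invoking Theorem~\ref{geqtyp-t-tvariant} as the bridge. That theorem states that for any type expression $C$ we have $C \geqtyp \t$ if and only if $C$ is a {\tvariant}, and it holds uniformly for both $\eqtyp$ and $\peqtyp$, hence for generic $\geqtyp$. Thus each occurrence of a claim ``$C \geqtyp \t$'' may be replaced throughout by ``$C$ is a {\tvariant}'', after which the desired statements become exactly the items already established in Proposition~\ref{tvariant-basic}.

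Carrying this out item by item: for Item~\itemref{geqtyp-var-t}, Theorem~\ref{geqtyp-t-tvariant} turns $X \geqtyp \t$ into the assertion that $X$ is a {\tvariant}, which is false by Proposition~\ref{var-tvariant}; hence $X \ngeqtyp \t$. For Item~\itemref{geqtyp-O-t}, the two sides $\O A \geqtyp \t$ and $A \geqtyp \t$ become ``$\O A$ is a {\tvariant}'' and ``$A$ is a {\tvariant}'' respectively, and these are equivalent by Proposition~\ref{O-tvariant}. Item~\itemref{geqtyp-Impl-t} is handled the same way using Proposition~\ref{Impl-tvariant}, which says $A \Impl B$ is a {\tvariant} if and only if $B$ is. Finally, Item~\itemref{geqtyp-tail-t} follows from Proposition~\ref{tail-tvariant}, the {\tvariant}-characterization of the tail operation. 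In each case the biconditional structure of both the theorem and the cited item delivers both directions at once.

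Since every step is a direct translation followed by a citation, I expect no genuine obstacle here; the substantive work was already discharged in Theorem~\ref{geqtyp-t-tvariant} and in the elementary syntactic facts of Proposition~\ref{tvariant-basic}. The only point deserving attention is to confirm that Theorem~\ref{geqtyp-t-tvariant} is phrased for generic $\geqtyp$ rather than for a single one of $\eqtyp$ or $\peqtyp$, so that the present proposition, likewise stated with $\geqtyp$, inherits both readings simultaneously; this is indeed how that theorem is formulated.
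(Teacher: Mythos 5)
Your proof is correct and follows exactly the route the paper takes: the paper's own proof reads ``Straightforward from Proposition~\ref{tvariant-basic} and Theorem~\ref{geqtyp-t-tvariant},'' which is precisely your translation of each ``$C \geqtyp \t$'' into ``$C$ is a {\tvariant}'' followed by a citation of the matching item of Proposition~\ref{tvariant-basic}. Your closing observation that Theorem~\ref{geqtyp-t-tvariant} is stated for generic $\geqtyp$ is also accurate, so nothing further is needed.
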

\begin{proof}
Straightforward from
Proposition~\ref{tvariant-basic} and Theorem~\ref{geqtyp-t-tvariant}.
\qed\CHECKED{2014/04/30, 07/08}
\end{proof}

\begin{proposition}\label{geqtyp-no-etv-subst}
    If \(X \not\in \ETV{A}\), then \(A \geqtyp A[B/X]\).
\end{proposition}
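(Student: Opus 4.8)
The plan is to argue by induction on the height $h(A)$, with a case analysis on the outermost constructor of $A$, after first disposing of the case in which $A$ is a {\tvariant}. If $A$ is a {\tvariant}, then so is $A[B/X]$ by Proposition~\ref{tvariant-subst1}, and both are equivalent to $\t$ by Theorem~\ref{geqtyp-t-tvariant}; hence $A \geqtyp \t \geqtyp A[B/X]$ by \r{{\eqtyp}\mbox{-symm}} and \r{{\eqtyp}\mbox{-trans}}. So for the remainder I may assume $A$ is not a {\tvariant}, which is exactly what lets me read off $\ETV{A}$ from its immediate subexpressions through the non-exceptional clauses of Definition~\ref{etv-def}.

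For the base case $A = Y$, the hypothesis $X \not\in \ETV{Y} = \{\,Y\,\}$ forces $X \not= Y$, so $A[B/X] = A$ and the claim holds by \r{{\eqtyp}\mbox{-reflex}}. For $A = \O C$, Proposition~\ref{O-tvariant} gives that $C$ is not a {\tvariant}, so $\ETV{A} = \ETV{C}$ and hence $X \not\in \ETV{C}$; the induction hypothesis yields $C \geqtyp C[B/X]$, and \r{{\eqtyp}\mbox{-}{\O}} lifts this to $\O C \geqtyp \O(C[B/X]) = A[B/X]$. For $A = C \Impl D$, Proposition~\ref{Impl-tvariant} gives that $D$ is not a {\tvariant}, and Definition~\ref{etv-def} gives $\ETV{A} = \ETV{C} \cup \ETV{D}$, so $X$ lies in neither $\ETV{C}$ nor $\ETV{D}$; applying the induction hypothesis to each factor and combining with \r{{\eqtyp}\mbox{-}{\Impl}} yields $C \Impl D \geqtyp C[B/X] \Impl D[B/X] = A[B/X]$.

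The only case requiring a derived rule is $A = \fix{Y}C$. After an $\alpha$-conversion I may assume $Y \not= X$ and $Y \not\in \FTV{B}$, so that $A[B/X] = \fix{Y}(C[B/X])$. The key bookkeeping point is that in both branches of the recursive-type clause of Definition~\ref{etv-def} one obtains $\ETV{\fix{Y}C} = \ETV{C} - \{\,Y\,\}$ uniformly; since $X \not= Y$, the hypothesis $X \not\in \ETV{A}$ then gives $X \not\in \ETV{C}$. The induction hypothesis supplies $C \geqtyp C[B/X]$, and the derived congruence rule for recursive types (Proposition~\ref{geqtyp-fix-congr}) upgrades this to $\fix{Y}C \geqtyp \fix{Y}(C[B/X]) = A[B/X]$, completing the induction.

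I expect no genuine obstacle here: once the {\tvariant} case is removed at the outset, the argument is simply that the condition $X \not\in \ETV{A}$ descends to the immediate subexpressions and that each constructor carries a congruence rule for $\geqtyp$. The one place deserving a little care is the $\fix$ case, where the descent of the hypothesis rests on the identity $\ETV{\fix{Y}C} = \ETV{C} - \{\,Y\,\}$ holding regardless of whether $Y \in \NETV{C}$, and where the reassembly invokes the derived Proposition~\ref{geqtyp-fix-congr} rather than a primitive rule of Definition~\ref{eqtyp-def}. Since all of the ingredients are available for both $\eqtyp$ and $\peqtyp$, the statement holds for $\geqtyp$ as claimed.
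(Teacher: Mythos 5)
Your proof is correct and follows essentially the same route as the paper's: dispose of the {\tvariant} case via Proposition~\ref{tvariant-subst1} and Theorem~\ref{geqtyp-t-tvariant}, then induct on \(h(A)\) with the same case analysis, using the congruence rules for \(\O\) and \(\Impl\) and Proposition~\ref{geqtyp-fix-congr} for the \(\fx\) case. The one point you single out for care — that \(\ETV{\fix{Y}C} = \ETV{C} - \{\,Y\,\}\) holds in both branches of Definition~\ref{etv-def} — is exactly the observation the paper relies on as well.
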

\begin{proof}
If \(A\) is a {\tvariant}, then \(A \geqtyp A[B/X]\)
by Proposition~\ref{tvariant-subst1} and Theorem~\ref{geqtyp-t-tvariant}.
\ifdetail
So we only consider the case when \(A\) is not.
The rest of the proof proceeds by induction on \(h(A)\),
and by cases on the form of \(A\).

\Case{\(A = Y\) for some \(Y\).}
In this case, \(Y \not= X\) since \(X \not\in \ETV{A}\).
Hence, \(A[B/X] = A \geqtyp A\).

\Case{\(A = \O C\) for some \(C\).}
Since \(\ETV{A} = \ETV{C}\), we get
\(A[B/X] = \O C[B/X] \geqtyp \O C = A\)
by induction hypothesis.

\Case{\(A = C \Impl D\) for some \(C\) and \(D\).}
In this case, \(\ETV{A} = \ETV{C} \cup \ETV{D}\)
since \(A\) is not a {\tvariant}.
Hence, by induction hypothesis,
\(C[B/X] \geqtyp C\) and \(D[B/X] \geqtyp D\).
Therefore,
\(A[B/X] = C[B/X] \Impl D[B/X] \geqtyp C \Impl D = A\).

\Case{\(A = \fix{Y}C\) for some \(Y\) and \(C\).}
We can assume that \(Y \not\in \{\,X\,\} \cup \FTV{B}\)
without loss of generality.
Hence, \(X \not\in \ETV{C}\) because
\(\ETV{A} = \ETV{C} - \{\,Y\,\}\) by Definition~\ref{etv-def}.
Therefore,
\(A[B/X] = \fix{Y}C[B/X] \geqtyp \fix{Y}C = A\)
by induction hypothesis and Proposition~\ref{geqtyp-fix-congr}.
\else 
For the case when \(A\) is not,
by straightforward induction on \(h(A)\), and by cases on the form of \(A\).
\fi 
\qed\CHECKED{2014/04/29, 07/08}
\end{proof}

\Subsection{Canonical forms of type expressions}

So far we have confirmed that the equality defined by the derivation rules
preserves the basic properties of type expressions, such as
{\tvariant}s, properness,
\(\PNOdp\), \(\PNIdp\) and \(ETV^{\pm}\).
However, we know almost nothing about when two type expressions are
equal to each other.
For example, one might conjecture the following properties.
\begin{Enumerate}
\item \(X \geqtyp Y\) if and only if\/ \(X = Y\).
\item \(\O A \geqtyp \O B\) if and only if\/ \(A \geqtyp B\).
\item \(A \Impl B \geqtyp C \Impl D\) if and only if\/
    (a) \(A \geqtyp C\) and \(B \geqtyp D\), or
    (b) \(B \geqtyp D \geqtyp \t\).
\end{Enumerate}
Furthermore,
\begin{Enumerate}
\item[4.] \(X \ngeqtyp \O A\).
\item[5.] \(X \ngeqtyp A \Impl B\).
\item[6.] \(\O A \eqtyp B \Impl C\) implies \(A \eqtyp C \eqtyp \t\),
    which should not be the case for \(\peqtyp\)
    because of \r{{\peqtyp}\mbox{-{\bf K}/{\bf L}}}.
\end{Enumerate}
In fact, Items~1, 4 and 5 can be shown by Theorems~\ref{eqtyp-soundness}
and \ref{peqtyp-soundness}
considering an appropriate type environment under
a certain non-trivial interpretation\footnote{%
    Proposition~\ref{geqtyp-var-t} can be also shown in the same way.}.
However, we need some more preparation to realize that the other expectations
are also fulfilled.
Because of the existence of \r{{\eqtyp}\mbox{-}{\Impl}{\t}}
(and also \r{{\peqtyp}\mbox{-{\bf K}/{\bf L}}} in case of \(\peqtyp\)),
the equivalence relation \(\eqtyp\) (or \(\peqtyp\)) is not identical to
the equality as (possibly infinite) labeled trees obtained by unfolding
recursive types.
So it is convenient to define canonical representation of type expressions.

\begin{definition}[Canonical type expressions]
    \ilabel{canon-type-def}{canonical type expressions}
    \ilabel*{type expressions!canonical}
    \ilabel{cte-def}{CTExp@$\protect\CTE, \protect\CTEp$}
    \ilabel*{syntax!CTExp@$\protect\CTE,\protect\CTEp$}
We define a set \(\CTE\) (respectively, \(\CTEp\)) of
{\em canonical} type expressions as follows.
\begin{eqnarray*}
    \CTE & \bnfdef & \t \bnfor \O^n\: \TV
			\bnfor \O^n\, (\TE_1 \Impl \TE_2) \\
    \CTEp & \bnfdef & \t \bnfor \O^n\: \TV
			\bnfor \TE_1 \Impl \TE_2
\end{eqnarray*}
where \(n\) is an arbitrary non-negative integer and
\(\TE_2\) is not a {\tvariant}.
\end{definition}

The following definition can be regarded as an effective procedure
for finding canonical forms of given type expressions, which will be
immediately verified by Proposition~\ref{geqtyp-canon}.
Later, also by Proposition~\ref{canon-congr},
it will be shown that
\(A \ngeqtyp B\,\) if canonical forms of two type expressions \(A\) and \(B\)
fit in different categories of the three forms above.

\begin{definition}
    \ilabel{canon-def}{c@$\protect\Canon{A}, \Canonp{A}$}
For each type expression \(A\), we define \(\Canon{A}\) as follows.
\begin{Eqnarray*}
\Canon{A} &=& \t & (\(A\) is a {\tvariant}) \\
\Canon{X} &=& X \\
\Canon{(\O A)} &=& \O \Canon{A} & (\(\,\O A\) is not a \tvariant) \\
\Canon{(A \Impl B)} &=& A \Impl B & (\(A \Impl B\) is not a \tvariant) \\
\Canon{(\fix{X}A)} &=&
    \Canon{A}[\fix{X}A/X] & (\(\fix{X}A\) is not a \tvariant) \\[6pt]
\na{We similarly define \(\Canonp{A}\)
by adjusting the definition of \(\Canon{(\O A)}\) as follows.}
\noalign{\vskip6pt}
\Canonp{(\O A)} &=&
    \Choice{%
	\O \O^n X & \mbox{(\(\Canonp{A} = \O^n X\))} \\
	\O B \Impl \O C \mskip60mu& \mbox{(\(\Canonp{A} = B \Impl C\))}}
	\mskip-500mu
\end{Eqnarray*}
\end{definition}
For example, let \(A = \O\,\fix{X}\,\O(X \Impl \O Y)\) and
\(B = X \Impl \fix{Y}\,X \Impl \O (Z \Impl Y)\).
Then, \(\Canon{A} = \O\O((\fix{X}\,\O(X \Impl \O Y)) \Impl \O Y)\),
\(\Canonp{A} = \O\O(\fix{X}\,\O(X \Impl \O Y)) \Impl \O\O\O Y\), and
\(\Canon{B} = \Canonp{B} = \t\).

\ilabel{canong-def}{c Z@$\protect\Canong{A}$}
In order to make the description concise,
we will use \(\Canong{A}\),
to denote either \(\Canon{A}\) or \(\Canonp{A}\), according to
the context in which we consider \(\eqtyp\) or \(\peqtyp\), respectively.

\begin{proposition}\label{geqtyp-canon}
\(\Canong{A}\) is a canonical type expression
such that \(\Canong{A} \geqtyp A\), that is,
\(\Canon{A} \eqtyp A\) and \(\Canonp{A} \peqtyp A\).
\end{proposition}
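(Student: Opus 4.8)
The plan is to prove both assertions at once by induction on $h(A)$, with cases on the form of $A$, treating $\eqtyp$ and $\peqtyp$ uniformly through $\geqtyp$ wherever the two coincide. First I would dispose of the case in which $A$ is a {\tvariant}: then $\Canong{A} = \t$, which is canonical by definition, and $\t \geqtyp A$ holds by Theorem~\ref{geqtyp-t-tvariant}; so in the remaining cases I may assume $A$ is not a {\tvariant}. The cases $A = X$ and $A = B \Impl C$ are immediate: $\Canong{X} = X = \O^0 X$ and $\Canong{(B \Impl C)} = B \Impl C$ are canonical (for the latter, $C$ is not a {\tvariant} by Proposition~\ref{Impl-tvariant}, as the grammar of $\CTE$ and $\CTEp$ demands of a consequent), and both equivalences hold by \r{{\eqtyp}\mbox{-reflex}}.

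For $A = \O B$, Proposition~\ref{O-tvariant} shows $B$ is not a {\tvariant}, so by the induction hypothesis $\Canong{B}$ is canonical with $\Canong{B} \geqtyp B$; moreover $\Canong{B} \neq \t$, since $\Canong{B} = \t$ would force $B \geqtyp \t$ and make $B$ a {\tvariant} by Theorem~\ref{geqtyp-t-tvariant}. In the $\eqtyp$ case $\Canon{(\O B)} = \O\Canon{B}$ merely prepends an $\O$ to $\O^n X$ or $\O^n(C \Impl D)$, leaving it canonical, and $\O\Canon{B} \eqtyp \O B$ by \r{{\eqtyp}\mbox{-}{\O}}. In the $\peqtyp$ case I would split on the shape of $\Canonp{B}$: if $\Canonp{B} = \O^n Y$ then $\Canonp{(\O B)} = \O^{n+1} Y$; if $\Canonp{B} = C \Impl D$ then $\Canonp{(\O B)} = \O C \Impl \O D$, which is canonical since $\O D$ is not a {\tvariant} by Proposition~\ref{O-tvariant}, and whose equivalence $\O C \Impl \O D \peqtyp \O(C \Impl D) \peqtyp \O B$ follows from \r{{\peqtyp}\mbox{-{\bf K}/{\bf L}}} and \r{{\eqtyp}\mbox{-}{\O}}.

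The main obstacle is the case $A = \fix{X}B$, where $\Canong{(\fix{X}B)} = \Canong{B}[\fix{X}B/X]$. The equivalence is the easy half: the induction hypothesis gives $\Canong{B} \geqtyp B$, Proposition~\ref{geqtyp-subst} then yields $\Canong{B}[\fix{X}B/X] \geqtyp B[\fix{X}B/X]$, and \r{{\eqtyp}\mbox{-fix}} gives $B[\fix{X}B/X] \geqtyp \fix{X}B$. Proving that the substituted expression is still canonical is the delicate point. Note first that $\Canong{B} \neq \t$, for $\Canong{B} = \t$ would make $B$ a {\tvariant} (Theorem~\ref{geqtyp-t-tvariant}) and hence $\fix{X}B$ a {\tvariant} (Propositions~\ref{tvariant-subst1} and \ref{tvariant-fix}), against our assumption. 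Since $\fix{X}B \in \TE$, $B$ is proper in $X$, and as $\fix{X}B$ is not a {\tvariant}, Proposition~\ref{PIdp-positive} gives $\PIdp(B,\,X) > 0$, whence $\PIdp(\Canong{B},\,X) > 0$ by Proposition~\ref{geqtyp-depth}. This excludes $\Canong{B} = \O^n X$, for which $\PIdp(\O^n X,\,X) = 0$; hence when $\Canong{B} = \O^n Y$ the head variable $Y$ is distinct from $X$ and $(\O^n Y)[\fix{X}B/X] = \O^n Y$ is unchanged. When $\Canong{B}$ has the shape $\O^n(C \Impl D)$ (or $C \Impl D$ in the $\peqtyp$ case) with $D$ not a {\tvariant}, I would appeal to Proposition~\ref{tvariant-subst2}: were $D[\fix{X}B/X]$ a {\tvariant}, then either $D$ is already one or $\fix{X}B$ is, both impossible, so the consequent remains not a {\tvariant} and the substituted form stays canonical. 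Together with the equivalence above, this closes the $\fix$ case and completes the induction.
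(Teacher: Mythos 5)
Your proof is correct and follows essentially the same route as the paper's: dispose of the {\tvariant} case via Theorem~\ref{geqtyp-t-tvariant}, induct on \(h(A)\) with \(A = \fix{X}B\) as the crux, rule out \(\Canong{B} = \t\), and then split on whether \(\Canong{B}\) is variable-headed or implication-headed, using Proposition~\ref{tvariant-subst2} to keep the substituted consequent a non-{\tvariant}. The only divergences are local and both sound: you derive \(\Canong{B} \not= \t\) by propagating {\tvariant}-ness through Propositions~\ref{tvariant-subst1} and \ref{tvariant-fix} where the paper argues \(\fix{X}\t \geqtyp \t\), and you exclude \(Y = X\) in the variable-headed subcase via \(\PIdp(B,\,X) > 0\) (Propositions~\ref{PIdp-positive} and \ref{geqtyp-depth}) where the paper uses \(\fix{X}\O^n X \geqtyp \t\) (Propositions~\ref{geqtyp-fix-congr} and \ref{geqtyp-t2}).
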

\begin{proof}
If \(A\) is a {\tvariant}, then \(\Canong{A} = \t \geqtyp A\) by
Definition~\ref{canon-def} and Theorem~\ref{geqtyp-t-tvariant}.
Hence, we assume that \(A\) is not a {\tvariant};
and therefore, \(A \ngeqtyp \t\).
The proof proceeds by induction on \(h(A)\),
and by cases on the form of \(A\).
Most cases are straightforward.
In the case when \(A = \fix{X}B\) for some \(X\) and \(B\),
we get that \(\Canong{B}\) is canonical and \(\Canong{B} \geqtyp B\)
by induction hypothesis.
Hence, we get \(\Canong{A} = \Canong{B}[A/X] \geqtyp B[A/X] \geqtyp A\)
by Proposition~\ref{geqtyp-subst} and \r{{\eqtyp}\mbox{-fix}},
and get \(\Canong{B} \not= \t\) from \(A \ngeqtyp \t\),
since \(\Canong{B} = \t\) implies
\(A = \fix{X}B \geqtyp \fix{X}\Canong{B} = \fix{X}\t \geqtyp \t\)
by Proposition~\ref{geqtyp-fix-congr} and \r{{\eqtyp}\mbox{-fix}}.
Therefore,
\begin{Enumerate}
\item[(a)] \(\Canong{B} = \O^n Y\) for some \(n\) and \(Y\), or
\item[(b)] \(\Canong{B} = \O^n(C \Impl D)\)
for some \(n\), \(C\) and \(D\) such that \(D\) is not a {\tvariant},
where \(n = 0\) in case of \(\peqtyp\).
\end{Enumerate}
In case (a), \(Y \not= X\) since \(Y = X\) implies \(A = \fix{X}B
\geqtyp \fix{X}\O^n X \geqtyp \t\)
by Propositions~\ref{geqtyp-fix-congr} and \ref{geqtyp-t2}.
Hence, \(\Canong{A}\) is canonical
since \(\Canong{A} = \Canong{B}[A/X] = (\O^n Y)[A/X] = \O^n Y\).
As for case (b),
since neither \(A\) nor \(D\) is a {\tvariant},
\(D[A/X]\) is not a {\tvariant} either, by Proposition~\ref{tvariant-subst2},
which implies that \(\Canong{B}[A/X]\) is canonical.
Thus, \(\Canong{A}\) is a canonical type expression.
\qed\CHECKED{2014/04/29, 07/08}
\end{proof}

The following lemma will be used in Proposition~\ref{canon-congr}.

\begin{lemma}\label{canon-subst-lemma}
If neither \(A[\t/X]\) nor \(B\) is a {\tvariant}, then
    \(\Canong{A[B/X]} = \Canong{A}[B/X]\).
\end{lemma}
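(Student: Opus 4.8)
The plan is to argue by induction on $h(A)$, splitting into cases on the form of $A$, proving the statement simultaneously for $\Canon{\cdot}$ and $\Canonp{\cdot}$ (writing $\Canong{\cdot}$ for either, as in the text). The first observation is that the hypothesis already disposes of two cases: since $A$ being a {\tvariant} would make $A[\t/X]$ a {\tvariant} by Proposition~\ref{tvariant-subst1}, $A$ cannot be a {\tvariant}; and $A = X$ is impossible, for then $A[\t/X] = \t$ would be a {\tvariant}. The recurring technical tool throughout will be that substituting the non-{\tvariant} $B$ never creates a {\tvariant}: whenever $E$ is not a {\tvariant}, neither is $E[B/X]$, by Proposition~\ref{tvariant-subst2} (as $B$ is not a {\tvariant}). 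This guarantees that every side condition ``$\dots$ is not a {\tvariant}'' in Definition~\ref{canon-def} remains satisfied after the substitution, so that the clauses of $\Canong{\cdot}$ apply to $A[B/X]$ in exactly the same branch as to $A$.

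First I would treat the easy cases. If $A = Y$ with $Y \neq X$, both sides equal $Y$. If $A = C \Impl D$ (not a {\tvariant}, so $D$ is not a {\tvariant} by Proposition~\ref{Impl-tvariant}), then $\Canong{A} = A$, and since $D[B/X]$ is again not a {\tvariant} the expression $C[B/X] \Impl D[B/X]$ is not a {\tvariant}, whence $\Canong{A[B/X]} = C[B/X] \Impl D[B/X] = \Canong{A}[B/X]$; no induction hypothesis is needed here because the canonical form of a function type is itself. For the cases that do recurse I must first check that the hypothesis propagates to the immediate subexpression. For $A = \O C$, the assumption that $A[\t/X] = \O(C[\t/X])$ is not a {\tvariant} forces $C[\t/X]$ not to be one by Proposition~\ref{O-tvariant}; and for $A = \fix{Y}C$ it likewise forces $C[\t/X]$ not to be a {\tvariant}, using Propositions~\ref{tvariant-subst1} and~\ref{tvariant-fix} to pass between $\fix{Y}(C[\t/X])$ and its unfolding. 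In both cases the induction hypothesis therefore yields $\Canong{C[B/X]} = \Canong{C}[B/X]$.

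For $A = \fix{Y}C$ I would rename so that $Y \neq X$ and $Y \notin \FTV{B}$. The definition gives $\Canong{(\fix{Y}C)} = \Canong{C}[\fix{Y}C/Y]$, and since $A[B/X] = \fix{Y}(C[B/X])$ is not a {\tvariant}, also $\Canong{A[B/X]} = \Canong{C[B/X]}[\fix{Y}(C[B/X])/Y]$. Substituting the induction hypothesis for $\Canong{C[B/X]}$, the identity to be proved reduces to the purely syntactic commutation $(\Canong{C}[B/X])[\fix{Y}(C[B/X])/Y] = (\Canong{C}[\fix{Y}C/Y])[B/X]$, which follows from the standard substitution lemma for type expressions: permuting $[\fix{Y}C/Y]$ past $[B/X]$ is licit because $Y \neq X$ and $Y \notin \FTV{B}$, and $(\fix{Y}C)[B/X] = \fix{Y}(C[B/X])$. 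The $\O$ case for $\Canon{\cdot}$ is equally direct, since $\Canon{(\O C)} = \O\,\Canon{C}$ commutes past the substitution once one knows $C[B/X]$ is not a {\tvariant}.

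The main obstacle is the $\O$ case for $\Canonp{\cdot}$, where $\Canonp{(\O C)}$ branches on the shape of $\Canonp{C}$ (either $\O^n Y$ or a function type). I would set $C' = C[B/X]$, so that $\Canonp{C'} = \Canonp{C}[B/X]$ by the induction hypothesis, and show the substitution preserves the branch. If $\Canonp{C} = D \Impl E$ with $E$ not a {\tvariant}, then $E[B/X]$ is again not a {\tvariant}, so $\Canonp{C'} = D[B/X] \Impl E[B/X]$ stays a function type and both sides compute to $\O D[B/X] \Impl \O E[B/X]$. The delicate subcase is $\Canonp{C} = \O^n Y$, where I must exclude $Y = X$, as otherwise substitution could alter the shape. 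This I would rule out by soundness: $\Canonp{C} = \O^n X$ gives $C \peqtyp \O^n X$ by Proposition~\ref{geqtyp-canon}, whence $C[\t/X] \peqtyp \O^n \t \peqtyp \t$ by Proposition~\ref{geqtyp-subst} and Proposition~\ref{geqtyp-t1}, so $C[\t/X]$ would be a {\tvariant} by Theorem~\ref{geqtyp-t-tvariant}, contradicting the hypothesis. With $Y \neq X$ the branch is preserved and both sides equal $\O^{n+1} Y$, completing the argument.
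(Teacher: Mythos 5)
Your proof is correct and follows essentially the same route as the paper's: induction on $h(A)$ with cases on the form of $A$, propagating the non-{\tvariant} hypothesis to subexpressions, excluding $Y = X$ in the $\O^n Y$ subcase via the equivalence $C[\t/X] \peqtyp \t$, and reducing the $\fix{Y}C$ case to the commutation of the two substitutions under $Y \neq X$, $Y \notin \FTV{B}$. The only (harmless) organizational difference is that you dispatch the $\O$-case for $\Canon{\cdot}$ directly rather than splitting it into the two subcases the paper uses for $\Canong{\cdot}$ uniformly.
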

\begin{proof}
By induction on \(h(A)\), and by cases on the form of \(A\).
Suppose that neither \(A[\t/X]\) nor \(B\) is a {\tvariant}, which
also implies that neither \(A\) nor \(A[B/X]\) is a {\tvariant}
by Propositions~\ref{tvariant-subst1} and \ref{tvariant-subst2}.

\Case{\(A = Y\) for some \(Y\).}
In this case,
we get \(Y \not= X\) since \(A[\t/X]\) is not a {\tvariant}.
Therefore, \(\Canong{A[B/X]} = \Canong{Y[B/X]} = \Canong{Y} = Y\) and
\(\Canong{A}[B/X] = \Canong{Y}[B/X] = Y[B/X] = Y\).

\Case{\(A = \O C\) for some \(C\).}
Since \(A[\t/X]\) is not a {\tvariant},
\(C[\t/X]\) is not either, by Proposition~\ref{O-tvariant}.
Hence, by induction hypothesis,
\begin{eqnarray}
    \label{canon-subst-lemma-01}
    \Canong{C[B/X]} &=& \Canong{C}[B/X].
\end{eqnarray}
Furthermore, \(C\) is not a {\tvariant} since \(A\) is not.
Hence, there are two subcases on the form of \(\Canong{C}\) as follows.
\begin{enumerate}[(\romannumeral 1)]\itemsep=6pt
\item \(\Canong{C} = \O^n Y\) for some \(n\) and \(Y\).
    By Definition~\ref{canon-def},
    \(\Canong{A} = \O^{n{+}1} Y\) in this case.
    By Propositions~\ref{geqtyp-canon} and \ref{geqtyp-subst}, we get
    \(A[\t/X] \geqtyp \Canong{A}[\t/X] = (\O^{n{+}1} Y)[\t/X]\).
    Thus, \(Y \not= X\) since \(A[\t/X]\) is not a {\tvariant}.
    Hence, \(\Canong{C[B/X]} = \Canong{C}[B/X]
    = (\O^n Y)[B/X] = \O^n Y\) by (\ref{canon-subst-lemma-01}).
    Therefore, \(\Canong{A[B/X]} = \Canong{(\O C)[B/X]} =
    \Canong{(\O C[B/X])} = \O^{n{+}1} Y\),
    and \(\Canong{A}[B/X] = (\O^{n{+}1} Y)[B/X]
    = \O^{n{+}1} Y\).
\item \(\Canong{C} = \O^n(D\Impl E)\) for some \(n\), \(D\) and \(E\),
    where \(n = 0\) in case of \(\peqtyp\).
    We get \(\Canon{A} = \O^{n{+}1}(D \Impl E)\)
    and \(\Canonp{A} = \O D \Impl \O E\) by Definition~\ref{canon-def}.
    On the other hand, we also get
    \(\Canong{C[B/X]} = \Canong{C}[B/X] = \O^n(D[B/X] \Impl E[B/X])\)
    by (\ref{canon-subst-lemma-01}).
    Therefore, in case of \(\eqtyp\),
    \(\Canon{A[B/X]}
    = \Canon{(\O C[B/X])}
    = \O^{n{+}1}(D[B/X] \Impl E[B/X])\)
    and \(\Canon{A}[B/X]
    = (\O^{n{+}1}(D \Impl E))[B/X]
    = \O^{n{+}1}(D[B/X] \Impl E[B/X])\).
    Similarly, for \(\peqtyp\),
    \(\Canonp{A[B/X]}\penalty0
    = \Canonp{(\O C[B/X])} = \O D[B/X] \Impl \O E[B/X]\)
    and \(\Canonp{A}[B/X]
    = (\O D \Impl \O E)[B/X] = \O D[B/X] \Impl \O E[B/X]\).
\end{enumerate}

\Case{\(A = C \Impl D\) for some \(C\) and \(D\).}
Since \(A\) is not a {\tvariant}, \(D\) is not, either.
Hence, \(\Canong{A} = C \Impl D\) by Definition~\ref{canon-def}.
Therefore, \(\Canong{A[B/X]} = \Canong{(C \Impl D)[B/X]}
= \Canong{(C[B/X] \Impl D[B/X])} = C[B/X] \Impl D[B/X]\),
and \(\Canong{A}[B/X] = (C \Impl D)[B/X] = C[B/X] \Impl D[B/X]\).

\Case{\(A = \fix{Y}C\) for some \(Y\) and \(C\).}
We assume that \(Y \not\in \{\,X\,\} \cup \FTV{B}\)
without loss of generality.
Since \(A[\t/X]\) is not a {\tvariant}, \(C[\t/X]\) is not, either.
Therefore, \(\Canong{A[B/X]} = \Canong{(\fix{Y}C[B/X])}
= \Canong{C[B/X]}[\fix{Y}C[B/X]/Y] = \Canong{C[B/X]}[A[B/X]/Y]
= \Canong{C}[B/X]\penalty100[A[B/X]/Y] = \Canong{C}\penalty0[B/X,A[B/X]/Y]\)
by Definition~\ref{canon-def} and the induction hypothesis on \(C\).
On the other hand,
\(\Canong{A}[B/X] = \Canong{C}[A/Y][B/X] = \Canong{C}[B/X,A[B/X]/Y]\).
\qed\CHECKED{2014/04/29, 07/08}
\end{proof}

\begin{proposition}\label{canon-congr}
Suppose that \(A \geqtyp B\).
\begin{enumerate}[(\romannumeral 1)]\itemsep=3pt
    \item If \(\Canong{A} = \t\), then \(\Canong{B} = \t\).
    \item If \(\Canong{A} = \O^n X\) for some \(n\) and \(X\),
	then \(\Canong{B} = \O^n X\).
    \item If \(\Canong{A} = \O^n(C \Impl D)\) for some \(n\), \(C\) and \(D\),
	then \(\Canong{B} = \O^n(C' \Impl D')\) for some \(C'\) and \(D'\)
such that \(C \geqtyp C'\) and \(D \geqtyp D' \ngeqtyp \t\),
where \(n = 0\) in case of\/ \(\peqtyp\).
\end{enumerate}
\end{proposition}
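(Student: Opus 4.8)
The plan is to prove the three cases together by induction on the derivation of $A \geqtyp B$, after first disposing of case~(i) outright. For case~(i), note that by Proposition~\ref{geqtyp-canon} we have $\Canong{A} \geqtyp A$, so $\Canong{A} = \t$ is equivalent to $A \geqtyp \t$, i.e.\ (by Theorem~\ref{geqtyp-t-tvariant}) to $A$ being a {\tvariant}; since Proposition~\ref{geqtyp-tvariant} preserves {\tvariant}s along $\geqtyp$, this holds for $A$ iff it holds for $B$, and case~(i) follows. This observation also shows that whenever $\Canong{A} \neq \t$ (cases~(ii) and~(iii)) neither $A$ nor $B$ is a {\tvariant}, and that the codomain of any canonical function form is not a {\tvariant}, hence $\ngeqtyp \t$ by Theorem~\ref{geqtyp-t-tvariant}. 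The key structural fact I would use throughout is that, by Definition~\ref{canon-def} together with Proposition~\ref{geqtyp-canon}, every canonical expression falls into exactly one of the three mutually exclusive shapes $\t$, $\O^n X$, $\O^n(\TE_1 \Impl \TE_2)$, and that within each shape the data $n$, $X$, and the pair $\TE_1 \Impl \TE_2$ are uniquely determined.

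For the remaining rule cases I would prove~(ii) and~(iii) simultaneously. The congruence rules are direct: in the \r{{\eqtyp}\mbox{-}{\O}} case, writing $A = \O A'$, $B = \O B'$ with $A' \geqtyp B'$, I compute $\Canong{\O A'}$ from $\Canong{A'}$ using Definition~\ref{canon-def} and feed the induction hypothesis through (for $\peqtyp$ this is where the $\O$-distribution clause of $\Canonp{-}$ is exercised, turning a function body $P \Impl Q$ into $\O P \Impl \O Q$, and the required component equivalences $\O P \geqtyp \O P'$, $\O Q \geqtyp \O Q'$ follow from \r{{\eqtyp}\mbox{-}{\O}}); the \r{{\eqtyp}\mbox{-}{\Impl}} case is immediate since $\Canong{(A_1 \Impl A_2)} = A_1 \Impl A_2$ when this is not a {\tvariant}. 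The rules \r{{\eqtyp}\mbox{-reflex}} and \r{{\eqtyp}\mbox{-}{\Impl}{\t}} are trivial or fall under case~(i), and the rule \r{{\peqtyp}\mbox{-{\bf K}/{\bf L}}} is handled by a direct computation showing $\Canonp{(\O(P \Impl Q))} = \Canonp{(\O P \Impl \O Q)} = \O P \Impl \O Q$. The \r{{\eqtyp}\mbox{-trans}} rule composes forward: knowing the shape of $\Canong{A}$ pins the shape of the intermediate canonical form via one induction hypothesis and then that of $\Canong{B}$ via the other, with the component relations chained by transitivity of $\geqtyp$. The \r{{\eqtyp}\mbox{-symm}} rule is the first place the partition is essential: from the induction hypothesis for $B \geqtyp A$ I read off the shape of $\Canong{A}$ from that of $\Canong{B}$ by exclusion, recover the component equivalences using symmetry of $\geqtyp$, and obtain the codomain condition $\ngeqtyp \t$ automatically from canonicity of $\Canong{B}$.

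I expect the \r{{\eqtyp}\mbox{-uniq}} case to be the main obstacle, with \r{{\eqtyp}\mbox{-fix}} serving as a warm-up. For \r{{\eqtyp}\mbox{-fix}}, with $A = \fix{X}A'$ and $B = A'[\fix{X}A'/X]$, Definition~\ref{canon-def} gives $\Canong{A} = \Canong{A'}[\fix{X}A'/X]$, while Lemma~\ref{canon-subst-lemma} gives $\Canong{B} = \Canong{A'}[\fix{X}A'/X]$ as well, so $\Canong{A} = \Canong{B}$ outright; the hypotheses of the lemma hold because $A'$ is proper in $X$ and $\fix{X}A'$ is not a {\tvariant}, whence $A'[\t/X]$ is not a {\tvariant} by Proposition~\ref{tvariant-proper-subst}. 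For \r{{\eqtyp}\mbox{-uniq}}, where $B = \fix{X}C$, $A \geqtyp C[A/X]$ and $C$ is proper in $X$, I would again use Lemma~\ref{canon-subst-lemma} to compute $\Canong{C[A/X]} = \Canong{C}[A/X]$ and $\Canong{B} = \Canong{C}[B/X]$, then determine the shape of $\Canong{A}$ from that of $\Canong{C[A/X]}$ by applying the partition to the induction hypothesis (ruling out $Y = X$ in the variable subcase via Propositions~\ref{geqtyp-fix-congr} and~\ref{geqtyp-t2}, which would otherwise force $B \geqtyp \t$).

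The genuinely delicate point is the function subcase of~(iii): the induction hypothesis matches the components of $\Canong{A}$ against those of $\Canong{C}[A/X]$, whereas the goal requires matching them against those of $\Canong{C}[B/X]$. The decisive trick is that $A \geqtyp B$ is exactly the relation under analysis and may therefore be used as a hypothesis: combining it with Proposition~\ref{geqtyp-subst} yields $P[A/X] \geqtyp P[B/X]$ and $Q[A/X] \geqtyp Q[B/X]$ for the relevant subexpressions $P, Q$ of $\Canong{C}$, and transitivity of $\geqtyp$ then bridges the gap between the two substituted forms, with the codomain condition $\ngeqtyp \t$ again coming from canonicity of $\Canong{B}$. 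This closes the induction.
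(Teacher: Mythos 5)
Your proof is correct and follows essentially the same route as the paper's: induction on the derivation, Lemma~\ref{canon-subst-lemma} for the \r{{\eqtyp}\mbox{-fix}} and \r{{\eqtyp}\mbox{-uniq}} cases, and---at the decisive point you correctly single out---using \(A \geqtyp B\) itself together with Proposition~\ref{geqtyp-subst} and transitivity to pass from the \([A/X]\)-instances to the \([B/X]\)-instances of the components of \(\Canong{C}\). The only organizational difference is that the paper strengthens the induction claim to a symmetric form (``\(A \geqtyp B \ngeqtyp \t\) or \(B \geqtyp A \ngeqtyp \t\)'') so that \r{{\eqtyp}\mbox{-symm}} is absorbed into the statement, whereas you keep the one-sided statement and dispatch \r{{\eqtyp}\mbox{-symm}} by exclusion over the three mutually disjoint canonical shapes; both work equally well.
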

\begin{proof}
If \(A \geqtyp B\), and if
\(A\) or \(B\) is a {\tvariant}, then \(\Canong{A} = \Canong{B} = \t\)
by Definition~\ref{canon-def} and Theorem~\ref{geqtyp-t-tvariant}.
Hence, we assume that neither \(A\) nor \(B\) is a {\tvariant}, i.e.,
\(A \ngeqtyp \t\) and \(B \ngeqtyp \t\).
We prove the following more general claim
by induction on the derivation of \(A \geqtyp B\) or \(B \geqtyp A\).
\begin{quote}
    If \(A \geqtyp B \ngeqtyp \t\) or \(B \geqtyp A \ngeqtyp \t\),
    then ({\romannumeral 2}) and ({\romannumeral 3}) hold.
\end{quote}
Suppose that \(A \geqtyp B \ngeqtyp \t\) or \(B \geqtyp A \ngeqtyp \t\).
By cases on the last rule applied in the derivation.
Most cases are trivial.
\ifdetail

\Cases{\r{{\eqtyp}\mbox{-reflex}},
    \r{{\eqtyp}\mbox{-symm}} and \r{{\eqtyp}\mbox{-trans}}.}
Trivial.

\Case{\r{{\eqtyp}\mbox{-}{\O}}.}
If \(A \geqtyp B \ngeqtyp \t\), then
\(A = \O A'\), \(B = \O B'\) and \(A' \eqtyp B'\) for some \(A'\) and \(B'\).
Note that \(\Canong{A'} \not= \t\) by Propositions~\ref{geqtyp-canon}
and \ref{geqtyp-t1} since \(A \ngeqtyp \t\).
If \(\Canong{A'} = \O^n X\) for some \(n\) and \(X\), then
    \(\Canong{A} = \O^{n+1} X\) by Definition~\ref{canon-def}, and
\(\Canong{B'} = \O^n X\) by induction hypothesis.
Hence, \(\Canong{B} = \O^{n+1} X\) by Definition~\ref{canon-def}.
On the other hand,
If \(\Canong{A'} = \O^n (C \Impl D)\) for some \(n\), \(C\) and \(D\),
where \(n = 0\) in case of \(\peqtyp\), then
\(\Canon{A} = \O^{n+1} (C \Impl D)\), \(\Canonp{A} = \O C \Impl \O D\),
and \(\Canong{B'} = \O^n (C' \Impl D')\) for some \(C'\) and \(D'\)
such that \(C \geqtyp C'\) and \(D \geqtyp D' \ngeqtyp \t\)
by induction hypothesis.
Therefore,
\(\Canon{B} = \O^{n+1} (C' \Impl D')\) and \(\Canonp{B} = \O C' \Impl \O D'\).
Thus, both ({\romannumeral 2}) and ({\romannumeral 3}) hold.
The proof for the case when \(B \geqtyp A \ngeqtyp \t\) is symmetrical.

\Case{\r{{\eqtyp}\mbox{-}{\Impl}}.}
In this case, \(A = C \Impl D\) and \(B = E \Impl F\) for some \(C\), \(D\),
\(E\) and \(F\) such that \(C \geqtyp E\) and \(D \geqtyp F\).
Note that \(D \geqtyp F \ngeqtyp \t\)
by Propositions~\ref{Impl-tvariant} and \ref{geqtyp-t-tvariant}
since \(A \geqtyp B \ngeqtyp \t\).
By Definition~\ref{canon-def},
\(\Canong{A} = C \Impl D\) and \(\Canong{B} = E \Impl F\).
Hence, both ({\romannumeral 2}) and ({\romannumeral 3}) hold, again.

\Case{\r{{\eqtyp}\mbox{-}{\Impl}{\t}}.}
We do not have to consider this case
since neither \(A\) nor \(B\) is {\tvariant} by assumption.
\else
The only interesting cases are the following three.
\fi 

\Case{\r{{\eqtyp}\mbox{-fix}}.}
In this case, for some \(Y\) and \(C\),
\(A = \fix{Y}C\) and \(B = C[\fix{Y}C/Y]\), or vice versa.
Note that \(C[\t/Y] \ngeqtyp \t\),
since \(C[\t/Y] \geqtyp \t\) implies \(\fix{Y}C \geqtyp \t\)
by \r{{\eqtyp}\mbox{-uniq}}.
Therefore,
\(\Canong{(\fix{Y}C)} = \Canong{C}[\fix{Y}C/Y] = \Canong{C[\fix{Y}C/Y]}\)
by Lemma~\ref{canon-subst-lemma}.

\Case{\r{{\eqtyp}\mbox{-uniq}}.}
In this case, there exist some \(Y\), \(C\) and \(D\)
such that \(D \geqtyp C[D/Y]\), and either
\(A = D\) and \(B = \fix{Y}C\), or vice versa.
Similarly to the previous case, \(C[\t/Y]\) is not a {\tvariant}; and hence,
\(\Canong{C[D/Y]} = \Canong{C}[D/Y]\) by Lemma~\ref{canon-subst-lemma}.
We also get \(\Canong{C} \not= \t\)
and \(\Canong{C}\not= \O^n Y\) for any \(n\),
since \(\fix{Y}C\) is not a {\tvariant}.
That is, there are two subcases on the form of \(\Canong{C}\) to consider.
\begin{Enumerate}
\item[(a)] \(\Canong{C} = \O^n X\) for some \(n\) and \(X\) such that
\(X \not= Y\).
\item[(b)] \(\Canong{C} = \O^n(E \Impl F)\) for some \(n\), \(E\) and \(F\)
such that \(F \ngeqtyp \t\), where \(n = 0\) in case of \(\peqtyp\).
\end{Enumerate}
In case (a),
\(\Canong{C[D/Y]} = \Canong{C}[D/Y] = (\O^n X)[D/Y] = \O^n X\); and hence,
\(\Canong{D} = \O^n X\) by induction hypothesis.
On the other hand,
\(\Canong{(\fix{Y}C)} = \Canong{C}[\fix{Y}C/Y]
    = (\O^n X)[\fix{Y}C/Y] = \O^n X\).
As for case (b),
\(\Canong{C[D/Y]} = \Canong{C}[D/Y] = (\O^n(E \Impl F))[D/Y]
= \O^n(E[D/Y] \Impl F[D/Y])\);
and hence, by induction hypothesis,
\(\Canong{D} = \O^n(E' \Impl F')\) for some \(E'\) and \(F'\)
such that \(E' \geqtyp E[D/Y]\) and \(F' \geqtyp F[D/Y] \ngeqtyp \t\).
On the other hand,
\(\Canong{(\fix{Y}C)} = \Canong{C}[\fix{Y}C/Y]
    = (\O^n(E \Impl F))[\fix{Y}C/Y]
    = \O^n(E[\fix{Y}C/Y] \Impl F[\fix{Y}C/Y])\).
Note that
\(E' \geqtyp E[D/Y] \geqtyp E[\fix{Y}C/Y]\) and
\(F' \geqtyp F[D/Y] \geqtyp F[\fix{Y}C/Y]\) by Proposition~\ref{geqtyp-subst}
since \(D \geqtyp \fix{Y}C\).

\Case{\r{{\peqtyp}\mbox{-{\bf K}/{\bf L}}}.}
This case only applies to \(\peqtyp\).
In this case,
\(A = \O(C \Impl D)\) and \(B = \O C \Impl \O D\), or vice versa.
Therefore,
\(\Canonp{A} = \Canonp{B} = \O C \Impl \O D\) by Definition~\ref{canon-def}.
\qed\CHECKED{2014/07/08}
\end{proof}

\begin{proposition}\label{canon-geqtyp-equiv}
\(A \geqtyp B\) if and only if either
\begin{enumerate}[(\romannumeral 1)]\itemsep=3pt
\item \(\Canong{A} = \Canong{B} = \t\),
\item \(\Canong{A} = \Canong{B} = \O^n X\)
for some \(n\) and \(X\), or
\item \(\Canong{A} = \O^n(C \Impl D)\)
and \(\Canong{B} = \O^n(E \Impl F)\)
for some \(n\), \(C\), \(D\), \(E\) and \(F\)
such that \(C \geqtyp E\) and \(D \geqtyp F\).
Furthermore, \(n = 0\) in case of\/ \(\peqtyp\).
\end{enumerate}
\end{proposition}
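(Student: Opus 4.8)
The plan is to obtain both directions of the equivalence as essentially immediate consequences of Propositions~\ref{geqtyp-canon} and \ref{canon-congr}, since the substantive work has already been done there. The governing observation is that, by Proposition~\ref{geqtyp-canon}, $\Canong{A}$ is always a canonical type expression and therefore falls into exactly one of the three mutually exclusive shapes $\t$, $\O^n X$, or $\O^n(C \Impl D)$ with $\O^n(C \Impl D)$ not a {\tvariant} (and with $n = 0$ forced in the case of $\peqtyp$). This exhaustive trichotomy on the shape of $\Canong{A}$ is what drives both directions, and it matches clause-for-clause the three cases of the statement.

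For the ``only if'' direction, suppose $A \geqtyp B$, and split into cases according to the shape of $\Canong{A}$. If $\Canong{A} = \t$, then $\Canong{B} = \t$ by Proposition~\ref{canon-congr}({\romannumeral 1}), giving ({\romannumeral 1}). If $\Canong{A} = \O^n X$, then $\Canong{B} = \O^n X$ by Proposition~\ref{canon-congr}({\romannumeral 2}), giving ({\romannumeral 2}). If $\Canong{A} = \O^n(C \Impl D)$, then Proposition~\ref{canon-congr}({\romannumeral 3}) yields $\Canong{B} = \O^n(E \Impl F)$ for some $E$, $F$ with $C \geqtyp E$ and $D \geqtyp F$ (the extra condition $F \ngeqtyp \t$ supplied there being harmless), which is exactly ({\romannumeral 3}). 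As these three shapes are exhaustive for a canonical expression, the forward direction is complete.

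For the ``if'' direction, the main tool is again Proposition~\ref{geqtyp-canon}, which gives $\Canong{A} \geqtyp A$ and $\Canong{B} \geqtyp B$, hence $A \geqtyp \Canong{A}$ by \r{{\eqtyp}\mbox{-symm}}. In cases ({\romannumeral 1}) and ({\romannumeral 2}) we have $\Canong{A} = \Canong{B}$ outright, so $A \geqtyp \Canong{A} = \Canong{B} \geqtyp B$ follows by \r{{\eqtyp}\mbox{-trans}}. In case ({\romannumeral 3}) I would first build $\O^n(C \Impl D) \geqtyp \O^n(E \Impl F)$ from the hypotheses $C \geqtyp E$ and $D \geqtyp F$: one application of \r{{\eqtyp}\mbox{-}{\Impl}} gives $C \Impl D \geqtyp E \Impl F$, and then $n$ applications of \r{{\eqtyp}\mbox{-}{\O}} prefix the modalities (with $n = 0$ handled vacuously). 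Chaining $A \geqtyp \Canong{A} = \O^n(C \Impl D) \geqtyp \O^n(E \Impl F) = \Canong{B} \geqtyp B$ through \r{{\eqtyp}\mbox{-trans}} then delivers $A \geqtyp B$.

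There is no genuine obstacle here: the proposition is a repackaging of Proposition~\ref{canon-congr} and the congruence rules, once one notes that the canonical shapes are exhaustive and that Theorem~\ref{geqtyp-t-tvariant} together with Proposition~\ref{geqtyp-canon} already absorbs every subtlety concerning {\tvariant}s. The only point requiring a moment's care is the uniform treatment of $\eqtyp$ and $\peqtyp$ through $\geqtyp$, in particular the constraint $n = 0$ for $\peqtyp$; this is automatically respected because it is built into the definition of $\CTEp$ and hence into the shapes that $\Canonp{A}$ can take, so no separate argument is needed.
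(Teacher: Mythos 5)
Your proof is correct and follows essentially the same route as the paper: the ``only if'' direction is exactly Proposition~\ref{canon-congr} applied after the exhaustive trichotomy on the shape of \(\Canong{A}\) guaranteed by Proposition~\ref{geqtyp-canon}, and the ``if'' direction chains \(A \geqtyp \Canong{A}\) and \(\Canong{B} \geqtyp B\) through \r{{\eqtyp}\mbox{-}{\O}} and \r{{\eqtyp}\mbox{-}{\Impl}}, which is precisely the paper's one-line argument spelled out.
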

\begin{proof}
The ``if'' part is straightforward by
Proposition~\ref{geqtyp-canon}, \r{{\eqtyp}\mbox{-}{\O}} and
\r{{\eqtyp}\mbox{-}{\Impl}}.
We get the ``only if'' part by
Propositions~\ref{geqtyp-canon} and \ref{canon-congr}.
\qed\CHECKED{2014/04/30, 07/08}
\end{proof}

\Subsection{Other properties of the equality of types}

In the rest of this section, we shall continue to examine other properties
of \(\eqtyp\) and \(\peqtyp\) that are necessary for the proofs
in the succeeding sections.
The conjectures raised at the beginning of the previous subsection will
be proved.

\begin{proposition}\label{geqtyp-O-Impl}\pushlabel
\begin{Enumerate}
\item \label{eqtyp-O-Impl}
If\/ \(\O A \eqtyp B \Impl C\) then \(A \eqtyp C \eqtyp \t\).
\item \label{peqtyp-O-Impl}
If\/ \(\O A \peqtyp B \Impl C\), then (a) \(A \peqtyp C \peqtyp \t\), or
    (b) \(A \peqtyp D \Impl E\) for some \(D\) and \(E\) such that
    \(\O D \peqtyp B\) and \(\O E \peqtyp C\).
\end{Enumerate}
\end{proposition}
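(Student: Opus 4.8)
The plan is to reduce both items to canonical forms via Proposition~\ref{geqtyp-canon} and Proposition~\ref{canon-congr}, exploiting the fact that the number of leading \(\O\)'s in a canonical form is an invariant of \(\geqtyp\). Everything turns on comparing \(\Canong{(\O A)}\) with \(\Canong{(B \Impl C)}\): by Definition~\ref{canon-def} the latter is either \(\t\) or \(B \Impl C\) itself (an implication that, if not a {\tvariant}, is already canonical), so it carries \emph{no} leading \(\O\), whereas \(\Canong{(\O A)}\) is always obtained by prefixing an \(\O\) to \(\Canong{A}\).

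For the first item I would first dispose of the case where \(\O A\) is a {\tvariant}: then \(\O A \eqtyp \t\), so \(B \Impl C \eqtyp \t\) by transitivity, whence \(C\) is a {\tvariant} by Proposition~\ref{Impl-tvariant} and \(A\) is one by Proposition~\ref{O-tvariant}, giving \(A \eqtyp C \eqtyp \t\) via Theorem~\ref{geqtyp-t-tvariant}. It then remains to rule out the case where \(\O A\) is not a {\tvariant}. Here \(A\) is not a {\tvariant} either (Proposition~\ref{O-tvariant}), so \(\Canon{A} \ne \t\), and hence \(\Canon{(\O A)} = \O\Canon{A}\) has the shape \(\O^{m} X\) or \(\O^{m}(P \Impl Q)\) with \(m \ge 1\). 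Applying Proposition~\ref{canon-congr} to \(\O A \eqtyp B \Impl C\) forces \(\Canon{(B \Impl C)}\) into the same category with the same count of leading \(\O\)'s; but \(\Canon{(B \Impl C)}\) is either \(\t\) or the leading-\(\O\)-free implication \(B \Impl C\), so it can match neither \(\O^{m} X\) nor \(\O^{m}(P' \Impl Q')\) with \(m \ge 1\). This contradiction closes the first item.

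The second item is parallel, but now the rule \r{{\peqtyp}\mbox{-{\bf K}/{\bf L}}}, encoded in the clause for \(\Canonp{(\O A)}\) in Definition~\ref{canon-def}, lets an implicational \(\Canonp{A}\) stay implicational after prefixing \(\O\). So after handling the {\tvariant} case of \(\O A\) (which yields alternative (a) exactly as above), I would split on \(\Canonp{A}\neq\t\): if \(\Canonp{A} = \O^n X\), then \(\Canonp{(\O A)} = \O^{n+1}X\) again mismatches \(\Canonp{(B \Impl C)}\) and is impossible by Proposition~\ref{canon-congr}; if \(\Canonp{A} = P \Impl Q\), then \(\Canonp{(\O A)} = \O P \Impl \O Q\), which \emph{does} match. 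In this last subcase Proposition~\ref{canon-congr}(iii), read with \(n = 0\) and applied to \(\O A \peqtyp B \Impl C\), gives \(\Canonp{(B \Impl C)} = B \Impl C\) together with \(\O P \peqtyp B\) and \(\O Q \peqtyp C\); combined with \(A \peqtyp \Canonp{A} = P \Impl Q\) from Proposition~\ref{geqtyp-canon}, this is exactly alternative (b) with \(D = P\) and \(E = Q\).

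The only delicate point, and the one I would check most carefully, is the bookkeeping of leading \(\O\)'s under the two canonicalizations. Under \(\eqtyp\) a canonical implication may carry leading \(\O\)'s, but \(\Canon{(B \Impl C)} = B \Impl C\) carries none, so the counts \(m \ge 1\) versus \(0\) never agree and no implicational escape hatch exists; under \(\peqtyp\), by contrast, \(\Canonp{(\O A)}\) of an implicational \(A\) distributes the \(\O\) inward and remains a leading-\(\O\)-free implication, which is precisely what makes alternative (b) possible and alternative (a) no longer forced. Ensuring that Proposition~\ref{canon-congr} is invoked in the correct one of its three clauses, and that \(\geqtyp\) may be read symmetrically when applying it, is the main care the argument requires.
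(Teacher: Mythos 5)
Your proof is correct and follows essentially the same route as the paper's: both arguments reduce the statement to a comparison of canonical forms via Propositions~\ref{geqtyp-canon} and \ref{canon-congr} (the paper invokes the packaged version, Proposition~\ref{canon-geqtyp-equiv}), exploiting that \(\Canon{(\O A)}\) always carries a leading \(\O\) while \(\Canon{(B \Impl C)}\) does not, whereas \(\Canonp{(\O A)}\) of an implicational \(A\) remains a leading-\(\O\)-free implication. The only cosmetic difference is that you split explicitly on whether \(\O A\) is a {\tvariant} and on the shape of \(\Canonp{A}\), while the paper argues the first item by contraposition and compresses the second item's case analysis into a single appeal to Proposition~\ref{canon-geqtyp-equiv}.
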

\begin{proof}
For Item~1, suppose that \(\O A \eqtyp B \Impl C\).
Since either \(A \not\eqtyp \t\) or \(C \not\eqtyp \t\) implies
that neither \(\O A\) nor \(B \Impl C\) is a {\tvariant}
by Proposition~\ref{tvariant-basic} and Theorem~\ref{geqtyp-t-tvariant},
we get \(\Canon{(\O A)} = \O \Canon{A}\) and
\(\Canon{(B \Impl C)} = B \Impl C\) by Definition~\ref{canon-def}.
However, it contradicts Proposition~\ref{canon-geqtyp-equiv}.
As for Item~2, suppose that \(\O A \peqtyp B \Impl C\), and
that either \(A \npeqtyp \t\) or \(C \npeqtyp \t\).
Since either one implies that \(B \Impl C\) is not a {\tvariant},
we get \(\Canonp{(B \Impl C)} = B \Impl C\).
Therefore, by Propositions~\ref{geqtyp-canon}, \ref{canon-geqtyp-equiv}
and Definition~\ref{canon-def}, we get
\(A \peqtyp \Canonp{A} = D \Impl E\) for some \(D\) and \(E\) such that
\(\Canonp{(\O A)} = \O D \Impl \O E\), \(\O D \peqtyp B\),
and \(\O E \peqtyp C\).
\qed\CHECKED{(2014/07/09, 07/21)}
\end{proof}

\begin{proposition}\label{not-geqtyp}\pushlabel
Let \(m\) and \(n\) be non-negative integers.
\begin{Enumerate}
\item \label{not-geqtyp-var-Impl} \(\O^m X \ngeqtyp \O^n (A \Impl B)\).
\item \label{not-geqtyp-var-t} \(\O^m X \ngeqtyp \t\).
\item \label{geqtyp-var-var}
	\(\O^m X \geqtyp \O^n Y\) if and only if\/ \(X = Y\) and\/ \(m = n\).
\item \label{not-geqtyp-var-O} \(X \ngeqtyp \O A\).
\end{Enumerate}
\end{proposition}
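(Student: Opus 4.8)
The plan is to reduce every clause to a comparison of canonical forms through Proposition~\ref{canon-geqtyp-equiv}, which allows $A \geqtyp B$ only when $\Canong{A}$ and $\Canong{B}$ fall into the same one of its three canonical categories $\t$, $\O^n X$, and $\O^n(C \Impl D)$. The basic observation I would record first is that $\O^m X$ is never a {\tvariant} (by Propositions~\ref{var-tvariant} and \ref{O-tvariant}) and that $\Canong{\O^m X} = \O^m X$ for every $m$. The latter follows by induction on $m$ straight from Definition~\ref{canon-def}: in the $\peqtyp$ case the inductive value is already of the shape $\O^{m-1}X$, so the clause $\Canonp{(\O A)} = \O\O^n X$ applies and keeps the value in the variable category. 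Hence $\Canong{\O^m X}$ always sits in category~(\romannumeral 2) and equals $\O^m X$ syntactically, irrespective of whether we work with $\eqtyp$ or $\peqtyp$.

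Item~\ref{not-geqtyp-var-t} is then immediate: since $\O^m X$ is not a {\tvariant}, Theorem~\ref{geqtyp-t-tvariant} gives $\O^m X \ngeqtyp \t$. For Item~\ref{geqtyp-var-var} the backward direction is \r{{\eqtyp}\mbox{-reflex}}, and for the forward direction I would compare canonical forms: if $\O^m X \geqtyp \O^n Y$, then both canonical forms lying in the variable category, only clause~(\romannumeral 2) of Proposition~\ref{canon-geqtyp-equiv} can apply, and it forces the two canonical expressions to be literally equal, i.e. $\O^m X = \O^n Y$, whence $m = n$ and $X = Y$.

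For Item~\ref{not-geqtyp-var-Impl} I would split on whether $B$ is a {\tvariant}. If it is, then $\O^n(A \Impl B)$ is a {\tvariant} by Propositions~\ref{O-tvariant} and \ref{Impl-tvariant}, so its canonical form is $\t$; matching this against $\Canong{\O^m X} = \O^m X$ would require clause~(\romannumeral 1), which is impossible since $\O^m X \ne \t$. If $B$ is not a {\tvariant}, then $\O^n(A \Impl B)$ is not a {\tvariant}, and its canonical form is a function form — namely $\O^n(A \Impl B)$ for $\eqtyp$ and the distributed form $\O^n A \Impl \O^n B$ for $\peqtyp$ — so clause~(\romannumeral 3) would be needed, contradicting the fact that $\O^m X$ is in the variable category. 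Item~\ref{not-geqtyp-var-O} goes the same way: with $\Canong{X} = \O^0 X$ I examine $\Canong{(\O A)}$; if $A$ is a {\tvariant} then $\Canong{(\O A)} = \t$ and clause~(\romannumeral 1) fails, while if $A$ is not a {\tvariant} then $\Canong{A}$ is a non-$\t$ canonical form $\O^k Z$ or $\O^k(C \Impl D)$, so $\Canong{(\O A)}$ acquires at least one leading $\O$. In the variable subcase clause~(\romannumeral 2) would force $\O^{k+1}Z = \O^0 X$, impossible since $k+1 \ge 1$; in the function subcase clause~(\romannumeral 3) fails because $X$ is not a function form.

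The only delicate point is the $\peqtyp$ bookkeeping: the rule \r{{\peqtyp}\mbox{-{\bf K}/{\bf L}}} pushes $\O$ through $\Impl$, so I must read off the correct values $\Canonp{(\O A)}$ from Definition~\ref{canon-def} (a function form staying a function form with no leading box, a variable form $\O^k Z$ becoming $\O^{k+1}Z$) rather than reusing the $\eqtyp$ values. Once the canonical forms are identified correctly, each clause collapses to the observation that the two sides occupy incompatible categories of Proposition~\ref{canon-geqtyp-equiv}, so no instance of that equivalence can hold.
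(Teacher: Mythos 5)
Your proposal is correct and follows essentially the same route as the paper, which derives all four items from the canonical-form classification (the paper cites Proposition~\ref{canon-congr} and Definition~\ref{canon-def}; your use of Proposition~\ref{canon-geqtyp-equiv} is just the packaged ``if and only if'' form of that same classification). The case analysis, including the $\peqtyp$ bookkeeping for $\Canonp{(\O A)}$, is handled correctly.
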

\begin{proof}
Straightforward from Proposition~\ref{canon-congr}
and Definition~\ref{canon-def}\footnote{%
Proposition~\ref{not-geqtyp} can be also shown by Theorems~\ref{eqtyp-soundness}
and \ref{peqtyp-soundness}
considering an appropriate type environment under
a certain non-trivial interpretation.}.
\qed\CHECKED{2014/04/30}
\end{proof}

\begin{definition}
    \ilabel{Shift-def}{0 shift@$\protect\strut\protect\Shift{A}^n$}
Let \(n\) be a non-negative integer.
We define \(\Shift{\!A}^n\) as follows.
\begin{Eqnarray*}
\Shift{A}^n &=& \t & (\(A\) is a {\tvariant}) \\
\Shift{X}^n &=& X \\
\Shift{\O A}^0 &=& \O A & (\(\,\O A\) is not a {\tvariant}) \\
\Shift{\O A}^{n{+}1} &=& \Shift{A}^n & (\(\,\O A\) is not a {\tvariant}) \\
\Shift{A \Impl B}^n &=& \Shift{A}^n \Impl \Shift{B}^n
    & (\(A \Impl B\) is not a {\tvariant}) \\
\Shift{\fix{X}A}^n &=& \Shift{A[\fix{X}A/X]}^n
    & (\(\fix{X}A\) is not a {\tvariant})
\end{Eqnarray*}
\end{definition}
Note that \(\Shift{\!A}^n\) is defined by induction on
the lexicographic ordering of \(\pair{n}{r(A)}\).
For example,
\begin{eqnarray*}
\Shift{\fix{X}{\O X \Impl Y}}^0
    &=& \Shift{\O (\fix{X}{\O X \Impl Y}) \Impl Y}^0 \\
    &=& \Shift{\O (\fix{X}{\O X \Impl Y})}^0 \Impl \Shift{Y}^0 \\
    &=& \O (\fix{X}{\O X \Impl Y}) \Impl Y, \\
\Shift{\fix{X}{\O(\O X \Impl \O\O Y)}}^3
    &=& \Shift{\O(\O (\fix{X}{\O(\O X \Impl \O\O Y)}) \Impl \O\O Y)}^3 \\
    &=& \Shift{\O (\fix{X}{\O(\O X \Impl \O\O Y)}) \Impl \O\O Y}^2 \\
    &=& \Shift{\O (\fix{X}{\O(\O X \Impl \O\O Y)})}^2 \Impl \Shift{\O\O Y}^2 \\
    &=& \Shift{\fix{X}{\O(\O X \Impl \O\O Y)}}^1 \Impl \Shift{\O Y}^1 \\
    &=& \Shift{\O(\O (\fix{X}{\O(\O X \Impl \O\O Y)}) \Impl \O\O Y)}^1
	\Impl \Shift{Y}^0 \\
    &=& \Shift{\O (\fix{X}{\O(\O X \Impl \O\O Y)}) \Impl \O\O Y}^0 \Impl Y \\
    &=& (\Shift{\O (\fix{X}{\O(\O X \Impl \O\O Y)})}^0 \Impl \Shift{\O\O Y}^0)
	\Impl Y \\
    &=& (\O (\fix{X}{\O(\O X \Impl \O\O Y)}) \Impl \O\O Y) \Impl Y.
\end{eqnarray*}

\begin{proposition}\label{shift-basic}\pushlabel
\begin{Enumerate}
\item\itemlabel{shift-0}
    \(\Shift{A}^0 \geqtyp A\).
\item\itemlabel{shift-O}
    \(\Shift{\O A}^{n{+}1} \geqtyp \Shift{A}^n\).
\item\itemlabel{shift-Impl}
    \(\Shift{A \Impl B}^n \geqtyp \Shift{A}^n \Impl \Shift{B}^n\).
\item\itemlabel{shift-fix}
    \(\Shift{\fix{X}A}^n \geqtyp \Shift{A[\fix{X}A/X]}^n\).
\end{Enumerate}
\end{proposition}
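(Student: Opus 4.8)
The plan is to recognize that the four items merely assert that the defining clauses of \(\Shift{\cdot}^n\) in Definition~\ref{Shift-def} persist up to \(\geqtyp\) once the side conditions ``\(\ldots\) is not a {\tvariant}'' are dropped, exactly as Proposition~\ref{rlz-another-def} does for the semantic interpretation \(\II\). Items~\itemref{shift-O}, \itemref{shift-Impl} and \itemref{shift-fix} then require no induction at all: whenever the outer type expression fails to be a {\tvariant}, the matching clause of Definition~\ref{Shift-def} yields literal equality, so \(\geqtyp\) holds by reflexivity; and in the remaining {\tvariant} case both sides collapse to \(\t\) by Theorem~\ref{geqtyp-t-tvariant} together with the relevant clause of Proposition~\ref{tvariant-basic}.

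Concretely, for Item~\itemref{shift-O}, if \(\O A\) is a {\tvariant} then so is \(A\) by Proposition~\ref{O-tvariant}, whence \(\Shift{\O A}^{n{+}1} = \t = \Shift{A}^n\). For Item~\itemref{shift-Impl}, if \(A \Impl B\) is a {\tvariant} then \(B\) is by Proposition~\ref{Impl-tvariant}, so \(\Shift{A}^n \Impl \Shift{B}^n = \Shift{A}^n \Impl \t \geqtyp \t = \Shift{A \Impl B}^n\) by \r{{\eqtyp}\mbox{-}{\Impl}{\t}}. For Item~\itemref{shift-fix}, if \(\fix{X}A\) is a {\tvariant} then so is \(A[\fix{X}A/X]\) by Proposition~\ref{tvariant-fix}, so both shifts equal \(\t\).

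The one substantive item is \itemref{shift-0}, which I would prove by induction on \(r(A)\) (equivalently, the lexicographic measure \(\pair{n}{r(A)}\) with \(n = 0\) fixed), by cases on the form of \(A\). If \(A\) is a {\tvariant}, then \(\Shift{A}^0 = \t \geqtyp A\) by Theorem~\ref{geqtyp-t-tvariant}. If \(A = X\), or if \(A = \O B\) with \(\O B\) not a {\tvariant}, the corresponding clause of Definition~\ref{Shift-def} returns \(A\) unchanged, so \(\Shift{A}^0 \geqtyp A\) by reflexivity. If \(A = B \Impl C\) is not a {\tvariant}, then \(\Shift{A}^0 = \Shift{B}^0 \Impl \Shift{C}^0 \geqtyp B \Impl C\) by the induction hypothesis applied to \(B\) and \(C\) (whose ranks are strictly smaller) together with \r{{\eqtyp}\mbox{-}{\Impl}}. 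Finally, if \(A = \fix{X}B\) is not a {\tvariant}, then \(\Shift{A}^0 = \Shift{B[\fix{X}B/X]}^0\), and the induction hypothesis applies because Proposition~\ref{rank-fix} gives \(r(B[\fix{X}B/X]) < r(\fix{X}B)\); composing with \r{{\eqtyp}\mbox{-fix}} yields \(\Shift{A}^0 \geqtyp B[\fix{X}B/X] \geqtyp \fix{X}B\).

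The only place demanding attention is this recursive case of Item~\itemref{shift-0}, where the induction must be justified by the fact that the one-step unfolding \(A[\fix{X}A/X]\) has strictly smaller rank than \(\fix{X}A\); this is precisely Proposition~\ref{rank-fix}, the same ingredient that makes both \(\I{\fix{X}A}^\tenv_p\) and \(\Shift{\cdot}^n\) well defined in the first place. Everything else amounts to bookkeeping over the {\tvariant} side conditions in Definition~\ref{Shift-def}, so I do not anticipate any genuine difficulty.
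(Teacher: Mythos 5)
Your proof is correct and follows essentially the same route as the paper: Item~\itemref{shift-0} by induction on \(r(A)\) using Proposition~\ref{rank-fix} for the recursive case, and the remaining three items handled without induction, being literal equalities outside the {\tvariant} cases and collapsing to \(\t\) via Propositions~\ref{tvariant-basic}, \ref{tvariant-fix} and Theorem~\ref{geqtyp-t-tvariant} otherwise. No gaps.
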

\begin{proof}
Item~\itemref{shift-0} is by straightforward induction on \(r(A)\).
Note that other items are trivial
for type expressions other than {\tvariant}s.
Use Propositions~\ref{tvariant-basic}, \ref{tvariant-fix} and
Theorem~\ref{geqtyp-t-tvariant}.
\ifdetail

\paragraph{Proof of \protect\itemref{shift-0}}
By straightforward induction on \(r(A)\), and by cases on the form of \(A\).
If \(A\) is a {\tvariant}, then \(\Shift{A}^0 = \t \geqtyp A\)
by Definition~\ref{Shift-def} and Theorem~\ref{geqtyp-t-tvariant}.
Hence, we only consider the case when \(A\) is not.

\Case{\(A = X\) for some \(X\).}
By Definition~\ref{Shift-def}, \(\Shift{A}^0 = \Shift{X}^0 = X = A\).

\Case{\(A = \O B\) for some \(B\).}
Similarly, \(\Shift{A}^0 = \Shift{\O B}^0 = \O B = A\).

\Case{\(A = B \Impl C\) for some \(B\) and \(C\).}
By Definition~\ref{Shift-def} and induction hypothesis,
\(\Shift{A}^0 = \Shift{B \Impl C}^0
    = \Shift{B}^0 \Impl \Shift{C}^0 \geqtyp B \Impl C = A\).

\Case{\(A = \fix{X}B\) for some \(X\) and \(B\).}
Similarly,
\(\Shift{A}^0 = \Shift{\fix{X}B}^0
    = \Shift{B[A/X]}^0 \geqtyp B[A/X] \geqtyp \fix{X}B = A\)
by induction hypothesis and \r{{\eqtyp}\mbox{-fix}}.

\paragraph{Proof of \protect\itemref{shift-O}}
If \(A\) is {\tvariant}, then so is \(\O A\) by Proposition~\ref{O-tvariant};
and hence, \(\Shift{\O A}^{n+1} = \t\) and \(\Shift{A}^n = \t\)
by Definition~\ref{Shift-def}.
Otherwise, trivial by Definition~\ref{Shift-def} again.

\paragraph{Proof of \protect\itemref{shift-Impl}}
If \(A \Impl B\) is {\tvariant},
then so is \(B\) by Proposition~\ref{Impl-tvariant};
and hence,
\(\Shift{A \Impl B}^n = \t\) and
\(\Shift{A}^n \Impl \Shift{B}^n = \Shift{A}^n \Impl \t \geqtyp \t\)
by Definition~\ref{Shift-def} and \r{{\eqtyp}\mbox{-}{\Impl}{\t}}.
Otherwise, trivial by definition.

\paragraph{Proof of \protect\itemref{shift-fix}}
If \(\fix{X}A\) is {\tvariant},
then so is \(A[\fix{X}A/X]\) by Proposition~\ref{tvariant-fix};
and hence, \(\Shift{\fix{X}A}^n = \t\) and \(\Shift{A[\fix{X}A/X]}^n = \t\)
by Definition~\ref{Shift-def}.
Otherwise, trivial again.
\fi 
\qed\CHECKED{2014/04/30, 07/09}
\end{proof}

\begin{lemma}\label{shift-proper-subst}
Let \(m\) and \(n\) be non-negative integers,
\(A_1,\ldots,A_m,B_1,\ldots,B_m\) and \(C\) be type expressions such that
\(A_i \geqtyp B_i\) and \(\Shift{A_i}^k \geqtyp \Shift{B_i}^k\)
for every \(k < n\) and \(i\) (\(1 \le i \le m\)).
Let \(\vec{X} = X_1,\ldots,X_m\), \(\vec{A} = A_1,\ldots,A_m\) and
\(\vec{B} = B_1,\ldots,B_m\).
If\/ for every \(i\), either (a) \(C\) is proper in \(X_i\),
or (b) \(\Shift{A_i}^n \geqtyp \Shift{B_i}^n\),
then \(\Shift{C[\vec{A}/\vec{X}]}^n \geqtyp \Shift{C[\vec{B}/\vec{X}]}^n\)
\end{lemma}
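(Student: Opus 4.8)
The plan is to argue by induction on the lexicographic ordering of $\pair{n}{r(C)}$ --- the same ordering that makes Definition~\ref{Shift-def} well-defined --- and by cases on the form of $C$. Before the case analysis I would dispose of the {\tvariant} situations. If $C$ is a {\tvariant}, then both $C[\vec{A}/\vec{X}]$ and $C[\vec{B}/\vec{X}]$ are {\tvariant}s by Proposition~\itemref{tvariant-subst1}, so both shifts equal $\t$. If $C$ is not a {\tvariant} but $C[\vec{A}/\vec{X}]$ is, then $C[\vec{A}/\vec{X}] \geqtyp C[\vec{B}/\vec{X}]$ (which follows from $A_i \geqtyp B_i$ by Proposition~\ref{geqtyp-subst}) together with Proposition~\ref{geqtyp-tvariant} forces $C[\vec{B}/\vec{X}]$ to be a {\tvariant} as well, and again both shifts are $\t$; the same reasoning shows that if $C[\vec{A}/\vec{X}]$ is not a {\tvariant} then neither is $C[\vec{B}/\vec{X}]$. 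Hence I may assume that none of $C$, $C[\vec{A}/\vec{X}]$, $C[\vec{B}/\vec{X}]$ is a {\tvariant}. I would also note at the outset that the case $n = 0$ needs neither the induction nor the alternatives (a)/(b): by Proposition~\itemref{shift-0} and Proposition~\ref{geqtyp-subst} we have $\Shift{C[\vec{A}/\vec{X}]}^0 \geqtyp C[\vec{A}/\vec{X}] \geqtyp C[\vec{B}/\vec{X}] \geqtyp \Shift{C[\vec{B}/\vec{X}]}^0$.

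For the structural cases, the variable and arrow cases are routine. When $C = Y$ with $Y \neq X_i$ for all $i$, the two substitution instances coincide; when $C = X_i$, the variable $C$ is not proper in $X_i$, so alternative (a) fails for this $i$ and alternative (b) is exactly the goal $\Shift{A_i}^n \geqtyp \Shift{B_i}^n$. When $C = D \Impl E$, Definition~\ref{Shift-def} gives $\Shift{C[\vec{A}/\vec{X}]}^n = \Shift{D[\vec{A}/\vec{X}]}^n \Impl \Shift{E[\vec{A}/\vec{X}]}^n$ and likewise for $\vec{B}$; since $E$ is not a {\tvariant}, Definition~\ref{proper-def} shows that each of the alternatives (a), (b) for $C$ transfers to both $D$ and $E$, and since the ranks of $D$ and $E$ are strictly smaller, two applications of the induction hypothesis followed by \r{{\eqtyp}\mbox{-}{\Impl}} close the case.

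The conceptual heart of the lemma is the modal case $C = \O D$ with $n = n'+1$, where the shift peels off one $\O$, giving $\Shift{C[\vec{A}/\vec{X}]}^{n'+1} = \Shift{D[\vec{A}/\vec{X}]}^{n'}$ and similarly for $\vec{B}$, so that I may invoke the induction hypothesis on $D$ at the lower index $n'$. Here $C = \O D$ satisfies alternative (a) for every variable at the outer level, yet $D$ itself need not be proper in $X_i$ (take $C = \O X_i$). The induction hypothesis for $D$ at $n'$ instead requires, for each $i$, either that $D$ be proper in $X_i$ or that $\Shift{A_i}^{n'} \geqtyp \Shift{B_i}^{n'}$; and the latter holds for every $i$ precisely because $n' < n$ and the hypothesis supplies $\Shift{A_i}^k \geqtyp \Shift{B_i}^k$ for all $k < n$. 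This is exactly why the statement carries the whole cumulative family of shift-equalities rather than a single one at level $n$: the modality consumes one level of properness and repays it with the shift-equality one level down.

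The case I expect to require the most care is $C = \fix{Y}D$. Choosing $Y$ fresh for $\vec{X}$, $\vec{A}$ and $\vec{B}$, I would first apply Proposition~\itemref{shift-fix} to obtain $\Shift{C[\vec{A}/\vec{X}]}^n \geqtyp \Shift{(D[\fix{Y}D/Y])[\vec{A}/\vec{X}]}^n$, commuting the two substitutions (legitimate since $Y$ is fresh), and symmetrically on the $\vec{B}$ side. I then apply the induction hypothesis to $D[\fix{Y}D/Y]$ at the same index $n$, whose rank is strictly below $r(\fix{Y}D)$ by Proposition~\itemref{rank-fix}. The bookkeeping to verify is the transfer of the alternatives: (b) is unchanged, while if (a) holds, then $\fix{Y}D$ being proper in $X_i$ together with $\fix{Y}D$ not being a {\tvariant} yields, via Definition~\ref{proper-def}, that $D$ is proper in $X_i$, whence Proposition~\itemref{proper-subst1} makes $D[\fix{Y}D/Y]$ proper in $X_i$. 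Chaining these two steps (one directly, one after \r{{\eqtyp}\mbox{-symm}}) through the induction hypothesis by transitivity then finishes the proof.
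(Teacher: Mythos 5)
Your proof is correct, but it departs from the paper's argument in the two places that matter. First, you induct on the lexicographic ordering of \(\pair{n}{r(C)}\), whereas the paper inducts on \(\pair{n}{h(C)}\). Second, and consequently, your treatment of \(C = \fix{Y}D\) is genuinely different: you unfold once and apply the induction hypothesis to \(D[\fix{Y}D/Y]\) at the \emph{same} index \(n\) with the \emph{same} substitution vectors, justified by \(r(D[\fix{Y}D/Y]) < r(\fix{Y}D)\) (Proposition~\ref{rank-fix}) and by the transfer of properness through Proposition~\ref{proper-subst1}. The paper instead descends to the body \(D\) (smaller height) with an \emph{extended} substitution, adding the pair \(C[\vec{A}/\vec{X}]\), \(C[\vec{B}/\vec{X}]\) for the bound variable \(Y\); to supply the lemma's cumulative hypothesis for this new pair it must first invoke the induction hypothesis on \(C\) itself at every \(k < n\) (legitimate because \(n\) is the dominant component), and it then discharges the alternative for \(Y\) by observing that \(D\) is proper in \(Y\). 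Your route buys a cleaner fix-case — no auxiliary claim at smaller shift indices, no growing substitution vector — at the price of leaning on the rank function and a substitution-commutation step; the paper's route keeps the recursion purely structural in the syntax but needs the extra bootstrapping step. Both uses of the cumulative family \(\Shift{A_i}^k \geqtyp \Shift{B_i}^k\) (\(k<n\)) in the \(\O D\) case coincide, and your observation that this is precisely why the statement carries the whole family is the right diagnosis. One cosmetic remark: in the \(\Impl\) and \(\O\) cases you appeal to the defining equations of \(\Shift{\cdot}^n\) as identities, which is safe only because you excluded the \tvariant{} instances beforehand; the paper routes these steps through Proposition~\ref{shift-basic}, which packages the \tvariant{} exceptions as \(\geqtyp\)-steps and makes the preliminary exclusion unnecessary.
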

\begin{proof}
By induction on the lexicographic ordering of \(\pair{n}{h(C)}\),
and by cases on the form of \(C\).
Suppose (a) or (b) holds for every \(i\) (\(1 \le i \le m\)).
If \(C\) is a {\tvariant} or \(n = 0\), then
we get \(\Shift{C[\vec{A}/\vec{X}]}^n \geqtyp \Shift{C[\vec{B}/\vec{X}]}^n\)
by Proposition~\ref{tvariant-subst1}, or
by Propositions~\ref{shift-0} and \ref{geqtyp-subst}, respectively.
Hence, we assume that \(C\) is not a {\tvariant} and \(n > 0\).

\Case{\(C = Y\) for some \(Y\).}
If \(Y \not\in \{\,\vec{X}\,\}\), then obvious since
\(C[\vec{A}/\vec{X}] = C = C[\vec{B}/\vec{X}]\).
If \(Y = X_i\) for some \(i\), then
\(C[\vec{A}/\vec{X}] = A_i\) and \(C[\vec{B}/\vec{X}] = B_i\).
Hence, \(\Shift{C[\vec{A}/\vec{X}]}^n \geqtyp \Shift{C[\vec{B}/\vec{X}]}^n\)
from (b) since \(C\) is not proper in \(X_i\).

\Case{\(C = \O D\) for some \(D\).}
In this case, \(\Shift{D[\vec{A}/\vec{X}]}^{n{-}1}
    \geqtyp \Shift{D[\vec{B}/\vec{X}]}^{n{-}1}\) by induction hypothesis.
Hence,
\(\Shift{C[\vec{A}/\vec{X}]}^n \geqtyp \Shift{D[\vec{A}/\vec{X}]}^{n{-}1}
\geqtyp \Shift{D[\vec{B}/\vec{X}]}^{n{-}1}\penalty1000
\geqtyp \Shift{C[\vec{B}/\vec{X}]}^n\)
by Proposition~\ref{shift-O}.

\Case{\(C = D \Impl E\) for some \(D\) and \(E\).}
Note that, for every \(i\), both \(D\) and \(E\) are proper in \(X_i\)
if and only if so is \(C\), since \(C\) is not a {\tvariant}.
Therefore,
\(\Shift{D[\vec{A}/\vec{X}]}^n \geqtyp \Shift{D[\vec{B}/\vec{X}]}^n\) and
\(\Shift{E[\vec{A}/\vec{X}]}^n \geqtyp \Shift{E[\vec{B}/\vec{X}]}^n\)
by induction hypothesis; and hence,
\(\Shift{C[\vec{A}/\vec{X}]}^n
\geqtyp \Shift{D[\vec{A}/\vec{X}]}^n \Impl \Shift{E[\vec{A}/\vec{X}]}^n
\geqtyp \Shift{D[\vec{B}/\vec{X}]}^n \Impl \Shift{E[\vec{B}/\vec{X}]}^n
\geqtyp \Shift{C[\vec{B}/\vec{X}]}^n\)
by Proposition~\ref{shift-Impl} and \r{{\eqtyp}\mbox{-}{\Impl}}.

\Case{\(C = \fix{Y}D\) for some \(Y\) and \(D\).}
We can assume that for any \(i\),
\(Y \not\in \{\,X_i\,\} \cup \FTV{A_i}\cup \FTV{B_i}\)
without loss of generality.
Note that, for every \(i\),
\(D\) is proper in \(X_i\) if and only if so is \(C\),
since \(C\) is not a {\tvariant} and \(Y \not= X_i\).
By induction hypothesis,
\(\Shift{C[\vec{A}/\vec{X}]}^k \geqtyp \Shift{C[\vec{B}/\vec{X}]}^k\)
for every \(k < n\).
Since \(D\) is proper in \(Y\),
\begin{Eqnarray*}
\Shift{C[\vec{A}/\vec{X}]}^n
    &\geqtyp& \Shift{D[\vec{A}/\vec{X}][C[\vec{A}/\vec{X}]/Y]}^n
				 & (by Proposition~\ref{shift-fix}) \\
    &=& \Shift{D[\vec{A}/\vec{X},C[\vec{A}/\vec{X}]/Y]}^n
				 & (since \(Y \not\in \FTV{\vec{A}}\)) \\
    &\geqtyp& \Shift{D[\vec{B}/\vec{X},C[\vec{B}/\vec{X}]/Y]}^n
					 & (by induction hypothesis) \\
    &=& \Shift{D[\vec{B}/\vec{X}][C[\vec{B}/\vec{X}]/Y]}^n
				 & (since \(Y \not\in \FTV{\vec{B}}\)) \\
    &\geqtyp& \Shift{C[\vec{B}/\vec{X}]}^n
				 & (by Proposition~\ref{shift-fix}).
\end{Eqnarray*}
\Qed\CHECKED{2014/07/09}
\end{proof}

\begin{proposition}\label{shift-geqtyp}
If \(A \geqtyp B\), then \(\Shift{A}^n \geqtyp \Shift{B}^n\) for every \(n\).
\end{proposition}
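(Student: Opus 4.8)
The plan is to argue by induction on the derivation of $A \geqtyp B$, with a case analysis on the last rule applied, in the same style as the proofs of Propositions~\ref{geqtyp-depth} and \ref{geqtyp-tail}. Since every rule preserves $\t$-variance (Proposition~\ref{geqtyp-tvariant}), I would first dispose of the case in which $A$, and hence $B$, is a $\t$-variant: there both $\Shift{A}^n$ and $\Shift{B}^n$ equal $\t$ by Definition~\ref{Shift-def}, so the claim is immediate. In all remaining cases I may assume that neither side is a $\t$-variant, which is exactly what lets me apply the unfolding clauses of Definition~\ref{Shift-def}.

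The reflexivity, symmetry and transitivity cases are immediate, and the congruence rules reduce to the basic equations of Proposition~\ref{shift-basic}. For $\r{{\eqtyp}\mbox{-}{\Impl}}$ I would rewrite $\Shift{A \Impl B}^n \geqtyp \Shift{A}^n \Impl \Shift{B}^n$ by Proposition~\ref{shift-Impl}, apply the induction hypothesis to the two premises, and recombine using $\r{{\eqtyp}\mbox{-}{\Impl}}$. For $\r{{\eqtyp}\mbox{-}{\O}}$ I split on $n$: at $n = 0$ the shift is essentially the identity, so the conclusion is just the premise wrapped by $\r{{\eqtyp}\mbox{-}{\O}}$, while at $n = m+1$ I peel the modality via $\Shift{\O A}^{m+1} \geqtyp \Shift{A}^m$ (Proposition~\ref{shift-O}), apply the induction hypothesis at index $m$, and re-wrap. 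The rule $\r{{\eqtyp}\mbox{-}{\Impl}{\t}}$ produces a $\t$-variant and is therefore already covered; $\r{{\eqtyp}\mbox{-fix}}$ is literally Proposition~\ref{shift-fix}; and the rule $\r{{\peqtyp}\mbox{-{\bf K}/{\bf L}}}$ (for $\peqtyp$) is a direct computation, since for $n = m+1$ both $\Shift{\O(A \Impl B)}^n$ and $\Shift{\O A \Impl \O B}^n$ are $\geqtyp$-equal to $\Shift{A}^m \Impl \Shift{B}^m$ by Propositions~\ref{shift-O} and \ref{shift-Impl}, while for $n = 0$ they are the two sides of the K/L instance itself.

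I expect the main obstacle to be the case $\r{{\eqtyp}\mbox{-uniq}}$, where $B = \fix{X}C$, the premise is $A \geqtyp C[A/X]$, and $C$ is proper in $X$. Here I would run an auxiliary induction on $n$ to establish $\Shift{A}^n \geqtyp \Shift{\fix{X}C}^n$. The crux is Lemma~\ref{shift-proper-subst}, applied to the single variable $X$ with substitutends $A$ and $\fix{X}C$: its hypotheses hold because $A \geqtyp \fix{X}C$ is derivable (the rule has fired), because the inner induction hypothesis supplies $\Shift{A}^k \geqtyp \Shift{\fix{X}C}^k$ for every $k < n$, and because $C$ is proper in $X$, which is alternative (a) of the lemma. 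The lemma then yields $\Shift{C[A/X]}^n \geqtyp \Shift{C[\fix{X}C/X]}^n$. Chaining the outer induction hypothesis on the premise, namely $\Shift{A}^n \geqtyp \Shift{C[A/X]}^n$, with this equivalence and finally with $\Shift{C[\fix{X}C/X]}^n \geqtyp \Shift{\fix{X}C}^n$ from Proposition~\ref{shift-fix}, closes the step. The delicate point is interleaving the two inductions correctly: the outer one on the derivation, invoked at the unchanged index $n$, and the inner one on $n$, whose hypothesis feeds the properness-substitution lemma.
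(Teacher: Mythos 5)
Your proposal is correct and follows essentially the same route as the paper's proof: an outer induction on the derivation of \(A \geqtyp B\), with the \r{{\eqtyp}\mbox{-uniq}} case resolved by an inner induction on \(n\) that feeds Lemma~\ref{shift-proper-subst} (via properness of \(C\) in \(X\) and the inner hypothesis at all \(k < n\)) and then chains the outer induction hypothesis with Proposition~\ref{shift-fix}. The only cosmetic difference is that the paper dispatches \(n = 0\) once at the outset via Proposition~\ref{shift-0} rather than case by case.
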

\begin{proof}
Suppose that \(A \geqtyp B\).
Since obviously \(\Shift{A}^0 \geqtyp \Shift{B}^0\)
by Proposition~\ref{shift-0},
we assume that \(n > 0\), below.
By induction on the derivation of \(A \geqtyp B\),
and by cases on the last rule applied in the derivation.
\ifdetail

\Cases{\r{{\eqtyp}\mbox{-reflex}},
    \r{{\eqtyp}\mbox{-symm}} and \r{{\eqtyp}\mbox{-trans}}.}
Trivial.

\Case{\r{{\eqtyp}\mbox{-}{\O}}.}
In this case, there exist some \(A'\) and \(B'\) such that
\(A = \O A'\), \(B = \O B'\) and \(A' \geqtyp B'\).
Hence,
\(\Shift{A}^n \geqtyp \Shift{A'}^{n{-}1}
\geqtyp \Shift{B'}^{n{-}1} \geqtyp \Shift{B}^n\)
by Proposition~\ref{shift-O} and induction hypothesis.

\Case{\r{{\eqtyp}\mbox{-}{\Impl}}.}
Similarly to the previous case,
there exist some \(C\), \(D\), \(E\) and \(F\) such that
\(A = C \Impl D\), \(B = E \Impl F\), \(C \geqtyp E\) and \(D \geqtyp F\).
By induction hypothesis,
\(\Shift{C}^n \geqtyp \Shift{E}^n\) and \(\Shift{D}^n \geqtyp \Shift{F}^n\).
Hence,
\(\Shift{A}^n \geqtyp \Shift{C}^n \Impl \Shift{D}^n
    \geqtyp \Shift{E}^n \Impl \Shift{F}^n \geqtyp \Shift{B}^n\)
by Proposition~\ref{shift-Impl} and \r{{\eqtyp}\mbox{-}{\Impl}}.

\Case{\r{{\eqtyp}\mbox{-}{\Impl}{\t}}.}
In this case, \(A = C \Impl \t\) and \(B = \t\) for some \(C\), that is,
both are {\tvariant}s.
Hence, \(\Shift{A}^n = \Shift{B}^n =\t\)
by Definition~\ref{Shift-def}.

\Case{\r{{\eqtyp}\mbox{-fix}}.}
\(A = \fix{X}C\) and \(B = C[A/X]\) for some \(X\) and \(C\).
Therefore, \(\Shift{A}^n \geqtyp \Shift{C[A/X]}^n = \Shift{B}^n\)
by Proposition~\ref{shift-fix}.

\Case{\r{{\eqtyp}\mbox{-uniq}}.}
There 
\else
Use Proposition~\ref{shift-basic}.
The only non-trivial case is \r{{\eqtyp}\mbox{-uniq}}.
In this case, there 
\fi 
exist some \(X\) and \(C\) such that
\(B = \fix{X}C\), \(A \geqtyp C[A/X]\) and \(C\) is proper in \(X\).
By induction hypothesis,
\begin{eqnarray}\label{shift-geqtyp-01}
    \Shift{A}^n &\geqtyp& \Shift{C[A/X]}^n\!.
\end{eqnarray}
We show that \(\Shift{A}^n \geqtyp \Shift{B}^n\) by induction on \(n\).
By the induction hypothesis on \(n\), we have
\begin{eqnarray}\label{shift-geqtyp-02}
\Shift{A}^k \geqtyp \Shift{B}^k ~\mbox{for every \(k < n\)}.
\end{eqnarray}
Therefore,
\begin{Eqnarray*}
\Shift{A}^n
    &\geqtyp& \Shift{C[A/X]}^n & (by (\ref{shift-geqtyp-01})) \\
    &\geqtyp& \Shift{C[B/X]}^n & (by Lemma~\ref{shift-proper-subst}
				    and (\ref{shift-geqtyp-02})) \\
    &\geqtyp& \Shift{B}^n & (by Proposition~\ref{shift-fix}).
\end{Eqnarray*}
\ifdetail

\Case{\r{{\peqtyp}\mbox{-{\bf K}/{\bf L}}}.}
\(A = \O(C \Impl D)\) and \(B = \O C \Impl \O D\) for some \(C\) and \(D\).
This case only applies to \(\peqtyp\).
We get
\(\Shift{A}^n
    \peqtyp \Shift{C \Impl D}^{n{-}1}
    \peqtyp \Shift{C}^{n{-}1} \Impl \Shift{D}^{n{-}1}
    \peqtyp \Shift{\O C}^n \Impl \Shift{\O D}^n
    \peqtyp \Shift{B}^n\)
by Propositions~\ref{shift-O} and \ref{shift-Impl}.
\qed\CHECKED{2014/07/09}
\else
\Qed
\fi 
\end{proof}

\begin{proposition}\label{geqtyp-congr}\pushlabel
\begin{Enumerate}
\item \label{geqtyp-O-congr}
    \(\O A \geqtyp \O B\) if and only if\/ \(A \geqtyp B\).
\item \label{geqtyp-Impl-congr}
    \(A \Impl B \geqtyp C \Impl D\) if and only if\/
    (a) \(A \geqtyp C\) and \(B \geqtyp D\), or
    (b) \(B \geqtyp D \geqtyp \t\).
\end{Enumerate}
\end{proposition}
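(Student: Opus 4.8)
The plan is to treat the two claims by different tools: the first with the shift operation \(\Shift{\cdot}^n\), the second with the canonical-form machinery of Proposition~\ref{canon-geqtyp-equiv}. In both claims the right-to-left (``if'') direction is routine: for the first it is an instance of \r{{\eqtyp}\mbox{-}{\O}}, and for the second, case~(a) is \r{{\eqtyp}\mbox{-}{\Impl}}, while case~(b) holds because \(B \geqtyp D \geqtyp \t\) forces \(B\) and \(D\) to be {\tvariant}s (Theorem~\ref{geqtyp-t-tvariant}), so that \(A \Impl B \geqtyp \t \geqtyp C \Impl D\) by Item~\itemref{geqtyp-Impl-t} of Proposition~\ref{geqtyp-basic}.

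For the forward direction of the first claim I would avoid unfolding the canonical form of \(\O A\)---whose definition is by cases and, under \(\peqtyp\), reproduces a modality on each side through \r{{\peqtyp}\mbox{-{\bf K}/{\bf L}}}---and instead strip the modality with the shift operator. The key observation is that \(\Shift{\O A}^1 \geqtyp \Shift{A}^0 \geqtyp A\) by Items~\itemref{shift-O} and \itemref{shift-0} of Proposition~\ref{shift-basic}, and likewise \(\Shift{\O B}^1 \geqtyp B\). Thus, assuming \(\O A \geqtyp \O B\), Proposition~\ref{shift-geqtyp} yields \(\Shift{\O A}^1 \geqtyp \Shift{\O B}^1\), and chaining gives \(A \geqtyp \Shift{\O A}^1 \geqtyp \Shift{\O B}^1 \geqtyp B\) by symmetry and transitivity. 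This argument is uniform and needs no separate handling of {\tvariant}s.

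For the forward direction of the second claim, assume \(A \Impl B \geqtyp C \Impl D\) and split on whether \(A \Impl B\) is a {\tvariant}. If it is, then \(B\) is a {\tvariant} (Proposition~\ref{Impl-tvariant}), so \(A \Impl B \geqtyp \t\) (Theorem~\ref{geqtyp-t-tvariant}) and hence \(C \Impl D \geqtyp \t\), making \(D\) a {\tvariant} as well; this gives \(B \geqtyp D \geqtyp \t\), i.e.\ case~(b). Otherwise neither \(A \Impl B\) nor \(C \Impl D\) is a {\tvariant}, so by Definition~\ref{canon-def} each is its own canonical form; being bare implications (the \(\O^n(\cdots)\) form with \(n = 0\)), they are neither \(\t\) nor of the shape \(\O^n X\), so Proposition~\ref{canon-geqtyp-equiv} can place them only in its clause~(iii), which delivers precisely \(A \geqtyp C\) and \(B \geqtyp D\), i.e.\ case~(a). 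The only delicate point in the whole argument is the one already flagged: the canonical form of \(\O A\) under \(\peqtyp\) resists a direct induction because peeling off one \(\O\) re-introduces modalities on proper subexpressions, and the shift operator---which both respects \(\geqtyp\) and undoes a single \(\O\) up to \(\geqtyp\)---is exactly what sidesteps it; once that is seen, both parts reduce to bookkeeping over the preservation results established earlier.
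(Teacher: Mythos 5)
Your proposal is correct and follows essentially the same route as the paper: the only-if direction of Item~1 via the chain \(A \geqtyp \Shift{A}^0 \geqtyp \Shift{\O A}^1 \geqtyp \Shift{\O B}^1 \geqtyp \Shift{B}^0 \geqtyp B\) using Propositions~\ref{shift-basic} and \ref{shift-geqtyp}, and Item~2 by splitting on {\tvariant}s and then reading off the canonical forms (your appeal to Proposition~\ref{canon-geqtyp-equiv} is just the packaged form of the paper's appeal to Proposition~\ref{canon-congr}).
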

\begin{proof}
The ``if'' part of Item~1 is obvious from \r{{\eqtyp}\mbox{-}{\O}},
and so is the ``only if'' part since
\(A \geqtyp \Shift{A}^0 \geqtyp \Shift{\O A}^1 \geqtyp \Shift{\O B}^1
\geqtyp \Shift{B}^0 \geqtyp B\)
by Propositions~\ref{shift-0}, \ref{shift-O} and \ref{shift-geqtyp}.
The ``if'' part of Item~2 is also obvious
from \r{{\eqtyp}\mbox{-}{\Impl}} and \r{{\eqtyp}\mbox{-}{\Impl}{\t}}.
As for the ``only if'' part of Item~2,
suppose that \(A \Impl B \geqtyp C \Impl D\).
If \(A \Impl B\) or \(C \Impl D\) is a {\tvariant}, then
so are both \(B\) and \(D\)
by Proposition~\ref{Impl-tvariant} and Theorem~\ref{geqtyp-t-tvariant};
and hence, \(B \geqtyp D \geqtyp \t\).
On the other hand, if neither is a {\tvariant},
we get \(A \geqtyp C\) and \(B \geqtyp D\)
by Proposition~\ref{canon-congr} since
\(\Canong{(A \Impl B)} = A \Impl B\) and \(\Canong{(C \Impl D)} = C \Impl D\).
\qed\CHECKED{2014/04/30, 07/09}
\end{proof}

We shall conclude this section with the following lemma,
which will be used in the proof of
Propositions~\ref{subtyp-O-var} and \ref{subtyp-O-Impl}
in the succeeding section.

\begin{lemma}\pushlabel
Let \(A\) be proper in \(X\), and suppose that \(A[B/X] \geqtyp B\).
\begin{Enumerate}
\item \itemlabel{geqtyp-subst-var-invariant}
    If\/ \(\Canong{B} = \O^n Y\), then \(\Canong{A} = \O^n Y\).
\item \itemlabel{geqtyp-subst-Impl-invariant}
    If\/ \(\Canonp{B} = \O^n (C \Impl D)\), then
	\(\Canonp{A} = \O^n (C' \Impl D')\)
	for some \(n\), \(C'\) and \(D'\) such that
	\(C'[B/X] \geqtyp C\) and \(D'[B/X] \geqtyp D\), where
	\(n = 0\) is case of \(\peqtyp\).
\end{Enumerate}
\end{lemma}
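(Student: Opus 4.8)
The plan is to prove both items simultaneously by a single case analysis, keeping \(\geqtyp\) (hence \(\Canong{\cdot}\)) uniform. The starting point is that the hypothesis on \(\Canong{B}\) already forces \(B\) not to be a {\tvariant}: by Theorem~\ref{geqtyp-t-tvariant} a {\tvariant} has canonical form \(\t\), whereas the shapes \(\O^n Y\) (Item~1) and \(C \Impl D\) (Item~2) fall in the other two categories of Definition~\ref{canon-type-def}. I would then split according to whether \(A[\t/X]\) is a {\tvariant}, the non-{\tvariant} case being the principal one. The lynchpin is Lemma~\ref{canon-subst-lemma}, which requires exactly that \(A[\t/X]\) not be a {\tvariant}.

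First I dispose of the degenerate case in which \(A[\t/X]\) is a {\tvariant}, by deriving a contradiction. Here \(A\) itself is not a {\tvariant}, since otherwise \(A[B/X]\) would be one by Item~\itemref{tvariant-subst1}, forcing \(B \geqtyp A[B/X] \geqtyp \t\) and contradicting the previous paragraph. Then Item~\itemref{tvariant-subst2} applied to \(A[\t/X]\) (with \(\t\) a {\tvariant}) gives \(X \in \PETV{\tail A}\), i.e.\ \(\tail A = \O^{m_0}\fix{Y_1}\O^{m_1}\cdots\fix{Y_n}\O^{m_n}X\) with \(X \notin \{\,Y_1,\dots,Y_n\,\}\); as \(A\), hence \(\tail A\) by Item~\itemref{tail-proper}, is proper in \(X\), a computation of its \(\O\)-depth via Item~\itemref{O-depth-proper} yields \(m_0 + \cdots + m_n \geq 1\). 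Applying \(\tail{\cdot}\) to \(A[B/X] \geqtyp B\) (Proposition~\ref{geqtyp-tail}) together with \(\tail{A[B/X]} = \tail A[\tail B/X]\) (Item~\itemref{tail-subst}) gives \(\O^{m_0}\fix{Y_1}\cdots\O^{m_n}\tail B \geqtyp \tail B\). Choosing the \(Y_i\) fresh for \(\tail B\), every \(\fix{Y_i}\)-binder is vacuous and is removed by \r{{\eqtyp}\mbox{-fix}} (with \r{{\eqtyp}\mbox{-}{\O}} and Item~\itemref{geqtyp-fix-congr} for congruence), collapsing the left side to \(\O^{M}\tail B\) with \(M = m_0 + \cdots + m_n \geq 1\). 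Then \(\O^{M}\tail B \geqtyp \tail B\) forces \(\tail B \geqtyp \t\) by Item~\itemref{geqtyp-O-fix}, whence \(B \geqtyp \t\) by Item~\itemref{geqtyp-tail-t}, contradicting \(B\) not being a {\tvariant}.

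In the principal case \(A[\t/X]\) is not a {\tvariant}, and since \(B\) is not either, Lemma~\ref{canon-subst-lemma} yields the syntactic identity \(\Canong{A[B/X]} = \Canong{A}[B/X]\). From \(A[B/X] \geqtyp B\) and the assumed shape of \(\Canong{B}\), Proposition~\ref{canon-congr} determines \(\Canong{A[B/X]}\): it equals \(\O^n Y\) in Item~1, and in Item~2 (so \(\geqtyp\) is \(\peqtyp\) and \(n = 0\)) it equals \(C'' \Impl D''\) for some \(C'', D''\) with \(C \geqtyp C''\) and \(D \geqtyp D''\). It then remains to read off \(\Canong{A}\) from \(\Canong{A}[B/X] = \Canong{A[B/X]}\). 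As \(A\) is not a {\tvariant}, \(\Canong{A} \neq \t\), so \(\Canong{A}\) is \(\O^m Z\) or a function form. The variable subcase with \(Z = X\) is excluded: there \(m \geq 1\) (else \(A \geqtyp X\) is not proper in \(X\), against Proposition~\ref{geqtyp-proper}), and then \(\Canong{A}[B/X] = \O^m B\) would make \(B\) a variable canonical form with fewer leading \(\O\)'s than \(\Canong{B}\), contradicting its computed value. For \(Z \neq X\) the substitution is inert, and matching shapes gives \(\Canong{A} = \O^n Y\) in Item~1, the function form being discarded since substitution preserves the top-level constructor and the two categories of Definition~\ref{canon-type-def} are syntactically disjoint. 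Symmetrically, in Item~2 the \(\t\) and variable forms are excluded and \(\Canong{A} = E \Impl F\); taking \(C' = E\), \(D' = F\) gives \(C'[B/X] = C'' \geqtyp C\) and \(D'[B/X] = D'' \geqtyp D\), as required.

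The main obstacle is precisely the degenerate case: when \(A[\t/X]\) is a {\tvariant} the self-reference to \(X\) sits at the tail of \(A\), so \(\Canong{A[B/X]}\) is no longer \(\Canong{A}[B/X]\) and Lemma~\ref{canon-subst-lemma} fails. The device that resolves it is to pass to tails, note that the \(\fix\)-binders along the tail become vacuous after substitution, and reduce to \(\O^{M}\tail B \geqtyp \tail B\) with \(M \geq 1\), which Item~\itemref{geqtyp-O-fix} turns into \(B \geqtyp \t\). A secondary point requiring care is the \(Z = X\) subcase of the variable form in the principal case, which is ruled out by counting leading \(\O\)'s.
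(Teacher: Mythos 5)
Your proof is correct, but it takes a genuinely different route from the paper's. The paper simply cases on the form of \(\Canong{A}\): since \(A \geqtyp \Canong{A}\) (Proposition~\ref{geqtyp-canon}), substitution gives \(B \geqtyp A[B/X] \geqtyp \Canong{A}[B/X]\), whose shape is explicit in each case, and Proposition~\ref{canon-congr} then pins down \(\Canong{B}\) directly — going \emph{forward} from \(A\) to \(B\). The only delicate point, the tail self-reference \(\Canong{A} = \O^m X\), is dispatched in two lines: properness forces \(m > 0\), so \(B \geqtyp \O^m B\), and Proposition~\ref{geqtyp-O-fix} forces \(B \geqtyp \t\), contradicting the hypothesis on \(\Canong{B}\). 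You instead go \emph{backward}: you determine \(\Canong{A[B/X]}\) from \(\Canong{B}\) via Proposition~\ref{canon-congr} and then solve the syntactic equation \(\Canong{A}[B/X] = \Canong{A[B/X]}\) supplied by Lemma~\ref{canon-subst-lemma}. That lemma's side condition (\(A[\t/X]\) not a {\tvariant}) is what costs you: the degenerate case needs a separate argument via \(\tail{(\cdot)}\), Proposition~\ref{tail-subst}, \(\O\)-depths and Proposition~\ref{geqtyp-O-fix} — essentially re-deriving a chunk of the computation already done inside the proof of Proposition~\ref{geqtyp-tvariant}. Both arguments are sound; what the paper's buys is brevity (no appeal to Lemma~\ref{canon-subst-lemma} at all, and the degenerate configuration never needs to be isolated), while yours makes explicit the syntactic identity between \(\Canong{A[B/X]}\) and \(\Canong{A}[B/X]\), which is a reusable fact but more than this lemma requires.
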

\begin{proof}
By cases on the form of \(\Canong{A}\).
Suppose that \(A\) is proper in \(X\) and
\(A[B/X] \geqtyp B\).

\Case{\(\Canong{A} = \t\).}
Either \(\Canong{B} = \O^n Y\) or
\(\Canong{B} = \O^n (C \Impl D)\) implies \(B \ngeqtyp \t\)
by Proposition~\ref{canon-congr}; and hence,
\(\Canong{A} \not= \t\) from \(A[B/X] \geqtyp B\)
by Propositions~\ref{tvariant-subst1}, \ref{geqtyp-canon}
and Theorem~\ref{geqtyp-t-tvariant}.

\Case{\(\Canong{A} = \O^m Z\) for some \(m\) and \(Z\).}
If \(Z = X\), then \(m > 0\) since \(A\) is proper in \(X\), and
\(B \geqtyp A[B/X] \geqtyp (\O^m X)[B/X] = \O^m B\), which
implies \(B \geqtyp \t\) by Proposition~\ref{geqtyp-O-fix}.
Hence, \(Z \not= X\); and therefore,
\(B \geqtyp A[B/X] \geqtyp (\O^m Z)[B/X] = \O^m Z\), which implies
that \(\Canong{B} = \O^m Z\) by Proposition~\ref{canon-congr}.
Thus, both Items~1 and 2 hold.

\Case{\(\Canong{A} = \O^m(E \Impl F)\) for some \(m\), \(E\) and \(F\).}
In this case,
\(B \geqtyp A[B/X] \geqtyp \O^m(E \Impl F)[B/X] = \O^m(E[B/X] \Impl F[B/X])\).
Hence, by Proposition~\ref{canon-congr},
\(\Canong{B} = \O^m (C \Impl D)\) for some \(C\) and \(D\) such that
\(C \eqtyp E[B/X]\) and \(D \eqtyp F[B/X]\).
Therefore, both Items~1 and 2 hold.
Note that \(m = 0\) in case of \(\peqtyp\).
\qed\CHECKED{2014/05/02, 07/09}
\end{proof}

\Section{Subtyping}\label{subtyp-sec}

As mentioned in Section~\ref{intro-sec},
our intended interpretation of the \(\O\)-modality introduces
a subtyping relation into types.
We define the subtyping relation by a set of derivation rules
in a similar way to \cite{amadio:cardelli}.

\begin{definition}
    \ilabel{subtyp-assump-def}{subtyping assumption}
A {\em subtyping assumption} is a finite set of pairs
of type variables such that
any type variable appears at most once in the set.
To denote the subtyping assumption
\(\zfset{\tuple{X_i,\,Y_i}}{i = 1,\,2,\,\ldots,\,n}\),
we write
\(\{\,X_1 \psubtyp Y_1,\, X_2 \psubtyp Y_2,\,\ldots,\,X_n \psubtyp Y_n\,\}\).
We use \(\g\), \(\g'\), \(\g_1\), \(\g_2\), \(\ldots\)
to denote subtyping assumptions, and \(\FTV{\g}\) to denote
the set of type variables occurring in \(\g\).
\end{definition}

\begin{definition}[\(\psubtyp\)]
    \ilabel{subtyp-def}{subtyping judgment}
    \ilabel*{0 subtyp@$\protect\psubtyp$}
    \ilabel*{0 yields subtyping@$\protect\subt{\g}{A \psubtyp B}$}
We define the derivability of {\em subtyping judgment}
\(\subt{\g}{A \psubtyp B}\) by the following derivation rules.
\[
\ifr{{\rsubtyp}\mbox{-assump}}
    {}
    {\subt{\g \cup \{X \psubtyp Y\}}{X \psubtyp Y}}
\mskip50mu
\ifr{{\rsubtyp}\mbox{-}\t}
    {}
    {\subt{\g}{A \psubtyp \t}}
\]
\[
\ifr{{\rsubtyp}\mbox{-reflex}}
    {A \peqtyp B}
    {\subt{\g}{A \psubtyp B}}
\mskip50mu
\ifr{{\rsubtyp}\mbox{-trans}}
    {\subt{\g_1}{A \psubtyp B}
     & \subt{\g_2}{B \psubtyp C}}
    {\subt{\g_1 \cup \g_2}{A \psubtyp C}}
\]
\[
\ifr{{\rsubtyp}\mbox{-}{\O}}
    {\subt{\g}{A \psubtyp B}}
    {\subt{\g}{\O A \psubtyp \O B}}
\mskip70mu
\ifr{{\rsubtyp}\mbox{-}{\Impl}}
    {\subt{\g_1}{A' \psubtyp A}
     & \subt{\g_2}{B \psubtyp B'}}
    {\subt{\g_1 \cup \g_2}{A \Impl B \psubtyp A' \Impl B'}}
\]
\[
\Ifr{\r{{\rsubtyp}\mbox{-}{\fx}}~~~\left(
	\mbox{\tabcolsep=0pt
	\ifnarrow
	    \begin{tabular}{p{0.36\hsize}}
	\else
	    \begin{tabular}{p{0.44\hsize}}
	\fi
	    \(X \not\in \FTV{\g} \cup \FTV{B}\),
		\(Y \not\in \FTV{\g} \cup \FTV{A}\),
		and \(A\) and \(B\) are proper
	    in \(X\) and \(Y\), respectively
	    \end{tabular}}
	\right)}
    {\subt{\g \cup \{X \psubtyp Y\}}{A \psubtyp B}}
    {\subt{\g}{\fix{X} A \psubtyp \fix{Y} B}}
\]
\[
\ifr{{\rsubtyp}\mbox{-approx}}
    {}
    {\subt{\g}{A \psubtyp \O A}}
\]
Note that \(\g \cup \{X \psubtyp Y\}\) and \(\g_1 \cup \g_2\) in
the rules must be well-formed subtyping assumptions, i.e., any type variable
must not have more than one occurrence in them.
We also define a binary relation \(\psubtyp\) over \(\TE\) as
\(A \psubtyp B\) if and only if \(\subt{\{\}}{\!A \psubtyp B}\) is derivable.
\end{definition}

The subtyping relation \(\psubtyp\)
reflects the interpretation of type expressions in {\lA}-frames.
Most of the subtyping rules are standard.
The rule \r{{\rsubtyp}\mbox{-}{\fx}} corresponds to the ``Amber rule''
\cite{cardelli}.
The rules \r{{\rsubtyp}\mbox{-}\t},
\r{{\rsubtyp}\mbox{-}{\O}} and
\r{{\rsubtyp}\mbox{-approx}} reflect our intended
meaning of the \(\O\)-modality discussed in Section~\ref{intro-sec}.
In the rest of this section,
we discuss some basic properties of \(\psubtyp\).

We can also consider another subtyping judgment, say
\(\subt{\g}{A \subtyp B}\), by substituting \(A \eqtyp B\) for
the antecedent of the \r{{\rsubtyp}\mbox{-reflex}}-rule and by
instead adding the following subtyping rule, which
corresponds to the axiom schema {\bf K} of normal modal logic.
\[
    \ifr{{\subtyp}\mbox{-\bf K}}
	{}
	{\subt{\g}{\O(A \Impl B) \subtyp \O A \Impl \O B}}
\]
However, in order to concentrate on the main purpose of the present paper,
we will not discuss such a variant in the sequel\footnote{%
In \cite{nakano-tacs01}, 
this variant is called S-\(\lambda{\bullet}\mu^{\!+}\).
}.

\begin{definition}
    \ilabel{subtyp-models-def}{0 models subtyp@$\protect\strut%
	\protect\tenv \protect\models \protect\g$}
Let \(\pair{\W}{\acc}\) be a {\lA}-frame,
\(\tenv\) a type environment, and \(\g\) a subtyping assumption.
We write \(\tenv \models \g\) if and only if
\(\tenv(X)_p \subseteq \tenv(Y)_p\) for every \(p \in \W\), \(X\) and \(Y\)
such that \(\{\,X \psubtyp Y\,\} \subseteq \g\).
\end{definition}

\begin{theorem}[Soundness of \(\psubtyp\)]
    \ilabel{psubtyp-soundness}{soundness!subtyping@$\protect\psubtyp$}
Consider the interpretation \(\II\) over a {\lA}-frame \(\pair{\W}{\acc}\).
Let \(\tenv\) be a hereditary type environment.
If\/ \(\subt{\g}{A \psubtyp B}\) and \(\tenv \models \g\),
then \(\I{A}^\tenv_p \subseteq \I{B}^\tenv_p\;\) for every \(p \in \W\).
\end{theorem}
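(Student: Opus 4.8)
The plan is to proceed by induction on the derivation of $\subt{\g}{A \psubtyp B}$, with cases on the last rule applied, assuming throughout that $\tenv \models \g$ and fixing an arbitrary $p \in \W$. Several cases are immediate once we rewrite the interpretations via the uniform defining equations of Proposition~\ref{rlz-another-def}, which let us handle {\tvariant}s without a separate argument. For \r{{\rsubtyp}\mbox{-assump}} the inclusion $\I{X}^\tenv_p = \tenv(X)_p \subseteq \tenv(Y)_p = \I{Y}^\tenv_p$ is exactly what $\tenv \models \g \cup \{X \psubtyp Y\}$ asserts; for \r{{\rsubtyp}\mbox{-}\t} it follows from $\I{\t}^\tenv_p = \V$; for \r{{\rsubtyp}\mbox{-reflex}} it follows from Theorem~\ref{peqtyp-soundness}; and \r{{\rsubtyp}\mbox{-trans}} is transitivity of $\subseteq$, noting that $\tenv \models \g_1 \cup \g_2$ yields both $\tenv \models \g_1$ and $\tenv \models \g_2$.

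The structural cases \r{{\rsubtyp}\mbox{-}{\O}} and \r{{\rsubtyp}\mbox{-}{\Impl}} unfold the definitions through Proposition~\ref{rlz-another-def} and apply the induction hypothesis pointwise: for $\O$, if $u \in \I{\O A}^\tenv_p$ and $p \acc q$ then $u \in \I{A}^\tenv_q \subseteq \I{B}^\tenv_q$, so $u \in \I{\O B}^\tenv_p$; for $\Impl$, an argument $v \in \I{A'}^\tenv_q \subseteq \I{A}^\tenv_q$ fed through $u$ lands in $\I{B}^\tenv_q \subseteq \I{B'}^\tenv_q$. The rule \r{{\rsubtyp}\mbox{-approx}} is precisely hereditarity: if $u \in \I{A}^\tenv_p$, then for every $q$ with $p \acc q$ we have $u \in \I{A}^\tenv_q$ by Proposition~\ref{rlz-hereditary}, whence $u \in \I{\O A}^\tenv_p$.

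The main obstacle is the Amber rule \r{{\rsubtyp}\mbox{-}{\fx}}, where from the premise $\subt{\g \cup \{X \psubtyp Y\}}{A \psubtyp B}$ we must derive $\I{\fix{X}A}^\tenv_p \subseteq \I{\fix{Y}B}^\tenv_p$. Writing $S = \I{\fix{X}A}^\tenv$ and $T = \I{\fix{Y}B}^\tenv$ (hereditary mappings) and using Proposition~\ref{rlz-subst-env}, we have $\I{\fix{X}A}^\tenv_p = \I{A}^{\tenv[S/X]}_p$ and $\I{\fix{Y}B}^\tenv_p = \I{B}^{\tenv[T/Y]}_p$; assuming $X \neq Y$ (possible by $\alpha$-renaming) these combine into $\I{A}^{\tenv[S/X,\,T/Y]}_p$ and $\I{B}^{\tenv[S/X,\,T/Y]}_p$ since $Y \notin \FTV{A}$ and $X \notin \FTV{B}$. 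I would prove $S_p \subseteq T_p$ by a nested well-founded induction on $p$ with respect to the transitive closure $\pacc$ of $\acc$. The difficulty is an apparent circularity: to apply the outer induction hypothesis (soundness for the premise) one needs an environment globally validating $X \psubtyp Y$, yet that inclusion is the very thing under proof. The resolution is to construct a tailored hereditary environment $\tenv^\sharp$ that agrees with $\tenv[S/X,\,T/Y]$ on every world strictly accessible from $p$, assigns $S_r$ to $X$ and $T_r$ to $Y$ on the worlds $r$ with $p \pacc r$, and sets $\tenv^\sharp(X)_r = \tenv^\sharp(Y)_r = \{\}$ at $p$ itself and at all worlds not strictly below $p$. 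Because $A$ is proper in $X$ and in $Y$, and $B$ in $Y$ and in $X$, Lemma~\ref{rlz-proper-subst-lemma} guarantees $\I{A}^{\tenv^\sharp}_p = \I{A}^{\tenv[S/X,\,T/Y]}_p$ and likewise for $B$, so the choices at $p$ are irrelevant to the interpretation at $p$.

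It then remains to verify $\tenv^\sharp \models \g \cup \{X \psubtyp Y\}$. The $\g$-part holds because $X, Y \notin \FTV{\g}$ leaves $\tenv^\sharp$ equal to $\tenv$ on the variables of $\g$; and $\tenv^\sharp(X)_r \subseteq \tenv^\sharp(Y)_r$ holds at every world, since on worlds with $p \pacc r$ it reads $S_r \subseteq T_r$, supplied by the inner induction hypothesis, and elsewhere both sides are $\{\}$. Checking that $\tenv^\sharp$ is genuinely hereditary is routine once one observes that $p \pacc r \acc r'$ implies $p \pacc r'$, together with the hereditarity of $S$ and $T$. Applying the outer induction hypothesis to $\tenv^\sharp$ at the world $p$ then yields $\I{A}^{\tenv^\sharp}_p \subseteq \I{B}^{\tenv^\sharp}_p$, which transports back through Lemma~\ref{rlz-proper-subst-lemma} to $S_p \subseteq T_p$, completing the nested induction and hence this case.
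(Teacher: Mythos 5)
Your proposal is correct and follows essentially the same route as the paper: the routine cases are handled identically, and for \r{{\rsubtyp}\mbox{-}{\fx}} the paper likewise runs a nested well-founded induction on the world and builds an auxiliary hereditary environment that assigns \(\{\}\) to the bound variables at \(p\) and the fixed-point interpretations strictly below \(p\), invoking Lemma~\ref{rlz-proper-subst-lemma} to discharge the choice at \(p\) itself. The only cosmetic difference is that you gate the auxiliary environment on \(p \pacc q\) where the paper uses \(p \acc q\); your choice makes its hereditarity slightly more transparent but changes nothing essential.
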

\begin{proof}
By induction on the derivation of \(\subt{\g}{A \psubtyp B}\),
and by cases on the last subtyping rule applied in the derivation.

\Case{\r{{\rsubtyp}\mbox{-assump}}.}
In this case, \(A = X\) and \(B = Y\) for some \(X\) and \(Y\) such that
\(\{\,X \psubtyp Y\,\} \subseteq \g\).
We get \(\tenv(X)_p \subseteq \tenv(Y)_p\) for every \(p\)
from \(\tenv \models \g\).

\Case{\r{{\rsubtyp}\mbox{-}\t}.}
Obvious from Definition~\ref{rlz-def}.

\Case{\r{{\rsubtyp}\mbox{-reflex}}.}
Obvious from Theorem~\ref{peqtyp-soundness}.

\Case{\r{{\rsubtyp}\mbox{-trans}}.}
For some \(C\), \(\g_1\) and \(\g_2\) such that \(\g = \g_1 \cup \g_2\),
the derivation ends with
\[
\Ifr{\r{{\rsubtyp}\mbox{-trans}}.}
    {\subt{\g_1}{A \psubtyp C}
     & \subt{\g_2}{C \psubtyp B}}
    {\subt{\g_1 \cup \g_2}{A \psubtyp B}}
\]
Since \(\g = \g_1 \cup \g_2\),
we get \(\tenv \models \g_1\) and \(\tenv \models \g_2\)
from \(\tenv \models \g\).
Therefore,
\(\I{A}^\tenv_p \subseteq \I{C}^\tenv_p \subseteq \I{B}^\tenv_p\)
by induction hypothesis.

\Case{\r{{\rsubtyp}\mbox{-}{\O}}.}
For some \(A'\) and \(B'\) such that \(A = \O A'\) and \(B = \O B'\),
the derivation ends with
\[
\Ifr{\r{{\rsubtyp}\mbox{-}{\O}}.}
    {\subt{\g}{A' \psubtyp B'}}
    {\subt{\g}{\O A' \psubtyp \O B'}}
\]
We get \(\I{A'}^\tenv_q \subseteq \I{B'}^\tenv_q\) for every \(q\)
by induction hypothesis.
Therefore, by Proposition~\ref{rlz-another-def},
\[
    \I{\O A'}^\tenv_p
    = \zfset{u}{u \in \I{A'}^\tenv_q~~\mbox{for every}~q \opacc p}
    \subseteq \zfset{u}{u \in \I{B'}^\tenv_q~~\mbox{for every}~q \opacc p}
    = \I{\O B'}^\tenv_p.
\]

\Case{\r{{\rsubtyp}\mbox{-}{\Impl}}.}
For some \(A_1\), \(A_2\), \(B_1\), \(B_2\), \(\g_1\) and \(\g_2\)
such that \(A = A_1 \Impl A_2\), \(B = B_1 \Impl B_2\) and
\(\g = \g_1 \cup \g_2\),
the derivation ends with
\[
\Ifr{\r{{\rsubtyp}\mbox{-}{\Impl}}.}
    {\subt{\g_1}{B_1 \psubtyp A_1}
     & \subt{\g_2}{A_2 \psubtyp B_2}}
    {\subt{\g_1 \cup \g_2}{A_1 \Impl A_2 \psubtyp B_1 \Impl B_2}}
\]
Similarly to the previous case, we get
\(\I{B_1}^\tenv_q \subseteq \I{A_1}^\tenv_q\) and
\(\I{A_2}^\tenv_q \subseteq \I{B_2}^\tenv_q\) for every \(q\),
by induction hypothesis.
Therefore, by Proposition~\ref{rlz-another-def},
\(\I{A_1 \Impl A_2}^\tenv_p
    = \zfset{u}{u \cdot v \in \I{A_2}^\tenv_q~~\mbox{for every}~
	    v \in \I{A_1}^\tenv_q~~\mbox{whenever}~p \tacc q}
    \subseteq \zfset{u}{u \cdot v \in \I{B_2}^\tenv_q~~\mbox{for every}~
	    v \in \I{B_1}^\tenv_q~~\mbox{whenever}~p \tacc q}
    = \I{B_1 \Impl B_2}^\tenv_p\).

\Case{\r{{\rsubtyp}\mbox{-}{\fx}}.}
For some \(X'\), \(Y'\), \(A'\) and \(B'\) such that
\(A = \fix{X'}A'\) and \(B = \fix{Y'}B'\), the derivation ends with
\[
\ifr{{\rsubtyp}\mbox{-}{\fx}}
    {\subt{\g \cup \{\,X' \psubtyp Y'\,\}}{A' \psubtyp B'}}
    {\subt{\g}{\fix{X'} A' \psubtyp \fix{Y'} B'}}
\]
where \(X' \not\in \FTV{\g} \cup \FTV{B'}\),
\(Y' \not\in \FTV{\g} \cup \FTV{A'}\), and furthermore,
\(A'\) and \(B'\) are proper in \(X'\) and \(Y'\), respectively.
We show that
\(\I{A}^{\tenv}_p \subseteq \I{B}^{\tenv}_p\) for every \(p \in \W\)
by induction on \(p\).
Suppose that \(p \in \W\).
By the induction hypothesis on \(p\),
\begin{eqnarray}
\label{psubtyp-soundness-01}
    \I{A}^{\tenv}_q \subseteq \I{B}^{\tenv}_q
    ~\mbox{for every}~ q \opacc p.
\end{eqnarray}
Let \(\tenv'\) be the type environment defined as follows.
\begin{eqnarray*}
    \tenv'(X')_q &=& \Choice{%
		\I{A}^\tenv_q \qquad& (p \acc q) \\
		\{\} & (p \not\acc q)
		} \\[4pt]
    \tenv'(Y')_q &=& \Choice{%
		\I{B}^\tenv_q \qquad& (p \acc q) \\
		\{\} & (p \not\acc q)
		} \\[4pt]
    \tenv'(Z)_q &=& \tenv(Z)_q\mskip89mu(Z \not\in \{\,X',\,Y'\,\})
\end{eqnarray*}
Note that
\begin{eqnarray}\label{psubtyp-soundness-02}
    \tenv'(Z)_q &=& \tenv[\I{A}^\tenv/X',\I{B}^\tenv/Y'](Z)_q
    ~~\mbox{for every \(Z\) and \(q \opacc p\)},
\end{eqnarray}
and that \(\tenv' \models \g \cup \{\,X' \psubtyp Y'\,\}\)
by (\ref{psubtyp-soundness-01}).
Hence, by the induction hypothesis on the derivation,
\begin{eqnarray}
\label{psubtyp-soundness-03}
    \I{A'}^{\tenv'}_p \subseteq \I{B'}^{\tenv'}_p.
\end{eqnarray}
On the other hand,
\begin{Eqnarray*}
\I{A}^\tenv_p
    &=& \I{A'[A/X']}^\tenv_p
	    & (by Proposition~\ref{rlz-another-def}) \\
    &=& \I{A'[A/X',\,B/Y']}^\tenv_p
	    & (by Proposition~\ref{geqtyp-no-etv-subst} and
		Theorem~\ref{peqtyp-soundness}) \\
    &=& \I{A'}^{\tenv[\I{A}^\tenv/X',\,\I{B}^\tenv/Y']}_p
	    & (by Proposition~\ref{rlz-subst-env}) \\
    &=& \I{A'}^{\tenv'}_p
	    & (by (\ref{psubtyp-soundness-02})
		and Lemma~\ref{rlz-proper-subst-lemma}).
\end{Eqnarray*}
Similarly, \(\I{B}^\tenv_p = \I{B'}^{\tenv'}_p\).
Therefore, \(\I{A}^\tenv_p \subseteq \I{B}^\tenv_p\)
by (\ref{psubtyp-soundness-03}).

\Case{\r{{\rsubtyp}\mbox{-approx}}.}
\(B = \O A\) in this case.
Hence, \(\I{A}^\tenv_p
    \subseteq \zfset{u}{u \in \I{A}^\tenv_q~~\mbox{for every}~q \opacc p}
    = \I{B}^\tenv_p\)
by Propositions~\ref{rlz-hereditary} and \ref{rlz-another-def}.
\qed\CHECKED{2014/05/01, 07/09}
\end{proof}

Although not being necessary for showing the soundness of subtyping
judgments, the following basic propositions should also be established in order
to discuss formal derivability of subtyping, later.

\begin{proposition}\label{subtyp-basic}\pushlabel
\begin{Enumerate}
\item \itemlabel{subtyp-rename}
    If\/ \(\subt{\g\cup\{\,X \psubtyp Y\,\}}{A \psubtyp B}\) is derivable, and
    \(\g\cup\{\,X' \psubtyp Y'\,\}\) is a well-formed subtyping assumption,
    then \(\subt{\g\cup\{\,X' \psubtyp Y'\,\}}%
	{A[X'/X,\,Y'/Y] \psubtyp B[X'/X,\,Y'/Y]}\)
    is also derivable without changing the height of derivation.
    {\rm\bf(renaming)}
\item \itemlabel{subtyp-weakening}
    If\/ \(\subt{\g}{A \psubtyp B}\) is derivable, and
    \(\g'\) is a subtyping assumption such that
    \(\g \subseteq \g'\), then \(\subt{\g'}{A \psubtyp B}\)
    is also derivable without changing the height of derivation.
    {\rm\bf(weakening)}
\item \itemlabel{subtyp-inst}
    If\/ \(\subt{\g\cup\{\,X \psubtyp Y\,\}}{A \psubtyp B}\)
    is derivable, then
    so is \(\subt{\g}{A[C/X,\,C/Y] \psubtyp B[C/X,\,C/Y]}\)
    without changing the height of derivation.
\item \itemlabel{subtyp-subst}
    If\/ \(\subt{\g\cup\{\,X \psubtyp Y\,\}}{A \psubtyp B}\) and
    \(\subt{\g}{C \psubtyp D}\) are derivable, then
    so is \(\subt{\g}{A[C/X,\,D/Y] \psubtyp B[C/X,\,D/Y]}\).
    {\rm\bf(substitution)}
\end{Enumerate}
\end{proposition}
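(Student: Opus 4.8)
The plan is to prove the four items in the order stated, each by induction on the height of the given subtyping derivation, since the later items reuse the earlier ones to manage bound variables and assumption sets. In every case the induction proceeds by cases on the last rule applied, and the generic move is to apply the induction hypothesis to each premise and reapply the same rule to the conclusion. Only \r{{\rsubtyp}\mbox{-reflex}}, \r{{\rsubtyp}\mbox{-}{\fx}}, and the two binary rules \r{{\rsubtyp}\mbox{-trans}} and \r{{\rsubtyp}\mbox{-}{\Impl}} (which union two assumption sets) require real attention.

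For Item~\itemref{subtyp-rename} the map \([X'/X,\,Y'/Y]\) only renames free type variables, so the derivation tree is copied node for node with its height intact. In \r{{\rsubtyp}\mbox{-reflex}} I use that \(A \peqtyp B\) implies \(A[X'/X,Y'/Y] \peqtyp B[X'/X,Y'/Y]\) by Proposition~\ref{geqtyp-subst}, and in \r{{\rsubtyp}\mbox{-}{\fx}} I first \(\alpha\)-rename the bound variables and the freshly introduced pair to avoid \(X'\) and \(Y'\), properness being preserved by Item~\itemref{proper-subst1}. Item~\itemref{subtyp-weakening} is equally direct: enlarging \(\g\) to \(\g'\) leaves each rule instance valid, the binary rules being handled by weakening both premises to \(\g'\); the only subtlety is that the bound-variable pair in \r{{\rsubtyp}\mbox{-}{\fx}} must now be chosen fresh for \(\g'\), which is arranged through Item~\itemref{subtyp-rename}. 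Both constructions preserve the height.

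For Item~\itemref{subtyp-inst} I replace both \(X\) and \(Y\) by the single expression \(C\). The leaf \r{{\rsubtyp}\mbox{-assump}} deriving \(\subt{\g\cup\{X\psubtyp Y\}}{X \psubtyp Y}\) turns into the goal \(\subt{\g}{C \psubtyp C}\), discharged by \r{{\rsubtyp}\mbox{-reflex}} since \(C \peqtyp C\); this is again a leaf, so no height is gained. Every other \r{{\rsubtyp}\mbox{-assump}} leaf concerns a pair of \(\g\), untouched by \([C/X,C/Y]\) because well-formedness of \(\g\cup\{X\psubtyp Y\}\) forces \(X,Y \notin \FTV{\g}\). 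The binary rules are treated by first weakening both premises to the common assumption \(\g\cup\{X\psubtyp Y\}\) via Item~\itemref{subtyp-weakening} (height-preserving), so that the distinguished pair is available for the induction hypothesis in each branch; the remaining cases carry the substitution through verbatim, using Proposition~\ref{geqtyp-subst} for \r{{\rsubtyp}\mbox{-reflex}}, Item~\itemref{tvariant-subst1} for \r{{\rsubtyp}\mbox{-}\t}, and a fresh choice of bound variables together with Item~\itemref{proper-subst1} for \r{{\rsubtyp}\mbox{-}{\fx}}.

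Item~\itemref{subtyp-subst} is the substantive one and the expected main obstacle, since \(C\) now replaces \(X\) while a possibly different \(D\) replaces \(Y\) and no height bound is asserted. Inducting on the derivation of \(\subt{\g\cup\{X\psubtyp Y\}}{A \psubtyp B}\), the leaf for the distinguished pair is replaced by the \emph{entire} given derivation of \(\subt{\g}{C \psubtyp D}\), which is exactly what lets the height grow. The binary rules \r{{\rsubtyp}\mbox{-trans}} and \r{{\rsubtyp}\mbox{-}{\Impl}} split \(\g\cup\{X\psubtyp Y\}=\g_1\cup\g_2\), and since \(X\psubtyp Y\) need not lie in both \(\g_i\) I again weaken each premise to \(\g\cup\{X\psubtyp Y\}\) before applying the induction hypothesis, after which both premises receive the uniform substitution \([C/X,D/Y]\) and recombine with matching cut or argument types; the contravariance of \r{{\rsubtyp}\mbox{-}{\Impl}} causes no trouble, because pushing the same substitution into both sides of each premise yields precisely the reversed inclusion the rule demands. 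The \r{{\rsubtyp}\mbox{-reflex}} case uses \(A[C/X,D/Y] \peqtyp B[C/X,D/Y]\), obtained from \(A \peqtyp B\) by iterated application of Proposition~\ref{geqtyp-subst}, and the \r{{\rsubtyp}\mbox{-}{\fx}} case re-establishes the freshness and properness side conditions for the substituted body by choosing the bound-variable pair fresh for \(C\), \(D\) and \(\g\), using Items~\itemref{etv-proper} and \itemref{proper-subst1} together with a weakening of \(\subt{\g}{C \psubtyp D}\) that adjoins the fresh pair.
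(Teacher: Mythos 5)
Your overall strategy coincides with the paper's: induction on the height of the derivation with a case split on the last rule, Proposition~\ref{geqtyp-subst} for \r{{\rsubtyp}\mbox{-reflex}}, weakening of the two premises to a common assumption set for \r{{\rsubtyp}\mbox{-trans}} and \r{{\rsubtyp}\mbox{-}{\Impl}}, renaming of the bound pair for \r{{\rsubtyp}\mbox{-}{\fx}}, and --- the two key moves --- turning the distinguished \r{{\rsubtyp}\mbox{-assump}} leaf into an instance of \r{{\rsubtyp}\mbox{-reflex}} in Item~\itemref{subtyp-inst} and grafting the whole derivation of \(\subt{\g}{C \psubtyp D}\) onto that leaf in Item~\itemref{subtyp-subst}, which is exactly why no height bound is claimed there.

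The one genuine problem is the claimed ordering of Items~\itemref{subtyp-rename} and \itemref{subtyp-weakening}. You assert the items can be proved ``in the order stated, since the later items reuse the earlier ones,'' and accordingly your treatment of Item~\itemref{subtyp-rename} discusses only \r{{\rsubtyp}\mbox{-reflex}} and \r{{\rsubtyp}\mbox{-}{\fx}} and declares the tree ``copied node for node.'' But in the \r{{\rsubtyp}\mbox{-trans}} and \r{{\rsubtyp}\mbox{-}{\Impl}} cases of Item~\itemref{subtyp-rename} the assumption set \(\g\cup\{\,X \psubtyp Y\,\}\) is split as \(\g_1 \cup \g_2\), and the pair \(X \psubtyp Y\) may be absent from one \(\g_i\) while \(X\) or \(Y\) still occurs free in the formulas of that premise (e.g.\ a premise obtained by \r{{\rsubtyp}\mbox{-reflex}} with empty assumptions); the renaming induction hypothesis, which applies only to judgments whose assumption set contains the pair, then gives you nothing, and you must first weaken that premise --- i.e.\ invoke Item~\itemref{subtyp-weakening}, which in your ordering is not yet available. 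Conversely, the \r{{\rsubtyp}\mbox{-}{\fx}} case of Item~\itemref{subtyp-weakening} needs Item~\itemref{subtyp-rename} to refresh the bound pair, as you yourself note. So neither item can be completed strictly before the other. The paper closes this loop by proving Items~\itemref{subtyp-rename} and \itemref{subtyp-weakening} by a single simultaneous induction on the height, each invoking the other only on strictly smaller derivations; alternatively you could strengthen Item~\itemref{subtyp-rename} to rename arbitrary free type variables whether or not the pair occurs in the assumption set. With either repair the remaining two items go through as you describe.
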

\begin{proof}
By induction on the height of the derivation of
\(\subt{\g\cup
\{\,X \psubtyp Y\,\}}{A \psubtyp B}\) or \(\subt{\g}{A \psubtyp B}\), and by
cases on the last rule applied.
The proofs for Items~\itemref{subtyp-rename} and \itemref{subtyp-weakening}
proceed by simultaneous induction.
We use Item~\itemref{subtyp-rename} to rename the bound type variables
in case of \r{{\rsubtyp}\mbox{-}{\fx}}, and
Item~\itemref{subtyp-weakening} to unify
the typing assumptions of the two antecedents when the last
rule is \r{{\rsubtyp}\mbox{-trans}} or \r{{\rsubtyp}\mbox{-}{\Impl}}.
Use Proposition~\ref{geqtyp-subst} for the case \r{{\rsubtyp}\mbox{-reflex}}
in the proofs of Items~\itemref{subtyp-rename}, \itemref{subtyp-inst}
and \itemref{subtyp-subst}.
\ifdetail

\paragraph{Proof of \protect\itemref{subtyp-rename}}
By simultaneous induction with Item~\itemref{subtyp-weakening}.
Suppose that \(\subt{\g\cup\{\,X \psubtyp Y\,\}}{A \psubtyp B}\) is derivable,
and \(\g\cup\{\,X' \psubtyp Y'\,\}\) is a well-formed subtyping assumption.
Let \(A' = A[X'/X,\,Y'/Y]\) and \(B' = B[X'/X,\,Y'/Y]\).

\Case{\r{{\rsubtyp}\mbox{-assump}}.}
In this case, the derivation has the form of
\[
    \ifr{{\rsubtyp}\mbox{-assump}}
	{}
	{\subt{\g' \cup \{Z_1 \psubtyp Z_2\}}{Z_1 \psubtyp Z_2}}
\]
for some \(Z_1\), \(Z_2\) and \(\g'\) such that
\(A = Z_1\), \(B = Z_2\) and
\(\g' \cup \{Z_1 \psubtyp Z_2\} = \g \cup \{\,X \psubtyp Y\,\}\).
If \(X \in \{\,Z_1,\,Z_2\,\}\), then \(X = Z_1\) and \(Y = Z_2\);
and hence, \(A' = X'\) and \(B' = Y'\).
Therefore, \(\subt{\g \cup \{X' \psubtyp Y'\}}{A' \psubtyp B'}\)
is derivable by the same rule.
On the other hand, if \(X \not\in \{\,Z_1,\,Z_2\,\}\), then
\(\{\,X,\,Y\,\} \cap \{\,Z_1,\,Z_2\,\} = \{\}\);
and hence, \(A' = Z_1\), \(B' = Z_2\) and
\(\{\,Z_1 \psubtyp Z_2\,\} \subseteq \g\).
Therefore, \(\subt{\g \cup \{X' \psubtyp Y'\}}{A' \psubtyp B'}\)
is derivable by \r{{\rsubtyp}\mbox{-assump}} again.

\Case{\r{{\rsubtyp}\mbox{-}\t}.}
In this case, \(B = \t\).
Hence, trivial because \(B' = \t[X'/X,\,Y'/Y] = \t\).

\Case{\r{{\rsubtyp}\mbox{-reflex}}.}
In this case, \(A \peqtyp B\).
Hence, \(A' \peqtyp B'\) by Proposition~\ref{geqtyp-subst}; and therefore,
\(\subt{\g \cup \{X' \psubtyp Y'\}}{A' \psubtyp B'}\) is derivable
by \r{{\rsubtyp}\mbox{-reflex}}.

\Case{\r{{\rsubtyp}\mbox{-trans}}.}
The derivation ends with
\[
\ifr{{\rsubtyp}\mbox{-trans}}
    {\subt{\g_1}{A \psubtyp C}
     & \subt{\g_2}{C \psubtyp B}}
    {\subt{\g_1 \cup \g_2}{A \psubtyp B}}
\]
for some \(C\), \(\g_1\) and \(\g_2\) such that
\(\g \cup \{\,X \psubtyp Y\,\} = \g_1 \cup \g_2\).
By the induction hypothesis for Item~\itemref{subtyp-weakening},
we can assume that
\(\g_1 = \g_2 = \g \cup \{\,X \psubtyp Y\,\}\).
Therefore, by the induction hypothesis for this item,
\(\subt{\g \cup \{X' \psubtyp Y'\}}{A' \psubtyp C[X'/X,\,Y'/Y]'}\) and
\(\subt{\g \cup \{X' \psubtyp Y'\}}{C[X'/X,\,Y'/Y] \psubtyp B'}\)
are derivable
without changing the heights of the original derivations of the antecedents.
Hence, so is
\(\subt{\g \cup \{X' \psubtyp Y'\}}{A' \psubtyp B'}\) by the same rule
without changing the height of the original derivation of the consequent.

\Case{\r{{\rsubtyp}\mbox{-}{\O}}.}
The derivation ends with
\[
\ifr{{\rsubtyp}\mbox{-}{\O}}
    {\subt{\g \cup \{\,X \psubtyp Y\,\}}{C \psubtyp D}}
    {\subt{\g \cup \{\,X \psubtyp Y\,\}}{\O C \psubtyp \O D}}
\]
for some \(C\) and \(D\) such that \(A = \O C\) and \(B = \O D\).
By induction hypothesis,
\(\subt{\g \cup \{\,X' \psubtyp Y'\,\}}%
    {C[X'/X,\,Y'/Y] \psubtyp D[X'/X,\,Y'/Y]}\)
is derivable with the original height.
Hence, so is
\(\subt{\g \cup \{\,X' \psubtyp Y'\,\}}{A' \psubtyp B'}\) by the same rule
without changing the height of derivation.

\Case{\r{{\rsubtyp}\mbox{-}{\Impl}}.}
The derivation ends with
\[
\ifr{{\rsubtyp}\mbox{-}{\Impl}}
    {\subt{\g_1}{B_1 \psubtyp A_1}
     & \subt{\g_2}{A_2 \psubtyp B_2}}
    {\subt{\g_1 \cup \g_2}{A_1 \Impl A_2 \psubtyp B_1 \Impl B_2}}
\]
for some \(A_1\), \(A_2\), \(B_1\), \(B_2\), \(\g_1\) and \(\g_2\)
such that \(A = A_1 \Impl A_2\), \(B = B_1 \Impl B_2\) and
\(\g \cup \{\,X \psubtyp Y\,\} = \g_1 \cup \g_2\).
We can assume that
\(\g_1 = \g_2 = \g \cup \{\,X \psubtyp Y\,\}\)
by the induction hypothesis for Item~\itemref{subtyp-weakening}.
Hence, by the induction hypothesis for this item,
\(\subt{\g \cup \{X' \psubtyp Y'\}}%
    {B_1[X'/X,\,Y'/Y] \psubtyp A_1[X'/X,\,Y'/Y]}\) and
\(\subt{\g \cup \{X' \psubtyp Y'\}}%
    {A_2[X'/X,\,Y'/Y] \psubtyp B_2[X'/X,\,Y'/Y]}\)
are derivable
without changing the heights.
Therefore, so is
\(\subt{\g \cup \{X' \psubtyp Y'\}}{A' \psubtyp B'}\) by the same rule
without changing the height of the original derivation.

\Case{\r{{\rsubtyp}\mbox{-}{\fx}}.}
The derivation ends with
\[
\ifr{{\rsubtyp}\mbox{-}{\fx}}
    {\subt{\g \cup \{\,X \psubtyp Y\,\} \cup \{\,Z_1 \psubtyp Z_2\,\}}
	{C \psubtyp D}}
    {\subt{\g \cup \{\,X \psubtyp Y\,\}}{\fix{Z_1}C \psubtyp \fix{Z_2}D}}
\]
for some \(Z_1\), \(Z_2\), \(C\) and \(D\) such that
\(A = \fix{Z_1}C\), \(B = \fix{Z_2}D\),
where \(Z_1 \not\in \FTV{\g \cup \{\,X \psubtyp Y\,\}} \cup \FTV{D}\),
\(Z_2 \not\in \FTV{\g \cup \{\,X \psubtyp Y\,\}} \cup \FTV{C}\),
and furthermore,
\(C\) and \(D\) are proper in \(Z_1\) and \(Z_2\), respectively.
By induction hypothesis,
we can assume that \(\{\,Z_1,\,Z_2\,\} \cap \{\,X',\,Y'\,\} = \{\}\)
without loss of generality.
Hence, \(A' = \fix{Z_1}C[X'/X,\,Y'/Y]\) and \(B' = \fix{Z_2}D[X'/X,\,Y'/Y]\).
On the other hand, by induction hypothesis again,
\(\subt{\g \cup \{\,X' \psubtyp Y'\,\} \cup \{\,Z_1 \psubtyp Z_2\,\}}%
	{C[X'/X,\,Y'/Y] \psubtyp D[X'/X,\,Y'/Y]}\)
is derivable without changing the height; and hence,
so is \(\subt{\g \cup \{\,X' \psubtyp Y'\,\}}{A' \psubtyp B'}\)
by the same rule.

\Case{\r{{\rsubtyp}\mbox{-approx}}.}
In this case, \(B = \O A\); and hence, \(B' = \O A'\).
Therefore, trivial because
\(\subt{\g\cup \{\,X' \psubtyp Y'\,\}}{A' \psubtyp \O A'}\)
is derivable by \r{{\rsubtyp}\mbox{-approx}}.

\paragraph{Proof of \protect\itemref{subtyp-weakening}}
By simultaneous induction with Item~\itemref{subtyp-rename}.
Suppose that \(\subt{\g}{A \psubtyp B}\) is derivable,
and \(\g \subseteq \g'\).

\Case{\r{{\rsubtyp}\mbox{-assump}}.}
In this case, the derivation has the form of
\[
    \ifr{{\rsubtyp}\mbox{-assump}}
	{}
	{\subt{\g}{X \psubtyp Y}}
\]
for some \(X\) and \(Y\) such that
\(A = X\), \(B = Y\), and \(\{X \psubtyp Y\} \subseteq \g\).
Hence, \(\subt{\g'}{A \psubtyp B}\) is also derivable by the same rule,
since \(\g \subseteq \g'\).

\Cases{\r{{\rsubtyp}\mbox{-}\t}, \r{{\rsubtyp}\mbox{-reflex}}
    and \r{{\rsubtyp}\mbox{-approx}}.}
Trivial.

\Cases{\r{{\rsubtyp}\mbox{-trans}}, \r{{\rsubtyp}\mbox{-}{\O}} and
    \r{{\rsubtyp}\mbox{-}{\Impl}}.}
Straightforward by induction hypothesis.

\Case{\r{{\rsubtyp}\mbox{-}{\fx}}.}
The derivation ends with
\[
\ifr{{\rsubtyp}\mbox{-}{\fx}}
    {\subt{\g \cup \{\,X \psubtyp Y\,\}}{C \psubtyp D}}
    {\subt{\g}{\fix{X}C \psubtyp \fix{Y}D}}
\]
for some \(X\), \(Y\), \(C\) and \(D\) such that
\(A = \fix{X}C\), \(B = \fix{Y}D\),
where \(X \not\in \FTV{\g} \cup \FTV{D}\), \(Y \not\in \FTV{\g} \cup \FTV{C}\),
and furthermore,
\(C\) and \(D\) are proper in \(X\) and \(Y\), respectively.
By the induction hypothesis for Item~\itemref{subtyp-rename},
we can assume that \(\{\,X,\,Y\,\} \cap (\g' - \g) = \{\}\)
without loss of generality.
Therefore, by the induction hypothesis for this item,
\(\subt{\g' \cup \{\,X \psubtyp Y\,\}}{C \psubtyp D}\)
is derivable without changing the height; and hence,
so is \(\subt{\g'}{A \psubtyp B}\) by the same rule.

\paragraph{Proof of \protect\itemref{subtyp-inst}}
Suppose that \(\subt{\g\cup\{\,X \psubtyp Y\,\}}{A \psubtyp B}\) is derivable.
Let \(A' = A[C/X,\,C/Y]\) and \(B' = B[C/X,\,C/Y]\).

\Case{\r{{\rsubtyp}\mbox{-assump}}.}
In this case, the derivation has the form of
\[
    \ifr{{\rsubtyp}\mbox{-assump}}
	{}
	{\subt{\g' \cup \{Z_1 \psubtyp Z_2\}}{Z_1 \psubtyp Z_2}}
\]
for some \(Z_1\), \(Z_2\) and \(\g'\) such that
\(A = Z_1\), \(B = Z_2\) and
\(\g' \cup \{Z_1 \psubtyp Z_2\} = \g \cup \{\,X \psubtyp Y\,\}\).
If \(X \in \{\,Z_1,\,Z_2\,\}\), then \(X = Z_1\) and \(Y = Z_2\);
and hence, \(A' = B' = C\).
Therefore, \(\subt{\g}{A' \psubtyp B'}\)
is derivable by \r{{\rsubtyp}\mbox{-reflex}}.
On the other hand, if \(X \not\in \{\,Z_1,\,Z_2\,\}\), then
\(\{\,X,\,Y\,\} \cap \{\,Z_1,\,Z_2\,\} = \{\}\);
and hence, \(A' = Z_1\), \(B' = Z_2\) and
\(\{\,Z_1 \psubtyp Z_2\,\} \subseteq \g\).
Therefore, \(\subt{\g}{A' \psubtyp B'}\)
is derivable by \r{{\rsubtyp}\mbox{-assump}}.

\Case{\r{{\rsubtyp}\mbox{-}\t}.}
In this case, \(B = \t\).
Hence, trivial because \(B' = \t[C/X,\,C/Y] = \t\).

\Case{\r{{\rsubtyp}\mbox{-reflex}}.}
In this case, \(A \peqtyp B\).
Hence, \(A' \peqtyp B'\) by Proposition~\ref{geqtyp-subst}; and therefore,
\(\subt{\g}{A' \psubtyp B'}\) is derivable by the same rule.

\Case{\r{{\rsubtyp}\mbox{-trans}}.}
The derivation ends with
\[
\ifr{{\rsubtyp}\mbox{-trans}}
    {\subt{\g_1}{A \psubtyp D}
     & \subt{\g_2}{D \psubtyp B}}
    {\subt{\g_1 \cup \g_2}{A \psubtyp B}}
\]
for some \(D\), \(\g_1\) and \(\g_2\) such that
\(\g \cup \{\,X \psubtyp Y\,\} = \g_1 \cup \g_2\).
By Item~\itemref{subtyp-weakening} of this proposition,
we can assume that
\(\g_1 = \g_2 = \g \cup \{\,X \psubtyp Y\,\}\)
without loss of generality.
Therefore, by induction hypothesis,
\(\subt{\g}{A' \psubtyp D[C/X,\,C/Y]}\) and
\(\subt{\g}{D[C/X,\,C/Y] \psubtyp B'}\)
are derivable
without changing the heights of derivations; and hence,
so is \(\subt{\g}{A' \psubtyp B'}\) by the same rule
without changing the height.

\Case{\r{{\rsubtyp}\mbox{-}{\O}}.}
The derivation ends with
\[
\ifr{{\rsubtyp}\mbox{-}{\O}}
    {\subt{\g \cup \{\,X \psubtyp Y\,\}}{D \psubtyp E}}
    {\subt{\g \cup \{\,X \psubtyp Y\,\}}{\O D \psubtyp \O E}}
\]
for some \(D\) and \(E\) such that \(A = \O D\) and \(B = \O E\).
By induction hypothesis,
\(\subt{\g}{D[C/X,\,C/Y] \psubtyp E[C/X,\,C/Y]}\)
is derivable without changing the height of derivation.
Hence, so is \(\subt{\g}{A' \psubtyp B'}\) by the same rule.

\Case{\r{{\rsubtyp}\mbox{-}{\Impl}}.}
The derivation ends with
\[
\ifr{{\rsubtyp}\mbox{-}{\Impl}}
    {\subt{\g_1}{B_1 \psubtyp A_1}
     & \subt{\g_2}{A_2 \psubtyp B_2}}
    {\subt{\g_1 \cup \g_2}{A_1 \Impl A_2 \psubtyp B_1 \Impl B_2}}
\]
for some \(A_1\), \(A_2\), \(B_1\), \(B_2\), \(\g_1\) and \(\g_2\)
such that \(A = A_1 \Impl A_2\), \(B = B_1 \Impl B_2\) and
\(\g \cup \{\,X \psubtyp Y\,\} = \g_1 \cup \g_2\).
We can assume that
\(\g_1 = \g_2 = \g \cup \{\,X \psubtyp Y\,\}\)
by Item~\itemref{subtyp-weakening} of this proposition.
Hence, by induction hypothesis,
\(\subt{\g}{B_1[C/X,\,C/Y] \psubtyp A_1[C/X,\,C/Y]}\) and
\(\subt{\g}{A_2[C/X,\,C/Y] \psubtyp B_2[C/X,\,C/Y]}\)
are derivable without changing the heights.
Therefore, so is \(\subt{\g}{A' \psubtyp B'}\) by the same rule.

\Case{\r{{\rsubtyp}\mbox{-}{\fx}}.}
The derivation ends with
\[
\ifr{{\rsubtyp}\mbox{-}{\fx}}
    {\subt{\g \cup \{\,X \psubtyp Y\,\} \cup \{\,Z_1 \psubtyp Z_2\,\}}
	{D \psubtyp E}}
    {\subt{\g \cup \{\,X \psubtyp Y\,\}}{\fix{Z_1}D \psubtyp \fix{Z_2}E}}
\]
for some \(Z_1\), \(Z_2\), \(D\) and \(E\) such that
\(A = \fix{Z_1}D\), \(B = \fix{Z_2}E\),
where \(Z_1 \not\in \FTV{\g \cup \{\,X \psubtyp Y\,\}} \cup \FTV{E}\),
\(Z_2 \not\in \FTV{\g \cup \{\,X \psubtyp Y\,\}} \cup \FTV{D}\),
and furthermore,
\(D\) and \(E\) are proper in \(Z_1\) and \(Z_2\), respectively.
By Item~\itemref{subtyp-rename} of this proposition,
we can assume that \(\{\,Z_1,\,Z_2\,\} \cap \FTV{C} = \{\}\)
without loss of generality.
Hence, \(A' = \fix{Z_1}D[C/X,\,C/Y]\) and \(B' = \fix{Z_2}E[C/X,\,C/Y]\).
On the other hand,
by induction hypothesis,
\(\subt{\g \cup \{\,Z_1 \psubtyp Z_2\,\}}{D[C/X,\,C/Y] \psubtyp E[C/X,\,C/Y]}\)
is derivable without changing the height; and hence,
so is \(\subt{\g}{A' \psubtyp B'}\) by the same rule.

\Case{\r{{\rsubtyp}\mbox{-approx}}.}
Trivial since \(B' = \O A'\) in this case.

\paragraph{Proof of \protect\itemref{subtyp-subst}}
Suppose that \(\subt{\g\cup\{\,X \psubtyp Y\,\}}{A \psubtyp B}\)
and \(\subt{\g}{C \psubtyp D}\) are derivable.
Let \(A' = A[C/X,\,D/Y]\) and \(B' = B[C/X,\,D/Y]\).

\Case{\r{{\rsubtyp}\mbox{-assump}}.}
In this case, the derivation has the form of
\[
    \ifr{{\rsubtyp}\mbox{-assump}}
	{}
	{\subt{\g' \cup \{Z_1 \psubtyp Z_2\}}{Z_1 \psubtyp Z_2}}
\]
for some \(Z_1\), \(Z_2\) and \(\g'\) such that
\(A = Z_1\), \(B = Z_2\) and
\(\g' \cup \{Z_1 \psubtyp Z_2\} = \g \cup \{\,X \psubtyp Y\,\}\).
If \(X \in \{\,Z_1,\,Z_2\,\}\), then \(X = Z_1\) and \(Y = Z_2\);
and hence, \(A' = C\) and \(B' = D\).
Therefore, \(\subt{\g}{A' \psubtyp B'}\) is derivable by assumption.
On the other hand, if \(X \not\in \{\,Z_1,\,Z_2\,\}\), then
\(\{\,X,\,Y\,\} \cap \{\,Z_1,\,Z_2\,\} = \{\}\);
and hence, \(A' = Z_1\), \(B' = Z_2\) and
\(\{\,Z_1 \psubtyp Z_2\,\} \subseteq \g\).
Therefore, \(\subt{\g}{A' \psubtyp B'}\)
is derivable by \r{{\rsubtyp}\mbox{-assump}}.

\Case{\r{{\rsubtyp}\mbox{-}\t}.}
In this case, \(B = \t\).
Hence, trivial because \(B' = \t[C/X,\,D/Y] = \t\).

\Case{\r{{\rsubtyp}\mbox{-reflex}}.}
In this case, \(A \peqtyp B\).
Hence, \(A' \peqtyp B'\) by Proposition~\ref{geqtyp-subst}; and therefore,
\(\subt{\g}{A' \psubtyp B'}\) is derivable by \r{{\rsubtyp}\mbox{-reflex}}.

\Case{\r{{\rsubtyp}\mbox{-trans}}.}
The derivation ends with
\[
\ifr{{\rsubtyp}\mbox{-trans}}
    {\subt{\g_1}{A \psubtyp E}
     & \subt{\g_2}{E \psubtyp B}}
    {\subt{\g_1 \cup \g_2}{A \psubtyp B}}
\]
for some \(E\), \(\g_1\) and \(\g_2\) such that
\(\g \cup \{\,X \psubtyp Y\,\} = \g_1 \cup \g_2\).
By Item~\itemref{subtyp-weakening} of this proposition,
we can assume that
\(\g_1 = \g_2 = \g \cup \{\,X \psubtyp Y\,\}\)
without loss of generality.
Therefore,
\(\subt{\g}{A' \psubtyp E[C/X,\,D/Y]}\) and
\(\subt{\g}{D[C/X,\,D/Y] \psubtyp B'}\) are derivable
by induction hypothesis; and hence,
so is \(\subt{\g}{A' \psubtyp B'}\) by the same rule.

\Case{\r{{\rsubtyp}\mbox{-}{\O}}.}
The derivation ends with
\[
\ifr{{\rsubtyp}\mbox{-}{\O}}
    {\subt{\g \cup \{\,X \psubtyp Y\,\}}{E \psubtyp F}}
    {\subt{\g \cup \{\,X \psubtyp Y\,\}}{\O E \psubtyp \O F}}
\]
for some \(E\) and \(F\) such that \(A = \O E\) and \(B = \O F\).
By induction hypothesis,
\(\subt{\g}{E[C/X,\,D/Y] \psubtyp F[C/X,\,D/Y]}\) is derivable.
Hence, so is \(\subt{\g}{A' \psubtyp B'}\) by the same rule.

\Case{\r{{\rsubtyp}\mbox{-}{\Impl}}.}
The derivation ends with
\[
\ifr{{\rsubtyp}\mbox{-}{\Impl}}
    {\subt{\g_1}{B_1 \psubtyp A_1}
     & \subt{\g_2}{A_2 \psubtyp B_2}}
    {\subt{\g_1 \cup \g_2}{A_1 \Impl A_2 \psubtyp B_1 \Impl B_2}}
\]
for some \(A_1\), \(A_2\), \(B_1\), \(B_2\), \(\g_1\) and \(\g_2\)
such that \(A = A_1 \Impl A_2\), \(B = B_1 \Impl B_2\) and
\(\g \cup \{\,X \psubtyp Y\,\} = \g_1 \cup \g_2\).
We can assume that
\(\g_1 = \g_2 = \g \cup \{\,X \psubtyp Y\,\}\)
by Item~\itemref{subtyp-weakening} of this proposition.
Hence, by induction hypothesis,
\(\subt{\g}{B_1[C/X,\,D/Y] \psubtyp A_1[C/X,\,D/Y]}\) and
\(\subt{\g}{A_2[C/X,\,D/Y] \psubtyp B_2[C/X,\,D/Y]}\)
are derivable.
Therefore, so is \(\subt{\g}{A' \psubtyp B'}\) by the same rule.

\Case{\r{{\rsubtyp}\mbox{-}{\fx}}.}
The derivation ends with
\[
\ifr{{\rsubtyp}\mbox{-}{\fx}}
    {\subt{\g \cup \{\,X \psubtyp Y\,\} \cup \{\,Z_1 \psubtyp Z_2\,\}}
	{E \psubtyp F}}
    {\subt{\g \cup \{\,X \psubtyp Y\,\}}{\fix{Z_1}E \psubtyp \fix{Z_2}F}}
\]
for some \(Z_1\), \(Z_2\), \(E\) and \(F\) such that
\(A = \fix{Z_1}E\), \(B = \fix{Z_2}F\),
where \(Z_1 \not\in \FTV{\g \cup \{\,X \psubtyp Y\,\}} \cup \FTV{F}\),
\(Z_2 \not\in \FTV{\g \cup \{\,X \psubtyp Y\,\}} \cup \FTV{E}\),
and furthermore,
\(E\) and \(F\) are proper in \(Z_1\) and \(Z_2\), respectively.
By Item~\itemref{subtyp-rename} of this proposition,
we can assume that \(\{\,Z_1,\,Z_2\,\} \cap (\FTV{C} \cup \FTV{D}) = \{\}\)
without loss of generality.
Hence, \(A' = \fix{Z_1}E[C/X,\,D/Y]\) and \(B' = \fix{Z_2}F[C/X,\,D/Y]\).
On the other hand,
by induction hypothesis,
\(\subt{\g \cup \{\,Z_1 \psubtyp Z_2\,\}}{E[C/X,\,D/Y] \psubtyp F[C/X,\,D/Y]}\)
is derivable; and hence,
so is \(\subt{\g}{A' \psubtyp B'}\) by the same rule.

\Case{\r{{\rsubtyp}\mbox{-approx}}.}
Trivial since \(B' = \O A'\) in this case.
\fi 
\qed\CHECKED{2014/05/01, 07/09}
\end{proof}

In the rest of the present section,
we shall show three more basic propositions about
subtyping, which give us some intuitions about the structure of
subtyping introduced by the derivation rules, and are
crucial to prove the subject reduction property
of the typing system in the next section.

\begin{proposition}\label{subtyp-gt-t}
If\/ \(\subt{\g}{\t \psubtyp A}\) is derivable, then \(A \geqtyp \t\).
\end{proposition}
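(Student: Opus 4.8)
The plan is to prove a slightly more general statement by induction, namely: if \(\subt{\g}{C \psubtyp A}\) is derivable and \(C\) is a {\tvariant}, then \(A\) is a {\tvariant}. The proposition then follows by taking \(C = \t\) (which is a {\tvariant}) and reading off \(A \geqtyp \t\) from Theorem~\ref{geqtyp-t-tvariant}, which tells us that \(A \geqtyp \t\) is equivalent to \(A\) being a {\tvariant} regardless of whether \(\geqtyp\) is \(\eqtyp\) or \(\peqtyp\). I would carry out the induction on the height of the derivation of \(\subt{\g}{C \psubtyp A}\), distinguishing cases on the last rule applied. Phrasing the hypothesis as ``the left-hand side is a {\tvariant}'' rather than literally ``\(\t\)'' is the key enabling move: it lets the induction hypothesis be reapplied to subderivations whose left-hand sides are only known to be {\tvariant}s, which is exactly what the transitivity and congruence rules produce.

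Most cases are immediate from the facts already established about {\tvariant}s. The rule \r{{\rsubtyp}\mbox{-assump}} cannot occur, since its left-hand side is a type variable, which is never a {\tvariant} (Proposition~\ref{var-tvariant}). For \r{{\rsubtyp}\mbox{-}\t} the right-hand side is \(\t\), and for \r{{\rsubtyp}\mbox{-approx}} it is \(\O C\), a {\tvariant} by Proposition~\ref{O-tvariant}. For \r{{\rsubtyp}\mbox{-reflex}} we have \(C \peqtyp A\), so \(A\) is a {\tvariant} by Proposition~\ref{geqtyp-tvariant}. The rule \r{{\rsubtyp}\mbox{-trans}} splits as \(\subt{\g_1}{C \psubtyp B}\) and \(\subt{\g_2}{B \psubtyp A}\); the induction hypothesis on the first premise gives that \(B\) is a {\tvariant}, whereupon the induction hypothesis on the second gives that \(A\) is. The cases \r{{\rsubtyp}\mbox{-}{\O}} and \r{{\rsubtyp}\mbox{-}{\Impl}} are handled by Propositions~\ref{O-tvariant} and \ref{Impl-tvariant}: in the \(\Impl\)-case, writing the conclusion as \(\subt{\g}{C_1 \Impl C_2 \psubtyp A_1 \Impl A_2}\), the hypothesis that \(C_1 \Impl C_2\) is a {\tvariant} forces \(C_2\) to be one, the induction hypothesis applies to the (covariant) premise \(\subt{\g_2}{C_2 \psubtyp A_2}\) to make \(A_2\) a {\tvariant}, and then \(A_1 \Impl A_2\) is a {\tvariant} again by Proposition~\ref{Impl-tvariant}.

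The main obstacle is the recursion rule \r{{\rsubtyp}\mbox{-}{\fx}}, with conclusion \(\subt{\g}{\fix{X}C' \psubtyp \fix{Y}A'}\) derived from \(\subt{\g \cup \{X \psubtyp Y\}}{C' \psubtyp A'}\), where \(C'\) and \(A'\) are proper in \(X\) and \(Y\), \(X \notin \FTV{A'}\), \(Y \notin \FTV{C'}\), and \(\fix{X}C'\) is a {\tvariant}. Here \(C'\) itself need not be a {\tvariant} (for instance \(C' = \O X\) with \(\fix{X}C' = \t\)), so the induction hypothesis does not apply directly to the premise. My plan is to first instantiate the assumption by substituting \(\t\) for both \(X\) and \(Y\): by the height-preserving instantiation property (Proposition~\ref{subtyp-inst}) this yields a derivation of \(\subt{\g}{C'[\t/X] \psubtyp A'[\t/Y]}\) of strictly smaller height, where I use \(Y \notin \FTV{C'}\) and \(X \notin \FTV{A'}\) to simplify the two substitutions. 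Next I would note that \(C'[\t/X]\) is a {\tvariant}: since \(\fix{X}C'\) is a {\tvariant} we have \(\fix{X}C' \geqtyp \t\), hence \(C'[\t/X] \geqtyp C'[\fix{X}C'/X] \geqtyp \fix{X}C' \geqtyp \t\) by Proposition~\ref{geqtyp-subst} and \r{{\eqtyp}\mbox{-fix}}, so \(C'[\t/X]\) is a {\tvariant} by Theorem~\ref{geqtyp-t-tvariant}. The induction hypothesis applied to the shorter derivation then gives that \(A'[\t/Y]\) is a {\tvariant}, and finally, since \(A'\) is proper in \(Y\), Proposition~\ref{tvariant-proper-subst} lets us conclude that \(\fix{Y}A'\) is a {\tvariant}, closing this case. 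The only delicate points are the free-variable bookkeeping in the instantiation and confirming that Proposition~\ref{subtyp-inst} leaves the derivation height unchanged, which is what keeps the induction well-founded.
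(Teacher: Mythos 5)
Your proposal is correct and follows essentially the same route as the paper: the paper also generalizes the statement (phrased as ``$A \peqtyp \t$ implies $B \peqtyp \t$'', which is equivalent to your {\tvariant} formulation via Theorem~\ref{geqtyp-t-tvariant}), inducts on the height of the derivation, and handles the crucial \r{{\rsubtyp}\mbox{-}{\fx}} case by the same height-preserving instantiation of $X$ and $Y$ with $\t$ via Proposition~\ref{subtyp-inst}. The only cosmetic difference is at the very end of that case, where the paper closes with \r{{\eqtyp}\mbox{-uniq}} while you use Proposition~\ref{tvariant-proper-subst}; both are valid.
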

\begin{proof}
Suppose that \(\subt{\g}{A \psubtyp B}\) is derivable.
By Theorem~\ref{geqtyp-t-tvariant},
it suffices to show that \(A \peqtyp \t\) implies \(B \peqtyp \t\).
The proof proceeds by induction on the height of the derivation,
and by cases on the last rule applied in the derivation.
The case \r{{\rsubtyp}\mbox{-assump}}
is impossible by Proposition~\ref{not-geqtyp-var-t}.
The cases \r{{\rsubtyp}\mbox{-}{\t}} and \r{{\rsubtyp}\mbox{-reflex}}
are trivial.
The case \r{{\rsubtyp}\mbox{-trans}} is straightforward
by induction hypothesis.
Use Propositions~\ref{geqtyp-O-t} and \ref{geqtyp-Impl-t}
for the cases
\r{{\rsubtyp}\mbox{-}{\O}}, \r{{\rsubtyp}\mbox{-}{\Impl}},
and \r{{\rsubtyp}\mbox{-approx}}.
If the last rule is \r{{\rsubtyp}\mbox{-}{\fx}}, then
the derivation ends with
\[
\ifr{{\rsubtyp}\mbox{-}{\fx}}
    {\subt{\g \cup \{\,X \psubtyp Y\,\}}{A' \psubtyp B'}}
    {\subt{\g}{\fix{X}A' \psubtyp \fix{Y}B'}}
\]
for some \(X\), \(Y\), \(A'\) and \(B'\) such that
\(A = \fix{X}A'\), \(B = \fix{Y}B'\),
\(X \not\in \FTV{B'}\), \(Y \not\in \FTV{A'}\), and furthermore,
\(A'\) and \(B'\) are proper in \(X\) and \(Y\), respectively.
By Proposition~\ref{subtyp-inst}, we can get a derivation of
\(\subt{\g}{A'[\t/X] \psubtyp B'[\t/Y]}\) from the one of
\(\subt{\g \cup \{\,X \psubtyp Y\,\}}{A' \psubtyp B'}\)
without changing the height of derivation.
On the other hand, since \(A \peqtyp A'[A/X]\) and \(A \peqtyp \t\),
we also get \(A'[\t/X] \peqtyp \t\) by Proposition~\ref{geqtyp-subst}.
Therefore, by induction hypothesis, \(B'[\t/X] \peqtyp \t\); and hence,
\(B = \fix{Y}B' \peqtyp \t\) by \r{{\eqtyp}\mbox{-uniq}}.
\qed\CHECKED{2014/06/20, 07/09}
\end{proof}

\begin{proposition}\label{subtyp-O-var}
Suppose that \(\subt{\g}{A \psubtyp B}\) be a derivable subtyping judgment,
and \(B \npeqtyp \t\).
\begin{Enumerate}
\item If\/ \(\Canonp{A} = \O^m X\), then
    \(\Canonp{B} = \O^n Y\) for some \(n\) and \(Y\) such that
    \(m \le n\), and either \(X = Y\) or \(\{\,X \psubtyp Y\,\} \subseteq \g\).
\item If\/ \(\Canonp{B} = \O^n Y\), then
    \(\Canonp{A} = \O^m X\) for some \(m\) and \(X\) such that
    \(m \le n\), and either \(X = Y\) or \(\{\,X \psubtyp Y\,\} \subseteq \g\).
\end{Enumerate}
\end{proposition}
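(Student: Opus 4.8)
The plan is to prove both items simultaneously by induction on the height of the derivation of $\subt{\g}{A \psubtyp B}$, with a case analysis on the last subtyping rule applied, in the same style as the proof of Proposition~\ref{subtyp-gt-t}. Throughout I keep in mind that a variable-shaped canonical form $\O^m X$ is never a \tvariant{} (so $\O^m X \npeqtyp \t$ by Theorem~\ref{geqtyp-t-tvariant} and Proposition~\ref{not-geqtyp-var-t}), and that by Definition~\ref{canon-def} the canonical form of an arrow is always an arrow form, hence never of the shape $\O^m X$. Several cases are then immediate. The rule \r{{\rsubtyp}\mbox{-}\t} cannot occur, since its conclusion has $B = \t \peqtyp \t$, contradicting $B \npeqtyp \t$. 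In \r{{\rsubtyp}\mbox{-}{\Impl}} both $\Canonp{A}$ and $\Canonp{B}$ are either $\t$ or arrow forms, so the hypotheses of Items~1 and~2 never hold and both are vacuously true. The rule \r{{\rsubtyp}\mbox{-assump}} gives $A = X$, $B = Y$ with $\{X \psubtyp Y\} \subseteq \g$, so $m = n = 0$ and the assumption condition holds directly. Finally \r{{\rsubtyp}\mbox{-reflex}} follows from Proposition~\ref{canon-congr}, which forces $\Canonp{A}$ and $\Canonp{B}$ into the same category with identical $\O^n X$.

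For \r{{\rsubtyp}\mbox{-approx}} (with $B = \O A$) and \r{{\rsubtyp}\mbox{-}{\O}} (with $A = \O A'$, $B = \O B'$) I would peel off the leading $\O$'s. Using Proposition~\ref{geqtyp-O-t} to propagate $B \npeqtyp \t$ down to the premise, and the clause of Definition~\ref{canon-def} that computes $\Canonp{(\O C)}$ as $\O^{k{+}1} Z$ whenever $\Canonp{C} = \O^k Z$, the statement transfers between $\Canonp{A},\Canonp{B}$ and $\Canonp{A'},\Canonp{B'}$ with exponents shifted by one. The induction hypothesis on the premise then yields the inequality $m \le n$ (with the appropriate $\pm 1$) and carries the variable/assumption condition verbatim.

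The transitivity case \r{{\rsubtyp}\mbox{-trans}}, with premises $\subt{\g_1}{A \psubtyp C}$ and $\subt{\g_2}{C \psubtyp B}$ and $\g = \g_1 \cup \g_2$, is where the real work lies and is the main obstacle. First I would show $C \npeqtyp \t$: if $C \peqtyp \t$ then $\subt{\g_2}{\t \psubtyp B}$ is derivable (via \r{{\rsubtyp}\mbox{-reflex}} and \r{{\rsubtyp}\mbox{-trans}}), so $B \geqtyp \t$ by Proposition~\ref{subtyp-gt-t}, contradicting $B \npeqtyp \t$. With $C \npeqtyp \t$ secured, chaining the induction hypotheses on the two premises gives $\Canonp{A} = \O^m X$, $\Canonp{C} = \O^{m'} X'$, $\Canonp{B} = \O^n Y$ with $m \le m' \le n$, together with ``$X = X'$ or $\{X \psubtyp X'\} \subseteq \g_1$'' and ``$X' = Y$ or $\{X' \psubtyp Y\} \subseteq \g_2$''. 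I would then combine these into ``$X = Y$ or $\{X \psubtyp Y\} \subseteq \g$''. The only delicate subcase is $\{X \psubtyp X'\} \subseteq \g_1$ together with $\{X' \psubtyp Y\} \subseteq \g_2$: here $X'$ would occur in two distinct pairs of $\g_1 \cup \g_2$, violating the well-formedness of $\g$ (Definition~\ref{subtyp-assump-def}); so this subcase cannot arise, and each of the remaining three combinations yields the conclusion directly. This is precisely the point that subtyping assumptions cannot be composed transitively, which is why the variable chains do not accumulate.

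Finally, for \r{{\rsubtyp}\mbox{-}{\fx}} (with $A = \fix{X'}A'$, $B = \fix{Y'}B'$, and premise $\subt{\g \cup \{X' \psubtyp Y'\}}{A' \psubtyp B'}$) I would unfold $\Canonp{A} = \Canonp{A'}[\fix{X'}A'/X']$ via Definition~\ref{canon-def}. The key observation is that when $\fix{X'}A'$ is not a \tvariant{} and $A'$ is proper in $X'$, a variable canonical form $\Canonp{A'} = \O^m Z$ must have $Z \ne X'$: otherwise $A' \peqtyp \O^m X'$ by Proposition~\ref{geqtyp-canon}, whence $\fix{X'}A' \peqtyp \t$ by Propositions~\ref{geqtyp-proper}, \ref{geqtyp-fix-congr} and \ref{geqtyp-t2}, a contradiction. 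Hence the substitution leaves $\O^m Z$ unchanged and $\Canonp{A} = \O^m Z$. After checking $B' \npeqtyp \t$ (using that \tvariant{}ness is preserved under substitution and recursion by Propositions~\ref{tvariant-subst1} and \ref{tvariant-fix}), the induction hypothesis applies to the premise with the enlarged assumption $\g \cup \{X' \psubtyp Y'\}$. Because the head variables of both canonical forms differ from $X'$ and $Y'$, the extra pair $\{X' \psubtyp Y'\}$ can never be the pair witnessing the conclusion, so it may be discarded and the condition relative to $\g$ holds, completing Item~1; Item~2 is entirely symmetric.
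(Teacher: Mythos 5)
Your proof is correct and follows essentially the same route as the paper's: induction on the derivation with the same case analysis, the same appeal to well-formedness of \(\g_1 \cup \g_2\) to rule out chained assumption pairs in the \r{{\rsubtyp}\mbox{-trans}} case, and the same invariance of variable-shaped canonical forms under recursive unfolding in the \r{{\rsubtyp}\mbox{-}{\fx}} case. The only cosmetic difference is that the paper packages that last fact as Lemma~\ref{geqtyp-subst-var-invariant} and invokes it, whereas you re-derive it inline from Definition~\ref{canon-def}.
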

\begin{proof}
By induction on the derivation of \(\subt{\g}{A \psubtyp B}\),
and by cases on the last rule applied in the derivation.
\ifdetail

\Case{\r{{\rsubtyp}\mbox{-assump}}.}
In this case, both \(A\) and \(B\) are type variables.
Hence,
if \(\Canonp{A} = \O^m X\), then \(m = 0\) and \(A = X\)
by Propositions~\ref{geqtyp-canon} and \ref{geqtyp-var-var}; and therefore,
\(\{\,X \psubtyp Y\,\} \subseteq \g\) and \(B = Y\) for some \(Y\).
Similarly,
if \(\Canonp{B} = \O^n Y\), then \(n = 0\) and \(B = Y\), which
implies
\(\{\,X \psubtyp Y\,\} \subseteq \g\) and \(A = X\) for some \(X\).

\Case{\r{{\rsubtyp}\mbox{-}\t}.}
Impossible since \(B \ngeqtyp \t\).

\Case{\r{{\rsubtyp}\mbox{-reflex}}.}
In this case, \(A \peqtyp B\).
Hence, both Items~1 and 2 hold by Proposition~\ref{canon-congr}.

\Case{\r{{\rsubtyp}\mbox{-trans}}.}
In this case,
\(\subt{\g_1}{A \psubtyp C}\) and \(\subt{\g_2}{C \psubtyp B}\) are
derivable for some \(\g_1\), \(\g_2\) and \(C\)
such that \(\g = \g_1 \cup \g_2\).
We get \(C \ngeqtyp \t\) from \(B \ngeqtyp \t\)
by Proposition~\ref{subtyp-gt-t}.
If \(\Canonp{A} = \O^m X\), then
by the induction hypothesis on \(\subt{\g_1}{A \psubtyp C}\),
for some \(k\) and \(Z\),
\[
    \Canonp{C} = \O^k Z,~m \le k,~\mbox{and either}~
    X = Z~\mbox{or}~\{\,X \psubtyp Z\,\} \subseteq \g_1.
\]
Hence, by the induction hypothesis on \(\subt{\g_2}{C \psubtyp B}\),
for some \(n\) and \(Y\),
\[
    \Canonp{B} = \O^n Y,~k \le n,~\mbox{and either}~
    Z = Y~\mbox{or}~\{\,Z \psubtyp Y\,\} \subseteq \g_2.
\]
Note that either \(X = Z\) or \(Z = Y\) holds, because
\(\{\,X \psubtyp Z\,\} \cup \{\,Z \psubtyp Y\,\}\) cannot be
a well-formed subtyping assumption.
We thus get Item~1.
The proof for Item~2 is just symmetrical.

\Case{\r{{\rsubtyp}\mbox{-}{\O}}.}
In this case, \(\subt{\g}{A' \psubtyp B'}\) are
derivable for some \(A'\) and \(B'\) such that
\(A = \O A'\) and \(B = \O B'\).
We get \(B' \ngeqtyp \t\) from \(B \ngeqtyp \t\)
by Proposition~\ref{geqtyp-t1}.
If \(\Canonp{A} = \O^m X\), then \(m > 0\) and
\(\Canonp{A'} = \O^{m-1} X\) by Definition~\ref{canon-def}.
Therefore, by induction hypothesis,
for some \(n'\) and \(Y\),
\[
    \Canonp{B'} = \O^{n'} Y,~m-1 \le n',~\mbox{and either}~
    X = Y~\mbox{or}~\{\,X \psubtyp Y\,\} \subseteq \g.
\]
We thus get Item~1 by taking \(n\) as \(n = n' +1\),
since \(\Canonp{B'} = \O^{n'}Y\) implies
\(\Canonp{B} = \O^{n'+1} Y\) by Definition~\ref{canon-def}.
The proof for Item~2 is just symmetrical.

\Case{\r{{\rsubtyp}\mbox{-}{\Impl}}.}
This case is impossible
by Propositions~\ref{not-geqtyp-var-Impl} and \ref{geqtyp-canon}.

\else 
Use Propositions~\ref{geqtyp-canon} and \ref{geqtyp-var-var}
for the case \r{{\rsubtyp}\mbox{-assump}}, and
use Propositions~\ref{canon-congr}, \ref{subtyp-gt-t} and
\ref{geqtyp-t1} for the cases \r{{\rsubtyp}\mbox{-reflex}},
\r{{\rsubtyp}\mbox{-trans}} and
\r{{\rsubtyp}\mbox{-}{\O}}, respectively.
The case \r{{\rsubtyp}\mbox{-}{\Impl}}
is impossible by Propositions~\ref{not-geqtyp-var-Impl} and \ref{geqtyp-canon}.
\fi 
\ifdetail

\Case{\r{{\rsubtyp}\mbox{-}{\fx}}.}
The 
\else
If the last rule is \r{{\rsubtyp}\mbox{-}{\fx}}, then the
\fi
derivation ends with
\[
\ifr{{\rsubtyp}\mbox{-}{\fx}}
    {\subt{\g \cup \{\,X' \psubtyp Y'\,\}}{A' \psubtyp B'}}
    {\subt{\g}{\fix{X'}A' \psubtyp \fix{Y'}B'}}
\]
for some \(X'\), \(Y'\), \(A'\) and \(B'\) such that
\(A = \fix{X'}A'\), \(B = \fix{Y'}B'\),
\(X' \not\in \FTV{B'}\), \(Y' \not\in \FTV{A'}\), and furthermore,
\(A'\) and \(B'\) are proper in \(X'\) and \(Y'\), respectively.
For Item~1, suppose that \(\Canonp{A} = \O^m X\) and \(B \npeqtyp \t\).
Note that \(X \not= X'\) by Poposition~\ref{geqtyp-canon} and \ref{geqtyp-etv},
and that \(B' \npeqtyp \t\) since \(B' \geqtyp \t\) implies \(B \geqtyp \t\).
Since \(A'[A/X'] \geqtyp A \geqtyp \Canonp{A} = \O^m X\)
and \(A'\) is proper in \(X'\), we get \(\Canonp{A'} = \O^m X\)
by Lemma~\ref{geqtyp-subst-var-invariant};
and therefore, by induction hypothesis,
\(\Canonp{B'} = \O^n Y\) for some \(n \ge m\), and
\(X = Y\) or \(\{\,X \psubtyp Y\,\} \subseteq \g \cup \{\,X' \psubtyp Y'\,\}\).
We get \(Y' \not= Y\) from \(B \not\eqtyp \t\) because otherwise
\(B \eqtyp \fix{Y}\Canonp{B'} = \fix{Y}\O^n Y \eqtyp \t\)
by Propositions~\ref{geqtyp-canon}, \ref{geqtyp-fix-congr} and
\ref{geqtyp-t2}; and hence, \(\Canonp{B} = \Canonp{B'}[B/Y'] = \O^n Y\).
Note that
\(\{\,X \psubtyp Y\,\} \subseteq \g \cup \{\,X' \psubtyp Y'\,\}\) implies
\(\{\,X \psubtyp Y\,\} \subseteq \g\) because \(X' \not= X\).
Symmetrically, we can show Item~2 in this case.
\ifdetail

\Case{\r{{\rsubtyp}\mbox{-approx}}.}
In this case, \(B = \O A\).
Hence, by Definition~\ref{canon-def},
\(\Canonp{A} = \O^m X\) implies \(\Canonp{B} = \O^{m+1} X\), and
on the other hand,
\(\Canonp{B} = \O^n Y\) implies \(n > 0\) and
\(\Canonp{A} = \O^{n-1} Y\).
\fi 
\qed\CHECKED{2014/06/20, 07/09}
\end{proof}

\begin{proposition}\label{subtyp-O-Impl}
Suppose that
\(\subt{\g}{A \psubtyp B}\) is derivable, and \(B \npeqtyp \t\).
\begin{Enumerate}\itemsep=2pt plus 2pt
\item If\/ \(\Canonp{A} = C \Impl D\), then
there exist some \(k\), \(E\) and \(F\) such that
\begin{Enumerate}
\item[(1a)] \(\Canonp{B} = E \Impl F\),
\item[(1b)] \(\subt{\g}{E \psubtyp \O^k C}\) and
    \(\subt{\g}{\O^k D \psubtyp F}\) are derivable.
\end{Enumerate}
\item If\/ \(\Canonp{B} = E \Impl F\), then
there exist some \(k\), \(C\) and \(D\) such that
\begin{Enumerate}
\item[(2a)] \(\Canonp{A} = C \Impl D\),
\item[(2b)] \(\subt{\g}{E \psubtyp \O^k C}\) and
    \(\subt{\g}{\O^k D \psubtyp F}\) are derivable.
\end{Enumerate}
\end{Enumerate}
Note that \(k\) ranges over non-negative integers.
\end{proposition}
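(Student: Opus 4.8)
The plan is to argue by induction on the derivation of $\subt{\g}{A \psubtyp B}$, proving Items~1 and~2 simultaneously and by cases on the last rule applied, exactly as in the proof of Proposition~\ref{subtyp-O-var}; in each case Item~2 is obtained from Item~1 by a symmetric argument. Several cases are immediate. The rule \r{{\rsubtyp}\mbox{-assump}} is impossible, since a type variable has no implication as its canonical form (Propositions~\ref{geqtyp-canon} and~\ref{not-geqtyp-var-Impl}); \r{{\rsubtyp}\mbox{-}\t} is impossible because $B \npeqtyp \t$; and \r{{\rsubtyp}\mbox{-reflex}} follows directly from Proposition~\ref{canon-congr}, taking $k = 0$ and closing~(1b) with \r{{\rsubtyp}\mbox{-reflex}}.

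For the remaining structural rules the key mechanism is to move the $\O$-prefixes around with \r{{\rsubtyp}\mbox{-}{\O}} and to compose with \r{{\rsubtyp}\mbox{-trans}}, adjusting the index $k$ accordingly. In the case \r{{\rsubtyp}\mbox{-}{\O}} (say $A = \O A_0$, $B = \O B_0$), the clause for $\Canonp{(\O\,\cdot)}$ in Definition~\ref{canon-def} forces $\Canonp{A_0}$ to be an implication whenever $\Canonp{A}$ is, and $B_0 \npeqtyp \t$ by Proposition~\ref{geqtyp-O-t}; applying the induction hypothesis to $\subt{\g}{A_0 \psubtyp B_0}$ and then one instance of \r{{\rsubtyp}\mbox{-}{\O}} yields the claim (the extra $\O$ absorbs into the prefix $\O^k$). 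In the case \r{{\rsubtyp}\mbox{-}{\Impl}}, since $\Canonp{A} = C \Impl D$ is not $\t$, neither $A$ nor its codomain is a $\tvariant$, so $\Canonp{A} = A_1 \Impl A_2$ and $\Canonp{B} = B_1 \Impl B_2$; here~(1b) holds with $k = 0$ after weakening the two antecedents to $\g$ by Proposition~\ref{subtyp-weakening}. In the case \r{{\rsubtyp}\mbox{-trans}}, with $\subt{\g_1}{A \psubtyp G}$, $\subt{\g_2}{G \psubtyp B}$ and $\g = \g_1 \cup \g_2$, we first note $G \npeqtyp \t$ by Proposition~\ref{subtyp-gt-t}, apply the induction hypothesis twice, and then stack $k_2$ applications of \r{{\rsubtyp}\mbox{-}{\O}} onto the judgments obtained from the first antecedent before composing with \r{{\rsubtyp}\mbox{-trans}} (whose assumption set is $\g_1 \cup \g_2 = \g$); the resulting index is $k = k_1 + k_2$. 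Finally, \r{{\rsubtyp}\mbox{-approx}} ($B = \O A$) is settled with $k = 1$ using Definition~\ref{canon-def} and reflexivity.

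The main obstacle is the case \r{{\rsubtyp}\mbox{-}{\fx}}, where $A = \fix{X'}A'$, $B = \fix{Y'}B'$, the antecedent is $\subt{\g \cup \{\,X' \psubtyp Y'\,\}}{A' \psubtyp B'}$, and the side conditions give $X' \notin \FTV{B'}$, $Y' \notin \FTV{A'}$, with $A'$, $B'$ proper in $X'$, $Y'$. Since $A \peqtyp A'[A/X']$ by \r{{\eqtyp}\mbox{-fix}} and $A'$ is proper in $X'$, Lemma~\ref{geqtyp-subst-Impl-invariant} transfers the hypothesis $\Canonp{A} = C \Impl D$ to $\Canonp{A'} = C' \Impl D'$ with $C'[A/X'] \peqtyp C$ and $D'[A/X'] \peqtyp D$. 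One checks $B' \npeqtyp \t$ (otherwise $\fix{Y'}B'$ would be a $\tvariant$), so the induction hypothesis applies to the antecedent and yields $\Canonp{B'} = E' \Impl F'$ together with $\subt{\g \cup \{\,X' \psubtyp Y'\,\}}{E' \psubtyp \O^{k'} C'}$ and $\subt{\g \cup \{\,X' \psubtyp Y'\,\}}{\O^{k'} D' \psubtyp F'}$. Re-folding via \r{{\eqtyp}\mbox{-fix}} and Definition~\ref{canon-def} gives $\Canonp{B} = E'[B/Y'] \Impl F'[B/Y']$, establishing~(1a). To discharge the assumption $X' \psubtyp Y'$ in~(1b), the crucial trick is to feed the proposition's own hypothesis $\subt{\g}{A \psubtyp B}$ into the substitution Proposition~\ref{subtyp-subst}, substituting $A$ for $X'$ and $B$ for $Y'$; the inclusions $\FTV{C'}, \FTV{D'} \subseteq \FTV{A'}$ (whence $Y'$ is absent) and $\FTV{E'}, \FTV{F'} \subseteq \FTV{B'}$ (whence $X'$ is absent), coming from the fact that $\Canonp{\cdot}$ merely unfolds and introduces no new free variables, collapse the substitutions to $E = E'[B/Y']$, $F = F'[B/Y']$, $\O^{k'}(C'[A/X'])$ and $\O^{k'}(D'[A/X'])$ over $\g$. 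A final use of \r{{\rsubtyp}\mbox{-reflex}} with $C'[A/X'] \peqtyp C$, $D'[A/X'] \peqtyp D$ and \r{{\rsubtyp}\mbox{-trans}} converts these into~(1b) with $k = k'$. The analogous argument, applying Lemma~\ref{geqtyp-subst-Impl-invariant} to $B'$ and $\fix{Y'}B'$ and the induction hypothesis for Item~2, proves Item~2.
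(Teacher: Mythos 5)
Your proof is correct and follows essentially the same route as the paper's: induction on the derivation, Lemma~\ref{geqtyp-subst-Impl-invariant} to push the canonical-implication hypothesis through the \r{{\rsubtyp}\mbox{-}{\fx}} case, and Proposition~\ref{subtyp-subst} applied with the proposition's own hypothesis $\subt{\g}{A \psubtyp B}$ to discharge the assumption $X' \psubtyp Y'$. The only (immaterial) divergence is in how the substitutions are collapsed in the \r{{\rsubtyp}\mbox{-}{\fx}} case: you argue syntactically via $\FTV{\Canonp{A'}} \subseteq \FTV{A'}$, whereas the paper reasons up to $\peqtyp$ via $\ETV$ and Proposition~\ref{geqtyp-no-etv-subst}; both are sound.
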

\begin{proof}
By induction on the derivation of \(\subt{\g}{A \psubtyp B}\),
and by cases on the last rule used in the derivation.
Lemma~\ref{geqtyp-subst-Impl-invariant} is crucial to the case
\r{{\rsubtyp}\mbox{-}{\fx}}.
We will employ Theorem~\ref{geqtyp-t-tvariant}
and Proposition~\ref{geqtyp-canon} in this proof without specific mention.
First, note that \(A \npeqtyp \t\) from \(B \npeqtyp \t\)
by Proposition~\ref{subtyp-gt-t}.
Hence,
\(\Canonp{A} = C \Impl D\) and \(\Canonp{B} = E \Impl F\) imply
\(D \npeqtyp \t\) and \(F \npeqtyp \t\), respectively,
by Proposition~\ref{Impl-tvariant}.

\Case{\r{{\rsubtyp}\mbox{-assump}}.}
In this case, both \(A\) and \(B\) are type variables.
Hence, trivial by Proposition~\ref{not-geqtyp-var-Impl}.

\Case{\r{{\rsubtyp}\mbox{-}\t}.}
This case is impossible since \(B \ngeqtyp \t\).

\Case{\r{{\rsubtyp}\mbox{-reflex}}.}
In this case, \(A \peqtyp B\).
Hence, we can get both Items~1 and 2
by Proposition~\ref{canon-congr}.

\Case{\r{{\rsubtyp}\mbox{-trans}}.}
In this case,
\(\subt{\g_1}{A \psubtyp G}\) and \(\subt{\g_2}{G \psubtyp B}\) are
derivable for some \(\g_1\), \(\g_2\) and \(G\)
such that \(\g = \g_1 \cup \g_2\).
We get \(G \ngeqtyp \t\) from \(B \ngeqtyp \t\)
by Proposition~\ref{subtyp-gt-t}.
If \(\Canonp{A} = C \Impl D\), then
by the induction hypothesis on \(\subt{\g_1}{A \psubtyp G}\),
there exist some \(k'\), \(E'\) and \(F'\) such that
\begin{eqnarray*}
    && \Canonp{G} = E' \Impl F', \\
    && \subt{\g_1}{E' \psubtyp \O^{k'} C}~\mbox{and}~
	     \subt{\g_1}{\O^{k'} D \psubtyp F'}~\mbox{are derivable.}
\end{eqnarray*}
Therefore, by the induction hypothesis on \(\subt{\g_2}{G \psubtyp B}\),
there exist some \(k''\), \(E\) and \(F\) such that
\begin{eqnarray*}
    && \Canonp{B} = E \Impl F, \\
    && \subt{\g_2}{E \psubtyp \O^{k''} E'}~\mbox{and}~
	     \subt{\g_2}{\O^{k''} F' \psubtyp F}~\mbox{are derivable.}
\end{eqnarray*}
Taking \(k\) as \(k = k' + k''\), we get (1a) and (1b).
The proof for Item~2 is just symmetrical.

\Case{\r{{\rsubtyp}\mbox{-}{\O}}.}
In this case, \(\subt{\g}{A' \psubtyp B'}\) is
derivable for some \(A'\) and \(B'\) such that
\(A = \O A'\) and \(B = \O B'\).
We get \(B' \ngeqtyp \t\) from \(B \ngeqtyp \t\)
by Proposition~\ref{geqtyp-t1}.
If \(\Canonp{A} = C \Impl D\), then
\(\Canonp{A'} = C' \Impl D'\) for some \(C'\) and \(D'\) such that
\(C = \O C'\) and \(D = \O D'\) by Definition~\ref{canon-def}.
Therefore, by induction hypothesis,
there exist some \(k\), \(E'\) and \(F'\) such that
\begin{eqnarray*}
    && \Canonp{B'} = E' \Impl F', \\
    && \subt{\g}{E' \psubtyp \O^{k} C'}~\mbox{and}~
	 \subt{\g}{\O^{k} D' \psubtyp F'}~\mbox{are derivable.}
\end{eqnarray*}
Hence, it suffices for Item~1 to take \(E\) and \(F\) as
\(E = \O E'\) and \(F = \O F'\), respectively,
since \(\Canonp{B} = \Canonp{(\O B')} = \O E \Impl \O F\).
The proof for Item~2 is just symmetrical.

\Case{\r{{\rsubtyp}\mbox{-}{\Impl}}.}
In this case,
\(\subt{\g_1}{B_1 \psubtyp A_1}\) and \(\subt{\g_2}{A_2 \psubtyp B_2}\)
are derivable for some \(\g_1\), \(\g_2\),
\(A_1\), \(A_2\), \(B_1\) and \(B_2\) such that
\(\g = \g_1 \cup \g_2\), \(A = A_1 \Impl A_2\) and \(B = B_1 \Impl B_2\).
If \(\Canonp{A} = C \Impl D\), then
we get \(C = A_1\) and \(D = A_2\) by Definition~\ref{canon-def}.
Hence, it suffices to take \(k\), \(E\) and \(F\) as
\(k = 0\), \(E = B_1\) and \(F = B_2\).
Symmetrically,
if \(\Canonp{B} = E \Impl F\), we get \(E = B_1\) and \(F = B_2\);
and hence, it suffices to take \(k\), \(C\) and \(D\) as
\(k = 0\), \(C = A_1\) and \(D = A_2\).

\Case{\r{{\rsubtyp}\mbox{-}{\fx}}.}
In this case, \(\subt{\g\cup\{\,X \psubtyp Y\,\}}{A' \psubtyp B'}\)
is derivable for some \(X\), \(Y\), \(A'\) and \(B'\) such that
\begin{eqnarray}
\label{subtyp-O-Impl-01}
    A &=& \fix{X}A', \\
\label{subtyp-O-Impl-02}
    B &=& \fix{Y}B', \\
\label{subtyp-O-Impl-03}
    X &\not\in& \FTV{\g} \cup \FTV{B'},~\mbox{and} \\
\label{subtyp-O-Impl-04}
    Y &\not\in& \FTV{\g} \cup \FTV{A'}.
\end{eqnarray}
Note that \(A'\) and \(B'\) are proper in \(X\) and \(Y\), respectively,
and that \(A' \npeqtyp \t\) and \(B' \npeqtyp \t\) from
\(A \npeqtyp \t\) and \(B \npeqtyp \t\), respectively,
by Definition~\ref{tvariant-def}.
For Item~1, suppose that \(\Canonp{A} = C \Impl D\).
Then, \(\Canonp{A'}[A/X] = \Canonp{A} = C \Impl D\)
by Definition~\ref{canon-def} and (\ref{subtyp-O-Impl-01}).
Therefore, by Lemma~\ref{geqtyp-subst-Impl-invariant},
there exist some \(C'\) and \(D'\) such that
\begin{eqnarray}
    \Canonp{A'} &=\;& C' \Impl D' \nonumber \\
\label{subtyp-O-Impl-05}
    C &\peqtyp& C'[A/X] ~\mbox{and}~ D \peqtyp D'[A/X].
\end{eqnarray}
Hence, by induction hypothesis,
there exist some \(k\), \(E'\) and \(F'\) such that
\begin{enumerate}[{\kern8pt(1}a{${}'$})]\itemsep=3pt
\item \(\Canonp{B'} = E' \Impl F'\),
\item \(\subt{\g\cup\{\,X \psubtyp Y\,\}}{E' \psubtyp \O^{k} C'}\) and
    \(\subt{\g\cup\{\,X \psubtyp Y\,\}}{\O^{k} D' \psubtyp F'}\)
    are derivable.
\end{enumerate}
Let \(E\) and \(F\) be as
\(E = E'[B/Y]\) and \(F = F'[B/Y]\), respectively.
Then, (1a) can be shown as follows.
\begin{Eqnarray*}
\Canonp{B}
    &=& \Canonp{B'}[B/Y]
	    & (by (\ref{subtyp-O-Impl-02}) and Definition~\ref{canon-def}) \\
    &=& (E' \Impl F')[B/Y] & (by (\(\mbox{1a}'\))) \\
    &=& E \Impl F
\end{Eqnarray*}
Note that
\(F \npeqtyp \t\) from \(B \npeqtyp \t\) by Proposition~\ref{Impl-tvariant};
and hence, \(X \not\in \ETV{E} \cup \ETV{F} \cup \FTV{B}\)
from (\ref{subtyp-O-Impl-03}),
since \(\ETV{E} \cup \ETV{F} = \ETV{E \Impl F}
    = \ETV{\Canonp{B}} = \ETV{B} \subseteq \FTV{B} \subseteq \FTV{B'}\)
by Definition~\ref{etv-def} and Proposition~\ref{geqtyp-etv}.
Therefore,
\begin{eqnarray*}
    E &\peqtyp& E[A/X] = E'[B/Y][A/X] = E'[A/X,\,B/Y],~\mbox{and} \\
    F &\peqtyp& F[A/X] = F'[B/Y][A/X] = F'[A/X,\,B/Y]
\end{eqnarray*}
by Proposition~\ref{geqtyp-no-etv-subst}.
On the other hand, similarly, we get
\(Y \not\in \ETV{C} \cup \ETV{D} \cup \FTV{A}\)
from (\ref{subtyp-O-Impl-04}); and therefore, by (\ref{subtyp-O-Impl-05}),
\begin{eqnarray*}
    C &\peqtyp& C[B/Y] \peqtyp C'[A/X][B/Y] = C'[A/X,\,B/Y],~\mbox{and} \\
    D &\peqtyp& D[B/Y] \peqtyp D'[A/X][B/Y] = D'[A/X,\,B/Y].
\end{eqnarray*}
Hence, (1b) can be also established, for
\(\subt{\g}{E'[A/X,\,B/Y] \psubtyp \O^k C'[A/X,\,B/Y]}\) and
\(\subt{\g}{\O^k D'[A/X,\,B/Y] \psubtyp F'[A/X,\,B/Y]}\) are derivable
from (\(\mbox{1b}'\)) by Proposition~\ref{subtyp-subst}.
We thus get (1a) and (1b) in this case.
The proof for Item~2 is just symmetrical.

\Case{\r{{\rsubtyp}\mbox{-approx}}.}
In this case, \(B = \O A\).
If \(\Canonp{A} = C \Impl D\), then
\(\Canonp{B} = \O C \Impl \O D\) by Definition~\ref{canon-def};
and hence, it suffices to take \(k\), \(E\) and \(F\) as
\(k = 1\), \(E = \O C\) and \(F = \O D\).
Symmetrically,
If \(\Canonp{B} = E \Impl F\), then
by Definition~\ref{canon-def},
\(\Canonp{A} = C \Impl D\) for some \(C\) and \(D\) such that
\(E = \O C\) and \(F = \O D\).
Hence, it suffices to take \(k\) as \(k = 1\).
\qed\CHECKED{2014/06/20, 07/10}
\end{proof}

It might be supposed that the following hold
concerning the subtyping of type expressions.
\begin{enumerate}
\item[] {\bf (Wrong)}\quad
    If \(\O A \psubtyp B\), then
    \(B \peqtyp \O C\)  for some \(C\) such that \(A \psubtyp C\).
\item[] {\bf (Wrong)}\quad
    If\/ \(A \psubtyp \O B\), then either
    (a) \(A \psubtyp B\), or
    (b) \(A \peqtyp \O C\) for some \(C\) such that \(C \psubtyp B\).
\end{enumerate}
However, neither is the case.
Consider the following counter-examples, respectively.
\begin{eqnarray*}
&& \O (X \Impl Y) \peqtyp \O X \Impl \O Y \psubtyp X \Impl \O Y \\
&& (X \Impl \O Y) \Impl \O Z \psubtyp (\O X \Impl \O Y) \Impl \O Z
			  \peqtyp \O((X \Impl Y) \Impl Z)
\end{eqnarray*}

\Section{The typing system {\lA}}\label{lA-sec}

In this section,
we finally introduce a typed \(\lambda\)-calculus
equipped with the modality and recursive types.

\begin{definition}[Typing contexts]
    \ilabel{typ-context}{typing contexts}
    \ilabel*{Dom@$\protect\Dom(\G)$}
A {\em typing context} is a finite mapping that assigns a type
expression to each individual variable of its domain.
We use \(\G\), \(\G'\), \(\ldots\) to denote typing contexts, and
\(\{\,x_1:A_1,\ldots,x_m:A_m\,\}\) to denote a typing context
whose domain is \(\{\,x_1,\ldots,x_m\,\}\) and
that assigns \(A_i\) to \(x_i\) for every \(i\),
where \(A_1,\ldots,A_m\) are type expressions,
and \(x_1,\ldots,x_m\) are distinct individual variables.
We use \(\Dom(\G)\) to denote the domain of \(\G\).
\end{definition}

\begin{definition}[Typing judgment]
    \ilabel{typ-judgment}{typing judgment}
    \ilabel*{0 yields typing@$\protect\typ{\G}{M:A}$}
A {\em typing judgment} of {\lA} has the following form,
\[
    \typ{\{\,x_1:A_1,\,x_2:A_2,\,\ldots,\,x_n:A_n\,\}}{M:B},
\]
where
\(\{\,x_1:A_1,\,x_2:A_2,\,\ldots,\,x_n:A_n\,\}\) is a typing context,
\(M\) a \(\lambda\)-term, and \(B\) a type expression.
We occasionally write the same judgment omitting \(\{\,\}\) as follows.
\[
    \typ{x_1:A_1,\,x_2:A_2,\,\ldots,\,x_n:A_n}{M:B}
\]
Note that \(n\) can be \(0\).
We use \(\O\G\) to denote the typing context
\(\{\,x_1:\O A_1,\,x_2:\O A_2,\,\ldots,\,x_n:\O A_n\,\}\) when
\(\G = \{\,x_1:A_1,\,x_2:A_2,\,\ldots,\,x_n:A_n\,\}\),
and write \(\G'\! \psubtyp \G\) if and only if
\(\Dom(\G') = \Dom(\G)\) and \(\G'(x) \psubtyp \G(x)\)
for every \(x \in \Dom(\G)\).
\end{definition}

We now define the typing rules of {\lA}.
According to the intended meaning of \(\O\),
three typing rules, namely \r{\mbox{shift}}, \r{\t} and \r{\rsubtyp},
are added to those of the simply typed \(\lambda\)-calculus.

\begin{definition}[Typing rules]
    \ilabel{typ-rules}{typing rules}
    \ilabel{La-def}{lambda-A@\protect\lA}
The typing system {\lA} is defined by
the following derivation rules.
\[
\ifr{\mbox{var}}
    {\strut}
    {\typ{\G \cup \{\,x : A\,\}}{x : A}}
\mskip80mu
\ifr{\mbox{shift}}
    {\typ{\O \G}{M : \O A}}
    {\typ{\G}{M : A}}
\]
\[
\ifr{\t}
    {\strut}
    {\typ{\G}{M : \t}}
\mskip80mu
\ifr{\psubtyp}
    {\typ{\G}{M : A} & A \psubtyp B}
    {\typ{\G}{M : B}}
\]
\[
\ifr{\Impl\,\mbox{I}}
    {\typ{\G \cup \{\,x : A\,\}}{M : B}}
    {\typ{\G}{\lam{x}{M} : A \Impl B}}
\mskip30mu
\ifr{\Impl\mbox{E}}
    {\typ{\G_1}{M : A \Impl B}
     & \typ{\G_2}{N : A}}
    {\typ{\G_1 \cup \G_2}{\app{M}{N} : B}}
\]
\end{definition}

The \r{\t}-rule can be considered almost redundant,
since in fact \(\typ{\G}{M : \t}\) is derivable without using \r{\t}
for any closed \(\lambda\)-term \(M\).
In derivation trees of typing judgments,
the rule \r{\psubtyp} is sometimes written as \r{\eqtyp} or \r{\peqtyp}
where it is applied for two type expressions \(A\) and \(B\) such that
\(A \eqtyp B\) or \(A \peqtyp B\), respectively.
Note that two derivation rules, namely
\r{\mbox{shift}} and \r{\rsubtyp}, are not related to
term (program) construction.
These rules only reflect the relationship among types (specifications),
just like the case of parametric polymorphism, and
constitute a non-constructive (non-computational, or non-informative)
part of logic of programming.

The readers may wonder the following rule, which corresponds to
the necessitation rule of normal modal logic, is missing.
\[
\ifr{\mbox{nec}}
    {\typ{\G_1}{M : A}}
    {\typ{\O \G_1\cup\G_2}{M : \O A}}
\]
However, it will be shown that this rule is redundant to {\lA} as
Proposition~\ref{lA-nec-subst-redundant}.
On the other hand,
the axiom schema {\bf K} of normal modal logic is incorporated
into {\lA} in a stronger form
as the \r{{\peqtyp}\mbox{-{\bf K}/{\bf L}}}-rule for type equality.

The \r{\mbox{shift}}-rule represents the fact that
for every hereditary interpretation over a given well-founded frame,
which is not necessarily a {\lA}-frame,
we can extend it by adding new worlds so that
every possible world \(p\) in the original frame
has another world from which \(p\) is accessible.
Since the interpretation of \(\typ{\G}{M:A}\) in the world \(p\) is
identical to the one of \(\typ{\O \G}{M:\O A}\) in such a world,
\(\typ{\G}{M:A}\) is valid whenever so is \(\typ{\O \G}{M:\O A}\).
In \cite{nakano-tacs01}, the \(\r{\mbox{shift}}\)-rule was treated as
an optional rule to the core typing system,
which has however turned to be sound with respect to any hereditary
interpretation over well-founded frames\footnote{%
The \r{\mbox{shift}}-rule becomes redundant
if we add intersection types to {\lA}.}.
The reader should refer to the proof of Theorem~\ref{soundness-theorem}
for the details.

\begin{example}\label{Y-derivable}
\def\FF{\lam{x}{\app{f}{(\app{x}{x})}}}
We can derive Curry's fixed-point combinator \(\Y\) in {\lA};
more precisely, the following is derivable.
\[
    \typ{}{\lam{f}{\app{(\FF)}{(\FF)}}
	\mathrel{\,:\,}(\O X \Impl X) \Impl X}
\]
Let a type \(A = \fix{Y}\O Y \Impl X\) and
a derivation \(\P\) as follows.
\[
\ifnarrow\lower4pt\else\lower10pt\fi\hbox{\(\Pi = {}\)}\;
\begin{array}[c]{c}
	\ifr{\Impl\mbox{I}}
	    {\ifr{\Impl\mbox{E}}
		{\ifr{\mbox{var}}
			{}
			{\typ{f : \O X \Impl X}{f : \O X \Impl X}}
		 & \ifnarrow
		     \mskip-230mu
		     \def\Dr#1{\stack{\mskip70mu#1}{\Derive{30}}}%
		 \else
		     \def\Dr#1{#1}%
		 \fi
		 \Dr{\ifr{\Impl\mbox{E}}
			{\ifr{\rsubtyp}
			    {\ifr{\mbox{var}}{}{\typ{x:\O A}{x : \O A}}}
			    {\typ{x:\O A}{x : \O\O A\Impl \O X}}
			&
			   \ifr{\rsubtyp}
				{\ifr{\mbox{var}}{}{\typ{x:\O A}{x : \O A}}}
				{\typ{x:\O A}{x : \O\O A}}}
			{\typ{x:\O A}{\app{x}{x} : \O X}}}}
		{\typ{f : \O X \Impl X,\,x:\O A}
		     {\app{f}{(\app{x}{x})} : X}}}
	    {\typ{f : \O X \Impl X}{\FF : \O A \Impl X}}
\end{array}
\]
Then, we can derive \(\Y\) as follows.
\def\GG{f : \O X \Impl X}
\def\L{\lam{x}\app{f}{(\app{x}{x})}}
\def\LL{(\L)}
\vspace{-10pt}
\[
\quad
\ifr{\Impl\mbox{I}}
    {\ifr{\Impl\mbox{E}}
	{\stack{\derive{\Pi}\mskip65mu}
		{\typ{\GG}{\L : \O A \Impl X}}
	 & \ifnarrow
	     \mskip-150mu
	     \def\Dr#1{\stack{\mskip60mu#1}{\Derive{25}}}%
	 \else
	     \def\Dr#1{#1}%
	 \fi
	 \Dr{\ifr{\rsubtyp}
	       {\stack{\derive{\Pi}\mskip65mu}
		      {\typ{\GG}{\L : \O A \Impl X}}}
	       {\typ{\GG}{\L : \O A}}}}
	{\typ{\GG}{\app{\LL}{\LL} : X}}}
    {\typ{}{\lam{f}{\app{\LL}{\LL}} : (\O X \Impl X) \Impl X}}
\]
\end{example}
\def\QQ{\lam{x}\lam{f}\app{f}{(\app{\app{x}{x}}{f})}}
We can also observe that
Turing's fixed-point combinator \(\app{(\QQ)\penalty0}{(\QQ)}\)
has the same type.
More generally,
we can derive \(\typ{}{\Y_{\!n}:(\O X \Impl X) \Impl X}\) for every \(n\),
where \(\Y_0 = \Y\) and
\(\Y_{\!n{+}1} = \Y_n(\lam{y}{\lam{f}\app{f}{(\app{y}{f})}})\).

The type \((\O X \Impl X) \Impl X\)
gives a concise axiomatic meaning to the fixed-point combinators;
it says that they can produce an element of \(X\)
with a given function that works as an information
pump from \(\O X\) to \(X\); in other words,
they provide the induction scheme discussed in Section~\ref{intro-sec}.
The type thus enables us to construct recursive programs
using the fixed-point combinators
without analyzing their computational behavior.
We shall see some examples of
such recursive programs in Section~\ref{program-sec}.

\ifdetail
\begingroup
\def\M{\mathbf{M}}
\def\B{\mathbf{B}}
\def\N{\mathbf{N}}
\begin{example}\label{non-standard-Y}
Let\/ \(\N = \app{\app{\B}{\M}}{(\app{\app{\B}{(\app{\B}{\M})}}{\B})}\),
where\/ \(\M = \lam{x}{xx}\) and\/
\(\B = \lam{f}\lam{g}\lam{x}\app{f}{(\app{g}{x})}\).
The \(\lambda\)-term \(\N\) is known to have the same B\"{o}hm tree as \(\Y\)
while it does not satisfy\/
\(\app{\N}{f} \ctc \app{f}{(\app{\N}{f})}\) {\rm \cite{statman-1993}}.
We can observe that \(N\) also has the type \((\O X \Impl X) \Impl X\).
Let 
\[
A = \fix{Y}\O Y \Impl X,~~
B = (\O A \Impl \O X) \Impl X,~\mbox{and}~~
C = (\O A \Impl \O X) \Impl A.
\]
First, \(\M\) can be typed with both \(A \Impl X\) and \(B \Impl X\) 
as follows.
\[
    \ifr{\Impl\mbox{I}}
		 {\ifr{\Impl\mbox{E}}
		    {\ifr{\eqtyp}
			{\ifr{\mbox{var}}{}{\typ{x:A}{x :A}}}
			{\typ{x:A}{x : \O A\Impl X}}
		    &
		       \ifr{\rsubtyp}
			    {\ifr{\mbox{var}}{}{\typ{x:A}{x : A}}}
			    {\typ{x:A}{x : \O A}}}
		    {\typ{x:A}{\app{x}{x} : X}}}
	    {\typ{}{\M : A \Impl X}}
\]
Since \(B = (\O A \Impl \O X) \Impl X
\psubtyp \O (A \Impl X) \Impl X
\psubtyp \O (\O A \Impl X) \Impl X
\eqtyp \O A \Impl X
\eqtyp A\),
\[
   \Ifr{\r{\rsubtyp}.}
	{\stack{\derive{}}{\typ{}{\M : A \Impl X}}
	    & A \Impl X \psubtyp B \Impl X}
	{\typ{}{\M : B \Impl X}}
\]
Then, because \(\B\) can be typed with
\((D \Impl E) \Impl (F \Impl D) \Impl F \Impl E\)
for any \(D\), \(E\) and \(F\),
\[
 \ifr{\Impl\mbox{E}}
     {\ifr{\Impl\mbox{E}}
	    {\stack{\derive{}}
		 {\typ{}{\B : (B \Impl X) \Impl ((\O X \Impl X) \Impl B)
		    \Impl (\O X \Impl X) \Impl X}}
	     & \stack{\derive{}}{\typ{}{\M : B \Impl X}}}
	     {\typ{}{\app{\B}{\M} : ((\O X \Impl X) \Impl B)
		\Impl (\O X \Impl X) \Impl X}}
     & \mskip-600mu
     \ifr{\Impl\mbox{E}}
	 {\ifr{\Impl\mbox{E}}
	     {\stack{\mskip-170mu\derive{}}
		 {\typ{}{\B : (C \Impl B) \Impl ((\O X \Impl X) \Impl C)
		    \Impl (\O X \Impl X) \Impl B}}
	     & \mskip-230mu
	     \ifr{\Impl\mbox{E}}
		  {\stack{\derive{}}{\typ{}{\B : (A \Impl X) \Impl C \Impl B}}
		   & \stack{\derive{}}{\typ{}{\M : A \Impl X}}}
		  {\mskip180mu
		      \stack{\typ{}{\app{\B}{\M} : C \Impl B}}
			    {\mskip30mu\Derive{30}{}}}}
	     {\typ{}{\app{\B}{(\app{\B}{\M})}
		 : ((\O X \Impl X) \Impl C) \Impl (\O X \Impl X) \Impl B}}
	    & \mskip-290mu
	     \ifr{\eqtyp}
		  {\stack{\derive{}}
			 {\typ{}{\B:(\O X \Impl X) \Impl
			      (\O A \Impl \O X) \Impl \O A \Impl X}}}
		  {\mskip150mu
		      \stack{\typ{}{\B:(\O X \Impl X) \Impl C}}
			    {\mskip110mu\Derive{100}{}}}}
	 {\mskip330mu\stack{\typ{}{\app{\app{\B}{(\app{\B}{\M})}}{\B}:
			 (\O X \Impl X) \Impl B}}
		{\mskip185mu\Derive{60}{}}}}
     {\typ{}{\app{\app{\B}{\M}}{(\app{\app{\B}{(\app{\B}{\M})}}{\B})}
	     : (\O X \Impl X) \Impl X}}
\]
\end{example}
\endgroup
\fi

\Section{Basic properties of {\lA}}\label{basic-prop-sec}

In this section, we show
the soundness of the typing system {\lA}
with respect to the semantics of types
presented in Section~\ref{semantics-sec}, and
its subject reduction property.
We begin with some basic property of the typing system.

\begin{definition}[Skeletons and sizes of derivations]
    \ilabel{typ-skeleton-def}{skeleton}
    \ilabel{typ-size-def}{size}
The {\em skeleton} of a derivation is defined to be the tree
obtained from the derivation by erasing all the judgments occurring in it.
That is, a skeleton is a finite tree with nodes labeled by names
of the inference rules listed in Definition~\ref{typ-rules}.
The {\em size} of a derivation is defined to be the number of
nodes involved in the skeleton of the derivation.
\end{definition}

\begin{proposition}\label{typ-basic}\pushlabel
Let\/ \(\G \cup \{\,x:A,\,y:A\,\}\) be a well-formed typing context, that is,
\(x \in \Dom(\G)\) implies \(\G(x) = A\), and the same for \(y\).
\begin{Enumerate}
\item \itemlabel{typ-var-rename}
    If\/ \(\typ{\G\cup\{\,x:A\,\}}{M:B}\) is derivable, then
    so is \(\typ{\G\cup\{\,y:A\,\}}{M[y/x]:B}\)
    without changing the skeleton of the derivation.
    {\rm\bf(renaming)}
\item \itemlabel{typ-weakening}
    If\/ \(\typ{\G}{M:B}\) is derivable,
    then so is \(\typ{\G\cup\{\,x:A\,\}}{M:B}\)
    without changing the skeleton of the derivation.
    {\rm\bf(weakening)}
\item \itemlabel{typ-var-separate}
    If\/ \(\typ{\G\cup\{\,x:A\,\}}{M[x/y]:B}\) is derivable, then so is
    \(\typ{\G\cup\{\,x:A,\,y:A\,\}}{M:B}\)
    without changing the skeleton of the derivation.
    {\rm\bf(separation)}
\item \itemlabel{typ-var-erase}
    Suppose that\/ \(x \not\in \FV{M}\).
    If\/
    \(\typ{\G\cup\{\,x:A\,\}}{M:B}\) is derivable, then so is \(\typ{\G}{M:B}\)
    without changing the skeleton of the derivation.
    {\rm\bf(erasing)}
\end{Enumerate}
\end{proposition}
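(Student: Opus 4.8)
The plan is to prove each of the four items by induction on the size of the given derivation, splitting on the last rule applied, and in every case re-deriving the conclusion by exactly the same rule so that the skeleton is preserved. Following the pattern already used for Proposition~\ref{subtyp-basic}, I would prove renaming (Item~\itemref{typ-var-rename}) and weakening (Item~\itemref{typ-weakening}) by simultaneous induction: the \r{\Impl\mbox{I}}-rule forces me to rename a bound individual variable, which is itself an appeal to renaming at a fresh variable, while the \r{\Impl\mbox{E}}-rule splits the context as $\G_1 \cup \G_2$, so I need weakening to reconcile the two premises over a common context. Separation (Item~\itemref{typ-var-separate}) and erasing (Item~\itemref{typ-var-erase}) can then be handled by independent inductions, using renaming and weakening as already-established lemmas.

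The uniform device that makes every case go through is the identity $\O(\G \cup \{x:A\}) = \O\G \cup \{x:\O A\}$, that is, the context operation $\O(\cdot)$ distributes over $\cup$ and turns each declared type into its $\O$-shift. In the base case \r{\mbox{var}} I distinguish whether the displayed variable coincides with the one being manipulated; since the term is a single variable, the substitutions $M[y/x]$ and $M[x/y]$ are immediate to compute, and the result is again an instance of \r{\mbox{var}}, with the enlarged or renamed context kept well-formed. In the cases \r{\t} and \r{\psubtyp} the term and the side relation $A \psubtyp B$ are untouched by context surgery, so the same rule applies verbatim. For \r{\Impl\mbox{I}} and \r{\Impl\mbox{E}} I push the operation into the premises by the induction hypothesis, choosing bound variables distinct from $x$, $y$ and $\Dom(\G)$, and reassemble; for \r{\Impl\mbox{E}}, weakening first lets me assume both premises carry the full context before it is split.

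The step I expect to be the main obstacle is the \r{\mbox{shift}}-rule, since it is the one inference that rewrites the context through the modality. Here the premise has the form $\typ{\O(\G\cup\{x:A\})}{M:\O B}$, which by the distributivity identity equals $\typ{\O\G \cup \{x:\O A\}}{M:\O B}$. The crucial point is that I must invoke the induction hypothesis \emph{at the shifted type} $\O A$ rather than $A$; this is legitimate precisely because each item is stated for an arbitrary type, and because the induction is on the size of the derivation, which strictly decreases on passing to the premise of \r{\mbox{shift}} even though the type grows. Re-applying \r{\mbox{shift}} then restores the desired unshifted conclusion with the same skeleton. The only remaining care is in separation, where the premise is a derivation of $\typ{\G\cup\{x:A\}}{M[x/y]:B}$ rather than of $M$ itself; there I track the shape of $M$ so that, in the \r{\mbox{var}} case, I correctly recover whether the displayed occurrence of $x$ originated from $x$ or from the renamed $y$, and assign it the appropriate entry of the enlarged context $\G\cup\{x:A,\,y:A\}$.
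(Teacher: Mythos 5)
Your proposal is correct and follows essentially the same route as the paper: simultaneous induction for renaming and weakening (measured by the derivation's skeleton/size), weakening to unify the split contexts in \r{\Impl\mbox{E}}, fresh-variable renaming in \r{\Impl\mbox{I}}, and the identity \(\O(\G\cup\{\,x:A\,\})=\O\G\cup\{\,x:\O A\,\}\) together with an appeal to the induction hypothesis at the shifted type to handle \r{\mbox{shift}}. The one point you correctly flag as delicate --- that the induction hypothesis is applied at \(\O A\) rather than \(A\), which is licensed because the induction is on the derivation and not on the type --- is exactly how the paper resolves it.
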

\begin{proof}
By induction on the skeletons of the derivations, and by cases on the rule
applied last in them.
The proofs of Items~\itemref{typ-var-rename}
and \itemref{typ-weakening} proceed by simultaneous induction.
We use Item~\itemref{typ-var-rename} for \itemref{typ-var-separate},
and use Item~\itemref{typ-weakening} for \itemref{typ-var-separate}
and \itemref{typ-var-erase}.
Note that since we have the \r{\t}-rule,
\(\FV{M} \subseteq \Dom(\G)\) is not always the case when
\(\typ{\G}{M:B}\) is derivable for some \(B\).
\ifdetail
\paragraph{Proof of \protect\itemref{typ-var-rename}}
The proof proceeds by simultaneous induction
with Item~\itemref{typ-weakening}.
Suppose that \(\typ{\G \cup \{\,x:A\,\}}{M : B}\) is derivable,
and that \(\G \cup \{\,y:A\,\}\) is a well-formed typing context.

\Case{\r{\mbox{var}}.}
The derivation has the form of
\[
\ifr{\mbox{var}}
    {}
    {\typ{\G' \cup \{\,x:A,\,z:B\,\}}{z : B}}
\]
for some \(z\) and \(\G'\)
such that \(\G\cup\{\,x:A\,\} = \G'\cup\{\,x:A,\,z:B\,\}\) and \(z = M\).
If \(z = x\), then \(A = B\) and \(M[y/x] = y\); and hence,
    \(\typ{\G \cup \{\,y:A\,\}}{M[y/x] : B}\) is derivable by \r{\mbox{var}}.
Otherwise, \(M[y/x] = z\); and hence,
\(\typ{\G \cup \{\,y:A\,\}}{M[y/x] : B}\) is also derivable
by \r{\mbox{var}} since \(\{\,z:B\,\} \subseteq \G\).

\Case{\r{\mbox{shift}}.}
The derivation ends with
\[
\Ifr{\r{\mbox{shift}}.}
    {\typ{\O\G\cup\{\,x:\O A\,\}}{M : \O B}}
    {\typ{\G\cup\{\,x:A\,\}}{M : B}}
\]
By induction hypothesis, we have a derivation of
\(\typ{\O\G\cup\{\,y:\O A\,\}}{M[y/x] : \O B}\),
    the skeleton of which is identical to
    the one of the original antecedent; and hence,
by applying \r{\mbox{shift}},
\(\typ{\G\cup\{\,y:A\,\}}{M[y/x] : B}\) is derivable without
changing the skeleton of the original derivation
of \(\typ{\G\cup\{\,x:A\,\}}{M : B}\).

\Case{\r{\t}.}
\(\typ{\G\cup\{\,y:A\,\}}{M[y/x]:B}\) is also derivable by \r{\t}
since \(B = \t\) in this case.

\Case{\r{\rsubtyp}.}
The derivation ends with
\[
\ifr{\rsubtyp}
    {\typ{\G\cup\{\,x:A\,\}}{M : B'}
	& B' \psubtyp B}
    {\typ{\G\cup\{\,x:A\,\}}{M : B}}
\]
for some \(B'\).
\(\typ{\G\cup\{\,y:A\,\}}{M[y/x] : B'}\) is derivable by induction hypothesis
without changing the skeleton of the derivation of the antecedent.
Hence, by \r{\rsubtyp}, we can also derive
\(\typ{\G\cup\{\,y:A\,\}}{M[y/x] : B}\)
without changing the skeleton of the original derivation.

\Case{\r{\Impl\mbox{I}}.}
The derivation ends with
\[
\ifr{\Impl\mbox{I}}
    {\typ{\G\cup\{\,x:A,\,z:C\,\}}{K : D}}
    {\typ{\G\cup\{\,x:A\,\}}{\lam{z}K : C \Impl D}}
\]
for some \(C\), \(D\), \(z\) and \(K\) such that \(B = C \Impl D\)
and \(M = \lam{z}K\).
Let \(z'\) be a fresh individual variable.
By induction hypothesis,
\(\typ{\G\cup\{\,x:A,\,z':C\,\}}{K[z'/z] : D}\) is derivable
without changing the skeleton.
Hence, by induction hypothesis again,
so is \(\typ{\G\cup\{\,y:A,\,z':C\,\}}{K[z'/z][y/x] : D}\), from which
we can derive \(\typ{\G\cup\{\,y:A\,\}}{\lam{z'}K[z'/z][y/x] : B}\)
by the same \r{{\Impl}\mbox{I}}.
Note that \(M[y/x] = (\lam{z}K)[y/x] = \lam{z'}K[z'/z][y/x]\).

\Case{\r{\Impl\mbox{E}}.}
The derivation ends with
\[
\ifr{\Impl\mbox{E}}
    {\typ{\G_1}{M_1 : C \Impl B}
     & \typ{\G_2}{M_2 : C}}
    {\typ{\G \cup \{\,x:A\,\}}{\app{M_1}{M_2} : B}}
\]
for some \(\G_1\), \(\G_2\), \(C\), \(M_1\) and \(M_2\) such that
\(\G \cup \{\,x:A\,\} = \G_1 \cup \G_2\) and \(M = \app{M_1}{M_2}\).
By the induction hypothesis for Item~\itemref{typ-weakening},
we can assume that \(\G_1 = \G_2 = \G \cup \{\,x:A\,\}\).
Hence, by the induction hypothesis for Item~\itemref{typ-var-rename},
we can derive \(\typ{\G\cup\{\,y:A\,\}}{M_1[y/x] : C \Impl B}\) and
\(\typ{\G\cup\{\,y:A\,\}}{M_2[y/x] : C}\) without changing the skeletons.
Therefore, by applying \r{{\Impl}\mbox{E}},
\(\typ{\G\cup\{\,y:A\,\}}{M[y/x] : B}\) is also derivable
without changing the original skeleton.

\paragraph{Proof of \protect\itemref{typ-weakening}}
The proof proceeds by simultaneous induction
with Item~\itemref{typ-var-rename}.
Suppose that \(\typ{\G}{M : B}\) is derivable,
and that \(\G \cup \{\,x:A\,\}\) is a well-formed typing context.

\Case{\r{\mbox{var}}.}
\(\typ{\G \cup \{\,x:A\,\}}{M:B}\) is also derivable by \r{\mbox{var}}
since \(\G \cup \{\,x:A\,\}\) is a well-formed typing context.

\Case{\r{\mbox{shift}}.}
The derivation ends with
\[
\Ifr{\r{\mbox{shift}}.}
    {\typ{\O\G}{M : \O B}}
    {\typ{\G}{M : B}}
\]
Note that
\(\O\G\cup\{\,x:\O A\,\}\) is also a well-formed typing context.
Therefore, by induction hypothesis,
\(\typ{\O\G\cup\{\,x:\O A\,\}}{M : \O B}\) is derivable
without changing the skeleton; and hence,
so is \(\typ{\G\cup\{\,x:A\,\}}{M : B}\) by \r{\mbox{shift}}.

\Case{\r{\t}.}
\(\typ{\G \cup \{\,x:A\,\}}{M:B}\) is also derivable by \r{\t}
since \(B = \t\).

\Case{\r{\rsubtyp}.}
In this case, \(\typ{\G}{M : B'}\) is derivable for some \(B'\)
such that \(B' \psubtyp B\).
Hence, by induction hypothesis, so is
\(\typ{\G\cup\{\,x:A\,\}}{M : B'}\)
without changing the skeleton, from which
we can derive
\(\typ{\G\cup\{\,x:A\,\}}{M : B}\) by the same rule.

\Case{\r{\Impl\mbox{I}}.}
The derivation ends with
\[
\ifr{\Impl\mbox{I}}
    {\typ{\G\cup\{\,y:C\,\}}{K : D}}
    {\typ{\G}{\lam{y}K : C \Impl D}}
\]
for some \(C\), \(D\), \(y\) and \(K\) such that \(B = C \Impl D\)
and \(M = \lam{y}K\).
Let \(y'\) be a fresh individual variable.
By the induction hypothesis for Item~\itemref{typ-var-rename},
\(\typ{\G\cup\{\,y':C\,\}}{K[y'/y] : D}\) is derivable
without changing the skeleton.
Note also that \(\G\cup\{\,x:A,\,y':C\,\}\) is also a well-formed typing context
since \(y'\) is fresh.
Hence, by the induction hypothesis for Item~\itemref{typ-weakening},
we have a derivation of
\(\typ{\G\cup\{\,x:A,\,y':C\,\}}{K[y'/y] : D}\)
whose skeleton is identical to the one for the original antecedent.
Therefore, by applying the same \r{{\Impl}\mbox{I}},
we can derive \(\typ{\G\cup\{\,x:A\,\}}{\lam{y'}K[y'/y] : B}\)
without changing the skeleton for the original consequent.
Note that \(M = \lam{y}K = \lam{y'}K[y'/y]\).

\Case{\r{\Impl\mbox{E}}.}
The derivation ends with
\[
\ifr{\Impl\mbox{E}}
    {\typ{\G_1}{M_1 : C \Impl B}
     & \typ{\G_2}{M_2 : C}}
    {\typ{\G}{\app{M_1}{M_2} : B}}
\]
for some \(\G_1\), \(\G_2\), \(C\), \(M_1\) and \(M_2\) such that
\(\G = \G_1 \cup \G_2\) and \(M = \app{M_1}{M_2}\).
By induction hypothesis,
\(\typ{\G_1\cup\{\,x:A\,\}}{M_1 : C \Impl B}\) and
\(\typ{\G_2\cup\{\,x:A\,\}}{M_2 : C}\) are derivable
without changing the skeletons.
Therefore, so is
\(\typ{\G\cup\{\,x:A\,\}}{M : B}\) by applying the same \r{{\Impl}\mbox{E}}.

\paragraph{Proof of \protect\itemref{typ-var-separate}}
Suppose that \(\typ{\G\cup\{\,x:A\,\}}{M[x/y] : B}\) is derivable, and that
\(\G \cup \{\,x:A,\,y:A\,\}\) is a well-formed typing context.

\Case{\r{\mbox{var}}.}
The derivation has the form of
\[
\ifr{\mbox{var}}
    {}
    {\typ{\G' \cup \{\,x:A,\,z:B\,\}}{z : B}}
\]
for some \(z\) and \(\G'\)
such that \(\G\cup\{\,x:A\,\} = \G'\cup\{\,x:A,\,z:B\,\}\) and \(z = M[x/y]\).
If \(M = y\), then we get \(x = z\) from \(z = M[x/y]\); and hence, \(A = B\).
Therefore,
\(\typ{\G \cup \{\,x:A,\,y:A\,\}}{M : B}\) is derivable by \r{\mbox{var}}.
On the other hand,
if \(M \not= y\), then \(z = M[x/y]\) implies \(M = z\); and hence,
\(\typ{\G \cup \{\,x:A,\,y:A\,\}}{M : B}\) is also derivable by \r{\mbox{var}}
since \(\{\,z:B\,\} \subseteq \G\cup\{\,x:A\,\}\).

\Case{\r{\mbox{shift}}.}
The derivation ends with
\[
\Ifr{\r{\mbox{shift}}.}
    {\typ{\O\G\cup\{\,x:\O A\,\}}{M[x/y] : \O B}}
    {\typ{\G\cup\{\,x:A\,\}}{M[x/y] : B}}
\]
Note that
\(\O\G\cup\{\,x:\O A,\,y:\O A\,\}\) is also a well-formed typing context.
Therefore, by induction hypothesis,
\(\typ{\O\G\cup\{\,x:\O A,\,y:\O A\,\}}{M : \O B}\)
is derivable without changing the skeleton;
and hence, so is \(\typ{\G\cup\{\,x:A,\,y:A\,\}}{M : B}\) by the same rule.

\Case{\r{\t}.}
\(\typ{\G\cup\{\,x:A,\,y:A\,\}}{M:B}\) is also derivable
by \r{\t} since \(B = \t\).

\Case{\r{\rsubtyp}.}
The derivation ends with
\[
\ifr{\rsubtyp}
    {\typ{\G\cup\{\,x:A\,\}}{M[x/y] : B'}
	& B' \psubtyp B}
    {\typ{\G\cup\{\,x:A\,\}}{M[x/y] : B}}
\]
for some \(B'\).
By induction hypothesis,
\(\typ{\G\cup\{\,x:A,\,y:A\,\}}{M : B'}\)
is derivable without changing the skeleton.
Hence, so is \(\typ{\G\cup\{\,x:A,\,y:A\,\}}{M : B}\)
by the same \r{\rsubtyp}.

\Case{\r{\Impl\mbox{I}}.}
The derivation ends with
\[
\ifr{\Impl\mbox{I}}
    {\typ{\G\cup\{\,x:A,\,z:C\,\}}{K : D}}
    {\typ{\G\cup\{\,x:A\,\}}{\lam{z}K : C \Impl D}}
\]
for some \(C\), \(D\), \(z\) and \(K\) such that \(B = C \Impl D\)
and \(M[x/y] = \lam{z}K\).
By Item~\itemref{typ-var-rename} of this proposition, we can assume
that \(z \not\in \{\,x,\,y\,\}\) without loss of generality.
Since \(M[x/y] = \lam{z}K\),
we get \(M = \lam{z}K'\) for some \(K'\) such that \(K = K'[x/y]\).
Therefore, by induction hypothesis,
\(\typ{\G\cup\{\,x:A,\,y:A,\,z:C\,\}}{K' : D}\) is derivable
without changing the skeleton; and hence,
so is \(\typ{\G\cup\{\,x:A,\,y:A\,\}}{\lam{z}K' : B}\) by the same rule.

\Case{\r{\Impl\mbox{E}}.}
The derivation ends with
\[
\ifr{\Impl\mbox{E}}
    {\typ{\G_1}{M_1 : C \Impl B}
     & \typ{\G_2}{M_2 : C}}
    {\typ{\G \cup \{\,x:A\,\}}{\app{M_1}{M_2} : B}}
\]
for some \(\G_1\), \(\G_2\), \(C\), \(M_1\) and \(M_2\) such that
\(\G \cup \{\,x:A\,\} = \G_1 \cup \G_2\) and \(M[x/y] = \app{M_1}{M_2}\).
By Item~\itemref{typ-weakening} of this proposition,
we can assume that \(\G_1 = \G_2 = \G \cup \{\,x:A\,\}\).
Note that \(M[x/y] = \app{M_1}{M_2}\) implies
\(M = \app{M_1'}{M_2'}\)
for some \(M_1'\) and \(M_2'\) such that
\(M_1 = M_1'[x/y]\) and \(M_2 = M_2'[x/y]\).
Therefore, by induction hypothesis,
\(\typ{\G \cup \{\,x:A,\,y:A\,\}}{M_1' : C \Impl B}\) and
\(\typ{\G \cup \{\,x:A,\,y:A\,\}}{M_2' : C}\) are derivable
without changing the skeletons; and hence, so is
\(\typ{\G\cup\{\,x:A,\,y:A\,\}}{M : B}\)
by applying the same \r{{\Impl}\mbox{E}}.

\paragraph{Proof of \protect\itemref{typ-var-erase}}
Suppose that \(\typ{\G\cup\{\,x:A\,\}}{M : B}\) is derivable,
and \(x \not\in \FV{M}\).

\Case{\r{\mbox{var}}.}
The derivation has the form of
\[
\ifr{\mbox{var}}
    {}
    {\typ{\G' \cup \{\,x:A,\,y:B\,\}}{y : B}}
\]
for some \(y\) and \(\G'\)
such that \(\G\cup\{\,x:A\,\} = \G'\cup\{\,x:A,\,y:B\,\}\) and \(y = M\).
Note that \(x \not= y\); and hence, \(\{\,y:B\,\} \subseteq \G\)
from \(x \not\in \FV{M}\).
Therefore, we can derive \(\typ{\G}{M : B}\) by \r{\mbox{var}}.

\Case{\r{\mbox{shift}}.}
The derivation ends with
\[
\Ifr{\r{\mbox{shift}}.}
    {\typ{\O\G\cup\{\,x:\O A\,\}}{M : \O B}}
    {\typ{\G\cup\{\,x:A\,\}}{M : B}}
\]
By induction hypothesis,
\(\typ{\O\G}{M : \O B}\) is derivable without changing the skeleton;
and hence, so is \(\typ{\G}{M : B}\) by the same rule.

\Case{\r{\t}.}
\(\typ{\G}{M:B}\) is also derivable by \r{\t} since \(B = \t\).

\Case{\r{\rsubtyp}.}
The derivation ends with
\[
\ifr{\rsubtyp}
    {\typ{\G\cup\{\,x:A\,\}}{M : B'}
	& B' \psubtyp B}
    {\typ{\G\cup\{\,x:A\,\}}{M : B}}
\]
for some \(B'\).
By induction hypothesis,
\(\typ{\G}{M : B'}\) is derivable without changing the skeleton.
Hence, so is \(\typ{\G}{M : B}\) by the same \r{\rsubtyp}.

\Case{\r{\Impl\mbox{I}}.}
The derivation ends with
\[
\ifr{\Impl\mbox{I}}
    {\typ{\G\cup\{\,x:A,\,y:C\,\}}{K : D}}
    {\typ{\G\cup\{\,x:A\,\}}{\lam{y}K : C \Impl D}}
\]
for some \(C\), \(D\), \(y\) and \(K\) such that \(B = C \Impl D\)
and \(M = \lam{y}K\).
If \(x \not= y\), we get \(x \not\in \FV{K}\) from \(x \not\in \FV{\lam{y}K}\).
In this case,
by induction hypothesis,
\(\typ{\G\cup\{\,y:C\,\}}{K : D}\) is derivable
without changing the skeleton; and hence,
so is \(\typ{\G}{\lam{y}K : B}\) by the same rule.
On the other hand, if \(x = y\), then \(A = C\) and the antecedent is
identical to \(\typ{\G\cup\{\,y:C\,\}}{K : D}\).
Therefore, we can also derive \(\typ{\G}{\lam{y}K : B}\) by \r{{\Impl}\mbox{I}}.

\Case{\r{\Impl\mbox{E}}.}
The derivation ends with
\[
\ifr{\Impl\mbox{E}}
    {\typ{\G_1}{M_1 : C \Impl B}
     & \typ{\G_2}{M_2 : C}}
    {\typ{\G \cup \{\,x:A\,\}}{\app{M_1}{M_2} : B}}
\]
for some \(\G_1\), \(\G_2\), \(C\), \(M_1\) and \(M_2\) such that
\(\G \cup \{\,x:A\,\} = \G_1 \cup \G_2\) and \(M = \app{M_1}{M_2}\).
By Item~\itemref{typ-weakening} of this proposition,
we can assume that \(\G_1 = \G_2 = \G \cup \{\,x:A\,\}\).
Since \(x \not\in \FV{M_1}\) and \(x \not\in \FV{M_2}\),
by induction hypothesis,
\(\typ{\G}{M_1 : C \Impl B}\) and
\(\typ{\G}{M_2 : C}\) are derivable without changing the skeletons.
Therefore, so is
\(\typ{\G}{M : B}\) by applying the same \r{{\Impl}\mbox{E}}.
\fi 
\qed\CHECKED{2014/07/12}
\end{proof}

Now we can show the soundness of {\lA} with respect to the
semantics \(\II\) of type expressions over {\lA}-frames.

\begin{theorem}[Soundness of {\lA}]
    \ilabel{soundness-theorem}{soundness!lambda-A@\protect\lA}
If\/ \(\typ{\{\,x_1 : A_1,\,\ldots,\,x_n : A_n\,\}}{M:B}\)
is derivable in {\lA}, then
for any {\lA}-frame \(\pair{\W}{\acc}\),
\(\VI{M}{\ienv} \in \I{B}^\tenv_p\,\)
    for every \(p \in \W\), hereditary \(\tenv\) and \(\ienv\) whenever
    \(\ienv(x_i) \in \I{A_i}^\tenv_{p}\;\)
    for every \(i\) (\(i = 1,\,2,\,\ldots,\,n\)).
\end{theorem}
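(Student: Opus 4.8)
The plan is to proceed by induction on the derivation of the typing judgment $\typ{\{\,x_1:A_1,\ldots,x_n:A_n\,\}}{M:B}$, with cases on the last typing rule applied. The induction hypothesis states that for every shorter derivation, the analogous semantic containment holds over \emph{every} \lA-frame. Throughout, I would fix an arbitrary \lA-frame $\pair{\W}{\acc}$, a world $p$, a hereditary type environment $\tenv$, and an individual environment $\ienv$ satisfying $\ienv(x_i) \in \I{A_i}^\tenv_p$ for each $i$, and aim to show $\VI{M}{\ienv} \in \I{B}^\tenv_p$. The routine cases are \r{\mbox{var}} (immediate from the hypothesis on $\ienv$), \r{\t} (immediate from Definition~\ref{rlz-def}, since $\I{\t}^\tenv_p = \V$), and \r{\psubtyp} (combine the induction hypothesis with Theorem~\ref{psubtyp-soundness}, after checking that the empty subtyping assumption is trivially satisfied by $\tenv$).

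For the $\Impl$-rules I would argue as follows. For \r{\Impl\mbox{E}}, suppose $M = \app{M_1}{M_2}$ with $\typ{\G_1}{M_1:A\Impl B}$ and $\typ{\G_2}{M_2:A}$; by the induction hypothesis $\VI{M_1}{\ienv} \in \I{A\Impl B}^\tenv_p$ and $\VI{M_2}{\ienv} \in \I{A}^\tenv_p$, so instantiating the definition of $\I{A\Impl B}^\tenv_p$ at $q=p$ (using $p \tacc p$) and applying clause~5 of the $\lambda$-algebra gives $\VI{\app{M_1}{M_2}}{\ienv} = \VI{M_1}{\ienv}\cdot\VI{M_2}{\ienv} \in \I{B}^\tenv_p$. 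For \r{\Impl\mbox{I}}, where $M = \lam{x}K$ and $B = A\Impl C$ from $\typ{\G\cup\{x:A\}}{K:C}$, I must verify that $\VI{\lam{x}K}{\ienv}$ lies in $\I{A\Impl C}^\tenv_p$; unfolding the definition, this requires that for every $q$ with $p \tacc q$ and every $v \in \I{A}^\tenv_q$, we have $\VI{\lam{x}K}{\ienv}\cdot v \in \I{C}^\tenv_q$. By clause~6 this product equals $\VI{K}{\ienv[v/x]}$. Here I would invoke Proposition~\ref{rlz-hereditary}: since $p \tacc q$ and each $\ienv(x_i) \in \I{A_i}^\tenv_p$, heredity lifts this to $\ienv(x_i) \in \I{A_i}^\tenv_q$, so the modified environment $\ienv[v/x]$ satisfies the hypotheses for the context $\G\cup\{x:A\}$ at world $q$; the induction hypothesis then delivers $\VI{K}{\ienv[v/x]} \in \I{C}^\tenv_q$.

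The main obstacle will be the \r{\mbox{shift}}-rule, whose soundness is exactly the point flagged in the text after Definition~\ref{typ-rules}. Here $\typ{\G}{M:A}$ is concluded from $\typ{\O\G}{M:\O A}$, but the induction hypothesis only gives information over \lA-frames, and the containment I want is at the \emph{same} world $p$ of the \emph{same} frame. The idea is to construct an extended \lA-frame $\pair{\W'}{\acc'}$ by adjoining, for the given world $p$, a fresh world $\hat p$ with $\hat p \acc' p$ and $\hat p \acc' q \iff p \acc q$ otherwise inheriting $\acc$; one checks that well-foundedness and local linearity are preserved. Extending $\tenv$ and $\ienv$ unchanged, the hypotheses $\ienv(x_i) \in \I{A_i}^\tenv_p = \I{\O A_i}^\tenv_{\hat p}$ hold at $\hat p$ by the definition of $\I{\O{-}}$, so the induction hypothesis applied at $\hat p$ yields $\VI{M}{\ienv} \in \I{\O A}^\tenv_{\hat p} = \I{A}^\tenv_p$, again by the $\O$-clause and the construction of $\hat p$. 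The delicate points to get right are that adding $\hat p$ below $p$ does not disturb the interpretation of any subexpression at $p$ or above (this needs that the interpretations depend only on worlds accessible from $p$, which is guaranteed by Proposition~\ref{rlz-hereditary} and the inductive structure of Definition~\ref{rlz-def}), and that the new frame genuinely remains an \lA-frame. Once this extension lemma is in place, every case closes and the induction is complete.
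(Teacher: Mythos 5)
Your proposal is correct and follows essentially the same route as the paper: induction on the derivation, with the \r{\mbox{var}}, \r{\t}, \r{\psubtyp}, \r{\Impl\mbox{I}} and \r{\Impl\mbox{E}} cases handled exactly as you describe, and the \r{\mbox{shift}} case discharged by adjoining a fresh world below \(p\) to obtain a new {\lA}-frame and applying the induction hypothesis there. The only cosmetic difference is that the paper's fresh world \(\star\) accesses \emph{only} \(p\) (with \(\tenv'(X)_\star = \tenv(X)_p\)), whereas yours also accesses everything \(p\) accesses; by heredity both choices give \(\I{\O A_i}^{\tenv'}_{\hat p} = \I{A_i}^{\tenv}_p\), preserve well-foundedness and local linearity, and leave the interpretation at worlds of the original frame untouched.
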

\begin{proof}
By induction on the derivation, and by cases on the last rule used in it.
Most cases are straightforward.
Use Theorem~\ref{psubtyp-soundness}
for the case of \(\r{{\rsubtyp}}\).
Let \(\G = \{\,x_1 : A_1,\,\ldots,\,x_n : A_n\,\}\), and
\(\typ{\G}{M:B}\) be a derivable judgment.
Suppose that \(\pair{\W}{\acc}\) is a {\lA}-frame,
and that \(\ienv(x_i) \in \I{A_i}^\tenv_{p}\) (\(i = 1,\,2,\,\ldots,n\)).

\Case{\r{\mbox{var}}.}
In this case,
\(M = x_j\) and \(B = A_j\) for some \(j\) (\(1 \le j \le n\)).
Therefore, by assumption, \(\VI{M}{\ienv} = \VI{x_j}{\ienv} = \ienv(x_j)
    \in \I{A_j}^\tenv_p = \I{B}^\tenv_p\).

\Case{\r{\mbox{shift}}.} 
In this case, the derivation ends with
\[
    \Ifr{\r{\mbox{shift}}.}
        {\typ{\{\,x_1:\O A_1,\,x_2:\O A_2,\,\ldots,\,x_n:\O A_n,\,\}}
	     {M : \O B}}
        {\typ{\{\,x_1:A_1,\,x_2:A_2,\,\ldots,\,x_n:A_n\,\}}
	     {M : B}}
\]
For the given \(\pair{\W}{\acc}\), \(\tenv\) and \(p\),
we construct another {\lA}-frame \(\pair{\W'}{\acc'}\)
and a hereditary type environment \(\tenv'\) for \(\pair{\W'}{\acc'}\)
by extending \(\pair{\W}{\acc}\) and \(\tenv\), respectively,
as follows.
\begin{eqnarray*}
\W' &=& \{\,\star\,\} \cup \W \qquad(\star \not\in \W) \\[4pt]
q \acc' r ~&\mbox{iff}& ~\Choice{%
		\mbox{\(q = \star\) and \(r = p\), or} \\
		\mbox{\(q\), \(r \in \W\) and \(q \acc r\)}
	    } \\[3pt]
    \tenv'(X)_q &=& ~\Choice{%
	\tenv(X)_p \qquad(q = \star) \\
	\tenv(X)_q \qquad(q \in \W)
    }
\end{eqnarray*}
where \(\star\) is a fresh world added to the original ones.
Obviously, \(\acc'\) is (conversely) well-founded, and
\(\tenv'\) hereditary.
Observe also that \(\acc'\) is locally linear; and hence,
\(\pair{\W'}{\acc'}\) constitutes a {\lA}-frame.
Then, let \(\Ip{A}^{\tenv'}\) be the interpretation of \(A\)
in the {\lA}-frame \(\pair{\W'}{\acc'}\)
under the type environment \(\tenv'\).
Note that \(\Ip{A}^{\tenv'}_q = \I{A}^\tenv_q\)
for every \(q \in \W\) and \(A\).
Since \(\ienv(x_i) \in \I{A_i}^\tenv_p\),
we get \(\ienv(x_i) \in \Ip{A_i}^{\tenv'}_q\) for every \(q \opacc' \star\)
by the definition of \(\acc'\).
Hence, \(\ienv(x_i) \in \Ip{\O A_i}^{\tenv'}_\star\).
By induction hypothesis,
we get \(\VI{M}{\ienv} \in \Ip{\O B}^{\tenv'}_\star\); and hence,
\(\VI{M}{\ienv} \in \Ip{B}^{\tenv'}_p = \I{B}^\tenv_p\)
because \(\star \acc' p\strut\).

\Case{\r{\t}.}
Obvious since \(\I{\t}^\tenv_p = \V\) for every \(p \in \W\)
by Definition~\ref{rlz-def}.

\Case{\r{\rsubtyp}.} For some \(B'\), the derivation ends with
\[
    \Ifr{\r{{\rsubtyp}}.}
        {\typ{\G}{M : B'}
	 & \subt{}{B' \psubtyp B}}
        {\typ{\G}{M : B}}
\]
We get \(\VI{M}{\ienv} \in\I{B}^\tenv_p\)
by Theorem~\ref{psubtyp-soundness} because
\(\VI{M}{\ienv} \in\I{B'}^\tenv_p\) by induction hypothesis.

\Case{\r{\Impl\mbox{I}}.}
The derivation ends with
\[
\ifr{\Impl\mbox{I}}
    {\typ{\G \cup \{\,y:B_1\,\}}{L:B_2}}
    {\typ{\G}{\lam{y}{L} : B_1 \Impl B_2}}
\]
for some \(y\), \(L\), \(B_1\) and \(B_2\) such that
\(M = \lam{y}{L}\) and \(B = B_1 \Impl B_2\).
Suppose that \(p \tacc r\) and \(v \in \I{B_1}^\tenv_r\).
By Proposition~\ref{rlz-another-def}, it suffices to show that
\(\VI{\lam{y}{L}}{\ienv} \cdot v \in \I{B_2}^\tenv_r\).
Since \(p \tacc r\), by Proposition~\ref{rlz-hereditary},
\(\ienv[v/y](x_i) = \ienv(x_i) \in \I{A_i}^\tenv_p \subseteq
     \I{A_i}^\tenv_r\)
for every \(i\) such that \(x_i \not= y\), and
\(\ienv[v/y](y) = v \in \I{B_1}^\tenv_p \subseteq \I{B_1}^\tenv_r\).
Hence, by induction hypothesis, \(\VI{L}{\ienv[v/y]} \in \I{B_2}^\tenv_r\).
Therefore,
\(\VI{\lam{y}{L}}{\ienv} \cdot v = \VI{L}{\ienv[v/y]} \in \I{B_2}^\tenv_r\)
by Definition~\ref{lambda-algebra}.

\Case{\r{\Impl\mbox{E}}.}
The derivation ends with
\[
    \ifr{\Impl\mbox{E}}
	{\typ{\G_1}{M_1:C \Impl B}
	 & \typ{\G_2}{M_2:C}}
        {\typ{\G_1\cup\G_2}{\app{M_1}{M_2} : B}}
\]
for some \(\G_1\), \(\G_2\), \(M_1\), \(M_2\) and \(C\)
such that \(\G = \G_1 \cup \G_2\) and \(M = \app{M_1}{M_2}\).
Since \(\G_1, \G_2 \subseteq \G\), by induction hypothesis,
\(\VI{M_1}{\ienv} \in \I{C \Impl B}^\tenv_p\) and
\(\VI{M_2}{\ienv} \in \I{C}^\tenv_p\).
Hence, \(\VI{\app{M_1}{M_2}}{\ienv}
    = \VI{M_1}{\ienv}\cdot\VI{M_2}{\ienv} \in \I{B}^\tenv_p\)
by Definition~\ref{lambda-algebra} and Proposition~\ref{rlz-another-def}.
\qed\CHECKED{2014/06/24, 07/21}
\end{proof}

Theorem~\ref{soundness-theorem} assures us that the modularity of programs
is preserved even if we regard type expressions, or specifications,
as asserting the convergence of programs.
For example, if a type \(B\) comprises values that are normalizable to
canonical ones,
and we have a program \(M\) of a type \(A \Impl B\), then
we can expect that \(M\) terminates and returns such a canonical value
when we provide a value of \(A\).
In Section~\ref{conv-sec},
we will show such convergence properties of well-typed \(\lambda\)-terms
by a discussion on the soundness with respect to
a term model of the untyped \(\lambda\)-calculus.

In the rest of the present section,
we will show other important properties of the typing system,
in particular, the subject reduction property of the system, which is also
necessary to derive a certain convergence property of
well-typed \(\lambda\)-terms (cf. Theorem~\ref{typ-maximal}).
First, we show that the following two rules are derivable in {\lA}.
This makes the proof of the subject reduction much easier.
\[
\ifr{\mbox{nec}}
    {\typ{\G_1}{M : A}}
    {\typ{\O \G_1\cup\G_2}{M : \O A}}
\mskip50mu
\ifr{\mbox{subst}}
    {\typ{\G_1\cup\{\,x:A\,\}}{M : B} & \typ{\G_2}{N : A}}
    {\typ{\G_1\cup\G_2}{M[N/x] : B}}
\]

\begin{proposition}\label{lA-nec-subst-redundant}
The two rules \r{\mbox{nec}} and \r{\mbox{subst}} are derivable in {\lA}.
That is, the following two hold.
\begin{Enumerate}\pushlabel
\item \itemlabel{lA-nec-redundant}
    Let \(\O \G_1 \cup \G_2\) be a well-formed typing context.
    If\/ \(\typ{\G_1}{M:A}\) is derivable,
    then so is\/ \(\typ{\O\G_1 \cup \G_2}{M:\O A}\).
\item \itemlabel{lA-subst-redundant}
    Let \(\G_1 \cup \G_2\) be a well-formed typing context.
    If\/ \(\typ{\G_1 \cup \{\,x:A\,\}}{M:B}\) and \(\typ{\G_2}{N:A}\)
    are derivable, then so is\/ \(\typ{\G_1 \cup \G_2}{M[N/x]:B}\).
\end{Enumerate}
\end{proposition}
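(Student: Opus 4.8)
The plan is to prove both parts by induction on the structure (more precisely, on the size, in the sense of Definition~\ref{typ-size-def}) of the given derivation, proceeding by cases on the last typing rule applied. For Item~\itemref{lA-nec-redundant}, suppose \(\typ{\G_1}{M:A}\) is derivable and \(\O\G_1 \cup \G_2\) is well-formed. The key observation is that the \r{\mbox{shift}}-rule runs ``backwards'' from the direction we need: it lets us pass from \(\typ{\O\G}{M:\O A}\) to \(\typ{\G}{M:A}\), whereas here we want to add boxes. So the natural strategy is to push the \(\O\) through the derivation by applying \r{\mbox{nec}}-style reasoning at the leaves and invoking the induction hypothesis at the internal nodes. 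In the \r{\mbox{var}} case, \(M = x\) with \(\{x:A\} \subseteq \G_1\), so \(\{x:\O A\} \subseteq \O\G_1 \subseteq \O\G_1 \cup \G_2\) and we apply \r{\mbox{var}} directly. In the \r{\t} case, \(A = \t\) and we want \(\typ{\O\G_1 \cup \G_2}{M:\O\t}\), which follows from \r{\t} together with \r{\rsubtyp} since \(\O\t \geqtyp \t\) (Proposition~\ref{geqtyp-t1}). The \r{\Impl\mbox{E}}, \r{\Impl\mbox{I}}, and \r{\rsubtyp} cases follow routinely from the induction hypothesis; for \r{\rsubtyp} we use that \(A \psubtyp B\) implies \(\O A \psubtyp \O B\) via \r{{\rsubtyp}\mbox{-}{\O}}. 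The genuinely delicate case is \r{\mbox{shift}}, where the derivation ends with \(\typ{\O\G_1}{M:\O A}\) above \(\typ{\G_1}{M:A}\); here I would apply the induction hypothesis to the antecedent to obtain \(\typ{\O\O\G_1 \cup \O\G_2}{M:\O\O A}\) and then apply \r{\mbox{shift}} once to descend to \(\typ{\O\G_1 \cup \G_2}{M:\O A}\), using Proposition~\ref{typ-basic} (weakening/renaming) to align the contexts as needed.

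For Item~\itemref{lA-subst-redundant}, the approach is the standard substitution-lemma induction, but carried out on the derivation of \(\typ{\G_1 \cup \{x:A\}}{M:B}\), with \(\typ{\G_2}{N:A}\) held fixed. In the \r{\mbox{var}} case there are two subcases: if \(M = x\) then \(B = A\) and \(M[N/x] = N\), so the conclusion is exactly \(\typ{\G_2}{N:A}\), weakened to \(\typ{\G_1 \cup \G_2}{N:A}\) by Proposition~\ref{typ-weakening}; if \(M = y \neq x\) then \(M[N/x] = y\) and we reapply \r{\mbox{var}}. The \r{\t}, \r{\rsubtyp}, and \r{\Impl\mbox{I}} cases are straightforward, using the induction hypothesis and, in the \r{\Impl\mbox{I}} case, renaming the bound variable (Proposition~\ref{typ-var-rename}) to keep it distinct from \(x\) and from \(\FV{N}\) so that substitution commutes with the abstraction. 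The \r{\Impl\mbox{E}} case splits the term as \(M = \app{M_1}{M_2}\) and, after using weakening (Proposition~\ref{typ-weakening}) to arrange that both antecedents use the context \(\G_1 \cup \{x:A\}\), applies the induction hypothesis to each of \(M_1\) and \(M_2\) before recombining with \r{\Impl\mbox{E}}, noting \((\app{M_1}{M_2})[N/x] = \app{M_1[N/x]}{M_2[N/x]}\).

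The main obstacle, and the reason the two items are best proved together (or at least with Item~\itemref{lA-nec-redundant} available first), is the \r{\mbox{shift}} case in Item~\itemref{lA-subst-redundant}. There the derivation ends with \(\typ{\O\G_1 \cup \{x:\O A\}}{M:\O B}\) above \(\typ{\G_1 \cup \{x:A\}}{M:B}\), so the variable being substituted now carries type \(\O A\), not \(A\). To apply the induction hypothesis I need a derivation of \(\typ{\O\G_2}{N:\O A}\) to substitute for \(x\); this is precisely where Item~\itemref{lA-nec-redundant} (necessitation) supplies what is missing, turning \(\typ{\G_2}{N:A}\) into \(\typ{\O\G_2}{N:\O A}\). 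The induction hypothesis then yields \(\typ{\O\G_1 \cup \O\G_2}{M[N/x]:\O B}\), and a final application of \r{\mbox{shift}} descends to \(\typ{\G_1 \cup \G_2}{M[N/x]:B}\), after using Proposition~\ref{typ-basic} to reconcile \(\O\G_1 \cup \O\G_2\) with \(\O(\G_1 \cup \G_2)\). Thus I would prove Item~\itemref{lA-nec-redundant} first, entirely self-containedly, and then invoke it inside the \r{\mbox{shift}} case of Item~\itemref{lA-subst-redundant}; the well-formedness side conditions on the combined contexts must be checked throughout, but these are handled uniformly by the weakening and renaming results of Proposition~\ref{typ-basic}.
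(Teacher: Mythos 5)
Your overall strategy coincides with the paper's: both items are proved by induction on the derivation, Item~\itemref{lA-nec-redundant} is established first and then invoked in the \r{\mbox{shift}} case of Item~\itemref{lA-subst-redundant}, and the \r{\mbox{shift}} case of Item~\itemref{lA-nec-redundant} is handled exactly as in the paper (apply the induction hypothesis to obtain a doubly-boxed judgment, then shift down once). The one place where you are too quick is the \r{\Impl\mbox{I}} case of Item~\itemref{lA-nec-redundant}, which you dismiss as routine. There the induction hypothesis gives \(\typ{\O\G_1 \cup \{\,x:\O B\,\} \cup \G_2}{K:\O C}\), and applying \r{\Impl\mbox{I}} yields the type \(\O B \Impl \O C\), whereas the goal is \(\O(B \Impl C)\). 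Passing from the former to the latter requires \(\O B \Impl \O C \psubtyp \O(B \Impl C)\), i.e.\ the direction of \r{{\peqtyp}\mbox{-{\bf K}/{\bf L}}} that goes beyond the normal modal axiom {\bf K}; the paper stresses that the proof fails at precisely this point in any system lacking that rule, even one augmented with the ordinary subtyping \(\O(B \Impl C) \subtyp \O B \Impl \O C\). Dually, your \r{\Impl\mbox{E}} case silently uses \(\O(B \Impl A) \psubtyp \O B \Impl \O A\). Both conversions are available in {\lA}, so your argument can be completed, but these are the load-bearing steps of Item~\itemref{lA-nec-redundant} and must be made explicit. (A minor point: in the \r{\mbox{shift}} case of Item~\itemref{lA-subst-redundant} no reconciliation between \(\O\G_1 \cup \O\G_2\) and \(\O(\G_1 \cup \G_2)\) is needed, since these are the same typing context by definition.)
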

\begin{proof}
Each proof proceeds by induction on the size of the derivation for \(M\), and
each induction step is shown by cases on the last rule applied in it.
For Item~\itemref{lA-nec-redundant},
suppose that \(\typ{\G_1}{M:A}\) is derivable.

\Case{\r{\mbox{var}}.}
In this case, \(M\) is an individual variable
and \(\{\,M:A\,\} \subseteq \G_1\).
Therefore, \(\typ{\O\G_1 \cup \G_2}{M:\O A}\) is derivable
by \(\r{\mbox{var}}\).

\Case{\r{\mbox{shift}}.}
The derivation ends with
\[
\Ifr{\r{\mbox{shift}}.}
    {\typ{\O \G_1}{M : \O A}}
    {\typ{\G_1}{M : A}}
\]
Since \(\typ{\O\O \G_1 \cup \O \G_2}{M : \O\O A}\)
is derivable by induction hypothesis,
so is \(\typ{\O \G_1 \cup \G_2}{M : \O A}\) by \r{\mbox{shift}}.

\Case{\r{\t}.}
In this case, \(A = \t\).
Hence,
\(\typ{\O\G_1 \cup \G_2}{M:\O A}\) is derivable by \r{\t} and \r{\rsubtyp}
since \(\O A = \O \t \peqtyp \t\) by Proposition~\ref{geqtyp-t1}.

\Case{\r{\rsubtyp}.}
Straightforward by induction hypothesis since \(A' \psubtyp A\) implies
\(\O A'\psubtyp \O A\) for every \(A'\) by \r{{\rsubtyp}\mbox{-}{\O}}.

\Case{\r{\Impl\mbox{I}}.}
The derivation ends with
\[
\ifr{\Impl\mbox{I}}
    {\typ{\G_1 \cup \{\,x:B\,\}}{K : C}}
    {\typ{\G_1}{\lam{x}K : B \Impl C}}
\]
for some \(B\), \(C\), \(x\) and \(K\) such that \(A = B \Impl C\)
and \(M = \lam{x}K\).
By induction hypothesis,
\(\typ{\O \G_1 \cup \{\,x:\O B\,\} \cup \G_2}{K : \O C}\) is derivable.
Hence, so is \(\typ{\O \G_1 \cup \G_2}{M:\O A}\)
by \r{\Impl\mbox{I}} and \r{\rsubtyp}
since \(\O B \Impl \O C \peqtyp \O(B \Impl C)\).
Note that the proof would fail at this point
if we did not have the \(\r{{\peqtyp}\mbox{-{\bf K}/{\bf L}}}\)-rule,
and even if we added an extra subtyping rule
\(\O (B \Impl C) \subtyp \O B \Impl \O C\) corresponding
to the axiom schema {\bf K} of normal modal logic.

\Case{\r{\Impl\mbox{E}}.}
In this case,
the derivation ends with
\[
\ifr{\Impl\mbox{E}}
    {\typ{\G_{11}}{M_1 : B \Impl A}
     & \typ{\G_{12}}{M_2 : B}}
    {\typ{\G_{11} \cup \G_{12}}{\app{M_1}{M_2} : A}}
\]
for some \(\G_{11}\), \(\G_{12}\), \(B\), \(M_1\) and \(M_2\) such that
\(\G_1 = \G_{11} \cup \G_{12}\) and \(M = \app{M_1}{M_2}\).
By induction hypothesis,
\(\typ{\O \G_{11} \cup \G_2}{M_1 : \O(B \Impl A)}\) and
\(\typ{\O \G_{12} \cup \G_2}{M_2 : \O B}\) are derivable.
Hence, so is \(\typ{\O\G_1 \cup \G_2}{M:\O A}\)
by \r{\rsubtyp} and \r{\Impl\mbox{E}}
since \(\O(B \Impl A) \psubtyp \O B \Impl \O A\).
This completes the proof of Item~\itemref{lA-nec-redundant}.
\ifdetail

For Item~\itemref{lA-subst-redundant},
suppose that \(\typ{\G_1 \cup \{\,x:A\,\}}{M:B}\) and \(\typ{\G_2}{N:A}\)
are derivable.
By induction on the size of the derivation of
\(\typ{\G_1 \cup \{\,x:A\,\}}{M:B}\),
and by cases on the last rule applied in it.
Let \(\G = \G_1 \cup \G_2\).

\Case{\r{\mbox{var}}.}
In this case, \(M\) is an individual variable.
If \(M = x\), then straightforward by Proposition~\ref{typ-weakening}
since \(A = B\) and \(M[N/x] = N\).
Otherwise, \(M[N/x] = M\) and \(\{\,M:B\,\} \subseteq \G_1\).
Hence, \(\typ{\G}{M[N/x]:B}\) is derivable by \(\r{\mbox{var}}\).

\Case{\r{\mbox{shift}}.}	
The 
\else 

The proof for Item~\itemref{lA-subst-redundant} is
almost straightforward.
The only non-trivial case is when the last rule is \r{\mbox{shift}}.
In this case, the
\fi 
derivation ends with
\[
\Ifr{\r{\mbox{shift}}.}
    {\typ{\O \G_1 \cup \{\,x:\O A\,\}}{M : \O B}}
    {\typ{\G_1 \cup \{\,x:A\,\}}{M : B}}
\]
Since \(\typ{\G_2}{N:A}\) is derivable, so is \(\typ{\O \G_2}{N:\O A}\)
by Item~\itemref{lA-nec-redundant} of this proposition.
Therefore,
\(\typ{\O \G}{M[N/x] : \O B}\) is also derivable by induction hypothesis;
and hence, so is
\(\typ{\G}{M[N/x] : B}\) by \(\r{\mbox{shift}}\).
Note that the \r{\mbox{nec}}-rule,
or Item~\itemref{lA-nec-redundant} of this proposition, is crucial
to this case.
\ifdetail

\Case{\r{\t}.}
Trivial since \(B = \t\).

\Case{\r{\rsubtyp}.}
%
%
The derivation ends with
\[
\ifr{\rsubtyp}
    {\typ{\G_1 \cup \{\,x:A\,\}}{M : B'} & B' \psubtyp B}
    {\typ{\G_1 \cup \{\,x:A\,\}}{M : B}}
\]
for some \(B'\).
Since \(\typ{\G_2}{N:A}\) is derivable,
so is \(\typ{\G}{M[N/x] : B'}\) by induction hypothesis;
and hence, so is \(\typ{\G}{M[N/x] : B}\) by \(\r{\rsubtyp}\).

\Case{\r{\Impl\mbox{I}}.}
The derivation ends with
\[
\ifr{\Impl\mbox{I}}
    {\typ{\G_1 \cup \{\,x:A\,\}\cup\{\,y:C\,\}}{K : D}}
    {\typ{\G_1 \cup \{\,x:A\,\}}{\lam{y}K : C \Impl D}}
\]
for some \(C\), \(D\), \(y\) and \(K\) such that \(B = C \Impl D\)
and \(M = \lam{y}K\).
By Proposition~\ref{typ-var-rename},
we can assume that \(y \not\in \FV{N}\) without loss of generality.
By induction hypothesis,
\(\typ{\G \cup \{\,y:C\,\}}{K[N/x] : D}\) is derivable; and hence,
so is \(\typ{\G}{\lam{y}{(K[N/x])} : C \Impl D}\)
by applying \(\r{\Impl\mbox{I}}\).
Note that \(M[N/x] = (\lam{y}K)[N/x] = \lam{y}{(K[N/x])}\) since
\(y \not\in \FV{N}\) by assumption.

\Case{\r{\Impl\mbox{E}}.}
The derivation ends with
\[
\ifr{\Impl\mbox{E}}
    {\typ{\G_{11}}{M_1 : C \Impl B}
     & \typ{\G_{12}}{M_2 : C}}
    {\typ{\G_{11} \cup \G_{12}}{\app{M_1}{M_2} : B}}
\]
for some \(\G_{11}\), \(\G_{12}\), \(C\),
\(M_1\) and \(M_2\) such that
\(\G_1 \cup \{\,x:A\,\} = \G_{11} \cup \G_{12}\) and \(M = \app{M_1}{M_2}\).
By Proposition~\ref{typ-weakening}, we can assume that
\(\G_{11} = \G_{12} = \G_1 \cup \{\,x:A\,\}\) without loss of generality.
Therefore, by induction hypothesis,
\(\typ{\G}{M_1[N/x] : C \Impl B}\)
and \(\typ{\G}{M_2[N/x] : C}\) are derivable; and hence, so is
\(\typ{\G}{\app{(M_1[N/x])}{(M_2[N/x])} : B}\)
by applying \(\r{\Impl\mbox{E}}\).
Note that \(M[N/x] = (\app{M_1}{M_2})[N/x] = \app{(M_1[N/x])}{(M_2[N/x])}\).
\fi 
\qed\CHECKED{2014/06/24, 07/19}
\end{proof}

Note that Proposition~\ref{lA-nec-subst-redundant}
depends on the \r{{\peqtyp}\mbox{-{\bf K}/{\bf L}}}-rule.
Without this equality, even with the normal subtyping rule
\(\O (A \Impl B) \subtyp \O A \Impl \O B\),
and even if we dropped the \r{\mbox{shift}}-rule from {\lA},
the \r{\mbox{nec}}-rule would not be derivable in {\lA}.
Furthermore, without the \r{{\peqtyp}\mbox{-{\bf K}/{\bf L}}}-rule,
even if we added the \r{\mbox{nec}}-rule to {\lA},
the \r{\mbox{subst}}-rule would remain underivable, because
we could not handle the case of the \r{\mbox{nec}}-rule in
the proof of Proposition~\ref{lA-subst-redundant}.
For example, we have the following two derivations in such a variant of
{\lA}.
\[
    \ifr{\mbox{nec}}
	{\ifr{\Impl\,\mbox{I}}
	     {\ifr{\Impl\,\mbox{E}}
		  {\typ{g:X\Impl Z}{g:X\Impl Z}
		      & \typ{x:X}{x:X}}
		  {\typ{g:X \Impl Z,\:x:X}{\app{g}{x}:Z}}}
	     {\typ{x:X}{\lam{g}\app{g}{x}:(X \Impl Z) \Impl Z}}}
	{\typ{x:\O X}{\lam{g}\app{g}{x}:\O((X \Impl Z) \Impl Z)}}
\mskip15mu
    \ifr{\Impl\,\mbox{E}}
        {\typ{f:Y\Impl \O X}{f:Y\Impl \O X}
	  & \typ{y:Y}{y:Y}}
	{\typ{f:Y \Impl \O X,\:y:Y}{\app{f}{y}:\O X}}
\]
However, we cannot derive the following judgment
without the \r{\mbox{subst}}-rule\footnote{%
    In \cite{nakano-tacs01}, although the author conjectured that
    two typing systems S-\(\lambda{\bullet}\mu\) and F-\(\lambda{\bullet}\mu\)
    described in the paper enjoy some basic properties (Proposition~2),
    such as the substitution lemma and the subject reduction property,
    it turned to be wrong.
    However, fortunately, this does not affect other results, including
    the main ones, presented in the paper.}.
\[
    \typ{f:Y \Impl \O X,\:y:Y}
	{\lam{g}\app{g}{(\app{f}{y})} : \O((X \Impl Z) \Impl Z)}
\]

Our final task in this section is to prove the subject reduction property of
{\lA}.
To this end, we prepare the following three lemmas.

\begin{lemma}\label{typ-subtyp}
Suppose that\/ \(\G' \psubtyp \G\).
If\/ \(\typ{\G}{M:B}\) is derivable, then so is \(\typ{\G'}{M:B}\).
\end{lemma}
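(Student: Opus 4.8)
Suppose $\G' \psubtyp \G$. If $\typ{\G}{M:B}$ is derivable, then so is $\typ{\G'}{M:B}$.

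The plan is to prove this by induction on the derivation of $\typ{\G}{M:B}$, with a case analysis on the last typing rule applied. The key observation I would exploit is that $\G' \psubtyp \G$ means $\Dom(\G') = \Dom(\G)$ and $\G'(x) \psubtyp \G(x)$ for every $x$ in the common domain, so each type assigned by $\G'$ is a subtype of the corresponding type in $\G$. The idea in each case is to push the weakened (subtype) context through the same rule and repair the result using \r{\rsubtyp} where necessary.

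First I would dispatch the easy cases. For \r{\t} the conclusion $B = \t$ is derivable from any context by \r{\t} directly. For \r{\rsubtyp}, the antecedent $\typ{\G}{M:B'}$ with $B' \psubtyp B$ yields $\typ{\G'}{M:B'}$ by induction hypothesis, and re-applying \r{\rsubtyp} gives $\typ{\G'}{M:B}$. The cases \r{\Impl\mbox{E}} and \r{\Impl\mbox{I}} are routine: for application, I would first use Proposition~\ref{typ-weakening} to arrange that both antecedents use the full context, apply the induction hypothesis to each premise, and re-apply \r{\Impl\mbox{E}}; for abstraction, the premise context $\G \cup \{\,x:A\,\}$ is weakened to $\G' \cup \{\,x:A\,\}$, which still satisfies $\G' \cup \{\,x:A\,\} \psubtyp \G \cup \{\,x:A\,\}$ (the new binding is reflexive via \r{{\rsubtyp}\mbox{-reflex}}), so the induction hypothesis and \r{\Impl\mbox{I}} close the case.

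The two cases that require genuine care are \r{\mbox{var}} and \r{\mbox{shift}}, and I expect \r{\mbox{shift}} to be the main obstacle. For \r{\mbox{var}}, we have $M = x$ with $\G(x) = B$, and $\G'(x) \psubtyp \G(x) = B$; so I would derive $\typ{\G'}{x : \G'(x)}$ by \r{\mbox{var}} and then apply \r{\rsubtyp} with $\G'(x) \psubtyp B$ to obtain $\typ{\G'}{x:B}$. For \r{\mbox{shift}}, the derivation ends with premise $\typ{\O\G}{M:\O B}$ and conclusion $\typ{\G}{M:B}$. To invoke the induction hypothesis I need $\O\G' \psubtyp \O\G$, i.e. $\O\G'(x) \psubtyp \O\G(x)$ for each $x$; this follows from $\G'(x) \psubtyp \G(x)$ by the rule \r{{\rsubtyp}\mbox{-}{\O}}, which is exactly the monotonicity of the $\O$-modality with respect to subtyping. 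Granting $\O\G' \psubtyp \O\G$, the induction hypothesis gives $\typ{\O\G'}{M:\O B}$, and one application of \r{\mbox{shift}} yields $\typ{\G'}{M:B}$, completing the induction.
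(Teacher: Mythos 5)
Your proof is correct, but it follows a genuinely different route from the paper's. The paper reduces the claim to changing a single binding \(x:A\) to \(x:A'\) with \(A' \psubtyp A\), and then dispatches it in three lines without any case analysis: since \(\typ{\{\,y:A'\,\}}{y:A}\) is derivable by \r{\mbox{var}} and \r{\rsubtyp} for a fresh \(y\), the already-established substitution property (Proposition~\ref{lA-subst-redundant}, i.e., the derivable \r{\mbox{subst}}-rule) gives \(\typ{\G\cup\{\,y:A'\,\}}{M[y/x]:B}\), and renaming (Proposition~\ref{typ-var-rename}) finishes. Your argument instead performs a direct induction on the typing derivation, repairing each rule with \r{\rsubtyp}; you correctly identify that \r{\mbox{var}} and \r{\mbox{shift}} are the cases carrying content, and that \r{\mbox{shift}} needs \(\O\G' \psubtyp \O\G\), which \r{{\rsubtyp}\mbox{-}{\O}} supplies. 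The trade-off: your induction is self-contained modulo weakening (and one should be explicit that the induction is on the skeleton/size of the derivation, since Proposition~\ref{typ-weakening} preserves skeletons and is invoked before the induction hypothesis in the \r{\Impl\mbox{E}} case), whereas the paper's proof is much shorter but leans on the substitution lemma, whose own proof depends on the derivability of \r{\mbox{nec}} and hence on the \r{{\peqtyp}\mbox{-{\bf K}/{\bf L}}}-rule. Both establish the lemma.
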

\begin{proof}
Suppose that \(\typ{\G\cup\{\,x:A\,\}}{M:B}\) is derivable,
\(A' \psubtyp A\), and \(x \not\in \Dom(\G)\).
It suffices to show that \(\typ{\G\cup\{\,x:A'\,\}}{M:B}\) is also derivable.
Let \(y\) be a fresh individual variable.
Since \(\typ{\{\,y:A'\,\}}{y:A}\) is derivable by \r{\mbox{var}}
and \r{{\rsubtyp}}, so is \(\typ{\G\cup\{\,y:A'\,\}}{M[y/x]:B}\)
by Proposition~\ref{lA-subst-redundant}.
Therefore, \(\typ{\G\cup\{\,x:A'\,\}}{M:B}\) is also derivable
by Proposition~\ref{typ-var-rename}.
\qed\CHECKED{2014/06/24, 07/19}
\end{proof}
By this proposition, we realize that the following extension
of the \(\r{\rsubtyp}\) rule does not affect the derivability of
typing judgments.\hskip-4pt\footnote{%
The typing system presented in \cite{nakano-tacs01} was
formulated with this variant.}
\[
\ifr{\rsubtyp}
    {\typ{\G}{M : A} & \G' \psubtyp \G & A \psubtyp A'}
    {\typ{\G'}{M : A'}}
\]

\begin{lemma}\label{subtyp-O-Impl-lemma}
Suppose that \(A \Impl B \psubtyp \O^n(C \Impl D) \npeqtyp \t\).
Then\/ \(\O^n C \psubtyp \O^k A\) and \(\O^k B \psubtyp \O^n D\)
for some \(k\).
\end{lemma}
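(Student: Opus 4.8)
The plan is to reduce the statement directly to Proposition~\ref{subtyp-O-Impl} by computing the two relevant canonical forms. Reading the hypothesis as the two conditions that $\subt{\{\}}{A \Impl B \psubtyp \O^n(C \Impl D)}$ is derivable and that $\O^n(C \Impl D) \npeqtyp \t$, I would first record that the source cannot be a $\top$-variant: by the contrapositive of Proposition~\ref{subtyp-gt-t}, the target being a non-$\top$-variant forces $A \Impl B \npeqtyp \t$, and hence $\Canonp{(A \Impl B)} = A \Impl B$ by Definition~\ref{canon-def}. Thus the source is already in canonical form as a genuine function type.

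Next I would compute $\Canonp{(\O^n(C \Impl D))}$. Since $\O^n(C \Impl D) \npeqtyp \t$, repeated use of Proposition~\ref{O-tvariant} (equivalently, Proposition~\ref{geqtyp-O-t}) gives $C \Impl D \npeqtyp \t$, so that $\Canonp{(C \Impl D)} = C \Impl D$. A short induction on $n$, using the clause of Definition~\ref{canon-def} that rewrites $\Canonp{(\O\,\cdot)}$ of a function type $B' \Impl C'$ into $\O B' \Impl \O C'$, then yields $\Canonp{(\O^n(C \Impl D))} = \O^n C \Impl \O^n D$; at each step the relevant expression $\O^m(C \Impl D)$ is a non-$\top$-variant, so the correct clause of the definition applies.

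With both canonical forms in hand, I would invoke Item~1 of Proposition~\ref{subtyp-O-Impl} on the judgment $\subt{\{\}}{A \Impl B \psubtyp \O^n(C \Impl D)}$: its side condition $\O^n(C \Impl D) \npeqtyp \t$ holds, and its hypothesis $\Canonp{(A \Impl B)} = A \Impl B$ identifies its ``$C \Impl D$'' with our $A \Impl B$. This produces some $k$, $E$, $F$ with $\Canonp{(\O^n(C \Impl D))} = E \Impl F$ and derivable judgments $\subt{\{\}}{E \psubtyp \O^k A}$ and $\subt{\{\}}{\O^k B \psubtyp \O^n D}$. Matching $E \Impl F$ against the canonical form computed in the previous step gives $E = \O^n C$ and $F = \O^n D$, so these two judgments are exactly $\O^n C \psubtyp \O^k A$ and $\O^k B \psubtyp \O^n D$, as required.

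Since the entire technical content is carried by Proposition~\ref{subtyp-O-Impl}, there is no genuine obstacle; the only care needed is the canonical-form bookkeeping — confirming that the source is already canonical and that the modal prefix distributes over $\Impl$ under $\Canonp{\cdot}$ in precisely the $\O^n C \Impl \O^n D$ shape. As an alternative I note one could instead apply Item~2 of Proposition~\ref{subtyp-O-Impl}, starting from $\Canonp{(\O^n(C \Impl D))} = \O^n C \Impl \O^n D$; its conclusion $\Canonp{(A \Impl B)} = C' \Impl D'$ then forces $A \Impl B$ to be a non-$\top$-variant by itself, dispensing with the separate appeal to Proposition~\ref{subtyp-gt-t}.
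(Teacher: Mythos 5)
Your proposal is correct and follows essentially the same route as the paper: establish via Proposition~\ref{subtyp-gt-t} (and Proposition~\ref{O-tvariant}) that neither side is a {\tvariant}, compute $\Canonp{(A \Impl B)} = A \Impl B$ and $\Canonp{(\O^n(C \Impl D))} = \O^n C \Impl \O^n D$, and then invoke Proposition~\ref{subtyp-O-Impl}. The only cosmetic difference is that the paper passes back from canonical forms to $A$, $B$, $C$, $D$ via Proposition~\ref{geqtyp-canon}, whereas you observe that the canonical forms are already literally the desired expressions; both are fine.
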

\begin{proof}
Note that neither \(A \Impl B\) nor \(C \Impl D\) is a {\tvariant}
by Propositions~\ref{O-tvariant}, \ref{subtyp-gt-t}
and Theorem~\ref{geqtyp-t-tvariant}.
Hence,
\(\Canonp{(A \Impl B)} = \Canonp{A} \Impl \Canonp{B}\)
and \(\Canonp{(\O^n (C \Impl D))} = \O^n \Canonp{C} \Impl \O^n \Canonp{D}\).
Therefore, \(\O^n \Canonp{C} \psubtyp \O^{k} \Canonp{A}\)
and \(\O^{k} \Canonp{B} \psubtyp \O^n \Canonp{D}\)
for some \(k\) by Proposition~\ref{subtyp-O-Impl}, that is,
\(\O^n C \psubtyp \O^{k} A\) and \(\O^{k} B \psubtyp \O^n D\)
by Proposition~\ref{geqtyp-canon}.
\qed\CHECKED{2014/06/24, 07/19}
\end{proof}

%
%
\begin{lemma}[Anti-abstraction lemma]\label{lA-abst-lemma}
Let\/ \(\typ{\G}{\lam{x}{M}: A}\) be a derivable judgment, and
suppose that \(x \not\in \Dom(\G)\).
Then, \(\typ{\O^n\G \cup \{\,x:B\,\}}{M:C}\) is also derivable
for some \(n\), \(B\) and \(C\) such that
\(B \Impl C \psubtyp \O^n A\).
\end{lemma}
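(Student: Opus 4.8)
The plan is to proceed by induction on the derivation of $\typ{\G}{\lam{x}M:A}$, analysing the last rule applied. Since the subject $\lam{x}M$ is a $\lambda$-abstraction, neither $\r{\mbox{var}}$ nor $\r{\Impl\mbox{E}}$ can conclude the derivation, so exactly four cases survive: $\r{\Impl\,\mbox{I}}$, $\r{\t}$, $\r{\rsubtyp}$ and $\r{\mbox{shift}}$. Throughout, the hypothesis $x \notin \Dom(\G)$ ensures that every context $\O^n\G \cup \{\,x:B\,\}$ I construct along the way is a well-formed typing context.

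The base case is $\r{\Impl\,\mbox{I}}$, where $A = A_1 \Impl A_2$ and the antecedent is precisely $\typ{\G\cup\{\,x:A_1\,\}}{M:A_2}$. Choosing $n=0$, $B = A_1$ and $C = A_2$ then gives the desired derivation directly, while $B \Impl C = A = \O^0 A \psubtyp \O^0 A$ holds by $\r{{\rsubtyp}\mbox{-reflex}}$. The case $\r{\t}$ is equally immediate: here $A = \t$, so taking $n = 0$ and $B = C = \t$, the judgment $\typ{\G\cup\{\,x:\t\,\}}{M:\t}$ is derivable by $\r{\t}$, and $\t \Impl \t \psubtyp \t = \O^0 A$ follows from $\r{{\rsubtyp}\mbox{-}\t}$.

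For $\r{\rsubtyp}$, the antecedent has the form $\typ{\G}{\lam{x}M:A'}$ with $A' \psubtyp A$. Applying the induction hypothesis to this antecedent produces $n$, $B$, $C$ with $\typ{\O^n\G \cup \{\,x:B\,\}}{M:C}$ derivable and $B \Impl C \psubtyp \O^n A'$. Since $A' \psubtyp A$, an $n$-fold application of $\r{{\rsubtyp}\mbox{-}{\O}}$ yields $\O^n A' \psubtyp \O^n A$, and then $\r{{\rsubtyp}\mbox{-trans}}$ gives $B \Impl C \psubtyp \O^n A$; so the same $n$, $B$, $C$ serve. The only mildly delicate case is $\r{\mbox{shift}}$, whose antecedent is $\typ{\O\G}{\lam{x}M:\O A}$. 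Applying the induction hypothesis there (legitimate because $x \notin \Dom(\O\G) = \Dom(\G)$) yields $m$, $B$, $C$ with $\typ{\O^m(\O\G) \cup \{\,x:B\,\}}{M:C}$ derivable and $B \Impl C \psubtyp \O^m(\O A)$. Observing that $\O^m(\O\G) = \O^{m+1}\G$ and $\O^m(\O A) = \O^{m+1}A$ as syntactic objects, setting $n = m+1$ closes this case.

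I expect no real obstacle. The two points demanding attention are the bookkeeping of the $\O$-powers in the $\r{\mbox{shift}}$ case, where the witness $n$ strictly increases along the induction, and the observation that the shape of $\lam{x}M$ rules out $\r{\mbox{var}}$ and $\r{\Impl\mbox{E}}$. All the subtyping manipulation required is confined to the monotonicity of $\O$ under $\psubtyp$ (i.e.\ $\r{{\rsubtyp}\mbox{-}{\O}}$) together with transitivity and reflexivity, so the argument is essentially routine.
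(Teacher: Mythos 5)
Your proof is correct and follows essentially the same route as the paper's: induction on the derivation, ruling out \r{\mbox{var}} and \r{{\Impl}\mbox{E}} by the shape of \(\lam{x}{M}\), with the \r{{\Impl}\mbox{I}} case as the base, subtyping transitivity plus monotonicity of \(\O\) for \r{\rsubtyp}, and incrementing \(n\) in the \r{\mbox{shift}} case. The only (cosmetic) difference is that the paper dispatches the case \(A \peqtyp \t\) up front and thereby renders the \r{\t} rule impossible inside the induction, whereas you handle the \r{\t} case directly with \(B = C = \t\) and \r{{\rsubtyp}\mbox{-}\t}; both are fine.
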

\begin{proof}
If \(A \peqtyp \t\), then \(\typ{\G \cup \{\,x:\t\,\}}{M:\t}\) is
derivable by \r{\t}.
Therefore, we assume that \(A \npeqtyp \t\).
The proof proceeds by induction on the derivation,
and by cases on the last rule applied in it.
Note that neither \r{\mbox{var}} nor \r{\Impl\mbox{E}} is possible
because of the form of the \(\lambda\)-term \(\lam{x}{M}\).

\Case{\r{\mbox{shift}}.}
This is the only case that we have to consider \(n\) not being 0.
In this case, the derivation ends with
\[
\Ifr{\r{\mbox{shift}}.}
    {\typ{\O \G}{\lam{x}{M} : \O A}}
    {\typ{\G}{\lam{x}{M} : A}}
\]
Note that \(\O A \npeqtyp \t\) from \(A \npeqtyp \t\).
By induction hypothesis,
\(\typ{\O^{n'}\O \G \cup \{\,x:B\,\}}{M:C}\) is derivable
for some \(n'\), \(B\) and \(C\) such that
\(B \Impl C \psubtyp \O^{n'}\O A\).
Therefore, it suffices to take \(n\) as \(n = n' +1\).

\Case{\r{\t}.}
This case is impossible by the assumption that \(A \npeqtyp \t\).

\Case{\r{\rsubtyp}.}
In this case, the derivation ends with
\[
\ifr{\rsubtyp}
    {\typ{\G}{\lam{x}{M} : A'} & A' \psubtyp A}
    {\typ{\G}{\lam{x}{M} : A}}
\]
for some \(A'\).
By induction hypothesis,
\(\typ{\O^n \G \cup \{\,x:B\,\}}{M:C}\) is derivable
for some \(n\), \(B\) and \(C\) such that \(B \Impl C \psubtyp \O^n A'\), which
implies \(B \Impl C \psubtyp \O^n A\) since \(A' \psubtyp A\).

\Case{\r{\Impl\mbox{I}}.}
Trivial since the derivation ends with
\[
\ifr{\Impl\mbox{I}}
    {\typ{\G\cup\{\,x:B\,\}}{M : C}}
    {\typ{\G}{\lam{x}{M} : B \Impl C}}
\]
for some \(B\) and \(C\) such that \(A = B \Impl C\).
\qed\CHECKED{2014/06/24, 07/20}
\end{proof}

Finally, we can proceed to the proof of the subject reduction property of
{\lA} with three lemmas above.

\begin{theorem}[Subject reduction]
    \ilabel{subj-red-theorem}{subject reduction}
Suppose that \(\typ{\G}{M:A}\) is derivable, and that
\(M \ct L\).
Then, \(\typ{\G}{L:A}\) is also derivable.
\end{theorem}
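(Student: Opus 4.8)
The plan is to reduce everything to the contraction of a single redex and then handle that contraction by an inversion argument. Since \(\ct\) rewrites one redex \(\app{(\lam{x}{P})}{Q} \ct P[Q/x]\) inside an arbitrary context, I would argue by induction on the derivation of \(\typ{\G}{M:A}\), with cases on the last rule. The rules \r{\mbox{var}} and \r{\t} are immediate (a variable has no redex, and for \(A = \t\) the reduct is typed by \r{\t}); for \r{\mbox{shift}} and \r{\rsubtyp} the subject \(M\) is unchanged, so I apply the induction hypothesis to the premise and reapply the rule; for \r{\Impl\mbox{I}} the contracted redex must lie inside the body, so \(M = \lam{y}{M'}\), \(L = \lam{y}{L'}\) with \(M' \ct L'\), and I use the hypothesis on the premise and reapply \r{\Impl\mbox{I}}. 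In the \r{\Impl\mbox{E}} case \(M = \app{M_1}{M_2}\), and (using Proposition~\ref{typ-weakening} to take both premises over the common context \(\G\)) either the redex lies in \(M_1\) or \(M_2\) — handled by the hypothesis on the corresponding premise — or the whole \(M\) is itself the contracted redex, i.e. \(M_1 = \lam{x}{P}\), \(M_2 = Q\), and \(L = P[Q/x]\). This last possibility is the only nonroutine case.

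For that key case I have \(\typ{\G}{\lam{x}{P} : B \Impl A}\) and \(\typ{\G}{Q:B}\), and I want \(\typ{\G}{P[Q/x]:A}\). If \(A \peqtyp \t\) then \(A\) is a \tvariant{} by Theorem~\ref{geqtyp-t-tvariant}, so \(\typ{\G}{P[Q/x]:\t}\) by \r{\t} and \(\t \psubtyp A\) by \r{\rsubtyp} finish it; hence I assume \(A \npeqtyp \t\). Applying the Anti-abstraction lemma (Lemma~\ref{lA-abst-lemma}) to the abstraction gives \(n\), \(B'\), \(C'\) with \(\typ{\O^n\G \cup \{\,x:B'\,\}}{P:C'}\) derivable and \(B' \Impl C' \psubtyp \O^n(B \Impl A)\). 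Since \(A \npeqtyp \t\), Propositions~\ref{geqtyp-Impl-t} and \ref{geqtyp-O-t} give \(\O^n(B \Impl A) \npeqtyp \t\), so Lemma~\ref{subtyp-O-Impl-lemma} yields a \(k\) with \(\O^n B \psubtyp \O^k B'\) and \(\O^k C' \psubtyp \O^n A\).

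I would then assemble the reduct's derivation using the derived rules \r{\mbox{nec}} and \r{\mbox{subst}} from Proposition~\ref{lA-nec-subst-redundant}. Applying \r{\mbox{nec}} \(k\) times to \(\typ{\O^n\G \cup \{\,x:B'\,\}}{P:C'}\) and then \r{\rsubtyp} with \(\O^k C' \psubtyp \O^n A\) gives \(\typ{\O^{k+n}\G \cup \{\,x:\O^k B'\,\}}{P:\O^n A}\); separately, \(n\)-fold \r{\mbox{nec}} on \(\typ{\G}{Q:B}\) followed by \r{\rsubtyp} with \(\O^n B \psubtyp \O^k B'\) gives \(\typ{\O^n\G}{Q:\O^k B'}\). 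The obstacle is that these two judgments carry the incompatible context modalities \(\O^{k+n}\G\) and \(\O^n\G\), so \r{\mbox{subst}} cannot be applied directly. I resolve this by observing that \r{{\rsubtyp}\mbox{-approx}} gives \(\O^n C \psubtyp \O^{k+n} C\) for each component by \(k\)-fold iteration and \r{{\rsubtyp}\mbox{-trans}}, whence \(\O^n\G \cup \{\,x:\O^k B'\,\} \psubtyp \O^{k+n}\G \cup \{\,x:\O^k B'\,\}\); Lemma~\ref{typ-subtyp} then lowers the first judgment to context \(\O^n\G \cup \{\,x:\O^k B'\,\}\). Now \r{\mbox{subst}} applies to produce \(\typ{\O^n\G}{P[Q/x]:\O^n A}\), and \(n\) applications of \r{\mbox{shift}} yield \(\typ{\G}{P[Q/x]:A}\), completing the case. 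The reconciliation of the \(\O\)-levels introduced by \r{\mbox{shift}} in the abstraction's derivation and by the \(k\) of Lemma~\ref{subtyp-O-Impl-lemma} is the heart of the proof and the step I expect to require the most care.
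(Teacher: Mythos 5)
Your proposal is correct and follows essentially the same route as the paper's proof: induction on the typing derivation, with the \(\beta\)-contraction case handled via Lemma~\ref{lA-abst-lemma}, Lemma~\ref{subtyp-O-Impl-lemma}, the derived rules \r{\mbox{nec}} and \r{\mbox{subst}}, the context adjustment \(\O^n\G \psubtyp \O^{n+k}\G\) through Lemma~\ref{typ-subtyp}, and \(n\) closing applications of \r{\mbox{shift}}. The only cosmetic difference is that you apply the subsumption \(\O^k C' \psubtyp \O^n A\) before the substitution rather than after, which changes nothing.
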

\begin{proof}
By induction on the derivation of \(\typ{\G}{M:A}\), and
by cases on the last rule applied in it.

\Case{\r{\mbox{var}}.}
This case is impossible since \(M \ct L\).

\Case{\r{\t}.}
In this case, \(A = \t\).
Hence, \(\typ{\G}{L:A}\) is also derivable by \r{\t}.

\Case{\r{\mbox{shift}} or \r{\rsubtyp}.}
The derivation ends with either of the following (for some \(A'\)
in case of \r{\rsubtyp}).
\[
    \ifr{\mbox{shift}}
        {\typ{\O \G}{M : \O A}}
        {\typ{\G}{M : A}}
    \mskip50mu
    \ifr{\rsubtyp}
	{\typ{\G}{M : A'} & A' \psubtyp A}
	{\typ{\G}{M: A}}
    \]
Hence, straightforward
from the induction hypothesis by applying the same rule.

\Case{\r{\Impl\mbox{I}}.}
The derivation ends with
\[
    \ifr{\Impl\mbox{I}}
	{\typ{\G\cup\{\,x:B\,\}}{M':C}}
	{\typ{\G}{\lam{x}M':B\Impl C}}
\]
for some \(x\), \(M'\), \(B\) and \(C\) such that
\(M = \lam{x}{M'}\) and \(A = B \Impl C\).
Since \(\lam{x}M' \ct L\), there exists some \(L'\) such that
\(M' \ct L'\) and \(L = \lam{x}{L'}\).
Hence, by induction hypothesis,
\(\typ{\G\cup\{\,x:B\,\}}{L':C}\) is derivable; and therefore,
so is \(\typ{\G}{\lam{x}L':B \Impl C}\) by applying \r{\Impl\mbox{I}}.

\Case{\r{\Impl\mbox{E}}.}
The derivation ends with
\[
    \ifr{\Impl\mbox{E}}
	{\typ{\G_1}{M_1:B \Impl A}
	& \typ{\G_2}{M_2:B}}
	{\typ{\G_1 \cup \G_2}{\app{M_1}{M_2}:A}}
\]
for some \(M_1\), \(M_2\), \(\G_1\), \(\G_2\) and \(B\) such that
\(M = \app{M_1}{M_2}\) and \(\G = \G_1 \cup \G_2\).
If \(A \peqtyp \t\), then \(\typ{\G}{L:A}\) is derivable by \r{\t}.
Therefore, we assume that \(A \npeqtyp \t\) in the sequel.
Note that it also implies that \(B \Impl A \npeqtyp \t\).
Since \(\app{M_1}{M_2} \ct L\),
there are three possible subcases as follows:
\begin{Enumerate}
\item[(a)] \(M_1\ct L_1\) and
    \(L = \app{L_1}{M_2}\) for some \(L_1\).
\item[(b)] \(M_2\ct L_2\) and
    \(L = \app{M_1}{L_2}\) for some \(L_2\).
\item[(c)] \(M_1 = \lam{x}{L'}\) and
    \(L = L'[M_2/x]\) for some \(x\) and \(L'\).
\end{Enumerate}
By induction hypothesis,
\(\typ{\G_1}{L_1:B \Impl A}\) and \(\typ{\G_2}{L_2:B}\)
are derivable for (a) and (b), respectively.
Therefore, we can derive
\(\typ{\G}{L:A}\) by applying \r{\Impl\mbox{E}} in these cases.
As for (c), we can assume that \(x \not\in \Dom(\G_1)\)
without loss of generality.
Hence, by Lemma~\ref{lA-abst-lemma},
there exist some \(n\), \(C\) and \(D\) such that
\begin{eqnarray}
\label{subj-red-12}
    && C \Impl D \psubtyp \O^n (B \Impl A),~\mbox{and} \\
\label{subj-red-13}
    && \typ{\O^n \G_1\cup\{\,x:C\,\}}{L':D}~\mbox{is derivable.}
\end{eqnarray}
Therefore, by (\ref{subj-red-12}) and Lemma~\ref{subtyp-O-Impl-lemma},
\begin{eqnarray}
\label{subj-red-15}
    && \O^n B \psubtyp \O^k C~\mbox{and}~\O^k D \psubtyp \O^n A
\end{eqnarray}
for some \(k\).
We can derive \(\typ{\O^{n+k} \G_1\cup\{\,x:\O^k C\,\}}{L':\O^k D}\)
from (\ref{subj-red-13})
by applying Proposition~\ref{lA-nec-redundant}, namely \r{\mbox{nec}},
\(k\) times; and hence,
\(\typ{\O^n\G_1\cup\{\,x:\O^k C\,\}}{L':\O^k D}\) is also derivable
by Lemma~\ref{typ-subtyp} since \(\O^n \G_1 \psubtyp \O^{n+k}\G_1\).
On the other hand,
\(\typ{\O^n \G_2}{M_2:\O^n B}\) is derivable
from \(\typ{\G_2}{M_2:B}\) by \r{\mbox{nec}}; and hence, so is
\(\typ{\O^n \G_2}{M_2:\O^k C}\) by \r{\rsubtyp} and (\ref{subj-red-15}).
Therefore, \(\typ{\O^n \G}{L'[M_2/x] : \O^k D}\) is derivable
by Proposition~\ref{lA-subst-redundant}, namely \r{\mbox{subst}};
and hence, so is
\(\typ{\O^n \G}{L'[M_2/x] : \O^n A}\) by \r{\rsubtyp} and (\ref{subj-red-15}),
from which \(\typ{\G}{L'[M_2/x] : A}\) is derivable
by applying \r{\mbox{shift}} \(n\) times.
\qed\CHECKED{2014/06/24, 07/20}
\end{proof}

The subject reduction property is independent of the existence
of the \r{\mbox{shift}}-rule.
Even if \r{\mbox{shift}} is dropped from {\lA},
Proposition~\ref{lA-nec-subst-redundant} still holds, and
the statements and the proofs of
Lemmas~\ref{subtyp-O-Impl-lemma}, \ref{lA-abst-lemma}
and Theorem~\ref{subj-red-theorem} remain valid
by regarding \(n\) as 0.

\Section{Convergence of well-typed $\lambda$-terms}\label{conv-sec}

When we gave the interpretation of type expressions
in Definition~\ref{rlz-def},
the only restriction forced on what type variables mean was that
the interpretation must be hereditary with respect to
the accessibility relation.
Therefore, just as in the case of the simply typed lambda calculus,
we are free to incorporate convergence properties of inhabitants into
the meaning of type expressions even though recursive types are involved.
Thus, type expressions can still say something about convergence, and
the soundness theorem assures the convergence of well-typed
\(\lambda\)-terms according to their types.
In this section, we give such results as follows.

\begin{Enumerate}
\item Every \(\lambda\)-term of types other than \(\t\) is head normalizable,
	i.e., solvable.
\item Every \(\lambda\)-term of types without positive occurrences of \(\t\)
    has a B\"{o}hm tree without unsolvable terms.
\item Every \(\lambda\)-term of types without occurrences of
    the modal operator \(\O\) is normalizable.
\end{Enumerate}

\Subsection{A term model}

The results about the convergence of well-typed \(\lambda\)-terms
are proved by applying the soundness of {\lA} (Theorem~\ref{soundness-theorem})
to an appropriate interpretation of type expressions,
considering a certain term model of the untyped \(\lambda\)-calculus.
So we first introduce the term model used for that purpose.

\begin{proposition}\label{beq-subst}
If\/ \(M \beq M'\) and \(N \beq N'\), then \(M[N/x] \beq M'[N'/x]\).
\end{proposition}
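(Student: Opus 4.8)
The plan is to deduce the result from two ingredients: that $\beq$ is a congruence with respect to the constructors of $\IE$, and that $\beta$-reduction realizes substitution, i.e. $\app{(\lam{x}M)}{N} \ct M[N/x]$ for all $M$ and $N$. Granting these, the proof collapses to the short chain $M[N/x] \beq \app{(\lam{x}M)}{N} \beq \app{(\lam{x}M')}{N'} \beq M'[N'/x]$, where the two outer equivalences are instances of the $\beta$-rule (using that $\beq$ contains $\ct$ and is symmetric), and the middle one follows by congruence from the hypotheses $M \beq M'$ and $N \beq N'$.

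First I would establish the congruence property of $\beq$: that $M \beq M'$ implies $\lam{x}M \beq \lam{x}M'$, $\app{M}{K} \beq \app{M'}{K}$, and $\app{K}{M} \beq \app{K}{M'}$ for every $K$. Because $\ct$ is defined in Definition~\ref{ct-def} through an \emph{arbitrary} context $\C[\,]$, a single step $M \ct M'$ lifts at once to $\lam{x}M \ct \lam{x}M'$ (prepend $\lam{x}$ to the context) and to the two application forms; passing to the symmetric, reflexive and transitive closure yields the same statements for $\beq$. Combining the abstraction case on the left argument with the application cases then gives $\app{(\lam{x}M)}{N} \beq \app{(\lam{x}M')}{N'}$, which is precisely the middle link of the chain.

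The remaining ingredient, $\app{(\lam{x}M)}{N} \ct M[N/x]$, is the instance of Definition~\ref{ct-def} with the empty context, so $M[N/x] \beq \app{(\lam{x}M)}{N}$ and, symmetrically, $M'[N'/x] \beq \app{(\lam{x}M')}{N'}$; the chain then closes under transitivity of $\beq$, completing the argument.

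The only point requiring care — and where I expect the modest (but routine) work to sit — is the lifting of single reduction steps through the constructors, where one must respect the $\alpha$-convention so that the bound variable of $\lam{x}$ does not capture variables supplied by the context; since the excerpt already identifies $\lambda$-terms up to $\alpha$-conversion, this raises no genuine difficulty. An equally viable alternative would be to split the goal into $M[N/x] \beq M'[N/x]$ and $M'[N/x] \beq M'[N'/x]$, proved separately by induction on the lengths of the $\beq$-derivations: a single step $M \ct M'$ gives $M[N/x] \ct M'[N/x]$ via the composition-of-substitutions lemma, and $N \ct N'$ gives $M'[N/x] \ctc M'[N'/x]$ with one step per free occurrence of $x$. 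I would nonetheless prefer the congruence argument above for its brevity.
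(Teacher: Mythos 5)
Your proof is correct, but it takes a genuinely different route from the paper's. The paper proves the statement by splitting it into the two halves you mention only at the very end --- (a) $N \beq N'$ implies $M[N/x] \beq M[N'/x]$, and (b) $M \ct M'$ implies $M[N/x] \beq M'[N/x]$ --- and establishes each by structural induction on $M$, with a case analysis on the shape of $M$ (variable, application, abstraction) and, inside (b), a further sub-case analysis on where the redex sits. Your argument instead routes everything through the $\beta$-redex $\app{(\lam{x}M)}{N} \ct M[N/x]$ and the fact that $\beq$ is a congruence, reducing the whole statement to the chain $M[N/x] \beq \app{(\lam{x}M)}{N} \beq \app{(\lam{x}M')}{N'} \beq M'[N'/x]$. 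The congruence lemma you need is immediate from the context-based definition of $\ct$ in Definition~\ref{ct-def} (a one-step reduction inside $M$ is a one-step reduction inside $\lam{x}M$, $\app{M}{K}$, or $\app{K}{M}$ via a larger context), lifted to $\beq$ by induction on the length of the conversion chain; no structural induction on terms is needed anywhere. What your approach buys is brevity and a cleaner separation of concerns; it is also the same trick the paper itself deploys one proposition later, in Proposition~\ref{val-interp-subst}, where $M[N/x] \beq \app{(\lam{y}M)}{N}$ is used to compute $\VI{M[N/x]}{\ienv}$. What the paper's approach buys is self-containedness at this point in the development and a template that generalizes to settings where substitution is not realized by a single redex. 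Your handling of the $\alpha$-convention is adequate given that the paper identifies terms up to $\alpha$-convertibility.
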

\begin{proof}
\ifdetail
It suffices to show that
(a) \(N \beq N'\) implies \(M[N/x] \beq M[N'/x]\), and
(b) \(M \ct M'\) implies \(M[N/x] \beq M'[N/x]\).
For Item~(a), suppose that \(N \beq N'\).
We show that \(M[N/x] \beq M[N'/x]\)
by induction on the structure of \(M\).

\Case{\(M = y\) for some individual variable \(y\).}
If \(y = x\), then
\(M[N/x] = N \beq N' = M[N'/x]\) by assumption.
Otherwise, \(M[N/x] = M[N'/x] = y\).

\Case{\(M = \app{M_1}{M_2}\) for some \(M_1\) and \(M_2\).}
By induction hypothesis,
\(M_1[N/x] \beq M_1[N'/x]\) and \(M_2[N/x] \beq M_2[N'/x]\).
Hence, \(M[N/x] \beq M[N'/x]\).

\Case{\(M = \lam{y}{K}\) for some \(y\) and \(K\).}
We can assume that \(y \not\in \FV{N} \cup \FV{N'}\) without loss of generality.
Hence, \(M[N/X] = \lam{y}(K[N/x]) \beq \lam{y}(K[N'/x]) = M[N'/x]\),
since \(K[N/x] \beq K[N'/x]\) by induction hypothesis.
This completes the proof of Item~(a).

As for Item~(b), suppose that \(M \ct M'\).
We show that \(M[N/x] \beq M'[N/x]\)
by induction on the structure of \(M\) again.

\Case{\(M = y\) for some individual variable \(y\).}
This case contradicts the assumption that \(M \ct M'\).

\Case{\(M = \app{M_1}{M_2}\) for some \(M_1\) and \(M_2\).}
Since \(M \ct M'\), there are the following three subcases.
\begin{enumerate}[{\kern4pt}(\romannumeral 1)]
\item
    \(M_1\ct M_1'\) and \(M' = \app{M_1'}{M_2}\) for some \(M_1'\).
\item
    \(M_2\ct M_2'\) and \(M' = \app{M_1}{M_2'}\) for some \(M_2'\).
\item
    \(M_1 = \lam{y}K\) and \(M' = K[M_2/y]\) for some \(y\) and \(K\).
\end{enumerate}
If ({\romannumeral 1}) is the case, then
\(M_1[N/x] \beq M_1'[N/x]\) by induction hypothesis; and hence,
\(M[N/x] \beq M'[N/x]\).
The case ({\romannumeral 2}) is similar.
As for case ({\romannumeral 3}),
we can assume that \(y \not\in \FV{N} \cup \{\,x\,\}\)
without loss of generality.
Hence,
\(M[N/x] = \app{(\lam{y}K[N/x])}{(M_2[N/x])}
    \ct K[N/x][M_2[N/x]/y] = K[M_2/y][N/x] = M'[N/x]\).

\Case{\(M = \lam{y}{K}\) for some \(y\) and \(K\).}
In this case, \(K \ct K'\) and \(M' = \lam{y}K'\) for some \(K'\), and
\(K[N/x] \beq K'[N/x]\) by induction hypothesis.
We can also assume that \(y \not\in \FV{N}\).
Hence, \(M[N/x] = \lam{y}K[N/x] \beq \lam{y}K'[N/x] = M'[N/x]\).
\else
Show that
\(N \beq N'\) implies \(M[N/x] \beq M[N'/x]\), and
\(M \ct M'\) implies \(M[N/x] \beq M'[N/x]\)
by induction on the structure of \(M\).
\fi 
\qed\CHECKED{2014/07/10}
\end{proof}

\begin{definition}
    \ilabel{term-model-def}{term model}
The term model
\(\tuple{\V,\,\cdot,\,\VI{\;}{}}\) of \(\IE\) is defined as follows.
\begin{Enumerate}
\item \(\V = \IE/{\beq}\)
\item \([M] \cdot [N] = [\app{M}{N}]\)
\item \(\VI{M}{\ienv} =
	[M[\Ienv(z_1)/z_1,\Ienv(z_2)/z_2,\ldots,\Ienv(z_k)/z_k]]\)
\end{Enumerate}
\end{definition}
where \([M]\) and \(\Ienv(z)\) denote
the equivalence class including \(M\) and
the representative member of \(\ienv(z)\), respectively,
and \(z_1\), \(z_2\), \(\ldots\), \(z_k\) are distinct individual variables
such that \(\FV{M} = \{\,z_1,\,z_2,\,\ldots,z_k\,\}\).
Note that Proposition~\ref{beq-subst} assures the well-definedness of Item~3.

\begin{proposition}\label{term-model-lambda-algebra}
The term model \(\tuple{\V,\,\cdot,\,\VI{\;}{}}\)
is a syntactical \(\lambda\)-algebra.
\end{proposition}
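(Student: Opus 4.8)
The plan is to verify, one by one, the eight defining conditions of a syntactical $\lambda$-algebra listed in Definition~\ref{lambda-algebra}. Conditions~1--3 amount to checking that $\V$, $\cdot$ and $\VI{\;}{}$ are well-defined objects of the prescribed types; conditions~4--8 are the substantive equations. The whole argument rests on two elementary facts about $\beq$: that it is compatible with application, so $M \beq M'$ and $N \beq N'$ give $\app{M}{N} \beq \app{M'}{N'}$, which follows directly from the context-closed definition of $\ct$; and that it is stable under substitution, i.e.\ $M \beq M'$ implies $M[\vec{K}/\vec{z}] \beq M'[\vec{K}/\vec{z}]$ for any terms $\vec{K}$. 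I would first record this second fact as an immediate consequence of Proposition~\ref{beq-subst} (take the substituted terms equal to themselves and iterate over the variables).

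With these in hand, most conditions are routine bookkeeping over free variables. Non-emptiness of $\V = \IE/{\beq}$ is clear (condition~1), and well-definedness of $\cdot$ (condition~2) is exactly compatibility with application. Well-definedness of $\VI{\;}{}$ (condition~3) is where the substitution fact is needed: if we choose different representatives $\Ienv'(z_i) \beq \Ienv(z_i)$ of the values $\ienv(z_i)$, then simultaneous application of Proposition~\ref{beq-subst} shows the two resulting terms are $\beq$-equivalent, so the class $\VI{M}{\ienv}$ does not depend on the choice---this is the well-definedness already flagged after Definition~\ref{term-model-def}. Condition~4 is immediate from $\FV{x} = \{x\}$. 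Condition~5 follows because substituting the representatives of $\ienv$ over $\FV{\app{M}{N}} = \FV{M} \cup \FV{N}$ distributes over the application, while a substitution acts on a term only through its free variables. Condition~7 is similar: since $\VI{M}{\ienv}$ uses $\ienv$ only at the variables in $\FV{M}$, environments agreeing there yield the same class.

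The two delicate conditions are 6 and 8. For condition~8, given $M \beq N$ I would substitute representatives of $\ienv$ over the union $\FV{M} \cup \FV{N}$; since substitution depends on a term only through its free variables, this computes both $\VI{M}{\ienv}$ and $\VI{N}{\ienv}$, and the substitution-stability of $\beq$ then gives $\VI{M}{\ienv} = \VI{N}{\ienv}$. Condition~6, the $\beta$-equation $\VI{\lam{x}{M}}{\ienv} \cdot v = \VI{M}{\ienv[v/x]}$, is the main obstacle, essentially because it is the only place where $\beta$-reduction and variable capture interact. Here I would first invoke $\alpha$-conversion to arrange that the bound variable $x$ is distinct from every $z_i \in \FV{\lam{x}{M}}$ and occurs free neither in any chosen representative $\Ienv(z_i)$ nor in a representative $V$ of $v$; this lets the substitution for $\lam{x}{M}$ push inside the abstraction without capture. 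Writing $\sigma$ for that substitution, $\VI{\lam{x}{M}}{\ienv} \cdot v$ is the class of $\app{(\lam{x}{M\sigma})}{V}$, which $\beta$-reduces to $(M\sigma)[V/x]$, and the freshness of $x$ makes this literally the term obtained by substituting the representatives of $\ienv[v/x]$ into $M$; hence it equals $\VI{M}{\ienv[v/x]}$. Care must be taken to treat both $x \in \FV{M}$ and $x \notin \FV{M}$ correctly in the free-variable bookkeeping, but no new idea is needed beyond the freshness setup and one $\beta$-step.
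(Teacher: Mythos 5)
Your proposal is correct and follows essentially the same route as the paper: conditions 1--5, 7 and 8 are routine consequences of the definition and Proposition~\ref{beq-subst}, and condition~6 is handled by choosing the bound variable fresh with respect to the $z_i$ and the representatives $\Ienv(z_i)$, pushing the substitution under the abstraction, and performing one $\beta$-step to merge it with $[N/x]$ into the substitution for $\ienv[v/x]$. The only cosmetic difference is that you spell out the well-definedness and substitution-stability lemmas that the paper leaves implicit.
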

\begin{proof}
It suffices to show that
the eight conditions of Definition~\ref{lambda-algebra} are
satisfied by the term model.
Conditions~1 through 5 are straightforward.
As for Condition~6, let \(x \not\in \FV{\Ienv(z_i)} \cup \{\,z_i\,\}\)
for every \(i\)
(\(i = 1,\,2,\,\ldots,\,k\)).
\begin{Eqnarray*}
\VI{\lam{x}M}{\ienv}\cdot [N]
&=& [(\lam{x}{M})[\Ienv(z_1)/z_1,\Ienv(z_2)/z_2,\ldots,\Ienv(z_k)/z_k]]
    \cdot [N]
    & (by Definition~\ref{term-model-def}) \\
&=& [\lam{x}{(M[\Ienv(z_1)/z_1,\Ienv(z_2)/z_2,\ldots,\Ienv(z_k)/z_k])}]
    \cdot [N] \mskip-300mu \\
    &&& \hskip-92pt
	(since \(x \not\in \FV{\Ienv(z_i)} \cup \{\,z_i\,\}\)
	for every \(i\)) \\
&=& [\app{(\lam{x}{(M[\Ienv(z_1)/z_1,\Ienv(z_2)/z_2,\ldots,\Ienv(z_k)/z_k])})}
	 {N}]
     & (by Definition~\ref{term-model-def}) \\
&=& [M[\Ienv(z_1)/z_1,\Ienv(z_2)/z_2,\ldots,\Ienv(z_k)/z_k][N/x]]
    \mskip-300mu \\
&=& [M[\Ienv(z_1)/z_1,\Ienv(z_2)/z_2,\ldots,\Ienv(z_k)/z_k,N/x]]
    \mskip-300mu \\
    &&& \hskip-92pt
	(since \(x \not\in \FV{\Ienv(z_i)} \cup \{\,z_i\,\}\)
	for every \(i\)) \\
&=& \VI{M}{\ienv[[N]/x]} & (by Definition~\ref{term-model-def})
\end{Eqnarray*}
Conditions~7 and 8 are also straightforward
from Definition~\ref{term-model-def} and Proposition~\ref{beq-subst}.
\qed\CHECKED{2014/07/10}
\end{proof}

\Subsection{Tail finite types}

In this subsection,
we show that every \(\lambda\)-term of
a certain class of type expressions, which will later be defined
as {\em tail finite} types,
is {\em head normalizable}.
The set of head normalizable terms is defined in the standard manner as follows.

\begin{definition}[Head normal forms]
    \ilabel{hnf-def}{head normal forms}
A \(\lambda\)-term \(M\) is a {\em head normal form} if and only if\/
\(M\) has the form of
\(\lam{x_1}\lam{x_2}\ldots\lam{x_m}\app{y\,}{N_1\,N_2\,\ldots\,N_n}\),
where \(m,\,n \ge 0\).
We say that {\em \(M\) has a head normal form}, or
{\em is head normalizable},
if \(M \ctc M'\) for some head normal form \(M'\).
\end{definition}

We also define B\"{o}hm trees of \(\lambda\)-terms
in the standard manner according to this definition of
head normal forms, in which
\(\lambda\)-terms without head normal forms are denoted by \(\bot\).

\begin{definition}[Tail finite types]
    \ilabel{tf-def}{tail finite}
A type expression \(A\) is {\em tail finite} if and only if
\(A \eqtyp \O^{m_0}(B_1 \Impl \O^{m_1}(B_2 \Impl \O^{m_2}(B_3 \Impl
\ldots \Impl \O^{m_{n{-}1}}(B_n \Impl \O^{m_n} X)\ldots)))\)
for some \(n\), \(m_0\), \(m_1\), \(m_2\), \(\ldots\), \(m_n\),
\(B_1\), \(B_2\), \(\ldots\), \(B_n\) and \(X\).
\end{definition}

While we used \(\eqtyp\) to define the tail finiteness,
it is also possible to use \(\peqtyp\) instead of \(\eqtyp\).
Later, by Proposition~\ref{tf-tfte-not-top-equiv}, we shall see that
either of the two definitions leads us to the identical notion
of tail finiteness.
Actually, it will be shown that
a type expression \(A\) is tail finite if and only if
\(A\) is not a {\tvariant}.
To this end, we need an alternative definition of tail finiteness as follows.

\begin{definition}
    \ilabel{tfte-def}{TF@$\protect\TFTE$}
    \ilabel{tfte-syntax-def}{syntax!TF@$\protect\TFTE$}
Let \(V\) be a set of type variables.
We define a subset \(\TFTE^V\) of \(\TE\) as follows.
\begin{Eqnarray*}
    \TFTE^V  &\bnfdef& X & (\(X \in \TV - V\)) \\
	&\bnfor& \O  \TFTE^V \\
	&\bnfor& \TE \Impl \TFTE^V \\
	&\bnfor& \fix{X}A
		& \big(\(\fix{X}A \in \TE\) and
			\(A \in \TFTE^{V\cup\{\,X\,\}}\)\big).
\end{Eqnarray*}
\end{definition}
We can easily check that \(\TFTE^V\) is closed under
\(\alpha\)-conversion of type expressions.
We denote \(\TFTE^{\{\}}\) by \(\TFTE\), which will be shown to be
identical to the set of tail finite type expressions.
Roughly, \(\TFTE^V\) is the set of type expressions that are tail finite
even when some of type variables listed in \(V\) are instantiated
by \(\t\), which will also be shown in Proposition~\ref{tf-tfte-not-top-equiv}.

\begin{proposition}\label{tfte-tf}
Let \(V\) be a set of type variables, and suppose that \(A \in \TFTE^V\).
Then
\(A \geqtyp \O^{m_0}(B_1 \Impl \O^{m_1}(B_2 \Impl \O^{m_2}(B_3 \Impl
    \ldots \Impl \O^{m_{n{-}1}}(B_n \Impl \O^{m_n} X)\ldots)))\)
for some \(n\), \(m_0\), \(m_1\), \(m_2\), \(\ldots\), \(m_n\),
\(B_1\), \(B_2\), \(B_3\), \(\ldots\), \(B_n\) and \(X \not\in V\).
\end{proposition}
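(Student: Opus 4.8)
The plan is to argue by induction on the inductive definition of $\TFTE^V$ in Definition~\ref{tfte-def}, treating its four clauses as the cases. The base case $A = X$ with $X \in \TV - V$ is immediate: $A$ is already in the required form with $n = 0$ and $m_0 = 0$, and $X \notin V$ by hypothesis, so $A \geqtyp A$ holds by reflexivity. The whole point of carrying $V$ as a parameter is to strengthen the induction hypothesis enough to survive the fixed-point clause, where the bound variable is added to $V$; accordingly I would keep $V$ arbitrary throughout the induction rather than fix $V = \{\}$.

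For the clause $A = \O A'$ with $A' \in \TFTE^V$, the induction hypothesis gives $A' \geqtyp \O^{m_0}(B_1 \Impl \ldots \Impl \O^{m_n} X)$ with $X \notin V$, and applying \r{{\eqtyp}\mbox{-}{\O}} yields $A = \O A' \geqtyp \O^{m_0 + 1}(B_1 \Impl \ldots \Impl \O^{m_n} X)$, the same form with the leading modality count increased by one. For the clause $A = B \Impl A'$ with $B \in \TE$ and $A' \in \TFTE^V$, the induction hypothesis again supplies a tail-finite form for $A'$, and combining $B \geqtyp B$ (reflexivity) with it through \r{{\eqtyp}\mbox{-}{\Impl}} gives $A \geqtyp B \Impl \O^{m_0}(B_1 \Impl \ldots \Impl \O^{m_n} X)$; this is exactly the tail-finite form for $A$ with a new leading count $0$, new first argument $B$, and the old data shifted down by one position. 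In both cases the tail variable $X \notin V$ is left untouched.

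The fixed-point clause $A = \fix{X}A'$ with $\fix{X}A' \in \TE$ and $A' \in \TFTE^{V \cup \{X\}}$ is where the real work lies, and I expect it to be the main obstacle. Applying the induction hypothesis to $A'$ with the enlarged set $V \cup \{X\}$ produces $A' \geqtyp \O^{m_0}(B_1 \Impl \ldots \Impl \O^{m_n} Y)$ for some $Y \notin V \cup \{X\}$; crucially this forces $Y \neq X$. Unfolding once by \r{{\eqtyp}\mbox{-fix}} gives $A \geqtyp A'[A/X]$, and Proposition~\ref{geqtyp-subst} lets me push the substitution through the tail-finite form, obtaining $A \geqtyp \O^{m_0}(B_1[A/X] \Impl \ldots \Impl \O^{m_n} Y)$, where the tail is preserved because $Y[A/X] = Y$ exactly when $Y \neq X$. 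Since $Y \notin V$ and each $B_i[A/X]$ is again a type expression (substitution of type expressions for free variables preserving $\TE$-membership and properness, by the earlier substitution propositions on properness), this is the desired tail-finite form for $A$. The delicate point throughout is that the definitional bookkeeping of $V$ in the $\fix$ clause is precisely what guarantees the recursion variable can never end up as the tail variable, so that unfolding cannot destroy the tail-finite shape.
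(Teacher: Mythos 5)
Your proposal is correct and follows essentially the same route as the paper: induction over the structure of $A$ (equivalently, over the clauses of Definition~\ref{tfte-def}), with the $\fix{X}A'$ case handled by applying the induction hypothesis over $V \cup \{X\}$ to force the tail variable away from $X$, unfolding via \r{{\eqtyp}\mbox{-fix}}, and pushing the substitution through with Proposition~\ref{geqtyp-subst}. The paper phrases the induction as being on $h(A)$ rather than on the derivation of membership in $\TFTE^V$, but this is an immaterial difference.
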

\begin{proof}
By induction on \(h(A)\), and by cases on the form of \(A\).

\Case{\(A = Y\) for some \(Y\).}
Obvious since we get \(Y \not\in V\) by Definition~\ref{tfte-def}.

\Case{\(A = \O C\) for some \(C\).}
We get \(C \in \TFTE^V\) by Definition~\ref{tfte-def}.
Hence, by induction hypothesis,
\(C \geqtyp \O^{m_0'}(B_1 \Impl \O^{m_1}(B_2 \Impl \O^{m_2}(B_3 \Impl
\ldots \Impl \O^{m_{n{-}1}}(B_n \Impl \O^{m_n} X)\ldots)))\)
for some \(n\), \(m_0'\), \(m_1\), \(m_2\), \(\ldots\), \(m_n\),
\(B_1\), \(B_2\), \(\ldots\), \(B_n\) and \(X \not\in V\).
Therefore, it suffices to take \(m_0 = m_0' + 1\).

\Case{\(A = C \Impl D\) for some \(C\) and \(D\).}
We get \(D \in \TFTE^V\) by Definition~\ref{tfte-def}.
By induction hypothesis,
\(D \geqtyp \O^{m_1}(B_2 \Impl \O^{m_2}(B_3 \Impl \O^{m_3}(B_4 \Impl
\ldots \Impl \O^{m_{n{-}1}}(B_n \Impl \O^{m_n} X)\ldots)))\)
for some \(n\), \(m_1\), \(m_2\), \(\ldots\), \(m_n\),
\(B_2\), \(B_3\), \(\ldots\), \(B_n\) and \(X \not\in V\).
Therefore, it suffices to take \(m_0 = 0\) and \(B_1 = C\).

\Case{\(A = \fix{Z}{C}\) for some \(Z\) and \(C\).}
We get \(C \in \TFTE^{V\cup\{\,Z\,\}}\) by Definition~\ref{tfte-def}.
Hence, by induction hypothesis,
\(C \geqtyp \O^{m_0}(B_1 \Impl \O^{m_1}(B_2 \Impl \O^{m_2}(B_3 \Impl
\ldots \Impl \O^{m_{n{-}1}}(B_n \Impl \O^{m_n} X)\ldots)))\)
for some \(n\), \(m_0\), \(m_1\), \(m_2\), \(\ldots\), \(m_n\),
\(B_1\), \(B_2\), \(\ldots\), \(B_n\) and \(X \not\in V \cup \{\,Z\,\}\).
Therefore, \(A = \fix{Z}{C} \geqtyp C[A/Z] \geqtyp
(\O^{m_0}(B_1 \Impl \O^{m_1}(B_2 \Impl \O^{m_2}(B_3 \Impl
\ldots \Impl \O^{m_{n{-}1}}(B_n \Impl \O^{m_n} X)\ldots))))[A/Z]
= \O^{m_0}(B_1[A/Z] \Impl \O^{m_1}(B_2[A/Z] \Impl \O^{m_2}(B_3[A/Z] \Impl
\ldots \Impl \O^{m_{n{-}1}}(B_n[A/Z] \Impl \O^{m_n} X)\ldots)))\)
by Proposition~\ref{geqtyp-subst}.
\qed\CHECKED{2014/05/14, 07/19}
\end{proof}

It is also not difficult to see that \(\TFTE\) enjoys the following properties.

\begin{proposition}\pushlabel
\begin{Enumerate}
\item \label{tfte-var-subseteq}
If\/ \(V \subseteq V'\), then \(\TFTE^{V'} \subseteq \TFTE^V\!\).
\item \label{tfte-var-petv}
If\/ \(X \not\in \PETV{A}\) and \(A \in \TFTE^V\!\), then
	    \(A \in \TFTE^{V\cup\{\,X\,\}}\!\).
\end{Enumerate}
\end{proposition}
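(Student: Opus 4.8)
The plan is to prove both parts by induction on the height $h(A)$, with a case analysis on the form of $A$ dictated by the four productions of $\TFTE^V$ in Definition~\ref{tfte-def}. The guiding observation is that, among those productions, only the variable base case $X$ (with $X \in \TV - V$) is sensitive to the index set $V$; the $\O$-, $\Impl$- and $\fix$-productions merely recurse into a subexpression. Consequently the whole argument reduces to tracking the variable that sits at the tail (positive) position.

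For the first part I would take $A \in \TFTE^{V'}$ with $V \subseteq V'$ and show $A \in \TFTE^V$. In the base case $A = X$ we have $X \in \TV - V' \subseteq \TV - V$, so $X \in \TFTE^V$. The cases $A = \O C$ and $A = B \Impl C$ follow at once from the induction hypothesis applied to $C$ (note $h(C) < h(A)$), with $B \in \TE$ carried along unchanged. In the case $A = \fix{Y}C$ the subexpression satisfies $C \in \TFTE^{V' \cup \{Y\}}$; since $V \cup \{Y\} \subseteq V' \cup \{Y\}$, the induction hypothesis gives $C \in \TFTE^{V \cup \{Y\}}$, and then $\fix{Y}C \in \TFTE^V$ by the $\fix$-production, using $\fix{Y}C \in \TE$ (part of the membership hypothesis). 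This part is entirely routine.

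For the second part I would show that $X \notin \PETV{A}$ together with $A \in \TFTE^V$ implies $A \in \TFTE^{V \cup \{X\}}$, again by induction on $h(A)$. Enlarging the index set only tightens the variable base case, and $X \notin \PETV{A}$ is exactly what rules out $X$ being the tail variable. In each case I extract $X \notin \PETV{C}$ for the positively-occurring subexpression $C$ and invoke the induction hypothesis: for $A = Y$, $\PETV{Y} = \{Y\}$ forces $X \neq Y$, hence $Y \notin V \cup \{X\}$; for $A = \O C$, $\PETV{\O C} = \PETV{C}$; for $A = B \Impl C$, $\PETV{B \Impl C} = \NETV{B} \cup \PETV{C}$, so $X \notin \PETV{C}$; and for $A = \fix{Y}C$ (renaming so that $Y \neq X$, legitimate since $\TFTE^V$ is closed under $\alpha$-conversion) both clauses of the $\fix$ case of Definition~\ref{etv-def} give $\PETV{\fix{Y}C} \supseteq \PETV{C} - \{Y\}$, whence $X \neq Y$ forces $X \notin \PETV{C}$. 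The induction hypothesis on $C$ with index set $V \cup \{Y\}$ then yields $C \in \TFTE^{(V \cup \{X\}) \cup \{Y\}}$, and reassembling gives $\fix{Y}C \in \TFTE^{V \cup \{X\}}$.

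The one point requiring care is that the $\mathit{ETV}^{\pm}$ clauses of Definition~\ref{etv-def} branch on whether the expression is a {\tvariant}, so I must justify using the non-{\tvariant} clauses throughout the second part. I expect the cleanest route to be a preliminary remark that no element of $\TFTE^V$ is a {\tvariant}: by Proposition~\ref{tfte-tf} every $A \in \TFTE^V$ satisfies $A \geqtyp D$ with $D = \O^{m_0}(B_1 \Impl \ldots \Impl \O^{m_n} X)$, and since the tail of $D$ is $\O^{m_0 + \cdots + m_n} X$ with $X$ free, $D$ is not a {\tvariant} (by Propositions~\ref{O-tvariant}, \ref{Impl-tvariant} and \ref{var-tvariant}); hence neither is $A$, by Proposition~\ref{geqtyp-tvariant}. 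Apart from this bookkeeping, the $\fix$-case of the second part — reading $X \notin \PETV{C}$ off the two-branch clause while correctly juggling the index sets $V$, $V \cup \{X\}$ and $V \cup \{Y\}$ — is the main thing to get right; everything else is a direct unfolding of the grammar.
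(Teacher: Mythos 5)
Your proposal is correct and follows essentially the same route as the paper: induction on \(h(A)\) with a case split on the form of \(A\), preceded by the observation that no element of \(\TFTE^V\) is a {\tvariant} (the paper derives this from Propositions~\ref{tfte-tf}, \ref{geqtyp-etv} and \ref{tvariant-etv} via \(\ETV{A} \not= \{\}\), whereas you go through the tail structure and Proposition~\ref{geqtyp-tvariant}, but these are interchangeable). The index-set bookkeeping in your \(\fx\)-cases matches what the paper's compressed proof leaves implicit.
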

\begin{proof}
By induction on \(h(A)\), and by cases on the form of \(A\).
For Item~2, note that
\(A \in \TFTE^V \) implies \(\ETV{A} \not= \{\}\)
by Propositions~\ref{tfte-tf} and \ref{geqtyp-etv}; and hence,
\(A\) is not a {\tvariant},
since \(\ETV{B} = \{\}\) for every {\tvariant} \(B\)
by Proposition~\ref{tvariant-etv}.
\qed\CHECKED{2014/05/14, 07/19}
\end{proof}

\begin{proposition}\pushlabel
Let \(A\) be a type expression, and \(V\) a set of type variables.
\begin{Enumerate}
\item \label{tfte-subst1}
    If\/ \(A \in \TFTE^{V\cup\{\,X\,\}}\!\), then \(A[B/X] \in \TFTE^V\!\).
\item \label{tfte-subst2}
    If\/ \(A \in \TFTE^V\!\) and \(B \in \TFTE^V\!\), then
	    \(A[B/X] \in \TFTE^V\!\).
\item \label{tfte-subst3}
    If\/ \(A[B/X] \in \TFTE^V\!\), then
	    \(A \in \TFTE^{V-\{\,X\,\}}\).
\item \label{tfte-subst4}
    If\/ \(A[B/X] \in \TFTE^V\!\), then
	    \(A \in \TFTE^{V\cup\{\,X\,\}}\) or \(B \in \TFTE^V\!\).
\end{Enumerate}
\end{proposition}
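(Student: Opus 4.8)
The plan is to prove all four items \emph{separately}, each by induction on the height \(h(A)\) and by cases on the form of \(A\), using throughout that \(A\) is a type expression and (by the running convention) that \(B \in \TE\). The variable, \(\O\)- and \(\Impl\)-cases are routine. For \(A = Y\) one reads off the side condition \(Y \in \TV - V\) directly: e.g.\ in Item~\ref{tfte-subst1}, \(Y \in \TFTE^{V\cup\{X\}}\) forces \(Y \neq X\) and \(Y \notin V\), so \(A[B/X] = Y \in \TFTE^V\). For \(A = \O C\) and \(A = C \Impl B'\) one pushes the substitution inward and applies the induction hypothesis to the \emph{tail} subexpression (\(C\), resp.\ \(B'\)), while noting that the non-tail component \(C\) of an \(\Impl\)-type stays in \(\TE\). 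This last point relies only on \(\TE\) being closed under substitution, which follows from the preservation of properness under substitution (Propositions~\itemref{proper-subst1} and~\itemref{proper-subst2}).

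The delicate case in every item is \(A = \fix{Y}C\), where I would first rename so that \(Y \neq X\) and \(Y \notin \FTV{B}\). The bound variable \(Y\) must be carried into the tracking set when passing under the binder, so the induction hypothesis is always invoked with \(V \cup \{Y\}\) in place of \(V\); the set bookkeeping then reduces to the identities \((V \cup \{X\}) \cup \{Y\} = (V \cup \{Y\}) \cup \{X\}\) and \((V \cup \{Y\}) - \{X\} = (V - \{X\}) \cup \{Y\}\), valid since \(Y \neq X\). This is exactly where Items~\ref{tfte-subst2} and~\ref{tfte-subst4} require the two monotonicity lemmas. In Item~\ref{tfte-subst2} the induction hypothesis under \(\fix{Y}\) needs \(B \in \TFTE^{V\cup\{Y\}}\), but the assumption only gives \(B \in \TFTE^V\); I would bridge this with Proposition~\ref{tfte-var-petv}, since \(Y \notin \FTV{B} \supseteq \PETV{B}\). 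Dually, in Item~\ref{tfte-subst4} the induction hypothesis returns \(B \in \TFTE^{V\cup\{Y\}}\) in its right disjunct, which must be weakened back to \(B \in \TFTE^V\) via Proposition~\ref{tfte-var-subseteq}.

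The disjunction in the conclusion of Item~\ref{tfte-subst4} (and the asymmetry of the hypothesis in Item~\ref{tfte-subst2}) originates entirely in the variable case \(A = Y\): when \(Y = X\) the substitution collapses \(A[B/X]\) to \(B\), so membership of \(A[B/X]\) in \(\TFTE^V\) tells us nothing about \(A\) but everything about \(B\), yielding the right disjunct, whereas \(Y \neq X\) gives the left disjunct immediately. Item~\ref{tfte-subst3} is the smoothest, since \(A = X\) always lands in \(\TFTE^{V-\{X\}}\) irrespective of \(B\). I expect the main obstacle to be organizational rather than mathematical: keeping the four membership sets \(V\), \(V \cup \{X\}\), \(V \cup \{Y\}\), and \(V - \{X\}\) correctly aligned through the \(\fix\)-case of each item, and invoking the monotonicity lemmas (Propositions~\ref{tfte-var-subseteq} and~\ref{tfte-var-petv}) in precisely Items~\ref{tfte-subst4} and~\ref{tfte-subst2}, respectively, and nowhere else.
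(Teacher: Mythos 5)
Your proposal is correct and follows essentially the same route as the paper's proof: induction on \(h(A)\) with a case analysis on the form of \(A\), the disjunction of Item~\ref{tfte-subst4} and the asymmetry of Item~\ref{tfte-subst2} arising in the variable case, and the \(\fix{Y}C\) case handled by passing to \(V\cup\{\,Y\,\}\) with Proposition~\ref{tfte-var-petv} bridging \(B \in \TFTE^{V}\) to \(B \in \TFTE^{V\cup\{\,Y\,\}}\) in Item~\ref{tfte-subst2} and Proposition~\ref{tfte-var-subseteq} weakening back in Item~\ref{tfte-subst4}. No gaps.
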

\begin{proof}
By induction on \(h(A)\), and by cases on the form of \(A\).
\ifdetail

\Case{\(A = Y\) for some \(Y\).}
\else
The only non-trivial case is when \(A = Y\) for some \(Y\).
\fi 
For Item~1, we get \(X \not= Y\) from \(A \in \TFTE^{V\cup\{\,X\,\}}\!\).
Therefore, \(A[B/X] = A \in \TFTE^{V\cup\{\,X\,\}} \subseteq \TFTE^V\) by
Proposition~\ref{tfte-var-subseteq}.
For Item~2, suppose that \(A \in \TFTE^V\) and \(B \in \TFTE^V\!\).
We similarly get \(A[B/X] = A \in \TFTE^V\) if \(X \not= Y\).
Otherwise, i.e., if \(X = Y\), then \(A[B/X] = B \in \TFTE^V\!\).
For Items~3 and 4, suppose that \(A[B/X] \in \TFTE^V\!\).
If \(X \not= Y\), then \(A = A[B/X] \in \TFTE^V\); therefore,
\(A \in \TFTE^{V-\{\,X\,\}}\) by Proposition~\ref{tfte-var-subseteq}, and
\(A \in \TFTE^{V\cup\{\,X\,\}}\) by Proposition~\ref{tfte-var-petv},
since \(X \not \in \FTV{A}\).
Otherwise, i.e., if \(X = Y\), then \(A = X \in \TFTE^{V-\{\,X\,\}}\)
by Definition~\ref{tfte-def}, and \(B = A[B/X] \in \TFTE^V\!\).
\ifdetail

\Case{\(A = \O C\) for some \(D\).}
For Item~1, suppose that \(A \in \TFTE^{V\cup\{\,X\,\}}\!\).
Then, \(C \in \TFTE^{V\cup\{\,X\,\}}\) by Definition~\ref{tfte-def}.
Therefore, \(C[B/X] \in \TFTE^V\) by induction hypothesis;
and hence, \(A[B/X] = \O C[B/X] \in \TFTE^V\) by Definition~\ref{tfte-def}
again.
Similar for Items~2, 3 and 4.

\Case{\(A = C \Impl D\) for some \(C\) and \(D\).}
For Item~1, suppose that \(A \in \TFTE^{V\cup\{\,X\,\}}\!\).
Then \(D \in \TFTE^{V\cup\{\,X\,\}}\) by Definition~\ref{tfte-def}.
Therefore, we get \(D[B/X] \in \TFTE^V\) by induction hypothesis;
and hence, \(A[B/X] = C[B/X] \Impl D[B/X] \in \TFTE^V\) by
Definition~\ref{tfte-def}.
Similar for Items~2, 3 and 4.

\Case{\(A = \fix{Y}C\) for some \(Y\) and \(C\).}
We can assume that \(Y \not\in \FTV{B} \cup \{\,X\,\}\).
For Item~1, we get \(C \in \TFTE^{V\cup\{\,X,\,Y\,\}}\)
from \(A \in \TFTE^{V\cup\{\,X\,\}}\) by Definition~\ref{tfte-def}.
Therefore, \(C[B/X] \in \TFTE^{V\cup\{\,Y\,\}}\) by induction hypothesis.
Hence, \(A[B/X] = (\fix{Y}C)[B/X] = \fix{Y}C[B/X] \in \TFTE^V\)
by Definition~\ref{tfte-def}.
Similar for Items~2, 3 and 4.
Use Proposition~\ref{tfte-var-petv} to get \(B \in \TFTE^{V\cup\{\,Y\,\}}\)
from \(B \in \TFTE^{V}\) in the proof of Item~2.
Use Proposition~\ref{tfte-var-subseteq} to get \(B \in \TFTE^{V}\)
from \(B \in \TFTE^{V\cup\{\,Y\,\}}\) in the proof of Item~4.
\fi 
\qed\CHECKED{2014/05/14, 07/20}
\end{proof}

Now we can establish the equivalence of the two definitions.
A type expression \(A\) is tail finite if and only if
\(A \in \TFTE\).
This is shown by proving the following proposition.

\begin{proposition}\label{tf-tfte-not-top-equiv}
Let \(V\) be a set of type variables.
The following three conditions are equivalent.
\begin{Enumerate}
\item[(a)] \(A \in \TFTE^V\)
\item[(b)] \(A \geqtyp \O^{m_0}(B_1 \Impl \O^{m_1}(B_2 \Impl \O^{m_2}(B_3 \Impl
\ldots \Impl \O^{m_{n{-}1}}(B_n \Impl \O^{m_n} X)\ldots)))\)
for some \(n\), \(m_0\), \(m_1\), \(m_2\), \(\ldots\), \(m_n\),
\(B_1\), \(B_2\), \(B_3\), \(\ldots\), \(B_n\) and \(X \not\in V\).
\item[(c)] \(A[\t/Y_1,\,\ldots,\,\t/Y_k] \ngeqtyp \t\), where
\(Y_1,\,\ldots,\,Y_k\) are distinct type variables such that
\(V = \{\,Y_1,\,\ldots,\,Y_k\,\}\).
\end{Enumerate}
Note that in case of\/ \(\peqtyp\), Condition~(b) is equivalent to
\begin{Enumerate}
\item[(b\({}'\))] \(A \peqtyp B_1 \Impl B_2 \Impl B_3 \Impl
\ldots \Impl B_n \Impl \O^{m} X\) for some \(n\), \(m\),
\(B_1\), \(B_2\), \(B_3\), \(\ldots\), \(B_n\) and \(X \not\in V\).
\end{Enumerate}
\end{proposition}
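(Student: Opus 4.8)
The plan is to establish the three-way equivalence by proving the cycle (a)~$\Rightarrow$~(b)~$\Rightarrow$~(c)~$\Rightarrow$~(a). The first implication requires no new work, since it is exactly the content of Proposition~\ref{tfte-tf}: membership in $\TFTE^V$ already yields a witness of the iterated tail-finite shape with $X \notin V$.

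For (b)~$\Rightarrow$~(c), let $C = \O^{m_0}(B_1 \Impl \O^{m_1}(B_2 \Impl \ldots \Impl \O^{m_n}X)\ldots)$ be the witness of (b), so that $A \geqtyp C$ with $X \notin V$. Substituting $\t$ for $Y_1,\ldots,Y_k$ one variable at a time and using $\t \geqtyp \t$, Proposition~\ref{geqtyp-subst} gives $A[\t/Y_1,\ldots,\t/Y_k] \geqtyp C[\t/Y_1,\ldots,\t/Y_k]$. Because $X \notin V = \{Y_1,\ldots,Y_k\}$, the occurrence of $X$ survives, so $C[\t/Y_1,\ldots,\t/Y_k]$ again has the form $\O^{m_0}(\cdots \Impl \O^{m_n}X)\cdots$. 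Stripping the leading $\O$'s and the implications by repeated use of Propositions~\ref{geqtyp-O-t} and~\ref{geqtyp-Impl-t} reduces the question of whether this expression is $\geqtyp\t$ to whether $X \geqtyp \t$; but $X \ngeqtyp \t$ by Proposition~\ref{geqtyp-var-t}. Hence $C[\t/Y_1,\ldots,\t/Y_k] \ngeqtyp \t$, and by transitivity $A[\t/Y_1,\ldots,\t/Y_k] \ngeqtyp \t$, which is (c).

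The implication (c)~$\Rightarrow$~(a) I would prove by contraposition, showing that $A \notin \TFTE^V$ implies $A[\t/Y_1,\ldots,\t/Y_k] \geqtyp \t$, by induction on $h(A)$ with $V$ allowed to vary. The three non-recursive cases are routine from Definition~\ref{tfte-def}: if $A$ is a variable, then $A \notin \TFTE^V$ forces $A \in V$, so $A[\t/Y_1,\ldots,\t/Y_k] = \t$; if $A = \O C$ or $A = C \Impl D$, then $A \notin \TFTE^V$ forces $C \notin \TFTE^V$ respectively $D \notin \TFTE^V$, and the induction hypothesis combined with Proposition~\ref{geqtyp-O-t} respectively~\ref{geqtyp-Impl-t} yields $A[\t/Y_1,\ldots,\t/Y_k] \geqtyp \t$. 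The delicate case, which I expect to be the main obstacle, is $A = \fix{Z}C$. Choosing $Z \notin V$, the hypothesis $A \notin \TFTE^V$ together with $\fix{Z}C \in \TE$ gives $C \notin \TFTE^{V\cup\{Z\}}$, so the induction hypothesis applied with $V \cup \{Z\}$ shows that $(C[\t/Y_1,\ldots,\t/Y_k])[\t/Z]$ is a $\t$-variant. Since $\fix{Z}C \in \TE$, $C$ is proper in $Z$, and because $\t$ is proper in every variable (Proposition~\ref{tvariant-proper}), Proposition~\ref{proper-subst1} shows $C[\t/Y_1,\ldots,\t/Y_k]$ is still proper in $Z$. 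Proposition~\ref{tvariant-proper-subst} then lifts the $\t$-variant property of $(C[\t/Y_1,\ldots,\t/Y_k])[\t/Z]$ to $\fix{Z}(C[\t/Y_1,\ldots,\t/Y_k]) = A[\t/Y_1,\ldots,\t/Y_k]$, which is therefore $\geqtyp\t$ by Theorem~\ref{geqtyp-t-tvariant}. This closes the induction and the cycle.

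Finally, for the parenthetical remark that (b) and (b$'$) coincide under $\peqtyp$, note that (b$'$) is literally the special case of (b) with $m_0 = \cdots = m_{n-1} = 0$, so (b$'$)~$\Rightarrow$~(b) is trivial. For the converse I would repeatedly apply the rule \r{{\peqtyp}\mbox{-{\bf K}/{\bf L}}} to distribute each block of modal operators across the implications, rewriting $\O^{m_0}(B_1 \Impl \O^{m_1}(\cdots))$ as $\O^{m_0}B_1 \Impl \O^{m_0+m_1}(\cdots)$ and iterating; this collapses the witness of (b) to one of the form $B_1' \Impl \cdots \Impl B_n' \Impl \O^m X$ with $B_i' = \O^{m_0+\cdots+m_{i-1}}B_i$ and $m = m_0 + \cdots + m_n$, leaving $X \notin V$ untouched, which is exactly the shape demanded by (b$'$).
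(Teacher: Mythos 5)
Your proof is correct and follows essentially the same route as the paper's: (a)$\Rightarrow$(b) is Proposition~\ref{tfte-tf}, (b)$\Rightarrow$(c) goes by substituting $\t$ via Proposition~\ref{geqtyp-subst} and observing the surviving tail variable $X \not\in V$ blocks $\geqtyp \t$, and (c)$\Rightarrow$(a) is the same contrapositive induction on $h(A)$. The only cosmetic divergence is in the $\fix{Z}C$ case, where the paper concludes directly with \r{{\eqtyp}\mbox{-uniq}} while you reroute through the syntactic {\tvariant} characterization (Theorem~\ref{geqtyp-t-tvariant} and Proposition~\ref{tvariant-proper-subst}) — both are valid and rest on the same underlying rule.
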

\begin{proof}
It suffices to show the following three propositions:
(a) \(\Rightarrow\) (b),
(b) \(\Rightarrow\) (c), and
(c) \(\Rightarrow\) (a).
The first one has been already shown as Proposition~\ref{tfte-tf}.
The rule \r{{\eqtyp}\mbox{-}{\Impl}{\t}} of Definition~\ref{eqtyp-def}
is crucial to prove the last part.
However, we can even show (b) \(\Rightarrow\) (a) independently of the rule.

\paragraph{Proof of ``(b) \(\Rightarrow\) (c)''}
Suppose that (b) holds, and let \(B_i' = B_i[\t/Y_1,\,\ldots,\,\t/Y_k]\)
(\(i = 1,\,2,\,3,\,\ldots,\,n\)).
Since \(X \not\in \{\,Y_1,\,\ldots,\,Y_k\,\}\),
we get \(A[\t/Y_1,\,\ldots,\,\t/Y_k] \geqtyp
\O^{m_0}(B_1' \Impl \O^{m_1}(B_2' \Impl \O^{m_2}(B_3' \Impl
\ldots \Impl \O^{m_{n{-}1}}(B_n' \Impl \O^{m_n} X)\ldots)))\)
by Proposition~\ref{geqtyp-subst}, which is not a {\tvariant}.
Therefore,
\(A[\t/Y_1,\,\ldots,\,\t/Y_k] \ngeqtyp \t\)
by Theorem~\ref{geqtyp-t-tvariant}.

\paragraph{Proof of ``(c) \(\Rightarrow\) (a)''}
We prove the contrapositive by induction on \(h(A)\),
and by cases on the form of \(A\).
Suppose that (a) is not the case, i.e., \(A \not\in \TFTE^V\).
Let \(Y_1,\,\ldots,\,Y_k\) be distinct type variables such that
\(V = \{\,Y_1,\,\ldots,\,Y_k\,\}\).
It suffices to show that \(A[\t/Y_1,\,\ldots,\,\t/Y_k] \geqtyp \t\).

\Case{\(A = Y\) for some \(Y\).}
We get \(Y \in V\) from \(A \not\in \TFTE^V\)
by Definition~\ref{tfte-def}.
Therefore, \(A[\t/Y_1,\,\ldots,\,\t/Y_k] = \t\).

\Case{\(A = \O C\) for some \(C\).}
By Definition~\ref{tfte-def}, \(C \not\in \TFTE^V\).
By induction hypothesis, \(C[\t/Y_1,\,\ldots,\,\t/Y_k] \geqtyp \t\).
Hence, \(A[\t/Y_1,\,\ldots,\,\t/Y_k] = \O C[\t/Y_1,\,\ldots,\,\t/Y_k]
\geqtyp \O \t \geqtyp \t\) by Proposition~\ref{geqtyp-t1}.

\Case{\(A = C \Impl D\) for some \(C\) and \(D\).}
Similarly, \(D \not\in \TFTE^V\) by Definition~\ref{tfte-def}.
Hence, by induction hypothesis, \(D[\t/Y_1,\,\ldots,\,\t/Y_k] \geqtyp \t\).
Therefore, \(A[\t/Y_1,\,\ldots,\,\t/Y_k]
= C[\t/Y_1,\,\ldots,\,\t/Y_k] \Impl D[\t/Y_1,\,\ldots,\,\t/Y_k]
\geqtyp C[\t/Y_1,\,\ldots,\,\t/Y_k] \Impl \t \geqtyp \t\)
by \r{{\eqtyp}\mbox{-}{\Impl}{\t}} of Definition~\ref{eqtyp-def}.

\Case{\(A = \fix{Z}{C}\) for some \(Z\) and \(C\).}
By Definition~\ref{tfte-def}, \(C \not\in \TFTE^{V\cup\{\,Z\,\}}\).
By induction hypothesis, \(C[\t/Z,\,\t/Y_1,\,\ldots,\,\t/Y_k] \geqtyp \t\).
Therefore, \(A[\t/Y_1,\,\ldots,\,\t/Y_k]
    = \fix{Z}{C[\t/Y_1,\,\ldots,\,\t/Y_k]} \geqtyp \t\)
by \r{{\eqtyp}\mbox{-uniq}} of Definition~\ref{eqtyp-def}.
\qed\CHECKED{2014/05/14, 07/20}
\end{proof}

\begin{theorem}\label{tf-tfte-not-top-theorem}
The following four conditions are equivalent.
\begin{Enumerate}
\item[(a)] \(A \in \TFTE\)
\item[(b)] \(A\) is tail finite.
\item[(c)] \(A \ngeqtyp \t\).
\item[(d)] \(A\) is not a {\tvariant}.
\end{Enumerate}
\end{theorem}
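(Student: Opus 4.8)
The plan is to establish the four-way equivalence by chaining together results that are already available, rather than proving each implication from scratch. I would prove the cycle (a) $\Rightarrow$ (b) $\Rightarrow$ (c) $\Rightarrow$ (d) $\Rightarrow$ (a), so that each arrow is short. The key observation is that Proposition~\ref{tf-tfte-not-top-equiv} already does almost all the work once we specialize it to the case $V = \{\}$, since then $\TFTE^{\{\}} = \TFTE$ and the substitution list $[\t/Y_1,\ldots,\t/Y_k]$ is empty.

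First I would note that taking $V = \{\}$ in Proposition~\ref{tf-tfte-not-top-equiv} makes conditions (a), (b), (c) of that proposition read exactly as conditions (a), (b), (c) of the present theorem: condition (a) becomes $A \in \TFTE$; condition (b) becomes the defining form of tail finiteness in Definition~\ref{tf-def} (here using $\geqtyp$, but by the remark following Definition~\ref{tf-def} and the $\peqtyp$-variant (b${}'$) this matches Definition~\ref{tf-def} regardless of which equality we read it with); and condition (c), with the empty substitution, is simply $A \ngeqtyp \t$. Thus the equivalence of (a), (b) and (c) of the theorem is an immediate instance of Proposition~\ref{tf-tfte-not-top-equiv}. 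The only remaining task is to connect these to condition (d), that $A$ is not a {\tvariant}.

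For the link to (d), I would invoke Theorem~\ref{geqtyp-t-tvariant}, which states that $A$ is a {\tvariant} if and only if $A \geqtyp \t$. Negating both sides gives that $A$ is not a {\tvariant} if and only if $A \ngeqtyp \t$, which is precisely the equivalence of (c) and (d). Combining this with the equivalence of (a), (b), (c) already obtained completes the four-way equivalence. Concretely, I would write the proof as: ``The equivalence of (a), (b) and (c) follows from Proposition~\ref{tf-tfte-not-top-equiv} by taking $V = \{\}$. The equivalence of (c) and (d) is immediate from Theorem~\ref{geqtyp-t-tvariant}.''

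I do not anticipate a genuine obstacle here, since the theorem is essentially a repackaging of two earlier results. The one point requiring a little care is the bookkeeping around the definition of tail finiteness: Definition~\ref{tf-def} uses $\eqtyp$, whereas Proposition~\ref{tf-tfte-not-top-equiv}(b) is stated with the generic $\geqtyp$ together with the separate $\peqtyp$-form (b${}'$). I would make explicit, perhaps in a sentence, that the two formulations of tail finiteness agree — this is exactly the content foreshadowed in the remark that ``either of the two definitions leads us to the identical notion of tail finiteness'' — so that condition (b) of the theorem is unambiguous. Once that is acknowledged, the proof is just the two citations assembled above.
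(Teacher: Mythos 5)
Your proof is correct and is exactly the paper's own argument: the paper proves this theorem by citing Proposition~\ref{tf-tfte-not-top-equiv} with $V = \{\}$ together with Theorem~\ref{geqtyp-t-tvariant}. Your additional remark about reconciling the $\eqtyp$/$\peqtyp$ formulations of tail finiteness is a reasonable clarification but does not change the substance.
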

\begin{proof}
Obvious from Theorem~\ref{geqtyp-t-tvariant} and
Proposition~\ref{tf-tfte-not-top-equiv} considering the case that \(V = \{\}\).
This theorem fails if we adopt F-semantics of types
by dropping \r{{\eqtyp}\mbox{-}{\Impl}{\t}} from Definition~\ref{peqtyp-def}.
However, (a) \(\Leftrightarrow\) (b) remains to hold even in such a case.
\qed\CHECKED{2014/05/14, 07/20}
\end{proof}

Finally, we show that every \(\lambda\)-term of tail finite types
is head normalizable, by applying the soundness of {\lA}
with respect to the term model.
The proof is quite parallel to the standard method
using convertibility predicates due to Tait\cite{tait}.

\begin{definition}
    \ilabel{term-model-kernel-def}{Kh@$\protect\Kh$}
We define a subset \(\Kh\) of\/ \(\IE/{\beq}\,\) as follows.
\[
\Kh = \Zfset{[\,\app{x}{N_1 N_2 \ldots N_n}]}
		{\tabcolsep=0pt
		    \ifnarrow
			\begin{tabular}{p{180pt}}%
		    \else
			\begin{tabular}{l}%
		    \fi
		    \(x \in \IV\), \(n \ge 0\) and
		    \(N_i \in \IE\) for every \(i\)
		    (\(i = 1,\,2,\,\ldots,\,n\))\end{tabular}}.
\]
\end{definition}

\begin{lemma}\label{term-model-inc-kernel-lemma}
Consider the term model of\/ \(\IE\),
a well-founded frame \(\pair{\W}{\acc}\) and
a type environment \(\tenv\) such that
\(\Kh \subseteq \tenv(X)_p\) for every type variable \(X\) and \(p \in \W\).
Then,
\(\Kh \subseteq \I{A}^\tenv_p\,\) for every \(A\) and \(p \in \W\).
\end{lemma}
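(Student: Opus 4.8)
The plan is to prove the statement by induction on the lexicographic ordering of the pair $\pair{p}{r(A)}$, following the same inductive scheme used throughout Section~\ref{semantics-sec} (e.g., in the definition of $\I{A}^\tenv_p$ and in Proposition~\ref{rlz-subst-env}). Since $\Kh\subseteq\tenv(X)_p$ is assumed to hold for every type variable and every world, the variable case will be immediate, and the real content lies in propagating membership of kernel elements through the type constructors. First I would dispose of the case where $A$ is a {\tvariant}: then $\I{A}^\tenv_p=\V$ by Definition~\ref{rlz-def}, so $\Kh\subseteq\I{A}^\tenv_p$ trivially. Hence I may assume $A$ is not a {\tvariant} and proceed by cases on its form, using Proposition~\ref{rlz-another-def} so that the defining equations are available for all type expressions.

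For $A=X$, membership holds by hypothesis on $\tenv$. For $A=\O B$, I would take any $[\app{x}{N_1\ldots N_n}]\in\Kh$ and check $[\app{x}{N_1\ldots N_n}]\in\I{B}^\tenv_q$ for every $q$ with $p\acc q$; this follows from the induction hypothesis applied at $q$ (the world strictly decreases along $\acc$), noting that the hypothesis $\Kh\subseteq\tenv(X)_q$ holds at every world. For $A=B\Impl C$, the key observation is that the kernel is closed under application: if $u=[\app{x}{N_1\ldots N_n}]\in\Kh$ and $v=[N]$ is arbitrary, then $u\cdot v=[\app{x}{N_1\ldots N_n N}]$ is again in $\Kh$ by Definition~\ref{term-model-def}. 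Thus for any $q$ with $p\tacc q$ and any $v\in\I{B}^\tenv_q$, I get $u\cdot v\in\Kh\subseteq\I{C}^\tenv_q$ by the induction hypothesis on $C$ (whose rank is strictly smaller since $A$ is not a {\tvariant}), giving $u\in\I{B\Impl C}^\tenv_p$ via Proposition~\ref{rlz-another-def}. Finally, for $A=\fix{X}B$ not a {\tvariant}, I would use $\I{\fix{X}B}^\tenv_p=\I{B[\fix{X}B/X]}^\tenv_p$ together with $r(B[\fix{X}B/X])<r(\fix{X}B)$ from Proposition~\ref{rank-fix}, so the induction hypothesis applies to the unfolded expression.

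I expect the main obstacle to be purely bookkeeping rather than conceptual: in the $\O$ and $\fix{}$ cases one must be careful that the well-founded induction on $\pair{p}{r(A)}$ is invoked at the correct component — the world decreases for $\O$ (along the transitive closure of $\acc$, which is well-founded since $\acc$ is), while the rank decreases for $\fix{}$ and $\Impl$ at a fixed world. The essential algebraic fact driving the whole argument is the closure of $\Kh$ under $\cdot$ in the term model, which is the only place the specific structure of head-variable terms $\app{x}{N_1\ldots N_n}$ is used; everything else is a routine transfer of the standard Tait-style computability argument to this well-founded modal setting. No use of local linearity or of the {\lA}-frame condition is needed here, since the lemma is stated for an arbitrary well-founded frame.
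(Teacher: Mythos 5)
Your proposal is correct and follows essentially the same route as the paper's proof: induction on the lexicographic ordering of \(\pair{p}{r(A)}\), immediate dismissal of the {\tvariant} case, and the same case analysis with the closure of \(\Kh\) under application as the crux of the \(\Impl\) case. Your closing observations (rank decreases for \(\Impl\) and \(\fix{}\) at a fixed world, world decreases for \(\O\), no need for local linearity or hereditarity) match the paper's treatment exactly.
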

\begin{proof}
By induction on the lexicographic ordering of \(\pair{p}{r(A)}\),
and by cases on the form of \(A\).
We assume that \(A\) is not a {\tvariant},
since \(\Kh \subseteq \I{A}^\tenv_p = \V\) if \(A\) is.
For this proof, \(\tenv\) need not to be hereditary.

\Case{\(A = Y\) for some \(Y\).}
By Definition~\ref{rlz-def}, \(\I{A}^\tenv_p = \tenv(Y)_p \supseteq \Kh\).

\Case{\(A = \O B\) for some \(B\).}
In this case,
\(\I{A}^\tenv_p =
    \zfset{u}{u \in \I{B}^\tenv_q~~\mbox{for every}~q \opacc p}
    \supseteq \Kh\)
since \(\Kh \subseteq \I{B}^\tenv_q~~\mbox{for every}~q \opacc p\)
by induction hypothesis.

\Case{\(A = B \Impl C\) for some \(B\) and \(C\).}
By Definition~\ref{rlz-def},
\(\I{A}^\tenv_p =
        \zfset{u}{u \cdot v \in \I{C}^\tenv_q~~\mbox{for every}~
	    v \in \I{B}^\tenv_q~~\mbox{whenever}~p \tacc q}\).
Suppose that \([\,\app{x}{N_1 N_2 \ldots N_n}] \in \Kh\).
Then,
for every \(L \in \IE\),
\([\,\app{x}{N_1 N_2 \ldots N_n}] \cdot [L]
= [\,\app{x}{N_1 N_2 \ldots N_n L}] \in \Kh\).
Hence, for every \(q\) such that \(p \tacc q\),
we get \([\,\app{x}{N_1 N_2 \ldots N_n L}] \in \I{C}^\tenv_q\)
since \(\Kh \subseteq \I{C}^\tenv_q\strut\) by induction hypothesis.
Therefore,
\([\,\app{x}{N_1 N_2 \ldots N_n}] \in \I{B \Impl C}^\tenv_p\).

\Case{\(A = \fix{Y}{B}\) for some \(Y\) and \(B\).}
By Definition~\ref{rlz-def},
\(\I{A}^\tenv_p = \I{B[A/Y]}^\tenv_p\).
On the other hand,
since \(r(B[A/Y]) < r(A)\) by Proposition~\ref{rank-fix},
we get
\(\Kh \subseteq \I{B[A/Y]}^\tenv_p\) by induction hypothesis.
\qed\CHECKED{2014/05/14, 07/20}
\end{proof}

Note that since every {\lA}-frame is also a well-founded frame,
Lemma~\ref{term-model-inc-kernel-lemma} also holds in the case that
\(\pair{\W}{\acc}\) is a {\lA}-frame.
Now we show the main results of this subsection, which is stated
as the following theorem.

\begin{theorem}\label{typ-hnf-theorem}
Let \(V\) be a set of type variables, and
\(A\) a type expression such that \(A \in \TFTE^V\!\).
If\/ \(\typ{\G}{M:A}\) is derivable in {\lA},
then \(M\) has a head normal form.
\end{theorem}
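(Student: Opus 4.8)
The plan is to apply the soundness theorem (Theorem~\ref{soundness-theorem}) to an interpretation over the term model in which every type variable denotes the head normalizable terms, following the standard reducibility method. First I would put $A$ into a convenient shape: since $A \in \TFTE^V$, Proposition~\ref{tfte-tf} gives $A \geqtyp A'$ with $A' = \O^{m_0}(B_1 \Impl \O^{m_1}(B_2 \Impl \ldots \Impl \O^{m_{n-1}}(B_n \Impl \O^{m_n} X)\ldots))$, and by soundness of the equality (Theorem~\ref{eqtyp-soundness}, or Theorem~\ref{peqtyp-soundness} in case of $\peqtyp$) the interpretations of $A$ and $A'$ agree in every world of every {\lA}-frame. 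I would then fix the {\lA}-frame $\pair{\N}{>}$ of natural numbers with the greater-than relation (well-founded and linear, hence locally linear), work in the term model $\tuple{\V,\cdot,\VI{\;}{}}$, which is a syntactical $\lambda$-algebra by Proposition~\ref{term-model-lambda-algebra}, and set $\mathit{HN} = \{\,[N] : N \text{ is head normalizable}\,\}$, observing $\Kh \subseteq \mathit{HN}$. Taking the constant (hence hereditary) type environment $\tenv(Y)_q = \mathit{HN}$ meets the hypothesis of Lemma~\ref{term-model-inc-kernel-lemma}, so $\Kh \subseteq \I{C}^\tenv_q$ for every type expression $C$ and world $q$.

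The technical heart is to show $\I{A'}^\tenv_p \subseteq \mathit{HN}$ at the world $p = m_0 + m_1 + \cdots + m_n$. Writing $D_n = \O^{m_n} X$ and $D_k = \O^{m_k}(B_{k+1} \Impl D_{k+1})$ for $k < n$, with $d_k = m_k + \cdots + m_n$, I would prove by downward induction on $k$ that $\I{D_k}^\tenv_q \subseteq \mathit{HN}$ whenever $q \ge d_k$. For the base $D_n$, at $q \ge m_n$ there is a chain $q > r_1 > \cdots > r_{m_n}$, so any $u \in \I{\O^{m_n} X}^\tenv_q$ lies in $\I{X}^\tenv_{r_{m_n}} = \mathit{HN}$. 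For the step, given $u \in \I{D_k}^\tenv_q$ with $q \ge d_k$, descending $m_k$ accessibility steps (along $q > q-1 > \cdots > q - m_k =: r$, so $r \ge d_{k+1}$) yields $u \in \I{B_{k+1} \Impl D_{k+1}}^\tenv_r$; feeding $u$ a fresh variable $[y] \in \Kh \subseteq \I{B_{k+1}}^\tenv_r$ gives $u \cdot [y] = [\app{U}{y}] \in \I{D_{k+1}}^\tenv_r \subseteq \mathit{HN}$ by the induction hypothesis, so $\app{U}{y}$ is head normalizable, whence $U$ is head normalizable and $u \in \mathit{HN}$. Since $A' = D_0$ and $p = d_0$, this gives $\I{A'}^\tenv_p \subseteq \mathit{HN}$.

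Finally I would assemble the result. Take $\ienv(z) = [z]$ for every individual variable, with $z$ itself as the representative of $\ienv(z)$, so that $\VI{M}{\ienv} = [M]$ by Definition~\ref{term-model-def}. For each $x_i \in \Dom(\G)$ the hypothesis $\ienv(x_i) \in \I{\G(x_i)}^\tenv_p$ holds because $[x_i] \in \Kh \subseteq \I{\G(x_i)}^\tenv_p$ by Lemma~\ref{term-model-inc-kernel-lemma}. Hence soundness (Theorem~\ref{soundness-theorem}) yields $[M] = \VI{M}{\ienv} \in \I{A}^\tenv_p = \I{A'}^\tenv_p \subseteq \mathit{HN}$, i.e.\ $M$ has a head normal form.

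The step I expect to be the main obstacle is the \emph{reflection} property used in the inductive step: that head normalizability of $\app{U}{y}$ forces head normalizability of $U$. This is a standard fact (equivalently, solvability of $U N$ entails solvability of $U$), but it is the one place where a genuine property of $\beta$-reduction, rather than mere manipulation of the semantics, is invoked. A secondary nuisance is the depth bookkeeping: each world reached along the descent must retain enough accessible predecessors for the induction hypothesis, which is why $p$ is chosen to be the total modal depth $m_0 + \cdots + m_n$ (equivalently one could work at a limit world of the ordinal frame).
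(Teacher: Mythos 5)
Your proof is correct, and it reorganizes the argument differently from the paper's. The paper proves the theorem by induction on \(h(A)\), carrying the conditions that \(\Kh \subseteq \tenv(X)_p\) for all \(X\) and that \(\tenv(X)_p = \Kh\) exactly for \(X \not\in V\); the case \(A = \fix{Y}B\) is handled by unfolding and updating the environment to \(\tenv[\I{\fix{Y}B}^\tenv/Y]\), using \(B \in \TFTE^{V\cup\{\,Y\,\}}\). You instead discharge all the recursive-type bookkeeping up front: Proposition~\ref{tfte-tf} replaces \(A\) by its tail-finite spine \(A'\), and Theorem~\ref{eqtyp-soundness} (or \ref{peqtyp-soundness}) transfers the interpretation, after which only a finite downward induction along the spine remains, with the constant hereditary environment \(\mathit{HN}\). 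Both routes share the essential ingredients --- the term model, the frame \(\pair{\N}{>}\), Lemma~\ref{term-model-inc-kernel-lemma} to feed fresh variables into the \(B_i\), and the reflection step that head normalizability of \(U\,y\) yields head normalizability of \(U\). What your version buys is a simpler induction: no \(\fx\)-case and no \(V\)-dependence in the environment (indeed you prove the slightly more general statement for any tail finite \(A\), which suffices since \(\TFTE^V \subseteq \TFTE\)). What it costs is the extra reliance on the soundness of the type equality and on the fact that head normalizability is closed under \(\beq\) (needed because \(\mathit{HN}\) is a set of \(\beq\)-classes); the paper needs only the weaker observation that \([M]\in\Kh\) forces \(M\) to be head normalizable. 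Your choice of the single world \(p = m_0+\cdots+m_n\) is legitimate because the soundness theorem gives \([M]\in\I{A}^\tenv_p\) for every \(p\), and it is exactly what keeps every descent along \(>\) in your induction inside \(\N\).
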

\begin{proof}
Consider the term model and a {\lA}-frame \(\pair{\N}{>}\)
where \(\N\) is the set of natural numbers and
\({>}\) is the greater-than relation between natural numbers.
Note that \(\pair{\N}{>}\) is also a well-founded frame.
Let \(\ienv\) be an individual environment such that
\(\ienv(x) = [x]\) for every \(x \in \IV\).
Then, \(\ienv(x) \in \Kh \subseteq \I{\G(x)}^\tenv_p\)
for every \(x \in \Dom(\G)\), \(\tenv\) and \(p \in \N\)
by Lemma~\ref{term-model-inc-kernel-lemma}.
Hence, \([M] \in \I{A}^\tenv_p\) for every \(\tenv\) and \(p\strut\)
by Theorem~\ref{soundness-theorem} because \(\VI{M}{\rho} = [M]\).
Since \(\tenv\) can be any hereditary type environment,
it suffices to show that \(M\) has a head normal form whenever
\begin{Enumerate}
\item[(a)] \(A \in \TFTE^V\!\),
\item[(b)] \(\Kh \subseteq \tenv(X)_p\) for every \(p \in \N\) and \(X\),
\item[(c)] \(\tenv(X)_p = \Kh\) for every \(p \in \N\) and \(X \not\in V\), and
\item[(d)] \([M] \in \I{A}^\tenv_p\) for every \(p \in \N\).
\end{Enumerate}
The proof proceeds by induction on \(h(A)\),
and by cases on the form of \(A\).
Suppose (a) through (d).

\Case{\(A = X\) for some \(X\).}
In this case, \(X \not\in V\) from (a); and hence,
\([M] \in \I{A}^\tenv_p = \tenv(X)_p = \Kh\) by (c) and (d).
Therefore, \(M\) has a head normal form.

\Case{\(A = \O B\) for some \(B\).}
In this case, \(h(B) < h(A)\) and \(B \in \TFTE^V\) by (a).
Therefore, \(M\) has a head normal form by induction hypothesis,
since \([M] \in \I{\O B}^\tenv_{p{+}1} = \I{B}^\tenv_p\) for every \(p\).

\Case{\(A = B \Impl C\) for some \(B\) and \(C\).}
In this case, \(h(C) < h(A)\) and \(C \in \TFTE^V\) by (a).
Let \(y\) be a fresh individual variable.
Since \([M] \in \I{B \Impl C}^\tenv_p\) and
\([y] \in \Kh \subseteq \I{B}^\tenv_p\) for every \(p\)
by (b) and Lemma~\ref{term-model-inc-kernel-lemma},
we get
\([\app{M}{y}] \in \I{C}^\tenv_p\) for every \(p\).
Therefore, \(\app{M}{y}\) has a head normal form, say \(L\),
by induction hypothesis.
There are two possible cases: for some \(K\),
({\romannumeral 1}) \(M \ctc K\) and \(L = \app{K}{y}\), or
({\romannumeral 2}) \(M \ctc \lam{y}{K}\) and \(K \ctc L\).
In either case, \(M\) obviously has a head normal form.

\Case{\(A = \fix{Y}B\) for some \(Y\) and \(B\).}
In this case, \(h(B) < h(A)\) and \(B \in \TFTE^{V\cup\{\,Y\,\}}\) by (a).
By Definition~\ref{rlz-def} and Proposition~\ref{rlz-subst-env},
we get
\(\I{\fix{Y}B}^\tenv_p = \I{B[\fix{Y}B/Y]}^\tenv_p
= \I{B}^{\tenv'}_p\), where \(\tenv' = \tenv[\I{\fix{Y}B}^\tenv/Y]\).
Note that (\(\mbox{a}'\)) \(B \in \TFTE^{V\cup\{\,Y\,\}}\!\),
(\(\mbox{b}'\)) \(\Kh \subseteq \tenv'(X)_p\) for every \(p\) and \(X\)
since \(\Kh \subseteq \I{\fix{Y}B}^\tenv = \tenv'(Y)\)
by (b) and Lemma~\ref{term-model-inc-kernel-lemma},
(\(\mbox{c}'\)) \(\tenv'(X)_p = \Kh\) for every \(p\)
    and \(X \not\in V\cup\{\,Y\,\}\), and
(\(\mbox{d}'\)) \([M] \in \I{B}^{\tenv'}_p\) for every \(p\).
Therefore, \(M\) has a head normal form by induction hypothesis.
\qed\CHECKED{2014/05/14, 07/20, 07/22}
\end{proof}

\begin{corollary}
Every \(\lambda\)-term of tail finite types is head normalizable.
\end{corollary}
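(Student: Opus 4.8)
The plan is to obtain this corollary essentially for free from the two main results just established, Theorem~\ref{typ-hnf-theorem} and the characterization Theorem~\ref{tf-tfte-not-top-theorem}. The point is that the hard work---the Tait-style reducibility argument over the term model, carried out through Lemma~\ref{term-model-inc-kernel-lemma}---has already been done in full generality (for an arbitrary parameter set $V$ of type variables) inside Theorem~\ref{typ-hnf-theorem}. Consequently all that remains for the corollary is to specialize that theorem and to translate between the syntactic notion of tail finiteness and membership in the inductively defined set $\TFTE$.

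Concretely, suppose that $M$ has a tail finite type, i.e., that $\typ{\G}{M:A}$ is derivable in {\lA} for some typing context $\G$ and some tail finite type expression $A$. First I would appeal to the equivalence (a)~$\Leftrightarrow$~(b) of Theorem~\ref{tf-tfte-not-top-theorem} to conclude that $A \in \TFTE$, which is exactly $A \in \TFTE^{\{\}}$ upon taking $V = \{\}$. Then I would invoke Theorem~\ref{typ-hnf-theorem} with $V = \{\}$: since $A \in \TFTE^{\{\}}$ and $\typ{\G}{M:A}$ is derivable, the theorem immediately yields that $M$ has a head normal form. By Definition~\ref{hnf-def}, possessing a head normal form is precisely what it means for $M$ to be head normalizable, so the conclusion follows.

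There is no genuine obstacle here: the statement is a direct instantiation of Theorem~\ref{typ-hnf-theorem} at $V = \{\}$, mediated only by the bookkeeping equivalence of Theorem~\ref{tf-tfte-not-top-theorem} that identifies the tail finite type expressions with the elements of $\TFTE$. The only mild point of care is the reading of the quantifiers: the corollary should be understood as asserting that whenever $M$ admits a derivable typing with a tail finite type, $M$ is head normalizable. This matches the hypothesis form of Theorem~\ref{typ-hnf-theorem} exactly, since that theorem fixes an ambient context $\G$ and imposes no further condition on it, so no additional argument about closed terms or about the shape of $\G$ is needed.
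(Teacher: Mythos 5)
Your proposal is correct and matches the paper's own proof exactly: the paper also derives the corollary by combining Theorem~\ref{tf-tfte-not-top-theorem} (to identify tail finite types with elements of $\TFTE = \TFTE^{\{\}}$) with Theorem~\ref{typ-hnf-theorem} instantiated at $V = \{\}$. Nothing further is needed.
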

\begin{proof}
By Theorems~\ref{tf-tfte-not-top-theorem}
and \ref{typ-hnf-theorem} taking \(V\) as \(V =\{\}\)
in \ref{typ-hnf-theorem}.
\qed\CHECKED{2014/05/14, 07/20}
\end{proof}

\Subsection{Positively and negatively finite types}\label{typ-maximal-sec}

In the previous subsection, it has been shown that
every tail finite type, i.e., any type other than {\tvariant}s, is inhabited
only by head normalizable \(\lambda\)-terms.
It can be also shown that,
if the type does not involve any positive (effective) occurrence of
{\tvariant}s, then
the B\"{o}hm tree of any \(\lambda\)-term that belongs to the type
does not include any occurrence of \(\bot\).
For example, we can see that Curry's fixed-point combinator \(\Y\)
has such a B\"{o}hm tree since it has a type \((\O X \Impl X) \Impl X\).

\begin{definition}[Maximal \(\lambda\)-terms]
    \ilabel{max-term-def}{maximal $\lambda$-terms}
A \(\lambda\)-term \(M\) is {\em maximal} if and only if
the B\"{o}hm tree of \(M\) has no occurrence of \(\bot\), i.e.,
\(\lambda\)-terms that are not head normalizable.
\end{definition}
Note that the maximality of \(\lambda\)-terms is closed under \(\beq\).

\begin{definition}[Positively/negatively finite types]
    \ilabel{pnf-type-def}{positively finite}
    \ilabel*{negatively finite}
A type expression \(A\) is
{\em positively }(respectively, {\em negatively}) {\em finite}
if and only if\/ \(C\) is tail finite whenever
\(A \geqtyp B[C/X]\) for some \(B\) and \(X\) such that
\(X \in \PETV{B}\) and \(X \not\in \NETV{B}\)
(respectively, \(X \in \NETV{B}\) and \(X \not\in \PETV{B}\)).
\end{definition}

Although this definition might seem to be dependent on
the choice of \(\geqtyp\),
either of the two ways of definition gives the same notion,
and the following discussion in this subsection does not depend on
which it is.
That will be verified by showing that the equivalent notion can be defined
by the same grammar of type expressions
(cf. Definition~\ref{pfte-nfte-def} and Proposition~\ref{pfte-pf-equiv}).

Obviously, every positively finite type expression is tail finite, and
positively (negatively) finiteness is closed under \(\geqtyp\).
In this subsection,
we show that every \(\lambda\)-term of positively finite types is maximal.
To this end, we again employ alternative definitions of
positively and negatively finiteness as follows.

\begin{definition}
    \ilabel{pfte-nfte-def}{syntax!PF@$\protect\PFTE,\protect\NFTE$}
    \ilabel{pfte-def}{PF@$\protect\PFTE$}
    \ilabel{nfte-def}{NF@$\protect\NFTE$}
We define subsets \(\PFTE\) and \(\NFTE\) of\/ \(\TE\) as follows.
\begin{Eqnarray*}
\PFTE &\bnfdef& \TV \mskip-100mu \\
    &\bnfor& \O \PFTE \\
    &\bnfor& \NFTE \Impl \PFTE \\
    &\bnfor& \fix{X}{A} &
\ifnarrow
	\(\bigg(\begin{tabular}{p{160pt}}
	\(\fix{X}A \in \TFTE\), \(A \in \PFTE\), and
	(a) \(X \not\in \NETV{A}\) or (b) \(A \in \NFTE\).
	\end{tabular}\bigg)\)
\else
	(\(\fix{X}A \in \TFTE\), \(A \in \PFTE\), and
	(a) \(X \not\in \NETV{A}\) or (b) \(A \in \NFTE\))
\fi
	\\[4pt]
\NFTE &\bnfdef& \TV \mskip-100mu \\
    &\bnfor& \O \NFTE \\
    &\bnfor& \PFTE \Impl \NFTE \\
    &\bnfor& \fix{X}{A} &
\ifnarrow
	\(\bigg(\begin{tabular}{p{160pt}}
	\(\fix{X}{A} \in \TE\), \(A \in \NFTE\), and
	(a) \(X \not\in \NETV{A}\) or (b) \(\fix{X}A \in \PFTE\).
	\end{tabular}\bigg)\) \\
\else
	(\(\fix{X}{A} \in \TE\), \(A \in \NFTE\), and
	(a) \(X \not\in \NETV{A}\) or (b) \(\fix{X}A \in \PFTE\)) \\
\fi
    &\bnfor& A & \rm\,(\(A\) is a {\tvariant})
\end{Eqnarray*}
\end{definition}
We can easily check that \(\PFTE\) and \(\NFTE\) are closed under
\(\alpha\)-conversion of type expressions.
Obviously, every finite type expression, i.e.,
one that involves no recursive type, belongs to both \(\PFTE\) and \(\NFTE\).
The grammar of \(\NFTE\) is ambiguous since it says that
any {\tvariant} belongs to \(\NFTE\).
Hence, we should be careful with the fact that
\(A \Impl B \in \NFTE\) does not always imply \(A \in \PFTE\).

Our first task in this subsection is to verify that
\(A\) is positively (respectively, negatively) finite if and only if
\(A \in \PFTE\) (respectively, \(\NFTE\))
(cf. Proposition~\ref{pfte-pf-equiv}), which
also implies that it is decidable
whether a given type expression is positively (negatively) finite or not.

\begin{proposition}\label{pfte-tfte}
\(\PFTE \subseteq \TFTE\).
\end{proposition}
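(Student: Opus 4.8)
The plan is to prove the inclusion by a straightforward structural induction on the type expression $A$, organized as an induction on the height $h(A)$, with cases according to the last clause of the grammar of $\PFTE$ (Definition~\ref{pfte-nfte-def}) used to generate $A$. In each case I would exhibit the matching clause of the grammar of $\TFTE = \TFTE^{\{\}}$ (Definition~\ref{tfte-def}) that witnesses $A \in \TFTE$, invoking the induction hypothesis on the immediate $\PFTE$-subexpression wherever a recursive clause is involved. A key point to notice at the outset is that no simultaneous claim about $\NFTE$ is required: the only property of $\NFTE$ that will be used is $\NFTE \subseteq \TE$, which holds by fiat, since $\PFTE$ and $\NFTE$ were introduced precisely as subsets of $\TE$.

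First I would dispatch the three base/recursive clauses that build a $\PFTE$-expression from a smaller $\PFTE$-expression. If $A = X$ for a type variable $X$, then $X \in \TV - \{\}$, so $A \in \TFTE$ by the first clause of Definition~\ref{tfte-def}. If $A = \O B$ with $B \in \PFTE$, the induction hypothesis gives $B \in \TFTE$, whence $A = \O B \in \TFTE$ by the clause $\O\,\TFTE^{\{\}}$. If $A = B \Impl C$ with $B \in \NFTE$ and $C \in \PFTE$, the induction hypothesis gives $C \in \TFTE$, while $B \in \TE$ because $\NFTE \subseteq \TE$; hence $A = B \Impl C \in \TFTE$ by the clause $\TE \Impl \TFTE^{\{\}}$.

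The remaining case is $A = \fix{X}{B}$, and here the crucial observation is that the defining clause of $\PFTE$ for this form already carries the side condition $\fix{X}{B} \in \TFTE$ among its premises. Consequently $A \in \TFTE$ holds immediately, without any appeal to the induction hypothesis. This is why the whole statement reduces to routine bookkeeping: the tail-finiteness we are asked to establish has been hard-wired into the generation rule for recursive positively finite types, so the $\fix$-case is trivial rather than the usual stumbling block.

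The main (and essentially only) point requiring care is therefore purely definitional, namely reading off from Definition~\ref{pfte-nfte-def} that the $\fix$-clause of $\PFTE$ demands $\fix{X}A \in \TFTE$, and confirming that every $\NFTE$-expression is a legitimate domain for the clause $\TE \Impl \TFTE^{\{\}}$. I anticipate no genuine obstacle; the one pitfall to avoid is being tempted into an unnecessary mutual induction proving $\NFTE \subseteq \TFTE$ as well, which is in fact false and not needed: every {\tvariant} lies in $\NFTE$ but, by Theorem~\ref{tf-tfte-not-top-theorem}, no {\tvariant} lies in $\TFTE$.
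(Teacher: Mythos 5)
Your proof is correct and follows essentially the same route as the paper's: a straightforward induction on \(h(A)\) by cases on the form of \(A\), where the variable, \(\O\), and \(\Impl\) cases are immediate from the induction hypothesis and the clause \(\TE \Impl \TFTE\), and the \(\fix{X}B\) case is trivial because the \(\PFTE\)-clause for recursive types already stipulates \(\fix{X}B \in \TFTE\). Your remark that no mutual induction with \(\NFTE\) is needed (and that \(\NFTE \subseteq \TFTE\) would in fact be false) is accurate.
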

\begin{proof}
Prove that \(A \in \PFTE\) implies \(A \in \TFTE\),
by straightforward induction on \(h(A)\), and by cases on the form of \(A\).
\ifdetail
Suppose that \(A \in \PFTE\).

\Case{\(A = X\) for some \(X\).}
Trivial since \(X \in \TFTE\) by Definition~\ref{tfte-def}.

\Case{\(A = \O B\) for some \(B\).}
We get \(B \in \PFTE\) from \(A \in \PFTE\) by Definition~\ref{pfte-def}.
Therefore, \(B \in \TFTE\) by induction hypothesis; and hence,
\(\O B \in \TFTE\) by Definition~\ref{tfte-def}.

\Case{\(A = B \Impl C\) for some \(B\) and \(C\).}
Similarly, \(C \in \PFTE\) from \(A \in \PFTE\)
by Definition~\ref{pfte-def}.
Therefore, \(C \in \TFTE\) by induction hypothesis; and hence,
\(B \Impl C \in \TFTE\) by Definition~\ref{tfte-def}.

\Case{\(A = \fix{X}B\) for some \(X\) and \(B\).}
Trivial since \(\fix{X}B \in \PFTE\) implies \(\fix{X}B \in \TFTE\)
by Definition~\ref{pfte-def}.
\fi 
\qed\CHECKED{2014/05/15, 07/08, 07/17}
\end{proof}

The following proposition says that we can separate an effective
occurrence of an individual variable in a type expression from others so that
it becomes either positive or negative and not both.
For example, consider a type expression \(A = \fix{Y}\O Y \Impl X\),
in which the occurrence of \(X\) is both positive and negative.
We can obtain a positive and not negative occurrence of \(X\)
by unfolding \(A\) to \(\O\,(\fix{Y}\O Y \Impl X)\Impl X\), in which
so is the rightmost occurrence of \(X\).
Similarly, we can also obtain a negative and not positive occurrence of \(X\)
by unfolding \(A\) to \(\O\,(\O\,(\fix{Y}\O Y \Impl X) \Impl X) \Impl X\).

\begin{proposition}\label{etv-separate}
If\/ \(X \in \PNETV{A}\), then
there exist some \(A'\) and \(X'\) such that
\(A \geqtyp A'[X/X']\), \(X' \in \PNETV{A'}\) and \(X' \not\in \NPETV{A'}\).
\end{proposition}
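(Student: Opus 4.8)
The plan is to prove both polarities simultaneously by induction on the height $h(A)$, reading the $\pm$/$\mp$ notation as a single statement that covers the positive claim ($X \in \PETV{A}$, producing $X' \in \PETV{A'}$, $X' \notin \NETV{A'}$) and the negative claim ($X \in \NETV{A}$, dually), with $X'$ always taken fresh. First I would dispose of the trivial cases: if $A$ is a \tvariant{} then $\ETV{A} = \{\}$ by Proposition~\ref{tvariant-etv}, so the hypothesis $X \in \PNETV{A}$ is vacuous; if $A = X$ I set $A' = X'$ (so $A'[X/X'] = X = A$, and $X \notin \NETV{A}$ makes the negative claim vacuous here); and if $A = Y$ with $Y \neq X$ the hypothesis is again vacuous.

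For $A = \O B$, since $\PNETV{\O B} = \PNETV{B}$, I would apply the same-polarity induction hypothesis to $B$, obtaining $B'$ and $X'$, and take $A' = \O B'$; then $A \geqtyp A'[X/X']$ follows from $B \geqtyp B'[X/X']$ by \r{{\eqtyp}\mbox{-}{\O}}, and $\O B'$ is not a \tvariant{} by Proposition~\ref{O-tvariant}. For $A = B_1 \Impl B_2$, the decomposition $\PETV{A} = \NETV{B_1} \cup \PETV{B_2}$ (and dually for $\NETV{A}$) forces me to use the \emph{opposite}-polarity hypothesis on the antecedent $B_1$ and the \emph{same}-polarity hypothesis on $B_2$; this crossing is exactly why the induction must be run simultaneously. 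In each subcase I replace only the relevant summand, leaving the other fixed, and freshness of $X'$ guarantees it does not occur in the untouched summand.

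The only real work is the case $A = \fix{Y}B$ (with $Y \neq X$ and $Y$ distinct from $X'$). Reading off from Definition~\ref{etv-def} how $X$ enters $\PNETV{\fix{Y}B}$, there is a \emph{direct} subcase — the occurrence is positive/negative in $B$ without being routed through a negative recursion variable (typically $Y \notin \NETV{B}$) — in which the hypothesis on $B$ yields $B'$ and $A' = \fix{Y}B'$ works. To justify the bookkeeping there I would transfer "$Y \in \NETV{B}$'' and "$Y \notin \NETV{B}$'' from $B$ to $B'$ via Proposition~\ref{petv-netv-subst} (applied to $[X/X']$, with $X$ not a \tvariant), preserve properness via Propositions~\ref{depth-subst0} and \ref{O-depth-proper}, and exclude $A'$ being a \tvariant{} via Proposition~\ref{tvariant-subst1}. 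The genuinely hard subcase is the one illustrated by $A = \fix{Y}\,\O Y \Impl X$, where the polarity of the isolated occurrence is determined \emph{through} the negative recursion variable $Y$; a bare $\fix{Y}B'$ then fails, because $Y \in \NETV{B'}$ would drag $X'$ into both $\PETV{}$ and $\NETV{}$ of the bound expression.

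To handle this I would unfold once using \r{{\eqtyp}\mbox{-fix}}, so that $\fix{Y}B \geqtyp B[\fix{Y}B/Y]$ exposes the single occurrence that acquires a pure polarity, and apply the induction hypothesis to $B$ both for the tracked variable $X$ and for the recursion variable $Y$: the former isolates an $X$-occurrence as $X'$ (yielding $B_1$ with $B_1[X/X'] \geqtyp B$), the latter pins down a purely negative $Y$-slot (yielding $\hat{B}$ with $\hat{B}[Y/Y'] \geqtyp B$). Planting a renamed copy at that slot, schematically $A' = \hat{B}[\fix{Y}B/Y,\ \fix{Y}B_1/Y']$, gives $A'[X/X'] \geqtyp A$ by \r{{\eqtyp}\mbox{-fix}} and Proposition~\ref{geqtyp-fix-congr} (both slots collapsing to $\fix{Y}B$ after the substitution), while $X' \in \PNETV{A'}$ and $X' \notin \NPETV{A'}$ hold because substituting the $X'$-free expressions $\fix{Y}B$ and $\fix{Y}B_1$ creates no new $X'$ and preserves the polarity of the planted one — precisely what Propositions~\ref{petv-netv-subst} and \ref{petv-netv-nest} deliver. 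I expect this double isolation, together with verifying every side condition (\tvariant-freeness, properness, membership in $\TE$, and the exact polarity of $X'$) through the substitution and nesting lemmas, to be the main obstacle; the remaining cases are essentially bookkeeping, but organizing the $\fix$ case so that the nesting propositions apply without gaps — in particular when $Y$ is simultaneously positive and negative in $B$ — is where the care is required.
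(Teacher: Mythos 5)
Your proposal takes a genuinely different route from the paper --- the paper first passes to the canonical form \(\Canon{A}\) (harmless by Propositions~\ref{geqtyp-canon} and \ref{geqtyp-etv}) and inducts on the \(\Impl\)-depth \(\PNIdp(A,\,X)\), so that only the shapes \(\O^n Y\) and \(\O^n(B \Impl C)\) ever occur and the \(\fix\) case you wrestle with simply disappears --- but the difference matters here, because the construction you give for the hard \(\fix\) subcase is wrong. In that subcase \(Y \in \NETV{B}\), and this is inherited by your isolation \(B_1\) (from \(B \geqtyp B_1[X/X']\) via Propositions~\ref{geqtyp-etv} and \ref{petv-netv-subst}). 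Consequently, by Definition~\ref{etv-def}, \(\PNETV{\fix{Y}B_1} = (\PNETV{B_1} \cup \NPETV{B_1}) - \{\,Y\,\}\), so the fresh \(X'\) lands in \emph{both} \(\PETV{\fix{Y}B_1}\) and \(\NETV{\fix{Y}B_1}\), and your witness \(A' = \hat{B}[\fix{Y}B/Y,\ \fix{Y}B_1/Y']\) has \(X' \in \PNETV{A'} \cap \NPETV{A'}\) --- exactly the mixing the proposition is supposed to eliminate. Concretely, for \(A = \fix{Y}\,\O Y \Impl X\) your recipe yields (up to the choice of slot) \(\O(\fix{Y}\,\O Y \Impl X') \Impl X\), in which \(X'\) is again both positive and negative; the correct witness, as in the paper's own illustration before the proposition, is \(\O(\fix{Y}\,\O Y \Impl X) \Impl X'\).

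The point you are missing is that the unfolding must leave the isolated occurrence \emph{outside} every \(\mu\)-binder whose bound variable occurs negatively; re-wrapping \(B_1\) as \(\fix{Y}B_1\) undoes the unfolding. The repair is to close the leftover \(Y\) by substituting \(A\) itself rather than by re-binding: when \(X\) already has the wanted polarity in \(B\) take \(A' = B_1[A/Y]\), and in the through-\(Y\) subcase take \(A' = \hat{B}[A/Y,\ B_1[A/Y]/Y']\), where \(B_1\) is the opposite-polarity isolation of \(X\) in \(B\) and \(\hat{B}\) the negative isolation of \(Y\). Then \(A'[X/X'] \geqtyp A\) follows from \r{{\eqtyp}\mbox{-fix}} and Proposition~\ref{geqtyp-subst}, and \(X'\) keeps a single polarity because \(X' \notin \FTV{A}\) and the planted copy sits in a purely negative slot (Propositions~\ref{petv-netv-subst} and \ref{petv-netv-nest}). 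With that correction your height induction can be made to work, but it is considerably heavier than the paper's canonical-form argument.
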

\begin{proof}
By Propositions~\ref{geqtyp-canon} and \ref{geqtyp-etv},
we can assume that \(A\) is canonical without loss of generality.
\ifdetail
Suppose that \(X \in \PNETV{A}\), which
\else
Note that \(X \in \PNETV{A}\) 
\fi 
implies that \(A \not= \t\) and \(\PNIdp(A,\,X) < \infty\)
by Propositions~\ref{tvariant-etv} and \ref{depth-finite-etv}, respectively.
By induction on \(\PNIdp(A,\,X)\), and by cases on the form of \(A\).
\ifdetail

\Case{\(A = \O^n Y\) for some \(n\) and \(Y\).}
In this case, \(\NETV{A} = \{\}\).
On the other hand,
if \(X \in \PETV{A}\), then \(Y = X\);
and hence, it suffices to take \(A'\) and \(X'\) as \(A' = A\) and \(X' = X\).

\Case{\(A = \O^n (B \Impl C)\) for some \(n\), \(B\) and \(C\)
    such that \(C \ngeqtyp \t\), where \(n = 0\) in case of \(\peqtyp\).}
By Definition~\ref{depth-def} and Proposition~\ref{depth-finite-etv},
either
(a) \(\PNIdp(A) = \NPIdp(B,\,X)+1\) and \(X \in \NPETV{B}\), or
(b) \(\PNIdp(A) = \PNIdp(C,\,X)+1\) and \(X \in \PNETV{C}\).
In case (a), by induction hypothesis,
there exist \(B'\) and \(X'\) such that
\(B \geqtyp B'[X/X']\), \(X' \in \NPETV{B'}\)
    and \(X' \not\in \PNETV{B'}\), where
we can assume that \(X' \not\in \FTV{C}\) without loss of generality.
Therefore, taking \(A'\) as \(A' = \O^n(B' \Impl C)\),
we get
\(A = \O^n(B \Impl C) \geqtyp \O^n(B'[X/X'] \Impl C) = A'[X/X']\),
\(X' \in \NPETV{B'} \subseteq \NPETV{B'} \cup \PNETV{C} = \PNETV{A'}\)
and \(X' \not\in \PNETV{A'}\), since \(X' \not\in \NPETV{B'}\)
and \(X' \not\in \FTV{C}\).
The case (b) is similar.
\fi 
\qed\CHECKED{2014/07/15, 07/17}
\end{proof}

\begin{proposition}\label{pfte-nfte-subst-orig}
Suppose that \(B \ngeqtyp \t\).
If\/ \(A[B/X] \in \PFTE\) (respectively, \(\NFTE\)),
    then \(A \in \PFTE\) (respectively, \(\NFTE\)).
\end{proposition}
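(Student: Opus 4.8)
The plan is to prove the contrapositive of each half by simultaneous induction on $h(A)$, with a case analysis on the form of $A$. That is, I would show: if $A \notin \PFTE$ then $A[B/X] \notin \PFTE$, and if $A \notin \NFTE$ then $A[B/X] \notin \NFTE$, both under the hypothesis $B \ngeqtyp \t$ (equivalently, $B$ is not a {\tvariant} by Theorem~\ref{geqtyp-t-tvariant}). The two statements are mutually recursive because the grammars for $\PFTE$ and $\NFTE$ in Definition~\ref{pfte-nfte-def} refer to each other through the $\Impl$-clauses, so the induction must carry both invariants together.

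First I would dispose of the base and easy structural cases. When $A = Y$ is a type variable, if $Y \neq X$ then $A[B/X] = A$ and there is nothing to prove; if $Y = X$ then $A[B/X] = B$, and since every type variable lies in both $\PFTE$ and $\NFTE$ while $A = X$ is also in both, the contrapositive hypothesis $A \notin \PFTE$ (or $\NFTE$) never holds, so these cases are vacuous. For $A = \O C$ and $A = C \Impl D$, the defining clauses push the membership condition onto the tail component ($C$, respectively $D$ on the positive side and the appropriate component for $\NFTE$), so that $A \notin \PFTE$ reduces to $C \notin \PFTE$ (or $D \notin \PFTE$), and I would apply the induction hypothesis, observing that substitution commutes with these constructors; here the $\NFTE$ and $\PFTE$ hypotheses must be invoked for the contravariant and covariant arguments of $\Impl$ respectively.

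The main obstacle will be the fixed-point case $A = \fix{Z}C$, where I would take $Z \notin \{X\} \cup \FTV{B}$ so that $A[B/X] = \fix{Z}{C[B/X]}$. The difficulty is that the $\PFTE$/$\NFTE$ clauses for $\fix{Z}C$ carry side conditions involving $\fix{Z}C \in \TFTE$, $C$'s membership, and the disjunction ``$Z \notin \NETV{C}$ or $C \in \NFTE$'' (resp.\ ``$Z \notin \NETV{C}$ or $\fix{Z}C \in \PFTE$''). I must track how each of these conditions behaves under substitution: that $C[B/X] \in \TFTE$ follows from $C \in \TFTE$ via the $\TFTE$ substitution lemmas (Items~\ref{tfte-subst1}--\ref{tfte-subst2}) together with $B \in \TFTE$ (from $B \ngeqtyp \t$ and Theorem~\ref{tf-tfte-not-top-theorem}); that $Z \notin \NETV{C}$ transfers to $Z \notin \NETV{C[B/X]}$ using Proposition~\ref{petv-netv-subst} (Item~\ref{petv-netv-subst2}), since $Z \notin \ETV{B}$ by the freshness choice; and that the recursive membership of $C$ transfers via the induction hypothesis. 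The delicate point is the disjunctive side conditions: when I negate membership I get a conjunction of failures, and I must verify that substitution cannot repair a failed condition. I expect this to go through because substitution by a non-{\tvariant} $B$ can neither create new {\tvariant} subexpressions (Proposition~\ref{tvariant-subst2}) nor delete negative effective occurrences under the freshness assumption, so each defining condition fails after substitution exactly when it failed before. Finally, Proposition~\ref{etv-separate} may be needed to reconcile positive/negative occurrences when analysing the $\NETV$ conditions in the mutually recursive step.
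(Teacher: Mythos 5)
Your plan is correct and is essentially the paper's own proof read contrapositively: the paper argues in the direct direction (from \(A[B/X] \in \PFTE\), decompose by the grammar clause matching the shape of \(A\) and pull each defining condition of the clause for \(\fix{Y}C\) back through the substitution, using Theorem~\ref{tf-tfte-not-top-theorem} with Proposition~\ref{tvariant-subst1} for the \(\TFTE\) condition and Proposition~\ref{petv-netv-subst} for the \(\NETV\) condition), with the same simultaneous induction on \(h(A)\) and the same case split, and Proposition~\ref{etv-separate} is not needed. The one simplification the paper makes that you may want to adopt is to dispose up front of the case where \(A[B/X]\) is a {\tvariant} — there the \(\PFTE\) half is vacuous and \(A\) is itself a {\tvariant} by Proposition~\ref{tvariant-subst2} (hence in \(\NFTE\)) — so that the {\tvariant} route through the \(\NFTE\) grammar never has to be tracked inline in the remaining cases.
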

\begin{proof}
If \(A[B/X]\) is a {\tvariant}, then \(A[B/X] \not\in \PFTE\)
by Proposition~\ref{pfte-tfte} and Theorem~\ref{tf-tfte-not-top-theorem},
and \(A \in \NFTE\) since \(A\) is also a {\tvariant}
by Proposition~\ref{tvariant-subst2}.
Hence, it suffices to only consider the case that
\(A[B/X]\) is not a {\tvariant}.
By induction on \(h(A)\), and by cases on the form of \(A\).
\ifdetail

\Case{\(A = Y\) for some \(Y\).}
Trivial from Definition~\ref{pfte-nfte-def}.

\Case{\(A = \O C\) for some \(C\).}
In this case, \(A[B/X] = \O C[B/X]\).
Hence, \(C[B/X] \in \PFTE\) (respectively, \(\NFTE\)).
Therefore, \(C \in \PFTE\) (respectively, \(\NFTE\)) by induction hypothesis,
from which \(A \in \PFTE\) (respectively, \(\NFTE\)) follows.

\Case{\(A = C \Impl D\) for some \(C\) and \(D\).}
Since
\(A[B/X] = C[B/X] \Impl D[B/X] \in \PFTE\) (respectively, \(\NFTE\)), we get
\(C[B/X] \in \NFTE\) and \(D[B/X] \in \PFTE\)
(respectively, \(C[B/X] \in \PFTE\) and \(D[B/X] \in \NFTE\)).
Therefore, by induction hypothesis,
\(C \in \NFTE\) and \(D \in \PFTE\)
(respectively, \(C \in \PFTE\) and \(D \in \NFTE\)),
from which we get \(A \in \PFTE\) (respectively, \(\NFTE\)).

\Case{\(A = \fix{Y}C\) for some \(Y\) and \(C\).}
We 
\else
The only interesting case is when \(A = \fix{Y}C\) for some \(Y\) and \(C\).
In this case, we
\fi 
can assume that \(Y \not\in \FTV{B} \cup \{\,X\,\}\)
without loss of generality.
Since \(A[B/X] = \fix{Y}C[B/X] \in \PFTE\) (respectively, \(\NFTE\)),
\begin{eqnarray}
\label{pfte-nfte-subst-orig-01}
    \fix{Y}C[B/X] &\in& \TFTE~(\mbox{respectively,}~\TE) \\
\label{pfte-nfte-subst-orig-02}
    C[B/X] &\in& \PFTE~(\mbox{respectively,}~\NFTE),~\mbox{and} \\
\label{pfte-nfte-subst-orig-03}
    C[B/X] &\in& \NFTE~(\mbox{respectively,}~\fix{Y}C[B/X] \in \PFTE),
	~\mbox{if}~Y \in \NETV{C[B/X]}.
\end{eqnarray}
First, in case of \(A[B/X] \in \PFTE\),
we get \(\fix{Y}C \in \TFTE\) from (\ref{pfte-nfte-subst-orig-01})
by Theorem~\ref{tf-tfte-not-top-theorem} and Proposition~\ref{tvariant-subst1}.
Second, \(C \in \PFTE\) (respectively, \(\NFTE\))
from (\ref{pfte-nfte-subst-orig-02}) by induction hypothesis.
Finally, if \(Y \in \NETV{C}\), then
\(Y \in \NETV{C[B/X]}\)
by Proposition~\ref{petv-netv-subst}; and therefore,
\(C[B/X] \in \NFTE\) (respectively, \(\PFTE\))
from (\ref{pfte-nfte-subst-orig-03}) and Definition~\ref{pfte-nfte-def},
which implies \(C \in \NFTE\) (respectively, \(\PFTE\))
by induction hypothesis.
Thus, \(A \in \PFTE\) (respectively, \(\NFTE\)).
\qed\CHECKED{2014/07/12, 07/17}
\end{proof}

\begin{lemma}\label{pfte-nfte-subst-lemma}
Suppose that
\begin{Enumerate}
\item[(a)] \(A \in \TFTE^V\),
\item[(b)] \(V \cap \PETV{B} = \{\}\), and
\item[(c)] \(X \not\in \PETV{A}\) or \(B \in \TFTE\).
\end{Enumerate}
Then, \(A[B/X] \in \TFTE^V\).
\end{lemma}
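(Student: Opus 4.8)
The plan is to dispatch the lemma by splitting on the disjunction in hypothesis~(c), reducing each branch to one of the substitution properties of $\TFTE^V$ already established in Proposition~\ref{tfte-subst1} and Proposition~\ref{tfte-subst2}. No fresh induction on $h(A)$ is needed here, since the variable-monotonicity result of Proposition~\ref{tfte-var-petv} together with those two substitution facts carries all the real content; the remaining work is bookkeeping about the index set $V$.

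First I would treat the branch $X \not\in \PETV{A}$. Hypothesis~(a) gives $A \in \TFTE^V$, and since $X \not\in \PETV{A}$, Proposition~\ref{tfte-var-petv} promotes this to $A \in \TFTE^{V\cup\{\,X\,\}}$. Then Proposition~\ref{tfte-subst1}, applied with the given $B$, yields $A[B/X] \in \TFTE^V$ immediately. Notice that in this branch neither hypothesis~(b) nor any property of $B$ is needed: once $X$ has been absorbed into the index set, substituting an arbitrary $B$ for $X$ cannot disturb tail finiteness relative to $V$.

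The remaining branch is $B \in \TFTE$, i.e.\ $B \in \TFTE^{\{\}}$. The key preliminary step is to promote this to $B \in \TFTE^V$. By hypothesis~(b), $V \cap \PETV{B} = \{\}$, so every variable of $V$ lies outside $\PETV{B}$; Proposition~\ref{tfte-var-petv} may therefore be applied once per such variable, adjoining the members of $V$ to the index set one at a time until $B \in \TFTE^V$ is reached. Combining $A \in \TFTE^V$ from~(a) with $B \in \TFTE^V$, Proposition~\ref{tfte-subst2} then gives $A[B/X] \in \TFTE^V$, which finishes this branch and hence the lemma. The only point requiring a word of care is that this iterated use of Proposition~\ref{tfte-var-petv} must terminate; this is unproblematic because in the present development $V$ is always a finite set of type variables (it accumulates the $\mu$-bound variables met while descending through a type expression of finite height), so finitely many applications suffice. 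I expect this minor finiteness check to be the only obstacle worth noting, the substantive content being supplied entirely by the monotonicity and substitution properties of $\TFTE^V$ proved earlier.
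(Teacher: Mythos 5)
Your proof is correct and follows essentially the same two-case argument as the paper: the paper splits on whether $X \in \PETV{A}$ while you split directly on the disjunction in (c), but both branches use exactly Propositions~\ref{tfte-var-petv}, \ref{tfte-subst1} and \ref{tfte-subst2} in the same way. Your aside about the termination of the iterated use of Proposition~\ref{tfte-var-petv} is harmless (and in any case only the finitely many variables of $V$ occurring free in $B$ are relevant), so nothing further is needed.
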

\begin{proof}
If \(X \not\in \PETV{A}\), then \(A \in \TFTE^{V\cup\{\,X\,\}}\) by
(a) and Proposition~\ref{tfte-var-petv}; and therefore,
\(A[B/X] \in \TFTE^V\) by Proposition~\ref{tfte-subst1}.
On the other hand, if \(X \in \PETV{A}\), then \(B \in \TFTE\) by (c);
that is, \(B \in \TFTE^V\) by (b) and Proposition~\ref{tfte-var-petv}.
Hence, \(A[B/X] \in \TFTE^V\) by (a) and Proposition~\ref{tfte-subst2}.
\qed\CHECKED{2014/05/15, 07/17}
\end{proof}

\begin{proposition}\label{pfte-nfte-subst}
Suppose that
\begin{Enumerate}
\item[\it(a)] \(A \in \PFTE\) (respectively, \(\NFTE\)),
\item[\it(b)] if \(X \in \PETV{A}\)
    then \(B \in \PFTE\) (respectively, \(\NFTE\)), and
\item[\it(c)] if \(X \in \NETV{A}\)
    then \(B \in \NFTE\) (respectively, \(\PFTE\)).
\end{Enumerate}
Then \(A[B/X] \in \PFTE\) (respectively, \(\NFTE\)).
\end{proposition}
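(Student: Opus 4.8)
The plan is to prove the $\PFTE$-claim and the $\NFTE$-claim simultaneously by induction on $h(A)$, with a case analysis on the form of $A$. First I would dispose of the case where $A$ is a {\tvariant}: this cannot occur in the $\PFTE$-claim, since $\PFTE \subseteq \TFTE$ by Proposition~\ref{pfte-tfte} together with Theorem~\ref{tf-tfte-not-top-theorem}, while in the $\NFTE$-claim $A[B/X]$ is again a {\tvariant} by Proposition~\ref{tvariant-subst1} and hence lies in $\NFTE$ by the last production of Definition~\ref{pfte-nfte-def}. So I may assume $A$ is not a {\tvariant} throughout, renaming the bound variable in the $\fix{}$-case so that $Y \not= X$ and $Y \not\in \FTV{B}$.

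The variable and modal cases are immediate, and the arrow case $A = C \Impl D$ is polarity bookkeeping: since $A$ is not a {\tvariant}, $A \in \PFTE$ forces $C \in \NFTE$ and $D \in \PFTE$, while $A \in \NFTE$ forces $C \in \PFTE$ and $D \in \NFTE$; moreover $\PETV{A} = \NETV{C} \cup \PETV{D}$ and $\NETV{A} = \PETV{C} \cup \NETV{D}$. Thus every positive occurrence of $X$ in the domain $C$ is a negative occurrence of $X$ in $A$ and vice versa, whereas the polarities of $D$ are preserved. Consequently hypotheses (b) and (c) for $A$ translate into exactly the hypotheses required to apply the induction hypothesis with the swapped claim to $C$ and the same claim to $D$, and reassembling with the corresponding production of Definition~\ref{pfte-nfte-def} puts $A[B/X] = C[B/X] \Impl D[B/X]$ in the target class.

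The real work is the recursion case $A = \fix{Y}C$. Membership $C[B/X] \in \PFTE$ (resp.\ $\NFTE$) again follows from the induction hypothesis, the polarity transfer using the two clauses of Definition~\ref{etv-def} for $\PNETV{\fix{Y}C}$: when $Y \in \NETV{C}$ one has $\PETV{A} = \NETV{A} = (\PETV{C} \cup \NETV{C}) - \{Y\}$, and when $Y \not\in \NETV{C}$ one has $\PETV{A} = \PETV{C} - \{Y\}$ and $\NETV{A} = \NETV{C} - \{Y\}$; in either case, since $X \not= Y$, conditions (b) and (c) for $A$ deliver precisely what the induction hypothesis on $C$ needs. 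For the well-formedness side condition I would show $A[B/X] = \fix{Y}(C[B/X])$ stays in $\TFTE$ (resp.\ $\TE$) via Lemma~\ref{pfte-nfte-subst-lemma} with $V = \{\}$ (resp.\ Propositions~\ref{etv-proper} and~\ref{proper-subst1}), noting that whenever $X \in \PETV{A}$ the relevant hypothesis forces $B \in \PFTE \subseteq \TFTE$. It then remains to re-establish the final disjunct of the $\fix{}$-production; if $Y \not\in \NETV{C}$ then $Y \not\in \NETV{C[B/X]}$ by Proposition~\ref{petv-netv-subst2} (using $Y \not\in \ETV{B}$), so clause (a) survives.

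The step I would be most careful about is the remaining sub-case $Y \in \NETV{C}$, where clause (a) fails, so membership of $A$ forces the second clause: $C \in \NFTE$ in the $\PFTE$-claim and $\fix{Y}C \in \PFTE$ in the $\NFTE$-claim. In the $\PFTE$-claim this clause is restored directly by the $\NFTE$ induction hypothesis on $C$. The danger in the $\NFTE$-claim is that restoring $\fix{Y}(C[B/X]) \in \PFTE$ appears to require the $\PFTE$-claim for $A$ itself, which is not a smaller instance. The resolution is to reduce to the \emph{body}: from $\fix{Y}C \in \PFTE$ and $Y \in \NETV{C}$ one extracts $C \in \PFTE$, applies the $\PFTE$ induction hypothesis to $C$ (of strictly smaller height), and reassembles $\fix{Y}(C[B/X]) \in \PFTE$ from $C[B/X] \in \PFTE$, the already-established $C[B/X] \in \NFTE$, and the $\TFTE$-membership obtained above. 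Here the polarity collapse $\PETV{A} = \NETV{A}$ is exactly what lets the $\NFTE$-hypotheses (b) and (c) for $A$ supply the $\PFTE$-hypotheses needed for $C$, so no circularity arises.
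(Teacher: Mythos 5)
Your proposal is correct and follows essentially the same route as the paper's proof: induction on \(h(A)\) with the \(\fix{Y}C\) case as the crux, establishing the three clauses of the \(\fix{}\)-production separately and resolving the delicate \(Y \in \NETV{C}\) sub-case of the \(\NFTE\)-claim by descending to the body \(C\) and exploiting the polarity collapse \(\PETV{A} = \NETV{A}\) to feed the \(\PFTE\)-induction hypothesis (the paper records this collapse as the observation that \(X \in \ETV{C}\) forces \(B \in \PFTE \cap \NFTE\)). The only cosmetic difference is that you apply Lemma~\ref{pfte-nfte-subst-lemma} with \(V=\{\}\) to \(\fix{Y}C\) directly where the paper applies it with \(V=\{Y\}\) to \(C\); both yield the required \(\TFTE\)-membership.
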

\begin{proof}
By induction on \(h(A)\), and by cases on the form of \(A\).
Use Proposition~\ref{tvariant-subst1} if \(A = C \Impl D\)
for some \(C\) and \(D\).
The only interesting case is
when \(A = \fix{Y}{C}\) for some \(Y\) and \(C\).
In this case, suppose that (a) through (c) hold.
We can assume that \(Y \not\in \FTV{B} \cup \{\,X\,\}\) without loss of
generality; that is, \(A[B/X] = \fix{Y}C[B/X]\) and
\(B\) is proper in \(Y\) by Proposition~\ref{etv-proper}.
By (a) and Definition~\ref{pfte-nfte-def}, we have
\begin{eqnarray}
\label{pfte-nfte-subst-01}
    && \fix{Y}C \in \TFTE~(\mbox{respectively},~\TE) \\
\label{pfte-nfte-subst-02}
    && C \in \PFTE~(\mbox{respectively},~\NFTE),~\mbox{and} \\
\label{pfte-nfte-subst-03}
    && Y \not\in \NETV{C} ~\mbox{or}~ C \in \NFTE~
	(\mbox{respectively},~\fix{Y}C \in \PFTE).
\end{eqnarray}
Moreover, since \(\PNETV{C} - \{\,Y\,\} \subseteq \PNETV{A}\), from (b) and (c),
\begin{Enumerate}
\item[\it(b\({}'\))]
    if \(X \in \PETV{C}\) then \(B \in \PFTE\) (respectively, \(\NFTE\)), and
\item[\it(c\({}'\))] if \(X \in \NETV{C}\) then \(B \in \NFTE\)
    (respectively, \(\PFTE\)).
\end{Enumerate}
By Definition~\ref{pfte-nfte-def}, it suffices to show that
\begin{eqnarray}
\label{pfte-nfte-subst-05}
    && \fix{Y}C[B/X] \in \TFTE~(\mbox{respectively},~\TE) \\
\label{pfte-nfte-subst-06}
    && C[B/X] \in \PFTE~(\mbox{respectively},~\NFTE),~\mbox{and} \\
\label{pfte-nfte-subst-07}
    && Y \not\in \NETV{C[B/X]} ~\mbox{or}~
	    C[B/X] \in \NFTE~(\mbox{respectively},~\fix{Y}C[B/X] \in \PFTE).
\end{eqnarray}
First, \(C[B/X]\) is proper in \(Y\)
by Proposition~\ref{proper-subst1},
since so is \(C\) by (\ref{pfte-nfte-subst-01}).
Furthermore,
in case of \(A \in \PFTE\), we also get
\(C \in \TFTE^{\{\,Y\,\}}\) from (\ref{pfte-nfte-subst-01}); and hence,
\(C[B/X] \in \TFTE^{\{\,Y\,\}}\)
by Lemma~\ref{pfte-nfte-subst-lemma}, Proposition~\ref{pfte-tfte}
and (\(\mbox{b}'\)).
Thus, we get (\ref{pfte-nfte-subst-05}).
Second, we get (\ref{pfte-nfte-subst-06})
from (\ref{pfte-nfte-subst-02}), (\(\mbox{b}'\)) and (\(\mbox{c}'\))
by induction hypothesis.
Finally, to show (\ref{pfte-nfte-subst-07}),
suppose that \(Y \in \NETV{C[B/X]}\).
Then, \(Y \in \NETV{C}\) by Proposition~\ref{petv-netv-subst2}
since \(Y \not\in \FTV{B}\); and therefore,
\(\PNETV{A} = \ETV{C} -\{\,Y\,\}\).
Hence, from (b) and (c), we get
\begin{eqnarray}
\label{pfte-nfte-subst-08}
    X \in \ETV{C} &~\mbox{implies}~& B \in \PFTE \cap \NFTE.
\end{eqnarray}
On the other hand, from \(Y \in \NETV{C}\) and (\ref{pfte-nfte-subst-03}),
we get \(C \in \NFTE\) (respectively, \(\fix{Y}C \in \PFTE\); and therefore,
\(C \in \PFTE\) by Definition~\ref{pfte-nfte-def}).
Hence, \(C[B/X] \in \NFTE\) (respectively, \(\PFTE\))
from (\ref{pfte-nfte-subst-08}) by induction hypothesis, where
in case of \(A \in \NFTE\),
\(C[B/X] \in \PFTE\) also implies \(\fix{Y}C[B/X] \in \PFTE\) because
\(C[B/X] \in \NFTE\) is already established as (\ref{pfte-nfte-subst-06}),
and because we can get \(\fix{Y}C[B/X] \in \TFTE\)
from \(\fix{Y}C \in \PFTE \subseteq \TFTE\)
by Lemma~\ref{pfte-nfte-subst-lemma},
Proposition~\ref{pfte-tfte} and
(\ref{pfte-nfte-subst-08}).
We thus establish (\ref{pfte-nfte-subst-07}).
\qed\CHECKED{2014/05/20, 07/17}
\end{proof}

\begin{proposition}\label{pfte-occur}
Suppose that \(B \ngeqtyp \t\).
\begin{Enumerate}
\item If\/ \(X \in \PETV{A}\)
	    and \(A[B/X] \in \PFTE\), then \(B \in \PFTE\).
\item If\/ \(X \in \NETV{A}\)
	    and \(A[B/X] \in \NFTE\), then \(B \in \PFTE\).
\end{Enumerate}
\end{proposition}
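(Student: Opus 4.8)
The plan is to prove both items simultaneously by induction on the height $h(A)$, splitting into cases according to the outermost form of $A$. Throughout I would use that, since $B \ngeqtyp \t$, $B$ is not a {\tvariant} (Theorem~\ref{geqtyp-t-tvariant}). Two preliminary reductions dispatch the degenerate situations. If $A$ is a {\tvariant}, then $\ETV{A} = \{\}$ by Proposition~\ref{tail-basic}, so both hypotheses $X \in \PETV{A}$ and $X \in \NETV{A}$ are vacuously false and there is nothing to prove. If $A$ is not a {\tvariant}, then neither is $A[B/X]$: by Proposition~\ref{tvariant-subst2}, $A[B/X]$ being a {\tvariant} would force either $A$ or $B$ to be one, both of which are excluded. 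Hence in every nondegenerate case I may read off the grammar of Definition~\ref{pfte-nfte-def} knowing that the {\tvariant} alternative of the $\NFTE$ grammar does not apply.

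The variable, $\O$- and $\Impl$-cases should reduce directly to the induction hypothesis by head analysis of the grammar. If $A = X$, then $X \in \PETV{A}$ forces $A[B/X] = B$, so $A[B/X] \in \PFTE$ is exactly $B \in \PFTE$ (and $X \in \NETV{A} = \{\}$ is vacuous). If $A = \O C$, then $\PNETV{A} = \PNETV{C}$ and $A[B/X] = \O(C[B/X])$; a non-{\tvariant} of head $\O$ lies in $\PFTE$ (resp. $\NFTE$) only through the production $\O\PFTE$ (resp. $\O\NFTE$), so $C[B/X]$ lies in the corresponding class and the induction hypothesis for $C$ applies. If $A = C \Impl D$, then by Definition~\ref{etv-def} $\PETV{A} = \NETV{C} \cup \PETV{D}$ and $\NETV{A} = \PETV{C} \cup \NETV{D}$; since the only implication-headed production of $\PFTE$ (resp. $\NFTE$) is $\NFTE \Impl \PFTE$ (resp. $\PFTE \Impl \NFTE$), the hypothesis splits $A[B/X]$ into $C[B/X]$ and $D[B/X]$ with the right polarities, and in each of the two ways $X$ can occur I would apply the appropriate item of the induction hypothesis to $C$ or to $D$.

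The $\mu$-case is where the real care is needed, and I expect it to be the main obstacle. Write $A = \fix{Y}C$ with $Y \notin \FTV{B} \cup \{X\}$, so $A[B/X] = \fix{Y}(C[B/X])$. Here $\PETV{A}$ and $\NETV{A}$ bifurcate according to whether $Y \in \NETV{C}$ (Definition~\ref{etv-def}): when it holds one has $\PETV{A} = \NETV{A} = \ETV{C} - \{Y\}$, so a positive occurrence of $X$ in $A$ need not come from a positive occurrence in $C$. Two facts would drive the argument. First, $Y \in \NETV{C}$ implies $Y \in \NETV{C[B/X]}$ by Proposition~\ref{petv-netv-subst} (its first item, applicable since $Y \ne X$ and $B$ is not a {\tvariant}); this makes side condition (a) of the relevant $\fix{X}A$ production fail and hence forces clause (b). Second, membership of $\fix{Y}(C[B/X])$ in $\PFTE$ or in $\NFTE$ lets me extract, straight from the $\fix{X}A$ production, the membership $C[B/X] \in \PFTE$ (resp. $C[B/X] \in \NFTE$).

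This second fact is precisely what avoids an apparent circularity: in the negative item with $Y \in \NETV{C}$, clause (b) yields $\fix{Y}(C[B/X]) \in \PFTE$, and rather than re-invoking the positive item for the same $A$ (same height), I would extract $C[B/X] \in \PFTE$ from it and apply the induction hypothesis to the shorter $C$. In all four combinations of item and of the subcase $Y \in \NETV{C}$ versus $Y \notin \NETV{C}$, the occurrence hypothesis on $A$ thus descends to an occurrence hypothesis on $C$ together with the matching $\PFTE$/$\NFTE$ membership of $C[B/X]$, whence the induction hypothesis for $C$ delivers $B \in \PFTE$. Beyond assembling these two facts, I expect nothing harder than bookkeeping over the subcases.
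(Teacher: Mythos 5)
Your proposal is correct and follows essentially the same route as the paper's proof: a simultaneous induction on \(h(A)\) with the same case analysis, the same use of Proposition~\ref{petv-netv-subst} to propagate \(Y \in \NETV{C}\) to \(Y \in \NETV{C[B/X]}\) in the \(\fx\)-case, and the same observation that \(\fix{Y}C[B/X] \in \PFTE\) yields \(C[B/X] \in \PFTE\) so that the induction hypothesis is applied to the strictly smaller \(C\) rather than to \(A\) itself. The only cosmetic difference is that you dispose of the case where \(A\) is a {\tvariant} as vacuous up front, whereas the paper derives \(A \ngeqtyp \t\) directly from \(X \in \ETV{A}\); the content is identical.
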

\begin{proof}
By simultaneous induction on \(h(A)\), and by cases on the form of \(A\).
Suppose that \(X \in \PETV{A}\) and \(A[B/X] \in \PFTE\)
    (respectively, \(X \in \NETV{A}\) and \(A[B/X] \in \NFTE\)).
Note that \(A \ngeqtyp \t\) since \(X \in \ETV{A}\); and hence,
\(A[B/X] \ngeqtyp \t\) from \(B \ngeqtyp \t\)
by Proposition~\ref{tvariant-subst2} and Theorem~\ref{geqtyp-t-tvariant}.

\Case{\(A = Y\) for some \(Y\).}
In this case, the assumption \(X \in \NETV{A}\) of Item~2 does not hold.
As for Item~1, \(Y = X\) since \(X \in \PETV{A}\).
Hence, \(B = A[B/X] \in \PFTE\).

\Case{\(A = \O C\) for some \(C\).}
In this case, \(A[B/X] = \O C[B/X]\).
Hence, \(X \in \PETV{C}\) and \(C[B/X] \in \PFTE\)
(respectively, \(X \in \NETV{C}\) and \(C[B/X] \in \NFTE\)).
Therefore, \(B \in \PFTE\) by induction hypothesis.

\Case{\(A = C \Impl D\) for some \(C\) and \(D\).}
Since
\(A[B/X] = C[B/X] \Impl D[B/X] \in \PFTE\) (respectively, \(\NFTE\)), we get
\(X \in \NETV{C} \cup \PETV{D}\), \(C[B/X] \in \NFTE\) and \(D[B/X] \in \PFTE\)
(respectively, \(X \in \PETV{C} \cup \NETV{D}\), \(C[B/X] \in \PFTE\)
    and \(D[B/X] \in \NFTE\)).
Therefore, \(B \in \PFTE\) by induction hypothesis.

\Case{\(A = \fix{Y}C\) for some \(Y\) and \(C\).}
We can assume that \(Y \not\in \FTV{B} \cup \{\,X\,\}\)
without loss of generality.
Since \(A[B/X] = \fix{Y}C[B/X] \in \PFTE\) (respectively, \(\NFTE\)),
we get
\begin{eqnarray}
\label{pfte-occur-01}
    C[B/X] &\in& \PFTE~(\mbox{respectively,}~\NFTE),~\mbox{and} \\
\label{pfte-occur-02}
    C[B/X] &\in& \NFTE~(\mbox{respectively,}~\fix{Y}C[B/X] \in \PFTE),
	~\mbox{if}~Y \in \NETV{C[B/X]}.
\end{eqnarray}
On the other hand, since \(X \in \PETV{A}\) (respectively, \(\NETV{A}\)),
either
\begin{eqnarray}
\label{pfte-occur-03}
    X &\in& \PETV{C}~(\mbox{respectively,}~\NETV{C}),~\mbox{or} \\
\label{pfte-occur-04}
    Y &\in& \NETV{C}~\mbox{and}~ X \in \NETV{C}
	~(\mbox{respectively,}~\PETV{C}).
\end{eqnarray}
Note that \(Y \in \NETV{C}\) implies \(Y \in\NETV{C[B/X]}\)
by Proposition~\ref{petv-netv-subst}, and that
\(\fix{Y}C[B/X] \in \PFTE\) implies \(C[B/X] \in \PFTE\).
Therefore, from
(\ref{pfte-occur-01}), (\ref{pfte-occur-02}),
(\ref{pfte-occur-03}) and (\ref{pfte-occur-04}),
we get either
\begin{eqnarray*}
    X &\in& \PETV{C}~(\mbox{respectively,}~\NETV{C})~\mbox{and}~
	C[B/X] \in \PFTE~(\mbox{respectively,}~\NFTE),~\mbox{or} \\
    X &\in& \NETV{C}~(\mbox{respectively,}~\PETV{C})~\mbox{and}~
	C[B/X] \in \NFTE~(\mbox{respectively,}~\PFTE).
\end{eqnarray*}
Hence, \(B \in \PFTE\) by induction hypothesis.
\qed\CHECKED{2014/07/12, 07/17}
\end{proof}

\begin{proposition}\label{nfte-occur}
\begin{Enumerate}
\item If\/ \(X \in \PETV{A}\)
	    and \(A[B/X] \in \NFTE\), then \(B \in \NFTE\).
\item If\/ \(X \in \NETV{A}\)
	    and \(A[B/X] \in \PFTE\), then \(B \in \NFTE\).
\end{Enumerate}
\end{proposition}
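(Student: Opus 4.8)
The plan is to prove both items by simultaneous induction on \(h(A)\), by cases on the form of \(A\), in exact parallel to the proof of Proposition~\ref{pfte-occur}. The one structural difference from that proposition is that here there is no hypothesis \(B \ngeqtyp \t\); instead I would first dispose of the degenerate situation. If \(B\) is a {\tvariant}, then \(B \in \NFTE\) holds immediately by the last clause of Definition~\ref{pfte-nfte-def}, so the conclusion is trivial. Hence I may assume \(B\) is not a {\tvariant}, i.e. \(B \ngeqtyp \t\) by Theorem~\ref{geqtyp-t-tvariant}. As in Proposition~\ref{pfte-occur}, since \(X \in \ETV{A}\) forces \(A \ngeqtyp \t\), Proposition~\ref{tvariant-subst2} together with Theorem~\ref{geqtyp-t-tvariant} then gives \(A[B/X] \ngeqtyp \t\). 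This is exactly what licenses the grammar decompositions below: because \(\NFTE\) also contains every {\tvariant}, a membership \(C \Impl D \in \NFTE\) only forces \(C \in \PFTE\) and \(D \in \NFTE\) once the type is known not to be a {\tvariant}, and the reduction to \(B \ngeqtyp \t\) is designed to secure precisely this.

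The variable and modal cases are routine. For \(A = Y\): in Item~2 the hypothesis \(X \in \NETV{A}\) is vacuous, while in Item~1 we have \(Y = X\), so \(B = A[B/X] \in \NFTE\) directly. For \(A = \O C\) the substitution commutes with \(\O\) and both grammars strip the leading \(\O\), yielding \(X \in \PETV{C}\) with \(C[B/X] \in \NFTE\) (resp. \(X \in \NETV{C}\) with \(C[B/X] \in \PFTE\)), and the induction hypothesis delivers \(B \in \NFTE\). For \(A = C \Impl D\): since \(A[B/X] \ngeqtyp \t\), the decomposition is forced, giving for Item~1 that \(C[B/X] \in \PFTE\) and \(D[B/X] \in \NFTE\) with \(X \in \NETV{C} \cup \PETV{D}\), and for Item~2 that \(C[B/X] \in \NFTE\) and \(D[B/X] \in \PFTE\) with \(X \in \PETV{C} \cup \NETV{D}\). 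In each item the two disjuncts for \(X\) match the two items of the induction hypothesis, so \(B \in \NFTE\) in all cases.

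The interesting case is \(A = \fix{Y}C\). After renaming so that \(Y \notin \FTV{B} \cup \{X\}\) we have \(A[B/X] = \fix{Y}C[B/X]\), and unfolding the definition of \(\PNETV{\fix{Y}C}\) shows that \(X \in \PETV{A}\) (resp. \(X \in \NETV{A}\)) splits into either \(X \in \PETV{C}\) (resp. \(\NETV{C}\)), or else \(Y \in \NETV{C}\) together with \(X \in \NETV{C}\) (resp. \(\PETV{C}\)). In the first subcase the induction hypothesis for the matching item applies to \(C[B/X]\) directly, since the \(\fix\)-clauses always yield \(C[B/X] \in \NFTE\) (resp. \(\PFTE\)). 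In the second subcase I would use Proposition~\ref{petv-netv-subst} to propagate \(Y \in \NETV{C}\) to \(Y \in \NETV{C[B/X]}\) (legitimate since \(Y \neq X\) and \(B\) is not a {\tvariant}); this makes condition~(a) of the relevant \(\fix\)-clause of Definition~\ref{pfte-nfte-def} fail, so condition~(b) must hold, which upgrades \(C[B/X]\) to the opposite class, letting me invoke the \emph{other} item of the induction hypothesis.

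I expect the main obstacle to be exactly this polarity bookkeeping in the \(\fix\)-case: correctly matching each occurrence pattern of \(X\) to the appropriate one of the two mutually recursive statements, and reading off which of conditions (a)/(b) is forced by the propagated membership \(Y \in \NETV{C[B/X]}\). This is the same delicate interplay that already governs Proposition~\ref{pfte-occur}, here made slightly subtler by the ambiguity of the \(\NFTE\) grammar on {\tvariant}s — which is why establishing \(A[B/X] \ngeqtyp \t\) at the outset is essential before any decomposition is attempted.
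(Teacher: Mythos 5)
Your proposal is correct and follows essentially the same route as the paper: the paper's proof likewise first observes that \(B \geqtyp \t\) immediately gives \(B \in \NFTE\), then assumes \(B \ngeqtyp \t\) and runs the same simultaneous induction on \(h(A)\) in parallel to Proposition~\ref{pfte-occur}, with exactly your \(\fix\)-case bookkeeping (splitting \(X \in \PNETV{\fix{Y}C}\) into the two occurrence patterns, propagating \(Y \in \NETV{C}\) to \(Y \in \NETV{C[B/X]}\) via Proposition~\ref{petv-netv-subst}, and using clause~(b) of the \(\fix\)-rules of Definition~\ref{pfte-nfte-def} to switch to the other item of the induction hypothesis). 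No gaps.
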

\begin{proof}
The proof is quite parallel to the one for Proposition~\ref{pfte-occur}.
Note that \(B \in \NFTE\) if \(B \geqtyp \t\)
by Definition~\ref{pfte-nfte-def}.
\ifdetail
Hence, we assume that \(B \ngeqtyp \t\) below.
The proof proceeds
by simultaneous induction on \(h(A)\), and by cases on the form of \(A\).
Suppose that \(X \in \PETV{A}\) and \(A[B/X] \in \NFTE\)
    (respectively, \(X \in \NETV{A}\) and \(A[B/X] \in \PFTE\)).
Note that \(A \ngeqtyp \t\) since \(X \in \ETV{A}\); and hence,
\(A[B/X] \ngeqtyp \t\) from \(B \ngeqtyp \t\)
by Proposition~\ref{tvariant-subst2} and Theorem~\ref{geqtyp-t-tvariant}.

\Case{\(A = Y\) for some \(Y\).}
In this case, the assumption \(X \in \NETV{A}\) of Item~2 does not hold.
As for Item~1, \(Y = X\) since \(X \in \PETV{A}\).
Hence, \(B = A[B/X] \in \NFTE\).

\Case{\(A = \O C\) for some \(C\).}
In this case, \(A[B/X] = \O C[B/X]\).
Hence, \(X \in \PETV{C}\) and \(C[B/X] \in \NFTE\)
(respectively, \(X \in \NETV{C}\) and \(C[B/X] \in \PFTE\)).
Therefore, \(B \in \NFTE\) by induction hypothesis.

\Case{\(A = C \Impl D\) for some \(C\) and \(D\).}
Since
\(A[B/X] = C[B/X] \Impl D[B/X] \in \NFTE\) (respectively, \(\PFTE\)), we get
\(X \in \NETV{C} \cup \PETV{D}\), \(C[B/X] \in \PFTE\) and \(D[B/X] \in \NFTE\)
(respectively, \(X \in \PETV{C} \cup \NETV{D}\), \(C[B/X] \in \NFTE\)
    and \(D[B/X] \in \PFTE\)).
Therefore, \(B \in \NFTE\) by induction hypothesis.

\Case{\(A = \fix{Y}C\) for some \(Y\) and \(C\).}
We can assume that \(Y \not\in \FTV{B} \cup \{\,X\,\}\)
without loss of generality.
Since \(A[B/X] = \fix{Y}C[B/X] \in \NFTE\) (respectively, \(\PFTE\)),
we get
\begin{eqnarray}
\label{nfte-occur-01}
    C[B/X] &\in& \NFTE~(\mbox{respectively,}~\PFTE),~\mbox{and} \\
\label{nfte-occur-02}
    \fix{Y}C[B/X] &\in& \PFTE~(\mbox{respectively,}~C[B/X] \in \NFTE),
	~\mbox{if}~Y \in \NETV{C[B/X]}.
\end{eqnarray}
On the other hand, since \(X \in \PETV{A}\) (respectively, \(\NETV{A}\)),
either
\begin{eqnarray}
\label{nfte-occur-03}
    X &\in& \PETV{C}~(\mbox{respectively,}~\NETV{C}),~\mbox{or} \\
\label{nfte-occur-04}
    Y &\in& \NETV{C}~\mbox{and}~ X \in \NETV{C}
	~(\mbox{respectively,}~\PETV{C}).
\end{eqnarray}
Note that \(Y \in \NETV{C}\) implies \(Y \in\NETV{C[B/X]}\)
by Proposition~\ref{petv-netv-subst}, and that
\(\fix{Y}C[B/X] \in \PFTE\) implies \(C[B/X] \in \PFTE\).
Therefore, from
(\ref{nfte-occur-01}), (\ref{nfte-occur-02}),
(\ref{nfte-occur-03}) and (\ref{nfte-occur-04}),
we get either
\begin{eqnarray*}
    X &\in& \PETV{C}~(\mbox{respectively,}~\NETV{C})~\mbox{and}~
	C[B/X] \in \NFTE~(\mbox{respectively,}~\PFTE),~\mbox{or} \\
    X &\in& \NETV{C}~(\mbox{respectively,}~\PETV{C})~\mbox{and}~
	C[B/X] \in \PFTE~(\mbox{respectively,}~\NFTE).
\end{eqnarray*}
Hence, \(B \in \NFTE\) by induction hypothesis.
\fi
\qed\CHECKED{2014/07/12, 07/17}
\end{proof}

\begin{proposition}\label{pfte-nfte-canon}
\(A \in \PFTE\) (respectively, \(\NFTE\)) if and only if
\(\Canong{A} \in \PFTE\) (respectively, \(\NFTE\)).
\end{proposition}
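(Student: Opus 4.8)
The plan is to prove both the \(\PFTE\)- and \(\NFTE\)-statements simultaneously, by induction on \(h(A)\) and by cases on the form of \(A\), following the recursive definition of \(\Canong{A}\) in Definition~\ref{canon-def}. Throughout I would use that \(\Canong{A} \geqtyp A\) (Proposition~\ref{geqtyp-canon}), so that \(A\) and \(\Canong{A}\) agree on \(\PETV{\cdot}\) and \(\NETV{\cdot}\) (Proposition~\ref{geqtyp-etv}), on properness in each variable (Proposition~\ref{geqtyp-proper}), and on being a {\tvariant} (Theorem~\ref{geqtyp-t-tvariant}). The mutual recursion between the two grammars of Definition~\ref{pfte-nfte-def} is only ever invoked at strictly smaller height, so the induction hypothesis supplies both statements for proper subexpressions.

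The easy cases: when \(A\) is a {\tvariant}, \(\Canong{A} = \t\), and since \(\PFTE \subseteq \TFTE\) (Proposition~\ref{pfte-tfte}) while no {\tvariant} lies in \(\TFTE\) (Theorem~\ref{tf-tfte-not-top-theorem}), neither \(A\) nor \(\t\) is in \(\PFTE\), whereas both are in \(\NFTE\) by the {\tvariant} clause; when \(A = X\) or \(A = B \Impl C\) is not a {\tvariant}, we have \(\Canong{A} = A\) and nothing to prove. For \(A = \O B\) not a {\tvariant}, the \(\Canon\)-case \(\Canon{A} = \O\Canon{B}\) follows from the induction hypothesis and the \(\O\)-clauses of \(\PFTE\) and \(\NFTE\); for the \(\Canonp\)-case I would split on the form of \(\Canonp{B}\) as in Definition~\ref{canon-def}, the interesting subcase being \(\Canonp{B} = B' \Impl C'\), where \(\Canonp{A} = \O B' \Impl \O C'\) and one checks that membership of \(\O B' \Impl \O C'\) in \(\PFTE\) (respectively \(\NFTE\)) reduces, via the clauses for \(\Impl\) and \(\O\), to exactly the same conditions on \(B'\), \(C'\) as membership of \(B' \Impl C'\)---this is precisely the compatibility of the grammar with \r{{\peqtyp}\mbox{-{\bf K}/{\bf L}}}.

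The substantial case is \(A = \fix{X}B\), not a {\tvariant}, where \(\Canong{A} = \Canong{B}[A/X]\). Writing \(C = \Canong{B}\), the induction hypothesis gives \(C \in \PFTE \Leftrightarrow B \in \PFTE\) and \(C \in \NFTE \Leftrightarrow B \in \NFTE\), with \(\PETV{C} = \PETV{B}\), \(\NETV{C} = \NETV{B}\), and \(C\) proper in \(X\). For the forward \(\PFTE\)-direction I would assume \(\fix{X}B \in \PFTE\), read off from Definition~\ref{pfte-nfte-def} that \(B \in \PFTE\) (hence \(C \in \PFTE\)) and that \(X \in \NETV{B}\) forces \(B \in \NFTE\) and so \(\fix{X}B \in \NFTE\); then apply Proposition~\ref{pfte-nfte-subst} with \(A = \fix{X}B\) substituted into \(C\), its hypothesis (b) holding because \(\fix{X}B \in \PFTE\) by assumption and its hypothesis (c) holding by the derived \(\fix{X}B \in \NFTE\), yielding \(\Canong{A} = C[A/X] \in \PFTE\). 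For the backward direction I would assume \(C[A/X] \in \PFTE\); since \(A\) is not a {\tvariant}, Proposition~\ref{pfte-nfte-subst-orig} yields \(C \in \PFTE\), hence \(B \in \PFTE\), and when \(X \in \NETV{C}\) Proposition~\ref{nfte-occur}(2) gives \(\fix{X}B \in \NFTE\), whence \(B \in \NFTE\); together with \(\fix{X}B \in \TFTE\) (Theorem~\ref{tf-tfte-not-top-theorem}), Definition~\ref{pfte-nfte-def} then delivers \(\fix{X}B \in \PFTE\). The \(\NFTE\) half of this case is symmetric, using Proposition~\ref{pfte-occur}(2) in place of Proposition~\ref{nfte-occur}(2).

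The hard part will be the bookkeeping of the \(\fix\)-case: matching each of the three side-conditions in the \(\PFTE\)/\(\NFTE\) \(\fix\)-clauses against the correct instance of the substitution lemma (Proposition~\ref{pfte-nfte-subst}) and the occurrence lemmas (Propositions~\ref{pfte-occur} and \ref{nfte-occur}), while keeping track of \(A = \fix{X}B\) in its role as the \emph{substituted} expression rather than the ambient one. The polarity interplay---negative occurrences requiring the opposite finiteness class---is where slips are easiest, so I would write out both directions for \(\PFTE\) and \(\NFTE\) explicitly and verify each side-condition one at a time.
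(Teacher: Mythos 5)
Your proposal is correct and follows essentially the same route as the paper's proof: the same induction on \(h(A)\) with the same case analysis, the same reduction of the \(\O\)-case to the form of \(\Canong{B}\), and in the \(\fix{X}B\) case the same combination of Propositions~\ref{pfte-nfte-subst-orig}, \ref{pfte-nfte-subst}, \ref{pfte-occur} and \ref{nfte-occur} (the paper tracks \(X \in \NETV{\Canong{B}}\) where you track \(X \in \NETV{B}\), but these coincide by Proposition~\ref{geqtyp-etv}). No gaps.
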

\begin{proof}
By induction on \(h(A)\), and by cases on the form of \(A\).
Note that the claim obviously holds when \(A\) is a {\tvariant}.
Hence, we only consider the case that \(A\) is not.

\Case{\(A = X\) for some \(X\).}
Trivial since \(\Canong{A} = A\) in this case.

\Case{\(A = \O B\) for some \(B\).}
In this case,
\(A \in \PFTE\) (respectively, \(\NFTE\)) iff
\(B \in \PFTE\) (respectively, \(\NFTE\)).
On the other hand,
\(B \in \PFTE\) (respectively, \(\NFTE\)) iff
\(\Canong{B} \in \PFTE\) (respectively, \(\NFTE\))
by induction hypothesis.
Therefore, it suffices to show that
\(\Canong{A} \in \PFTE\) (respectively, \(\NFTE\)) iff
\(\Canong{B} \in \PFTE\) (respectively, \(\NFTE\)),
which can be established by considering the form of \(\Canong{B}\) as follows.
Note that \(\Canong{B} \not= \t\) since \(A\) is not a {\tvariant}.
If \(\Canong{B} = \O^n X\) for some \(n\) and \(X\), then
\(\Canong{A} = \O \Canong{B}\), and
if \(\Canong{B} = \O^n(C \Impl D)\) for some \(n\), \(C\) and \(D\),
where \(n = 0\) is case of \(\peqtyp\), then
\(\Canon{A} = \O \Canon{B}\) and \(\Canonp{A} = \O C \Impl \O D\).
In either case, we can get
\(\Canong{A} \in \PFTE\) (respectively, \(\NFTE\)) iff
\(\Canong{B} \in \PFTE\) (respectively, \(\NFTE\)),
by Definition~\ref{pfte-nfte-def}.

\Case{\(A = B \Impl C\) for some \(B\) and \(C\).}
Trivial since \(\Canong{A} = A\) in this case.

\Case{\(A = \fix{X}B\) for some \(X\) and \(B\).}
Note that \(\Canong{A} = \Canong{B}[A/X]\) in this case.
Since \(A\) is not a {\tvariant}, \(B\) is not either; and hence,
\(B \ngeqtyp \t\) by Theorem~\ref{geqtyp-t-tvariant}.
First, we show the ``if'' part.
Suppose that \(\Canong{A} = \Canong{B}[A/X] \in \PFTE\)
(respectively, \(\NFTE\)), which also implies
\(\Canong{B} \in \PFTE\) (respectively, \(\NFTE\))
by Proposition~\ref{pfte-nfte-subst-orig}.
Hence, by induction hypothesis,
\begin{eqnarray}
    \label{pfte-nfte-canon-01}
    B &\in& \PFTE~(\mbox{respectively,}~\NFTE).
\end{eqnarray}
Furthermore, if \(X \in \NETV{B}\), then
\(X \in \NETV{\Canong{B}}\) by Propositions~\ref{geqtyp-canon}
and \ref{geqtyp-etv}; and therefore,
\(A \in \NFTE\) (respectively, \(\PFTE\))
by Proposition~\ref{nfte-occur} (respectively, \ref{pfte-occur}),
which also implies
\(B \in \NFTE\) (respectively, \(A \in \PFTE\))
by Definition~\ref{pfte-nfte-def}.
That is,
\begin{eqnarray}
    \label{pfte-nfte-canon-02}
    B &\in& \NFTE~(\mbox{respectively,}~A \in \PFTE),
	~\mbox{if}~X \in \NETV{B}.
\end{eqnarray}
On the other hand, \(A \in \TFTE\) by Theorem~\ref{tf-tfte-not-top-theorem}
since \(A\) is not a {\tvariant}.
Therefore, \(A \in \PFTE\) (respectively, \(\NFTE\))
from (\ref{pfte-nfte-canon-01}) and (\ref{pfte-nfte-canon-02})
by Definition~\ref{pfte-nfte-def}.

As for the ``only if'' part,
suppose that \(A \in \PFTE\) (respectively, \(\NFTE\)), which implies
\(B \in \PFTE\) (respectively, \(\NFTE\)).
Hence, by induction hypothesis,
\begin{eqnarray}
    \label{pfte-nfte-canon-03}
    \Canong{B} &\in& \PFTE~(\mbox{respectively,}~\NFTE).
\end{eqnarray}
On the other hand, if \(X \in \NETV{\Canong{B}}\), then
\(X \in \NETV{B}\) by Propositions~\ref{geqtyp-canon}
and \ref{geqtyp-etv}; and therefore,
\(B \in \NFTE\) (respectively, \(A \in \PFTE\))
by Definition~\ref{pfte-nfte-def}
since \(A \in \PFTE\) (respectively, \(\NFTE\)),
which also implies
\(A \in \NFTE\) (respectively, \(\PFTE\)).
That is,
\begin{eqnarray}
    \label{pfte-nfte-canon-04}
    A &\in& \NFTE~(\mbox{respectively,}~\PFTE),
	~\mbox{if}~X \in \NETV{\Canong{B}}.
\end{eqnarray}
Therefore, since \(A \in \PFTE\) (respectively, \(\NFTE\)),
we get \(\Canong{A} = \Canong{B}[A/X] \in \PFTE\) (respectively, \(\NFTE\))
from (\ref{pfte-nfte-canon-03}) and (\ref{pfte-nfte-canon-04})
by Proposition~\ref{pfte-nfte-subst}.
\qed\CHECKED{2014/07/12, 07/17}
\end{proof}

\begin{lemma}\label{pfte-pf-lemma}
If\/ \(A \in \PFTE\) (respectively, \(\NFTE\)), then
\(A\) is positively (respectively, negatively) finite.
\end{lemma}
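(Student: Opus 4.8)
The plan is to prove the two halves of the lemma simultaneously---that $A \in \PFTE$ implies $A$ is positively finite and $A \in \NFTE$ implies $A$ is negatively finite---by unfolding the definition of positive/negative finiteness and inducting on the height of the extraction context. Concretely, writing the statements with $B$ as the induction variable, I would show by induction on $h(B)$: (P) if $A \in \PFTE$, $A \geqtyp B[C/X]$, $X \in \PETV{B}$ and $X \notin \NETV{B}$, then $C \in \TFTE$; and (N) the same with $\NFTE$, $X \in \NETV{B}$ and $X \notin \PETV{B}$. By Theorem~\ref{tf-tfte-not-top-theorem} it is equivalent to produce $C \ngeqtyp \t$, i.e.\ that $C$ is not a {\tvariant}. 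Before the induction I would use Propositions~\ref{geqtyp-etv} and~\ref{geqtyp-subst} to replace $B$ by its canonical form $\Canong{B}$ (this preserves $\PETV{B}$, $\NETV{B}$ and the $\geqtyp$-class of $B[C/X]$), so that $B$ may be assumed to have one of the three canonical shapes $\t$, $\O^{n}X'$, or $\O^{n}(D \Impl E)$.

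The variable-like cases are immediate and, pleasantly, dispose of the {\tvariant} worry for free. If $B = \O^{n} X'$ with $X \in \PETV{B}$, then $X' = X$ and $B[C/X] = \O^{n} C$, so $A \geqtyp \O^{n} C$; since $A \in \PFTE \subseteq \TFTE$ by Proposition~\ref{pfte-tfte}, $\O^{n} C$ is not a {\tvariant}, hence neither is $C$ (Proposition~\ref{O-tvariant} with Proposition~\ref{geqtyp-tvariant}), giving $C \in \TFTE$. The case $B = \t$ is vacuous because $X \in \PETV{B}$ forces $\ETV{B} \neq \{\}$, and the purely-$\O$ case $B = \O D$ is handled by peeling one $\O$ off $\Canong{A}$ via Proposition~\ref{canon-geqtyp-equiv} and applying the induction hypothesis to $D$, whose height is smaller.

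The $\Impl$ case is where the polarity bookkeeping does its work. Suppose $B = \O^{n}(D \Impl E)$, so that $X \in \PETV{B} = \NETV{D} \cup \PETV{E}$ (Definition~\ref{etv-def}) while $X \notin \NETV{B} = \PETV{D} \cup \NETV{E}$. From $A \geqtyp \O^{n}(D[C/X] \Impl E[C/X])$ and Proposition~\ref{canon-geqtyp-equiv}, the canonical form of $A$ is $\O^{n}(G \Impl H)$ with $G \geqtyp D[C/X]$ and $H \geqtyp E[C/X]$; and since $A \in \PFTE$ gives $\Canong{A} \in \PFTE$ by Proposition~\ref{pfte-nfte-canon}, Definition~\ref{pfte-nfte-def} yields $G \in \NFTE$ and $H \in \PFTE$. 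Now the hypotheses split cleanly: if $X \in \PETV{E}$ (hence $X \notin \NETV{E}$, since $X \notin \NETV{B}$) the induction hypothesis~(P) applied to $(H, E)$ gives $C \in \TFTE$; if instead $X \in \NETV{D}$ (hence $X \notin \PETV{D}$) the induction hypothesis~(N) applied to $(G, D)$ gives $C \in \TFTE$. The negative statement~(N) is symmetric, reading off $G \in \PFTE$ and $H \in \NFTE$ from $\Canong{A} \in \NFTE$ and switching the roles of (P) and (N).

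The main obstacle is the recursive-type case $B = \fix{Y}D$, which is precisely why the earlier occurrence machinery was developed. After reducing to canonical form the top-level $\fix$ disappears, but the recursive type reappears inside the $\Impl$-components produced by canonicalisation, so a naive induction on $h(B)$ is not well founded: $h(\Canong{B})$ can exceed $h(B)$. I expect to resolve this exactly as the well-definedness arguments elsewhere in the paper do, namely by exploiting that properness guards every recursive occurrence under an $\O$, so that each re-canonicalisation is absorbed by the $\O$-peeling step; the induction should therefore be carried on $r(\Canong{B})$ (or a lexicographic combination with $h$) rather than on $h(B)$ alone, and Propositions~\ref{pfte-occur} and~\ref{nfte-occur} can be invoked to transfer $\PFTE$/$\NFTE$-membership across the unfolding. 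Getting this measure right, while keeping the two polarity hypotheses correctly threaded through the simultaneous induction, is the delicate part of the argument.
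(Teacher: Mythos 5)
Your overall decomposition matches the paper's: unfold Definition~\ref{pnf-type-def} to the task of deriving $C \in \TFTE$ from $A \geqtyp B[C/X]$ with the stated polarity conditions, canonicalize, split on the three canonical shapes of $B$, and in the $\Impl$ case read $G \in \NFTE$, $H \in \PFTE$ off $\Canong{A}$ via Proposition~\ref{pfte-nfte-canon} and thread the two polarities through a simultaneous induction. The variable and $\Impl$ cases are handled exactly as in the paper. However, there is a genuine gap at the point you yourself flag as delicate: the induction measure. You correctly observe that $h(B)$ is not well founded once recursive types are canonicalized away, but the repair you sketch --- inducting on $r(\Canong{B})$ or a lexicographic combination with $h$ --- does not work either. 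By Definition~\ref{height-rank-def}, $r(\O^{n}(D \Impl E)) = 0$ for $n \geq 1$, so rank does not decrease when you pass to $D$ or $E$; and even at $n = 0$, after applying the inductive hypothesis you must re-canonicalize $D$ or $E$, and since $\Canong{(\fix{Y}F)} = \Canong{F}[\fix{Y}F/Y]$ substitutes the whole recursive type back in, neither the height nor the rank of the resulting expression is controlled by that of $B$. A concrete failure: with $B' = \fix{Y}(Z \Impl \O Y)$ one has $\Canon{B'} = Z \Impl \O B'$, and recursing into the right component $\O B'$ yields $\Canon{(\O B')} = \O(Z \Impl \O B')$ of rank $0$, whose right component is again $\O B'$ --- the recursion never terminates under your proposed measures.

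The measure the paper uses, and the one you need, is the $\Impl$-depth $\PIdp(B,X)$ (respectively $\NIdp(B,X)$) of Definition~\ref{depth-def}. It has exactly the three properties required: it is finite precisely when $X \in \PETV{B}$ (resp.\ $\NETV{B}$) by Proposition~\ref{depth-finite-etv}, so condition~(c) guarantees a finite starting value; it is invariant under $\geqtyp$ by Proposition~\ref{geqtyp-depth}, so re-canonicalization after each step is harmless; and it strictly decreases across the $\Impl$ case, because $\PNIdp(D \Impl E, X) = \min(\NPIdp(D,X), \PNIdp(E,X)) + 1$. With this measure the $\fix$ case disappears entirely (it is absorbed into canonicalization, whose effect on the measure is nil), and Propositions~\ref{pfte-occur} and~\ref{nfte-occur} are not needed here. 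Without it, your induction does not close.
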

\begin{proof}
By Theorem~\ref{tf-tfte-not-top-theorem},
it suffices to derive \(C \in \TFTE\) from the following assumptions:
\begin{Enumerate}
\item[\it(a)] \(A \in \PFTE\) (respectively, \(\NFTE\)),
\item[\it(b)] \(A \geqtyp B[C/X]\),
\item[\it(c)] \(X \in \PETV{B}\) (respectively, \(\NETV{B}\)), and
\item[\it(d)] \(X \not\in \NETV{B}\) (respectively, \(\PETV{B}\)).
\end{Enumerate}
We can assume that \(A\) and \(B\) are canonical
without loss of generality, because
\(\Canong{A} \in \PFTE\) (respectively, \(\NFTE\))
iff \(A \in \PFTE\) (respectively, \(\NFTE\))
by Proposition~\ref{pfte-nfte-canon}, and because
\(\Canong{A} \geqtyp A\),
\(\Canong{B}[C/X] \geqtyp B[C/X]\) and \(\PNETV{\Canong{B}} = \PNETV{B}\)
by Propositions~\ref{geqtyp-canon}, \ref{geqtyp-subst} and
\ref{geqtyp-etv}.
Note that \(B \not= \t\) from (c)
by Proposition~\ref{tvariant-etv},
and that Condition~(c) can be rewritten as
\begin{Enumerate}
\item[\it(c${}'$)]
    \(\PIdp(B,\,X) < \infty\) (respectively, \(\NIdp(B,\,X) < \infty\))
\end{Enumerate}
by Proposition~\ref{depth-finite-etv}.
Furthermore, if \(A = \t\), then \(A \not\in \PFTE\)
by Proposition~\ref{pfte-tfte} and Theorem~\ref{tf-tfte-not-top-theorem}
(respectively, \(X \in \PETV{B}\) from (b)
by Propositions~\ref{tvariant-subst2}, \ref{tail-etv}
and Theorem~\ref{geqtyp-t-tvariant}
since \(B \ngeqtyp \t\)).
That is, \(A = \t\) contradicts (a) (respectively, (d)).
Hence, we can also assume that \(A \not= \t\); and hence,
\(A \in \TFTE\) by Theorem~\ref{tf-tfte-not-top-theorem}.
The proof proceeds by induction on \(\PNIdp(B,\,X)\),
and by cases on the form of \(B\).

\Case{\(B = \O^n Y\) for some \(n\) and \(Y\).}
In this case, \(X \not\in \NETV{B}\) and \(Y = X\) from (c);
and hence, \(A \geqtyp B[C/X] = \O^n C\).
Since \(A \in \TFTE\), \(A\) is tail finite
by Theorem~\ref{tf-tfte-not-top-theorem}; and hence,
so is \(C\) by Definition~\ref{tf-def} and Proposition~\ref{geqtyp-congr}.
Therefore, \(C \in \TFTE\)
by Theorem~\ref{tf-tfte-not-top-theorem} again.

\Case{\(B = \O^n (D \Impl E)\) for some \(n\),
    \(D\) and \(E\) such that \(E \ngeqtyp \t\),
    where \(n = 0\) in case of \(\peqtyp\).}
In this case, \(A \geqtyp B[C/X] = \O^n (D[C/X] \Impl E[C/X])\)
and \(\Canong{B[C/X]} = \O^n (D[C/X] \Impl E[C/X])\).
Hence, \(A = \O^n (F \Impl G)\) for some \(F\) and \(G\)
such that \(F \geqtyp D[C/X]\) and \(G \geqtyp E[C/X] \ngeqtyp \t\)
by Proposition~\ref{canon-congr}.
Therefore,
since \(\PNETV{B} = \NPETV{D} \cup \PNETV{E}\) and
\(\PNIdp(B,\,X) = \min(\NPIdp(D,\,X),\,\PNIdp(E,\,X)) + 1\),
the assumptions (a), (c${}'$) and (d) imply the following.
\begin{eqnarray*}
    && \NIdp(D,\,X) < \PIdp(B,\,X)~\mbox{or}~\PIdp(E,\,X) < \PIdp(B,\,X) \\
    && \mskip130mu (\mbox{respectively,}~
	\PIdp(D,\,X) < \NIdp(B,\,X)~\mbox{or}~\NIdp(E,\,X) < \NIdp(B,\,X)), \\
    && F \in \NFTE~(\mbox{respectively,}~\PFTE),~F \geqtyp D[C/X]
	~\mbox{and}~X \not\in \PETV{D}~(\mbox{respectively,}~\NETV{D}),
    ~\mbox{and} \\
    && G \in \PFTE~(\mbox{respectively,}~\NFTE),~G \geqtyp E[C/X]
	~\mbox{and}~X \not\in \NETV{E}~(\mbox{respectively,}~\PETV{E}).
\end{eqnarray*}
Therefore, we get \(C \in \TFTE\) by induction hypothesis.
\qed\CHECKED{2014/07/15, 07/17}
\end{proof}

\begin{lemma}\label{pf-pfte-lemma}
If\/ \(A\) is positively (respectively, negatively) finite, then
\(A \in \PFTE\) (respectively, \(\NFTE\)).
\end{lemma}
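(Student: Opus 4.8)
The plan is to prove the contrapositive by simultaneous induction: I will show that $A \not\in \PFTE$ implies $A$ is not positively finite, and $A \not\in \NFTE$ implies $A$ is not negatively finite. Here ``$A$ is not positively finite'' means that there are $B$, $C$ and $X$ with $X \in \PETV{B}$, $X \notin \NETV{B}$, $A \geqtyp B[C/X]$ and $C$ a tvariant (equivalently $C \geqtyp \t$ by Theorem~\ref{geqtyp-t-tvariant}, i.e. $C$ not tail finite by Theorem~\ref{tf-tfte-not-top-theorem}); ``not negatively finite'' is the same with the roles of $\PETV{\cdot}$ and $\NETV{\cdot}$ exchanged. Throughout I may take the filler $C$ to be $\t$, since $C \geqtyp \t$ gives $B[C/X] \geqtyp B[\t/X]$ by Proposition~\ref{geqtyp-subst}, so any witness can be normalised to one with $C = \t$. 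The induction is on $h(A)$, by cases on the form of $A$.

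The base and non-recursive cases are routine. If $A$ is a type variable then $A \in \PFTE \cap \NFTE$ and both premises are vacuous; if $A$ is a tvariant then $A \in \NFTE$ (vacuous for the negative claim) while the trivial witness $B = X$, $C = A$ shows $A$ is not positively finite. For $A = \O B$ (not a tvariant), membership in $\PFTE$ (resp. $\NFTE$) is equivalent to that of $B$, and a witness for $B$ lifts to one for $\O B$ because $\PETV{\O B'} = \PETV{B'}$ and $\NETV{\O B'} = \NETV{B'}$ (and $\O B'$ is not a tvariant by Propositions~\ref{tvariant-etv} and~\ref{O-tvariant}), together with $\O B \geqtyp \O(B'[\t/Y]) = (\O B')[\t/Y]$. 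For $A = B \Impl E$ with $E$ not a tvariant, $A \in \PFTE$ iff $B \in \NFTE$ and $E \in \PFTE$; a positive witness for $E$ yields one for $A$ by prepending $B \Impl (\,\cdot\,)$, and a negative witness for $B$ yields a \emph{positive} witness for $A$ by appending $(\,\cdot\,) \Impl E$, using $\PETV{B' \Impl E} = \NETV{B'} \cup \PETV{E}$ and choosing the marked variable fresh for $E$. The negative claim is symmetric.

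The real work is the case $A = \fix{X}B$ (not a tvariant), where $A \in \PFTE$ requires $B \in \PFTE$ together with the side condition $X \notin \NETV{B}$ or $B \in \NFTE$. Propagating a witness through the binder is handled by \emph{unfolding rather than re-binding}: from $A \geqtyp B[A/X]$ and a witness $B \geqtyp B'[\t/Y]$ (with $Y$ chosen $\ne X$ and fresh for $A$) one forms $B'' = B'[A/X]$, so that $A \geqtyp B''[\t/Y]$, and since $A$ is \emph{not} a tvariant, Proposition~\ref{petv-netv-subst}, items~\itemref{petv-netv-subst1} and~\itemref{petv-netv-subst2}, guarantees that $Y$ keeps exactly its sign in $B''$. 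This at once gives pos-fin$(A)\Rightarrow$pos-fin$(B)$ and neg-fin$(A)\Rightarrow$neg-fin$(B)$, whence $B \in \PFTE$ (resp. $\NFTE$) by induction. The delicate point is the side condition when $X \in \NETV{B}$: I must show that a negative witness for $B$ then produces a positive witness for $A$. The mechanism is that a purely-negative occurrence of the recursion variable, once filled by $A$ and unfolded, places a purely-negative occurrence of $\t$ at a doubly-negative, hence positive, position. To make this precise I would first apply Proposition~\ref{etv-separate} to isolate a purely-negative occurrence of $X$ (replacing $B$ by some $B_1$ with $X' \in \NETV{B_1}$ and $X' \notin \PETV{B_1}$), then nest the negative witness of $B$ into that $X'$-slot and compute signs with Proposition~\ref{petv-netv-nest}: item~\itemref{petv-netv-nest2} turns the negative $Y$ into a positive one, while items~\itemref{petv-netv-nest3} and~\itemref{petv-netv-nest4} certify that no spurious opposite-sign occurrence is created. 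Propositions~\ref{pfte-occur} and~\ref{nfte-occur} serve as the bookkeeping companions.

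I expect this last sign-flipping computation to be the main obstacle, for two reasons. First, it forces a genuinely simultaneous treatment of the positive and negative statements: in the $\fix$ case with $X \in \NETV{B}$ the positive and negative effective variables of $A$ coincide, $\PETV{\fix{X}B} = \NETV{\fix{X}B} = \ETV{B} - \{\,X\,\}$, so the two claims feed each other through the grammar's side conditions rather than each reducing to a strictly smaller instance of one fixed polarity. Second, combining the separation of Proposition~\ref{etv-separate} with a witness of $B$ and tracking every occurrence through the nested substitution via the four items of Proposition~\ref{petv-netv-nest} is exactly the kind of bookkeeping that is easy to get wrong; keeping the marked variable's sign \emph{pure}---present on one side and absent on the other---after two substitutions is the crux. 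All the remaining cases, and the reductions to $C = \t$ and to the unfolding $A \geqtyp B[A/X]$, are routine consequences of the earlier propositions.
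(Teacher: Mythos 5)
Your proposal is correct and is essentially the paper's own proof read contrapositively: the same induction on \(h(A)\), the same witness-transport through each type constructor, and the same crux in the \(\fix{X}B\) case --- Propositions~\ref{etv-separate}, \ref{geqtyp-fix-nest} and \ref{petv-netv-nest} to turn a negative witness for \(B\) into a positive one for \(A\) --- with the normalisation of the filler to \(\t\) being a harmless simplification licensed by Theorem~\ref{geqtyp-t-tvariant}. The one point needing a word of care is that in the negative claim at \(\fix{X}B\) the grammar's side condition can fail via \(\fix{X}B \not\in \PFTE\) at the \emph{same} height, so a bare induction on \(h(A)\) does not literally suffice: either order by (height, polarity) with the negative claim above the positive one, or unfold that failure one more level into its smaller-height subcases --- which is exactly how the paper's direct version sidesteps the issue by assembling \(\fix{Y}D \in \PFTE\) from the already-established conditions on \(D\) rather than re-invoking the hypothesis at \(A\).
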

\begin{proof}
We use Theorems~\ref{geqtyp-t-tvariant} and \ref{tf-tfte-not-top-theorem}
without specific mention, below.
Suppose that
\(A\) is positively (negatively) finite, i.e., \(C \in \TFTE\)
if there exists some \(B\) such that
\begin{eqnarray}
\label{pf-pfte-01}
    && A \geqtyp B[C/X], \\
\label{pf-pfte-02} 
    && X \in \PETV{B}~(\mbox{respectively,}~\NETV{B}),~\mbox{and} \\
\label{pf-pfte-03}
    && X \not\in \NETV{B}~(\mbox{respectively,}~\PETV{B}).
\end{eqnarray}
The proof proceeds by induction on \(h(A)\), and by cases on the form of \(A\).

\Case{\(A = Y\) for some \(Y\).}
\(A \in \PFTE \cap \NFTE\) by Definition~\ref{pfte-nfte-def}.

\Case{\(A = \O D\) for some \(D\).}
By induction hypothesis and Definition~\ref{pfte-nfte-def},
it suffices to show that \(D\) is positively (respectively, negatively) finite.
Suppose that
\(D \geqtyp B'[C/X]\), \(X \in \PETV{B'}\) and \(X \not\in \NETV{B'}\)
(respectively, \(X \in \NETV{B'}\) and \(X \not\in \PETV{B'}\))
for some \(X\) and \(B'\).
Let \(B = \O B'\).
Then we get (\ref{pf-pfte-01}) through (\ref{pf-pfte-03});
and hence, \(C \in \TFTE\).

\Case{\(A = D \Impl E\) for some \(D\) and \(E\).}
By induction hypothesis and Definition~\ref{pfte-nfte-def},
it suffices to show that (a) \(D\) is negatively (respectively, positively)
finite, and (b) \(E\) is positively (respectively, negatively) finite.
For (a), suppose that
\(D \geqtyp B'[C/X]\), \(X \in \NETV{B'}\) and \(X \not\in \PETV{B'}\)
(respectively, \(X \in \PETV{B'}\) and \(X \not\in \NETV{B'}\))
for some \(X\) and \(B'\).
We can assume that \(X \not\in \FTV{E}\)
without loss of generality.
Let \(B = B' \Impl E\).
Then we get (\ref{pf-pfte-01}) through (\ref{pf-pfte-03});
and hence, \(C \in \TFTE\).
We can similarly establish (b).

\Case{\(A = \fix{Y}D\) for some \(Y\) and \(D\).}
If \(A \geqtyp \t\), then obviously \(A \in \NFTE\); and
\(A \geqtyp X[\t/X]\), which means that
\(A\) is not positively finite.
Therefore, we now assume that \(A \ngeqtyp \t\), that is,
\(A\) is not a {\tvariant} and \(A \in \TFTE\). 
It suffices to show the following.
\begin{eqnarray}
\label{pf-pfte-04}
    && \fix{Y}{D} \in \TFTE~(\mbox{respectively,}~\TE), \\
\label{pf-pfte-05}
    && D \in \PFTE~(\mbox{respectively,}~\NFTE), ~\mbox{and} \\
\label{pf-pfte-06}
    && Y \not\in \NETV{D} ~\mbox{or}~ D \in \NFTE~
    (\mbox{respectively,}~\fix{Y}D \in \PFTE).
\end{eqnarray}
First, if \(A\) is positively finite, then
\(\fix{Y}D \in \TFTE\) since \(A \geqtyp X[\fix{Y}D/X]\).
Thus, we get (\ref{pf-pfte-04}).
Second, for (\ref{pf-pfte-05}),
by induction hypothesis, it is sufficient to show that
\(D\) is positively (respectively, negatively) finite.
To this end, suppose that
the following hold for some \(X\) and \(B'\).
\begin{eqnarray}
\label{pf-pfte-07}
    && D \geqtyp B'[C/X] \\
\label{pf-pfte-08}
    && X \in \PETV{B'}~(\mbox{respectively,}~\NETV{B'}),~\mbox{and} \\
\label{pf-pfte-09}
    && X \not\in \NETV{B'}~(\mbox{respectively,}~\PETV{B'}).
\end{eqnarray}
We show that (\ref{pf-pfte-07}) through (\ref{pf-pfte-09})
imply \(C \in \TFTE\).
We can assume that \(X \not\in \FTV{A} \cup \{\,Y\,\}\)
without loss of generality.
Let \(B_1 = B'[A/Y]\).
Then we get
\begin{Eqnarray*}
    A &\geqtyp& \fix{Y}B'[C/X]
	    & (by (\ref{pf-pfte-07}) and Proposition~\ref{geqtyp-fix-congr}) \\
	&\geqtyp& B'[C/X][\fix{Y}B'[C/X]/Y] & (by \r{{\eqtyp}\mbox{-fix}}) \\
	&\geqtyp& B'[C/X][A/Y]
	    & (by Proposition~\ref{geqtyp-subst}
		 since \(A \geqtyp \fix{Y}B'[C/X]\)) \\
	&=& B'[C[A/Y]/X,A/Y] & (since \(X \not= Y\)) \\
	&=& B'[A/Y][C[A/Y]/X] & (since \(X \not\in \FTV{A}\)) \\
	&=& B_1[C[A/Y]/X].
\end{Eqnarray*}
Furthermore, since \(A\) is not a {\tvariant}, we get
\[
    X \in \PETV{B_1}~\;(\mbox{respectively,}~\NETV{B_1})
\]
from (\ref{pf-pfte-08}) by Proposition~\ref{petv-netv-subst1}, and
\[
    X \not\in \NETV{B_1}~\;(\mbox{respectively,}~\PETV{B_1})
\]
from (\ref{pf-pfte-09}) by Proposition~\ref{petv-netv-subst2}
since \(X \not\in \FTV{A}\).
Therefore, \(C[A/Y] \in \TFTE\)
since \(A\) is positively (respectively, negatively) finite; and hence,
\(C \in \TFTE\) by Proposition~\ref{tfte-subst3}.
Thus, we establish that \(D\) is positively (respectively, negatively)
finite; and hence, we get (\ref{pf-pfte-05}) by induction hypothesis.

What is left is to establish (\ref{pf-pfte-06}).
Suppose that \(Y \in \NETV{D}\).
Then, by Proposition~\ref{etv-separate},
there exist some \(D'\) and \(Y'\) such that
\begin{eqnarray}
\label{pf-pfte-10}
    && D \geqtyp  D'[Y/Y'], \\
\label{pf-pfte-11}
    && Y' \in \NETV{D'},~\mbox{and} \\
\label{pf-pfte-12}
    && Y' \not\in \PETV{D'}.
\end{eqnarray}
We can assume that \(Y' \not\in \FTV{D} \cup \{Y\}\)
without loss of generality.
Since \(D\) is proper in \(Y\),
so is \(D'[Y/Y']\)
from (\ref{pf-pfte-10}) by Proposition~\ref{geqtyp-proper}.
As an important step for establishing (\ref{pf-pfte-06}),
we show that \(D\) is negatively (respectively, positively) finite.
To this end, suppose also that there exist some \(B''\) and \(X\) such that
\begin{eqnarray}
\label{pf-pfte-13}
    && D \geqtyp B''[C/X], \\
\label{pf-pfte-14}
    && X \in \NETV{B''}~(\mbox{respectively},~\PETV{B''}),~\mbox{and} \\
\label{pf-pfte-15}
    && X \not\in \PETV{B''}~(\mbox{respectively},~\NETV{B''}).
\end{eqnarray}
It suffices to show that \(C \in \TFTE\) follows
from (\ref{pf-pfte-13}), (\ref{pf-pfte-14}) and (\ref{pf-pfte-15}).
We can assume that \(X \not\in \FTV{A} \cup \FTV{D'} \cup \{\,Y,\,Y'\,\}\).
Therefore,
\begin{Eqnarray*}
&& \mskip-8mu A \geqtyp \fix{Y}D'[Y/Y']
	& (by (\ref{pf-pfte-10}) and Proposition~\ref{geqtyp-fix-congr}) \\
    &&\geqtyp \fix{Y}D'[D'[Y/Y']/Y']
	    & (by Proposition~\ref{geqtyp-fix-nest}) \\
    &&\geqtyp \fix{Y}D'[D/Y']
	    & (by (\ref{pf-pfte-10}) and Proposition~\ref{geqtyp-subst}) \\
    &&\geqtyp \fix{Y}D'[B''[C/X]/Y']
	    & (by (\ref{pf-pfte-13}) and Proposition~\ref{geqtyp-subst}) \\
    &&\geqtyp D'[B''[C/X]/Y'][\fix{Y}D'[B''[C/X]/Y']/Y]
	    & (by \r{{\eqtyp}\mbox{-fix}}) \\
    &&\geqtyp D'[B''[C/X]/Y'][A/Y] & \hskip-70pt
	    (by Proposition~\ref{geqtyp-subst}
		since \(A \geqtyp \fix{Y}D'[B''[C/X]/Y']\)) \\
    &&= D'[B''[C/X][A/Y]/Y',\,A/Y]
	    & (since \(Y' \not= Y\)) \\
    &&= D'[B''[A/Y][C[A/Y]/X]/Y',\,A/Y]
	    & (since \(X \not\in \FTV{A}\cup\{Y\}\)) \\
    &&= D'[B''[A/Y]/Y',\,A/Y][C[A/Y]/X]
	    & (since \(X \not\in \FTV{A}\cup\FTV{D'}\)) \\
    &&= D'[B''/Y'][A/Y][C[A/Y]/X]
	    & (since \(Y' \not= Y\))
\end{Eqnarray*}
We get \(X \in \PETV{D'[B''/Y']}\) (respectively, \(\NETV{D'[B''/Y']}\))
from (\ref{pf-pfte-11}) and (\ref{pf-pfte-14})
by Proposition~\ref{petv-netv-nest2}; and hence,
\[
    X \in \PETV{D'[B''/Y'][A/Y]}~\;
	(\mbox{respectively,}~ \NETV{D'[B''/Y'][A/Y]})
\]
by Proposition~\ref{petv-netv-subst1}
since \(Y \not= X\) and \(A\) is not a {\tvariant}.
On the other hand, since \(X \not\in \FTV{D'}\),
\(X \not\in \NETV{D'[B''/Y']}\) (respectively, \(\PETV{D'[B''/Y']}\))
from (\ref{pf-pfte-12}) and (\ref{pf-pfte-15})
by Proposition~\ref{petv-netv-nest3}.
Hence,
\[
    X \not\in \NETV{D'[B''/Y'][A/Y]}~\;
	(\mbox{respectively,}~\PETV{D'[B''/Y'][A/Y]})
\]
by Proposition~\ref{petv-netv-subst2}
because \(X \not\in \FTV{A}\).
Therefore, \(C[A/Y] \in \TFTE\)
since \(A\) is positively (respectively, negatively) finite;
and hence, \(C \in \TFTE\) by Proposition~\ref{tfte-subst3}.
Thus, \(D\) is negatively (respectively, positively) finite.

Now we can finally establish (\ref{pf-pfte-06}).
In fact, we get
\(D \in \NFTE\) (respectively, \(\PFTE\)) by induction hypothesis,
since \(D\) is negatively (positively) finite.
Furthermore, in the case that \(A\) is negatively finite,
\(D \in \PFTE\) also implies \(\fix{Y}D \in \PFTE\), because
\(\fix{Y}D \in \TFTE\) by assumption,
and because \(D \in \NFTE\) is already established as (\ref{pf-pfte-05}).
\qed\CHECKED{2014/07/14, 07/17}
\end{proof}

\begin{proposition}\label{pfte-pf-equiv}
\(A\) is positively (negatively) finite if and only if
\(A \in \PFTE\) (\(\NFTE\)).
\end{proposition}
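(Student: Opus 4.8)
The plan is to obtain the statement directly by combining the two preceding lemmas, which already split the biconditional into its two implications for both the positive and the negative case. Concretely, Lemma~\ref{pf-pfte-lemma} supplies the ``only if'' direction: if $A$ is positively (respectively, negatively) finite, then $A \in \PFTE$ (respectively, $\NFTE$); and Lemma~\ref{pfte-pf-lemma} supplies the ``if'' direction: if $A \in \PFTE$ (respectively, $\NFTE$), then $A$ is positively (respectively, negatively) finite. Since both lemmas are stated uniformly over the positive/negative dichotomy, no further case analysis is needed at this level, and the proposition follows at once by conjoining them.

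The substantive content therefore resides entirely in the two lemmas, so there is essentially nothing left to do at the level of the proposition itself. It is worth recording, however, why the equivalence is genuinely useful: positive/negative finiteness is defined semantically through $\geqtyp$ together with a quantification over all decompositions $A \geqtyp B[C/X]$ carrying a sign-constrained effective occurrence of $X$, whereas membership in $\PFTE$ (respectively, $\NFTE$) is an inductive syntactic condition on the shape of the type expression. The proposition thus reduces an a~priori intractable-looking property to one that is manifestly decidable, exactly as Theorem~\ref{tf-tfte-not-top-theorem} did for tail finiteness.

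The main obstacle, discharged inside the lemmas rather than here, is the recursive-type case $A = \fix{Y}D$. In Lemma~\ref{pf-pfte-lemma} this is where one must unfold the recursion, isolate a sign-pure occurrence of $Y$ via Proposition~\ref{etv-separate}, and track how positive and negative effective type variables propagate under substitution using Propositions~\ref{petv-netv-subst}, \ref{petv-netv-nest} and \ref{tfte-subst3}; in Lemma~\ref{pfte-pf-lemma} the parallel work is done by induction on $\PNIdp(B,\,X)$ together with the canonical-form reduction of Proposition~\ref{pfte-nfte-canon}. At the level of the present proposition, by contrast, the only care required is to keep the parenthetical ``respectively'' pairings matched consistently between the two cited lemmas, which they are by construction, so that the positive half of one lemma combines with the positive half of the other, and likewise for the negative half.
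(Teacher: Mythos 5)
Your proposal is correct and matches the paper's proof exactly: the paper likewise dispatches the proposition as an immediate consequence of Lemmas~\ref{pfte-pf-lemma} and \ref{pf-pfte-lemma}, with each lemma supplying one direction of the biconditional in both the positive and negative cases. The additional remarks you make about where the real work happens inside the two lemmas are accurate but not needed at this level.
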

\begin{proof}
Straightforward from Lemmas~\ref{pfte-pf-lemma} and \ref{pf-pfte-lemma}.
\qed\CHECKED{2014/05/23, 07/17}
\end{proof}

Thus, we verified that the notions of
positively finiteness and
negatively finiteness do not depend on the choice of \(\geqtyp\).
Now we can proceed to the main result of this subsection, namely,
the fact that every \(\lambda\)-term of positively finite types is maximal.
Before proving that, we have to prepare the following three lemmas.

\begin{lemma}\label{pfte-subtyp-lemma}
Let \(A\) and \(B\) be type expressions such that \(A \psubtyp B\).
\begin{Enumerate}
\item If \(B\) is positively finite, then so is \(A\).
\item If \(A\) is negatively finite, then so is \(B\).
\end{Enumerate}
\end{lemma}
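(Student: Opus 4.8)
The plan is to recast the statement in terms of the syntactic classes of Definition~\ref{pfte-nfte-def}. By Proposition~\ref{pfte-pf-equiv}, a type expression is positively finite iff it lies in $\PFTE$ and negatively finite iff it lies in $\NFTE$, so it suffices to prove, for \emph{every} derivable judgment $\subt{\g}{A \psubtyp B}$ (not merely for $\g = \{\}$), the two implications ``$B \in \PFTE$ implies $A \in \PFTE$'' and ``$A \in \NFTE$ implies $B \in \NFTE$'' by simultaneous induction on the derivation, with cases on the last rule applied. Allowing an arbitrary assumption set $\g$ costs nothing: in the base case \r{{\rsubtyp}\mbox{-assump}} both $A$ and $B$ are type variables, and a type variable lies in $\PFTE \cap \NFTE$ by Definition~\ref{pfte-nfte-def}, so both implications hold at once.

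Most rules then fall out from the way $\PFTE$ and $\NFTE$ interchange across $\Impl$ together with the behaviour of subtyping on {\tvariant}s. For \r{{\rsubtyp}\mbox{-reflex}} I would use that $\PFTE$ and $\NFTE$ are closed under $\geqtyp$; for \r{{\rsubtyp}\mbox{-}\t} the facts $\t \in \NFTE$ and $\t \notin \PFTE$ make the first implication vacuous and the second immediate; \r{{\rsubtyp}\mbox{-trans}} simply composes the two implications through the intermediate type; and \r{{\rsubtyp}\mbox{-}{\O}}, \r{{\rsubtyp}\mbox{-approx}} use $\O C \in \PFTE \Leftrightarrow C \in \PFTE$ (and likewise for $\NFTE$). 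The substantive structural rule is \r{{\rsubtyp}\mbox{-}{\Impl}}, whose antecedents are $\subt{\g_1}{B_1 \psubtyp A_1}$ (contravariant) and $\subt{\g_2}{A_2 \psubtyp B_2}$: the contravariance of the domain matches exactly the swap between $\PFTE$ and $\NFTE$ in the grammar, so from $B_1 \Impl B_2 \in \PFTE$ one reads off $B_1 \in \NFTE$ and $B_2 \in \PFTE$, obtains $A_1 \in \NFTE$ by the induction hypothesis for the second implication on the first antecedent and $A_2 \in \PFTE$ by that for the first implication on the second antecedent, and concludes $A_1 \Impl A_2 \in \PFTE$; the {\tvariant} subcase is dispatched by Propositions~\ref{subtyp-gt-t} and \ref{Impl-tvariant}.

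The \emph{main obstacle} is the Amber rule \r{{\rsubtyp}\mbox{-}{\fx}}, whose consequent $\subt{\g}{\fix{X}A \psubtyp \fix{Y}B}$ is obtained from $\subt{\g\cup\{X\psubtyp Y\}}{A \psubtyp B}$ under the side conditions $X \notin \FTV{B}$ and $Y \notin \FTV{A}$. For the first implication, $\fix{Y}B \in \PFTE$ yields $B \in \PFTE$ and the clause ``$Y \notin \NETV{B}$ or $B \in \NFTE$''. The two easier requirements for $\fix{X}A \in \PFTE$ are within reach: $A \in \PFTE$ follows from the induction hypothesis (after, if needed, instantiating the paired variables by a fresh $Z$ via Proposition~\ref{subtyp-inst} and transferring membership with Propositions~\ref{pfte-nfte-subst} and \ref{pfte-nfte-subst-orig}, exploiting that a variable lies in $\PFTE \cap \NFTE$), while $\fix{X}A \in \TFTE$ follows because subtyping preserves {\tvariant}s upward, i.e.\ by the contrapositive of Proposition~\ref{subtyp-gt-t}. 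The genuinely delicate step is to transfer the negative-occurrence clause, namely to establish ``$X \notin \NETV{A}$ or $A \in \NFTE$''. This forces one to couple the effective occurrences of the two recursion variables: under the side conditions the Amber rule matches $X$ in $A$ with $Y$ in $B$ at corresponding positions, and I would strengthen the induction hypothesis to track $\PETV$ and $\NETV$ of the assumption variables, so that a negative free occurrence of $X$ in $A$ forces one of $Y$ in $B$ and the clause of $\fix{Y}B$ can be invoked, the transfer of the resulting $\NFTE$-membership being handled with Propositions~\ref{petv-netv-subst}, \ref{petv-netv-nest}, \ref{pfte-occur} and \ref{nfte-occur}. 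Getting this coupling right, rather than any individual rule, is where essentially all the work lies; the second implication is proved symmetrically, interchanging $X$ with $Y$, $A$ with $B$, and $\PFTE$ with $\NFTE$.
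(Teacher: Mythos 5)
Your overall strategy---induction on the subtyping derivation, proving ``\(B \in \PFTE\) implies \(A \in \PFTE\)'' and ``\(A \in \NFTE\) implies \(B \in \NFTE\)'' simultaneously for arbitrary \(\g\)---does handle every rule except the one you flag, but at the Amber rule it genuinely breaks down rather than merely getting delicate. To conclude \(\fix{X}A' \in \PFTE\) you must discharge the last clause of Definition~\ref{pfte-nfte-def}: when \(X \in \NETV{A'}\) you need \(A' \in \NFTE\). Even granting your (itself unproved) claim that a negative effective occurrence of \(X\) in \(A'\) forces one of \(Y\) in \(B'\), the only thing \(\fix{Y}B' \in \PFTE\) then yields is \(B' \in \NFTE\), and from \(\subt{\g\cup\{X\psubtyp Y\}}{A' \psubtyp B'}\) together with \(B' \in \NFTE\) you cannot infer \(A' \in \NFTE\): that is the transfer of \(\NFTE\)-membership from the supertype to the subtype, which is false in general (\(\t \Impl Z \psubtyp \t\) by \r{{\rsubtyp}\mbox{-}\t}, and \(\t \in \NFTE\), but \(\t \Impl Z \notin \NFTE\) because \(\t \notin \PFTE\)). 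Neither of your two induction hypotheses provides this direction, and Propositions~\ref{petv-netv-subst}, \ref{petv-netv-nest}, \ref{pfte-occur} and \ref{nfte-occur} concern substitution and occurrences, not moving \(\NFTE\)-membership contravariantly across a subtyping judgment. Nor can you simply add ``\(B \in \NFTE\) implies \(A \in \NFTE\)'' (or dually ``\(A \in \PFTE\) implies \(B \in \PFTE\)'') to the simultaneous induction: both fail already at \r{{\rsubtyp}\mbox{-}\t} and at \(Z \psubtyp \t\).

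The paper sidesteps this by not inducting on the derivation at all. It unfolds Definition~\ref{pnf-type-def} directly: assuming \(C[D/X] \psubtyp B\) with \(B \in \PFTE\), \(X \in \PETV{C}\) and \(X \notin \NETV{C}\) (dually for the negative half), it proves \(D \in \TFTE\) by induction on \(\Idp(C,\,X)\), using the previously established inversion lemmas for subtyping of canonical forms (Propositions~\ref{subtyp-O-var} and \ref{subtyp-O-Impl}) to peel one \(\Impl\) off \(C\) and the matching \(\Impl\) off \(\Canonp{B}\), whose components lie in \(\NFTE\) and \(\PFTE\) by the grammar. All analysis of the derivation, including the Amber rule, is thereby delegated to those inversion lemmas, and recursive types never have to be compared clause by clause. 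To rescue your route you would need an auxiliary invariant strong enough to settle clause (b) at \(\fx\), which in effect means tracking positive and negative occurrences of the paired assumption variables through the whole derivation---that is, reconstructing the depth-indexed argument the paper runs instead.
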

\begin{proof}
By Propositions~\ref{depth-finite-etv}, \ref{pfte-pf-equiv} and
Theorem~\ref{tf-tfte-not-top-theorem},
it suffices to show that \(D \in \TFTE\)
if there exist some \(X\), \(A\), \(B\) and \(C\) such that either
\begin{Enumerate}
\item[(a)] \(\PIdp(C,\,X) < \infty\), \(\NIdp(C,\,X) = \infty\),
	    \(C[D/X] \psubtyp B\) and \(B \in \PFTE\), or
\item[(b)] \(\NIdp(C,\,X) < \infty\), \(\PIdp(C,\,X) = \infty\),
	    \(A \psubtyp C[D/X]\) and \(A \in \NFTE\).
\end{Enumerate}
Suppose that (a) or (b) holds.
We can assume that \(C\) is canonical with respect to \(\peqtyp\),
since \(\PNETV{C} = \PNETV{\Canonp{C}}\),
\(C[D/X] \peqtyp \Canonp{C}[D/X]\) and
\(\PNIdp(C,\,X) = \PNIdp(\Canonp{C},\,X)\)
by Propositions~\ref{geqtyp-canon}, \ref{geqtyp-etv},
\ref{geqtyp-subst} and \ref{geqtyp-depth}.
The proof proceeds by induction on \(\Idp(C,\,X)\),
and by cases on the form of \(C\).
We use Theorem~\ref{geqtyp-t-tvariant},
Propositions~\ref{tvariant-basic}, \ref{geqtyp-canon}
and \ref{subtyp-gt-t} without mention, below.

\Case{\(C = \t\).}
This case is impossible, since
\(C = \t\) implies \(\PNETV{C} = \{\}\) by Definition~\ref{etv-def},
which then implies \(\PNIdp(C,\,X) = \infty\)
by Proposition~\ref{depth-finite-etv}.

\Case{\(C = \O^n Y\) for some \(n\) and \(Y\).}
Note that (b) does not hold since \(\NIdp(C,\,X) = \infty\) in this case.
If (a) holds, then \(Y = X\) since \(\PIdp(C,\,X) < \infty\), and
\(B \npeqtyp \t\) by Proposition~\ref{pfte-tfte} and
Theorem~\ref{tf-tfte-not-top-theorem}.
Hence, \(C[D/X] = \O^n D \npeqtyp \t\)
since \(C[D/X] \psubtyp B\) in this case, which implies
\(D \npeqtyp \t\) by Proposition~\ref{geqtyp-t1}.
Therefore, \(D \in \TFTE\) by Theorem~\ref{tf-tfte-not-top-theorem}.

\Case{\(C = E \Impl F\) for some \(E\) and \(F\) such that
\(F \npeqtyp \t\).}
In this case,
\(C[D/X] = (E \Impl F)[D/X] = E[D/X] \Impl F[D/X]\).
Hence, if (a) holds, we get the following:
\begin{eqnarray}
\label{pfte-subtyp-01}
    && \NIdp(E,\,X) < \PIdp(C,\,X) ~\mbox{or}~ \PIdp(F,\,X) < \PIdp(C,\,X), \\
\label{pfte-subtyp-02}
    && \PIdp(E,\,X) = \NIdp(F,\,X) = \infty,~\mbox{and} \\
\label{pfte-subtyp-03}
    && E[D/X] \Impl F[D/X] \psubtyp B,
\end{eqnarray}
since \(\PNIdp(C,\,X) = \PNIdp(E \Impl F)
	= \min(\NPIdp(E,\,X),\,\PNIdp(F,\,X)) + 1\)
by Definition~\ref{depth-def}.
On the other hand, we get \(B \npeqtyp \t\)
from \(B \in \PFTE\) similarly to the previous case.
Therefore,
by (\ref{pfte-subtyp-03}) and Proposition~\ref{subtyp-O-Impl},
there exist some \(k\), \(G\) and \(H\) such that
\begin{enumerate}[{\kern8pt}({a}1)]
    \item \(\Canonp{B} = G \Impl H\),
    \item \(G \psubtyp \O^k E[D/X]\) and \(\O^k F[D/X] \psubtyp H\).
\end{enumerate}
Since \(B \in \PFTE\), we get
\(\Canonp{B} \in \PFTE\) by Proposition~\ref{pfte-pf-equiv}
and Definition~\ref{pnf-type-def}; and hence,
\(G \in \NFTE\) and \(H \in \PFTE\) by Definition~\ref{pfte-nfte-def}.
Therefore, \(D \in \TFTE\) from (a2),
(\ref{pfte-subtyp-01}) and (\ref{pfte-subtyp-02}) by induction hypothesis,
since \(\PNIdp(\O^k E,\,X) = \PNIdp(E,\,X)\) and
\(\PNIdp(\O^k F,\,X) = \PNIdp(F,\,X)\).

On the other hand, if (b) holds, then we similarly get
\begin{eqnarray}
\label{pfte-subtyp-04}
    && \PIdp(E,\,X) < \NIdp(C,\,X) ~\mbox{or}~\NIdp(F,\,X) < \NIdp(C,\,X), \\
\label{pfte-subtyp-05}
    && \NIdp(E,\,X) = \PIdp(F,\,X) = \infty ~\mbox{and} \\
\label{pfte-subtyp-06}
    && A \psubtyp E[D/X] \Impl F[D/X].
\end{eqnarray}
We get \(X \not\in \PETV{F}\) from \(\PIdp(F,\,X) = \infty\)
by Proposition~\ref{depth-finite-etv}; and hence, we get
\(F[D/X] \npeqtyp \t\) and \(E[D/X] \Impl F[D/X] \npeqtyp \t\)
from \(F \npeqtyp \t\)
by Propositions~\ref{tvariant-subst2} and \ref{tail-etv}.
Therefore,
by (\ref{pfte-subtyp-06}) and
Proposition~\ref{subtyp-O-Impl},
there exist some \(k\), \(G\) and \(H\) such that
\begin{enumerate}[{\kern8pt}({b}1)]
    \item \(\Canonp{A} = G \Impl H\),
    \item \(E[D/X] \psubtyp \O^k G\) and \(\O^k H \psubtyp F[D/X]\).
\end{enumerate}
We get \(\Canonp{A} \in \NFTE\) from \(A \in \NFTE\)
by Proposition~\ref{pfte-pf-equiv} and Definition~\ref{pnf-type-def}.
Furthermore, \(H \npeqtyp \t\) from (b2) since \(F[D/X] \npeqtyp \t\);
and hence, \(A \npeqtyp \t\) by (b1).
Therefore, \(G \in \PFTE\) and \(H \in \NFTE\)
by Definition~\ref{pfte-nfte-def};
and hence, \(\O^k G \in \PFTE\) and \(\O^k H \in \NFTE\).
Therefore, \(D \in \TFTE\) from (b2),
(\ref{pfte-subtyp-04}) and (\ref{pfte-subtyp-05}) by induction hypothesis.
\qed\CHECKED{2014/06/24, 07/16}
\end{proof}

\begin{lemma}\label{hnf-abst-lemma}
Let \(n\) be a non-negative integer, and
\(x_1,\,x_2,\,\ldots,\,x_n\) distinct individual variables
such that \(\{\,x_1,\,x_2,\,\ldots,\,x_n\,\} \cap \Dom(\G) = \{\}\).
If\/ \(\typ{\G}{\lam{x_1}\lam{x_2}\ldots\penalty0 \lam{x_n}M:A}\)
is derivable in {\lA},
then there exist some \(m\), \(B_1\), \(B_2\), \(\ldots\), \(B_n\) and \(C\)
such that
\begin{eqnarray*}
    && \typ{\O^m \G\cup\{\,x_1:B_1,\,x_2:B_2,\,\ldots,\,x_n:B_n\,\}}{M:C}
	~\;\mbox{is also derivable in {\lA}, and} \\
    && B_1 \Impl B_2 \Impl \ldots \Impl B_n \Impl C \psubtyp \O^m A.
\end{eqnarray*}
\end{lemma}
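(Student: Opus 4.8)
The plan is to prove the statement by induction on $n$, reducing the general case to the single-abstraction situation already settled by the anti-abstraction lemma (Lemma~\ref{lA-abst-lemma}). When $n = 0$ the $\lambda$-term is just $M$ and the claim holds trivially by taking $m = 0$ and $C = A$, so that $C \psubtyp \O^0 A$ follows from $\r{{\rsubtyp}\mbox{-reflex}}$. For the induction step I would write $\lam{x_1}\lam{x_2}\ldots\lam{x_{n+1}}M$ as $\lam{x_1}N$ with $N = \lam{x_2}\ldots\lam{x_{n+1}}M$, and first apply Lemma~\ref{lA-abst-lemma} to the derivable judgment $\typ{\G}{\lam{x_1}N : A}$. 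This yields some $k$, $B_1$ and $C'$ with $\typ{\O^k\G \cup \{\,x_1:B_1\,\}}{N : C'}$ derivable and $B_1 \Impl C' \psubtyp \O^k A$.

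Next I would apply the induction hypothesis to $\typ{\O^k\G \cup \{\,x_1:B_1\,\}}{N:C'}$ using the $n$ remaining abstractions $x_2,\ldots,x_{n+1}$; these are disjoint from $\Dom(\G)\cup\{x_1\}$ because $x_1,\ldots,x_{n+1}$ are distinct and avoid $\Dom(\G)$. This produces $m'$, $B_2,\ldots,B_{n+1}$ and $C$ with $\typ{\O^{m'}(\O^k\G \cup \{\,x_1:B_1\,\}) \cup \{\,x_2:B_2,\ldots,x_{n+1}:B_{n+1}\,\}}{M:C}$ derivable and $B_2 \Impl \cdots \Impl B_{n+1} \Impl C \psubtyp \O^{m'}C'$. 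Since $\O^{m'}(\O^k\G \cup \{\,x_1:B_1\,\}) = \O^{m'+k}\G \cup \{\,x_1:\O^{m'}B_1\,\}$, setting $m = m'+k$ and taking $\O^{m'}B_1$ as the type of $x_1$ already furnishes the required typing judgment. It then remains only to establish the subtyping inequality $\O^{m'}B_1 \Impl B_2 \Impl \cdots \Impl B_{n+1} \Impl C \psubtyp \O^m A$.

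The main work, and the step I expect to be the genuine obstacle, is precisely this last subtyping chain. Writing $D = B_2 \Impl \cdots \Impl B_{n+1} \Impl C$, I have $D \psubtyp \O^{m'}C'$, so by $\r{{\rsubtyp}\mbox{-}{\Impl}}$ (with the trivial contravariant step $\O^{m'}B_1 \psubtyp \O^{m'}B_1$) I obtain $\O^{m'}B_1 \Impl D \psubtyp \O^{m'}B_1 \Impl \O^{m'}C'$. The crux is to commute the modality past the arrow: using $\r{{\peqtyp}\mbox{-{\bf K}/{\bf L}}}$ together with $\r{{\eqtyp}\mbox{-}{\O}}$, a short auxiliary induction on $m'$ shows $\O^{m'}B_1 \Impl \O^{m'}C' \peqtyp \O^{m'}(B_1 \Impl C')$, whence $\O^{m'}B_1 \Impl \O^{m'}C' \psubtyp \O^{m'}(B_1\Impl C')$ by $\r{{\rsubtyp}\mbox{-reflex}}$. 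Finally, applying $\r{{\rsubtyp}\mbox{-}{\O}}$ $m'$ times to $B_1 \Impl C' \psubtyp \O^k A$ gives $\O^{m'}(B_1 \Impl C') \psubtyp \O^{m'+k}A = \O^m A$, and chaining the three inclusions by $\r{{\rsubtyp}\mbox{-trans}}$ closes the step.

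The whole argument thus hinges on the $\bf K$/$\bf L$ equality, which alone lets the modality be pushed through the function-type constructor; without it the inequality $\O^{m'}B_1 \Impl \O^{m'}C' \psubtyp \O^{m'}(B_1\Impl C')$ would fail and the induction would break, exactly in the manner discussed after Proposition~\ref{lA-nec-subst-redundant}. I would therefore present the iterated $\bf K$/$\bf L$ equivalence as an explicit preliminary observation before carrying out the induction on $n$, after which the bookkeeping of contexts and indices is routine.
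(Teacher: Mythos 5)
Your proposal is correct and follows essentially the same route as the paper's proof: induction on $n$, peeling off the outermost abstraction with Lemma~\ref{lA-abst-lemma}, applying the induction hypothesis to the body, and closing the subtyping chain via the iterated \r{{\peqtyp}\mbox{-{\bf K}/{\bf L}}} equivalence $\O^{l}B \Impl \O^{l}D \peqtyp \O^{l}(B \Impl D)$. The only difference is notational (your $m'$, $C'$ play the roles of the paper's $l$, $D$), and your explicit remark that the {\bf K}/{\bf L} rule is the indispensable ingredient matches the paper's own observation following the lemma.
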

\begin{proof}
Proof proceeds by induction on \(n\).
It is trivial if \(n = 0\).
Hence, suppose that
\(\typ{\G}{\lam{x_1}\lam{x_2}\ldots\lam{x_n}M:A}\)
is derivable for some \(n > 0\).
By Lemma~\ref{lA-abst-lemma}, for some \(k\), \(B_1'\) and \(D\),
\begin{eqnarray}
\label{hnf-abst-01}
    && \typ{\O^k \G\cup\{\,x_1:B_1'\,\}}{\lam{x_2}\ldots\lam{x_n}M:D}
    ~\;\mbox{is derivable, and} \\
\label{hnf-abst-02}
    && B_1' \Impl D \psubtyp \O^k A.
\end{eqnarray}
Therefore, by applying the induction hypothesis to (\ref{hnf-abst-01}),
for some \(l\), \(B_2\), \(B_3\), \(\ldots\), \(B_n\) and \(C\),
\begin{eqnarray}
\label{hnf-abst-03}
    && \typ{\O^{k+l}\G\cup\{\,x_1:\O^l B_1',\,x_2:B_2,\,\ldots,\,x_n:B_n\,\}}%
	{M:C}
    ~\;\mbox{is derivable, and} \\
\label{hnf-abst-04}
    && B_2 \Impl \ldots \Impl B_n \Impl C \psubtyp \O^l D.
\end{eqnarray}
Let \(m = k+l\) and \(B_1 = \O^l B_1'\).
Then, (\ref{hnf-abst-03}) means that
\(\typ{\O^m \G\cup\{\,x_1:B_1,\,x_2:B_2,\,\ldots,\,x_n:B_n\,\}}{M:C}\)
    is derivable, and we get
\begin{Eqnarray*}
B_1 \Impl B_2 \Impl \ldots \Impl B_n \Impl C &=&
	\O^l B_1' \Impl B_2 \Impl \ldots \Impl B_n \Impl C \\
    &\psubtyp& \O^l B_1' \Impl \O^l D & (by  (\ref{hnf-abst-04})) \\
    &\peqtyp& \O^l(B_1' \Impl D) & (by \r{{\peqtyp}\mbox{-{\bf K}/{\bf L}}}) \\
    &\psubtyp& \O^{k+l} A & (by (\ref{hnf-abst-02})) \\
    &=& \O^m A.
\end{Eqnarray*}
\Qed\CHECKED{2014/06/24, 07/16}
\end{proof}

Note that the \r{{\peqtyp}\mbox{-{\bf K}/{\bf L}}}-rule
is crucial to the proof of this lemma.
However, if we remove the \r{\mbox{shift}}-rule from {\lA},
\(n\) in the statement of Lemma~\ref{lA-abst-lemma}, and hence,
\(m\) of this lemma, can be \(0\), and by which
the \r{{\peqtyp}\mbox{-{\bf K}/{\bf L}}}-rule becomes optional.

\begin{lemma}\label{peqtyp-O-nested-Impl-lemma}
If\/ \(\O A \peqtyp B_1 \Impl B_2 \Impl B_3 \Impl \ldots \Impl B_n \Impl C
\npeqtyp \t\), then
\(A \peqtyp B_1' \Impl B_2' \Impl B_3' \Impl \ldots \Impl B_n' \Impl C'\)
for some \(B_1'\), \(B_2'\), \(B_3'\), \(\ldots\), \(B_n'\) and \(C'\)
such that
\(\O B_i' \peqtyp B_i\) for every \(i \in \{\,1,\,2,\,3,\,\ldots,\,n\,\}\)
and \(\O C' \peqtyp C\).
\end{lemma}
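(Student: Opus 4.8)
The plan is to prove the statement by induction on $n$, peeling off one argument type at a time with the $\peqtyp$-case of Proposition~\ref{geqtyp-O-Impl}, namely Item~\ref{peqtyp-O-Impl}. Note first that since $\peqtyp$ extends $\eqtyp$, the congruence rules \r{{\eqtyp}\mbox{-reflex}}, \r{{\eqtyp}\mbox{-}{\Impl}} and \r{{\eqtyp}\mbox{-trans}} are all available for $\peqtyp$, and the decidability criterion Proposition~\ref{geqtyp-Impl-t} (that $A \Impl B \geqtyp \t$ iff $B \geqtyp \t$) lets us propagate the side condition $\npeqtyp \t$ through the arrow structure.

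For the base case $n = 0$ the hypothesis reads $\O A \peqtyp C \npeqtyp \t$, and it suffices to take $C' = A$: then $A \peqtyp C'$ holds by \r{{\eqtyp}\mbox{-reflex}}, and $\O C' = \O A \peqtyp C$ is exactly the hypothesis, while the conditions on the $B_i'$ are vacuous. For the inductive step, I would assume the claim for $n$ and suppose $\O A \peqtyp B_1 \Impl (B_2 \Impl \ldots \Impl B_{n{+}1} \Impl C) \npeqtyp \t$. Writing $C_0 = B_2 \Impl \ldots \Impl B_{n{+}1} \Impl C$, Proposition~\ref{geqtyp-Impl-t} yields $C_0 \npeqtyp \t$ from $B_1 \Impl C_0 \npeqtyp \t$. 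Hence, applying Item~\ref{peqtyp-O-Impl} of Proposition~\ref{geqtyp-O-Impl} to $\O A \peqtyp B_1 \Impl C_0$, alternative (a) — which would force $A \peqtyp C_0 \peqtyp \t$ — is excluded, so alternative (b) holds: there exist $D$ and $E$ with $A \peqtyp D \Impl E$, $\O D \peqtyp B_1$ and $\O E \peqtyp C_0$.

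Since $\O E \peqtyp C_0 \npeqtyp \t$, the induction hypothesis applies to $E$ (with $n$ arguments) and gives $E \peqtyp B_2' \Impl \ldots \Impl B_{n{+}1}' \Impl C'$ with $\O B_i' \peqtyp B_i$ for $2 \le i \le n{+}1$ and $\O C' \peqtyp C$. Taking $B_1' = D$, combining $A \peqtyp D \Impl E$ with this decomposition via \r{{\eqtyp}\mbox{-}{\Impl}} and \r{{\eqtyp}\mbox{-trans}} yields $A \peqtyp B_1' \Impl B_2' \Impl \ldots \Impl B_{n{+}1}' \Impl C'$, while $\O B_1' = \O D \peqtyp B_1$ supplies the last required relation. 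The argument is essentially bookkeeping on top of Proposition~\ref{geqtyp-O-Impl}; the one step that needs genuine care is the consistent use of the side condition, ensuring both that case (a) is ruled out at each level and that the recursive call is made on a right-hand side that is still $\npeqtyp \t$ so the induction hypothesis is actually applicable.
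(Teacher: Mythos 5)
Your proof is correct and follows essentially the same route as the paper's: induction on $n$, peeling off one arrow at a time via Item~2 of Proposition~\ref{geqtyp-O-Impl} and recombining with the congruence rules. The only cosmetic difference is that you rule out alternative~(a) and track the $\npeqtyp \t$ side condition through the codomain $C_0$ explicitly, whereas the paper propagates it via $A' \npeqtyp \t$; both amount to the same bookkeeping.
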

\begin{proof}
By induction on \(n\).
It is trivial in case of \(n = 0\).
Suppose that
\(\O A \peqtyp B_1 \Impl B_2 \Impl B_3 \Impl \ldots \Impl B_n \Impl C\)
for some \(n > 0\).
By Proposition~\ref{peqtyp-O-Impl}, we get
\(A \peqtyp B_1' \Impl A'\) for some \(B_1'\) and \(A'\)
such that \(\O B_1' \peqtyp B_1\) and
\(\O A' \peqtyp B_2 \Impl B_3 \Impl \ldots \Impl B_n \Impl C\).
Note that \(A' \npeqtyp \t\) since \(A \npeqtyp \t\).
Therefore, by induction hypothesis,
\(A' \peqtyp B_2' \Impl B_3' \Impl \ldots \Impl B_n' \Impl C'\)
for some \(B_2'\), \(B_3'\), \(\ldots\), \(B_n'\) and \(C'\)
such that
\(\O B_i' \peqtyp B_i\) for every \(i \in \{\,2,\,3,\,\ldots,\,n\,\}\)
and \(\O C' \peqtyp C\).
Thus, we get
\(A \peqtyp B_1' \Impl B_2' \Impl B_3' \Impl \ldots \Impl B_n' \Impl C'\).
\qed\CHECKED{2014/07/16}
\end{proof}

\begin{lemma}\label{hnf-app-lemma}
Suppose that \(A \npeqtyp \t\).
If\/ \(\typ{\G}{\app{x}{N_1 N_2 \ldots N_n}:A}\) is derivable in {\lA},
then for some \(k\), \(l\), \(B_1\), \(B_2\), \(B_3\),
\(\ldots\), \(B_n\) and \(C\),
\begin{Enumerate}
\item[(a)] \(\G(x) \peqtyp B_1 \Impl B_2 \Impl
		    B_3 \Impl \ldots \Impl B_n \Impl C\),
\item[(b)] \(\O^k C \psubtyp \O^l A\),~\mbox{and}
\item[(c)] \(\typ{\O^l \G}{N_i:\O^k B_i}\) is derivable in {\lA}
    for every \(i \in \{\,1,\,2,\,\ldots,\,n\,\}\).
\end{Enumerate}
\end{lemma}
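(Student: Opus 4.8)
The plan is to induct on the size of the derivation of \(\typ{\G}{\app{x}{N_1 N_2 \ldots N_n}:A}\), performing a case analysis on the last rule applied. Since the subject is an application spine with a variable head, the rules \r{\mbox{var}} (only when \(n=0\)) and \r{\Impl\mbox{I}} give immediate or impossible cases: \r{\mbox{var}} forces \(\G(x)=A\), so (a)--(c) hold with \(k=l=0\), \(C=A\), and no \(N_i\); and \r{\Impl\mbox{I}} cannot apply to an application. The rule \r{\t} is excluded by the hypothesis \(A \npeqtyp \t\). For \r{\psubtyp}, the antecedent \(\typ{\G}{\app{x}{N_1\ldots N_n}:A'}\) has \(A' \npeqtyp \t\) (otherwise \(\t \psubtyp A\) would give \(A \geqtyp \t\) by Proposition~\ref{subtyp-gt-t}), so the induction hypothesis yields \(k,l,B_i,C\); clauses (a) and (c) transfer verbatim, and (b) follows by applying \r{{\rsubtyp}\mbox{-}{\O}} \(l\) times to \(A'\psubtyp A\) and then \r{{\rsubtyp}\mbox{-trans}}.

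The \r{\mbox{shift}} case reduces the judgment to \(\typ{\O\G}{\app{x}{N_1\ldots N_n}:\O A}\), with \(\O A \npeqtyp \t\) by Proposition~\ref{geqtyp-O-t}. The induction hypothesis gives \(k',l',B_i,C\) with \((\O\G)(x)=\O\G(x)\peqtyp B_1\Impl\cdots\Impl B_n\Impl C\), \(\O^{k'}C\psubtyp\O^{l'+1}A\), and \(\typ{\O^{l'+1}\G}{N_i:\O^{k'}B_i}\). First I would check \(\O\G(x)\npeqtyp\t\): were it a \tvariant, then \(C\peqtyp\t\) by Proposition~\ref{geqtyp-Impl-t}, whence \(\t\psubtyp\O^{l'+1}A\) and \(A\peqtyp\t\) by Proposition~\ref{subtyp-gt-t}, a contradiction. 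Then Lemma~\ref{peqtyp-O-nested-Impl-lemma} applies, producing \(B_i',C'\) with \(\G(x)\peqtyp B_1'\Impl\cdots\Impl B_n'\Impl C'\), \(\O B_i'\peqtyp B_i\), and \(\O C'\peqtyp C\). Taking \(k=k'+1\), \(l=l'+1\) then yields (a) directly, (b) from \(\O^{k'+1}C'\peqtyp\O^{k'}C\psubtyp\O^{l'+1}A\), and (c) from \(\O^{k'}B_i\peqtyp\O^{k'+1}B_i'\) via \r{\psubtyp}.

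The essential case is \r{\Impl\mbox{E}} with \(n\ge1\), where the antecedents are \(\typ{\G}{\app{x}{N_1\ldots N_{n-1}}:D\Impl A}\) and \(\typ{\G}{N_n:D}\) (after unifying contexts by Proposition~\ref{typ-weakening}). Here \(D\Impl A\npeqtyp\t\) by Proposition~\ref{geqtyp-Impl-t}, so the induction hypothesis on the strictly shorter spine gives \(k',l',B_1,\ldots,B_{n-1},C'\) with \(\G(x)\peqtyp B_1\Impl\cdots\Impl B_{n-1}\Impl C'\), \(\O^{k'}C'\psubtyp\O^{l'}(D\Impl A)\), and \(\typ{\O^{l'}\G}{N_i:\O^{k'}B_i}\) for \(i<n\). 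Since \(\Canonp{(\O^{l'}(D\Impl A))}\) is an arrow and the right-hand side is \(\npeqtyp\t\), Proposition~\ref{subtyp-O-Impl} forces \(\Canonp{(\O^{k'}C')}\), and hence \(\Canonp{C'}\), to be an arrow; thus \(C'\peqtyp B_n\Impl C\) for some \(B_n,C\) by Proposition~\ref{geqtyp-canon}, which establishes (a) after substituting into the arrow chain. Rewriting both sides by \r{{\peqtyp}\mbox{-{\bf K}/{\bf L}}} turns \(\O^{k'}C'\psubtyp\O^{l'}(D\Impl A)\) into \(\O^{k'}B_n\Impl\O^{k'}C\psubtyp\O^{l'}D\Impl\O^{l'}A\), to which Lemma~\ref{subtyp-O-Impl-lemma} applies, yielding a \(j\) with \(\O^{l'}D\psubtyp\O^{j+k'}B_n\) and \(\O^{j+k'}C\psubtyp\O^{l'}A\). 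The latter is (b) with \(k=j+k'\), \(l=l'\).

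The main obstacle I anticipate is reconciling the \(\O\)-depths across all the premises so that clause (c) holds for a single common pair \((k,l)\). For \(N_n\) I would apply necessitation (\r{\mbox{nec}}, Proposition~\ref{lA-nec-redundant}) \(l'\) times to \(\typ{\G}{N_n:D}\) to get \(\typ{\O^{l'}\G}{N_n:\O^{l'}D}\), then \r{\psubtyp} with \(\O^{l'}D\psubtyp\O^{j+k'}B_n\); for \(i<n\) I would lift \(\typ{\O^{l'}\G}{N_i:\O^{k'}B_i}\) to type \(\O^{j+k'}B_i\) using \(j\) applications of \r{{\rsubtyp}\mbox{-approx}} (which gives \(E\psubtyp\O E\)) and \r{\psubtyp}. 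Keeping this index bookkeeping straight---and noting that it depends essentially on \r{{\peqtyp}\mbox{-{\bf K}/{\bf L}}} for the K/L rewriting and on necessitation being admissible---is the delicate part of the argument; the remaining manipulations are routine congruence and transitivity steps.
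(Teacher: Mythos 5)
Your proposal is correct and follows essentially the same route as the paper's proof: induction on the derivation with a case split on the last rule, Lemma~\ref{peqtyp-O-nested-Impl-lemma} to peel the outer \(\O\) in the \r{\mbox{shift}} case, and Proposition~\ref{subtyp-O-Impl} (equivalently Lemma~\ref{subtyp-O-Impl-lemma} after the \r{{\peqtyp}\mbox{-{\bf K}/{\bf L}}} rewriting) together with \r{\mbox{nec}} and \r{{\rsubtyp}\mbox{-approx}} to reconcile the indices in the \r{{\Impl}\mbox{E}} case. The only cosmetic difference is that you extract the arrow decomposition of \(C'\) directly from the shape of \(\Canonp{(\O^{k'}C')}\) rather than via repeated applications of Proposition~\ref{peqtyp-O-Impl}, which amounts to the same thing.
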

\begin{proof}
By induction on the derivation of
\(\typ{\G}{\app{x}{N_1 N_2 \ldots N_n}:A}\), and
by cases on the last rule used in the derivation.
Suppose that
\(\typ{\G}{\app{x}{N_1 N_2 \ldots N_n}:A}\) is derivable in {\lA}.

\Case{\r{\mbox{var}}.}
In this case, \(n = 0\), and the derivation has the form of
\[
    \ifr{\mbox{var}}
	{}
	{\typ{\G' \cup \{\,x:A\,\}}{x:A}}
\]
for some \(\G'\) such that \(\G = \G' \cup \{\,x:A\,\}\).
Therefore, we get (a) through (c) by taking \(C\), \(k\), \(l\) as \(C = A\)
and \(k = l = 0\).


\Case{\r{\mbox{shift}}.}
The derivation ends with
\[
    \Ifr{\r{\mbox{shift}}.}
	{\typ{\O \G}{\app{x}{N_1 N_2 \ldots N_n} : \O A}}
	{\typ{\G}{\app{x}{N_1 N_2 \ldots N_n} : A}}
\]
Note that \(\O A \npeqtyp \t\) since \(A \npeqtyp \t\).
By induction hypothesis,
for some \(k'\), \(l'\), \(B_1'\), \(B_2'\), \(\ldots\), \(B_n'\) and \(C'\),
\begin{eqnarray}
&& \O\G(x) \peqtyp B_1' \Impl B_2' \Impl
    B_3' \Impl \ldots \Impl B_n' \Impl C', \nonumber \\
\label{hnf-app-lemma-11}
&& \O^{k'}C' \psubtyp \O^{l'}\O A,~\mbox{and} \\
\label{hnf-app-lemma-12}
&& \typ{\O^{l'}\O\G}{N_i:\O^{k'}B_i'}\,
    ~\mbox{is derivable for every \(i\) (\(1 \le i \le n\)).}
\end{eqnarray}
Therefore, by Lemma~\ref{peqtyp-O-nested-Impl-lemma}, (a) holds
for some \(B_1\), \(B_2\), \(\ldots\), \(B_n\) and \(C\) such that
\begin{eqnarray}
\label{hnf-app-lemma-13}
    && \O B_i \peqtyp B_i'\;
	~\mbox{for every \(i\) (\(1 \le i \le n\)), and} \\
\label{hnf-app-lemma-14}
    && \O C \peqtyp C'.
\end{eqnarray}
Taking \(k\) and \(l\) as \(k = k'{+}1\) and \(l = l'{+}1\),
respectively, we get (b) from
(\ref{hnf-app-lemma-11}) and (\ref{hnf-app-lemma-14}), and get
(c) from (\ref{hnf-app-lemma-12}) and (\ref{hnf-app-lemma-13}).

\Case{\r{\t}.}
This case is impossible since \(A \npeqtyp \t\).

\Case{\r{\rsubtyp}.}
In this case, the derivation ends with
\[
    \ifr{\rsubtyp}
	{\typ{\G}{\app{x}{N_1 N_2 \ldots N_n} : A'} & A' \psubtyp A}
	{\typ{\G}{\app{x}{N_1 N_2 \ldots N_n} : A}}
\]
for some \(A'\).
Note that \(A' \npeqtyp \t\) by Proposition~\ref{subtyp-gt-t}.
By induction hypothesis, there exist some \(k\), \(l\),
\(B_1\), \(B_2\), \(B_3\), \(\ldots\), \(B_n\) and \(C\) such that
(a), \(\O^{k}C \psubtyp \O^{l}A'\) and (c) hold, from the second of which
we can also get (b) since \(A' \psubtyp A\).

\Case{\r{{\Impl}\mbox{I}}.}
This case is impossible by the form of \(\app{x}{N_1 N_2 \ldots N_n}\).

\Case{\r{{\Impl}\mbox{E}}.}
In this case, \(n > 0\), and the derivation ends with
\[
    \ifr{\Impl\mbox{E}}
	{\typ{\G_1}{\app{x}{N_1 N_2 \ldots N_{n{-}1}} : D\Impl A}
	 & \typ{\G_2}{N_n : D}}
	{\typ{\G_1 \cup \G_2}{\app{x}{N_1 N_2 \ldots N_n} : A}}
\]
for some \(\G_1\), \(\G_2\) and \(D\) such that \(\G = \G_1 \cup \G_2\).
Note that \(D \Impl A \npeqtyp \t\) by Proposition~\ref{geqtyp-Impl-t}
since \(A \npeqtyp \t\).
Hence, by induction hypothesis,
for some \(k'\), \(l\), \(B_1\), \(B_2\), \(B_3\),
\(\ldots\), \(B_{n{-}1}\) and \(E\),
\begin{eqnarray}
\label{hnf-app-lemma-21}
    && \G_1(x) \peqtyp B_1 \Impl B_2 \Impl
	B_3 \Impl \ldots \Impl B_{n{-}1} \Impl E, \\
\label{hnf-app-lemma-22}
    && \O^{k'}E \psubtyp \O^l (D \Impl A),~\mbox{and} \\
\label{hnf-app-lemma-23}
    && \typ{\O^{l} \G_1}{N_i:\O^{k'} B_i}
	~\mbox{is derivable for every \(i\) (\(1 \le i < n\)).}
\end{eqnarray}
Since \(\Canonp{(\O^l (D \Impl A))} = \O^l \Canonp{D} \Impl \O^l \Canonp{A}\),
by Propositions~\ref{subtyp-O-Impl} and \ref{geqtyp-canon},
there exist some \(B_n'\), \(C'\) and \(l'\) such that
\begin{eqnarray}
\label{hnf-app-lemma-24}
&& \O^{k'} E \peqtyp B_n' \Impl C', \\
\label{hnf-app-lemma-25}
&& \O^l D \psubtyp \O^{l'} B_n'\;~\mbox{and}~
	\O^{l'} C' \psubtyp \O^l A
\end{eqnarray}
from (\ref{hnf-app-lemma-22}).
Note that \(E \npeqtyp \t\) by (\ref{hnf-app-lemma-22}) since
\(D \Impl A \npeqtyp \t\).
Hence, by applying Proposition~\ref{peqtyp-O-Impl}
to (\ref{hnf-app-lemma-24}) repeatedly,
there also exist some \(B_n\) and \(C\) such that
\begin{eqnarray}
\label{hnf-app-lemma-26}
&& E \peqtyp B_n \Impl C, \\
\label{hnf-app-lemma-27}
&& \O^{k'} B_n \peqtyp B_n'\;~\mbox{and}~\O^{k'} C \peqtyp C'.
\end{eqnarray}
We now get (a) from (\ref{hnf-app-lemma-21}) and (\ref{hnf-app-lemma-26})
since \(\G(x) = \G_1(x)\).
Let \(k = k'{+}l'\).
Then we get (b) from (\ref{hnf-app-lemma-25}) and (\ref{hnf-app-lemma-27}).
On the other hand,
\(\typ{\O^l \G_2}{N_n : \O^l D}\) is derivable
by \r{\mbox{nec}} from \(\typ{\G_2}{N_n : D}\); and therefore, we get (c)
from (\ref{hnf-app-lemma-23})
by Proposition~\ref{typ-weakening} and \r{\rsubtyp},
since \(\G = \G_1 \cup \G_2\),
\(\O^{k'} B_i \psubtyp \O^k B_i\)
for every \(i \in \{\,1,\,2,\,\ldots,\,n{-}1\,\}\), and since
\(\O^l D \psubtyp \O^{l'} B_n' \peqtyp \O^{k'{+}l'} B_n = \O^k B_n\)
from (\ref{hnf-app-lemma-25}) and (\ref{hnf-app-lemma-27}).
\qed\CHECKED{2014/06/24, 07/16}
\end{proof}

\begin{theorem}\label{typ-maximal}
Let\/ \(\typ{\G}{M:A}\) be a derivable judgment of {\lA}.
If \(A\) is positively finite and \(\G(x)\) is
negatively finite for every \(x \in \Dom(\G)\), then \(M\) is maximal.
\end{theorem}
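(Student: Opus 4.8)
The plan is to prove a depth-indexed strengthening by induction on a natural number $d$: for every derivable judgment $\typ{\G}{M:A}$ in which $A$ is positively finite and every $\G(x)$ is negatively finite, the B\"ohm tree of $M$ has no occurrence of $\bot$ at depth at most $d$. Since maximality (Definition~\ref{max-term-def}) is precisely the absence of $\bot$ at every depth, establishing this for all $d$ yields the theorem. The mechanism is that the root of the B\"ohm tree of $M$ comes from a head normal form of $M$, whose immediate subterms can be shown to inherit exactly the same hypotheses; subject reduction then lets the induction recurse on these subterms.

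First I would dispatch the root. Because $A$ is positively finite it is tail finite by Proposition~\ref{pfte-tfte} and Theorem~\ref{tf-tfte-not-top-theorem}, so $M$ is head normalizable by Theorem~\ref{typ-hnf-theorem} with $V=\{\}$; in particular the root is not $\bot$, settling $d=0$. Fix a head normal form $M \ctc \lam{x_1}\ldots\lam{x_m}\app{y}{N_1 N_2 \ldots N_n}$ with the $x_i$ fresh, so $\{x_1,\ldots,x_m\}\cap\Dom(\G)=\{\}$; by Theorem~\ref{subj-red-theorem} this term is still of type $A$ under $\G$, and the children of the root are the B\"ohm trees of $N_1,\ldots,N_n$. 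Applying Lemma~\ref{hnf-abst-lemma} gives $k$, types $B_1,\ldots,B_m,C$ with $\typ{\O^k\G\cup\{\,x_1:B_1,\ldots,x_m:B_m\,\}}{\app{y}{N_1 N_2 \ldots N_n}:C}$ derivable and $B_1\Impl\ldots\Impl B_m\Impl C\psubtyp\O^k A$. Since positive and negative finiteness are preserved by $\O$ and by $\peqtyp$ (Definition~\ref{pfte-nfte-def}, Proposition~\ref{pfte-pf-equiv}), $\O^k A$ is positively finite, whence $B_1\Impl\ldots\Impl B_m\Impl C$ is positively finite by Lemma~\ref{pfte-subtyp-lemma}. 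As a positively finite arrow type is never a $\t$-variant, the only $\PFTE$-production producing it is $\NFTE\Impl\PFTE$; peeling arrows one at a time, with each tail non-$\t$-variant by Proposition~\ref{Impl-tvariant}, shows each $B_i$ is negatively finite and $C$ is positively finite. Thus the context $\G'=\O^k\G\cup\{\,x_1:B_1,\ldots,x_m:B_m\,\}$ is entirely negatively finite.

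The main obstacle, which I would treat with care, is the decomposition of the head variable's type. Since $C\npeqtyp\t$, Lemma~\ref{hnf-app-lemma} yields $k',l'$, types $B_1',\ldots,B_n',C'$ with $\G'(y)\peqtyp B_1'\Impl\ldots\Impl B_n'\Impl C'$, $\O^{k'}C'\psubtyp\O^{l'}C$, and $\typ{\O^{l'}\G'}{N_j:\O^{k'}B_j'}$ derivable for each $j$. Here $\G'(y)$ is negatively finite, but the $\NFTE$ grammar is ambiguous because it also admits $\t$-variants, so I must first exclude $\G'(y)\peqtyp\t$: were it a $\t$-variant, then $C'\peqtyp\t$ by Proposition~\ref{geqtyp-Impl-t}, forcing $\O^{k'}C'\peqtyp\t$ and hence, via $\O^{k'}C'\psubtyp\O^{l'}C$, Proposition~\ref{subtyp-gt-t} and Proposition~\ref{geqtyp-O-t}, $C\geqtyp\t$, contradicting the positive finiteness of $C$. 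With $\G'(y)$ non-$\t$-variant and negatively finite, its arrow structure must arise from the $\PFTE\Impl\NFTE$-production, and since by Proposition~\ref{Impl-tvariant} no suffix $B_j'\Impl\ldots\Impl C'$ can be a $\t$-variant without making the whole type one, peeling the arrows shows every $B_j'$ is positively finite.

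It then follows that each $\O^{k'}B_j'$ is positively finite and every entry of $\O^{l'}\G'$ is negatively finite, so each $N_j$ satisfies the induction hypotheses. Applying the induction hypothesis at depth $d$ to each $N_j$ shows their B\"ohm trees carry no $\bot$ within depth $d$, whence the B\"ohm tree of $M$ carries none within depth $d+1$. This completes the induction, and letting $d$ range over all natural numbers gives the maximality of $M$.
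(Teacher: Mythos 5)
Your proposal is correct and follows essentially the same route as the paper's proof: induction on B\"ohm-tree depth, using Theorem~\ref{typ-hnf-theorem} for head normalizability, Theorem~\ref{subj-red-theorem} to retype the head normal form, Lemma~\ref{hnf-abst-lemma} and Lemma~\ref{pfte-subtyp-lemma} to push positive/negative finiteness through the abstractions, and Lemma~\ref{hnf-app-lemma} to distribute the hypotheses onto the arguments $N_j$. The only cosmetic difference is how you rule out $\G'(y)$ being a \tvariant{} (a direct contradiction via $C'\peqtyp\t$ forcing $C\geqtyp\t$, where the paper instead first derives positive finiteness of the tail from $\O^{k'}C'\psubtyp\O^{l'}C$ by Lemma~\ref{pfte-subtyp-lemma}); both arguments are sound.
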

\begin{proof}
We show that for every \(n\), every node of the B\"{o}hm-tree of \(M\)
at the level \(n\) is head normalizable, by induction on \(n\).
Since \(A\) is positively finite, \(M\) is head normalizable
by Theorem~\ref{typ-hnf-theorem}, that is,
\[
    M \ctc \lam{x_1}\lam{x_2}\ldots\lam{x_m}\app{y}{N_1\,N_2\,\ldots\,N_l}
\]
for some \(x_1\), \(x_2\), \(\ldots\), \(x_m\), \(y\), \(N_1\), \(N_2\),
\(\ldots\), \(N_l\).
We can assume, without loss of generality,
that \(x_1\), \(x_2\), \(\ldots\), and \(x_m\) are
distinct individual variables such that
\(\{\,x_1,\,x_2,\,\ldots,\,x_m\,\} \cap \Dom(\G) = \{\}\).
It suffices to show that every node of the B\"{o}hm-tree of \(N_i\)
at a level less than \(n\) is head normalizable
for every \(i\) (\(1 \le i \le l\)).
By Theorem~\ref{subj-red-theorem},
\(\typ{\G}{\lam{x_1}\lam{x_2}\ldots\lam{x_m}\app{y}{N_1\,N_2\,\ldots\,N_l}:A}\)
is also derivable in {\lA}; and therefore,
by Lemma~\ref{hnf-abst-lemma},
for some \(k\), \(B_1\), \(B_2\), \(\ldots\), \(B_m\) and \(C\),
\begin{eqnarray}
\label{typ-maximal-01}
    && \typ{\O^k \G\cup\{\,x_1:B_1,\,x_2:B_2,\,\ldots,\,x_m:B_m\,\}}%
	{\app{y}{N_1\,N_2\,\ldots\,N_l}:C}~\;\mbox{is derivable, and} \\
\label{typ-maximal-02}
    && B_1 \Impl B_2 \Impl \ldots \Impl
		B_m \Impl C \psubtyp \O^k A.
\end{eqnarray}
Note that since \(A\) is positively finite,
so is \(\O^k A\)
by Proposition~\ref{pfte-pf-equiv} and Definition~\ref{pfte-nfte-def}; and
hence, \(B_1 \Impl B_2 \Impl \ldots \Impl B_m \Impl C\)
is also positively finite
from (\ref{typ-maximal-02}) by Lemma~\ref{pfte-subtyp-lemma}.
Therefore, \(C\) is positively finite, and
\(B_1\), \(B_2\), \(\ldots\), \(B_m\) are negatively finite
by Proposition~\ref{pfte-pf-equiv} and Definition~\ref{pfte-nfte-def}.
Note also that \(C \not\peqtyp \t\) since \(C\) is positively finite.
Let \(\G' = \O^k\G\cup\{\,x_1:B_1,\,x_2:B_2,\,\ldots,\,x_m:B_m\,\}\).
Note that \(\G'(x)\) is negatively finite
for every \(x \in \Dom(\G')\).
By applying Lemma~\ref{hnf-app-lemma} to (\ref{typ-maximal-01}),
for some \(p\), \(q\), \(D_1\), \(D_2\), \(\ldots\), \(D_l\) and \(E\),
\begin{eqnarray}
\label{typ-maximal-03}
    && \G'(y) \peqtyp D_1 \Impl D_2 \Impl \ldots \Impl D_l \Impl E, \\
\label{typ-maximal-04}
    && \O^p E \psubtyp \O^q C,~\mbox{and} \\
\label{typ-maximal-05}
    && \mbox{\(\typ{\O^q \G'}{N_i:\O^p D_i}\) is derivable
	for every \(i\) (\(1 \le i \le l\))}.
\end{eqnarray}
Since \(\O^q C\) is also positively finite,
so is \(E\) from (\ref{typ-maximal-04})
by Lemma~\ref{pfte-subtyp-lemma}, Proposition~\ref{pfte-pf-equiv}
and Definition~\ref{pfte-nfte-def}, i.e., \(E \npeqtyp \t\); and therefore,
\(D_i\) is positively finite, and so is \(\O^p D_i\),
for every \(i\) (\(1 \le i \le l\))
from (\ref{typ-maximal-03}) because \(\G'(y)\) is negatively finite.
Since \(\O^q \G'(x)\) is also negatively finite for every \(x \in \Dom(\G')\),
by applying induction hypothesis to (\ref{typ-maximal-05}),
every node of the B\"{o}hm-tree of \(N_i\)
at a level less than \(n\) is head normalizable
for every \(i\) (\(1 \le i \le l\)); that is,
so is one of \(M\) at a level less than or equal to \(n\).
\qed\CHECKED{2014/06/24, 07/16}
\end{proof}

\Subsection{Simple types}\label{typ-finite-sec}

By the standard method due to Tait\cite{tait},
it can be also shown that \(\lambda\)-terms of
simple types, namely, those having no occurrence of the modal operator,
are normalizable.

\begin{definition}
    \ilabel{NF-def}{NF@$\protect\NF$}
    \ilabel{term-model-normalizable-kernel-def}{Kn@$\protect\Kn$}
We denote by \(\NF\) the subset of normalizable \(\lambda\)-terms of \(\IE\),
and define a subset \(\Kn\) of\/ \(\IE/{\beq}\,\) as follows.
\[
\Kn = \Zfset{[\,\app{x}{N_1 N_2 \ldots N_n}]}
		{\tabcolsep=0pt
		    \ifnarrow
			\begin{tabular}{p{180pt}}%
		    \else
			\begin{tabular}{l}%
		    \fi
		    \(x \in \IV\), \(n \ge 0\) and
		    \(N_i \in \NF\) for every \(i\)
		    (\(i = 1,\,2,\,\ldots,\,n\))\end{tabular}}.
\]
\end{definition}

\begin{lemma}\label{normalizable-lemma}
Consider the term model of\/ \(\IE\),
an arbitrary  {\lA}-frame \(\pair{\W}{\acc}\) and
a hereditary type environment \(\tenv\) such that
\(\Kn \subseteq \tenv(X)_p \subseteq \NF\)
for every type variable \(X\) and \(p \in \W\).
Let \(A\) be a type expression with no occurrence of
the modal operator~\(\O\).
Then,
\(\Kn \subseteq \I{A}^\tenv_p \subseteq \NF\,\) for every \(p \in \W\).
\end{lemma}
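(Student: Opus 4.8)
The plan is to run the standard Tait-style reducibility argument, exactly parallel to the one behind Lemma~\ref{term-model-inc-kernel-lemma}, but now carrying \emph{both} inclusions $\Kn \subseteq \I{A}^\tenv_p$ and $\I{A}^\tenv_p \subseteq \NF$ at once, by induction on $h(A)$, with the statement read as ``for every $p \in \W$'' so that both halves are available at every world as induction hypotheses. The essential simplification over the $\Kh$-lemma is that $A$ is $\O$-free: since the tail of every {\tvariant} carries at least one $\O$ by Definition~\ref{tvariant-def}, no subexpression of $A$ is a {\tvariant}, so the clause $\I{A}^\tenv_p = \V$ of Definition~\ref{rlz-def} never fires and only three cases remain. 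Moreover, because the claim is proved uniformly over all worlds, a plain induction on $h(A)$ suffices: unlike Lemma~\ref{term-model-inc-kernel-lemma}, whose $\O$-clause forces a lexicographic induction on the world, here the $\O$-clause is absent and that complication disappears.

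First I would dispose of the two simple cases. If $A = X$, then $\I{X}^\tenv_p = \tenv(X)_p$ by Definition~\ref{rlz-def}, so the hypothesis $\Kn \subseteq \tenv(X)_p \subseteq \NF$ is literally the assertion. If $A = \fix{X}B$, then $\fix{X}B \in \TE$ forces $B$ proper in $X$; but $B$ is $\O$-free and has no {\tvariant} subexpression, so Definition~\ref{proper-def} leaves $X$ no legal free occurrence in $B$, whence $X \notin \FTV{B}$ and $B[\fix{X}B/X] = B$ syntactically. Therefore $\I{\fix{X}B}^\tenv_p = \I{B[\fix{X}B/X]}^\tenv_p = \I{B}^\tenv_p$ by Definition~\ref{rlz-def}, and since $B$ is again $\O$-free with $h(B) < h(A)$, the induction hypothesis for $B$ gives both inclusions unchanged.

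The heart is the case $A = B \Impl C$, where the two inclusions feed on one another across $B$ and $C$. For $\Kn \subseteq \I{B \Impl C}^\tenv_p$: given $[\app{x}{N_1 \ldots N_n}] \in \Kn$ with every $N_i \in \NF$, any $q$ with $p \tacc q$ and any $v = [L] \in \I{B}^\tenv_q$ satisfy $L \in \NF$ by the upper-bound hypothesis $\I{B}^\tenv_q \subseteq \NF$; then $[\app{x}{N_1 \ldots N_n}] \cdot [L] = [\app{x}{N_1 \ldots N_n L}] \in \Kn$, which lies in $\I{C}^\tenv_q$ by the lower-bound hypothesis $\Kn \subseteq \I{C}^\tenv_q$, so $[\app{x}{N_1 \ldots N_n}] \in \I{B \Impl C}^\tenv_p$ by Definition~\ref{rlz-def}. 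For $\I{B \Impl C}^\tenv_p \subseteq \NF$: take $u = [M]$ in the interpretation, pick a fresh variable $y$, use $[y] \in \Kn \subseteq \I{B}^\tenv_p$ (lower-bound hypothesis for $B$) together with $p \tacc p$ to obtain $u \cdot [y] = [\app{M}{y}] \in \I{C}^\tenv_p \subseteq \NF$ (upper-bound hypothesis for $C$), so that $\app{M}{y}$ is normalizable.

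The step I expect to be the main obstacle is the final deduction in this case: passing from normalizability of $\app{M}{y}$ to normalizability of $M$ itself. This is the one point where the reducibility setup does not hand us the conclusion for free; it rests on the standard fact that, for $y \notin \FV{M}$, the term $\app{M}{y}$ is normalizable if and only if $M$ is. I would settle it by the usual head-normal-form analysis: were $M$ not head normalizable then neither would $\app{M}{y}$ be, contradicting its normalizability, so $M \ctc \lam{x_1}\ldots\lam{x_m}\app{z}{M_1 \ldots M_k}$, and a normal form of $\app{M}{y}$ exhibits normal forms of all the $M_j$ (up to the harmless replacement of a leading $x_i$ by $y$ or an extra trailing $y$), from which a normal form of $M$ is reassembled. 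With that sublemma in place the induction closes, and the lemma then serves normalizability exactly as Lemma~\ref{term-model-inc-kernel-lemma} serves head normalizability, feeding into the corresponding soundness application via Theorem~\ref{soundness-theorem}.
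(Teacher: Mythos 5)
Your proposal is correct and follows essentially the same route as the paper: induction on $h(A)$ with only the three cases (variable, implication, $\fx$), the observation that $\O$-freeness plus properness forces $X\notin\FTV{B}$ in the $\fx$-case, and the two interlocking inclusions in the $\Impl$-case closed off by the fresh-variable argument. The paper settles the final step (normalizability of $\app{M}{y}$ implies that of $M$) by the same standard case analysis on how the reduction $\app{M}{y}\ctc L$ can proceed, so your head-normal-form detour is just a slightly longer phrasing of the identical fact.
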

\begin{proof}
By induction on \(h(A)\),
and by cases on the form of \(A\).

\Case{\(A = Y\) for some \(Y\).}
Obvious since \(\I{A}^\tenv_p = \tenv(Y)_p\)
by Proposition~\ref{rlz-another-def}.

\Case{\(A = B \Impl C\) for some \(B\) and \(C\).}
First, we show \(\Kn \subseteq \I{B \Impl C}^\tenv_p\).
Let \(u = [\,\app{x}{N_1 N_2 \ldots N_n}] \in \Kn\),
and suppose that \(p \tacc q\) and \([L] \in \I{B}^\tenv_q\).
By induction hypothesis, \([L] \in \NF\).
Then, \(u \cdot [L] = [\,\app{x}{N_1 N_2 \ldots N_n L}] \in \Kn
\subseteq \I{C}^\tenv_q\strut\)
by induction hypothesis again; and hence,
\(u \in \I{B \Impl C}^\tenv_p\) by Proposition~\ref{rlz-another-def}.
Second, to show \(\I{B \Impl C}^\tenv_p \subseteq \NF\strut\),
suppose that \([M] \in \I{B \Impl C}^\tenv_p\).
Let \(y\) be a fresh individual variable.
Since \([y] \in \Kn \subseteq \I{B}^\tenv_p\)
by induction hypothesis,
we get \([M]\cdot[y] = [\app{M}{y}] \in \I{C}^\tenv_p\)
by Proposition~\ref{rlz-another-def};
and hence, \([\app{M}{y}] \in \NF\) by induction hypothesis.
Therefore, \(\app{M}{y}\) has a normal form, say \(L\).
There are two possible cases: for some \(K\),
({\romannumeral 1}) \(M \ctc K\) and \(L = \app{K}{y}\), or
({\romannumeral 2}) \(M \ctc \lam{y}{K}\) and \(K \ctc L\).
In either case, \(M\) obviously has a normal form.

\Case{\(A = \fix{X}B\) for some \(X\) and \(B\).}
In this case, \(X \not\in \FTV{B}\)
since \(B\) is proper in \(X\), and since \(A\) has no occurrence of \(\O\).
Therefore, \(\I{\fix{X}B}^\tenv_p = \I{B}^\tenv_p\)
by Proposition~\ref{rlz-another-def}; and hence,
obvious from induction hypothesis.
\qed\CHECKED{2014/07/16}
\end{proof}

\begin{theorem}\label{normalizable-theorem}
Suppose that the modal operator does not occur in \(\G\) or \(A\).
If\/ \(\typ{\G}{M:A}\) is derivable in {\lA}, then
\(M\) is normalizable.
\end{theorem}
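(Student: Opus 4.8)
The plan is to mirror the proof of Theorem~\ref{typ-hnf-theorem}: I would apply the soundness of {\lA} (Theorem~\ref{soundness-theorem}) to the term model of $\IE$ under an interpretation tailored to normalizability, and then extract the conclusion directly from Lemma~\ref{normalizable-lemma}. First I would fix the term model $\tuple{\V,\,\cdot,\,\VI{\;}{}}$ and the {\lA}-frame $\pair{\N}{>}$, which is also a well-founded frame. As the type environment I would take the constant map $\tenv(X)_p = \Kn$ for every type variable $X$ and $p \in \N$. Being independent of $p$, this $\tenv$ is trivially hereditary, and it satisfies $\Kn \subseteq \tenv(X)_p \subseteq \NF$ provided $\Kn \subseteq \NF$, which holds because a term $\app{x}{N_1 N_2 \ldots N_n}$ with each $N_i$ normalizable reduces to a normal form by normalizing the arguments. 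Thus $\tenv$ meets exactly the hypotheses of Lemma~\ref{normalizable-lemma}.

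Next I would choose the individual environment $\ienv$ with $\ienv(x) = [x]$ for every $x \in \IV$, so that $\VI{M}{\ienv} = [M]$ as in the proof of Theorem~\ref{typ-hnf-theorem}. Since the modal operator does not occur in $\G$, each $\G(x)$ is $\O$-free, so Lemma~\ref{normalizable-lemma} gives $\Kn \subseteq \I{\G(x)}^\tenv_p$ for every $x \in \Dom(\G)$ and $p \in \N$; in particular $\ienv(x) = [x] \in \Kn \subseteq \I{\G(x)}^\tenv_p$. Applying Theorem~\ref{soundness-theorem} to the derivable judgment $\typ{\G}{M:A}$ then yields $[M] = \VI{M}{\ienv} \in \I{A}^\tenv_p$ for every $p \in \N$. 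Finally, because the modal operator does not occur in $A$ either, Lemma~\ref{normalizable-lemma} also gives $\I{A}^\tenv_p \subseteq \NF$, whence $[M] \in \NF$, i.e.\ $M$ is normalizable.

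There is no genuine obstacle here beyond bookkeeping: the real work has already been absorbed into Lemma~\ref{normalizable-lemma}, whose statement is precisely the two-sided bracketing $\Kn \subseteq \I{A}^\tenv_p \subseteq \NF$ for $\O$-free types. The only points requiring care are verifying that the chosen $\tenv$ simultaneously satisfies the heredity requirement of the soundness theorem and the containment $\Kn \subseteq \tenv(X)_p \subseteq \NF$ demanded by the lemma, and observing that the hypothesis ``the modal operator does not occur in $\G$ or $A$'' is exactly what licenses the use of Lemma~\ref{normalizable-lemma} on both the context types and the result type. Everything else is the same soundness-plus-term-model argument already carried out for head normalizability, so the proof is short.
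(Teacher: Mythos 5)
Your proposal is correct and follows essentially the same route as the paper's own proof: the term model with the constant type environment $\tenv(X)_p = \Kn$, the environment $\ienv(x)=[x]$, Lemma~\ref{normalizable-lemma} for both containments, and Theorem~\ref{soundness-theorem} to conclude $[M]\in\I{A}^\tenv_p\subseteq\NF$. The only (immaterial) difference is that you fix the frame $\pair{\N}{>}$ where the paper works over an arbitrary {\lA}-frame.
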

\begin{proof}
Let \(\G = \{\,x_1:B_1,\,x_2:B_2,\,\ldots,\,x_n:B_n\,\}\) and
let \(\pair{\W}{\acc}\) be an arbitrary {\lA}-frame.
Consider the term model
\(\tuple{\V,\,\cdot,\,\VI{\;}{}}\) of \(\IE\) and
a type environment \(\tenv\) such that
\(\tenv(X)_p = \Kn\) for every \(X \in \TV\) and \(p \in \W\).
Furthermore, let \(\ienv\) be an individual environment such that
\(\ienv(x) = [x]\) for every \(x \in \IV\).
For every \(i \in \{\,1,\,2,\,\ldots,\,n\,\}\),
since \(\ienv(x_i) = [x_i] \in \Kn\),
we get \(\ienv(x_i) \in \I{B_i}^\tenv_p\,\) for every \(p \in \W\)
by Lemma~\ref{normalizable-lemma}.
Hence, \(\VI{M}{\rho} = [M] \in \I{A}^\tenv_p \subseteq \NF\)
by Theorem~\ref{soundness-theorem} and
Lemma~\ref{normalizable-lemma} again.
\qed\CHECKED{2014/07/16}
\end{proof}

Finiteness of type expressions, namely,
not including recursive types, is not sufficient
for assuring the normalizability of inhabitants.
In the proof of Lemma~\ref{normalizable-lemma}, we cannot show that
\(\I{\O B}^\tenv_p \subseteq \NF\) from
the induction hypothesis \(\I{B}^\tenv_p \subseteq \NF\),
since it might be the case that \(\I{\O B}^\tenv_p \supsetneq \I{B}^\tenv_p\).
In fact, we have already observed that Curry's \(\Y\) has such a type
\((\O X \Impl X) \Impl X\) in Example~\ref{Y-derivable}.

\Section{{\lA} as a basis for logic of programming}\label{program-sec}

The typing system {\lA} can be easily
extended to cover full propositional and second-order types.
For example, we can add the following equality
and typing rules for product types.
\begin{eqnarray*}
    \O(A_1 \Conj A_2) &\eqtyp& \O A_1 \Conj \O A_2
\end{eqnarray*}
\[
\ifr{{\Conj}\mbox{I}}
    {\typ{\G_1}{M : A_1}
     & \typ{\G_2}{N : A_2}}
    {\typ{\G_1 \cup \G_2}{\pair{M}{N} : A_1 \Conj A_2}}
\ifnarrow\mskip20mu\else\mskip80mu\fi
\Ifr{\r{{\Conj}\mbox{E}}\quad(i = 1,\,2)}
    {\typ{\G}{M : A_1 \Conj A_2}}
    {\typ{\G}{\proj{\it i}{M} : A_i}}
\]
However, even with such extensions,
type expressions discussed so far can only refer to
approximative worlds at a fixed distance from the current world,
and do not have enough power to
express the manner in which approximation proceeds in
general recursive programs.
In order to handle such programs,
we have to extend our logic to a predicate logic,
and allow type expressions
such as \(\O^{t} A\) as well-formed type expressions,
with \(t\) being a numeric expression.
We also need enough arithmetic endowing the typing system
to discuss such \(t\).
With the help of such extensions, {\lA}
can be a basis for logic of a wide range of programs.
In this section we give some examples.

\Subsection{Recursive programs over non-negative integers}
\begingroup
\def\preprime{\mathit{preprime}}
\def\primep{\mathit{prime}}
\def\npp{\mathit{npp}}
\def\np{\mathit{np}}
\def\PPS{\mathtt{PPS}}
\def\PS{\mathtt{PS}}
\def\Sieve{\mathtt{Sieve}}
\def\primes{\mathtt{primes}}
\def\sieve{\mathtt{sieve}}
\def\enum{\mathtt{enum}}
\def\zero{\mathtt{0}}
\def\one{\mathtt{1}}
\def\two{\mathtt{2}}
\def\pl{\mathbin{\mathtt{+}}}
\def\mn{\mathbin{\mathtt{-}}}
\def\eq{\mathbin{\mathtt{=}}}

Provided such an extension to predicate logic endowed with an arithmetic,
we can construct a wide range
of recursive programs with fixed-point combinators
ensuring their termination.
For example,
let \(\nat(n)\) represent the type of (implementations of)
the non-negative integer \(n\), and
suppose that \(f\) is a primitive recursive functional defined as
\[
    f\:x \equiv \:\mathbf{if}~(x \eq \zero)~\mathbf{then}~c~\mathbf{else}~
	g\:(f\:(x \mn \one))\:x,
\]
where
\(c\) is a program of a type \(A(0)\),
\(g\) a type \(A(n \dminus 1) \Impl \nat(n) \Impl A(n)\)
for every positive integer \(n\)
with a primitive recursive function \(\dminus\) defined in the arithmetic as
\begin{eqnarray*}
    x \dminus y &\equiv& \Choice{%
	x - y & \mbox{(if \(x \ge y\))} \\
	\hbox to 60pt {\(0\)\hfil} & \mbox{(otherwise)}
    }
\end{eqnarray*}
We also assume that \(c\) or \(g\) includes no free occurrence of \(x\),
and that the infix operator \(-\) is a program that satisfies
\(\forall m.\,\forall n.\,\nat(m) \Impl \nat(n) \Impl \nat(n \dminus m)\).
We will show that \(f\) has the type
\(\forall n.\,\nat(n) \Impl \O^n A(n)\), where, in general,
\(\forall n.\, B(n)\) is interpreted as
\[
    \I{\forall n.\, B(n)}^\tenv_p =
    \zfset{u}{u \in \I{B(n)}^\tenv_p\,~\mbox{for every non-negative integer}~n}.
\]
By virtue of the fixed-point combinator \(\Y\), we can reformulate
the definition of \(f\) to
\[
    f \equiv \Y\:(\lam{f}\lam{x}
	\mathbf{if}~(x \eq \zero)~\mathbf{then}~c~\mathbf{else}~
	    g\:(f\:(x \mn \one))\:x).
\]
Let \(C = \forall n.\,\nat(n) \Impl \O^n A(n)\).
It suffices to show that
    \(\mathbf{if}\;(x \eq \zero)
	\;\mathbf{else}\;g\;(f\;(x \mn \one))\;x : \O^{n}A(n)\)
assuming \(f : \O\,C\) and \(x : \nat(n)\),
since \(\Y: (\O C \Impl C) \Impl C\).
If \(n = 0\), then this is straightforward from \(c : A(0)\).
The case that \(n > 0\) can be accomplished as follows.
\begingroup
\ifnarrow\mtight\fi
\begin{Eqnarray*}
&& f : \forall n.\,\O (\nat(n) \Impl \O^n A(n)) &
	(since \(\O\) is interchangeable with \(\forall n\)) \\
&& f : \O (\nat(n \dminus 1) \Impl \O^{n \dminus 1} A(n \dminus 1)) &
	(by instantiation, i.e., \r{{\forall}\mbox{E}}) \\
&& f : \O\nat(n \dminus 1) \Impl \O^{n\dminus1+1}A(n \dminus 1) &
	(by \r{{\peqtyp}\mbox{-{\bf K}/{\bf L}}} or
	    \r{{\subtyp}\mbox{-\bf K}}) \\
&& f : \O\nat(n \dminus 1) \Impl \O^n A(n \dminus 1) &
	(since \(n \dminus 1+1 = n\)) \\
&& f\:(x \mn \one) : \O^n A(n \dminus 1) &
	(since \(x-1 : \O\nat(n \dminus 1)\)) \\
&& g\:(f\:(x \mn \one)) : \O^{n}\nat(n) \Impl \O^{n}A(n)
    \ifnarrow\\[-3pt]\fi
    & \ifnarrow&&\hskip-100pt\hfill\fi
    (since \(g:\O^{n}A(n \dminus 1) \Impl \O^{n}\nat(n) \Impl \O^{n}A(n)\)) \\
&& g\:(f\:(x \mn \one))\:x : \O^{n}A(n) &
	(since \(x:\O^{n}\nat(n)\))
\end{Eqnarray*}
\endgroup
Thus, we get \(\mathbf{if}~(x \eq \zero)~\mathbf{then}~c~\mathbf{else}~
	g\;(f\;(x \mn \one))\:x : \O^{n}A(n)\).

In general, if a function \(f\) of a type \(A(\vec{n}) \Impl B\)
can be recursively defined based on a well-founded measure
\(\pi(\vec{n})\) on the input,
then we can give \(f\) a type \(A(\vec{n}) \Impl \O^{\pi(\vec{n})} B\).
Consider the following recursive program,
which represents McCarthy's 91-function.
\[
    f\:x \equiv \mathbf{if}~(x \mathbin{\mathtt{>}} \mathtt{100})~
	    \mathbf{then}~x \mn \mathtt{10}~
	    \mathbf{else}~f\:(f\:(x \pl \mathtt{11}))
\]
We can show that \(f\) has a type, or satisfies a specification,
\(\forall n.\,\nat(n) \Impl \O^{101 \dminus n}\nat(g(n))\),
where \(n\) ranges over non-negative integers, and
\(g\) is a primitive recursive function defined in the arithmetic as
follows.
\begin{eqnarray*}
    g(x) &\equiv& \Choice{%
	x - 10 & \mbox{(if \(x > 100\))} \\[-2pt]
	\hbox to 60pt {\(91\)\hfil} & \mbox{(otherwise)}
    }
\end{eqnarray*}
Suppose that
\(f:\O\,\forall n.\,\nat(n) \Impl \O^{101 \dminus n}\penalty0\nat(g(n))\) and
\(x:\nat(n)\).
The type of \(\Y\) assures that it suffices to show
\begin{eqnarray}
    \mathbf{if}~(x \mathbin{\mathtt{>}} \mathtt{100})~
	    \mathbf{then}~x \mn \mathtt{10}~
	    \mathbf{else}~f\:(f\:(x \pl \mathtt{11}))
	\ifnarrow\mskip130mu\nonumber \\[-2pt]\fi
    \label{f91-type}
	    {}: \forall n.\,\nat(n) \Impl \O^{101 \dminus n}\nat(g(n)).
\end{eqnarray}
We assume that the infix operator \(\pl\) is a program of the type
\(\forall m.\,\forall n.\,\nat(m) \Impl \nat(n) \Impl \nat(n+m)\).
First, we get
\[
    f\;(x\pl\mathtt{11}) : \O\O^{101 \dminus (n+11)}\nat(g(n+11)).
\]
If \(n \le 90\), then
we get \(f\,(x\pl\mathtt{11}) : \O^{91 \dminus n}\,\nat(91)\)
by the definitions of \(\dminus\) and \(g\); and therefore,
\[
    f\;(f\;(x\pl\mathtt{11})) :
     \O^{91 \dminus n}\penalty0\O^{101 \dminus 91}\penalty0\nat(g(91)),
\]
which type is equivalent to \(\O^{101 \dminus n}\penalty0\nat(91)\).
On the other hand, if \(90 < n \le 100\),
then we similarly get \(f\;(x\pl\mathtt{11}) : \O \nat(n+1)\); and therefore,
\[
    f\;(f\;(x\pl\mathtt{11})) :
	\O\O^{101 \dminus (n+1)}\penalty0\nat(g(n+1)),
\]
which type is also equivalent to
\(\O^{101 \dminus n}\penalty0\nat(91)\).
Otherwise, i.e., if \(100 < n\), obviously \(x\mn\mathtt{10} : \nat(g(n))\).
Thus, we establish (\ref{f91-type}).

In this derivation, the fixed-point combinator worked
as the induction scheme discussed in Section~\ref{intro-sec} with
a sequence \(S_0\), \(S_1\), \(S_2\), \(\ldots\), \(S_n\) as follows.
\begin{eqnarray*}
    S_0 &=& \V \\
    S_{n+1} &=& \zfset{f}{\forall x \ge 101 \dminus n.~f(x) = g(x)}
    \qquad(n = 0,\,1,\,2,\,\ldots)
\end{eqnarray*}
The typing
\(\typ{}{f:\forall n.\,\nat(n) \Impl \O^{101 \dminus n}\nat(g(n))}\)
does not mean that the function \(f\) computes the output in
\(101 \dminus n\) steps.
Actually, \(f(n)\) involves \(2(101 \dminus n)\) recursive calls.
Recall that our interpretations of types are closed under \(\ct\).
It only means that the output satisfies a specification that
is at most \(101 \dminus n\) steps {\em weaker}, in a sense which
only the programmer knows, than
the specification of the input.
However, on the contrary, if a recursively defined program \(f\)
over non-negative integers computes its output \(g(n)\)
within \(\pi(n)\) levels of recursion for every input \(n\),
the typing judgment
\(\typ{}{f:\forall n.\,\nat(n) \Impl \O^{\pi(n)}\nat(g(n))}\)
should be derivable.

Note also that if we only consider {\lA}-frames such that
\begin{eqnarray}
   \label{successive-frame}
    \mbox{for every \(p \in \W\),
    there exists some \(q \in \W\) such that \(q \acc p\),\xfootnote{%
	For example, the set of non-negative integers, or ordinals, and
	the ``greater than'' relation \(>\) constitutes such a frame.}}
\end{eqnarray}
\emitxfootnote
and if \(\,\I{\nat(n)}^\tenv_p\) does not depend on \(p\),
then the interpretation of
\(\typ{}{f:\forall n.\,\nat(n) \Impl \O^{101 \dminus n}\nat(g(n))}\)
implies \(f \in \I{\forall n.\,\nat(n) \Impl \nat(g(n))}^\tenv_p\,\)
for every \(p \in \W\).
For, in such a frame, for each \(n\),
every possible world \(p\) has another world \(q\) from which
\(p\) is accessible in more than \(101 \dminus n\) steps.
Because the typing judgment is valid in every possible world,
we get \(f \in \I{\nat(n) \Impl \O^{101 \dminus n}\nat(g(n))}^\tenv_q\)
for such \(q\), and which implies that
\(f \in \I{\nat(n) \Impl \nat(g(n))}^\tenv_p\,\) provided that
\(\I{\nat(n)}^\tenv_p\) does not depend on \(p\).
Thus, we get
\(f \in \I{\nat(n) \Impl \nat(g(n))}^\tenv_p\,\) for every \(p \in \W\)
and \(n \in \N\).
It can be also observed that \(\typ{}{f:\forall n.\,\nat(n) \Impl \nat(g(n))}\)
becomes formally derivable from
\(\typ{}{f:\forall n.\,\nat(n) \Impl \O^{101 \dminus n}\penalty0\nat(g(n))}\)
by introducing another modality, say \(\B\), which is interpreted as
\[
   \I{\B A}^\tenv_p = \zfset{u}{u \in \I{A}^\tenv_q~\mbox{for every}~q \in \W},
\]
and accordingly enjoys the following subtyping relations
and typing rules:
\begin{Enumerate}
\item \(A \psubtyp B\)\, implies \,\(\B A \psubtyp \B B\)
\item \(\B (A \Impl B) \psubtyp \B A \Impl \B B\)
\item \(\B A \psubtyp A \)
\item \(\B\B A \eqtyp \B A\)
\item \(\B \O^t A \eqtyp \B A\)
\item \(\B \nat(n) \eqtyp \nat(n)\)
\end{Enumerate}
\[
    \ifr{\B}
	{\typ{\G_1}{M : A}}
	{\typ{\B \G_1 \cup \G_2}{M : \B A}}
    \mskip80mu
    \ifr{\mbox{shift}}
	{\typ{\B \G_1 \cup\O \G_2}{M : \O A}}
	{\typ{\B \G_1 \cup \G_2}{M : A}}
\]
where the new \r{\mbox{shift}}-rule supersedes
the original one in Definition~\ref{typ-rules}.
    While the equality \(\B \O^t A \eqtyp \B A\) and
    the new \r{\mbox{shift}}-rule
    depend on the restriction (\ref{successive-frame}),
    the other rules are valid for any frame.
    We might also need the \r{\mbox{subst}}-rule
    since Proposition~\ref{lA-subst-redundant} would not hold
    for the extended system.
Recursive type variables are not allowed to occur in
scopes of the \(\B\)-operator, and
\(\I{A}^{\tenv}_p\) is now defined by induction
on the lexicographic ordering of \(\tuple{b(A),\,p,\,r(A)}\),
where \(b(A)\) is the depth of nesting occurrences of \(\B\) in \(A\).

\Subsection{Infinite data structures}
Streams, or infinite sequences, of data of a type \(X\)
are representable by the type \(\fix{Y}X \Conj \O Y\).
We can construct recursive programs for streams
with fixed-point combinators such as Curry's \(\Y\).
For example,
a program that generates a stream
of a given constant of type \(X\) as follows, where
\(A = \fix{Y}X \Conj \O Y\).
\[
    \mskip-10mu
    \ifr{\Impl\mbox{I}}
	{\ifr{\Impl\mbox{E}}
	    {\stack{\Derive{30}{}}
		   {\typ{}{\Y : (\O A\Impl A)\Impl A}}
	     & \mskip-10mu
	     \ifr{\Impl\mbox{I}}
		    {\ifr{\eqtyp}
			{\ifr{\Conj\mbox{I}}
			    {\ifr{\mbox{var}}{}{\typ{x:X}{x:X}}
			     & \ifr{\mbox{var}}{}{\typ{y:\O A}{y : \O A}}}
			    {\typ{x:X,\,y:\O A}{\pair{x}{y} : X \Conj \O A}}}
			{\typ{x:X,\,y:\O A}{\pair{x}{y} : A}}}
		    {\typ{x : X}{\lam{y}{\pair{x}{y}} : \O A\Impl A}}}
	    {\typ{x : X}{\app{\Y}{(\lam{y}{\pair{x}{y}})}:A}}}
	{\typ{}{\lam{x}{\app{\Y}{(\lam{y}{\pair{x}{y}})}} : X\Impl A}}
\]
The following shows the derivation of a program
that merges two streams, where
\(B = A\Impl A \Impl A\) and \(\G = \{\,f:\O B,\, x:A,\,y:A\,\}\).
\[
\ifnarrow\Mtight\fi
\def\MM{\pair{\pleft{x}}{\app{\app{f}{y}}{(\pright{x})}}}
\ifr{\Impl\mbox{E}}
    {\stack{\mskip0mu\Derive{30}{}}{\typ{}{\Y : (\O B \Impl B)\Impl B}}
     &\mskip-10mu
      \ifr{\Impl\mbox{I}}
	    {\mskip-20mu\ifr{\Impl\mbox{I}}
		{\ifr{\Impl\mbox{I}}
		    {\mskip-105mu
		     \ifr{\eqtyp}
			{\ifr{\Conj\mbox{I}}
			    {\ifr{\Conj\mbox{E}}
				{\ifr{\eqtyp}
				     {\ifr{\mbox{var}}{}{\typ{\G}{x:A}}}
				     {\typ{\G}{x:X \Conj \O A}}}
				{\typ{\G}{\pleft{x}:X}}
			    & \ifnarrow\mskip-40mu\fi
			      \ifr{\Impl\mbox{E}}
				{\ifnarrow
				    \mskip-60mu
				\else
				     \def\stack#1#2{#1}%
				\fi
				\stack{\ifr{\Impl\mbox{E}}
				    {\ifr{\rsubtyp}
					 {\ifr{\mbox{var}}{}
					      {\typ{\G}{f:\O B}}}
					 {\typ{\G}{f:\O A\Impl\O A\Impl\O A}}
				     & \ifr{\rsubtyp}
					   {\ifr{\mbox{var}}{}{\typ{\G}{y:A}}}
					   {\typ{\G}{y:\O A}}}
				    {\typ{\G}{\app{f}{y}:\O A \Impl \O A}}}%
				    {\Derive{50}}
				 & \ifnarrow\mskip-90mu\fi
				 \ifr{\Conj\mbox{E}}
				    {\ifr{\eqtyp}
					 {\ifr{\mbox{var}}{}{\typ{\G}{x:A}}}
					 {\typ{\G}{x:X \Conj \O A}}}
				    {\typ{\G}{\pright{x}:\O A}}}
				{\typ{\G}{\app{\app{f}{y}}
						{(\pright{x})}:\O A}}}
			    {\typ{\G}{\pair{\pleft{x}}
					    {\app{\app{f}{y}}{(\pright{x})}}
					:X \Conj \O A}}}
			{\typ{\G}{\pair{\pleft{x}}
			    {\app{\app{f}{y}}{(\pright{x})}}:A}}}
		{\typ{f:\O B,\,x:A}{\lam{y}{\MM}:A \Impl A}}}
	    {\typ{f:\O B}
		 {\lam{x}{\lam{y}{\MM}}:B}}}
      {\typ{}{\lam{f}{\lam{x}\lam{y}{\MM}}
		:\O B \Impl B}}}
    {\typ{}{\app{\Y\,}{(\lam{f}\lam{x}\lam{y}
			\pair{\pleft{x}}{\app{\app{f}{y}}{(\pright{x})}})}
	    : B}}
    \mskip-20mu
\]

To capture more complicated recursion over streams,
we naturally need an extension to predicate logic such as used in the case of
the 91-function.
For example, the prime number generator based on the sieve of Eratosthenes
is derivable in our framework with such an extension.
First, by using \(\Y\),
define three recursive programs \(\sieve\), \(\primes\) and \(\enum\)
as follows.
\begin{eqnarray*}
    \enum~x &\equiv& \pair{x}{\enum~(x\pl\one)} \\
    \sieve~x~w &\equiv&
	\mathbf{if}~((\pleft{w}) \mathbin{\mathbf{mod}} x \eq \zero)~
	\mathbf{then}~\sieve~x~(\pright{w})~\NL{\mskip80mu}
	\mathbf{else}~\pair{\pleft{w}}{\sieve~x~(\pright{w})} \\
\primes~w
    &\equiv& \pair{\pleft{w}}{\primes~(\sieve~(\pleft w)\,(\pright{w}))}
\end{eqnarray*}
Then, \(\primes~(\enum~\two)\) generates the stream of
all the prime numbers in ascending order, which can be formally verified
by the extended typing system.
Let \(m\), \(n\), \(k\) and \(l\) range over non-negative integers, and
let \(\preprime\), \(\primep\), \(\npp\), and \(\np\)
be defined in the arithmetic as follows.
\begin{eqnarray*}
    \preprime(m,\:n) &\equiv& 1 < m \conj
	\forall{k}.\;\forall{l}.\;
	    1 < k < m \conj k < n \impl m \not= k \cdot l \\
    \primep(n) &\equiv& \preprime(n,\:n) \\
    \npp(m,\:n) &\equiv& \min\zfset{k}{m \le k \conj \preprime(k,\:n)} \\
    \np(n) &\equiv& \min\zfset{k}{n \le k \conj \primep(k)}
\end{eqnarray*}
Note that \(\primep(n)\) means that \(n\) is a prime number, and
\(\np(n)\) represents the least prime number that is greater than
or equal to \(n\).
We use these two predicates and two functions for annotating type expression,
and define three types \(\PPS\), \(\PS\) and \(\Sieve\)
as follows.
\begin{eqnarray*}
\PPS(m,\:n) &\equiv& \nat(m)\conj \preprime(m,\:n)
	\times \NL{\mskip80mu}
	    \O\O^{\npp(m+1,\:n) \dminus m \dminus 1}\,\PPS(\npp(m+1,\,n),\,n) \\
\PS(n) &\equiv& \nat(n) \conj{\primep(n)}
      \times \O\O^{\np(n+1) \dminus n \dminus 1}\,\PS(\np(n+1)) \\
\Sieve(m,\:n) &\equiv&
	\nat(n) \conj \primep(n) \Impl \PPS(m,\,n) \Impl \\
	&& \mskip80mu \O^{\npp(m,\:np(n+1)) \dminus m}\,
	      \PPS(\npp(m,\:\np(n+1)),\,\np(n+1))
\end{eqnarray*}
The two recursive types, namely \(\PPS\) and \(\PS\),
are defined without explicit occurrence of \(\fx\) for readability.
The streams of all the prime numbers in ascending order
should have the type \(\PS(2)\).
Actually, we can derive the following typing.
\begin{eqnarray*}
    \enum &:& \forall{n}.\;\nat(n) \conj 1 < n \Impl \PPS(n,~2) \\
    \sieve &:& \forall m.\;\forall n.\;\Sieve(m,\:n) \\
    \primes &:& \forall n.\,\PPS(n,\,n) \Impl \PS(n) \\
    \primes~(\enum~\two) &:& \PS(2)
\end{eqnarray*}

\ifdetail
\begin{proof}\leavevmode
\paragraph{Proof of\/ $\protect\enum : \protect\forall{n}.\;\protect\nat(n)
    \protect\conj 1 < n \protect\Impl \protect\PPS(n,~2)$}
Suppose that
\begin{eqnarray}
\label{enum-01}
    \enum &:& \O\,\forall{n}.\;\nat(n) \conj 1 < n \protect\Impl \PPS(n,~2),
	~\mbox{and} \\
\label{enum-02}
    x &:& \nat(n) \conj 1 < n. \\
\na{Then, since \(1 < n\) implies \(\preprime(n,\:2)\),}
\label{enum-03}
    x &:& \nat(n) \conj \preprime(n,\:2).
\end{eqnarray}
The typing of \(\enum\) can be shown as follows.
\begin{Eqnarray*}
\enum &:& \forall{n}.\; \O (\nat(n) \conj 1 < n \Impl \PPS(n,~2))
	& (from (\ref{enum-01})
	    since \(\forall{n}.\,\O A \eqtyp \O\,\forall{n}.\,A\)) \nonumber \\
\enum &:& \O (\nat(n+1) \conj 1 < n+1 \Impl \PPS(n+1,~2))
	& (by \r{{\forall}\mbox{E}}) \nonumber \\
    &\subtyp& \O (\nat(n+1) \conj 1 < n+1) \Impl \O \PPS(n+1,~2)
	& (by \r{{\peqtyp}\mbox{-{\bf K}/{\bf L}}} or
	    \r{{\subtyp}\mbox{-\bf K}}). \nonumber \\
\na{Since \(x + 1 : \O(\nat(n+1) \conj 1 < n+1)\) from (\ref{enum-02}),}
\enum~(x \pl \one) &:& \O \PPS(n+1,~2) & (by \r{{\Impl}\mbox{E}}) \\
\pair{x}{\enum~(x \pl \one)} &:&
	\nat(n) \conj \preprime(n,\:2) \times\O \PPS(n+1,~2)
	    & (by \r{{\Conj}\mbox{I}} with (\ref{enum-03})) \nonumber \\
    &\eqtyp& \nat(n) \conj \preprime(n,\:2) \Conj
    \O\O^{\npp(n+1,\:2) \dminus n \dminus 1}\,\PPS(\npp(n+1,\,2),\,2)
	\mskip-500mu \nonumber \\
    &&& (since \(\npp(n+1,\:2) = n+1\)) \nonumber \\
    &\eqtyp& \PPS(n,\;2) & (by the definition of \(\PPS\)) \nonumber \\
\enum &:& \nat(n) \conj 1 < n \Impl \PPS(n,~2)
	& (by \r{{\Impl}\mbox{I}} with (\ref{enum-02})) \nonumber
\end{Eqnarray*}
Hence, by \r{{\forall}\mbox{I}}, \r{{\Impl}\mbox{I}} with (\ref{enum-01}),
	and by \(\Y\), we get
\begin{eqnarray}
\label{enum-04}
    \enum &:& \forall{n}.\; \nat(n) \conj 1 < n \Impl \PPS(n,~2).
\end{eqnarray}

\paragraph{Proof of\/ $\protect\sieve : \protect\forall m.\;
    \protect\forall n.\;\protect\Sieve(m,\:n)$}
In general, the following propositions hold.
\begin{eqnarray}
\label{sieve-01}
    && m+1 \le \npp(m+1,\,n) \\
\label{sieve-02}
    && k \le l ~\mbox{implies}~\npp(\npp(m,\,k),\:l) = \npp(m,\:l) \\
\label{sieve-03}
    && \preprime(m,\,n+1) \impl \preprime(m,\,\np(n+1))
\end{eqnarray}
Assume the following.
\begin{eqnarray}
\label{sieve-04}
    \sieve &:& \O\,\forall m.\;\forall n.\;\Sieve(m,\:n) \\
\label{sieve-05}
    x &:& \nat(n) \conj prime(n) \\
\label{sieve-06}
    w &:& \PPS(m,\:n)
\end{eqnarray}
Then, by the definitions of \(\Sieve(m,\:n)\) and \(\PPS(m,\:n)\),
\begin{eqnarray}
    \sieve &:&\O\,\forall m.\;\forall n.\;
	    \nat(n) \conj \primep(n) \Impl \PPS(m,\,n) \nonumber \\
\label{sieve-07}
	&& \mskip100mu {}\Impl \O^{\npp(m,\:np(n+1)) \dminus m}\,
	      \PPS(\npp(m,\:\np(n+1)),\,\np(n+1)) \\
\label{sieve-08}
    \pleft{w} &:& \nat(m) \conj \preprime(m,\:n),~\mbox{and} \\
\label{sieve-09}
    \pright w &:&
	\O\O^{\npp(m+1,\:n) \dminus m \dminus 1}\,\PPS(\npp(m+1,\,n),\,n).
\end{eqnarray}
Note also that since \(n < \np(n+1)\), by (\ref{sieve-02}),
\begin{eqnarray}
\label{sieve-10}
	\npp(\npp(m+1,\,n),\:\np(n+1)) &=& \npp(m+1,\:\np(n+1)).
\end{eqnarray}
Hence, from (\ref{sieve-05}), (\ref{sieve-07}) and (\ref{sieve-09}),
\begin{Eqnarray}
    \sieve~x~(\pright w) &:&
	\O\O^{\npp(\npp(m+1,\,n),\:np(n+1)) \dminus \npp(m+1,\,n)}
			    \nonumber \\[-6pt]
	      &&\mskip20mu \PPS(\npp(\npp(m+1,\,n),\:\np(n+1)),\,\np(n+1))
	& (by \r{{\forall}\mbox{E}}), \r{\subtyp}
		and \r{{\Impl}\mbox{E}})\hskip-20pt \nonumber \\[3pt]
    &\subtyp& \O\O^{\npp(\npp(m+1,\,n),\:np(n+1)) \dminus m \dminus 1}
			    \nonumber \\[-6pt]
	      &&\mskip20mu \PPS(\npp(\npp(m+1,\,n),\:\np(n+1)),\,\np(n+1))
		    & (by (\ref{sieve-01})) \nonumber \\
    &\eqtyp& \O\O^{\npp(m+1,\,np(n+1)) \dminus m \dminus 1}
			    \nonumber \\[-6pt]
\label{sieve-11}
	      &&\mskip20mu \PPS(\npp(m+1,\,\np(n+1)),\,\np(n+1))
		      & (by (\ref{sieve-10})).
\end{Eqnarray}
If \(n\) is a factor of \(m\), then \((\pleft{w}) \mathbin{\mathbf{mod}} x=0\),
and \(\npp(m+1,\:\np(n+1)) = \npp(m,\:np(n+1))\)
	since \(n < \np(n+1)\).
Hence, in this case,
\begin{Eqnarray*}
    \sieve~x~(\pright w) &:&
    \O\O^{\npp(m,\:np(n+1)) \dminus m \dminus 1}
		  \PPS(\npp(m,\:\np(n+1)),\,\np(n+1)) \\
     &\eqtyp& \O^{\npp(m,\:np(n+1)) \dminus m}
		  \PPS(\npp(m,\:\np(n+1)),\,\np(n+1)) \\
	    &&& \hskip-100pt (since \(m + 1 \le \npp(m,\:np(n+1))\)).
\end{Eqnarray*}
On the other hand,
if \(n\) is not a factor of \(m\),
then \((\pleft{w}) \mathbin{\mathbf{mod}} x=0\) does not hold.
In this case, we get
\(\preprime(m,\,n+1)\) from (\ref{sieve-08}); and hence, by (\ref{sieve-03}), 
\begin{Eqnarray}
    \label{sieve-12}
    && \preprime(m,\,\np(n+1)), \\
\na{which also implies}
    \label{sieve-13}
    && \npp(m,\:\np(n+1)) = m.
\end{Eqnarray}
Therefore,
\begin{Eqnarray*}
    \pleft w &:& \nat(m) & (from (\ref{sieve-08})) \\
    \pleft w &:& \nat(m) \conj \preprime(m,\,\np(n+1))
	    & (by (\ref{sieve-12})). \\
\na{Hence,}
    \pair{(\pleft w)}{\sieve~x~(\pright w)} \mskip-160mu \\
    &:& \nat(m) \conj \preprime(m,\:\np(n+1))
	\Conj \O\O^{\npp(m+1,\,np(n+1)) \dminus m \dminus 1} \mskip-300mu \\
    && \mskip40mu \PPS(\npp(m+1,\,\np(n+1)),\,\np(n+1))
	& (by \r{{\Conj}\mbox{I}} with (\ref{sieve-11})) \\
    &\eqtyp& \PPS(m,\:\np(n+1))
	& (by the definition of \(\PPS\)) \\
    &\eqtyp& \PPS(\npp(m,\:\np(n+1)), \np(n+1))
	& (by (\ref{sieve-13})) \\
    &\subtyp& \O^{\npp(m,\:np(n+1)) \dminus m}
		  \PPS(\npp(m,\:\np(n+1)),\,\np(n+1)).
\end{Eqnarray*}
Thus, we get the same type for both the \(\mathbf{then}\) and
the \(\mathbf{else}\) clauses.
Therefore, by the definition of \(\sieve\),
\begin{Eqnarray*}
    \sieve~x~w &:& \O^{\npp(m,\:np(n+1)) \dminus m}
		  \PPS(\npp(m,\:\np(n+1)),\,\np(n+1)) \\
    \sieve &:& \nat(n) \conj prime(n) \Impl \PPS(m,\:n) \Impl
			\O^{\npp(m,\:np(n+1)) \dminus m} \\
    && \mskip40mu \PPS(\npp(m,\:\np(n+1)),\,\np(n+1))
		& (by \r{{\Impl}\mbox{I}} with
		    (\ref{sieve-05}) and (\ref{sieve-06})) \\
	&\eqtyp& \Sieve(m,\:n)
		& (by the definition of \(\Sieve\))
\end{Eqnarray*}
Hence, by \r{{\forall}\mbox{I}}, \r{{\Impl}\mbox{I}} with (\ref{sieve-04}),
	and by \(\Y\), we get
\begin{eqnarray}
\label{sieve-14}
    \sieve &:& \forall{m}.\;\forall{n}.\;\Sieve(m,\:n).
\end{eqnarray}

\paragraph{Proof of\/ $\protect\primes : \protect\forall n.\;
    \protect\PPS(n,\:n)\protect\Impl \protect\PS(n)$}
In general, the following two propositions hold.
\begin{eqnarray}
\label{primes-01}
    && \preprime(n,\:n) \impl \primep(n) \\
\label{primes-02}
    && \npp(\npp(n+1,\:n),\:\np(n+1)) = \np(n+1)
\end{eqnarray}
Assume the following.
\begin{eqnarray}
\label{primes-03}
    \primes &:& \O\,\forall n.\;\PPS(n,\:n) \Impl \PS(n) \\
\label{primes-04}
    w &:& \PPS(n,\:n)
\end{eqnarray}
Hence, from (\ref{primes-04}) and the definition of \(\PPS\),
we get the following.
\begin{Eqnarray}
    \pleft w &:& \nat(n) \conj \preprime(n,\:n) \nonumber \\
\label{primes-05}
    \pleft w &:& \nat(n) \conj \primep(n) & (by (\ref{primes-01})) \\
\label{primes-06}
    \pright w &:& \O\O^{\npp(n+1,\:n)\dminus n \dminus 1}
	\PPS(\npp(n+1,\:n),\:n)
\end{Eqnarray}
On the other hand,
\begin{Eqnarray*}
\sieve &:& \Sieve(\npp(n+1,\:n),\:n)
	& (by \r{{\forall}\mbox{E}} from (\ref{sieve-14})) \\
    &\eqtyp& \nat(n) \conj prime(n) \Impl \PPS(\npp(n+1,\:n),\:n) \Impl
	    \O^{\npp(\npp(n+1,\:n),\:np(n+1)) \dminus \npp(n+1,\:n)}
		\mskip-200mu \\
	&& \mskip20mu
		\PPS(\npp(\npp(n+1,\:n),\:\np(n+1)),\,\np(n+1))
	& (by the definition of \(\Sieve\)) \\
    &\eqtyp& \nat(n) \conj prime(n) \Impl \PPS(\npp(n+1,\:n),\:n) \Impl \\
	&& \mskip20mu
	    \O^{np(n+1) \dminus \npp(n+1,\:n)}\PPS(\np(n+1),\,\np(n+1))
	& (by (\ref{primes-02})).
\end{Eqnarray*}
Hence, the typing of \(\primes\) is established as follows.
\begin{Eqnarray}
\sieve~(\pleft w)~(\pright w) &:&
	    \O\O^{\npp(n+1,\:n)\dminus n \dminus 1}
	    \O^{np(n+1) \dminus \npp(n+1,\:n)}\PPS(\np(n+1),\,\np(n+1))
		\mskip-500mu\nonumber  \\
	&& \span\hfill (by \r{\subtyp}, \r{{\Impl}\mbox{E}} with
		(\ref{primes-05}) and (\ref{primes-06})) \nonumber \\
     &\eqtyp& \O\O^{np(n+1)\dminus n \dminus 1}\PPS(\np(n+1),\,\np(n+1))
		\mskip-30mu\nonumber \\
	&&\span\hfill (since \(\npp(n+1,\:n) \ge n + 1\)
		    and \(np(n+1) \ge \npp(n+1,\:n)\)) \nonumber \\[5pt]
\primes~(\sieve~(\pleft w)~(\pright w)) &:&
	 \O\O^{np(n+1)\dminus n \dminus 1}\PS(\np(n+1))
	& (by \r{\subtyp}, \r{{\Impl}\mbox{E}}
	    with (\ref{primes-03}) \nonumber \\[5pt]
\primes~w &:& \nat(n) \conj \primep(n) \Conj
	 \O\O^{np(n+1)\dminus n \dminus 1}\PS(\np(n+1))
	     \mskip-300mu\nonumber \\
	&&\span\hfill (by \r{{\Conj}\mbox{I}} with (\ref{primes-05})
			    and the definition of \(\primes\)) \nonumber \\
    &\eqtyp& \PS(n) & (by definition of \(\PS\)) \nonumber \\
\primes &:& \PPS(n,\:n) \Impl \PS(n) & (by \r{{\Impl}\mbox{I}}) \nonumber
\end{Eqnarray}
Hence, by \r{{\forall}\mbox{I}},
    \r{{\Impl}\mbox{I}} with (\ref{primes-03}), and by \(\Y\), we get
\begin{eqnarray}
\label{primes-07}
    \primes &:& \forall{n}.\;\PPS(n,\:n) \Impl \PS(n).
\end{eqnarray}

\paragraph{Proof of\/ $\protect\primes~(\protect\enum~\protect\two)
    : \protect\PS(2)$}
\begin{Eqnarray*}
    \enum &:& \nat(2) \conj 1 < 2 \Impl \PPS(2,\:2)
	& (by \r{{\forall}\mbox{E}} from (\ref{enum-04})) \\
    \enum~\two &:& \PPS(2,\:2)
	& (by \r{{\Impl}\mbox{E}}, since \(\two : \nat(2) \conj 1<2\)) \\
    \primes~(\enum~\two) &:& \PS(2,\:2)
	& (by \r{{\forall}\mbox{E}},
	    \r{{\Impl}\mbox{E}} with (\ref{primes-07}))
\end{Eqnarray*}
\Qed
\end{proof}
\fi 

As can be seen in this example,
the intrinsic complexity of ensuring the convergence, or the productivity,
of programs is not mitigated by use of the approximation modality.
Its importance consists in the facts that (a)
it enables us to discuss the convergence without going into the computational
behavior of programs, and by which (b) it is possible to keep the modularity
of programs.
The derivations above can be done even if
the left hand sides of ``\(:\)'' in the typing judgments are hidden, and
through the proofs-as-programs notion, the derivation itself can be
regarded as an executable program.

\Subsection{The $\Nat(n)$-example}

We now reconsider the example of object-oriented natural numbers
with an addition method.
We revise the definition of \(\Nat(n)\) as follows.
\begin{eqnarray*}
\Nat(n) &\equiv& ((n = 0) + (n > 0 \conj \O\Nat(n-1)) \times \NL{\mskip80mu}
    (\forall m.\: \O\Nat(m) \Impl \O\Nat(n+m)))
\end{eqnarray*}
Then, the specifications of \(\mathbf{add}\) and \(\mathbf{add}'\) are
now different as follows.
\begin{eqnarray}
\label{Nat-01}
    \mathbf{add} &:&
    \forall n.\: \forall m.\: \Nat(n) \Impl \O\Nat(m) \Impl \O\Nat(n+m) \\
\label{Nat-02}
    \mathbf{add}' &:&
    \forall n.\: \forall m.\: \O\Nat(n) \Impl \Nat(m) \Impl \O\Nat(n+m)
\end{eqnarray}
We can show \(\suc : \forall n.\, \Nat(n) \Impl \Nat(n+1)\)
    by deriving \(\suc~x: \Nat(n+1)\)
    from the following assumptions.
\begin{Eqnarray}
\label{Nat-03}
    \suc &:& \O\,\forall n.\, \Nat(n) \Impl \Nat(n+1) \\
\label{Nat-04}
    x &:& \Nat(n)
\end{Eqnarray}
In fact, from \(x: \Nat(n)\), we get
\begin{Eqnarray}
\label{Nat-05}
    \iright{x} &:& (n+1 = 0) + (n+1 > 0 \conj  \O\Nat(n+1-1)).
\end{Eqnarray}
Furthermore, if \(y : \O\Nat(m)\), then
\begin{Eqnarray*}
    \suc &:& \O\Nat(m) \Impl \O\Nat(m+1)
	    & (by \r{{\forall}\mbox{E}} from (\ref{Nat-03}),
		    and \r{\rsubtyp}) \\
    \suc\;y &:& \O\Nat(m+1)
	    & (by \r{{\Impl}\mbox{E}}) \\
    \mathbf{add}\;x\;(\suc\;y) &:& \O \Nat(n+m+1)
	& (by (\ref{Nat-01}) and (\ref{Nat-04})) \\
    &\eqtyp& \O \Nat(n+1+m) & (since \(\Mtight n+m+1 = n+1+m\)). \\
\na{Hence,}
    \lam{y}\mathbf{add}\;x\;(\suc\;y) &:& \O \Nat{m} \Impl \O \Nat(n+1+m)
	    & (by \r{{\Impl}\mbox{I}}) \\
    \lam{y}\mathbf{add}\;x\;(\suc\;y)
	&:& \forall{m}.\:\O \Nat(m) \Impl \O \Nat(n+1+m)
	    & (by \r{{\forall}\mbox{I}}) \\
    \pair{\iright{x}}{\lam{y}{\mathbf{add}\;x\;(\suc\;y)}}
	&:& \Nat(n+1)
	& (by \r{{\Conj}\mbox{I}} with (\ref{Nat-05})).
\end{Eqnarray*}
Thus, we get \(\suc : \forall{n}.\:\Nat(n) : \Nat(n+1)\).
However, on the other hand, under similar assumptions, we can only get
\[
    \mathbf{add}'\:x\;(\suc'\:y) : \O\O\Nat(n+m+1),
\]
from (\ref{Nat-02}),
and fail to derive \(\suc' : \forall n.\, \Nat(n) \Impl \Nat(n+1)\).

\endgroup

\Section{Modal logics behind {\lA}}\label{logic-sec}

In this section, we consider {\lA} as a modal logic
by ignoring left hand sides of ``\(:\)'' from typing judgments, and
show that it corresponds to an intuitionistic version of
the logic of provability {\GL} (cf. \cite{boolos}).
The modal logic {\GL} is
also denoted by {\bf G} (for G\"{o}del), {\bf L} (for L\"{o}b),
{\bf PrL}, {\bf KW}, or {\bf K4W}, in the literature.

\begin{definition}[Formal systems {\miK4} and {\LAm}]
    \ilabel{miK4-def}{miK4@\protect\miK4}
    \ilabel{LAm-def}{LA-mu@\protect\LAm}
    \ilabel*{0 yields-LA@$\protect\typ{\G}{A}$}
Regarding type expressions as logical formulae,
we define a formal system {\miK4},
the minimal and implication fragment of the modal logic {\bf K4},
by the following inference rules,
where \(\G\) denotes a finite set of formulae.
\[
\ifr{\mbox{assump}}
    {}
    {\typ{\G \cup \{\,A\,\}}{A}}
\ifnarrow\mskip80mu\else\mskip120mu\fi
\ifr{\mbox{nec}}
    {\typ{\G_1}{A}}
    {\typ{\O \G_1 \cup \G_2}{\O A}}
\]
\[
\ifr{\mbox{\bf 4}}
    {\typ{\G}{\O A}}
    {\typ{\G}{\O\O A }}
\ifnarrow\mskip30mu\else\mskip80mu\fi
\ifr{\Impl\,\mbox{I}}
    {\typ{\G \cup \{\,A\,\}}{B}}
    {\typ{\G}{A \Impl B}}
\ifnarrow\mskip30mu\else\mskip80mu\fi
\ifr{\Impl\mbox{E}}
    {\typ{\G_1}{A \Impl B}
     & \typ{\G_2}{A}}
    {\typ{\G_1 \cup \G_2}{B}}
\]
Note that the following rule is derivable in {\miK4}.
\[
\ifr{\mbox{\bf K}}
    {\typ{\G}{\O(A \Impl B)}}
    {\typ{\G}{\O A \Impl \O B}}
\]
Similarly, we define {\LAm} as the formal system
obtained from \miK4 by adding the following four additional rules.
\[
\ifr{\mbox{fold}}
    {\typ{\G}{A[\fix{X}A/X]}}
    {\typ{\G}{\fix{X}A}}
\ifnarrow\mskip60mu\else\mskip30mu\fi
\ifr{\mbox{unfold}}
    {\typ{\G}{\fix{X}A}}
    {\typ{\G}{A[\fix{X}A/X]}}
\ifnarrow\]\[\else\mskip30mu\fi
\ifr{\mbox{\bf L}}
    {\typ{\G}{\O A \Impl \O B}}
    {\typ{\G}{\O(A \Impl B)}}
\ifnarrow\mskip60mu\else\mskip30mu\fi
\ifr{\mbox{approx}}
    {\typ{\G}{A}}
    {\typ{\G}{\O A}}
\]
\end{definition}
We occasionally use \(\typ{}{A}\) as a synonym to \(\typ{\{\}\!}{A}\).
Note that the \r{\mbox{\bf 4}}-rule is redundant to {\LAm}
since it has \r{\mbox{approx}}.
When {\miK4} is considered, we usually suppose that formulae, i.e.,
type expressions, are finite, that is, do not include any occurrence of $\fx$.
It will be shown, later in Theorem~\ref{logical-equiv-theorem},
that the modal logic {\LAm} corresponds to the typing system {\lA}.
In {\LAm}, \r{\mbox{fold}}, \r{\mbox{unfold}},
\r{\mbox{\bf K}}, \r{\mbox{\bf L}} and \r{\mbox{approx}} substitute for
the \r{{\rsubtyp}}-rule of {\lA}.
The formal system {\LAm} does not have
a rule directly corresponding to \r{{\eqtyp}\mbox{-uniq}} of {\lA}.
This is because the logical equivalence between formulae is a weaker notion
than the one as types, i.e., sets of realizers,
and is derivable from other inference rules.
The first thing we should confirm is the following proposition.

\begin{proposition}\label{LAm-lA}
If\/ \(\typ{\{\,A_1,\,\ldots,\,A_n\,\}}{B}\) is derivable in\/ {\LAm}, then
\(\typ{\{\,x_1 : A_1,\,\ldots,\,x_n : A_n\,\}}{M:B}\) is derivable
in {\lA} for some \(\lambda\)-term \(M\) and
distinct individual variables \(x_1\), \(\ldots\), \(x_n\)
such that \(\FV{M} \subseteq \{\,x_1,\,\ldots,\,x_n\,\}\).
\end{proposition}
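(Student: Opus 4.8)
The plan is to proceed by induction on the derivation of $\typ{\{\,A_1,\,\ldots,\,A_n\,\}}{B}$ in {\LAm}, constructing the \(\lambda\)-term \(M\) and its typing derivation in {\lA} simultaneously, and then by cases on the last inference rule applied. The essential observation is that {\LAm} is {\lA} with the left-hand sides of ``\(:\)'' erased, except that the subtyping apparatus of {\lA} (via \r{\rsubtyp}) has been replaced by the explicit logical rules \r{\mbox{fold}}, \r{\mbox{unfold}}, \r{\mbox{\bf K}}, \r{\mbox{\bf L}} and \r{\mbox{approx}}. So for most rules I expect a direct correspondence: the logical \r{\mbox{assump}}-rule maps to \r{\mbox{var}}, the \r{\Impl\mbox{I}}- and \r{\Impl\mbox{E}}-rules map to the homonymous rules of {\lA} (building \(\lambda\)-abstractions and applications respectively), and \r{\mbox{nec}} maps to \r{\mbox{nec}}, which is derivable in {\lA} by Proposition~\ref{lA-nec-subst-redundant}.

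The key bookkeeping issue, which I would handle carefully throughout, is the management of individual variables and \(\lambda\)-terms across rules that combine or duplicate contexts. In the \r{\Impl\mbox{E}}-case the two premises yield terms \(M_1\) and \(M_2\) over possibly different variable assignments to the shared context \(\G_1\cup\G_2\); here I would invoke the renaming, weakening and separation facts of Proposition~\ref{typ-basic} to arrange a common set of distinct variables \(x_1,\ldots,x_n\) with \(\FV{\app{M_1}{M_2}}\subseteq\{\,x_1,\ldots,x_n\,\}\). Likewise, the \r{\mbox{nec}}-case must accommodate the extra context \(\G_2\) that appears in the conclusion \(\O\G_1\cup\G_2\) but not in the premise, which is exactly what the weakening provided by Proposition~\ref{lA-nec-subst-redundant}(\itemref{lA-nec-redundant}) and Proposition~\ref{typ-basic} allow.

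The genuinely interesting cases are the rules that have no term-constructing counterpart in {\lA}, namely \r{\mbox{fold}}, \r{\mbox{unfold}}, \r{\mbox{\bf K}}, \r{\mbox{\bf L}} and \r{\mbox{approx}}. For each of these I would keep the \(\lambda\)-term \(M\) unchanged from the premise (since none of them corresponds to program construction) and re-derive its type in {\lA} by a single application of \r{\rsubtyp}, justified by the appropriate subtyping or type-equality fact. Concretely: \r{\mbox{fold}} and \r{\mbox{unfold}} are absorbed by \r{{\eqtyp}\mbox{-fix}} (so $\fix{X}A\eqtyp A[\fix{X}A/X]$ and hence $\fix{X}A\psubtyp A[\fix{X}A/X]$ and conversely via \r{{\rsubtyp}\mbox{-reflex}}); \r{\mbox{approx}} is exactly \r{{\rsubtyp}\mbox{-approx}} combined with \r{\rsubtyp}; \r{\mbox{\bf K}} follows from Proposition~\ref{subtyp-K-soundness}'s syntactic counterpart, i.e.\ $\O(A\Impl B)\peqtyp\O A\Impl\O B$ by \r{{\peqtyp}\mbox{-{\bf K}/{\bf L}}}, used left-to-right through \r{{\rsubtyp}\mbox{-reflex}}; and \r{\mbox{\bf L}} is the same equality used right-to-left.

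The main obstacle I anticipate is not any single hard step but the systematic need to track free variables and keep the typing contexts well-formed when premises are duplicated, merged, or extended — in particular ensuring that the constructed \(\lambda\)-term has its free variables confined to the declared \(\{\,x_1,\ldots,x_n\,\}\) while the context still assigns exactly the formulae \(A_1,\ldots,A_n\). Once the structural lemmas of Proposition~\ref{typ-basic} and the redundancy of \r{\mbox{nec}} in Proposition~\ref{lA-nec-subst-redundant} are in hand, each case reduces to a routine application of the corresponding {\lA}-rule together with at most one invocation of \r{\rsubtyp}, so the proof is conceptually straightforward but requires attention to these variable-management details.
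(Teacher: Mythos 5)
Your proposal is correct and follows essentially the same route as the paper's proof: a straightforward induction on the {\LAm}-derivation, with \r{\mbox{assump}}, \r{{\Impl}\mbox{I}} and \r{{\Impl}\mbox{E}} mapped to their term-constructing counterparts, \r{\mbox{nec}} handled via Proposition~\ref{lA-nec-redundant}, and the remaining non-constructive rules absorbed into \r{{\rsubtyp}} via \r{{\eqtyp}\mbox{-fix}}, \r{{\peqtyp}\mbox{-{\bf K}/{\bf L}}} and \r{{\rsubtyp}\mbox{-approx}}. The extra attention you give to variable bookkeeping through Proposition~\ref{typ-basic} is sound, though the paper leaves it implicit.
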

\begin{proof}
Straightforward induction on the derivation.
Use Proposition~\ref{lA-nec-redundant} to handle \r{\mbox{nec}}.
The rules \r{\mbox{assump}}, \r{{\Impl}\mbox{I}} and \r{{\Impl}\mbox{E}}
have direct correspondents in {\lA}, which involve term construction.
The other rules, namely \r{\mbox{fold}},
\r{\mbox{unfold}},
\r{\mbox{\bf L}} and
\r{\mbox{approx}}, are imitable by the \r{{\rsubtyp}}-rule of {\lA}
since it has
\r{{\eqtyp}\mbox{-fix}},
\r{{\peqtyp}\mbox{-{\bf K}/{\bf L}}} and
\r{{\rsubtyp}\mbox{-approx}}.
\qed\CHECKED{2014/07/10, 07/17}
\end{proof}

To show the opposite, we will employ the completeness of {\LAm} with respect
to a certain interpretation of formulae.
As a preparation for that, we first show that in {\LAm},
\begin{Enumerate}
\item \((\O A \Impl A) \Impl A\),
\item \(A\), for every {\tvariant} \(A\), and
\item \(A \eqtyp B\) implies \(A \Impl B\) and \(B \Impl A\).
\end{Enumerate}

\begin{proposition}\label{LAm-Y}
The following inference rule is derivable in {\LAm}.
\[
\ifr{\mbox{\bf Y}}
    {\typ{\G}{\O A\Impl A}}
    {\typ{\G}{A}}
\]
\end{proposition}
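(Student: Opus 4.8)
The claim is that the rule $\r{\mbox{\bf Y}}$ — from $\typ{\G}{\O A \Impl A}$ conclude $\typ{\G}{A}$ — is derivable in $\LAm$. This is the logical counterpart of the fact (Example~\ref{Y-derivable}) that Curry's fixed-point combinator inhabits $(\O X \Impl X) \Impl X$; here we are asked for the erased, purely logical version.

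**The approach.** The plan is to mimic the type-inhabitation derivation of $\Y$ at the level of formula-provability, replacing the $\lambda$-term machinery of $\lA$ by the logical rules of $\LAm$. The key device is the self-referential formula $D \equiv \fix{Y}{\O Y \Impl A}$, exactly as in Example~\ref{Y-derivable} (there named $A = \fix{Y}\O Y \Impl X$, but with the rôle of $X$ now played by our $A$). Assuming $\O A \Impl A$, I would first establish the two facts that make the diagonal argument go through, both of which are available because $D \eqtyp \O D \Impl A$ by $\r{{\eqtyp}\mbox{-fix}}$: namely that from $\O D$ one can derive $A$, and from $\O D$ one can also derive $\O\O D \Impl \O A$. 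Concretely, first I would derive $\typ{\G \cup \{\O D\}}{A}$ as follows. From the assumption $\O D$, using $\r{\mbox{unfold}}$ (via $D \eqtyp \O D \Impl A$, realized by $\r{\mbox{fold}}/\r{\mbox{unfold}}$ together with $\r{\mbox{approx}}$ and $\r{\mbox{\bf K}}/\r{\mbox{\bf L}}$) one gets $\O(\O D \Impl A)$, and then by $\r{\mbox{\bf K}}$ and $\r{\Impl\mbox{E}}$ one obtains $\O\O D \Impl \O A$; since $\O D$ yields $\O\O D$ by $\r{\mbox{approx}}$, this gives $\O A$, and finally $\O A \Impl A$ delivers $A$. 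This produces $\typ{\G \cup \{\O D\}}{A}$.

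**Closing the loop.** With $\typ{\G \cup \{\O D\}}{A}$ in hand, I would discharge the assumption by $\r{\Impl\mbox{I}}$ to get $\typ{\G}{\O D \Impl A}$, and then fold this via $\r{\mbox{fold}}$ into $\typ{\G}{D}$, since $D = \fix{Y}{\O Y \Impl A}$ and $D \eqtyp \O D \Impl A$. Now I apply $\r{\mbox{approx}}$ to obtain $\typ{\G}{\O D}$, and feed this into the very derivation of the previous paragraph — that is, into $\typ{\G \cup \{\O D\}}{A}$ combined by $\r{\Impl\mbox{E}}$ with $\typ{\G}{\O D \Impl A}$ applied to $\typ{\G}{\O D}$ — to conclude $\typ{\G}{A}$. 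Throughout, the hypothesis $\O A \Impl A$ is carried along in $\G$ and used exactly once, at the final step of extracting $A$ from $\O A$.

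**The main obstacle.** The delicate point is not the overall shape of the argument, which is just the erasure of the $\Y$-derivation, but the careful bookkeeping of the modal steps that turn $\O D$ into $\O A$. In the typed system this was handled transparently by subtyping through $\r{{\peqtyp}\mbox{-{\bf K}/{\bf L}}}$ and $\r{{\rsubtyp}\mbox{-approx}}$; here I must instead check that $\r{\mbox{\bf K}}$ (derivable in $\miK4$), $\r{\mbox{\bf L}}$, $\r{\mbox{approx}}$, and $\r{\mbox{unfold}}$ combine to yield $\O\O D \Impl \O A$ from $\O D$ — in particular that applying $\r{\mbox{unfold}}$ underneath a $\O$ is legitimate. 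The safest route is to verify that $A \eqtyp B$ implies both $\typ{}{A \Impl B}$ and $\typ{}{B \Impl A}$ are derivable in $\LAm$ (the third preparatory fact announced in the text, itself proved by induction on the derivation of $\eqtyp$ using $\r{\mbox{fold}}$, $\r{\mbox{unfold}}$, $\r{\mbox{\bf K}}$, $\r{\mbox{\bf L}}$ and $\r{\mbox{approx}}$), and then invoke it freely to move between $\O D$ and $\O(\O D \Impl A)$. Once that equivalence lemma is in place, the remaining steps are routine applications of $\r{\Impl\mbox{I}}$, $\r{\Impl\mbox{E}}$ and $\r{\mbox{approx}}$.
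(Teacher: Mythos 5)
Your derivation is essentially the paper's own proof: the same diagonal formula $D = \fix{Y}\O Y \Impl A$, the same derivation of $\typ{\G\cup\{\,\O D\,\}}{A}$ via $\O\O D \Impl \O A$ and $\r{\mbox{approx}}$, the same closing move of discharging to $\O D \Impl A$, applying $\r{\mbox{fold}}$ and $\r{\mbox{approx}}$ to get $\O D$, and cutting. So the structure is right.

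One caution about the ``safest route'' you propose for the delicate step. In the paper, the general fact that $A \eqtyp B$ implies $\LAmtyp{}{A \Lequiv B}$ (Lemma~\ref{eqtyp-Lequiv}) is proved \emph{after} and \emph{using} the $\r{\mbox{\bf Y}}$-rule: the \r{{\eqtyp}\mbox{-uniq}} case of that lemma goes through Proposition~\ref{Lequiv-Y}, which invokes Proposition~\ref{LAm-Y}. So appealing to the full equivalence lemma here would be circular. Fortunately you do not need it: the only equivalence you use is the single \r{{\eqtyp}\mbox{-fix}} instance $D \eqtyp \O D \Impl A$, and the passage from $\O D$ to $\O(\O D \Impl A)$ is obtained directly by deriving $\typ{\{\,D\,\}}{\O D \Impl A}$ from \r{\mbox{assump}} and \r{\mbox{unfold}} and then applying \r{\mbox{nec}} --- no ``unfolding under a box'' and no general lemma is required. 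With that one step made explicit via \r{\mbox{nec}}, your proof coincides with the paper's.
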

\begin{proof}
It suffices to show that
\(\typ{}{(\O A \Impl A) \Impl A}\) is derivable in {\LAm} for every \(A\).
In fact,
we have a derivation corresponding to the one of Example~\ref{Y-derivable}.
Let a formula \(B = \fix{X}\O X \Impl A\) and
a derivation \(\P\) as follows.
\[
\ifnarrow\lower 22pt\else\lower10pt\fi
\hbox{\(\Pi = {}\)}\;
\begin{array}[c]{c}
	\ifr{\Impl\mbox{I}}
	    {\ifr{\Impl\mbox{E}}
		{\ifr{\mbox{assump}}
			{}
			{\typ{\O A \Impl A}{\O A \Impl A}}
		 & \ifnarrow
		     \mskip-220mu
		     \def\Dr#1{\stack{\mskip70mu#1}{\Derive{30}}}%
		 \else
		     \def\Dr#1{#1}%
		 \fi
		 \Dr{\ifr{\Impl\mbox{E}}
			{\ifr{\mbox{\bf K}}
			    {\ifr{\mbox{nec}}
				{\ifr{\mbox{unfold}}
				    {\ifr{\mbox{assump}}{}{\typ{B}{B}}}
				    {\typ{B}{\O B\Impl A}}}
				{\typ{\O B}{\O(\O B\Impl A)}}}
			    {\typ{\O B}{\O\O B\Impl \O A}}
			& \ifr{\mbox{approx}}
				{\ifr{\mbox{assump}}{}{\typ{\O B}{\O B}}}
				{\typ{\O B}{\O\O B}}}
			{\typ{\O B}{\O A}}}}
		{\typ{\O A \Impl A,\,\O B}{A}}}
	    {\typ{\O A \Impl A}{\O B \Impl A}}
\end{array}
\]
Then we can derive \(\typ{}{(\O A \Impl A) \Impl A}\) as follows.
\vskip-15pt
\[
\ifr{\Impl\mbox{I}}
    {\ifr{\Impl\mbox{E}}
	{\stack{\derive{\Pi}}
		{\typ{\O A \Impl A}{\O B \Impl A}}
	& \ifr{\mbox{approx}}
	      {\ifr{\mbox{fold}}
		   {\stack{\derive{\Pi}}
			  {\typ{\O A \Impl A}{\O B \Impl A}}}
		   {\typ{\O A \Impl A}{B}}}
	       {\typ{\O A \Impl A}{\O B}}}
	{\typ{\O A \Impl A}{A}}}
    {\typ{}{(\O A \Impl A) \Impl A}}
\]
\Qed\CHECKED{2014/07/17}
\end{proof}

\begin{definition}
    \ilabel{LAm-typ-def}{0 yields-LA-mu@$\protect\LAmtyp{\G}{A}$}
We write \(\LAmtyp{\G}{\G'}\) if and only if
\(\typ{\G}{A}\) is derivable in\/ {\LAm} for every \(A \in \G'\), and
write \(\LAmtyp{\G}{A \Lequiv B}\)
if and only if \(\LAmtyp{\G}{\!\{A \Impl B,\,B \Impl A\}}\).
We also define a binary relation \(\LAmequiv\) between formulae,
i.e., type expressions,
as \(A \LAmequiv B\) if and only if \(\LAmtyp{}{A \Lequiv B}\).
\end{definition}

\begin{proposition}\label{Lequiv-Y}
If\/ \(\LAmtyp{\G\cup\{\,\O (A \Impl B),\,\O (B \Impl A)\,\}}{A \Lequiv B}\),
then \(\LAmtyp{\G}{A \Lequiv B}\).
\end{proposition}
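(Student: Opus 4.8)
Write $D = A \Impl B$ and $E = B \Impl A$, so that the hypothesis $\LAmtyp{\G\cup\{\,\O D,\,\O E\,\}}{A \Lequiv B}$ means precisely that both $\typ{\G\cup\{\,\O D,\,\O E\,\}}{D}$ and $\typ{\G\cup\{\,\O D,\,\O E\,\}}{E}$ are derivable in {\LAm}. Discharging the two boxed assumptions by \r{\Impl\,\mbox{I}} turns these into
\[
\typ{\G}{\O D \Impl \O E \Impl D} \qquad\mbox{and}\qquad \typ{\G}{\O D \Impl \O E \Impl E},
\]
which I shall call (*) and (**). The whole proof is a L\"ob-style fixed-point argument built on the derived rule \r{\mbox{\bf Y}} of Proposition~\ref{LAm-Y}: to establish $\typ{\G}{F}$ it suffices to establish $\typ{\G}{\O F \Impl F}$. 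Since \r{\mbox{\bf Y}}, \r{\mbox{\bf K}}, \r{\mbox{approx}} and \r{\mbox{\bf 4}} all preserve the context $\G$, I may keep $\G$ fixed throughout and never need \r{\mbox{nec}}, whose context-boxing would be inconvenient here.

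The naive attempt — to apply \r{\mbox{\bf Y}} to a single formula expressing ``$D$ and $E$'' — fails because the implicational fragment has no conjunction with which to box the pair simultaneously. The device that breaks this deadlock is to first prove the two ``one box weaker'' statements
\[
\typ{\G}{\O E \Impl D} \qquad\mbox{and}\qquad \typ{\G}{\O D \Impl E}.
\]
For the second of these I would apply \r{\mbox{\bf Y}} to $H = \O D \Impl E$: under the assumptions $\O H$ and $\O D$, rule \r{\mbox{\bf K}} turns $\O H = \O(\O D \Impl E)$ into $\O\O D \Impl \O E$, rule \r{\mbox{\bf 4}} turns $\O D$ into $\O\O D$, hence $\O E$, and then (**) applied to $\O D$ and $\O E$ yields $E$; discharging the assumptions gives $\typ{\G}{\O H \Impl H}$, so \r{\mbox{\bf Y}} gives $\typ{\G}{\O D \Impl E}$. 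The statement $\typ{\G}{\O E \Impl D}$ is obtained symmetrically from (*). The essential point is that (*) and (**) consume boxed premises and return \emph{unboxed} conclusions, which is exactly what keeps the number of $\O$'s from growing without bound in the L\"ob induction.

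Finally I would bootstrap from these to remove the last box. From $\typ{\G}{\O D \Impl E}$, rule \r{\mbox{approx}} gives $\typ{\G}{\O(\O D \Impl E)}$ and then \r{\mbox{\bf K}} gives $\typ{\G}{\O\O D \Impl \O E}$; composing with $\typ{\G}{\O D \Impl \O\O D}$ (derivable from \r{\mbox{\bf 4}} by \r{\Impl\,\mbox{I}}) produces $\typ{\G}{\O D \Impl \O E}$. Chaining this with (*) gives $\typ{\G}{\O D \Impl D}$, whence \r{\mbox{\bf Y}} yields $\typ{\G}{D}$; the symmetric computation using $\typ{\G}{\O E \Impl D}$ and (**) yields $\typ{\G}{E}$. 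Together these two are exactly $\LAmtyp{\G}{A \Lequiv B}$, as required.

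I expect the only delicate point to be the box bookkeeping in the nested L\"ob steps: one must check that each use of \r{\mbox{\bf K}} and \r{\mbox{\bf 4}} lines the $\O$'s up so that (*) and (**) are applied to premises of the form $\O D$ and $\O E$ (and not, say, $\O\O D$). Everything else is routine manipulation of \r{\Impl\,\mbox{I}} and \r{\Impl\mbox{E}} with the context $\G$ held fixed.
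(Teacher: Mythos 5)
Your proof is correct, and it rests on the same key lemma as the paper's, namely the derived rule \r{\mbox{\bf Y}} of Proposition~\ref{LAm-Y}; but the middle of your argument is organized differently. The paper never forms the fully discharged implications (*) and (**): instead it keeps one boxed hypothesis in the context and applies \r{\mbox{\bf Y}} directly to the formula \(A \Impl B\) there. Concretely, from \(\typ{\G\cup\{\,\O(A\Impl B),\,\O(B\Impl A)\,\}}{A\Impl B}\) a single \r{\Impl\,\mbox{I}} gives \(\typ{\G\cup\{\,\O(B\Impl A)\,\}}{\O(A\Impl B)\Impl(A\Impl B)}\), and \r{\mbox{\bf Y}} immediately yields \(\typ{\G\cup\{\,\O(B\Impl A)\,\}}{A\Impl B}\) — which is your \(\typ{\G}{\O E \Impl D}\) in one step, with no use of \r{\mbox{\bf K}} or \r{\mbox{\bf 4}} and no nested L\"ob induction on the auxiliary formula \(H = \O D \Impl E\). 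The symmetric step plus \r{\mbox{approx}} gives \(\typ{\G\cup\{\,\O(A\Impl B)\,\}}{\O(B\Impl A)}\), cutting these together gives \(\typ{\G\cup\{\,\O(A\Impl B)\,\}}{A\Impl B}\), and one more \r{\mbox{\bf Y}} finishes. Your final bootstrap is the same idea, but routed through \(\typ{\G}{\O D \Impl \O E}\), which costs you the extra \r{\mbox{approx}}/\r{\mbox{\bf K}}/\r{\mbox{\bf 4}} box-shuffling. What your version buys is that every judgment lives over the bare context \(\G\), so you never invoke weakening of boxed assumptions; what the paper's version buys is brevity and independence from \r{\mbox{\bf K}} and \r{\mbox{\bf 4}}. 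The one bookkeeping point you flag — that \r{\mbox{\bf K}} and \r{\mbox{\bf 4}} line the boxes up so that (*) and (**) receive premises of the form \(\O D\) and \(\O E\) — does check out, so there is no gap.
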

\begin{proof}
Suppose that we have
\(\LAmtyp{\G\cup\{\,\O (A \Impl B),\,\O (B \Impl A)\,\}}{A \Lequiv B}\).
Since \(\typ{\G\cup\{\,\O (A \Impl B),\,\O (B \Impl A)\,\}}{A \Impl B}\) is
derivable, so is
\begin{eqnarray}
    \label{Lequiv-Y-01}
    \typ{\G\cup\{\,\O (B \Impl A)\,\}}{A \Impl B}
\end{eqnarray}
by Proposition~\ref{LAm-Y}, namely \r{\mbox{\bf Y}}.
Similarly,
\(\typ{\G\cup\{\,\O (A \Impl B)\,\}}{B \Impl A}\) is also derivable;
and hence, so is
\begin{eqnarray}
    \label{Lequiv-Y-02}
    \typ{\G\cup\{\,\O (A \Impl B)\,\}}{\O (B \Impl A)}
\end{eqnarray}
by \r{\mbox{approx}}.
Therefore, we can derive
\(\typ{\G\cup\{\,\O (A \Impl B)\,\}}{A \Impl B}\) from
(\ref{Lequiv-Y-01}) and (\ref{Lequiv-Y-02}); and hence,
so is \(\typ{\G}{A \Impl B}\) by \r{\mbox{\bf Y}}.
Symmetrically, \(\typ{\G}{B \Impl A}\) is also derivable in {\LAm}.
\qed\CHECKED{2014/06/25, 07/17}
\end{proof}

\begin{proposition}\label{LAm-basic}

\begin{Enumerate}\pushlabel
\item \label{LAm-weakening}
    Suppose that \(\LAmtyp{\G'}{\G}\).
    If\/ \(\typ{\G}{A}\) is derivable in {\LAm},  then
    so is \(\typ{\G'}{A}\).
\item \label{LAm-subst}
    If\/ \(\typ{\{\,A_1,\,A_2,\,\ldots,\,A_n\,\}}{B}\) is derivable
    in\/ {\LAm},
    then so is\/
    \(\typ{\{\,A_1[C/X],\,A_2[C/X],\,\ldots,\,A_n[C/X]\,\}}{B[C/X]}\)
    for every \(C\) and \(X\).
\end{Enumerate}
\end{proposition}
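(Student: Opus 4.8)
The plan is to prove both items by induction on the \LAm-derivation, but the two halves behave quite differently, and Item~\ref{LAm-weakening} needs a preliminary fact. First I would record plain \emph{context weakening}: if $\typ{\G}{A}$ is derivable and $\G \subseteq \Delta$, then so is $\typ{\Delta}{A}$. This is immediate by induction on the derivation; the only slightly non-mechanical case is \r{\mbox{nec}}, whose conclusion already has the form $\typ{\O\G_1 \cup \G_2}{\O A'}$ with the slot $\G_2$ arbitrary, so one re-applies \r{\mbox{nec}} to the \emph{unchanged} premise $\typ{\G_1}{A'}$ taking that slot to be $\Delta$. I prove this before Item~\ref{LAm-weakening}, since it is used there.

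For Item~\ref{LAm-weakening}, assuming $\LAmtyp{\G'}{\G}$, I induct on the derivation of $\typ{\G}{A}$. The cases \r{\mbox{assump}} (the displayed formula lies in $\G$, hence is derivable from $\G'$ by hypothesis), the context-preserving single-premise rules \r{\mbox{\bf 4}}, \r{\mbox{fold}}, \r{\mbox{unfold}}, \r{\mbox{\bf L}}, \r{\mbox{approx}}, and \r{\Impl\mbox{E}} (where $\G$ splits as $\G_1 \cup \G_2$ along the two premises, so $\LAmtyp{\G'}{\G_1}$ and $\LAmtyp{\G'}{\G_2}$ both hold) are routine: apply the induction hypothesis and re-apply the same rule. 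The case \r{\Impl\mbox{I}} uses context weakening to pass from $\LAmtyp{\G'}{\G}$ to $\LAmtyp{\G' \cup \{A\}}{\G \cup \{A\}}$ before recursing.

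The main obstacle is \r{\mbox{nec}}, where $\G = \O\G_1 \cup \G_2$ and $A = \O A'$ with subderivation $\typ{\G_1}{A'}$. Here the induction hypothesis is of no use: from $\LAmtyp{\G'}{\G}$ one recovers only the \emph{boxed} facts $\typ{\G'}{\O C}$ for $C \in \G_1$, never the unboxed $\typ{\G'}{C}$ that would be needed to rewrite the premise. Instead I argue directly. Writing $\G_1 = \{C_1,\dots,C_k\}$, I discharge all hypotheses by \r{\Impl\mbox{I}} to obtain $\typ{}{C_1 \Impl \cdots \Impl C_k \Impl A'}$, box it by \r{\mbox{nec}} to get $\typ{}{\O(C_1 \Impl \cdots \Impl C_k \Impl A')}$, and weaken to $\G'$. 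Then, peeling off one antecedent at a time, I alternate the derived rule \r{\mbox{\bf K}} (turning $\typ{\G'}{\O(C_i \Impl R)}$ into $\typ{\G'}{\O C_i \Impl \O R}$) with \r{\Impl\mbox{E}} against $\typ{\G'}{\O C_i}$, which is available because $\O C_i \in \O\G_1 \subseteq \G$. After $k$ rounds this produces $\typ{\G'}{\O A'}$. It is exactly this interplay of \r{\mbox{nec}} and \r{\mbox{\bf K}} that makes the modal structure of \LAm indispensable here.

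For Item~\ref{LAm-subst} I induct on the derivation of $\typ{\{A_1,\dots,A_n\}}{B}$, applying $[C/X]$ throughout. Every rule commutes with the substitution because $\O$ and $\Impl$ do; for instance \r{\mbox{nec}} goes through using $(\O\G_1)[C/X] = \O(\G_1[C/X])$ and $(\O A')[C/X] = \O(A'[C/X])$, while \r{\mbox{assump}} is trivial. The only delicate cases are \r{\mbox{fold}} and \r{\mbox{unfold}}, which involve $\fix{Y}A$ and $A[\fix{Y}A/Y]$. There I would first $\alpha$-convert so that $Y \ne X$ and $Y \notin \FTV{C}$, invoke the substitution lemma for type expressions to rewrite $(\fix{Y}A)[C/X]$ as $\fix{Y}(A[C/X])$ and $(A[\fix{Y}A/Y])[C/X]$ as $(A[C/X])[\fix{Y}(A[C/X])/Y]$, and use preservation of properness under substitution (Items~\itemref{proper-subst1} and \itemref{proper-subst2}, with Item~\itemref{etv-proper} giving that $Y \notin \FTV{C}$ makes $C$ proper in $Y$) to confirm that $\fix{Y}(A[C/X])$ is again a well-formed type expression, so that \r{\mbox{fold}} and \r{\mbox{unfold}} genuinely apply. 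The remaining steps are direct re-applications of the same rule.
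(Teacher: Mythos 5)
Your proof is correct, but for Item~\ref{LAm-weakening} it takes a genuinely different and more laborious route than the paper. The paper avoids induction entirely: writing $\G = \{B_1,\ldots,B_n\}$, it discharges all hypotheses by $n$ applications of \r{{\Impl}\mbox{I}} to get $\typ{}{B_1\Impl\cdots\Impl B_n\Impl A}$, and then cuts against the given derivations of $\typ{\G'}{B_i}$ by \r{{\Impl}\mbox{E}} (which unions contexts, so no separate weakening lemma is even needed). You instead run a full induction on the derivation, and the only hard case --- \r{\mbox{nec}} --- is handled by replaying exactly this ``discharge, then cut'' trick under a box, alternating \r{\mbox{\bf K}} with \r{{\Impl}\mbox{E}} against $\typ{\G'}{\O C_i}$. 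Your nec-case argument is sound (the $\O C_i$ are indeed members of $\G$, so $\typ{\G'}{\O C_i}$ comes straight from the hypothesis $\LAmtyp{\G'}{\G}$, and \r{\mbox{\bf K}} is derivable in {\miK4}), and your correct observation that the induction hypothesis is useless there is precisely what should have suggested abandoning the induction and applying the global argument at the top level, which is what the paper does. For Item~\ref{LAm-subst} both you and the paper do a straightforward induction on the derivation; your extra care with $\alpha$-conversion, the commutation of $[C/X]$ with $\fix{Y}(\cdot)$, and preservation of properness in the \r{\mbox{fold}}/\r{\mbox{unfold}} cases is exactly the content the paper leaves implicit.
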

\begin{proof}
\ifdetail
For Item~1, let \(\G = \{\,B_1,\,B_2,\,\ldots,\,B_n\,\}\).
Since \(\typ{\G}{A}\) is derivable,
so is \(\typ{}{B_1\Impl B_2 \Impl \ldots B_n \Impl A}\)
by applying \r{{\Impl}\mbox{I}} \(n\) times.
On the other hand, since \(\typ{\G'}{B_i}\) is derivable for every \(i\),
\(\typ{\G'}{A}\) is derivable by \r{{\Impl}\mbox{E}}.
The second item can be shown by straightforward induction on the derivation.
\else
Straightforward.
\fi
\qed\CHECKED{2014/07/10, 07/17}
\end{proof}

\begin{proposition}\label{LAm-tail}
\(\typ{\{\,\tail{A}\,\}}{A}\) is derivable in\/ {\LAm} for every \(A\).
\end{proposition}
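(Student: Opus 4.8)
<br>

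The claim is that $\typ{\{\,\tail{A}\,\}}{A}$ is derivable in {\LAm} for every type expression $A$. Recall that $\tail{A}$ is obtained by discarding the domains of all the function types in $A$, keeping only the tail (see Definition~\ref{tail-def}): $\tail{(B \Impl C)} = \tail{C}$, while $\tail{}$ commutes with $\O$ and $\fix{X}{}$ and fixes variables. So the statement says that the tail of $A$ logically implies $A$, which is intuitively sound because dropping antecedents only weakens a formula, and $A \Impl \t$ is equivalent to $\t$.

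The plan is to proceed by induction on $h(A)$ (the height of Definition~\ref{height-rank-def}), with a case analysis on the form of $A$. First I would dispose of the degenerate case where $A$ is a {\tvariant}: then $A \LAmequiv \t$, and in fact $\typ{\G}{A}$ should be derivable outright for any $\G$, since {\tvariant}s are derivable in {\LAm} (this is exactly the kind of fact the excerpt announces it is about to establish). For the structural cases, when $A = X$ is a variable we have $\tail{A} = A$ and the result is immediate by \r{\mbox{assump}}. When $A = \O B$, we have $\tail{A} = \O \tail{B}$, and the induction hypothesis gives $\typ{\{\,\tail{B}\,\}}{B}$; applying \r{\mbox{nec}} yields $\typ{\{\,\O\tail{B}\,\}}{\O B}$, i.e. $\typ{\{\,\tail{A}\,\}}{A}$. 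When $A = \fix{X}B$, we have $\tail{A} = \fix{X}\tail{B}$; by the induction hypothesis $\typ{\{\,\tail{B}\,\}}{B}$ is derivable, and I would push the fixed point through using \r{\mbox{fold}}/\r{\mbox{unfold}} together with the substitution lemma (Proposition~\ref{LAm-subst}), unfolding $\fix{X}\tail{B}$ to $\tail{B}[\fix{X}\tail{B}/X]$ and relating this to $B[\fix{X}B/X]$, to which the induction hypothesis applies after substitution.

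The genuinely interesting case is $A = B \Impl C$, where $\tail{A} = \tail{C}$. Here the induction hypothesis (applicable since $h(C) < h(A)$) gives $\typ{\{\,\tail{C}\,\}}{C}$; I then need $\typ{\{\,\tail{C}\,\}}{B \Impl C}$. This follows by weakening $C$ to $B \Impl C$: from $\typ{\{\,\tail{C},\,B\,\}}{C}$ (obtained from the IH by adding the hypothesis $B$, which is harmless) one applies \r{\Impl\,\mbox{I}} to conclude $\typ{\{\,\tail{C}\,\}}{B \Impl C}$. The weakening step should be justified by Proposition~\ref{LAm-weakening}, or simply by the structure of \r{\mbox{assump}} allowing extra hypotheses. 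I expect this implication-introduction case to be the main obstacle only in the bookkeeping sense — the key subtlety is ensuring the fixed-point case's substitution commutes correctly with $\tail{}$, for which I would invoke Proposition~\ref{tail-subst} ($\tail{A[B/X]} = \tail{A}[\tail{B}/X]$) to align $\tail{(B[\fix{X}B/X])}$ with $\tail{B}[\fix{X}\tail{B}/X]$, and Proposition~\ref{tail-proper} to guarantee $\tail{B}$ is proper in $X$ so that $\fix{X}\tail{B}$ is well-formed. With these lemmas in hand each step reduces to a routine application of the {\LAm} rules.
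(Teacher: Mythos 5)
Your variable, $\O$, and $\Impl$ cases coincide with the paper's proof (weakening via Proposition~\ref{LAm-weakening} followed by \r{{\Impl}\mbox{I}} is exactly what the paper does for $A = B \Impl C$). The gap is in the case $A = \fix{X}B$. Your plan — apply the induction hypothesis $\typ{\{\,\tail{B}\,\}}{B}$, substitute, and shuffle with \r{\mbox{fold}}/\r{\mbox{unfold}} and Proposition~\ref{tail-subst} — only closes when $X \not\in \FTV{\tail{B}}$, because then both substitutions are vacuous. When $X \in \FTV{\tail{B}}$, the judgment you obtain from the IH by Proposition~\ref{LAm-subst} has hypothesis $\tail{B}[\fix{X}B/X]$, whereas unfolding $\tail{A} = \fix{X}\tail{B}$ yields $\tail{B}[\fix{X}\tail{B}/X]$; since $\tail{B}$ has the form $\O^{m_0}\fix{Y_1}\cdots\O^{m_n}X$ with $X$ actually occurring, bridging these two requires deriving $\fix{X}B$ from $\fix{X}\tail{B}$ under $k = m_0+\cdots+m_n$ occurrences of $\O$ — which is the goal itself, under boxes. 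No amount of "routine application of the {\LAm} rules" escapes this circularity; the paper breaks it by noting that in this subcase $A$ is a {\tvariant}, deriving $\typ{\{\,\O A\,\}}{A}$, and then invoking the derived rule \r{\mbox{\bf Y}} of Proposition~\ref{LAm-Y} (i.e.\ $(\O A \Impl A) \Impl A$, the strong L\"{o}b principle) to discharge the guarded hypothesis. That Löb-style step is the essential idea missing from your sketch.

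A second, related problem: you propose to dispose of the {\tvariant} case up front by citing the derivability of {\tvariant}s in {\LAm}. That is Proposition~\ref{LAm-tvariant}, which in the paper is proved \emph{after} and \emph{by means of} the present proposition, so invoking it here is circular. (It could be proved independently, but only by essentially the same \r{\mbox{\bf Y}}-based derivation that your proof is missing — and the hard subcase of the fixed-point case above is precisely the case where $A$ is a {\tvariant}, so the difficulty cannot be sidestepped this way.)
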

\begin{proof}
By induction on \(h(A)\), and by cases on the form of \(A\).

\Case{\(A = X\) for some \(X\).}
Trivial since \(\tail{A} = A\) by Definition~\ref{tail-def}.

\Case{\(A = \O B\) for some \(B\).}
In this case, \(\tail{A} = \O \tail{B}\).
By induction hypothesis,
\(\typ{\{\,\tail{B}\,\}}{B}\) is derivable; hence,
so is \(\typ{\{\,\O \tail{B}\,\}}{\O B}\) by \r{\mbox{nec}}.

\Case{\(A = B \Impl C\) for some \(B\) and \(C\).}
In this case, \(\tail{A} = \tail{C}\).
By induction hypothesis and Proposition~\ref{LAm-weakening},
\(\typ{\{\,B,\,\tail{C}\,\}}{C}\) is derivable, from which we get
\(\typ{\{\,\tail{C}\,\}}{B \Impl C}\) by \r{{\Impl\,}\mbox{I}}.

\Case{\(A = \fix{X}B\) for some \(X\) and \(B\).}
In this case, \(\tail{A} = \fix{X}\tail{B}\), and
by induction hypothesis,
\begin{eqnarray}
    \label{LAm-tail-01}
    \typ{\{\,\tail{B}\,\}}{B}~\mbox{is derivable.}
\end{eqnarray}
If \(X \not\in \FTV{\tail{B}}\), then
\(\typ{\{\,\tail{A}\,\}}{\tail{B}}\) is derivable from
\r{\mbox{assump}} by \r{\mbox{unfold}}
since \(\tail{B}[\tail{A}/X] = \tail{B}\); and hence,
so is \(\typ{\{\,\tail{A}\,\}}{B}\) by (\ref{LAm-tail-01})
and Proposition~\ref{LAm-weakening}.
In this case, we can derive
\(\typ{\{\,\tail{A}\,\}}{B[A/X]}\) by Proposition~\ref{LAm-subst},
from which we get
\(\typ{\{\,\tail{A}\,\}}{A}\) by \r{\mbox{fold}}.
On the other hand, in case of \(X \in \FTV{\tail{B}}\),
let \(\tail{B} =
    \O^{m_0}\fix{Y_1}\O^{m_1}\fix{Y_2}\O^{m_2}\ldots\fix{Y_n}\O^{m_n}X\),
where \(X \not\in \{\,Y_1,\,Y_2,\,\ldots,\,Y_n\,\}\),
and let \(k = m_0+m_1+m_2+\ldots+m_n\).
Note that \(A\) is a {\tvariant} and \(k > 0\) in this case,
since \(\tail{B}\) is also proper in \(X\) by Proposition~\ref{tail-proper}.
Hence, \(\typ{\{\,\O^k X\}}{\tail{B}}\) is derivable from \(\typ{\{\,X\,\}}{X}\)
by repeatedly applying \r{\mbox{nec}} or \r{\mbox{fold}}; and therefore, so is
\(\typ{\{\,\O X\}}{\tail{B}}\)
by Proposition~\ref{LAm-weakening} since
\(\typ{\{\,\O X\,\}}{\O^k X}\) is derivable from \(\typ{\{\,\O X\,\}}{\O X}\)
by applying \r{\mbox{approx}}
\(k-1\) times.
Hence, by (\ref{LAm-tail-01}),
\begin{eqnarray*}
    \typ{\{\,\O X\,\}}{B}~\mbox{is also derivable.}
\end{eqnarray*}
Therefore, by Proposition~\ref{LAm-subst}, so is
\(\typ{\{\,\O A\,\}}{B[A/X]}\), from which we get
\(\typ{\{\,\O A\,\}}{A}\) by \r{{\mbox{fold}}}.
Then, \(\typ{}{\O A \Impl A}\) is derivable
by \r{{\Impl}\mbox{I}}; and hence,
so is \(\typ{}{A}\) by Proposition~\ref{LAm-Y}.
Finally, we get
\(\typ{\{\,\tail{A}\,\}}{A}\) by Proposition~\ref{LAm-weakening}.
\qed\CHECKED{2014/06/25, 07/17}
\end{proof}

\begin{proposition}\label{LAm-tvariant}
If\/ \(A\) is a {\tvariant}, then
\(\typ{}{A}\) is derivable in {\LAm}.
\end{proposition}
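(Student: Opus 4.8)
The plan is to reduce the claim to deriving $\typ{}{\tail{A}}$ and then to build that derivation explicitly from the structure of a {\tvariant}. First I would invoke Proposition~\ref{LAm-tail}, which gives a derivation of $\typ{\{\,\tail{A}\,\}}{A}$ in {\LAm}. By Proposition~\ref{tail-tvariant}, $\tail{A}$ is again a {\tvariant}, and moreover it already has the explicit shape $\tail{A} = \O^{m_0}\fix{X_1}\O^{m_1}\fix{X_2}\O^{m_2}\ldots\fix{X_n}\O^{m_n}X_i$ guaranteed by Definition~\ref{tvariant-def}, with $1 \le i \le n$, $X_i \notin \{\,X_{i{+}1},\ldots,X_n\,\}$ and $m_i + m_{i{+}1} + \ldots + m_n \ge 1$. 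If I can show that $\typ{}{\tail{A}}$ is derivable, then Proposition~\ref{LAm-weakening} (the weakening property, instantiated with the empty context on the left) turns the derivation of $\typ{\{\,\tail{A}\,\}}{A}$ into one of $\typ{}{A}$, which is what I want.

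So the core task is to derive $\typ{}{\tail{A}}$ for a {\tvariant} in the explicit chain form above, and I would do this by induction on the number $n$ of nested $\fix$-binders. The leading modalities are harmless: since $\typ{\G}{B}$ implies $\typ{\G}{\O B}$ by \r{\mbox{approx}}, it suffices to treat the case $m_0 = 0$, i.e.\ $\tail{A} = \fix{X_1}G$ with $G = \O^{m_1}\fix{X_2}\ldots\fix{X_n}\O^{m_n}X_i$. Here I would apply the derived rule \r{\mbox{\bf Y}} of Proposition~\ref{LAm-Y}: to obtain $\typ{}{\fix{X_1}G}$ it suffices to derive $\typ{\{\,\O\fix{X_1}G\,\}}{\fix{X_1}G}$ (then close with \r{{\Impl}\mbox{I}} and \r{\mbox{\bf Y}}), and by \r{\mbox{fold}} this reduces to $\typ{\{\,\O\fix{X_1}G\,\}}{G[\fix{X_1}G/X_1]}$. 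Writing $T_0 = \fix{X_1}G$, the body $G[T_0/X_1]$ is a chain of boxes and now-vacuous $\fix$-binders around a leaf that is either $T_0$ (when $i = 1$) or the untouched variable $X_i$ (when $i \ge 2$).

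The induction then splits on whether the leaf variable is the outermost binder. When $i \ge 2$ the binder $\fix{X_1}$ is vacuous, the inner chain $\fix{X_2}\ldots\fix{X_n}\O^{m_n}X_i$ is itself a {\tvariant} in explicit form with $n-1$ binders (the side conditions carry over under reindexing), so the induction hypothesis yields $\typ{}{\fix{X_2}\ldots\fix{X_n}\O^{m_n}X_i}$, and I rebuild $\typ{}{T_0}$ from it by \r{\mbox{approx}} and vacuous applications of \r{\mbox{fold}}. When $i = 1$ the leaf is the closed term $T_0$, so $G[T_0/X_1]$ is $\LAmequiv$-equal to $\O^{m_1+\cdots+m_n}T_0$; the subtle point is that every $\fix$-binder in $G[T_0/X_1]$ is vacuous, so I must reintroduce them by \r{\mbox{fold}} from $\typ{\G}{T_0}$ for a suitable $\G$, whereas the only hypothesis available is $\O T_0$, not $T_0$. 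The trick here is to use \r{\mbox{nec}} rather than \r{\mbox{approx}} at the first genuine $\O$ in the chain: \r{\mbox{nec}} lets me pass from the context $\{\,\O T_0\,\}$ to the context $\{\,T_0\,\}$ inside that box, after which the remaining boxes and vacuous binders surrounding the closed $T_0$ are all derivable from the axiom $\typ{\{\,T_0\,\}}{T_0}$ by \r{\mbox{approx}} and \r{\mbox{fold}}. The condition $m_i + \ldots + m_n \ge 1$ is exactly what guarantees that such a genuine $\O$ occurs strictly outside the leaf, so this step never gets stuck.

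I expect this last point to be the main obstacle: keeping the bookkeeping straight in the $i = 1$ case, in particular stripping the leading vacuous $\fix$-binders (via \r{\mbox{fold}}) until the first positive $m_j$ is reached, and applying \r{\mbox{nec}} there with the right split of the context, so that the strengthened hypothesis $T_0$ becomes available before one is forced to derive the bare (unboxed) $T_0$ from $\O T_0$. Everything else — the reduction to $\typ{}{\tail{A}}$, the handling of the leading $\O^{m_0}$, and the $i \ge 2$ case — is routine application of \r{\mbox{approx}}, \r{\mbox{fold}}, \r{\mbox{\bf Y}} and the induction hypothesis.
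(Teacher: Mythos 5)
Your proposal is correct and follows essentially the same route as the paper's proof: reduce to deriving \(\typ{}{\tail{A}}\) via Proposition~\ref{LAm-tail}, exploit the condition \(m_i+\ldots+m_n \ge 1\) to place a \r{\mbox{nec}} at a genuine \(\O\) so that the hypothesis \(\O T_0\) is strengthened to \(T_0\) inside that box, and discharge with \r{\mbox{\bf Y}}. The only cosmetic differences are that you peel binders by induction on \(n\) and substitute \(T_0\) before deriving, whereas the paper targets the critical binder \(\fix{X_i}\) directly, derives with the free variable \(X_i\) in the context, and then substitutes via Proposition~\ref{LAm-subst}.
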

\begin{proof}
Suppose that \(A\) is a {\tvariant}, i.e.,
\(\tail{A}
    = \O^{m_0}\fix{X_1}\O^{m_1}\fix{X_2}\O^{m_2}\ldots\fix{X_n}\O^{m_n}X_i\)
for some \(n\), \(m_0\), \(m_1\), \(m_2\), \(\ldots\), \(m_n\),
\(X_1\), \(X_2\), \(\ldots\), \(X_n\) and \(i\) such that
\(1 \le i \le n\),
\(X_i \notin \{\,\,X_{i{+}1},\,X_{i{+}2},\,\ldots,X_n\,\}\)
and \(m_i+m_{i{+}1}+m_{i{+}2}+\ldots+m_n \ge 1\).
By Proposition~\ref{LAm-tail},
it suffices to show that \(\typ{}{\tail{A}}\) is derivable.
Let \(k\) be the largest integer such that \(m_k > 0\), and let \(C\) and \(D\)
as \(C = \fix{X_{k+1}}\fix{X_{k+2}}\ldots\fix{X_n}X_i\)
and \(D = \O^{m_i}\fix{X_{i+1}}
    \O^{m_{i+1}}\fix{X_{i+2}}\ldots\O^{m_{k-1}}\fix{X_k}\O^{m_k}C\).
Note that
\(\tail{A} = \O^{m_0}\fix{X_1}\O^{m_1}\fix{X_2}\ldots\O^{m_{i-1}}\fix{X_i}D\).
We can derive \(\typ{\{\,X_i\,\}}{C}\)
from \(\typ{\{\,X_i\,\}}{X_i}\) by \r{\mbox{fold}}, from which
\(\typ{\{\,\O X_i\,\}}{\O C}\) follows by \r{\mbox{nec}}.
Hence,
continuing the derivation by repeatedly applying
\r{\mbox{approx}}
or \r{\mbox{fold}}, we get
\(\typ{\{\,\O X_i\}}{D}\).
Therefore, by Proposition~\ref{LAm-subst},
\(\typ{\{\,\O \fix{X_i}D\}}{D[\fix{X_i}D/X_i]}\) is derivable, and from which
\(\typ{\{\,\O \fix{X_i}D\}}{\fix{X_i}D}\) follows by \r{\mbox{fold}}.
Hence,
\(\typ{}{\fix{X_i}D}\) is derivable by Proposition~\ref{LAm-Y},
and therefore, so is \(\typ{}{\tail{A}}\) by applying
\r{\mbox{nec}} and \r{\mbox{fold}}.
\qed\CHECKED{2014/06/25, 07/17}
\end{proof}

\begin{proposition}\label{Lequiv-subst}
Let \([\vec{B}/\vec{X}]\) and \([\vec{C}/\vec{X}]\) be abbreviations
for \([B_1/X_1,\,B_2/X_2,\,\ldots,\,B_n/X_n]\)
and \([C_1/X_1,\,C_2/X_2,\,\ldots,\,C_n/X_n]\), respectively.
Suppose that for any \(i \in \{\,1,\,2,\,\ldots,\,n\,\}\), either
\begin{enumerate}[{\kern8pt}(a)]
\item
    \(\{\,B_i \Impl C_i,\,C_i \Impl B_i\,\} \subseteq \G\), or
\item
    \(\{\,\O(B_i \Impl C_i),\,\O(C_i \Impl B_i)\,\} \subseteq \G\) and
    \(A\) is proper in \(X_i\).
\end{enumerate}
Then, \(\LAmtyp{\G}{A[\vec{B}/\vec{X}] \Lequiv A[\vec{C}/\vec{X}]}\).
\end{proposition}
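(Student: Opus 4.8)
The plan is to argue by induction on the height $h(A)$, with a case analysis on the form of $A$, exploiting the fact that the hypothesis on $\G$ is symmetric under exchanging $\vec{B}$ and $\vec{C}$ (for each $i$ both $B_i \Impl C_i$ and $C_i \Impl B_i$, or both of their boxed forms, are assumed). Consequently the conclusion $A[\vec{B}/\vec{X}] \Lequiv A[\vec{C}/\vec{X}]$ is itself symmetric, so at each step it suffices to construct both implications, using symmetry to halve the work where convenient. The base case is when $A$ is a {\tvariant}: then $A[\vec{B}/\vec{X}]$ and $A[\vec{C}/\vec{X}]$ are {\tvariant}s by Proposition~\ref{tvariant-subst1}, hence each is derivable outright by Proposition~\ref{LAm-tvariant}, and $A[\vec{B}/\vec{X}] \Lequiv A[\vec{C}/\vec{X}]$ follows by \r{{\Impl}\mbox{I}} together with weakening (Proposition~\ref{LAm-weakening}). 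Thus in every remaining case I may assume $A$ is not a {\tvariant}.

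The variable and implication cases are routine. If $A = X_i$, then $A$ is not proper in $X_i$, so clause (b) cannot apply and clause (a) must hold; hence $\{B_i \Impl C_i,\,C_i \Impl B_i\} \subseteq \G$ and the conclusion is immediate by \r{\mbox{assump}}. If $A = Y$ with $Y \notin \{\vec{X}\}$, both instances equal $Y$ and $Y \Lequiv Y$ is trivial. If $A = A_1 \Impl A_2$ and $A$ is not a {\tvariant}, then $A_2$ is not a {\tvariant} by Proposition~\ref{Impl-tvariant}, so $A$ proper in $X_i$ forces both $A_1$ and $A_2$ proper in $X_i$; the hypotheses therefore transfer verbatim to $A_1$ and $A_2$, and the induction hypothesis gives $A_1[\vec{B}/\vec{X}] \Lequiv A_1[\vec{C}/\vec{X}]$ and $A_2[\vec{B}/\vec{X}] \Lequiv A_2[\vec{C}/\vec{X}]$. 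A standard propositional derivation using only \r{\mbox{assump}}, \r{{\Impl}\mbox{I}} and \r{{\Impl}\mbox{E}} then combines these into the equivalence of the two function types.

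For the modal case $A = \O A'$, the point is that $\O A'$ is proper in every variable, so all indices can be handled uniformly in boxed form: for each $i$ both $\O(B_i \Impl C_i)$ and $\O(C_i \Impl B_i)$ are derivable from $\G$, either because they already lie in $\G$ (clause (b)) or because the plain versions do and \r{\mbox{approx}} applies (clause (a)). Writing $\G_1 = \{B_i \Impl C_i,\,C_i \Impl B_i : 1 \le i \le n\}$, I would apply the induction hypothesis to $A'$ with context $\G_1$ — where clause (a) holds at every index, so properness of $A'$ is never needed — obtaining $\LAmtyp{\G_1}{A'[\vec{B}/\vec{X}] \Lequiv A'[\vec{C}/\vec{X}]}$. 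Applying \r{\mbox{nec}} and then \r{\mbox{\bf K}} to each of the two implications yields $\LAmtyp{\O\G_1}{\O A'[\vec{B}/\vec{X}] \Lequiv \O A'[\vec{C}/\vec{X}]}$; since every member of $\O\G_1$ is derivable from $\G$, Proposition~\ref{LAm-weakening} carries this back to $\G$.

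The fixed-point case $A = \fix{Y}A'$ is the main obstacle, and it is where Proposition~\ref{Lequiv-Y} is essential. Taking $Y$ fresh (so $Y \notin \{\vec{X}\} \cup \bigcup_i(\FTV{B_i} \cup \FTV{C_i})$) and writing $P = A'[\vec{B}/\vec{X}]$, $Q = A'[\vec{C}/\vec{X}]$, it suffices by Proposition~\ref{Lequiv-Y} to derive $\fix{Y}P \Lequiv \fix{Y}Q$ from the enlarged context $\G' = \G \cup \{\O(\fix{Y}P \Impl \fix{Y}Q),\,\O(\fix{Y}Q \Impl \fix{Y}P)\}$. Since $\fix{Y}P \Lequiv P[\fix{Y}P/Y]$ and $\fix{Y}Q \Lequiv Q[\fix{Y}Q/Y]$ by \r{\mbox{fold}}/\r{\mbox{unfold}}, transitivity reduces the goal to $A'[\vec{B}/\vec{X},\,\fix{Y}P/Y] \Lequiv A'[\vec{C}/\vec{X},\,\fix{Y}Q/Y]$. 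This is precisely the induction hypothesis applied to $A'$, of strictly smaller height, with the substitution lists extended by the fresh variable $Y$ carrying $\fix{Y}P$ and $\fix{Y}Q$: for the original variables the hypotheses descend from $\G \subseteq \G'$ (using that $\fix{Y}A'$ proper in $X_i$ gives $A'$ proper in $X_i$), and for $Y$ the two boxed formulas belong to $\G'$ by construction while $A'$ is proper in $Y$ because $\fix{Y}A' \in \TE$, so clause (b) applies at $Y$. The care needed here is to check that this enlarged invocation of the induction hypothesis is well posed — the freshness of $Y$ makes each $B_i$, $C_i$ proper in $Y$ via Proposition~\ref{etv-proper}, so $\fix{Y}P$ and $\fix{Y}Q$ are genuine type expressions by Proposition~\ref{proper-subst1} — and to observe that Proposition~\ref{Lequiv-Y} is exactly what licenses assuming the boxed form of the conclusion before folding.
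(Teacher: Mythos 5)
Your proposal is correct and follows essentially the same route as the paper's proof: induction on $h(A)$ with the {\tvariant} base case, the same treatment of the modal case (boxing the clause-(a) assumptions via \r{\mbox{approx}}, then \r{\mbox{nec}} and \r{\mbox{\bf K}}, then weakening back to $\G$), and the same key move in the fixed-point case of invoking the induction hypothesis with the substitution extended by the fresh $Y$ under clause (b) and then discharging the two boxed assumptions with Proposition~\ref{Lequiv-Y}.
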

\begin{proof}
By induction on \(h(A)\), and by cases on the form of \(A\).
If \(A\) is a {\tvariant}, then so are both
\(A[\vec{B}/X]\) and \(A[\vec{C}/X]\) by Proposition~\ref{tvariant-subst1};
and hence, it is trivial by Proposition~\ref{LAm-tvariant}.
Therefore, we only consider the case that \(A\) is not.

\Case{\(A = Y\) for some \(Y\).}
Trivial if \(Y \not\in \{\,X_1,\,X_2,\,\ldots,\,X_n\,\}\).
If \(Y = X_i\) for some \(i\), then
\(A[\vec{B}/\vec{X}] = B_i\) and \(A[\vec{C}/\vec{X}] = C_i\); and hence,
\(\LAmtyp{\G}{A[\vec{B}/\vec{X}] \Lequiv A[\vec{C}/\vec{X}]}\)
from (a) since \(A\) is not proper in \(X_i\).

\Case{\(A = \O D\) for some \(D\).}
Let \(\G'= \{\,B_1 \Impl C_1,\,C_1 \Impl B_1,\,
    B_2 \Impl C_2,\,C_2 \Impl B_2,\,\ldots,\,
    B_n \Impl C_n,\,C_n \Impl B_n\,\}\).
By induction hypothesis,
\(\LAmtyp{\G\cup\G'}{D[\vec{B}/\vec{X}] \Lequiv D[\vec{C}/\vec{X}]}\);
and hence, both
\(\typ{\O \G\cup\O\G'}{\O (D[\vec{B}/\vec{X}] \Impl D[\vec{C}/\vec{X}])}\) and
\(\typ{\O \G\cup\O\G'}{\O (D[\vec{C}/\vec{X}] \Impl D[\vec{B}/\vec{X}])}\)
are derivable by \r{\mbox{nec}}; and so are
\(\typ{\O \G\cup\O\G'}{A[\vec{B}/\vec{X}] \Impl A[\vec{C}/\vec{X}]}\) and
\(\typ{\O \G\cup\O\G'}{A[\vec{C}/\vec{X}] \Impl A[\vec{B}/\vec{X}]}\)
by \r{\mbox{\bf K}}.
Therefore, since \(\LAmtyp{\G}{\O \G\cup\O \G'}\) by \r{\mbox{approx}},
\(\LAmtyp{\G}{A[\vec{B}/\vec{X}] \Lequiv A[\vec{C}/\vec{X}]}\)
by Proposition~\ref{LAm-weakening}.

\Case{\(A = D \Impl E\) for some \(D\) and \(E\).}
Since \(A\) is not a {\tvariant},
\(A\) is proper in \(X_i\) if and only if so are \(D\) and \(E\).
Therefore,
\(\LAmtyp{\G}{D[\vec{B}/\vec{X}] \Lequiv D[\vec{C}/\vec{X}]}\) and
\(\LAmtyp{\G}{E[\vec{B}/\vec{X}] \Lequiv E[\vec{C}/\vec{X}]}\)
by induction hypothesis; and hence,
\(\LAmtyp{\G}{A[\vec{B}/\vec{X}] \Lequiv A[\vec{C}/\vec{X}]}\).

\Case{\(A = \fix{Y}D\) for some \(Y\) and \(D\).}
We can assume that \(Y \not\in \FTV{\G} \cup \{\,\vec{X}\,\}\)
without loss of generality.
\begin{Eqnarray*}
    A[\vec{B}/\vec{X}] &=& \fix{Y}D[\vec{B}/\vec{X}] \\
	&\mskip-20mu\LAmequiv& D[\vec{B}/\vec{X}][A[\vec{B}/\vec{X}]/Y]
	    & (by \r{\mbox{fold}} and \r{\mbox{unfold}}) \\
	&=& D[\vec{B}/\vec{X},A[\vec{B}/\vec{X}]/Y]
	    & (since \(Y \not\in \FTV{\G} \cup \{\,\vec{X}\,\}\))
\end{Eqnarray*}
Similarly, we get
\(A[\vec{C}/\vec{X}] \LAmequiv D[\vec{C}/\vec{X},A[\vec{C}/\vec{X}]/Y]\).
On the other hand, by Definition~\ref{proper-def},
\(A\) is proper in \(X_i\) if and only if
so is \(D\), since \(A\) is not a {\tvariant}.
Hence, by induction hypothesis,
\[
    \LAmtyp{\G \cup \{\,\O(A[\vec{B}/\vec{X}] \Impl A[\vec{C}/\vec{X}]),\,
	\O(A[\vec{C}/\vec{X}] \Impl A[\vec{B}/\vec{X}])\,\}}%
    {D[\vec{B}/\vec{X},A[\vec{B}/\vec{X}]/Y]
	\Lequiv D[\vec{C}/\vec{X},A[\vec{C}/\vec{X}]/Y]}.
\]
Therefore,
\(\LAmtyp{\G \cup \{\,\O(A[\vec{B}/\vec{X}] \Impl A[\vec{C}/\vec{X}]),\,
	\O(A[\vec{C}/\vec{X}] \Impl A[\vec{B}/\vec{X}])\,\}}%
    {A[\vec{B}/\vec{X}] \Lequiv A[\vec{C}/\vec{X}]}\); and hence,
\(\LAmtyp{\G}{A[\vec{B}/\vec{X}] \Lequiv A[\vec{C}/\vec{X}]}\)
by Proposition~\ref{Lequiv-Y}.
\qed\CHECKED{2014/06/25, 07/17}
\end{proof}

\begin{lemma}\label{eqtyp-Lequiv}
If\/ \(A \eqtyp B\), then \(A \LAmequiv B\).
\end{lemma}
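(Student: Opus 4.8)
The plan is to argue by induction on the derivation of $A \eqtyp B$ and by cases on the last rule applied in Definition~\ref{eqtyp-def}, showing in each case that both $\typ{}{A \Impl B}$ and $\typ{}{B \Impl A}$ are derivable in {\LAm}, i.e. that $A \LAmequiv B$. It is convenient first to record that $\LAmequiv$ is reflexive, symmetric, transitive, and a congruence for $\O$ and $\Impl$; these facts dispose of the rules \r{{\eqtyp}\mbox{-reflex}}, \r{{\eqtyp}\mbox{-symm}}, \r{{\eqtyp}\mbox{-trans}}, \r{{\eqtyp}\mbox{-}{\O}} and \r{{\eqtyp}\mbox{-}{\Impl}} outright. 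Reflexivity is immediate from \r{\mbox{assump}} and \r{\Impl\mbox{I}}; symmetry is built into the definition of $\Lequiv$; transitivity and the two congruences follow by composing the available implications with \r{\Impl\mbox{I}} and \r{\Impl\mbox{E}}, using \r{\mbox{nec}} together with the derived rule \r{\mbox{\bf K}} to push an equivalence under $\O$.

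For the remaining three rules: the rule \r{{\eqtyp}\mbox{-}{\Impl}{\t}} is handled by Proposition~\ref{LAm-tvariant}, since both $A \Impl \t$ and $\t$ are \tvariant{}s (by Proposition~\ref{Impl-tvariant} and the fact that $\t$ is one), so $\typ{}{A \Impl \t}$ and $\typ{}{\t}$ are derivable; weakening each of these by Proposition~\ref{LAm-weakening} and applying \r{\Impl\mbox{I}} yields both required implications. The rule \r{{\eqtyp}\mbox{-fix}} is essentially \r{\mbox{fold}}/\r{\mbox{unfold}}: starting from $\typ{\{\,A[\fix{X}A/X]\,\}}{A[\fix{X}A/X]}$, one derivation applies \r{\mbox{fold}} and the other leaves the formula alone, after which \r{\Impl\mbox{I}} closes both, giving $\fix{X}A \LAmequiv A[\fix{X}A/X]$.

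The main obstacle is the uniqueness rule \r{{\eqtyp}\mbox{-uniq}}, where $A \eqtyp C[A/X]$ with $C$ proper in $X$ yields $A \eqtyp \fix{X}C$; here the induction hypothesis supplies only $A \LAmequiv C[A/X]$. The plan is to work under the assumption set $\G = \{\,\O(A \Impl \fix{X}C),\,\O(\fix{X}C \Impl A)\,\}$. Since $C$ is proper in $X$, Proposition~\ref{Lequiv-subst} applied with $C$ as the template and the pair $A$, $\fix{X}C$ substituted for $X$ (using clause (b) of its hypotheses) gives $\LAmtyp{\G}{C[A/X] \Lequiv C[\fix{X}C/X]}$. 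Weakening the induction hypothesis $A \LAmequiv C[A/X]$ and the fold/unfold equivalence $C[\fix{X}C/X] \LAmequiv \fix{X}C$ into $\G$ by Proposition~\ref{LAm-weakening}, and chaining the three equivalences by transitivity, produces $\LAmtyp{\G}{A \Lequiv \fix{X}C}$. Finally Proposition~\ref{Lequiv-Y} discharges the two $\O$-assumptions in $\G$, yielding $\LAmtyp{}{A \Lequiv \fix{X}C}$, that is $A \LAmequiv \fix{X}C$. The only delicate point is lining up the variable names when invoking Proposition~\ref{Lequiv-subst} and verifying that its properness clause is satisfied, which it is precisely because $C$ is proper in $X$; everything else reduces to routine derivation-building in {\LAm}.
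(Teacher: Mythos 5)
Your proposal is correct and follows essentially the same route as the paper: the routine cases are dispatched by the algebraic properties of $\LAmequiv$ and Proposition~\ref{LAm-tvariant}, and the crucial \r{{\eqtyp}\mbox{-uniq}} case is handled exactly as in the paper by applying Proposition~\ref{Lequiv-subst} under the assumptions $\O(A \Impl \fix{X}C)$ and $\O(\fix{X}C \Impl A)$, chaining with the induction hypothesis and \r{\mbox{fold}}/\r{\mbox{unfold}}, and then discharging via Proposition~\ref{Lequiv-Y}.
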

\begin{proof}
By induction on the derivation of \(A \eqtyp B\),
and by cases on the rule applied last.
Most cases are straightforward.
Use Proposition~\ref{LAm-tvariant}
for the case of \r{{\eqtyp}\mbox{-}{\Impl}{\t}}.
The only interesting case is \r{{\eqtyp}\mbox{-uniq}}.
In this case, \(B = \fix{X}C\) for some \(X\) and \(C\) such that
\(A \eqtyp C[A/X]\) and \(C\) is proper in \(X\).
By Proposition~\ref{Lequiv-subst},
\[
\LAmtyp{\{\,\O (A \Impl B),\,\O(B \Impl A)\,\}}{C[A/X] \Lequiv C[B/X]}.
\]
Since \(A \LAmequiv C[A/X]\) by induction hypothesis,
and since \(B \LAmequiv C[B/X]\) by \r{\mbox{fold}} and \r{\mbox{unfold}},
we get
\[
\LAmtyp{\{\,\O (A \Impl B),\,\O(B \Impl A)\,\}}{A \Lequiv B}.
\]
Therefore, \(A \LAmequiv B\) by Proposition~\ref{Lequiv-Y}.
\qed\CHECKED{2014/06/25, 07/17}
\end{proof}

\Subsection{Kripke semantics of {\LAm}}

Now we turn to the semantics of {\LAm}, and proceed to show
the completeness.

\begin{definition}[{\iGL}-frames and {\LA}-frames]
    \ilabel{iGL-frame-def}{iGL-frame@\protect\iGL-frames}
    \ilabel*{frames!iGL-frame@\protect\iGL}
    \ilabel{LA-frame-def}{LA-frame@\protect\LA-frames}
    \ilabel*{frames!LA-frame@\protect\LA}
    \ilabel{iwf-frame-def}{intuitionistic well-founded frames}
    \ilabel*{frames!LA-frame@intuitionistic well-founded}
An\/ {\em intuitionistic well-founded frame}\/ is
a triple \(\tuple{\W,\,{\acc},\,{\mathrel{R}}}\),
which consists of a non-empty set \(\W\) of possible worlds and
two accessibility relations \({\acc}\) and \(\mathrel{R}\)
on \(\W\) such that
\begin{Enumerate}
    \item \(\acc\) is a (conversely) well-founded
	binary relation on \(\W\),
    \item \(\mathrel{R\,}\) is a transitive and reflexive
	binary relation on \(\W\), and
    \item \(p \mathrel{R} q \acc r\) implies \(p \acc r\).
\end{Enumerate}
An\/ {\em {\iGL}-frame}\/ is
an\/ intuitionistic well-founded frame that satisfies
the following condition.
\begin{Enumerate}
    \item[4.] \(\acc\) is transitive.
\end{Enumerate}
An\/ {\em {\LA}-frame}\/ is
an\/ intuitionistic well-founded frame that satisfies
the following two conditions.
\begin{Enumerate}
    \item[5.] \(p \acc q\) implies \(p \mathrel{R} q\).
    \item[6.] if \(p \acc q \mathrel{R} q'\), then there exists
	some \(r \in \W\) such that
	\begin{Enumerate}
	\item[(a)] \(p \mathrel{R} r \acc q'\), and
	\item[(b)] \(r \acc s\) implies \(q' \mathrel{R} s\)
	    for every \(s \in \W\).
	\end{Enumerate}
\end{Enumerate}
\end{definition}
Conditions~1 through 4 constitute the class of frames, to which
the intuitionistic variant of the logic of provability
is sound and complete (cf. \cite{ursini-au79,ursini-sl79}).
Condition~5 means that the interpretation is hereditary
with respect to the well-founded relation \(\acc\) as well as \(\mathrel{R}\).
Condition~6 corresponds to
Condition~2 of Definition~\ref{wf-frame-def}, which indicates that
\(\acc\) is {\em locally} linear.

\begin{definition}[Kripke semantics of the logics]
    \ilabel{LAm-semantics}{semantics!LA-mu@\protect\LAm}
Let \(\tuple{\W,\,{\acc},\,\mathrel{R}}\) be
an intuitionistic well-founded frame.
A mapping \(f\)
from \(\W\) to \(\{\,\mathbf{t},\,\mathbf{f}\,\}\) is
{\em hereditary} if and only if
\[
    \mbox{if}~ p \mathrel{R} q,~\mbox{then}~
        f(p) = \mathbf{t} ~\,\mbox{implies}\,~ f(q) = \mathbf{t}.
\]
A mapping \(\ttenv\) that assigns a mapping
(from \(\W\) to \(\{\,\mathbf{t},\,\mathbf{f}\,\}\)) to each
propositional variable, i.e., type variable, is called {\em a valuation}.
We say a valuation \(\ttenv\) is {\em hereditary} if and only if
\(\ttenv(X)\) is hereditary for every \(X\).
In this paper, only hereditary valuations will be considered.
We define a hereditary mapping \(\Il{A}^\ttenv\)
from \(\W\) to \(\{\,\mathbf{t},\,\mathbf{f}\,\}\)
for each formula \(A\) by extending \(\ttenv\) as follows,
    \ilabel{logic-models-def}%
	{0 models-logic@$\protect\models^{\protect\ttenv}_p\,A$}%
where we use \({\models}^{\ttenv}_p\,A\) to denote that
\(\Il{A}^{\ttenv}(p) = \mathbf{t}\).
\[
\renewcommand\arraystretch{1.5}
\begin{array}{l@{\kern10pt}l}
{\models}^\ttenv_p\,{A} & \mbox{(\(A\) is a {\tvariant})} \\
{\models}^\ttenv_p\,{X} \mathrel{~\mbox{iff}~~} \ttenv(X)(p) = \mathbf{t} \\
{\models}^\ttenv_p\,{\O A} \mathrel{~\mbox{iff}~~}
	{\models}^\ttenv_q\,{A}~~\mbox{for every}~q~\mbox{such that}~p \acc q
    & \mbox{(\(\,\O A\) is not a {\tvariant})} \\
{\models}^\ttenv_p\,{A \Impl B} \mathrel{~\mbox{iff}~~}
    {\models}^\ttenv_q\,{A}~~\mbox{implies}~~ {\models}^\ttenv_q\,{B}~
	\mbox{for every}~q~\mbox{such that}~p \mathrel{R} q
	    \ifnarrow\mskip-220mu\else\mskip-120mu\fi \\[-4pt]
    & \mbox{(\(A \Impl B\) is not a {\tvariant})} \\
{\models}^\ttenv_p\,{\fix{X}A} \mathrel{~\mbox{iff}~~}
	{\models}^\ttenv_p\,{A[\fix{X}A/X]}
    & \mbox{(\(\fix{X}A\) is not a {\tvariant})} \\
\end{array}
\]
\end{definition}
The interpretation \({\models}^{\ttenv}_p\,{A}\) is defined by induction
on \(\pair{p}{r(A)}\), where we use the ordering \(\sqsupset\)
defined as follows.
\[
    \pair{p}{A} \sqsupset \pair{q}{B}
	~~\mbox{iff}~~
	    p \mathrel{R} q ~\mbox{and}~r(A) > r(B),~\mbox{or}\,~
	    p \acc q.
\]
Note that \(\sqsupset\) is also well-founded, since
so is \(\acc\), and since Condition~3 of Definition~\ref{iGL-frame-def}
is satisfied.
    \ilabel{logic-relative-models-def}{0 models-logic relative@%
	$\protect\G\,\protect\models^{\protect\ttenv}_p\,A$}%
We write \(\G \models^\ttenv_p\, A\) if and only if
\({\models}^\ttenv_p\, A\) whenever
\({\models}^\ttenv_p\, B\) for every \(B \in \G\strut\).
It is easy to verify that for every \(p,\, q \in \W\), \(\ttenv\) and \(A\),
\begin{quote}
    if \(p \mathrel{R} q\), then
    \({\models}^\ttenv_p\,{A}\) implies \({\models}^\ttenv_q\,{A}\).
\end{quote}
By a discussion similar to Theorem~\ref{soundness-theorem},
we can also observe the soundness of {\lA} as a modal logic
with respect to this semantics of formulae.
The proof proceeds as follows.
In the sequel, we write \({\models}^\ttenv\,{A}\)
if and only if \({\models}^\ttenv_p\,{A}\) for every \(p \in \W\).

\begin{proposition}\pushlabel
Let \(\tuple{\W,\,{\acc},\,\mathrel{R}}\) be an\/ {\LA}-frame, and
\(\ttenv\) a hereditary valuation.
\begin{Enumerate}
\item \itemlabel{logic-subst-env}
    \({\models}^\ttenv_p\,{A[B/X]}\)
    iff\/ \({\models}^{\ttenv[\Il{B}^\ttenv/X]}_p\,{A}\).
\item \itemlabel{logic-proper-subst-lemma}
    Let\/ \(\strut\ttenv'\) be a hereditary valuation such that\/
    \(\ttenv(X)(q) = \ttenv'(X)(q)\)\/ for every \(X\) and \(q \opacc p\).
    If\/ for every \(X\), either (a) \(A\) is proper in \(X\),
    or (b)\/ \(\ttenv(X)(p)  = \ttenv'(X)(p)\),
    then \({\models}^\ttenv_p\,{A}\) iff\/ \({\models}^{\ttenv'}_p\,{A}\).
\item \itemlabel{logic-peqtyp-soundness}
    If\/ \(A \peqtyp B\), then
    \({\models}^\ttenv\,{A}\) iff\/ \({\models}^\ttenv\,{B}\).
\item \itemlabel{logic-psubtyp-soundness}
    Let \(\g =
	\{\,X_1\psubtyp Y_1,\,X_2\psubtyp Y_2,\,\ldots,\,X_n\psubtyp Y_n\,\}\).
    If\/ \(\subt{\g}{A \psubtyp B}\) is derivable, and
    \(\{\,X_i\,\}\mathrel{\models^\ttenv}{Y_i}\)
    for every \(i \in \{\,1,\,2,\,\ldots\,n\,\}\),
    then \(\{\,A\,\}\mathrel{\models^\ttenv}B\).
\item \itemlabel{lA-logic-soundness}
    If\/ \(\typ{\{\,x_1 : A_1,\,\ldots,\,x_n : A_n\,\}}{M:B}\) is derivable
    in {\lA}, then
    \(\{\,A_1,\,\ldots,\,A_n\,\}\mathrel{\models^\ttenv}B\).
\end{Enumerate}
\end{proposition}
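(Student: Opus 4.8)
The plan is to prove the five items in the order listed, exploiting the fact that each is the exact truth-value analogue of a statement already established for the realizability interpretation \(\II\): Item~\itemref{logic-subst-env} corresponds to Proposition~\ref{rlz-subst-env}, Item~\itemref{logic-proper-subst-lemma} to Lemma~\ref{rlz-proper-subst-lemma}, Item~\itemref{logic-peqtyp-soundness} to Theorems~\ref{eqtyp-soundness} and~\ref{peqtyp-soundness}, Item~\itemref{logic-psubtyp-soundness} to Theorem~\ref{psubtyp-soundness}, and Item~\itemref{lA-logic-soundness} to Theorem~\ref{soundness-theorem}. The interpretation \(\Il{\cdot}^\ttenv\) is given by literally the same clauses as \(\I{\cdot}^\tenv\), with \(\mathbf{t}\)/\(\mathbf{f}\) playing the roles of \(\V\)/\(\{\}\) and with \(\mathrel{R}\) replacing \(\tacc\) in the \(\Impl\)-clause, so the earlier arguments transport almost verbatim; I would only spell out the places where the presence of the second accessibility relation \(\mathrel{R}\) forces a change.

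First I would prove Item~\itemref{logic-subst-env} by induction on the lexicographic ordering of \(\pair{p}{r(A)}\) and by cases on the form of \(A\), the recursive-type case being justified by \(r(A[\fix{X}A/X]) < r(\fix{X}A)\) (Proposition~\ref{rank-fix}) and the {\tvariant} case being immediate since both sides hold vacuously. Item~\itemref{logic-proper-subst-lemma} follows by the same induction; the only delicate points, as in Lemma~\ref{rlz-proper-subst-lemma}, are that in the variable case \(A = Y\) properness in \(Y\) fails so alternative~(b) must be used, and that in the recursive case \(C[\fix{Y}C/Y]\) is again proper in \(X\) by Definition~\ref{proper-def} and Proposition~\ref{proper-subst1}. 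For Item~\itemref{logic-peqtyp-soundness} I would induct on the derivation of \(A \peqtyp B\); every rule shared with \(\eqtyp\) is handled as in Theorem~\ref{eqtyp-soundness}, with \r{{\eqtyp}\mbox{-uniq}} treated by a nested induction on \(p\) that uses Items~\itemref{logic-subst-env} and~\itemref{logic-proper-subst-lemma}. The one genuinely new case is \r{{\peqtyp}\mbox{-{\bf K}/{\bf L}}}, for which I would prove the Kripke counterpart of Theorem~\ref{peqtyp-KL-soundness}, invoking Conditions~5 and~6 of the {\LA}-frame definition (Definition~\ref{LA-frame-def}) at exactly the point where the realizability proof used the local-linearity Condition~2 of Definition~\ref{wf-frame-def}.

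Item~\itemref{logic-psubtyp-soundness} I would then prove by induction on the derivation of \(\subt{\g}{A \psubtyp B}\) and cases on the last rule, reusing Items~\itemref{logic-subst-env}--\itemref{logic-peqtyp-soundness} in the same way that Theorem~\ref{psubtyp-soundness} reuses Proposition~\ref{rlz-subst-env}, Lemma~\ref{rlz-proper-subst-lemma} and Theorem~\ref{peqtyp-soundness}; the Amber rule \r{{\rsubtyp}\mbox{-}{\fx}} again calls for an inner induction on \(p\) together with a locally defined auxiliary valuation built from \(\Il{A}^\ttenv\) and \(\Il{B}^\ttenv\).

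Finally, Item~\itemref{lA-logic-soundness} is proved by induction on the {\lA}-derivation, and I expect the \r{\mbox{shift}}-rule to be the main obstacle. Following Theorem~\ref{soundness-theorem}, for a given \(p\) I would adjoin a fresh world \(\star\) with \(\star \acc' p\); but since an intuitionistic well-founded frame carries the two relations \(\acc\) and \(\mathrel{R}\), I must extend both to \(\W \cup \{\star\}\) and then verify all six clauses of Definition~\ref{LA-frame-def} for the enlarged structure, rather than the single well-foundedness/local-linearity check needed in the realizability setting. The delicate choices are how \(\star\) should relate to the old worlds under \(\mathrel{R'}\) — it should \(\mathrel{R'}\)-see exactly \(p\) together with its \(\mathrel{R}\)-successors, so that the hereditariness Condition~5 and the interaction Condition~3 are preserved — and the confirmation of the local-linearity Condition~6 at \(\star\). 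Once the enlarged structure is shown to be an {\LA}-frame, the value of \(\O A\) at \(\star\) reduces to that of \(A\) at \(p\), and the induction hypothesis closes the case; the remaining rules (\r{\mbox{var}}, \r{\t}, \r{\rsubtyp} through Item~\itemref{logic-psubtyp-soundness}, \r{{\Impl}\mbox{I}} and \r{{\Impl}\mbox{E}}) go through on the template of Theorem~\ref{soundness-theorem} with \(\mathrel{R}\) in place of \(\tacc\).
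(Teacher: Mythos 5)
Items~\itemref{logic-subst-env} through \itemref{logic-psubtyp-soundness} of your plan track the paper's proof exactly (the paper literally says these are "quite parallel" to Proposition~\ref{rlz-subst-env}, Lemma~\ref{rlz-proper-subst-lemma} and Theorems~\ref{peqtyp-soundness}--\ref{psubtyp-soundness}, with the same inductions), and you correctly isolate the \r{\mbox{shift}} case of Item~\itemref{lA-logic-soundness} as the one place where the two-relation setting forces new work. But your construction for that case has a genuine gap: adjoining a \emph{single} fresh world \(\star\) with \(\star \acc' p\) cannot in general be completed to an {\LA}-frame, and the "confirmation of Condition~6 at \(\star\)" that you defer will fail. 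Concretely, take \(\W = \{\,p,\,q'\,\}\) with \(\acc = \{\}\) and \(\mathrel{R}\) the reflexive closure of \(p \mathrel{R} q'\); this is an {\LA}-frame. Whatever you choose for \(\acc'\), Condition~3 together with your requirement \(\star \mathrel{R'} q'\) and the need for \(\models_\star \O A_i\) confine \(\star\)'s \(\acc'\)-successors to \(\mathrel{R}\)-successors of \(p\). Now Condition~6 applied to \(\star \acc' p \mathrel{R'} q'\) demands some \(r\) with \(\star \mathrel{R'} r \acc' q'\); the only candidates are \(\star\) and the old worlds, no old world \(\acc'\)-sees \(q'\), and if \(\star \acc' q'\) then clause (b) forces \(q' \mathrel{R'} s\) for \emph{every} \(\acc'\)-successor \(s\) of \(\star\) --- which is false as soon as \(\star\) must also see some world \(\mathrel{R}\)-incomparable with \(q'\) (and it must see \(p\) itself, with \(q' \not\mathrel{R} p\)). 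Since the induction hypothesis for the premise \(\typ{\O\G}{M:\O B}\) is only available over {\LA}-frames, you cannot drop Condition~6.

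The paper avoids this by adding a fresh \(\acc'\)-predecessor for \emph{every} world rather than just one for \(p\): it sets \(\W' = \zfset{\pair{a}{q}}{a \in \{\,0,\,1\,\},\ q \in \W}\), with \(\pair{a}{q} \acc' \pair{b}{r}\) iff (\(a=b=0\) and \(q \acc r\)) or (\(a=1\), \(b=0\) and \(q \mathrel{R} r\)), \(\pair{a}{q} \mathrel{R'} \pair{b}{r}\) iff \(a \ge b\) and \(q \mathrel{R} r\), and \(\ttenv'(X)(\pair{a}{q}) = \ttenv(X)(q)\). Condition~6 at a new world \(\pair{1}{p'}\) with \(\pair{1}{p'} \acc' \pair{0}{q} \mathrel{R'} \pair{0}{q'}\) is then witnessed by \(r = \pair{1}{q'}\), the fresh predecessor tailored to \(q'\) --- exactly the witness your single \(\star\) cannot supply. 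With that construction the copy \(\pair{0}{q}\) interprets every formula as \(q\) did, \(\models_{\pair{1}{p}} \O A_i\) holds, and the case closes as you describe. So the overall strategy is right, but the shift case needs the doubled frame, not a one-point extension.
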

\begin{proof}
The proofs are quite parallel to those
of Proposition~\ref{rlz-subst-env}, Lemma~\ref{rlz-proper-subst-lemma},
Theorems~\ref{peqtyp-soundness},
\ref{psubtyp-soundness} and \ref{soundness-theorem}, respectively.
The proofs of Items~\itemref{logic-subst-env}
and \itemref{logic-proper-subst-lemma} proceed
by induction on the ordering \(\sqsupset\) of \(\pair{p}{r(A)}\).
Those of Items~\itemref{logic-peqtyp-soundness},
\itemref{logic-psubtyp-soundness} and \itemref{lA-logic-soundness} are
by induction on the derivations.
\ifdetail

\paragraph{Proof of \protect\itemref{logic-subst-env}}
Induction step proceeds by cases on the form of \(A\), where
the proof does not depend on the hereditarity of \(\ttenv\).
Let \(\ttenv' = \ttenv[\Il{B}^\ttenv/X]\).
If \(A\) is a {\tvariant}, then so is \(A[B/X]\)
by Proposition~\ref{tvariant-subst1}; and hence,
both \({\models}^\ttenv\,{A[B/X]}\) and
    \({\models}^{\ttenv[\Il{B}^\ttenv/X]}\,{A}\)
by Definition~\ref{LAm-semantics}.
Therefore, we only consider the case when \(A\) is not.

\Case{\(A = Y\) for some \(Y\).}
If \(Y = X\), then \(A[B/X] = B\); and hence,
    \(\Il{A[B/X]}^\ttenv = \Il{B}^\ttenv = \Il{X}^{\ttenv'}\).
Otherwise, \(Y \not= X\) and \(A[B/X] = Y\).
Hence,
    \(\Il{A[B/X]}^\ttenv = \Il{Y}^\ttenv = \Il{Y}^{\ttenv'}\).

\Case{\(A = \O C\) for some \(C\).}
In this case,
\begin{Eqnarray*}
    {\models}^\ttenv_p\,{A[B/X]}
    &~\mbox{iff}~& {\models}^\ttenv_p\,{\O(C[B/X])} \\
    &~\mbox{iff}~& {\models}^\ttenv_q\,{C[B/X]}~
		    \mbox{for every}~q~\mbox{such that}~p \acc q
	& (by Definition~\ref{LAm-semantics}) \\
    &~\mbox{iff}~& {\models}^{\ttenv'}_q\,{C}~
		    \mbox{for every}~q~\mbox{such that}~p \acc q
	& (by induction hypothesis) \\
    &~\mbox{iff}~& {\models}^{\ttenv'}_p\,{\O C}
	& (by Definition~\ref{LAm-semantics}).
\end{Eqnarray*}

\Case{\(A = C \Impl D\) for some \(C\) and \(D\).}
In this case,
\begin{Eqnarray*}
    {\models}^\ttenv_p\,{A[B/X]} \mathrel{~\mbox{iff}~}
	{\models}^\ttenv_p\,{C[B/X] \Impl D[B/X]} \mskip-280mu \\
    &~\mbox{iff}~& {\models}^\ttenv_q\,{C[B/X]}~\mbox{implies}~
		    {\models}^\ttenv_q\,{D[B/X]}~\mbox{for every}~q~
			\mbox{such that}~p \mathrel{R} q
	& (by Definition~\ref{LAm-semantics}) \\
    &~\mbox{iff}~& {\models}^{\ttenv'}_q\,{C}~\mbox{implies}~
		    {\models}^{\ttenv'}_q\,{D}~\mbox{for every}~q~
			\mbox{such that}~p \mathrel{R} q
	& (by induction hypothesis) \\
    &~\mbox{iff}~& {\models}^{\ttenv'}_p\,{C \Impl D}
	& (by Definition~\ref{LAm-semantics}).
\end{Eqnarray*}

\Case{\(A = \fix{Y}C\) for some \(Y\) and \(C\).}
We can assume that \(Y \not\in \FTV{B} \cup \{\,X\,\}\)
without loss of generality.
\begin{Eqnarray*}
    {\models}^\ttenv_p\,{A[B/X]}
    &~\mbox{iff}~&{\models}^\ttenv_p\,{\fix{Y}C[B/X]}
	& (since \(Y \not\in \FTV{B} \cup \{\,X\,\}\)) \\
    &~\mbox{iff}~& {\models}^\ttenv_p\,{C[B/X][\fix{Y}C[B/X]/Y]}
	& (by Definition~\ref{LAm-semantics}) \\
    &~\mbox{iff}~& {\models}^\ttenv_p\,{C[\fix{Y}C/Y][B/X]}
	& (since \(Y \not\in \FTV{B} \cup \{\,X\,\}\)) \\
    &~\mbox{iff}~& {\models}^{\ttenv'}_p\,{C[\fix{Y}C/Y]}
	& (by induction hypothesis) \\
    &~\mbox{iff}~& {\models}^{\ttenv'}_p\,{\fix{Y}C}
	& (by Definition~\ref{LAm-semantics}).
\end{Eqnarray*}
Note that \(r(C[\fix{Y}C]) < r(\fix{Y}C)\)
by Proposition~\ref{rank-fix},
since \(\fix{Y}C\) is not a {\tvariant}.

\paragraph{Proof of \protect\itemref{logic-proper-subst-lemma}}
Induction step proceeds by cases on the form of \(A\).
Suppose that (a) or (b) holds for every \(X \in \TV\).
If \(A\) is a {\tvariant}, then
\({\models}^\ttenv_p\,{A}\) and \({\models}^{\ttenv'}_p\,{A}\)
by Definition~\ref{LAm-semantics}.
Hence, we assume that \(A\) is not.

\Case{\(A = Y\) for some \(Y\).}
In this case, \(A\) is not proper in \(Y\).
Hence,
\(\Il{A}^\ttenv(p) = \ttenv(Y)(p) = \ttenv'(Y)(p) = \Il{A}^{\ttenv'}(p)\)
from (b).

\Case{\(A = \O A'\) for some \(A'\).}
Since \(\ttenv(X)(q) = \ttenv'(X)(q)\) for every \(X\)
and \(q \opacc p\),
we get that \({\models}^\ttenv_q\,{A'}\) iff \({\models}^{\ttenv'}_q\,{A'}\)
for every \(q \opacc p\) by induction hypothesis.
Therefore,
\begin{Eqnarray*}
    {\models}^\ttenv_p\,{\O A'}
    &~\mbox{iff}~& {\models}^\ttenv_q\,{A'}~
		    \mbox{for every}~q~\mbox{such that}~p \acc q
	& (by Definition~\ref{LAm-semantics}) \\
    &~\mbox{iff}~& {\models}^{\ttenv'}_q\,{A'}~
		    \mbox{for every}~q~\mbox{such that}~p \acc q \\
    &~\mbox{iff}~& {\models}^{\ttenv'}_p\,{\O A'}
	& (by Definition~\ref{LAm-semantics})
\end{Eqnarray*}

\Case{\(A = B \Impl C\) for some \(B\) and \(C\).}
Note that \(C\) is not a {\tvariant} since \(A\) is not.
Therefore, \(r(B)\), \(r(C) < r(A)\) and (a) implies both \(B\) and \(C\) are
also proper in \(X\).
Hence, by induction hypothesis, we get that
\({\models}^\ttenv_q\,{B}\) iff \({\models}^{\ttenv'}_q\,{B}\), and that
\({\models}^\ttenv_q\,{C}\) iff \({\models}^{\ttenv'}_q\,{C}\), for every \(q\)
such that \(p \mathrel{R} q\).
Therefore, \({\models}^\ttenv_p\,{B \Impl C}\) iff
\({\models}^{\ttenv'}_p\,{B \Impl C}\)
by Definition~\ref{LAm-semantics}.

\Case{\(A = \fix{Y}C\) for some \(Y\) and \(C\).}
We can assume that \(X \not= Y\) without loss of generality.
Note that \(r(C[\fix{Y}C/Y]) < r(\fix{Y}C)\)
by Proposition~\ref{rank-fix}, and that
\(C[\fix{Y}C/Y]\) is proper in \(X\) if so is \(A\)
by Definition~\ref{proper-def} and Proposition~\ref{proper-subst1}.
Therefore,
\begin{Eqnarray*}
{\models}^\ttenv_p\,{\fix{Y}C}
    &~\mbox{iff}~& {\models}^\ttenv_p\,{C[\fix{Y}C/Y]}
	& (by Definition~\ref{LAm-semantics}) \\
    &~\mbox{iff}~& {\models}^{\ttenv'}_p\,{C[\fix{Y}C/Y]}
	& (by induction hypothesis) \\
    &~\mbox{iff}~& {\models}^{\ttenv'}_p\,{\fix{Y}C}
	& (by Definition~\ref{LAm-semantics})
\end{Eqnarray*}

\paragraph{Proof of \protect\itemref{logic-peqtyp-soundness}}
By induction on the derivation of \(A \peqtyp B\),
and by cases on the last rule applied in the derivation.
Suppose that \(A \peqtyp B \).
If either \(A\) or \(B\) is a {\tvariant}, then so are both by
Proposition~\ref{geqtyp-tvariant}; and hence,
\({\models}^\ttenv\,{A}\) and \({\models}^\ttenv\,{B}\)
by Definition~\ref{LAm-semantics}.
Therefore, we assume that neither is a {\tvariant}.

\Cases{\r{{\eqtyp}\mbox{-reflex}},
    \r{{\eqtyp}\mbox{-symm}} and \r{{\eqtyp}\mbox{-trans}}.}
Trivial.

\Case{\r{{\eqtyp}\mbox{-}{\O}}.}
In this case, there exist some \(A'\) and \(B'\) such that
\(A = \O A'\), \(B = \O B'\) and \(A' \peqtyp B'\).
By induction hypothesis,
\({\models}^\ttenv\,{A'}\) iff \({\models}^\ttenv\,{B'}\).
Therefore,
\({\models}^\ttenv\,{\O A'}\) iff \({\models}^\ttenv\,{\O B'}\)
by Definition~\ref{LAm-semantics}.

\Case{\r{{\eqtyp}\mbox{-}{\Impl}}.}
Similar to the previous case.

\Case{\r{{\eqtyp}\mbox{-}{\Impl}{\t}}.}
Impossible because we assumed that neither \(A\) nor \(B\) is a {\tvariant}.

\Case{\r{{\eqtyp}\mbox{-fix}}.}
Obvious from Definition~\ref{LAm-semantics}.

\Case{\r{{\eqtyp}\mbox{-uniq}}.}
There exist some \(X\) and \(C\) such that
\(B = \fix{X}C\), \(A \peqtyp C[A/X]\) and \(C\) is proper in \(X\).
By induction hypothesis,
\begin{eqnarray}\label{logic-eqtyp-soundness-01}
    {\models}^{\ttenv'}\,{A} &~\mbox{iff}~& {\models}^{\ttenv'}\,{C[A/X]}~~
    \mbox{for every \(\ttenv'\)}.
\end{eqnarray}
We show that \({\models}^\ttenv_p\,{A}\) iff \({\models}^\ttenv_p\,{\fix{X}C}\)
for every \(p \in \W\) by induction on \(p\).
By induction hypothesis,
\begin{eqnarray}
\label{logic-eqtyp-soundness-02}
    {\models}^\ttenv_q\,{A} &~\mbox{iff}~& {\models}^\ttenv_q\,{\fix{X}C}
    ~~\mbox{for every \(q \opacc p\)}.
\end{eqnarray}
Therefore,
\begin{Eqnarray*}
{\models}^\ttenv_p\,{A}
    &~\mbox{iff}~& {\models}^\ttenv_p\,{C[A/X]}
	& (by (\ref{logic-eqtyp-soundness-01})) \\
    &~\mbox{iff}~& {\models}^{\ttenv[\Il{A}^\ttenv/X]}_p\,{C}
	& (by Item~\itemref{logic-subst-env} of this proposition) \\
    &~\mbox{iff}~& {\models}^{\ttenv[\Il{\fix{X}C}^\ttenv/X]}_p\,{C}
	& (by (\ref{logic-eqtyp-soundness-02})
	    and Item~\itemref{logic-proper-subst-lemma}
		of this proposition) \\
    &~\mbox{iff}~& {\models}^\ttenv_p\,{C[\fix{X}C/X]}
	& (by Item~\itemref{logic-subst-env} of this proposition) \\
    &~\mbox{iff}~& {\models}^\ttenv_p\,{\fix{X}C}
	& (by Definition~\ref{LAm-semantics}).
\end{Eqnarray*}

\Case{\r{{\peqtyp}\mbox{-{\bf K}/{\bf L}}}.}
In this case, \(A = \O(C \Impl D)\) and \(B = \O C \Impl \O D\) for some
\(C\) and \(D\).
First, suppose that
\({\models}^\ttenv_p\,{\O(C \Impl D)}\), i.e.,
\begin{eqnarray}
    \label{logic-K-soundness-01}
    {\models}^\ttenv_r\,{C}~~\mbox{implies}~~
    {\models}^\ttenv_r\,{D}~~\mbox{for every \(r\), whenever
    \(p \acc q \mathrel{R} r\) for some \(q\)}.
\end{eqnarray}
To show \({\models}^\ttenv_p\,{\O C \Impl \O D}\),
suppose also that \(p \mathrel{R} s\) and \({\models}^\ttenv_s\,{\O C}\).
It suffices to show
\({\models}^\ttenv_r\,{D}\) for every \(r \opacc s\),
which can be established by (\ref{logic-K-soundness-01}),
since \(p \mathrel{R} s \acc r\) implies \(p \acc r\)
by Condition~3 of Definition~\ref{LA-frame-def}, and hence,
\(p \acc r \mathrel{R} r\) from the reflexivity of \(\mathrel{R}\).

On the other hand, to show the converse,
suppose that \({\models}^\ttenv_p\,{\O C \Impl \O D}\), i.e.,
\begin{eqnarray}
    \label{logic-L-soundness-01}
    {\models}^\ttenv_r\,{\O C}~~\mbox{implies}~~
    {\models}^\ttenv_r\,{\O D}~~\mbox{for every \(r\) such that
    \(p \mathrel{R} r\)}.
\end{eqnarray}
To show \({\models}^\ttenv_p\,{\O (C \Impl D)}\),
suppose also that \(p \acc q \mathrel{R} q'\) and
\({\models}^\ttenv_{q'}\,{C}\).
It suffices to show \({\models}^\ttenv_{q'}\,{D}\).
By Condition~6 of Definition~\ref{LA-frame-def},
there exists some \(r \in \W\) such that
\begin{Eqnarray}
    && p \mathrel{R} r \acc q',~\mbox{and} \label{logic-L-soundness-02}\\
    && r \acc s~\mbox{implies}~q' \mathrel{R} s~\mbox{for every}~s.
	\label{logic-L-soundness-03}
\end{Eqnarray}
Since \(\Il{C}^\ttenv\) is hereditary,
and since
\(r \acc s\) implies \(q' \mathrel{R} s\) by (\ref{logic-L-soundness-03}),
we get \({\models}^\ttenv_r\,{\O C}\) from \({\models}^\ttenv_{q'}\,{C}\).
Therefore,
\({\models}^\ttenv_r\,{\O D}\) by (\ref{logic-L-soundness-01}); and
hence,
\({\models}^\ttenv_{q'}\,{D}\) by (\ref{logic-L-soundness-02}).

\paragraph{Proof of \protect\itemref{logic-psubtyp-soundness}}
Suppose that \(\subt{\g}{A \psubtyp B}\) is derivable, and
    \(\{\,X_i\,\}\mathrel{\models^\ttenv}{Y_i}\)
    for every \(i \in \{\,1,\,2,\,\ldots\,n\,\}\).
By induction on the derivation of \(\subt{\g}{A \psubtyp B}\),
and by cases on the last subtyping rule applied in the derivation.

\Case{\r{{\rsubtyp}\mbox{-assump}}.}
Obvious since
\(A = X\) and \(B = Y\) for some \(X\) and \(Y\) such that
\(\{\,X \psubtyp Y\,\} \subseteq \g\)
in this case.

\Case{\r{{\rsubtyp}\mbox{-}\t}.}
Obvious from Definition~\ref{LAm-semantics}.

\Case{\r{{\rsubtyp}\mbox{-reflex}}.}
Obvious from Item~\itemref{logic-peqtyp-soundness} of this proposition.

\Case{\r{{\rsubtyp}\mbox{-trans}}.}
The derivation ends with
\[
\ifr{{\rsubtyp}\mbox{-trans}}
    {\subt{\g_1}{A \psubtyp C}
     & \subt{\g_2}{C \psubtyp B}}
    {\subt{\g_1 \cup \g_2}{A \psubtyp B}}
\]
for some \(C\), \(\g_1\) and \(\g_2\) such that \(\g = \g_1 \cup \g_2\).
By induction hypothesis,
\(\{\,A\,\}\,{\models}^\ttenv\,{C}\) and
\(\{\,C\,\}\,{\models}^\ttenv\,{B}\).
Therefore, \(\{\,A\,\}\,{\models}^\ttenv\,{B}\).

\Case{\r{{\rsubtyp}\mbox{-}{\O}}.}
The derivation ends with
\[
\ifr{{\rsubtyp}\mbox{-}{\O}}
    {\subt{\g}{A' \psubtyp B'}}
    {\subt{\g}{\O A' \psubtyp \O B'}}
\]
for some \(A'\) and \(B'\) such that \(A = \O A'\) and \(B = \O B'\).
Hence, \(\{\,A\,\}\,{\models}^\ttenv\,{B}\)
by Definition~\ref{LAm-semantics},
since \(\{\,A'\,\}\,{\models}^\ttenv\,{B'}\) by induction hypothesis.

\Case{\r{{\rsubtyp}\mbox{-}{\Impl}}.}
The derivation ends with
\[
\ifr{{\rsubtyp}\mbox{-}{\Impl}}
    {\subt{\g_1}{B_1 \psubtyp A_1}
     & \subt{\g_2}{A_2 \psubtyp B_2}}
    {\subt{\g_1 \cup \g_2}{A_1 \Impl A_2 \psubtyp B_1 \Impl B_2}}
\]
for some \(A_1\), \(A_2\), \(B_1\), \(B_2\), \(\g_1\) and \(\g_2\)
such that \(A = A_1 \Impl A_2\), \(B = B_1 \Impl B_2\) and
\(\g = \g_1 \cup \g_2\).
By induction hypothesis,
\(\{\,B_1\,\}\,{\models}^\ttenv\,{A_1}\) and
\(\{\,A_2\,\}\,{\models}^\ttenv\,{B_2}\).
Therefore, \(\{\,A_1 \Impl A_2\,\}\,{\models}^\ttenv\,{B_1 \Impl B_2}\)
by Definition~\ref{LAm-semantics}.

\Case{\r{{\rsubtyp}\mbox{-}{\fx}}.}
The derivation ends with
\[
\ifr{{\rsubtyp}\mbox{-}{\fx}}
    {\subt{\g \cup \{\,X' \psubtyp Y'\,\}}{A' \psubtyp B'}}
    {\subt{\g}{\fix{X'} A' \psubtyp \fix{Y'} B'}}
\]
for some \(X'\), \(Y'\), \(A'\) and \(B'\) such that
\(A = \fix{X'}A'\), \(B = \fix{Y'}B'\),
\(X' \not\in \FTV{\g} \cup \FTV{B'}\) and
\(Y' \not\in \FTV{\g} \cup \FTV{A'}\), where
\(A'\) and \(B'\) are proper in \(X'\) and \(Y'\), respectively.
Note that
\(A'\) and \(B'\) are also proper in \(Y'\) and \(X'\), respectively,
by Proposition~\ref{etv-proper},
since \(Y' \not\in \FTV{A'}\) and \(X' \not\in \FTV{B'}\).
We show that \(\{\,A\,\}\,{\models}^\ttenv_p\,{B}\)
for every \(p \in \W\) by induction on \(p\).
By the induction hypothesis on \(p\),
\begin{eqnarray}
\label{logic-psubtyp-soundness-01}
    \{\,A\,\}\,{\models}^\ttenv_q\,{B}~~\mbox{for every}~ q \opacc p.
\end{eqnarray}
Let \(\ttenv'\) be the (hereditary) valuation defined as follows.
\begin{eqnarray*}
    \ttenv'(X')(q) &=& \Choice{%
		\Il{A}^\ttenv_q \quad& (p \acc q) \\
		\mathbf{f} & (p \not\acc q)
		} \\[4pt]
    \ttenv'(Y')(q) &=& \Choice{%
		\Il{B}^\ttenv_q \quad& (p \acc q) \\
		\mathbf{f} & (p \not\acc q)
		} \\[4pt]
    \ttenv'(Z)(q) &=& \ttenv(Z)(q)\mskip65mu(Z \not\in \{\,X',\,Y'\,\})
\end{eqnarray*}
Note that
\begin{eqnarray}\label{logic-psubtyp-soundness-02}
    \ttenv'(Z)(q) &=& \ttenv[\Il{A}^\ttenv/X',\Il{B}^\ttenv/Y'](Z)(q)
    ~~\mbox{for every \(Z\) and \(q \opacc p\)}.
\end{eqnarray}
Since \(\{\,X'\,\}\,{\models}^{\ttenv'}\,{Y'}\)
by (\ref{logic-psubtyp-soundness-01}), and since
\(\{\,X_i\,\}\,{\models}^{\ttenv'}\,{Y_i}\)
    for every \(i \in \{\,1,\,2,\,\ldots\,n\,\}\),
by the induction hypothesis on the derivation,
\begin{eqnarray}
\label{logic-psubtyp-soundness-03}
    \{\,A'\,\}\,{\models}^{\ttenv'}_p\,{B'}.
\end{eqnarray}
On the other hand,
\begin{Eqnarray*}
{\models}^\ttenv_p\,{A}
    &~\mbox{iff}~& {\models}^\ttenv_p\,{A'[A/X']}
	    & (by Definition~\ref{LAm-semantics}) \\
    &~\mbox{iff}~& {\models}^{\ttenv[\Il{A}^\ttenv/X']}_p\,{A'}
	    & (by Item~\itemref{logic-subst-env} of this proposition) \\
    &~\mbox{iff}~& {\models}^{\ttenv[\Il{A}^\ttenv/X']}_p\,{A'[B/Y']}
	    & (by Proposition~\ref{geqtyp-no-etv-subst}
		and Item~\itemref{logic-peqtyp-soundness}
		of this proposition) \\
    &~\mbox{iff}~& {\models}^{\ttenv[\Il{A}^\ttenv/X']%
		[\Il{B}^{\ttenv[\Il{A}^\ttenv/X']}/Y']}_p\,{A'}
	    & (by Item~\itemref{logic-subst-env} of this proposition) \\
    &~\mbox{iff}~& {\models}^{\ttenv[\Il{A}^\ttenv/X']%
		[\Il{B[A/X']}^\ttenv/Y']}_p\,{A'}
	    & (by Item~\itemref{logic-subst-env} of this proposition) \\
    &~\mbox{iff}~& {\models}^{\ttenv[\Il{A}^\ttenv/X']%
		[\Il{B}^\ttenv/Y']}_p\,{A'}
	    & (by Proposition~\ref{geqtyp-no-etv-subst}
		and Item~\itemref{logic-peqtyp-soundness}
		of this proposition) \\
    &~\mbox{iff}~& {\models}^{\ttenv[\Il{A}^\ttenv/X',\,
		\Il{B}^\ttenv/Y']}_p\,{A'}
	    & (since \(X' \not= Y'\)) \\
    &~\mbox{iff}~& {\models}^{\ttenv'}_p\,{A'}
	    & (by (\ref{logic-psubtyp-soundness-02})
		Item~\itemref{logic-proper-subst-lemma}
		of this proposition).
\end{Eqnarray*}
Similarly, \({\models}^\ttenv_p\,{B}\) iff \({\models}^{\ttenv'}_p\,{B'}\).
Therefore, \(\{\,A\,\}\,{\models}^\ttenv_p\,{B}\)
by (\ref{logic-psubtyp-soundness-03}).

\Case{\r{{\rsubtyp}\mbox{-approx}}.}
\(B = \O A\) in this case.
By Condition~5 of Definition~\ref{LA-frame-def},
\(\Il{A}^\ttenv\) is also hereditary with respect to \(\acc\).
Hence, \(\{\,A\,\}\,{\models}^\ttenv\,{\O A}\)
by Definition~\ref{LAm-semantics}.

\paragraph{Proof of \protect\itemref{lA-logic-soundness}}
Let \(\G = \{\,x_1 : A_1,\,\ldots,\,x_n : A_n\,\}\), and
\(\typ{\G}{M:B}\) be a derivable judgment of {\lA}.
By induction on the derivation, and by cases on the last rule applied in it.
Suppose that \({\models}^\ttenv_p\,{A_i}\) for every
\(i \in \{\,1,\,2,\,\ldots,n\,\}\).
It suffices to show that \({\models}^\ttenv_p\,{B}\).

\Case{\r{\mbox{var}}.}
In this case,
\(M = x_i\) and \(B = A_i\) for some \(i \in \{\,1,\,2,\,\ldots,n\,\}\).
Therefore, \({\models}^\ttenv_p\,{A_i}\) by assumption.

\Case{\r{\mbox{shift}}.} 
\else 
The only non-trivial difference is the case for the \r{\mbox{shift}}-rules
in the proof of Item~\itemref{lA-logic-soundness}.
\fi 
\def\Ip#1{\II'(#1)}%
In this case, the derivation ends with
\[
    \Ifr{\r{\mbox{shift}}.}
        {\typ{\{\,x_1:\O A_1,\,x_2:\O A_2,\,\ldots,\,x_n:\O A_n,\,\}}
	     {M : \O B}}
        {\typ{\{\,x_1:A_1,\,x_2:A_2,\,\ldots,\,x_n:A_n\,\}}
	     {M : B}}
\]
For the given
\(\tuple{\W,\,{\acc},\,\mathrel{R}}\), \(p\) and \(\ttenv\),
we construct another {\LA}-frame \(\tuple{\W',\,{\acc'},\,{\mathrel{R'}}}\)
and a hereditary valuation \(\ttenv'\) for it by extending
\(\tuple{\W,\,{\acc},\,{\mathrel{R}}}\) and \(\ttenv\) as follows.
\begin{eqnarray*}
    \W' &=& \Zfset{\,\pair{a}{q}}{\strut
	    a \in \{\,0,\,1\,\}~\:\mbox{and}\:~q \in \W} \\[4pt]
    \pair{a}{q} \acc' \pair{b}{r}
	&~\mbox{iff}~& \Choice{%
		a = b = 0~\:\mbox{and}\:~q \acc r,~\mbox{or} \\[-2pt]
		a = 1,~b = 0~\:\mbox{and}\:~q \mathrel{R} r
	    } \\[3pt]
    \pair{a}{q} \mathrel{R'} \pair{b}{r}
	&~\mbox{iff}~&  a \ge b~~\mbox{and}~~q \mathrel{R} r \\[3pt]
    \ttenv'(X)(\pair{a}{q}) &=& \ttenv(X)(q)
\end{eqnarray*}
We can easily verify that
\(\tuple{\W',\,{\acc'},\,{\mathrel{R'}}}\) satisfies
Conditions~1, 2, 3 and 5 of Definition~\ref{LA-frame-def},
and that \(\ttenv'\) is hereditary.
Furthermore, Condition~6 is also satisfied.
In fact, suppose that
\(\pair{a}{p'} \acc' \pair{b}{q} \mathrel{R'} \pair{c}{q'}\).
If \(a = 0\), then it is obvious, since \(a = 0\) implies \(b = c = 0\), and
since \(\tuple{\W,\,{\acc},\,{\mathrel{R}}}\) satisfies Condition~6.
On the other hand, if \(a = 1\), then
we get \(b = c = 0\) by the definition of \(\acc'\);
and hence,
\[
    p' \mathrel{R} q \mathrel{R} q'.
\]
Then, it suffices to let \(r\) be \(\pair{1}{q'}\), because
\[
    \pair{a}{p'} \mathrel{R'} \pair{1}{q'} \acc' \pair{c}{q'},
\]
and because \(\pair{1}{q'} \acc' \pair{d}{s}\) implies
\(d = 0\) and \(q' \mathrel{R} s\) by the definition of \(\acc'\);
and hence, \(\pair{c}{q'} \mathrel{R'} \pair{d}{s}\).

Now, let \(\Ilp{A}^{\ttenv'}\) be the interpretation of \(A\)
in \(\tuple{\W',\,{\acc'},\,{\mathrel{R'}}}\)
under the valuation \(\ttenv'\).
Then, \(\Ilp{A}^{\ttenv'}_{\pair{0}{q}}\! = \Il{A}^\ttenv_q\,\)
for every \(q \in \W\) and \(A\).
Since \(\Il{A_i}^\ttenv_p = \mathbf{t}\,\) implies
\(\Ilp{\O A_i}^{\ttenv'}_{\pair{1}{p}}\! = \mathbf{t}\,\)
for every \(i \in \{\,1,\,2,\,\ldots,n\,\}\),
we get \(\Ilp{\O B}^{\ttenv'}_{\pair{1}{p}}\! = \mathbf{t}\,\)
by the induction hypothesis on the derivation.
Therefore,
\(\Ilp{B}^{\ttenv'}_{\pair{0}{p}} = \mathbf{t}\), that is,
\(\Il{B}^\ttenv_p = \mathbf{t}\).
\ifdetail

\Case{\r{\t}.}
Obviously \({\models}^\ttenv_p\,{B}\) by Definition~\ref{LAm-semantics}
since \(B = \t\) in this case.

\Case{\r{\rsubtyp}.}
In this case, \(\typ{\G}{M : B'}\) is derivable for some \(B'\)
such that \(B' \psubtyp B\).
Hence, \({\models}^\ttenv_p\,{B}\), since
\({\models}^\ttenv_p\,{B'}\) by induction hypothesis,
and since
\(\{\,B'\,\,\}\,{\models}^\ttenv\,{B}\)
by Item~\itemref{logic-psubtyp-soundness} of this proposition.

\Case{\r{\Impl\mbox{I}}.}
The derivation ends with
\[
\ifr{\Impl\mbox{I}}
    {\typ{\G \cup \{\,y:B_1\,\}}{L:B_2}}
    {\typ{\G}{\lam{y}{L} : B_1 \Impl B_2}}
\]
for some \(y\), \(L\), \(B_1\) and \(B_2\) such that
\(M = \lam{y}{L}\) and \(B = B_1 \Impl B_2\).
Suppose that \(p \mathrel{R} r\) and \({\models}^\ttenv_r\,{B_1}\).
Since \(\Il{A}\) is hereditary for any \(A\),
\({\models}^\ttenv_r\,{A_i}\) for every \(i \in \{\,1,\,2,\,\ldots,n\,\}\).
Hence, by induction hypothesis, \({\models}^\ttenv_r\,{B_2}\).
Thus, we get \({\models}^\ttenv_p\,{B_1 \Impl B_2}\).

\Case{\r{\Impl\mbox{E}}.}
The derivation ends with
\[
    \ifr{\Impl\mbox{E}}
	{\typ{\G_1}{M_1:C \Impl B}
	 & \typ{\G_2}{M_2:C}}
        {\typ{\G_1\cup\G_2}{\app{M_1}{M_2} : B}}
\]
for some \(\G_1\), \(\G_2\), \(M_1\), \(M_2\) and \(C\)
such that \(\G = \G_1 \cup \G_2\) and \(M = \app{M_1}{M_2}\).
By induction hypothesis,
\({\models}^\ttenv_p\,{C \Impl B}\) and \({\models}^\ttenv_p\,{C}\).
Hence, \({\models}^\ttenv_p\,{B}\) by Definition~\ref{LAm-semantics}.
\fi 
\qed\CHECKED{2014/07/10, 07/17}
\end{proof}

\begin{theorem}\label{logical-equiv-theorem}
The following three conditions are equivalent.
\begin{enumerate}[{\kern 8pt}(a)]\pushlabel
\item \(\typ{\{\,A_1,\,A_2,\,\ldots,\,A_n\,\}}{B}\)
    is derivable in {\LAm}.
\item \(\typ{\{\,x_1:A_1,\,x_2:A_2,\,\ldots,\,x_n:A_n\,\}}{M:B}\)
    is derivable in {\lA} for some \(x_1\), \(x_2\),
    \(\ldots\), \(x_n\) and \(M\).
\item \(\{\,A_1,\,A_2,\,\ldots,\,A_n\,\}\models^\ttenv{B}\,\)
for every\/ {\LA}-frame \(\tuple{\W,\,{\acc},\,\mathrel{R}}\) and
hereditary valuation \(\ttenv\).
\end{enumerate}
\end{theorem}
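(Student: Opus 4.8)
The plan is to establish the cycle of implications (a) $\Rightarrow$ (b) $\Rightarrow$ (c) $\Rightarrow$ (a). The first implication is precisely Proposition~\ref{LAm-lA}, which already produces, from an {\LAm}-derivation of $\typ{\{\,A_1,\ldots,A_n\,\}}{B}$, an {\lA}-derivation of $\typ{\{\,x_1:A_1,\ldots,x_n:A_n\,\}}{M:B}$ for a suitable term $M$. The second implication is the soundness of {\lA} read as a modal logic, namely Item~\itemref{lA-logic-soundness} of the preceding proposition: applying it to every {\LA}-frame $\tuple{\W,\,{\acc},\,{\mathrel{R}}}$ and every hereditary valuation $\ttenv$ yields exactly condition~(c). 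Thus both of these steps are already in hand, and the whole content of the theorem reduces to the remaining implication (c) $\Rightarrow$ (a), the completeness of {\LAm} with respect to {\LA}-frames.

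First I would prove completeness in contrapositive form: assuming $\typ{\{\,A_1,\ldots,A_n\,\}}{B}$ is not derivable in {\LAm}, I would construct an {\LA}-frame, a hereditary valuation $\ttenv$, and a world $p$ at which all the $A_i$ are forced while $B$ is not. Because {\LAm} is a provability-style logic, the relation $\acc$ of any refuting frame must be conversely well-founded, so an infinite canonical model will not serve directly; I would instead route through the finite model property. The key preparatory step is to fix a finite \emph{adequate} set $\Sigma$ of formulae containing $A_1,\ldots,A_n,B$ and closed, modulo $\LAmequiv$ (via the canonical forms $\Canong{\cdot}$), under subformulae, under a single Fischer--Ladner unfolding $\fix{X}A \mapsto A[\fix{X}A/X]$, and under the relevant $\O$-prefixes. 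Finiteness of $\Sigma$ is exactly where the recursive types must be controlled: one unfolding suffices because $r(A[\fix{X}A/X]) < r(\fix{X}A)$ (Proposition~\ref{rank-fix}) and the tail structure is bounded, so the closure does not regenerate genuinely new formulae. The worlds of the model would then be the suitably saturated $\Sigma$-consistent theories, equipped with the intuitionistic preorder $\mathrel{R}$ (persistence of derivable formulae) and the modal relation $\acc$ (stripping one $\O$), with $\ttenv(X)(w) = \mathbf{t}$ iff $X \in w$.

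The main obstacle will be verifying that this filtrated structure is genuinely an {\LA}-frame in the sense of Definition~\ref{LA-frame-def}, together with the accompanying truth lemma. Converse well-foundedness of $\acc$ (Condition~1) follows from finiteness of $\Sigma$ once irreflexivity is secured, and the latter is precisely where the L\"ob-style reasoning enters: the derivability of $(\O A \Impl A)\Impl A$ (Proposition~\ref{LAm-Y}, the \r{\mbox{\bf Y}}-rule) forces every $\acc$-step to decrease modal content strictly, ruling out cycles. Transitivity comes from \r{\mbox{\bf 4}}, and the hereditarity Condition~5 from \r{\mbox{approx}}. The genuinely delicate point is the local linearity Condition~6, which has no analogue in ordinary {\iGL} and must be extracted from the \r{\mbox{\bf L}}-rule, equivalently $\r{{\peqtyp}\mbox{-{\bf K}/{\bf L}}}$: I would show that whenever $p \acc q \mathrel{R} q'$ the saturated theories can be amalgamated to produce the required intermediate world $r$ with $p \mathrel{R} r \acc q'$ and the downward property of (6b), precisely because {\LAm} proves $\O(A\Impl B)$ from $\O A \Impl \O B$. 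With the frame conditions in place, a standard truth lemma (${\models}^\ttenv_w C$ iff $C \in w$ for $C \in \Sigma$, proved by induction on $r(C)$ with the fixpoint clause handled by the single unfolding and $\eqtyp$-invariance supplied by Lemma~\ref{eqtyp-Lequiv}) then yields a world forcing each $A_i$ but refuting $B$, completing (c) $\Rightarrow$ (a) and hence the theorem.
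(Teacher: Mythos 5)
Your proposal follows the paper's own proof essentially step for step: (a)\(\Rightarrow\)(b) is Proposition~\ref{LAm-lA}, (b)\(\Rightarrow\)(c) is the soundness item \ref{lA-logic-soundness}, and (c)\(\Rightarrow\)(a) is the Kripke completeness of {\LAm} (Theorem~\ref{LAm-completeness-theorem}), which the paper likewise proves by a finite canonical model whose worlds are deductively closed subsets of the (finitely many, modulo \(\eqtyp\)) components of the \(A_i\) and \(B\), with well-foundedness obtained from finiteness plus irreflexivity, Condition~6 extracted from the \r{\mbox{\bf L}}-rule by amalgamating theories, and a truth lemma treating \(\fx\) by a single unfolding. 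The one technical slip is that the truth lemma cannot be proved by induction on \(r(C)\) alone, since \(r(\O D)=0\) while \(r(D)\) may be arbitrarily large; the paper instead inducts on the well-founded ordering \(\sqsupset\) of pairs \(\pair{p}{r(C)}\), which lets the \(\O\)-case recurse at a strictly later world.
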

\begin{proof}
We get (a) \(\Rightarrow\) (b) and
(b) \(\Rightarrow\) (c) by
Propositions~\ref{LAm-lA} and \ref{lA-logic-soundness}, respectively.
Hence, it suffices to show that (c) \(\Rightarrow\) (a),
which is established by Theorem~\ref{LAm-completeness-theorem} below.
\qed\CHECKED{2014/07/10, 07/17}
\end{proof}
Note that this theorem implies that the \r{\mbox{shift}}-rule of {\lA}
is derivable in {\LAm}, that is,
if \(\typ{\O \G}{\O A}\) is derivable in {\LAm}, then so is
\(\typ{\G}{A}\)\footnote{%
This is also the case for the formal systems {\miGLC} and {\LA},
which will be defined later.}.

\begin{definition}
    \ilabel{comp-def}{Comp@$\protect\Comp(A)$}
    \ilabel*{0 subty@$\protect\component$}
Let \(A\) and \(B\) type expressions, i.e., formulae of {\LAm}.
We call \(A\) a {\em component} of \(B\), and write
\(A \component B\), if and only if
\(C[A/X] \eqtyp B\) for some \(X\) and \(C\) such that \(X \in \ETV{C}\).
We also define \(\Comp(B)\) as
\[
    \Comp(B) = \zfset{A}{A \component B}.
\]
\end{definition}
That is, \(\Comp(B)\) is the set of all type expressions
that is equivalent, in the sense of \(\eqtyp\),
to some subexpression occurring
freely and {\em effectively} in \(B\).
For example, let \(C = \fix{X}Y \Impl \O (X \Impl Z)\).
Then, \(A \in \Comp(Y \Impl \O C)\) if and only if
\[
    A \eqtyp A'
    ~\mbox{for some}~A'\in \{\,Y,\,Z,\,C,\,\O C,\,
	    C \Impl Z,\,\O(C \Impl Z),\,Y \Impl \O C\,\}.
\]
It can be shown that for every \(B\),
\(\Comp(B)/{\eqtyp}\) is a finite set.
We refer the reader for the proof to Appendix~\ref{comp-finite-sec}.
We do not define the corresponding notions using \(\peqtyp\)
instead of \(\eqtyp\).
If we define such a set, say \(Comp^{\peqtyp}(B)\), by \(\peqtyp\),
then \(Comp^{\peqtyp}(B)/{\peqtyp}\) is not always finite\footnote{%
    However, \(Comp(B)/{\peqtyp}\) is still finite, since
    \(A_1 \eqtyp A_2\) implies \(A_1 \peqtyp A_2\).}
because of the \r{{\peqtyp}\mbox{-{\bf K}/{\bf L}}}-rule.
For example, consider \(B = \fix{X}\O(X \Impl Y)\).
Then, \(A \in Comp^{\peqtyp}(B)/{\peqtyp}\) iff \(A \peqtyp A'\) for some
\(A' \in \{\,
    Y,\,B,\,B \Impl Y,\,
    \O Y,\,\O B,\,\O B \Impl \O Y,\,
    \O\O Y,\,\O\O B,\,\O\O B \Impl \O\O Y,\,
    \O\O\O Y,\,\O\O\O B,\,\O\O\O B \Impl \O\O\O Y,\,\ldots\,\}\), since
\begin{eqnarray*}
B &\peqtyp& \O(B \Impl Y) \\
    &\peqtyp& \O B \Impl \O Y \\
    &\peqtyp& \O\O (B \Impl Y) \Impl \O Y \\
    &\peqtyp& (\O\O B \Impl \O \O Y) \Impl \O Y \\
    &\peqtyp& (\O\O\O (B \Impl Y) \Impl \O \O Y) \Impl \O Y \\
    &\peqtyp& ((\O\O\O B \Impl \O\O\O Y) \Impl \O \O Y) \Impl \O Y \\
    &\peqtyp& \ldots.
\end{eqnarray*}

\begin{theorem}[Kripke completeness of {\LAm}]
    \ilabel{LAm-completeness-theorem}{completeness!LA-mu@\protect\LAm}
If\/ \(\,\typ{\{\,A_1,\,A_2,\,\ldots,\,A_n\,\}}{B}\) is not derivable
in\/ {\LAm}, then there exist some {\LA}-frame
\(\tuple{\W_0,\,{\acc_0},\,R_0}\),
hereditary valuation \(\ttenv_0\), and \(p_0 \in \W_0\)
such that \({\not\models}^{\ttenv_0}_{p_0} B\) while
\({\models}^{\ttenv_0}_{p_0}A_i\)
for every \(i \in \{\,1,\,2,\,\ldots\,n\,\}\).
\end{theorem}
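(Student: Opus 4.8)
The plan is to prove the contrapositive by a canonical-model construction, building an \LA-frame out of saturated theories over a finite adequate set of formulae. First I would fix the underivable judgment, write $\G_0 = \{\,A_1,\ldots,A_n\,\}$, and collect a finite set $\Sigma$ of relevant formulae: take the union of $\Comp(B)$ and the $\Comp(A_i)$, close it under the immediate components whose truth the clauses of Definition~\ref{LAm-semantics} inspect (so that the later induction goes through), and pass to representatives modulo $\eqtyp$. Because $\Comp(C)/{\eqtyp}$ is finite for every $C$ (Appendix~\ref{comp-finite-sec}), $\Sigma$ is finite, and this finiteness is exactly what will buy converse well-foundedness of the modal relation for free. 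The worlds $\W_0$ are the $\Sigma$-\emph{theories}: subsets $\Phi \subseteq \Sigma$ deductively closed in {\LAm} within $\Sigma$ (if $\LAmtyp{\Phi}{C}$ and $C \in \Sigma$ then $C \in \Phi$). A Lindenbaum-style argument, trivial here since the fragment has no disjunction, yields a theory $p_0 \supseteq \G_0$ with $B \notin p_0$, using that $\typ{\G_0}{B}$ is not derivable in {\LAm}.

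Next I would put $\Phi \mathrel{R_0} \Psi$ iff $\Phi \subseteq \Psi$, and $\Phi \acc_0 \Psi$ iff $\{\,C : \O C \in \Phi\,\} \cup \{\,\O C : \O C \in \Phi\,\} \subseteq \Psi$, together with the usual Löb strictness condition forcing a strict increase of the $\O$-formulae along $\acc_0$. Conditions~1--3 of an intuitionistic well-founded frame are then routine: $R_0$ is a reflexive transitive order, and $p \mathrel{R_0} q \acc_0 r$ forces $p \acc_0 r$ because $R_0$ is inclusion. Converse well-foundedness of $\acc_0$ follows from the finiteness of $\Sigma$ and the strict modal decrease that the Löb principle, derivable in {\LAm} as \r{\mbox{\bf Y}} (Proposition~\ref{LAm-Y}), guarantees along $\acc_0$. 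Condition~5, that $p \acc_0 q$ implies $p \mathrel{R_0} q$, comes directly from the \r{\mbox{approx}}-rule, which puts $\O C \in \Phi$ whenever $C \in \Phi$, so that the $\O$-part of the $\acc_0$-clause already forces inclusion.

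The heart is the Truth Lemma: with the hereditary valuation $\ttenv_0(X)(\Phi) = \mathbf{t}$ iff $X \in \Phi$ (heredity being immediate from $R_0 = {\subseteq}$), I would show ${\models}^{\ttenv_0}_\Phi C$ iff $C \in \Phi$ for every $C \in \Sigma$, by induction on the well-founded ordering $\sqsupset$ of $\pair{\Phi}{r(C)}$ from Definition~\ref{LAm-semantics}. The variable and $\fx$ cases are immediate (the latter via \r{\mbox{fold}}/\r{\mbox{unfold}} and Lemma~\ref{eqtyp-Lequiv}), and {\tvariant}s are handled by Proposition~\ref{LAm-tvariant}. The two substantive cases are the existence lemmas: if $A \Impl B \notin \Phi$ then $\typ{\Phi \cup \{A\}}{B}$ is not derivable, so $\Phi \cup \{A\}$ extends to a theory $\Psi \supseteq \Phi$ with $A \in \Psi$, $B \notin \Psi$; and if $\O C \notin \Phi$ then the theory generated by $\{\,D,\O D : \O D \in \Phi\,\}$ omits $C$ and is a proper $\acc_0$-successor, where \r{\mbox{\bf K}} and \r{\mbox{\bf L}} are what keep this successor consistent. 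Applying the lemma at $p_0$ gives ${\models}^{\ttenv_0}_{p_0} A_i$ for every $i$ and ${\not\models}^{\ttenv_0}_{p_0} B$.

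The main obstacle, and the step that distinguishes this from the intuitionistic provability completeness over Conditions~1--4, is forcing the frame to satisfy local linearity (Condition~6), which is precisely the frame correlate of the \r{\mbox{\bf L}}-rule and of the equivalence $\O(A \Impl B) \peqtyp \O A \Impl \O B$ validated in Theorem~\ref{peqtyp-KL-soundness}. The bare canonical frame need not be locally linear, so rather than leaving $\langle \W_0,\,{\acc_0},\,R_0\rangle$ as is, I plan to unravel it along $\acc_0$-chains and then linearise the $\acc_0$-successors cofinally near the bottom exactly as Condition~6 demands, reading off the required comparability of the relevant theories at each branching point from the \r{\mbox{\bf L}}-rule. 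Since Proposition~\ref{semantics-transitive} shows the interpretation is insensitive to transitivity of $\acc$, I am free to add or drop transitivity during this surgery, and the Truth Lemma is preserved because the reshaping only affects $\acc_0$ among worlds that already agree on all of $\Sigma$. Making this linearisation simultaneously preserve converse well-foundedness, Condition~3, and the Truth Lemma is where the real work lies; the soundness direction and the bookkeeping of $\Sigma$ are comparatively mechanical.
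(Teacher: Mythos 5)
Your overall skeleton --- a finite canonical model whose worlds are the deductively closed subsets of a $\Comp$-generated adequate set, a valuation reading off membership of variables, and a Truth Lemma proved by induction on the ordering $\sqsupset$ of $\pair{p}{r(C)}$ --- matches the paper's construction, and the variable, $\fx$, implication and box cases you sketch are essentially the right ones. The genuine gap is at the step you yourself flag as ``where the real work lies'': Condition~6 (local linearity). The paper needs no unravelling or linearisation surgery at all; it proves that the canonical frame is \emph{already} locally linear by an explicit syntactic construction: given $p \acc_0 q \mathrel{R_0} q'$, take $r$ to be the deductive closure of $p \cup \O q'$; then for each $C$ with $\typ{r}{\O C}$ one derives $\typ{p}{\O q' \Impl \O C}$, applies the \r{\mbox{\bf L}}-rule to obtain $\typ{p}{\O(q' \Impl C)}$, and unboxes at $\tilde{p} \subseteq q'$ to get $\typ{q'}{C}$, whence $\tilde{r} \subseteq q'$ and $r \acc_0 q'$; clause~(b) and $\tilde{r} \not\subseteq r$ are obtained by similar derivations. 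This is the one place the \r{\mbox{\bf L}}-rule enters the completeness proof, and it is exactly the content your proposal omits. Your proposed post-hoc surgery is not a substitute: linearising the $\acc_0$-successors of a world changes which worlds it sees, so the $\O$-clause of the Truth Lemma is not automatically preserved; and Proposition~\ref{semantics-transitive} only licenses passing to the transitive closure of $\acc$, not reshaping the successor structure.

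A second concrete defect is your choice of $R_0$ as plain inclusion. With that choice Condition~6 can fail outright: nothing prevents $p \acc_0 q \subseteq q'$ for a world $q'$ with $\tilde{q'} \subseteq q'$, and such a $q'$ admits no $\acc_0$-predecessor whatsoever, since $r \acc_0 q'$ requires $\tilde{q'} \not\subseteq q'$. The paper's definition --- $p \mathrel{R_0} q$ iff $p = q$, or $p \subsetneq q$ and $\tilde{q} \not\subseteq q$ --- carries precisely the side condition that guarantees the target $q'$ of Condition~6 can have an $\acc_0$-predecessor. Relatedly, in your existence lemma for $\O$ you still need the L\"ob-style case split (if the candidate successor $q$ satisfies $\tilde{q} \subseteq q$, derive $\O D \in p$ directly via \r{\mbox{\bf Y}} and contradict the assumption); generating the successor from $\{\,D,\,\O D : \O D \in \Phi\,\}$ alone does not by itself make it a legitimate, strict $\acc_0$-successor.
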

\begin{proof}
Suppose that \(\typ{\{\,A_1,\,\ldots,\,A_n\,\}}{B}\)
is not derivable in {\LAm}.
Let
\[
    \F = \zfset{C}{C \in \Comp(B)~\mbox{or}~C \in \Comp(A_i)~\mbox{for some}~i},
\]
and define \(\W_0\) and \(p_0\) as follows.
\begin{eqnarray*}
    \W_0 &=& \zfset{p \subseteq \F}{C \in p ~\mbox{whenever}~
        C \in \F ~\mbox{and}~\typ{p}{C}~
        \mbox{is derivable\xfootnote{More precisely,
	    \(\typ{\G }{C}\) derivable
	    for some finite subset \(\G\) of \(p\).}}} \\
    p_0 &=& \zfset{C \in \F}{\typ{\{\,A_1,\,\ldots,\,A_n\,\}}{C}
        ~\mbox{is derivable}}
\end{eqnarray*}
\emitxfootnote
Note that \(\F/{\eqtyp}\) is a finite set, since so are \(\Comp(B)/{\eqtyp}\)
and \(\Comp(A_i)/{\eqtyp}\) for every \(i\);
and hence, \(\W_0\) is also a finite set
because
\(C \eqtyp D\) implies \(C \LAmequiv D\) by Lemma~\ref{eqtyp-Lequiv}.
Then, for each \(p \in \W_0\), define \(\tilde{p}\) as
\[
    \tilde{p} = \zfset{C \in \F}{\typ{p}{\O C}~\mbox{is derivable}}.
\]
Observe that
\begin{eqnarray}
    \label{LAm-completeness-theorem-01}
	&& \LAmtyp{p}{\O\tilde{p}}
\end{eqnarray}
by the definition, and that \(\tilde{p} \in \W_0\).
For, if \(\typ{\tilde{p}}{C}\) is derivable for some \(C \in \F\), then
so is \(\typ{\O \tilde{p}}{\O C}\) by \r{\mbox{nec}}; and therefore,
\(\typ{p}{\O C}\) is also derivable, i.e., \(C \in \tilde{p}\),
by Proposition~\ref{LAm-weakening} and
(\ref{LAm-completeness-theorem-01}).
Note also that
\begin{eqnarray*}
    p &\subseteq& \tilde{p}
\end{eqnarray*}
since {\LAm} has the \r{\mbox{approx}}-rule.
We then define accessibility relations \(\acc_0\) and \(R_0\) as follows.
\begin{eqnarray*}
    p \acc_0 q &~~\mbox{iff}~~&
	\tilde{p} \subseteq q ~\mbox{and}~ \tilde{q} \not\subseteq q \\
    p \mathrel{R_0} q &~~\mbox{iff}~~& p = q~\mbox{or,}~
	    p \subsetneq q ~\mbox{and}~\tilde{q} \not\subseteq q
\end{eqnarray*}
Since \(\W_0\) is finite, and since \(\acc_0\) is transitive and irreflexive,
\(\acc_0\) is (conversely) well-founded.
We can easily verify that \(\acc_0\) and \(R_0\) satisfy
Conditions~1, 2, 3 and 5 of Definition~\ref{LA-frame-def}.
Furthermore, Condition~6 is also satisfied.
For, suppose that \(p \acc_0 q \mathrel{R_0} q'\).
Then, let \(r = \zfset{C\in \F}{\typ{p\cup \O q'}{C}~\mbox{is derivable}}\).
We show that (a) and (b) of Condition~6
of Definition~\ref{LA-frame-def} hold.
First, obviously \(p \subseteq r\) by the definition.
Furthermore, it can be shown that
\(\tilde{r} \not\subseteq r\) by contradiction.
Suppose that \(\tilde{r} \subseteq r\).
Since \(\ptilde{q} \not\subseteq q'\) from \(p \acc_0 q \mathrel{R_0} q'\),
there exists some \(D \in \ptilde{q}\) such that \(D \not\in q'\).
Hence, we can derive the following.
\begin{Eqnarray*}
    && \typ{q'}{\O D} & (since \(D \in \ptilde{q}\)) \\
    && \typ{\O q'}{\O\O D} & (by \r{\mbox{nec}}) \\
    && \typ{r}{\O\O D} & (by the definition of \(r\)) \\
    && \typ{\tilde{r}}{\O D} & (by the definition of \(\tilde{r}\)) \\
    && \typ{r}{\O D} & (since \(\tilde{r} \subseteq r\) by assumption) \\
    && \typ{\tilde{r}}{D} & (by the definition of \(\tilde{r}\)) \\
    && \typ{r}{D} & (since \(\tilde{r} \subseteq r\) by assumption) \\
    && \typ{p\cup\O q'}{D} & (by the definition of \(r\)) \\
    && \typ{q'\cup\O q'}{D} & (since \(p \subseteq q'\) from
		    \(p \acc_0 q \mathrel{R_0} q'\)) \\
    && \typ{q'}{D} & (since \(\typ{q'}{\O q'}\) by \r{\mbox{approx}})
\end{Eqnarray*}
This contradicts \(D \not\in q'\); and hence,
\(\tilde{r} \not\subseteq r\), and \(p \mathrel{R_0} r\).
Second, we show that \(r \acc_0 q'\).
To this end, suppose that \(C \in \tilde{r}\).
Then, the following judgments are derivable.
\begin{Eqnarray*}
    && \typ{r}{\O C} & (since \(C \in \tilde{r}\)) \\
    && \typ{p\cup\O q'}{\O C} & (by the definition of \(r\)) \\
    && \typ{p}{\O q' \Impl \O C} & (by \r{{\Impl}\mbox{I}}\footnote{%
	More precisely,
	\(\typ{\G}{\O A_1 \Impl \O A_2 \Impl \ldots \Impl \O A_n \Impl \O C}\)
	is derivable for some finite subset \(\G\) of \(p\) and some
	finite subset \(\{\,A_1,\,A_2,\,\ldots,\,A_n\,\}\) of \(q\).}) \\
    && \typ{p}{\O (q' \Impl C)} & (by \r{\mbox{\bf L}}) \\
    && \typ{\tilde{p}}{q' \Impl C} & (by the definition of \(\tilde{p}\)) \\
    && \typ{\tilde{p}\cup q'}{C} & (by \r{{\Impl}\mbox{E}}) \\
    && \typ{q'}{C} & (since \(\tilde{p} \subseteq q'\)
	    from \(p \acc_0 q \mathrel{R_0} q'\))
\end{Eqnarray*}
Hence, \(C \in q'\); that is, \(\tilde{r} \subseteq q'\); and
therefore, \(r \acc_0 q'\),
since \(\ptilde{q} \not\subseteq q'\)
	    from \(p \acc_0 q \mathrel{R_0} q'\),
Condition~(a) is thus established.
Next, we show that (b) also holds.
Suppose that \(r \acc_0 s\) and \(C \in q'\).
It suffices to show that \(C \in s\).
We can derive
\begin{Eqnarray*}
    && \typ{q'}{C} & (since \(C \in q'\)) \\
    && \typ{\O q'}{\O C} & (by \r{\mbox{nec}}) \\
    && \typ{r}{\O C} & (by the definition of \(r\)) \\
    && \typ{\tilde{r}}{C} & (by the definition of \(\tilde{r}\)) \\
    && \typ{s}{C} &
	(since \(\tilde{r} \subseteq s\) from \(r \acc_0 s\)).
\end{Eqnarray*}
That is, \(C \in s\).
Thus, Condition~(b) is also established.
We finally define the valuation \(\ttenv_0\) as follows.
\[
  \ttenv_0(X)_p = \Choice{%
      \mathbf{t} & (X \in p) \\
      \mathbf{f} & (X \not\in p)
      }
\]
Obviously, \(\ttenv_0\) is hereditary by the definition of \(R_0\).
To finish the proof of the completeness theorem, it suffices to
prove the following lemma,
because \(B \not\in p_0\) while \(A_i \in p_0\) for every \(i\).
\qed\CHECKED{2014/07/10, 07/17}
\end{proof}

\begin{lemma}\label{LAm-crucial-lemma}
Consider the frame \(\tuple{\W_0,\,{\acc_0},\,R_0}\) and \(\F\)
defined in the proof of Theorem~\ref{LAm-completeness-theorem}.
Let\/ \(C \in \F\) and \(p \in \W_0\).
Then, \(C \in p\) if and only if\/ \({\models}^{\ttenv_0}_p{C}\).
\end{lemma}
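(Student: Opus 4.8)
The plan is to prove Lemma~\ref{LAm-crucial-lemma} by induction on the well-founded ordering \(\sqsupset\) of \(\langle p,\,C\rangle\) used to define the semantics, with a case analysis on the form of \(C\) that mirrors Definition~\ref{LAm-semantics}. Throughout I will use that \(\F\) is closed under \(\eqtyp\) and under passing to components, so that every subexpression produced below still lies in \(\F\), and that membership in a world \(p \in \W_0\) respects \(\eqtyp\): since \(C \eqtyp D\) implies \(C \LAmequiv D\) by Lemma~\ref{eqtyp-Lequiv}, \(\typ{p}{C}\) is derivable iff \(\typ{p}{D}\) is, whence \(C \in p \iff D \in p\) by deductive closure of \(p\). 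The three routine cases come first. If \(C\) is a {\tvariant}, then \({\models}^{\ttenv_0}_p C\) holds by definition, while \(\typ{}{C}\) is derivable in {\LAm} by Proposition~\ref{LAm-tvariant}; hence \(\typ{p}{C}\) is derivable and \(C \in p\) by closure, so both sides hold. If \(C = X\) is a non-{\tvariant} variable, the equivalence is immediate from \(\ttenv_0(X)_p = \mathbf{t} \iff X \in p\). If \(C = \fix{X}A\) is not a {\tvariant}, then \(r(A[\fix{X}A/X]) < r(\fix{X}A)\) by Proposition~\ref{rank-fix}, so the induction hypothesis applies at \(\langle p,\,A[\fix{X}A/X]\rangle\); combined with \({\models}^{\ttenv_0}_p \fix{X}A \iff {\models}^{\ttenv_0}_p A[\fix{X}A/X]\) and \(\fix{X}A \in p \iff A[\fix{X}A/X] \in p\) (the latter since \(\fix{X}A \eqtyp A[\fix{X}A/X]\) by \r{{\eqtyp}\mbox{-fix}}), this case reduces to the unfolding.

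The two interesting cases are \(C = \O D\) and \(C = A_1 \Impl A_2\) (neither a {\tvariant}), and in each the forward implication is easy while the converse requires building a witnessing world. For \(C = \O D\): by the induction hypothesis, \({\models}^{\ttenv_0}_p \O D\) holds iff \(D \in q\) for every \(q\) with \(p \acc_0 q\). If \(\O D \in p\) then \(\typ{p}{\O D}\) is derivable, so \(D \in \tilde p\) by definition of \(\tilde{p}\), and \(\tilde p \subseteq q\) gives \(D \in q\); this settles the forward direction. For the converse, suppose \(\O D \notin p\) and set \(q = \{\,C' \in \F \mid \typ{\tilde p \cup \{\,\O D\,\}}{C'}\ \mbox{is derivable}\,\}\). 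Using Proposition~\ref{LAm-weakening} (cut and weakening) one checks \(q \in \W_0\), while \(\tilde p \subseteq q\) and \(\O D \in q\) hold by \r{\mbox{assump}}; in particular \(\typ{q}{\O D}\) is derivable, so \(D \in \tilde q\). The key point is that \(D \notin q\): were \(\typ{\tilde p \cup \{\,\O D\,\}}{D}\) derivable, then \(\typ{\tilde p}{\O D \Impl D}\) by \r{{\Impl}\mbox{I}}, hence \(\typ{\tilde p}{D}\) by the derived rule \r{\mbox{\bf Y}} (Proposition~\ref{LAm-Y}), then \(\typ{\O \tilde p}{\O D}\) by \r{\mbox{nec}}, and finally \(\typ{p}{\O D}\) using \(\LAmtyp{p}{\O \tilde p}\) (equation~(\ref{LAm-completeness-theorem-01})) and Proposition~\ref{LAm-weakening} --- contradicting \(\O D \notin p\). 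Thus \(D \in \tilde q \setminus q\), so \(\tilde q \not\subseteq q\), and together with \(\tilde p \subseteq q\) this gives \(p \acc_0 q\); the induction hypothesis then yields \({\not\models}^{\ttenv_0}_q D\), so \({\not\models}^{\ttenv_0}_p \O D\).

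For \(C = A_1 \Impl A_2\): since \(p \mathrel{R_0} q\) implies \(p \subseteq q\), the induction hypothesis reduces \({\models}^{\ttenv_0}_p (A_1 \Impl A_2)\) to the statement that \(A_1 \in q \Rightarrow A_2 \in q\) for all \(q\) with \(p \mathrel{R_0} q\). If \(A_1 \Impl A_2 \in p \subseteq q\) and \(A_1 \in q\), then \(\typ{q}{A_2}\) is derivable by \r{{\Impl}\mbox{E}} and \(A_2 \in q\) by closure; this is the forward direction. For the converse, assume \(A_1 \Impl A_2 \notin p\) and put \(q^{*} = \{\,C' \in \F \mid \typ{p \cup \{\,A_1\,\}}{C'}\ \mbox{is derivable}\,\}\). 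Then \(q^{*} \in \W_0\), \(p \subseteq q^{*}\) and \(A_1 \in q^{*}\), while \(A_2 \notin q^{*}\): otherwise \(\typ{p}{A_1 \Impl A_2}\) would be derivable by \r{{\Impl}\mbox{I}} and \(A_1 \Impl A_2 \in p\) by closure. It remains to verify \(p \mathrel{R_0} q^{*}\), which holds outright if \(q^{*} = p\); the case \(p \subsetneq q^{*}\) requires checking \(\tilde{q^{*}} \not\subseteq q^{*}\), i.e. that the witness is not an ``\(\O\)-reflexive'' world.

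The main obstacle is precisely the verification of the accessibility conditions for the constructed witnesses --- the clauses \(\tilde q \not\subseteq q\) in the definitions of \(\acc_0\) and of the proper part of \(\mathrel{R_0}\) --- since these encode the irreflexive, well-founded (Löb) character of the logic and are what distinguish the {\LA}-frame semantics from a naive intuitionistic one. For the \(\O\)-case this came for free, because \(D\) itself lies in \(\tilde q \setminus q\); for the \(\Impl\)-case it is the delicate step, and I expect to discharge it using the Löb-style rules \r{\mbox{\bf L}} and \r{\mbox{approx}} together with the derived fixed-point rule \r{\mbox{\bf Y}}, and the finiteness of \(\W_0\) (which follows from \(\F/{\eqtyp}\) being finite). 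The remaining bookkeeping --- that \(\acc_0\) and \(\mathrel{R_0}\) indeed make \(\tuple{\W_0,\,\acc_0,\,R_0}\) an {\LA}-frame --- has already been carried out in the proof of Theorem~\ref{LAm-completeness-theorem}, so establishing this lemma completes that completeness argument.
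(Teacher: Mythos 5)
Your proof follows the paper's own argument almost exactly --- the same induction on the ordering \(\sqsupset\) of \(\pair{p}{r(C)}\), the same case split, and the same witness world in the \(\O D\) case (presented contrapositively, which is fine) --- but the one step you flag as ``delicate'' in the implication case is precisely where your construction diverges from the paper's and where it breaks down. You define the candidate successor as \(q^{*} = \zfset{C' \in \F}{\typ{p \cup \{\,A_1\,\}}{C'}~\mbox{is derivable}}\) and then need \(\tilde{q^{*}} \not\subseteq q^{*}\) whenever \(p \subsetneq q^{*}\). There is no reason for this to hold: \(q^{*}\) may perfectly well be a ``reflexive'' world with \(\tilde{q^{*}} \subseteq q^{*}\), in which case \(p \mathrel{R_0} q^{*}\) fails and the semantic hypothesis \({\models}^{\ttenv_0}_p{(A_1 \Impl A_2)}\) tells you nothing about \(q^{*}\), so the argument stalls. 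No appeal to \r{\mbox{\bf L}}, \r{\mbox{approx}}, or the finiteness of \(\W_0\) rescues the construction as stated, because the difficulty is not that non-reflexivity is hard to verify --- it is that \(q^{*}\) genuinely need not be non-reflexive.

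The missing idea is to seed the witness world with \(\O A_2\) as well: take \(q = \zfset{F \in \F}{\typ{p \cup \{\,A_1,\,\O A_2\,\}}{F}~\mbox{is derivable}}\). The case split on \(\tilde q \subseteq q\) then becomes productive. If \(\tilde q \subseteq q\), then \(\O A_2 \in q\) gives \(A_2 \in \tilde q \subseteq q\), so \(\typ{p \cup \{\,A_1,\,\O A_2\,\}}{A_2}\) is derivable; discharging \(\O A_2\) by \r{{\Impl}\mbox{I}} and applying the strong L\"{o}b rule \r{\mbox{\bf Y}} (Proposition~\ref{LAm-Y}) yields \(\typ{p \cup \{\,A_1\,\}}{A_2}\), hence \(A_1 \Impl A_2 \in p\) --- this branch is closed purely syntactically, with no appeal to the semantics at all. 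If instead \(\tilde q \not\subseteq q\), then \(p \mathrel{R_0} q\) does hold and your intended argument goes through: \(A_1 \in q\), and \(A_2 \in q\) would again force \(A_1 \Impl A_2 \in p\) via \r{{\Impl}\mbox{I}} and \r{\mbox{\bf Y}}, so under the hypothesis \(A_1 \Impl A_2 \notin p\) you get \(A_2 \notin q\), and the induction hypothesis refutes \({\models}^{\ttenv_0}_p{(A_1 \Impl A_2)}\). Everything else in your proposal --- the {\tvariant}, variable, \(\fx\)-unfolding, and \(\O\)-cases, and the closure of worlds under \(\eqtyp\) within \(\F\) --- is sound and matches the paper.
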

\begin{proof}
The proof proceeds by induction on
the ordering \(\sqsupset\) of \(\pair{p}{r(C)}\),
and by cases on the form of \(C\).
If \(C\) is a {\tvariant}, then \(C \in p\) by the definition of \(\W_0\)
and Proposition~\ref{LAm-tvariant}, and
\({\models}^{\ttenv_0}_p{C}\) for every \(p\)
by Definition~\ref{LAm-semantics}.
Hence, we only consider the case that \(C\) is not.

\Case{\(C = X\) for some \(X\).}
Trivial from the definition of \(\ttenv_0(X)\).

\Case{\(C = \O D\) for some \(D\).}
For the ``only if'' part, suppose that
\(\O D \in p\).
If \(p \acc_0 q\), then since \(\tilde{p} \subseteq q\) and
since \(\typ{p}{\O D}\) is derivable, we get
\(D \in q\).
Hence, \({\models}^{\ttenv_0}_q{D}\) by induction hypothesis.
Thus, we get \({\models}^{\ttenv_0}_p{\O D}\) from \(\O D \in p\).
For the ``if'' part,
suppose that \({\models}^{\ttenv_0}_p{\O D}\), i.e.,
\begin{eqnarray}\label{LAm-crucial-lemma-01}
    &&\mbox{\({\models}^{\ttenv_0}_q{D}\;\) for any \(q\;\) such that
    \(p \acc_0 q\)}.
\end{eqnarray}
Let
\[
    q = \zfset{E \in \F}{\typ{\tilde{p}\cup\{\O D\}}{E}~\mbox{is derivable}}.
\]
Note that \(q \in \W_0\) and \(\tilde{p} \subseteq q\).
If \(p \acc_0 q\), then
\({\models}^{\ttenv_0}_q{D}\) by (\ref{LAm-crucial-lemma-01}); therefore,
\(D \in q\) by induction hypothesis.
Otherwise, \(\tilde{q} \subseteq q\).
In this case, again \(D \in q\) since \(D \in \tilde{q}\).
Therefore, \(\typ{\tilde{p}\cup\{\O D\}}{D}\) is derivable in either case;
and hence, we can derive the following judgments.
\begin{Eqnarray*}
    && \typ{\tilde{p}}{\O D \Impl D} & (by \r{{\Impl}\mbox{I}}) \\
    && \typ{\tilde{p}}{D}
	& (by Proposition~\ref{LAm-Y}, namely \r{\mbox{\bf Y}}) \\
    && \typ{\O \tilde{p}}{\O D} & (by \r{\mbox{nec}}) \\
    && \typ{p}{\O D} & (by (\ref{LAm-completeness-theorem-01}) and
	    Proposition~\ref{LAm-weakening})
\end{Eqnarray*}
Thus, we get \(\O D \in p\) from \({\models}^{\ttenv_0}_p{\O D}\).

\Case{\(C = D \Impl E\) for some \(D\) and \(E\).}
For the ``only if'' part, suppose that \(D \Impl E \in p\),
\(p \mathrel{R_0} q\) and \({\models}^{\ttenv_0}_q{D}\).
We get \(D \in q\) from \({\models}^{\ttenv_0}_q{D}\)
by induction hypothesis.
On the other hand,
\(D \Impl E \in q\), since \(p \subseteq q\) from \(p \mathrel{R_0} q\).
Therefore, \(E \in q\); and by induction hypothesis again,
\({\models}^{\ttenv_0}_q{E}\).
Thus, we get \({\models}^{\ttenv_0}_p{D \Impl E}\).
As for the ``if'' part,
suppose that \({\models}^{\ttenv_0}_p{D \Impl E}\), i.e.,
\begin{eqnarray}\label{iLA-completeness-theorem-01}
    &&\mbox{\({\models}^{\ttenv_0}_q{D}\;\) implies
    \(\;{\models}^{\ttenv_0}_q{E}\;\) whenever \(\;p \mathrel{R_0} q\)}.
\end{eqnarray}
Let
\[
    q = \zfset{F \in \F}{\typ{p \cup \{\,D,\,\O E\,\}}{F} ~\mbox{is derivable}}.
\]
Note that \(q \in \W_0\) and \(p \subseteq q\).
If \(\tilde{q} \subseteq q\), then we can derive
\begin{Eqnarray*}
    && \typ{q}{\O E} & (by the definition of \(q\)) \\
    && \typ{\tilde{q}}{E} & (by the definition of \(\tilde{q}\)) \\
    && \typ{q}{E} & (since \(\tilde{q} \subseteq q\) by assumption) \\
    && \typ{p \cup \{\,D,\,\O E\,\}}{E} & (by the definition of \(q\)) \\
    && \typ{p \cup \{\,D\,\}}{\O E \Impl E} & (by \r{{\Impl}\mbox{I}}) \\
    && \typ{p \cup \{\,D\,\}}{E}
	& (by Proposition~\ref{LAm-Y}, namely \r{\mbox{\bf Y}}) \\
    && \typ{p}{D \Impl E} & (by \r{{\Impl}\mbox{I}}).
\end{Eqnarray*}
That is, \(D \Impl E \in p\) in this case.
On the other hand, if \(\tilde{q} \not\subseteq q\),
then \(p \mathrel{R_0} q\) holds.
Hence, \({\models}^{\ttenv_0}_q{D}\) from \(D \in q\)
by induction hypothesis.
Then,
\({\models}^{\ttenv_0}_q{E}\) from (\ref{iLA-completeness-theorem-01});
and therefore, by induction hypothesis again, \(E \in q\), i.e.,
\(\strut\typ{p \cup \{\,D,\,\O E\,\}}{E}\) is derivable, which implies
that so is
\(\typ{p}{D \Impl E}\) by \r{{\Impl}\mbox{I}} and \r{\mbox{\bf Y}}.
Thus, \(D \Impl E \in p\) also in this case.

\Case{\(C = \fix{X}{D}\) for some \(X\) and \(D\).}
For the ``if'' part,
suppose that \({\models}^{\ttenv_0}_p{\fix{X}{D}}\),
i.e., \({\models}^{\ttenv_0}_p{D[\fix{X}{D}/X]}\).
Since \(r(D[\fix{X}{D}/X]) < r(\fix{X}{D})\),
we get \(D[\fix{X}{D}/X] \in p\) by induction hypothesis; and therefore,
\(\fix{X}{D} \in p\strut\) since {\LAm} has \r{\mbox{fold}}.
For the ``only if'' part, suppose that
\(\fix{X}{D} \in p\), i.e., also \(D[\fix{X}{D}/X] \in p\)
by \r{\mbox{unfold}}.
Hence,
\({\models}^{\ttenv_0}_p{D[\fix{X}{D}/X]}\) by induction hypothesis;
and therefore,
\({\models}^{\ttenv_0}_p{\fix{X}{D}}\) by Definition~\ref{LAm-semantics}.
\qed\CHECKED{2014/05/30, 07/10, 07/17}
\end{proof}
This completes the proofs of Theorems~\ref{LAm-completeness-theorem}
and \ref{logical-equiv-theorem}.
Since the counter model constructed in the proof of
Lemma~\ref{LAm-crucial-lemma} is based on a finite frame,
the logic {\LAm} has the finite model property,
and we therefore get the following corollary.
\begin{corollary}
The following problems are decidable.
\begin{Enumerate}
\item Provability in {\LAm}.
\item Type inhabitance in {\lA}.
\end{Enumerate}
\end{corollary}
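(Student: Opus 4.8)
The plan is to reduce both decision problems to the decidability of provability in {\LAm}, and to settle the latter by combining the completeness result with the recursive enumerability of derivations. First I would dispatch item~2 by the equivalence established in Theorem~\ref{logical-equiv-theorem}: conditions~(a) and~(b) there state that $\typ{\{\,x_1:A_1,\,\ldots,\,x_n:A_n\,\}}{M:B}$ is derivable in {\lA} for some $\lambda$-term $M$ if and only if $\typ{\{\,A_1,\,\ldots,\,A_n\,\}}{B}$ is derivable in {\LAm}. Hence a type $B$ is inhabited in a given context in {\lA} exactly when the corresponding judgment is provable in {\LAm}; in particular closed inhabitance of $B$ amounts to provability of $\typ{}{B}$. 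Thus item~2 follows from item~1, and it remains only to decide provability in {\LAm}.

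For item~1 I would exhibit {\LAm}-provability as the intersection of two semidecidable predicates. On one side, provability is recursively enumerable: one simply enumerates all finite {\LAm}-derivations, each node being effectively checkable. On the other side, non-provability is recursively enumerable by virtue of the finite model property. Indeed, by the equivalence (a)$\,\Leftrightarrow\,$(c) of Theorem~\ref{logical-equiv-theorem}, a judgment $\typ{\G}{B}$ is unprovable precisely when some {\LA}-frame $\tuple{\W,\,{\acc},\,R}$ together with a hereditary valuation $\ttenv$ and a world $p$ refutes it, and by Theorem~\ref{LAm-completeness-theorem} (whose counter-model is built on the finite world set $\W_0$ appearing in its proof) such a refuting frame may be taken to be finite. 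One therefore enumerates finite structures $\tuple{\W,\,{\acc},\,R}$, checks that they satisfy Conditions~1--6 of Definition~\ref{LA-frame-def} (each a first-order condition over a finite domain, hence decidable), enumerates the finitely many hereditary valuations on the type variables occurring in the judgment, and evaluates the forcing relation ${\models}^\ttenv_p$ of Definition~\ref{LAm-semantics}. This evaluation terminates because it is defined by the well-founded recursion on $\pair{p}{r(A)}$ noted after Definition~\ref{LAm-semantics} and the frame is finite. Being the intersection of two semidecidable predicates, {\LAm}-provability is decidable.

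To make the procedure genuinely terminating rather than a two-sided search, I would instead use the explicit bound furnished by the completeness construction: there the refuting frame lives on $\W_0 \subseteq \Pow{\F}$, and $\F/{\eqtyp}$ is finite because $\Comp(B)/{\eqtyp}$ and each $\Comp(A_i)/{\eqtyp}$ are finite (Appendix~\ref{comp-finite-sec}). Consequently it suffices to search {\LA}-frames with at most $2^{|\F/{\eqtyp}|}$ worlds together with their hereditary valuations: if none refutes the judgment then it is provable by completeness, and if one does then it is unprovable by soundness, i.e. by the direction (a)$\,\Rightarrow\,$(c). This bounded search decides provability outright, and through the reduction above it decides type inhabitance as well.

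The step I expect to be the main obstacle is establishing the \emph{effectivity} of the ingredients that the completeness proof treats only extensionally. Specifically, forming the finite set of representatives $\F/{\eqtyp}$ presupposes that $\eqtyp$ is a decidable relation; this I would obtain from the canonical-form characterization of Proposition~\ref{canon-geqtyp-equiv}, together with the decidability of whether a type expression is a {\tvariant} and the finiteness of components, reducing equality of type expressions to a finite comparison of their regular unfoldings. Granting that, the remaining points---decidability of the frame conditions, of the hereditarity constraint on valuations, and of the forcing relation over a finite frame---are routine finite computations, and the size bound on $\W_0$ turns them into a single terminating algorithm.
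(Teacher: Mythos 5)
Your proposal is correct and follows essentially the same route as the paper, which derives the corollary from the finite model property exhibited by the counter-model on the finite world set $\W_0$ in the completeness proof, together with the reduction of type inhabitance to {\LAm}-provability via Theorem~\ref{logical-equiv-theorem}. The paper states this in a single sentence and leaves the effectivity details (recursive enumerability of derivations, decidability of $\eqtyp$ and of the frame conditions, the explicit size bound on $\W_0$) implicit, so your more careful spelling-out of these points is a faithful elaboration rather than a different argument.
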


\Subsection{Relationship to intuitionistic modal logic of provability}

The logic {\LAm} allows self-referential formulae.
In this subsection, we show that
if {\LAm} is restricted to finite formulae, i.e.,
those without any occurrence of \(\fx\), then
we get an intuitionistic
version of the logic of provability {\GL}.

\begin{definition}[{\miGL}, {\miGLC} and {\LA}]
    \ilabel{miGL-def}{miGL@\protect\miGL, \protect\miGLC}
    \ilabel{miGLC-def}{LA@\protect\LA}
We define {\miGL}, which only allows
finite formulae, to be the formal system obtained from {\miK4}
(Definition~\ref{miK4-def}) by adding the following inference rule called
L\"{o}b's Principle.
\[
\ifr{\mbox{\bf W}}
    {\typ{\G}{\O(\O A\Impl A)}}
    {\typ{\G}{\O A}}
\]
We also define two more formal systems: {\miGLC} and {\LA},
which again only allow finite formulae,
as \(\mbox{\miGL}+\r{\mbox{approx}}\) and
\(\mbox{\miGLC}+\r{\mbox{\bf L}}\), respectively, where
\r{\mbox{approx}} and \r{\mbox{\bf L}} are those of
Definition~\ref{LAm-def}.
\end{definition}

The formal system {\miGL} corresponds to the minimal and implicational fragment
of {\GL}\footnote{%
The intuitionistic version of {\GL} was introduced
    by Ursini\cite{ursini-au79}, who also showed
    its completeness and finite model property\cite{ursini-sl79}.
    The \r{\mbox{\bf 4}}-rule is redundant if
    conjunctive formulae are available.}, and
it can be shown that {\miGL} is sound and complete
with respect to the Kripke semantics over {\iGL}-frames
(Definitions~\ref{iGL-frame-def} and \ref{LAm-semantics}).
The \r{\mbox{approx}}-rule of {\miGLC} and {\LA}
is called the ``Completeness Principle''
in the context of the logic of provability (cf. \cite{visser-aml82}).

Note that \r{\mbox{\bf W}}, i.e., L\"{o}b's Principle, is
a derivable rule of {\LAm}.
In fact, \(\typ{}{(\O A\Impl A) \Impl A}\)
is derivable in {\LAm} by Proposition~\ref{LAm-Y},
and from which we get \(\typ{}{\O (\O A\Impl A) \Impl \O A}\)
by applying \r{\mbox{nec}} and \r{\mbox{\bf K}}.
Conversely,
\(\typ{}{(\O A\Impl A) \Impl A}\) is derivable in {\miGLC} and {\LA}
by \r{\mbox{\bf W}} and \r{\mbox{approx}} as follows\footnote{%
    If the system allows conjunctive formulae,
    the pair of \r{\mbox{approx}} and \r{\mbox{\bf W}}
    is equivalent to the \r{\mbox{\bf Y}}-rule of Proposition~\ref{LAm-Y},
    which is called the ``Strong L\"{o}b's Principle''. }.
\[
\Mtight
\ifr{{\Impl}\mbox{I}}
    {\ifr{{\Impl}\mbox{E}}
       {\ifr{\mbox{assump}}
	   {}
	   {\typ{\{\,\O A\Impl A\,\}}{\O A \Impl A}}
       & \ifr{\mbox{\bf W}}
             {\Ifr{\mbox{\r{\mbox{approx}}}}
                  {\ifr{\mbox{assump}}
                       {}
                       {\typ{\{\,\O A\Impl A\,\}}{\O A \Impl A}}}
                  {\typ{\{\,\O A\Impl A\,\}}{\O(\O A \Impl A)}}}
             {\typ{\{\,\O A\Impl A\,\}}{\O A}}}
       {\typ{\{\,\O A\Impl A\,\}}{A}}}
    {\typ{}{(\O A \Impl A) \Impl A}}
\]
Therefore, since the only role of \r{\mbox{fold}} and \r{\mbox{unfold}}
for finite formulae in the proof of
Theorem~\ref{LAm-completeness-theorem} is the derivability of
\((\O A \Impl A) \Impl A\), i.e., Proposition~\ref{LAm-Y},
which is used in the ``if'' part of the proof
of Lemma~\ref{LAm-crucial-lemma},
we get the following.
\begin{theorem}[Kripke completeness of {\LA}]
    \ilabel{LA-completeness-theorem}{completeness!LA@\protect\LA}
The formal system {\LA} is also Kripke complete with respect to
{\LA}-frames.
\end{theorem}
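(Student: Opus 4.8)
The plan is to obtain the Kripke completeness of {\LA} as an almost immediate corollary of the completeness proof already carried out for {\LAm} (Theorem~\ref{LAm-completeness-theorem}), by restricting attention to finite formulae and tracing exactly where the recursive-type machinery was used. First I would observe that {\LA} is, by definition, \(\mbox{\miGLC}+\r{\mbox{\bf L}}\), which is {\miK4} augmented with L\"{o}b's Principle \r{\mbox{\bf W}}, the Completeness Principle \r{\mbox{approx}}, and \r{\mbox{\bf L}}; and that {\LA}-frames are precisely the intuitionistic well-founded frames satisfying Conditions~5 and~6 of Definition~\ref{LA-frame-def}. Soundness of {\LA} with respect to {\LA}-frames is already covered by the soundness direction used for {\LAm}, since every inference rule of {\LA} (for finite formulae) is among those whose validity was checked in the proof of Proposition~\ref{lA-logic-soundness}; indeed \r{\mbox{\bf W}} is derivable once one has \r{\mbox{approx}}, \r{\mbox{\bf K}} and \r{\mbox{nec}}, all of which are present. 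So the real content is completeness.

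For completeness, I would reuse verbatim the canonical-model construction of Theorem~\ref{LAm-completeness-theorem}, but with the filtration set \(\F\) built from the (finitely many) subformulae of the finite formulae \(A_1,\ldots,A_n,B\) rather than from \(\Comp(\cdot)\). Since all formulae are finite, \(\F\) is automatically finite, so the set of worlds \(\W_0\) is finite without needing the finiteness result for \(\Comp(B)/{\eqtyp}\). The definitions of \(\tilde{p}\), \(\acc_0\), \(R_0\) and \(\ttenv_0\) carry over unchanged, as do the verifications that \(\tuple{\W_0,\,{\acc_0},\,R_0}\) is an {\LA}-frame (Conditions~1--3, 5, 6), because those arguments only invoked \r{\mbox{nec}}, \r{\mbox{approx}}, \r{\mbox{\bf L}}, \r{{\Impl}\mbox{I}} and \r{{\Impl}\mbox{E}}, every one of which is a rule of {\LA}.

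The one place where the recursive-type rules \r{\mbox{fold}} and \r{\mbox{unfold}} entered the {\LAm} argument is flagged explicitly in the text: in the proof of Lemma~\ref{LAm-crucial-lemma} they were used, through Proposition~\ref{LAm-Y}, only to secure the derivability of \((\O A \Impl A) \Impl A\) (used in the ``if'' direction for the \(\O\)- and \(\Impl\)-cases). The crucial step of my plan is therefore to replace that appeal by the derivation of \((\O A \Impl A) \Impl A\) exhibited above for {\miGLC} and {\LA} using \r{\mbox{approx}} and \r{\mbox{\bf W}}. Once \r{\mbox{\bf Y}} is recovered in {\LA} this way, the truth lemma (the analogue of Lemma~\ref{LAm-crucial-lemma}) goes through by the same induction on \(\sqsupset\) over \(\pair{p}{r(C)}\), now with no \(\fix{X}{D}\)-case to treat since formulae are finite. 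The main obstacle I anticipate is purely bookkeeping: confirming that no other hidden use of \r{\mbox{fold}}/\r{\mbox{unfold}} or of \(\Comp\)-finiteness survives once formulae are finite, and that dropping the \(\fix\)-case does not break the well-foundedness of the induction ordering. Given the explicit remark in the excerpt pinpointing the sole essential use, I expect this verification to succeed, yielding the finite model property and completeness of {\LA} with respect to {\LA}-frames.

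\begin{proof}
The soundness direction is contained in the soundness argument for {\LAm}, since every rule of {\LA} on finite formulae is validated there, L\"{o}b's Principle \r{\mbox{\bf W}} being derivable from \r{\mbox{approx}}, \r{\mbox{nec}} and \r{\mbox{\bf K}}. For completeness we repeat the construction in the proof of Theorem~\ref{LAm-completeness-theorem}, taking \(\F\) to be closed under subformulae of \(A_1,\ldots,A_n,B\); as all formulae are finite, \(\F\) and hence \(\W_0\) are finite. The definitions of \(\tilde{p}\), \(\acc_0\), \(R_0\), \(\ttenv_0\) and the verification that \(\tuple{\W_0,\,{\acc_0},\,R_0}\) is an {\LA}-frame are unchanged, using only rules available in {\LA}. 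The analogue of Lemma~\ref{LAm-crucial-lemma} is proved by the same induction on \(\pair{p}{r(C)}\), with no \(\fix{X}{D}\)-case; the only use of \r{\mbox{fold}} and \r{\mbox{unfold}} in the original argument was to derive \((\O A \Impl A) \Impl A\) via Proposition~\ref{LAm-Y}, and this formula is directly derivable in {\LA} from \r{\mbox{approx}} and \r{\mbox{\bf W}} as shown above. Thus \(B \not\in p_0\) while \(A_i \in p_0\) for every \(i\), giving the required countermodel.
\qed
\end{proof}
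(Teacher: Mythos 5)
Your proposal is correct and follows essentially the same route as the paper: the paper's own justification is precisely the observation that the only role of \r{\mbox{fold}} and \r{\mbox{unfold}} for finite formulae in the proof of Theorem~\ref{LAm-completeness-theorem} is to secure \((\O A \Impl A) \Impl A\) via Proposition~\ref{LAm-Y}, which in {\LA} is recovered from \r{\mbox{\bf W}} and \r{\mbox{approx}}, so the same canonical-model construction and truth lemma go through with the \(\fx\)-case vacuous. Your additional remarks (finiteness of \(\F\) being automatic for finite formulae, and the frame verification using only rules available in {\LA}) match the paper's implicit bookkeeping.
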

\begin{table}[tb]
\caption{Summary of the systems}\label{comparison-table}
\begin{center}
\renewcommand\arraystretch{1.3}\small
\def\L#1#2{\setbox0=\hbox{\lower #1pt\hbox{#2}}\dp0=0pt\copy0}%
\def\H#1{\hbox to 45pt{\hfill#1\hfill}}
\begin{tabular}{|c|c|c|c|c|c|} \cline{2-6}
     \multicolumn{1}{c|}{}
     & \H{\miGL} & \H{\miGLC} & \H{\LA} & \H{\LAm} & \H{\lA}
     \\ \cline{2-6}\noalign{\vskip2pt}\hline
\multicolumn{1}{|c}{system}
 & \multicolumn{4}{|c|}{logic} & type \\ \hline\hline
\multicolumn{1}{|c}{recursive types}
 & \multicolumn{3}{|c|}{no} & \multicolumn{2}{|c|}{yes} \\ \hline\hline
\L{7}{frames}
 &\multicolumn{5}{|c|}{well-founded} \\ \cline{2-6}
 &\multicolumn{2}{|c|}{\em n/a}
     &\multicolumn{3}{|c|}{{\em locally} linear} \\ \hline\hline
interpretation
 & \em n/a & \multicolumn{4}{|c|}{hereditary w.r.t. $\acc$} \\ \hline\hline
 &\multicolumn{4}{|c|}{\r{\mbox{assump}}} & \r{\mbox{var}} \\ \cline{2-6}
 &\multicolumn{5}{|c|}{\r{{\Impl}\mbox{I}}} \\ \cline{2-6}
 &\multicolumn{5}{|c|}{\r{{\Impl}\mbox{E}}} \\ \cline{2-6}
 &\multicolumn{4}{|c|}{\r{\mbox{nec}}}
     & \em derivable \\ \cline{2-6}
 &\multicolumn{4}{|c|}{\r{\mbox{\bf K}} {\em drivable}}
     & \L{7}{\r{{\peqtyp}\mbox{-{\bf K}/{\bf L}}}} \\ \cline{2-5}
\L{7}{rules}
 &\multicolumn{2}{|c|}{\em n/a}
     & \multicolumn{2}{|c|}{\r{\mbox{\bf L}}} & \\ \cline{2-6}
 &\multicolumn{1}{|c|}{\r{\mbox{\bf 4}}}
     &\multicolumn{3}{|c|}{\L{7}{\r{\mbox{approx}}}}
	&\L{7}{\r{{\psubtyp}\mbox{-approx}}} \\ \cline{2-2}
 & \em n/a & \multicolumn{3}{|c|}{} & \\ \cline{2-6}
 & \em n/a & \multicolumn{3}{|c|}{\em derivable}
	 & \r{\mbox{shift}} \\ \cline{2-6}
 & \multicolumn{3}{|c|}{\r{\mbox{\bf W}}}
     & \multicolumn{2}{|c|}{\em derivable} \\ \cline{2-6}
 & \multicolumn{3}{|c|}{\em \L{4}{derivable}}
     & \r{\mbox{fold/unfold}}
	 & \r{{\eqtyp}\mbox{-fix}} \\ \cline{5-6}
 & \multicolumn{4}{|l|}{%
	     \L{-1}{\em (for logical fixed-points of finite formulae)}}
	 & \r{{\eqtyp}\mbox{-uniq}} \\ \hline\hline
 & \multicolumn{5}{|c|}{\(\O (A\Impl B) \Impl \O A \Impl \O B\)} \\ \cline{2-6}
 & \multicolumn{2}{|c|}{\em n/a}
     & \multicolumn{3}{|c|}{\((\O A \Impl \O B) \Impl \O(A \Impl B)\)}
     \\ \cline{2-6}
\L{7}{theorems}
 & \multicolumn{5}{|c|}{\(\O A\Impl \O\O A\)} \\ \cline{2-6}
 & \em n/a & \multicolumn{4}{|c|}{\(A \Impl \O A\)} \\ \cline{2-6}
 & \multicolumn{5}{|c|}{\(\O (\O A\Impl A) \Impl \O A\)} \\ \cline{2-6}
 & \em n/a & \multicolumn{4}{|c|}{\((\O A \Impl A) \Impl A\)} \\ \hline
\end{tabular}
\end{center}
\end{table}
This theorem also implies, with Theorem~\ref{LAm-completeness-theorem},
that {\LAm} is a conservative extension of {\LA}.
Sambin\cite{sambin-sl76} proved that
in the intuitionistic version of {\GL}, for any \(X\) and \(A\)
such that \(X\) only occurs in \(A\)
within the scope of an occurrence of the modal operator,
we can construct a formula that acts as the (logically) unique fixed-point
of the propositional function that maps \(X\) to \(A\).
In case of systems involving the \r{\mbox{approx}}-rule,
such as {\miGLC} and {\LA}, it is also known that
the fixed-point is quite simple, and can be \(A[\t/X]\)
(cf. \cite{dejongh-visser94}).
For example, in such a logic, the recursive type
\(\fix{X}A \Conj \O X\), which represents the type of infinite streams of \(A\),
is {\em logically} representable by \(A \Conj \O \t\)\footnote{%
    \(A\) can be also a fixed-point of \(X \mapsto A \Conj \O X\). },
since
\(\typ{}{(A \Conj \O \t) \Lequiv (A \Conj \O (A \Conj \O \t))}\).
That is, as a logic, any recursive type is replaceable by
a finite formula that is logically equivalent.
However, this is not the case from the type theoretical point of view,
since we need a fixed-point as a set of realizers.

The correspondence between {\lA} and the logical systems can be
summarized as Table~\ref{comparison-table}\footnote{%
In this table, ``{\em n/a}'' should be read as
``not always'' or ``not available''.}.
As seen in this table, the logic behind {\lA} can be considered
the minimal and implicational fragment of the logic of provability
with the ``Completeness Principle'' and the axiom schema
\((\O A \Impl \O B) \Impl \O (A \Impl B)\).

\Section{Concluding remarks}\label{conclusion-sec}

In this paper, a modal typed \(\lambda\)-calculus {\lA}
with recursive types has been presented,
and its soundness with respect to a realizability interpretation and
the convergence of well-typed \(\lambda\)-terms
according to their types have been shown.
We have also shown that the modal logic behind {\lA}
can be regarded as an intuitionistic fragment of the logic of provability,
and shown its Kripke completeness
with respect to intuitionistic, (conversely) well-founded and
{\rm locally} linear frames.
By the completeness theorem,
decidability of type inhabitance in {\lA} has been also shown.
However, the decidability questions for type checking and typability
of \(\lambda\)-terms in {\lA} are still open.

The connection to the logic of provability suggests
certain subsystems of {\lA} which
includes \(\O A \subtyp \O\O A\) instead of \(A \subtyp \O A\), or
does not have \(\O A \Impl \O B \subtyp \O (A \Impl B)\), namely
the converse of \r{\mbox{\bf K}}.
For example, it seems that
all the examples we have seen in Section~\ref{program-sec} can be also
captured in the subsystem without \(\O A \Impl \O B \subtyp \O (A \Impl B)\).
In this sense, the author does not see any apparent significance
of this rule in practice so far.
However, he also thinks {\lA}, with this rule, more preferable
for a basis for logic of programming, because it has much simpler formulation.
Note that we would need two additional rules, namely \r{\mbox{nec}} and
\r{\mbox{subst}} without the equality
\(\O A \Impl \O B \peqtyp \O (A \Impl B)\).

Although {\lA} was presented as a typed \(\lambda\)-calculus,
the author does not think that it is directly applicable to
type systems of programming languages.
Since our framework can assert the convergence of derived programs,
typing general recursive programs naturally requires
some (classical) arithmetic
as seen in the cases of the 91-function and the sieve of Eratosthenes,
which would make mechanical type checking impossible.
Our goal is rather to capture a wider range of programs
in the proofs-as-programs paradigm and give an axiomatic semantics
to them preserving the compositionality of programs.
We have seen that our approach is
applicable to some interesting programs such as fixed-point combinators
and objects with binary methods,
which have not been captured in the conventional frameworks.

\section*{References}
\addcontentsline{toc}{section}{References}
%
%
\bibliographystyle{elsart-alpha}
\bibliography{paper}

\begin{thebibliography}{10}
\expandafter\ifx\csname url\endcsname\relax
  \def\url#1{\texttt{#1}}\fi
\expandafter\ifx\csname urlprefix\endcsname\relax\def\urlprefix{URL }\fi

\bibitem{abadi:cardelli}
M.~Abadi, L.~Cardelli, A theory of objects, Springer-Verlag, 1996.

\bibitem{amadio:cardelli}
R.~M. Amadio, L.~Cardelli, Subtyping recursive types, ACM Transactions on
  Programming Languages and Systems 15~(4) (1993) 575--631.

\bibitem{barendregt}
H.~P. Barendregt, Lambda calculi with types, in: S.~Abramsky, D.~M. Gabbay,
  T.~S.~E. Maibaum (Eds.), Handbook of Logic in Computer Science, Vol.~2,
  Oxford University Press, 1992, pp. 118--309.

\bibitem{boolos}
G.~Boolos, The logic of provability, Cambridge University Press, 1993.

\bibitem{bruce:cardelli:pierce}
K.~B. Bruce, L.~Cardelli, B.~C. Pierce, Comparing object encodings, Information
  and Computation 155 (1999) 108--133.

\bibitem{cardelli}
L.~Cardelli, Amber, in: G.~Cousineau, P.-L. Curien, B.~Robinet (Eds.),
  Combinators and functional programming languages, Vol. 242 of Lecture Notes
  in Computer Science, Springer-Verlag, 1986, pp. 21--47.

\bibitem{cardone:coppo}
F.~Cardone, M.~Coppo, Type inference with recursive types: syntax and
  semantics, Information and Computation 92~(1) (1991) 48--80.

\bibitem{constable}
R.~L. Constable, S.~Allen, H.~Bromely, W.~Cleveland, et~al., Implementing
  Mathematics with the Nuprl Proof Development System, Prentice-Hall, 1986.

\bibitem{constable:mendler}
R.~L. Constable, N.~P. Mendler, Recursive definitions in type theory, in:
  Logics of Programs, Vol. 193 of Lecture Notes in Computer Science,
  Springer-Verlag, 1985, pp. 61--78.

\bibitem{constable:smith}
R.~L. Constable, S.~F. Smith, Partial objects in constructive type theory, in:
  Proceedings of the 2nd IEEE Symposium on Logic in Computer Science, IEEE
  Computer Society Press, 1987, pp. 183--193.

\bibitem{courcelle}
B.~Courcelle, Fundamental properties of infinite trees, Theoretical Computer
  Science 25 (1983) 95--169.

\bibitem{dejongh-visser94}
D.~de~Jongh, A.~Visser, Embeddings of Heyting algebras, revised version, Vol.
  115 of Logic Group Preprint Series, Utrecht University, 1994.

\bibitem{gurevich:shelah}
Y.~Gurevich, S.~Shelah, Fixed-point extensions of first-order logic, Annals of
  Pure and Applied Logic 32~(3) (1986) 265--280.

\bibitem{hayashi:nakano}
S.~Hayashi, H.~Nakano, {PX}: A Computational Logic, The {MIT} {P}ress, 1988.

\bibitem{hindley}
R.~Hindley, The completeness theorem for typing \(\lambda\)-terms, Theoretical
  Computer Science 22 (1983) 1--17.

\bibitem{hindley-F}
R.~Hindley, Curry's type-rules are complete with respect to \hbox{F}-sematics
  too, Theoretical Computer Science 22 (1983) 127--133.

\bibitem{howard}
W.~A. Howard, The formulae-as-types notion of construction, in: R.~J. Hindley,
  J.~P. Seldin (Eds.), To H.B.\,Curry: Essays on Combinatory Logic, Lambda
  Calculus and Formalism, Academic Press, 1980, pp. 480--490.

\bibitem{kobayashi:tatsuta}
S.~Kobayashi, M.~Tatsuta, Realizability interpretation of generalized inductive
  definitions, Theoretical Computer Science 131~(1) (1994) 121--138.

\bibitem{kozen}
D.~C. Kozen, Results on the propositional \(\mu\)-calculus, Theoretical
  Computer Science 27~(3) (1983) 333--354.

\bibitem{mps}
D.~B. MacQueen, G.~D. Plotkin, R.~Sethi, An ideal model for recursive
  polymorphic types, Information and Computation 71 (1986) 95--130.

\bibitem{nakano-lics00}
H.~Nakano, A modality for recursion, in: Proceedings of the 15th IEEE Symposium
  on Logic in Computer Science, IEEE Computer Society Press, 2000, pp.
  255--266.

\bibitem{nakano-tacs01}
H.~Nakano, Fixed-point logic with the approximation modality and its {K}ripke
  completeness, in: Proceedings of the 4th International Symposium on
  Theoretical Aspects of Computer Software, Springer-Verlag, 2001, pp.
  165--182.

\bibitem{paulin-mohring}
C.~Paulin-Mohring, Extracting \({F}_\omega\)'s programs from proofs in the
  calculus of constructions, in: Proceedings of the 16th ACM Symposium on
  Principles of Programming Languages, 1989, pp. 89--104.

\bibitem{pratt}
V.~R. Pratt, A decidable \(\mu\)-calculus (preliminary report), in: Proceedings
  of the 22nd IEEE Symposium on Foundation of Computer Science, 1981, pp.
  421--427.

\bibitem{sambin-sl76}
G.~Sambin, An effective fixed-point theorem in intutionistic diagonalizable
  algebras, Studia Logica 35 (1976) 345--361.

\bibitem{scott}
D.~S. Scott, A type-theoretical alternative to {ISWIM}, {CUCH}, {OWHY},
  Theoretical Computer Science 121 (1993) 411--440.

\bibitem{tait}
W.~W. Tait, Intensional interpretations of functionals of finite type {I},
  Journal of Symbolic Logic 32 (1967) 198--212.

\bibitem{tatsuta}
M.~Tatsuta, Realizability interpretation of coinductive definitions and program
  synthesis with streams, Theoretical Computer Science 122 (1994) 119--136.

\bibitem{ursini-au79}
A.~Ursini, Intutionistic diagonalizable algebras, Algebra Universalis 9 (1979)
  229--237.

\bibitem{ursini-sl79}
A.~Ursini, A modal calculus analogous to {K4W}, based on intutionistic
  propositional logic, {I}${}^\circ$, Studia Logica 38 (1979) 297--311.

\bibitem{visser-aml82}
A.~Visser, On the completeness principle: a study of provability in
  {Heytings}'s arithmetic and extensions, Annals of Mathematical Logic 22
  (1982) 263--295.

\end{thebibliography}

\begin{appendices}
\appendixeqno
\Section{Appendix}\label{appendix-sec}

\Subsection{Proof of Proposition~\ref{geqtyp-subst}}\label{geqtyp-subst-sec}

We prove a more general propositions as follows.
The \r{{\eqtyp}\mbox{-uniq}}-rule is crucial to
the proof.

\begin{proposition}\label{appendix-geqtyp-subst}
Let \(n\) be a non-negative integer,
\(X_1\), \(X_2\), \(\ldots\), \(X_n\) distinct type variables, and
\(A\), \(B\), \(C_1\), \(C_2\), \(\ldots\), \(C_n\),
\(D_1\), \(D_2\), \(\ldots\), \(D_n\) type expressions.
Let \([\vec{C}/\vec{X}]\) and \([\vec{D}/\vec{X}]\) be abbreviations
for \([C_1/X_1,\,C_2/X_2,\,\ldots,\,C_n/X_n]\)
and \([D_1/X_1,\,D_2/X_2,\,\ldots,\,D_n/X_n]\), respectively.
If\/ \(A \geqtyp B\) and
\(C_i \geqtyp D_i\) for every \(i\) (\(i = 1,\,2,\,\ldots,\,n\)),
then \(A[\vec{C}/\vec{X}] \geqtyp B[\vec{D}/\vec{X}]\).
\end{proposition}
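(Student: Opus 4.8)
The plan is to reduce Proposition~\ref{appendix-geqtyp-subst} to two independent substitution facts and then combine them by transitivity. Writing $[\vec C/\vec X]$ and $[\vec D/\vec X]$ as in the statement, I would first prove \textbf{(B)}: if $A \geqtyp B$, then $A[\vec C/\vec X] \geqtyp B[\vec C/\vec X]$ (substituting the \emph{same} tuple $\vec C$ into equivalent types), and separately \textbf{(A)}: if $C_i \geqtyp D_i$ for every $i$, then $A[\vec C/\vec X] \geqtyp A[\vec D/\vec X]$ (congruence of $\geqtyp$ in the substituted expressions). Granting both, the proposition follows from $A[\vec C/\vec X] \geqtyp B[\vec C/\vec X] \geqtyp B[\vec D/\vec X]$ by \r{{\eqtyp}\mbox{-trans}}, using (B) for the first step and (A) applied to $B$ for the second. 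Throughout I would freely $\alpha$-convert so that every bound variable $Y$ occurring in the types at hand is distinct from the $X_i$ and from $\FTV{C_i}\cup\FTV{D_i}$, which makes the substitution-composition identities (e.g.\ $E[\vec C/\vec X][\fix{Y}(E[\vec C/\vec X])/Y] = (E[\fix{Y}E/Y])[\vec C/\vec X]$) hold on the nose.

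Claim (B) I would prove by induction on the derivation of $A \geqtyp B$, by cases on the last rule. Reflexivity, symmetry, transitivity and the congruence rules \r{{\eqtyp}\mbox{-}{\O}}, \r{{\eqtyp}\mbox{-}{\Impl}}, \r{{\eqtyp}\mbox{-}{\Impl}{\t}} and \r{{\peqtyp}\mbox{-{\bf K}/{\bf L}}} are routine, since substitution commutes with the type constructors and each rule re-applies after substitution. The only delicate cases involve $\fx$. For \r{{\eqtyp}\mbox{-fix}}, where $A=\fix{Y}E$ and $B = E[\fix{Y}E/Y]$, I would observe $(\fix{Y}E)[\vec C/\vec X] = \fix{Y}(E[\vec C/\vec X])$, unfold it by \r{{\eqtyp}\mbox{-fix}}, and rewrite using the substitution-composition identity to reach $(E[\fix{Y}E/Y])[\vec C/\vec X]$. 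For \r{{\eqtyp}\mbox{-uniq}}, where $B = \fix{Y}E$ with $A \geqtyp E[A/Y]$ and $E$ proper in $Y$, the induction hypothesis gives $A[\vec C/\vec X] \geqtyp E[\vec C/\vec X][A[\vec C/\vec X]/Y]$; since $E$ proper in $Y$ implies $E[\vec C/\vec X]$ proper in $Y$ (Proposition~\ref{proper-subst1}, because each $C_i$ is proper in $Y$ by Proposition~\ref{etv-proper}, as $Y \notin \FTV{C_i}$), I would re-apply \r{{\eqtyp}\mbox{-uniq}} to conclude $A[\vec C/\vec X] \geqtyp \fix{Y}(E[\vec C/\vec X]) = B[\vec C/\vec X]$.

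Claim (A) I would prove by induction on the height of $A$. The variable, $\O$ and $\Impl$ cases are immediate from \r{{\eqtyp}\mbox{-reflex}} and the congruence rules together with the structural induction hypothesis. The crux is $A=\fix{Y}E$: the structural hypothesis yields $F \geqtyp G$ where $F = E[\vec C/\vec X]$ and $G = E[\vec D/\vec X]$, and I must derive $\fix{Y}F \geqtyp \fix{Y}G$. The obstacle is that the obvious tool, Proposition~\ref{geqtyp-fix-congr}, is itself proved \emph{from} geqtyp-subst and so cannot be invoked without circularity. Instead I would argue directly: by \r{{\eqtyp}\mbox{-fix}}, $\fix{Y}F \geqtyp F[\fix{Y}F/Y]$; by Claim (B) applied to $F \geqtyp G$ with the single substitution $[\fix{Y}F/Y]$, we get $F[\fix{Y}F/Y] \geqtyp G[\fix{Y}F/Y]$; and since $G$ is proper in $Y$ (again Proposition~\ref{proper-subst1}), \r{{\eqtyp}\mbox{-uniq}} turns $\fix{Y}F \geqtyp G[\fix{Y}F/Y]$ into $\fix{Y}F \geqtyp \fix{Y}G$. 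Thus (A) depends on (B) but not conversely, so the two inductions are carried out in that order.

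I expect the main difficulty to be precisely this orchestration of the \r{{\eqtyp}\mbox{-uniq}} applications: reconstructing the recursive types after substitution while sidestepping the circular appeal to Proposition~\ref{geqtyp-fix-congr}, and keeping track of properness preservation under substitution. Once those points are handled, everything else reduces to bookkeeping with substitution composition and $\alpha$-conversion.

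\begin{proof}
\emph{(Proof sketch to be expanded; see the plan above.)}
\qed
\end{proof}
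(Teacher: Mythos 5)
Your proof is correct, and it assembles the same two ingredients as the paper's argument --- a derivation-induction for the ``change $A$ to $B$'' half and a height-induction congruence lemma for the ``change $\vec C$ to $\vec D$'' half, with \r{{\eqtyp}\mbox{-uniq}} doing the real work in both $\fx$-cases --- but it wires them together differently. The paper proves the full statement (both changes at once) by a single induction on the derivation of $A \geqtyp B$, invoking the congruence lemma (your Claim~(A), its Lemma~\ref{appendix-eqtyp-subst-0-lemma}) inside the \r{{\eqtyp}\mbox{-reflex}}, \r{{\eqtyp}\mbox{-trans}}, \r{{\eqtyp}\mbox{-fix}} and \r{{\eqtyp}\mbox{-uniq}} cases; you instead factor the statement as $A[\vec C/\vec X] \geqtyp B[\vec C/\vec X] \geqtyp B[\vec D/\vec X]$ and prove the two halves independently. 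The other genuine difference is in the $\fix{Y}$ case of the congruence lemma: the paper states that lemma for an arbitrary-length substitution vector precisely so that the fix case can extend the vector by one (adding the identical entry $\fix{Y}A'[\vec C/\vec X]$ on both sides) and recurse on the smaller height of the body, keeping the lemma entirely self-contained; you instead discharge that step by appealing to your already-proved Claim~(B) with the single substitution $[\fix{Y}F/Y]$, which is exactly the content of Proposition~\ref{geqtyp-fix-congr} but obtained without the circularity you rightly flag. Both resolutions are sound; the paper's buys a self-contained lemma at the cost of the vector-extension bookkeeping, while yours buys a cleaner dependency order (B before A) at the cost of making the congruence lemma depend on the derivation-induction. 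Your properness bookkeeping via Propositions~\ref{proper-subst1} and~\ref{etv-proper} matches what the paper needs and uses.
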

\begin{proof}
By induction on the derivation of \(A \geqtyp B\),
and by cases on the last rule applied in the derivation.
Most cases are straightforward, where
in the cases of \r{{\eqtyp}\mbox{-reflex}} and \r{{\eqtyp}\mbox{-trans}},
use Lemma~\ref{appendix-eqtyp-subst-0-lemma}, which will be shown later.
If the last rule is \r{{\eqtyp}\mbox{-fix}}, then
\(A = \fix{Y}A'\) and \(B = A'[\fix{Y}A'/Y]\) for some \(Y\) and \(A'\).
We can assume that \(Y \not\in \{\,X_i\,\} \cup \FTV{C_i} \cup \FTV{D_i}\)
for every \(i\).
Then,
\begin{Eqnarray*}
\fix{Y}A'[\vec{C}/\vec{X}]
    &\geqtyp& A'[\vec{C}/\vec{X}][\fix{Y}A'[\vec{C}/\vec{X}]/Y]
			& (by \r{{\eqtyp}\mbox{-fix}}) \\
    &=& A'[\vec{C}/\vec{X},\,\fix{Y}A'[\vec{C}/\vec{X}]/Y]
			& (since \(Y \not\in \FTV{\vec{C}}\)) \\
    &\geqtyp& A'[\vec{D}/\vec{X},\,\fix{Y}A'[\vec{C}/\vec{X}]/Y]
		& (by Lemma~\ref{appendix-eqtyp-subst-0-lemma}) \\
    &=& A'[\vec{D}/\vec{X}][\fix{Y}A'[\vec{C}/\vec{X}]/Y]
			& (since \(Y \not\in \FTV{\vec{D}}\)).
\end{Eqnarray*}
Since \(A'[\vec{D}/\vec{X}]\) is also proper in \(Y\)
by Proposition~\ref{proper-subst1}, we now get
\(\fix{Y}A'[\vec{C}/\vec{X}] \geqtyp \fix{Y}A'[\vec{D}/\vec{X}]\)
by \r{{\eqtyp}\mbox{-uniq}}; and therefore,
\begin{Eqnarray*}
A[\vec{C}/\vec{X}] &=& \fix{Y}A'[\vec{C}/\vec{X}] \\
    &\geqtyp& \fix{Y}A'[\vec{D}/\vec{X}] &  \\
    &\geqtyp& A'[\vec{D}/\vec{X}][\fix{Y}A'[\vec{D}/\vec{X}]/Y]
			& (by \r{{\eqtyp}\mbox{-fix}}) \\
    &=& A'[\vec{D}/\vec{X},\,\fix{Y}A'[\vec{D}/\vec{X}]/Y]
			& (since \(Y \not\in \FTV{\vec{D}}\)) \\
    &=& A'[\fix{Y}A'/Y][\vec{D}/\vec{X}] \\
    &=& B[\vec{D}/\vec{X}].
\end{Eqnarray*}
If the last rule is \r{{\eqtyp}\mbox{-uniq}}, then
\(B = \fix{Y}B'\) for some \(Y\) and \(B'\) such that
\(A \geqtyp B'[A/Y]\) and \(B'\) is proper in \(Y\).
We can assume that \(Y \not\in \{\,X_i\,\} \cup \FTV{C_i} \cup \FTV{D_i}\)
for every \(i\).
Then,
\begin{Eqnarray*}
A[\vec{C}/\vec{X}]
    &\geqtyp& B'[A/Y][\vec{D}/\vec{X}]
			& (by ind. hyp.) \\
    &=& B'[\vec{D}/\vec{X},\,A[\vec{D}/\vec{X}]/Y] \\
    &=& B'[\vec{D}/\vec{X}][A[\vec{D}/\vec{X}]/Y]
			& (since \(Y \not\in \FTV{\vec{D}}\)) \\
    &\geqtyp& B'[\vec{D}/\vec{X}][A[\vec{C}/\vec{X}]/Y]
		& (by using Lemma~\ref{appendix-eqtyp-subst-0-lemma} twice).
\end{Eqnarray*}
Note that \(B'[\vec{D}/\vec{X}]\) is also proper in \(Y\)
by Proposition~\ref{proper-subst1}.
Hence, we now get
\(A[\vec{C}/\vec{X}] \geqtyp \fix{Y}B'[\vec{D}/\vec{X}]\)
by \r{{\eqtyp}\mbox{-uniq}}; and therefore,
\(A[\vec{C}/\vec{X}] \geqtyp B[\vec{D}/\vec{X}]\).
\qed\CHECKED{2014/04/23}
\end{proof}

\begin{lemma}\label{appendix-eqtyp-subst-0-lemma}
Let \(n\) be a non-negative integer,
\(X_1\), \(X_2\), \(\ldots\), \(X_n\) distinct type variables, and
\(A\), \(C_1\), \(C_2\), \(\ldots\), \(C_n\),
\(D_1\), \(D_2\), \(\ldots\), \(D_n\) type expressions.
If\/ \(C_i \geqtyp D_i\) for every \(i\) (\(i = 1,\,2,\,\ldots,\,n\)),
then \(A[C_1/X_1,\,C_2/X_2,\,\ldots,\,C_n/X_n] \geqtyp
	A[D_1/X_1,\,D_2/X_2,\,\ldots,\,D_n/X_n]\).
\end{lemma}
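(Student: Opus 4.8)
The plan is to prove Lemma~\ref{appendix-eqtyp-subst-0-lemma} by induction on the height \(h(A)\), and by cases on the form of \(A\). This lemma is the ``congruence under substitution'' fact for \(\geqtyp\) when only the substituted expressions vary (the bound structure of \(A\) being fixed), and it is exactly what is invoked in the harder cases of Proposition~\ref{appendix-geqtyp-subst}. Let me write \([\vec{C}/\vec{X}]\) and \([\vec{D}/\vec{X}]\) for the simultaneous substitutions as in the statement, and assume \(C_i \geqtyp D_i\) for every \(i\).

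First I would dispose of the base and easy inductive cases. If \(A = Y\) for some type variable \(Y\), then either \(Y = X_i\) for some \(i\), in which case \(A[\vec{C}/\vec{X}] = C_i \geqtyp D_i = A[\vec{D}/\vec{X}]\) by assumption, or \(Y\) is none of the \(X_i\), in which case both substitutions leave \(A\) unchanged and \(A[\vec{C}/\vec{X}] = Y \geqtyp Y = A[\vec{D}/\vec{X}]\) by \r{{\eqtyp}\mbox{-reflex}}. If \(A = \O B\), then by the induction hypothesis \(B[\vec{C}/\vec{X}] \geqtyp B[\vec{D}/\vec{X}]\), and applying \r{{\eqtyp}\mbox{-}{\O}} gives \(\O B[\vec{C}/\vec{X}] \geqtyp \O B[\vec{D}/\vec{X}]\), which is what we want since substitution commutes with the \(\O\)-constructor. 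The case \(A = B \Impl C'\) is analogous, using the induction hypothesis on each of \(B\) and \(C'\) together with \r{{\eqtyp}\mbox{-}{\Impl}}.

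The only delicate case is \(A = \fix{Y}B\). Here I would first rename so that \(Y \notin \{\,\vec{X}\,\} \cup \bigcup_i (\FTV{C_i} \cup \FTV{D_i})\), which is legitimate since we identify \(\alpha\)-convertible type expressions and which ensures the substitutions pass under the binder, i.e. \(A[\vec{C}/\vec{X}] = \fix{Y}(B[\vec{C}/\vec{X}])\) and likewise for \(\vec{D}\). By the induction hypothesis applied to \(B\), we have \(B[\vec{C}/\vec{X}] \geqtyp B[\vec{D}/\vec{X}]\). The goal \(\fix{Y}(B[\vec{C}/\vec{X}]) \geqtyp \fix{Y}(B[\vec{D}/\vec{X}])\) is then precisely an instance of the derived congruence rule for \(\fix{}\), namely Item~\itemref{geqtyp-fix-congr} of the proposition establishing that \(A \geqtyp B\) implies \(\fix{X}A \geqtyp \fix{X}B\). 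So it suffices to cite Proposition~\ref{geqtyp-fix-congr}.

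The main obstacle I anticipate is the apparent circularity: Proposition~\ref{geqtyp-fix-congr} is itself proved using Proposition~\ref{geqtyp-subst} (via \r{{\eqtyp}\mbox{-fix}}, substitution of \(\geqtyp\), and \r{{\eqtyp}\mbox{-uniq}}), and Proposition~\ref{geqtyp-subst} is the very statement whose detailed proof in Appendix~\ref{geqtyp-subst-sec} relies on this lemma. To avoid a genuine logical loop I would instead inline the congruence argument directly rather than invoking Proposition~\ref{geqtyp-fix-congr}: from \(B[\vec{C}/\vec{X}] \geqtyp B[\vec{D}/\vec{X}]\) and \r{{\eqtyp}\mbox{-fix}} we get \(\fix{Y}(B[\vec{C}/\vec{X}]) \geqtyp B[\vec{C}/\vec{X}][\fix{Y}(B[\vec{C}/\vec{X}])/Y] \geqtyp B[\vec{D}/\vec{X}][\fix{Y}(B[\vec{C}/\vec{X}])/Y]\), where the second step uses that \(B[\vec C/\vec X]\geqtyp B[\vec D/\vec X]\) can be propagated under a common substitution for \(Y\) (itself a one-variable instance handled by the same induction). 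Since \(B[\vec{D}/\vec{X}]\) is proper in \(Y\) by Item~\itemref{proper-subst1}, \r{{\eqtyp}\mbox{-uniq}} then yields \(\fix{Y}(B[\vec{C}/\vec{X}]) \geqtyp \fix{Y}(B[\vec{D}/\vec{X}])\). Because this lemma feeds only into the nested-substitution steps of Proposition~\ref{appendix-geqtyp-subst} and carries its own self-contained induction on \(h(A)\), checking that the induction is well-founded and that no step secretly reuses the full strength of Proposition~\ref{geqtyp-subst} is the one point demanding care.
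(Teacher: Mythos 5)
Your overall strategy---induction on \(h(A)\) with cases on the form of \(A\), the only real work being the \(\fix{Y}B\) case, which you close with properness of \(B[\vec D/\vec X]\) in \(Y\) (via Proposition~\ref{proper-subst1}) and an appeal to \r{{\eqtyp}\mbox{-uniq}}---is exactly the paper's, and you were right to reject the shortcut through Proposition~\ref{geqtyp-fix-congr}, which would indeed be circular. The chain of type expressions you write down in the \(\fix{}\) case is also the paper's chain.

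The one step that does not hold up as stated is the middle \(\geqtyp\) in that chain, which you justify by saying that \(B[\vec C/\vec X]\geqtyp B[\vec D/\vec X]\) ``can be propagated under a common substitution for \(Y\)'' as ``a one-variable instance handled by the same induction.'' That is not an instance of this lemma: the lemma keeps the template fixed and varies the substituted terms, whereas pushing \(E\geqtyp F\) through \([G/Y]\) varies the template and keeps the substituted term fixed---that is the other half of Proposition~\ref{geqtyp-subst}, whose proof (Proposition~\ref{appendix-geqtyp-subst}) depends on this very lemma, so invoking it here reintroduces the circularity you were trying to avoid. The repair is what the paper does, and it is the reason the lemma is stated for a simultaneous substitution of \(n\) variables rather than one: since \(Y\not\in\{\,\vec X\,\}\cup\FTV{\vec C}\cup\FTV{\vec D}\), the composed substitutions collapse to simultaneous ones, \(B[\vec C/\vec X][G/Y]=B[\vec C/\vec X,\,G/Y]\) and \(B[\vec D/\vec X][G/Y]=B[\vec D/\vec X,\,G/Y]\) with \(G=\fix{Y}B[\vec C/\vec X]\), and the middle \(\geqtyp\) is then a single application of the induction hypothesis to \(B\) (of smaller height) with the \(n{+}1\) pairs \((C_i,D_i)\) and \((G,G)\), the last related by \r{{\eqtyp}\mbox{-reflex}}. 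You never need the intermediate fact \(B[\vec C/\vec X]\geqtyp B[\vec D/\vec X]\) at all. With that substitution of justification, your proof coincides with the paper's.
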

\begin{proof}
By induction on \(h(A)\), and by cases on the form of \(A\).
The only interesting case is when \(A = \fix{Y}A'\) for some \(Y\) and \(A'\).
We can assume that \(Y \not\in \{\,X_i\,\} \cup \FTV{C_i} \cup \FTV{D_i}\)
for every \(i\).
\begin{Eqnarray*}
A[\vec{C}/\vec{X}] &=& \fix{Y}A'[\vec{C}/\vec{X}]
		    & (since \(Y \not\in \{\,\vec{X}\,\}\cup\FTV{\vec{C}}\)) \\
    &\geqtyp& A'[\vec{C}/\vec{X}][\fix{Y}A'[\vec{C}/\vec{X}]/Y]
			& (by \r{{\eqtyp}\mbox{-fix}}) \\
    &=& A'[\vec{C}/\vec{X},\,\fix{Y}A'[\vec{C}/\vec{X}]/Y]
		    & (since \(Y \not\in \{\,\vec{X}\,\}\cup\FTV{\vec{C}}\)) \\
    &\geqtyp& A'[\vec{D}/\vec{X},\,\fix{Y}A'[\vec{C}/\vec{X}]/Y]
			& (by ind. hyp.) \\
    &=& A'[\vec{D}/\vec{X}][\fix{Y}A'[\vec{C}/\vec{X}]/Y]
			& (since \(Y \not\in \FTV{\vec{D}}\)) \\
    &=& A'[\vec{D}/\vec{X}][A[\vec{C}/\vec{X}]/Y]
		    & (since \(Y \not\in \{\,\vec{X}\,\}\cup\FTV{\vec{C}}\))
\end{Eqnarray*}
Since \(A'[\vec{D}/\vec{X}]\) is also proper in \(Y\)
by Proposition~\ref{proper-subst1}, we now get
\(A[\vec{C}/\vec{X}] \geqtyp \fix{Y}A'[\vec{D}/\vec{X}]\)
by \r{{\eqtyp}\mbox{-uniq}}; and therefore,
\(A[\vec{C}/\vec{X}] \geqtyp A[\vec{D}/\vec{X}]\).
\qed\CHECKED{2014/04/23}
\end{proof}

\Subsection{Finiteness of $\Comp(A)/{\eqtyp}$}\label{comp-finite-sec}

In this section, we show that
\(\Comp(A)/{\eqtyp}\) and \(\Comp(A)/{\peqtyp}\) are finite sets
for every type expression \(A\), which is parallel to the fact that
the set of all subtrees of a regular tree
is finite\cite{courcelle,amadio:cardelli}.

\begin{definition}\label{k-comp-def}
Let \(k\) be a non-negative integer.
We write \(A \component_k B\) if and only if
\(C[A/X] \eqtyp B\) for some \(C\) and \(X\) such that \(\Idp(C,\,X) \le k\).
\end{definition}
Note that \(A \component B\) if and only if
\(A \component_k B\) for some \(k\), by Proposition~\ref{depth-finite-etv}.

\begin{proposition}\label{k-comp-basic}\pushlabel
\begin{Enumerate}
\item \itemlabel{k-comp-reflex}
    \(A \component_k A\) for every \(k\).
\item \itemlabel{eqtyp-k-comp}
    If\/ \(A \eqtyp A' \component_k B' \eqtyp B\), then \(A \component_k B\).
\item \itemlabel{k-comp-monotonic}
    If\/ \(k \le l\), then \(A \component_k B\) implies \(A \component_l B\).
\item \itemlabel{k-comp-etv}
    If\/ \(A \component_k B\), then \(\ETV{A} \subseteq \ETV{B}\).
\item \itemlabel{k-comp-trans}
    If\/ \(A \component_k B\) and \(B \component_l C\),
    then \(A \component_{k+l} C\).
\end{Enumerate}
\end{proposition}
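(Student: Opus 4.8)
The statement to prove is Proposition~\ref{k-comp-basic}, consisting of five items about the relation \(\component_k\). My plan is to dispatch the items in the order given, exploiting the earlier machinery about \(\eqtyp\), \(\ETV{\cdot}\) and \(\Idp\), and using the characterization \(\PNDp(A,\,X) < \infty\) iff \(X \in \PNETV{A}\) from Proposition~\ref{depth-finite-etv} together with the substitution laws for \(\Idp\) in Proposition~\ref{depth-basic}.

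First I would prove Item~\itemref{k-comp-reflex}: taking \(C = X\) and observing \(\Idp(X,\,X) = \min(\PIdp(X,\,X),\,\NIdp(X,\,X)) = \min(0,\,\infty) = 0 \le k\), we have \(C[A/X] = A \eqtyp A\), so \(A \component_k A\). Item~\itemref{eqtyp-k-comp} follows immediately from the definition: if \(A \eqtyp A'\), \(A' \component_k B'\) via some \(C\) and \(X\) with \(C[A'/X] \eqtyp B'\) and \(\Idp(C,\,X) \le k\), and \(B' \eqtyp B\), then \(C[A/X] \eqtyp C[A'/X]\) by Proposition~\ref{geqtyp-subst} (substituting \(\eqtyp\)-equal expressions), whence \(C[A/X] \eqtyp B\) by transitivity of \(\eqtyp\), so \(A \component_k B\). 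Item~\itemref{k-comp-monotonic} is trivial: the witnessing \(C\) and \(X\) for \(A \component_k B\) satisfy \(\Idp(C,\,X) \le k \le l\), hence witness \(A \component_l B\) as well.

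For Item~\itemref{k-comp-etv}, suppose \(A \component_k B\) with \(C[A/X] \eqtyp B\). By Proposition~\ref{geqtyp-etv} we have \(\PNETV{C[A/X]} = \PNETV{B}\), so it suffices to show \(\ETV{A} \subseteq \ETV{C[A/X]}\). Take \(Y \in \ETV{A}\); since \(\Idp(C,\,X) < \infty\), we have \(X \in \ETV{C}\) by Proposition~\ref{depth-finite-etv}, so either \(X \in \PETV{C}\) or \(X \in \NETV{C}\), and \(Y \in \PETV{A}\) or \(Y \in \NETV{A}\); the four combinations are handled by the nesting results Proposition~\ref{petv-netv-nest1} and Proposition~\ref{petv-netv-nest2}, which give \(Y \in \PNETV{C[A/X]}\) in each case. (A minor care point: these nesting propositions require \(A\) not a \tvariant\ when \(Y \in \PNETV{A}\), but \(\ETV{A} \ne \{\}\) already guarantees that by Proposition~\ref{tvariant-etv}.) Thus \(Y \in \ETV{C[A/X]} = \ETV{B}\).

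The main obstacle is Item~\itemref{k-comp-trans}. Here \(A \component_k B\) means \(C[A/X] \eqtyp B\) with \(\Idp(C,\,X) \le k\), and \(B \component_l C'\) means \(D[B/Z] \eqtyp C'\) with \(\Idp(D,\,Z) \le l\) (renaming the outer \(C\) of the definition to avoid clash). The plan is to form the composite substitution: since \(C[A/X] \eqtyp B\), Proposition~\ref{geqtyp-subst} gives \(D[C[A/X]/Z] \eqtyp D[B/Z] \eqtyp C'\), and \(D[C[A/X]/Z] = D[C/Z][A/X]\) provided \(X\) is chosen fresh for \(D\) (legitimate by \(\alpha\)-renaming since \(X \notin \ETV{D}\) can be arranged, or by handling the bookkeeping directly). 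So the witness for \(A \component_{k+l} C'\) is the expression \(E = D[C/Z]\) with variable \(X\), and what remains is the \(\Idp\) bound \(\Idp(D[C/Z],\,X) \le k + l\). This is exactly where Proposition~\ref{depth-subst1}/\itemref{depth-subst2} enters: the substitution formula gives \(\PNIdp(D[C/Z],\,X)\) as a minimum involving \(\PDp(D,\,Z) + \PNDp(C,\,X)\) and \(\NDp(D,\,Z) + \NPDp(C,\,X)\) (the term \(\PNDp(D,\,X)\) being \(\infty\) since \(X \notin \ETV{D}\)); bounding \(\Idp(D,\,Z) \le l\) and \(\Idp(C,\,X) \le k\) and tracking which polarity components realize the minima should yield \(\Idp(D[C/Z],\,X) \le k + l\). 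The delicate part will be managing the positive/negative split carefully — the \(\Idp\) is the minimum over both polarities, so I must verify that the minimizing combination in the substitution formula indeed respects the additive bound rather than, say, forcing an unwanted \(\infty\); I expect this to come out cleanly because every summand in the relevant minimum is bounded by \(k + l\), but the case analysis on which of \(\PIdp, \NIdp\) witnesses each of \(\Idp(D,\,Z)\) and \(\Idp(C,\,X)\) is the real work.
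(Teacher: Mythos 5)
Your proposal is correct and follows essentially the same route as the paper's own (sketched) proof: Items 1--3 directly from Definition~\ref{k-comp-def} and Proposition~\ref{geqtyp-subst}, Item~4 via Propositions~\ref{depth-finite-etv}, \ref{petv-netv-nest} and \ref{geqtyp-etv}, and Item~5 by composing the witnessing substitutions and bounding the depth with Propositions~\ref{depth-subst1} and \ref{depth-finite-etv}. The ``delicate'' polarity analysis you flag for Item~5 does come out cleanly, since the four sums in the substitution formula factor as \(\min(\PIdp(D,Z),\NIdp(D,Z)) + \min(\PIdp(C,X),\NIdp(C,X)) = \Idp(D,Z)+\Idp(C,X) \le l+k\).
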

\begin{proof}
Straightforward from Definition~\ref{k-comp-def}.
Use Proposition~\ref{geqtyp-subst} for Item~\itemref{eqtyp-k-comp},
Propositions~\ref{depth-finite-etv}, \ref{petv-netv-nest} and
\ref{geqtyp-etv} for Item~\itemref{k-comp-etv},
and Propositions~\ref{depth-subst1} and \ref{depth-finite-etv}
for Item~\itemref{k-comp-trans}.
\qed\CHECKED{2014/07/13}
\end{proof}

\begin{proposition}\label{k-comp-congr}\pushlabel
Suppose that \(A \component_k B\).
\begin{Enumerate}
\item \itemlabel{k-comp-var-congr}
    If \(\Canon{B} = \t\), then \(\Canon{A} = \t\).
\item \itemlabel{k-comp-O-congr}
    If \(\Canon{B} = \O^n X\),
    then \(\Canon{A} = \O^m X\) for some \(m \le n\).
\item \itemlabel{k-comp-Impl-congr}
    If \(\Canon{B} = \O^n (C \Impl D)\), then
    \begin{enumerate}[{\kern8pt}(a)]
    \item \(\Canon{A} = \O^m (C' \Impl D')\) for some \(m\), \(C'\) and \(D'\)
	such that \(m \le n\), \(C' \eqtyp C\) and \(D' \eqtyp D\), or
    \item \(0 < k\), and
       either \(A \component_{k-1} C\) or \(A \component_{k-1} D\).
    \end{enumerate}
\end{Enumerate}
\end{proposition}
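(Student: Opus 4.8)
The plan is to fix a single witness for the hypothesis $A \component_k B$ and read off the canonical form of $A$ from that of $B$ by a direct case analysis, with no induction. By Definition~\ref{k-comp-def} there are $G$ and $Y$ with $G[A/Y] \eqtyp B$ and $\Idp(G,\,Y) \le k$. Since $\Idp(G,\,Y) < \infty$, Proposition~\ref{depth-finite-etv} gives $Y \in \ETV{G}$, so $G$ is not a $\t$-variant. Moreover I may assume $G$ is canonical: replacing $G$ by $\Canon{G}$ preserves $G[A/Y] \eqtyp B$ (Proposition~\ref{geqtyp-subst}), the $\Impl$-depth $\Idp(G,\,Y)$ (Proposition~\ref{geqtyp-depth}), and the effective variables (Proposition~\ref{geqtyp-etv}). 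By Definition~\ref{canon-type-def} together with $Y \in \ETV{G}$, exactly two shapes remain: either $G = \O^p Y$ (the variable must be $Y$, as it is the only effective one) or $G = \O^p(P \Impl Q)$ with $Q$ not a $\t$-variant.

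In the first case $G[A/Y] = \O^p A \eqtyp B$. If $A \eqtyp \t$ then $\O^p A \eqtyp \t$ by Proposition~\ref{geqtyp-O-t}, so $\Canon{B} = \t$ and $\Canon{A} = \t$, settling Item~1 and leaving Items~2 and 3 vacuous. Otherwise $\Canon{(\O^p A)} = \O^p \Canon{A}$ by Definition~\ref{canon-def}, and since $\O^p \Canon{A}$ and $\Canon{B}$ are two $\eqtyp$-equal canonical forms, Proposition~\ref{canon-congr} matches their shapes: $\Canon{B} = \O^n Z$ forces $\O^p\Canon{A} = \O^n Z$, whence $\Canon{A} = \O^{n-p} Z$ with $n - p \le n$ (Item~2); and $\Canon{B} = \O^n(C \Impl D)$ forces $\O^p\Canon{A} = \O^n(C' \Impl D')$ with $C' \eqtyp C$ and $D' \eqtyp D$, whence $\Canon{A} = \O^{n-p}(C' \Impl D')$ (Item~3(a)).

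In the second case the depth equations of Definition~\ref{depth-def} give $\Idp(G,\,Y) = \min(\Idp(P,\,Y),\,\Idp(Q,\,Y)) + 1 \le k$, so $0 < k$ and $\min(\Idp(P,\,Y),\,\Idp(Q,\,Y)) \le k-1$. If $Q[A/Y]$ is a $\t$-variant then, since $Q$ is not, Proposition~\ref{tvariant-subst2} forces $A$ to be a $\t$-variant; then $G[A/Y]$ is a $\t$-variant by Propositions~\ref{Impl-tvariant} and \ref{O-tvariant}, so $B \eqtyp \t$ by Theorem~\ref{geqtyp-t-tvariant}, giving $\Canon{B} = \Canon{A} = \t$ and again only Item~1 to check. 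If $Q[A/Y]$ is not a $\t$-variant, then $\Canon{(G[A/Y])} = \O^p(P[A/Y] \Impl Q[A/Y])$ by Definition~\ref{canon-def}, so Proposition~\ref{canon-congr} yields, from $\Canon{B} = \O^n(C \Impl D)$, that $n = p$, $P[A/Y] \eqtyp C$, and $Q[A/Y] \eqtyp D$. Now $A \component_{\Idp(P,\,Y)} P[A/Y]$ and $A \component_{\Idp(Q,\,Y)} Q[A/Y]$ hold directly from Definition~\ref{k-comp-def} with $P$ and $Q$ as witnesses; whichever of the two depths is $\le k-1$ upgrades by Proposition~\ref{k-comp-monotonic} to $A \component_{k-1} P[A/Y]$ or $A \component_{k-1} Q[A/Y]$, and composing with $P[A/Y] \eqtyp C$ or $Q[A/Y] \eqtyp D$ through Proposition~\ref{eqtyp-k-comp} gives $A \component_{k-1} C$ or $A \component_{k-1} D$ — exactly Item~3(b).

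I expect the genuine work to lie in two places. The one real identity to verify is the depth computation $\Idp(\O^p(P \Impl Q),\,Y) = \min(\Idp(P,\,Y),\,\Idp(Q,\,Y)) + 1$, since it is what both certifies $0 < k$ and routes the residual $\component_{k-1}$ claim into the correct component. The more error-prone part is the $\t$-variant bookkeeping: in each branch I must confirm that the Items whose hypotheses on $\Canon{B}$ fail are discharged vacuously, and the subcase where $Q[A/Y]$ collapses to a $\t$-variant (forcing $A \eqtyp \t$) is precisely the point where Proposition~\ref{tvariant-subst2} is indispensable and where the argument would otherwise break.
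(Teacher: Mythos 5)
Your proposal is correct and follows essentially the same route as the paper's proof: extract a witness $E$ (your $G$) with $E[A/Y]\eqtyp B$ and $\Idp(E,Y)\le k$, canonicalize it, and case-split on whether its canonical form is $\O^p Y$ or $\O^p(P\Impl Q)$, using Proposition~\ref{canon-congr} to match shapes and the depth identity $\Idp(\O^p(P\Impl Q),Y)=\min(\Idp(P,Y),\Idp(Q,Y))+1$ to certify $0<k$ and route the residual $\component_{k-1}$ claim. The only differences are organizational (you branch on the shape of the witness first rather than item by item, and you treat the degenerate subcase where $Q[A/Y]$ collapses to a $\top$-variant explicitly rather than excluding it via $D\not\eqtyp\t$), neither of which changes the substance.
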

\begin{proof}
By Definition~\ref{k-comp-def},
there exist some \(E\) and \(Y\) such that
\begin{eqnarray}
\label{k-comp-congr-01}
    && B \eqtyp E[A/Y],~\mbox{and} \\
\label{k-comp-congr-02}
    && \Idp(E,\,Y) \le k.
\end{eqnarray}
Note that \(E\) is not a {\tvariant} from (\ref{k-comp-congr-02})
by Propositions~\ref{tvariant-etv} and \ref{depth-finite-etv}.
We can assume that \(E\) is canonical by Propositions~\ref{geqtyp-canon},
\ref{geqtyp-subst} and \ref{geqtyp-depth}.
For Item~\itemref{k-comp-var-congr}, if \(\Canon{B} = \t\), then
\(\t \eqtyp B \eqtyp E[A/X]\) by Proposition~\ref{geqtyp-canon}; and
hence, \(A \eqtyp \t\) by Proposition~\ref{tvariant-subst2}
and Theorem~\ref{geqtyp-t-tvariant}.
Thus, we get \(\Canon{A} = \t\).
For Item~\itemref{k-comp-O-congr}, suppose that \(\Canon{B} = \O^n X\).
By Propositions~\ref{not-geqtyp-var-Impl} and \ref{not-geqtyp-var-t},
we get \(E = \O^l Y\) for some \(l \le n\)
from (\ref{k-comp-congr-01}); and therefore,
\(\O^n X \eqtyp E[A/Y] = \O^l A\)
by Proposition~\ref{geqtyp-canon}, from which we get
\(A \eqtyp \O^{n-l} X\) by Proposition~\ref{geqtyp-O-congr}.
Thus, we get \(\Canon{A} = \O^m X\) taking \(m\) as \(m = n-l\)
by Proposition~\ref{canon-congr}.
As for Item~\itemref{k-comp-Impl-congr}, suppose similarly that
\(\Canon{B} = \O^n (C \Impl D)\), where \(D \not\eqtyp \t\).
By Propositions~\ref{canon-congr}, \ref{geqtyp-O-congr} and
\ref{eqtyp-O-Impl}, we get either
\begin{eqnarray}
\label{k-comp-congr-03}
    && E = \O^l Y~\mbox{for some}~l \le n,~\mbox{or} \\
\label{k-comp-congr-04}
    && E = \O^n (C' \Impl D')~\mbox{for some}~C'~\mbox{and}~D' \not\eqtyp \t
\end{eqnarray}
from (\ref{k-comp-congr-01}).
If (\ref{k-comp-congr-03}) is the case, then
\(\O^n (C \Impl D) \eqtyp B \eqtyp E[A/Y] = \O^l A\)
from (\ref{k-comp-congr-01}); and hence,
\(A \eqtyp \O^{n-l} (C \Impl D)\) by Proposition~\ref{geqtyp-O-congr}.
Therefore, (a) holds by Proposition~\ref{canon-congr}
considering \(n-l\) as \(m\).
On the other hand, in case of (\ref{k-comp-congr-04}),
we get
\(\O^n (C \Impl D) \eqtyp B \eqtyp E[A/Y] = \O^n (C'[A/Y] \Impl D'[A/Y])\)
from (\ref{k-comp-congr-01}); and hence,
\[
    C \eqtyp C'[A/Y]~\mbox{and}~D \eqtyp D'[A/Y]
\]
by Propositions~\ref{geqtyp-O-congr} and \ref{geqtyp-Impl-congr}.
Therefore, (b) holds,
since (\ref{k-comp-congr-02}) and (\ref{k-comp-congr-04}) imply
\(\Idp(C',\,Y) \le k-1\) or \(\Idp(D',\,Y) \le k-1\)
by Definition~\ref{depth-def}.
\qed\CHECKED{2014/07/15}
\end{proof}

\begin{proposition}\label{comp-congr}
Suppose that \(A \component B\).
\begin{Enumerate}
\item If \(\Canon{B} = \t\), then \(\Canon{A} = \t\).
\item If \(\Canon{B} = \O^n X\),
    then \(\Canon{A} = \O^m X\) for some \(m \le n\).
\item If \(\Canon{B} = \O^n (C \Impl D)\), then
    \begin{enumerate}[{\kern8pt}(a)]
    \item \(\Canon{A} = \O^m (C' \Impl D')\) for some \(m\), \(C'\) and \(D'\)
	such that \(m \le n\), \(C' \eqtyp C\) and \(D' \eqtyp D\),
    \item \(A \component C\), or
    \item \(A \component D\).
    \end{enumerate}
\end{Enumerate}
\end{proposition}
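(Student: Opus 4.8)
The plan is to deduce Proposition~\ref{comp-congr} from Proposition~\ref{k-comp-congr} essentially by discharging the numeric parameter $k$. The relation $A \component B$ means, by definition, that $A \component_k B$ for some non-negative integer $k$ (recall from the remark after Definition~\ref{k-comp-def} that $A \component B$ iff $A \component_k B$ for some $k$, via Proposition~\ref{depth-finite-etv}). So I would begin by fixing such a $k$ and then argue by induction on $k$, the three items being handled simultaneously since the $\Impl$-case of Proposition~\ref{k-comp-congr} can recurse into a component relation of strictly smaller index.

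First I would treat Items~1 and 2, which are immediate: if $A \component B$, take $k$ with $A \component_k B$, and apply Items~\itemref{k-comp-var-congr} and \itemref{k-comp-O-congr} of Proposition~\ref{k-comp-congr} directly. These carry no $k$-dependence in their conclusions, so nothing further is needed. The substance is Item~3. Here I would set up the induction on $k$: assuming $A \component_k B$ and $\Canon{B} = \O^n(C \Impl D)$, Proposition~\ref{k-comp-congr}\itemref{k-comp-Impl-congr} gives either conclusion (a), which is exactly the desired (a) of Proposition~\ref{comp-congr}, or else $0 < k$ together with $A \component_{k-1} C$ or $A \component_{k-1} D$. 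In the latter situation $A \component_{k-1} C$ immediately yields $A \component C$ by definition, i.e. conclusion (b), and similarly $A \component_{k-1} D$ yields (c). So in fact the claim follows without even needing the induction hypothesis recursively: Proposition~\ref{k-comp-congr} already packages the recursive descent into the relation $\component_{k-1}$, and the looser relation $\component$ absorbs the index drop.

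The key observation that makes this clean is that Proposition~\ref{comp-congr} is stated with the index-free $\component$, whereas the sharper Proposition~\ref{k-comp-congr} tracks $k$; passing from the latter to the former is precisely the act of forgetting the index, which is why conclusions (b) and (c) appear without any residual numeric constraint. Thus the proof amounts to: take the witnessing $k$, invoke Proposition~\ref{k-comp-congr}, and re-express its conclusion using $A \component_{j} E \Rightarrow A \component E$ (Proposition~\ref{k-comp-basic}\itemref{k-comp-monotonic} together with the definition of $\component$). I would also note that $D \neq \t$ (equivalently $D \ngeqtyp \t$) in the $\Impl$-case follows because $\Canon{B} = \O^n(C \Impl D)$ forces $B$ to be non-{\tvariant}, so the canonical-form machinery of Proposition~\ref{k-comp-congr} applies as stated.

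I do not expect a genuine obstacle here, since essentially all the work has already been done in Proposition~\ref{k-comp-congr}; the only mild care needed is bookkeeping the passage from $\component_k$ to $\component$ and confirming that the index $k-1$ arising in the $\Impl$-case is non-negative (guaranteed by the side condition $0 < k$ in Proposition~\ref{k-comp-congr}\itemref{k-comp-Impl-congr}). The ultimate purpose of this proposition, as indicated by the surrounding text, is to establish finiteness of $\Comp(A)/{\eqtyp}$ by bounding how components of a canonical form relate to the components of its immediate constituents; Proposition~\ref{comp-congr} is the structural step feeding that finiteness argument, and its index-free phrasing is exactly what a subsequent well-founded or counting argument on the structure of $\Canon{A}$ will want to consume.
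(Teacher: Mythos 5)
Your proposal is correct and matches the paper's own proof, which simply derives Proposition~\ref{comp-congr} from Proposition~\ref{k-comp-congr} by forgetting the index $k$ (using that $A \component B$ iff $A \component_k B$ for some $k$). Your observation that no genuine induction on $k$ is needed—since $\component$ absorbs the drop from $\component_k$ to $\component_{k-1}$ in the $\Impl$-case—is exactly right.
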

\begin{proof}
Obvious from Proposition~\ref{k-comp-congr}
\qed\CHECKED{2014/07/13}
\end{proof}

\begin{proposition}\label{k-comp-subst}
If\/ \(A \component_k B[C/X]\), then either
\begin{enumerate}[{\kern8pt}(a)]
\item there exists some \(A'\) such that \(A' \component_k B\) and
    \(A \eqtyp A'[C/X]\), or
\item \(\Idp(B,\,X) \le k\) and \(A \component_{k-\Idp(B,\,X)} C\).
\end{enumerate}
\end{proposition}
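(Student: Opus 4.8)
The plan is to induct on the bound $k$, reducing the substituted type $B[C/X]$ to its canonical form and then peeling off a single implication — which strictly lowers $k$ — exactly when $A$ must be located below an arrow. The crucial organisational point is that the one recursive appeal to the statement happens only through this arrow-peeling; the stripping of leading $\O$'s and the unfolding of $\fix$ are absorbed into the canonical form and are handled \emph{non-recursively}. This is what lets me avoid the usual tension between a height measure (which grows when a $\fix$ is unfolded) and a rank measure (which is collapsed to $0$ by $\O$), neither of which decreases along both reductions.

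First I would record an auxiliary $\O$-peeling lemma: if $\O D$ is not a {\tvariant} and $A \component_k \O D$, then $A \eqtyp \O D$ or $A \component_k D$. This follows by unfolding Definition~\ref{k-comp-def} to get $E[A/Y] \eqtyp \O D$ with $\Idp(E,Y)\le k$, taking $E$ canonical (legitimate by \ref{geqtyp-canon}, \ref{geqtyp-subst} and \ref{geqtyp-depth}), and casing on $E \in \{\t,\ \O^\ell Y',\ \O^\ell(E_1\Impl E_2)\}$: the subcase $Y'\neq Y$ is ruled out by \ref{depth-finite-etv} and the $\ell=0$ arrow subcase by \ref{eqtyp-O-Impl} (as $\O D$ is not a {\tvariant}), while the remaining subcases give $A\eqtyp\O D$ (context $Y$) or $A\component_k D$ after one use of \ref{geqtyp-O-congr}. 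Iterating this yields: $A\component_k\O^n C$ implies either $A\eqtyp\O^j C$ for some $0\le j\le n$, or $A\component_k C$.

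For the main argument, given $A\component_k B[C/X]$ I would compute $\Canon{(B[C/X])}$ via \ref{canon-subst-lemma} and \ref{geqtyp-canon} and split on the three canonical shapes, sweeping every case where a substituted {\tvariant} collapses an arrow into the {\tvariant} case. If $B[C/X]\eqtyp\t$, then $A\eqtyp\t$ by Proposition~\ref{k-comp-congr}, and (a) holds with $A'=A$ by \ref{tvariant-subst1}, Theorem~\ref{geqtyp-t-tvariant} and \ref{eqtyp-k-comp}. If $\Canon B=\O^n X'$ with $X'\neq X$, then $X\notin\ETV{A}$ by \ref{k-comp-etv}, hence $A\eqtyp A[C/X]$ by \ref{geqtyp-no-etv-subst} and (a) holds; if $X'=X$ the $\O$-peeling lemma applied to $A\component_k\O^n C$ gives either $A\eqtyp\O^j C$ (so (a) holds with $A'=\O^j X$) or $A\component_k C$ (so (b) holds, since $\Idp(B,X)=\Idp(\O^n X,X)=0$). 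The substantive case is $\Canon B=\O^n(E_1\Impl E_2)$: Proposition~\ref{k-comp-congr}(3) applied to $A\component_k\O^n(E_1[C/X]\Impl E_2[C/X])$ gives either $\Canon A=\O^m(\cdots)$ with $m\le n$, whence $A\eqtyp\Canon B[C/X]$ and (a) holds with $A'$ the matching piece of $\Canon B$, or $k>0$ together with $A\component_{k-1}E_i[C/X]$ for some $i\in\{1,2\}$; in the latter I invoke the induction hypothesis at $k-1$ on $E_i$.

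Transferring the hypothesis back to $B$ is the final step, and the depth bookkeeping is where I expect the real work. If the hypothesis returns $A'_i\component_{k-1}E_i$ with $A\eqtyp A'_i[C/X]$, I note $E_i\component_1\Canon B$ via the context $\O^n(E_1\Impl Y)$ (resp. $\O^n(Y\Impl E_2)$), whose hole has $\Impl$-depth exactly $1$ by Definition~\ref{depth-def}; then \ref{k-comp-trans} gives $A'_i\component_k\Canon B$ and \ref{eqtyp-k-comp} gives $A'_i\component_k B$, so (a) holds. If instead it returns $\Idp(E_i,X)\le k-1$ and $A\component_{(k-1)-\Idp(E_i,X)}C$, then Definition~\ref{depth-def} gives $\Idp(B,X)=\min(\Idp(E_1,X),\Idp(E_2,X))+1\le k$, and since $(k-1)-\Idp(E_i,X)\le k-\Idp(B,X)$, \ref{k-comp-monotonic} promotes the relation to $A\component_{k-\Idp(B,X)}C$, giving (b). The main obstacle is precisely this arithmetic of positive and negative $\Impl$-depths under the arrow and under the leading $\O^n$ — checking that peeling one arrow costs exactly one unit of $\Idp$ both in the context constructed for (a) and in the bound computed for (b). Once the induction is organised on $k$ so that canonicalisation and $\O$/$\fix$-absorption are non-recursive, everything else is routine use of Propositions~\ref{k-comp-basic} and \ref{k-comp-congr}.
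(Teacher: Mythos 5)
Your overall strategy---canonicalize, induct on \(k\) alone, and make the single recursive appeal through the arrow case where the \(\Impl\)-depth budget drops by one---is essentially the paper's own proof, and your depth bookkeeping in the arrow case (the bound \(\Idp(B,\,X) = \min(\Idp(E_1,\,X),\,\Idp(E_2,\,X))+1 \le k\), the transfer of (a\('\)) back to \(B\) via a context of \(\Impl\)-depth \(1\) and Proposition~\ref{k-comp-trans}, and the promotion of (b\('\)) via Proposition~\ref{k-comp-monotonic}) is exactly right. Your \(\O\)-peeling lemma is a tidy substitute for the paper's sub-case analysis on the shape of \(C\) in the variable case, and it is correct as stated.

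There is, however, one step that fails as written: the {\tvariant} case. You deliberately route every instance where substitution collapses \(B[C/X]\) to a {\tvariant} into this case, and then claim (a) with witness \(A' = A\). That requires \(A \component_k B\), i.e.\ \(\t \component_k B\), which is false whenever \(B\) itself is not a {\tvariant}. Concretely, take \(B = X\) and \(C = \t\): then \(B[C/X] \eqtyp \t\) and \(A \eqtyp \t\), but every component of \(X\) is \(\eqtyp X\) by Proposition~\ref{k-comp-congr}, and \(\t \ngeqtyp X\) by Proposition~\ref{not-geqtyp-var-t}, so \(A \not\component_k B\). The citations you give (Proposition~\ref{tvariant-subst1}, Theorem~\ref{geqtyp-t-tvariant}, Proposition~\ref{eqtyp-k-comp}) establish only the second half of (a), namely \(A \eqtyp A[C/X]\), not the membership \(A \component_k B\). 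The fix is cheap: take \(A' = B\) instead (then \(A' \component_k B\) by Proposition~\ref{k-comp-reflex} and \(A'[C/X] = B[C/X] \eqtyp \t \eqtyp A\)); alternatively, do what the paper does and confine the {\tvariant} case to canonical \(B = \t\), letting the collapsed instances be absorbed by conclusion (b) in the variable case and by the induction hypothesis in the arrow case. A second, more minor slip: in the arrow case, alternative (a) of Proposition~\ref{k-comp-Impl-congr} only gives \(\Canon{A} = \O^m(\cdots)\) with \(m \le n\), so your phrase ``whence \(A \eqtyp \Canon{B}[C/X]\)'' is wrong for \(m < n\); the witness must be the shallower piece \(\O^m(E_1 \Impl E_2)\) rather than \(\Canon{B}\) itself.
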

\begin{proof}
We can assume that \(A\), \(B\) and \(C\) are canonical
by Propositions~\ref{eqtyp-k-comp}, \ref{geqtyp-canon},
\ref{geqtyp-depth} and \ref{geqtyp-subst}.
Suppose that \(A \component_k B[C/X]\).
If \(X \not\in \ETV{B}\), then \(B[C/X] \eqtyp B\) and
\(A[C/X] \eqtyp A\)
by Propositions~\ref{geqtyp-no-etv-subst}, \ref{geqtyp-etv}
and \ref{k-comp-etv}.
Hence, (a) holds considering \(A\) as \(A'\) in such a case.
Therefore, we only consider the case that \(X \in \ETV{B}\).
The proof proceeds by induction on \(k\),
and by cases on the forms of \(B\) and \(C\).

\Case{\(B = \t\).}
In this case, \(A = \t\)
by Proposition~\ref{k-comp-congr} since \(B[C/X] = \t\);
and hence, (a) holds considering \(\t\) as \(A'\).

\Case{\(B = \O^n Y\) for some \(n\) and \(Y\).}
Note that \(X = Y\) by the assumption that \(X \in \ETV{B}\); and hence,
\(\Idp(B,\,X) = 0\) by Definition~\ref{depth-def}.
We now consider the form of \(C\).

\Subcase{\(C = \t\).}
In this subcase, \(B[C/X] = \O^n \t \eqtyp \t\).
Hence, (b) holds by Propositions~\ref{k-comp-reflex} and \ref{eqtyp-k-comp}
since \(A \eqtyp \t\) by Proposition~\ref{k-comp-congr}.

\Subcase{\(C = \O^m Z\) for some \(m\) and \(Z\).}
In this subcase, \(B[C/X] = \O^{n+m} Z\).
Hence, \(A = \O^l Z\) for some \(l\) such that \(l \le n+m\)
by Proposition~\ref{k-comp-congr}.
Therefore, if \(l \le m\), then
\(A \component_0 C\); and hence,
(b) holds, because
\(A \component_0 C\) implies
\(A \component_{k-0} C\) by Proposition~\ref{k-comp-monotonic}.
Otherwise, i.e., if \(l > m\), then let \(A' = \O^{l-m} X\).
Then, \(A' \component_0 B\) and
\(A'[C/X] = (\O^{l-m} X)[\O^m Z/X] = \O^l Z = A\).
Thus, we get (a), since
\(A' \component_0 B\) implies
\(A' \component_k B\) by Proposition~\ref{k-comp-monotonic}.

\Subcase{\(C = \O^m (D \Impl E)\) for some \(m\), \(D\) and \(E\) such that
    \(E \not\eqtyp \t\).}
Since \(B[C/X] = \O^{n+m} (D \Impl E)\),
by Proposition~\ref{k-comp-congr}, either
\begin{eqnarray}
    \label{comp-subst-01}
    && A = \O^l (D' \Impl E'),~
	    l \le n+m,~D' \eqtyp D~\,\mbox{and}~E' \eqtyp E
	    ~\;\mbox{for some}~l,~D'~\mbox{and}~E',~\mbox{or} \\
    \label{comp-subst-02}
    && 0 < k,~\mbox{and either}~
	A \component_{k-1} D~\mbox{or}~ A \component_{k-1} E.
\end{eqnarray}
If (\ref{comp-subst-01}) is the case, we get (a) or (b) similarly to
the previous subcase.
On the other hand,
in case of (\ref{comp-subst-02}),
either \(A \component_{k-1} D\) or \(A \component_{k-1} E\) implies
\(A \component_{k-0} C\) by Proposition~\ref{k-comp-trans}, since
\(D \component_1 C\) and \(E \component_1 C\).
Therefore, (b) holds in this case.

\Case{\(B = \O^n (D \Impl E)\) for some \(n\), \(D\) and \(E\) such that
    \(E \not\eqtyp \t\).}
Since \(B[C/X] = \O^n (D[C/X] \Impl E[C/X])\),
by Proposition~\ref{k-comp-congr},
\begin{eqnarray}
\label{comp-subst-03}
    && A = \O^m (D' \Impl E'),~
	    m \le n,~D' \eqtyp D[C/X]~\mbox{and}~E' \eqtyp E[C/X]
	    ~\;\mbox{for some}~m,~D' ~\mbox{and}~E', \\
\label{comp-subst-04}
    && 0 < k~\mbox{and}~
	A \component_{k-1} D[C/X],~\mbox{or} \\
\label{comp-subst-05}
    && 0 < k~\mbox{and}~A \component_{k-1} E[C/X].
\end{eqnarray}
In case of (\ref{comp-subst-03}),
we get (a) by Proposition~\ref{k-comp-monotonic}
taking \(A'\) as \(A' = \O^m(D \Impl E)\).
If (\ref{comp-subst-04}) is the case,
by induction hypothesis, we have either
\begin{enumerate}[{\kern8pt}(a{${}'$)}]
    \item there exists some \(A'\) such that
	\(A' \component_{k-1} D\) and \(A \eqtyp A'[C/X]\), or
    \item \(\Idp(D,\,X) \le k-1\) and \(A \component_{k-1-\Idp(D,\,X)} C\).
\end{enumerate}
If (a${}'$) holds,
we get \(A' \component_k B\) from \(A' \component_{k-1} D\)
by Proposition~\ref{k-comp-trans}, since \(D \component_1 B\).
Thus, (a${}'$) yields (a).
On the other hand, note that
\(\Idp(B,\,X) = \min(\Idp(D,\,X),\,\Idp(E,\,X)) + 1 \le \Idp(D,\,X) + 1\)
by Definition~\ref{depth-def}.
Hence, if (b${}'$) holds, we get
\(\Idp(B,\,X) \le k\) from \(\Idp(D,\,X) \le k-1\),
and also get \(A \component_{k-\Idp(B,\,X)} C\)
from \(A \component_{k-1-\Idp(D,\,X)} C\)
by Proposition~\ref{k-comp-monotonic}.
Thus, (b${}'$) yields (b).
The proof for the case (\ref{comp-subst-05}) is similar.
\qed\CHECKED{2014/07/15}
\end{proof}

\begin{proposition}\label{comp-subst}
If \(A \component B[C/X]\), then either
\begin{enumerate}[{\kern8pt}(a)]
\item there exists some \(A'\) such that \(A' \component B\) and
    \(A \eqtyp A'[C/X]\), or
\item \(A \component C\).
\end{enumerate}
\end{proposition}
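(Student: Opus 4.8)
The plan is to deduce this statement as an immediate corollary of the indexed version, Proposition~\ref{k-comp-subst}, which has already been established. The key observation connecting the two is the remark made just after Definition~\ref{k-comp-def}: by Proposition~\ref{depth-finite-etv}, we have $A \component B$ if and only if $A \component_k B$ for some non-negative integer $k$. Thus all the combinatorial work of tracking how a component can arise through a substitution $[C/X]$ is already packaged in the finer relation $\component_k$, and here we only need to forget the index.

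First I would suppose $A \component B[C/X]$, and pick a witness $k$ such that $A \component_k B[C/X]$, which exists by the equivalence above. Applying Proposition~\ref{k-comp-subst} to this, I obtain one of its two disjuncts. In the first case there is some $A'$ with $A' \component_k B$ and $A \eqtyp A'[C/X]$; since $A' \component_k B$ implies $A' \component B$, this immediately yields alternative~(a) of the present statement. In the second case we have $\Idp(B,\,X) \le k$ together with $A \component_{k-\Idp(B,\,X)} C$; discarding the index gives $A \component C$, which is alternative~(b).

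Since the disjuncts of Proposition~\ref{k-comp-subst} map directly onto those of the desired proposition once the subscripts are dropped, there is essentially no substantive obstacle: the whole content lies in Proposition~\ref{k-comp-subst}, whose induction on $k$ already handled the interaction between $\eqtyp$-equivalence, $\Idp$, and substitution. The only point requiring a moment's care is making sure the second disjunct is read correctly: the index $k - \Idp(B,\,X)$ in Proposition~\ref{k-comp-subst} is a genuine non-negative integer precisely because that disjunct also asserts $\Idp(B,\,X) \le k$, so $A \component_{k-\Idp(B,\,X)} C$ is a legitimate instance of the indexed relation and hence witnesses $A \component C$.

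\begin{proof}
Suppose that $A \component B[C/X]$. By the remark following Definition~\ref{k-comp-def} (via Proposition~\ref{depth-finite-etv}), there exists some $k$ such that $A \component_k B[C/X]$. Hence, by Proposition~\ref{k-comp-subst}, either (a${}'$) there exists some $A'$ such that $A' \component_k B$ and $A \eqtyp A'[C/X]$, or (b${}'$) $\Idp(B,\,X) \le k$ and $A \component_{k-\Idp(B,\,X)} C$. In case (a${}'$), we get $A' \component B$ from $A' \component_k B$ by the same remark; and hence, (a) holds. In case (b${}'$), we similarly get $A \component C$ from $A \component_{k-\Idp(B,\,X)} C$; and hence, (b) holds.
\qed
\end{proof}
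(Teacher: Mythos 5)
Your proof is correct and follows exactly the route the paper takes: Proposition~\ref{comp-subst} is obtained from the indexed version, Proposition~\ref{k-comp-subst}, by choosing a witness $k$ via the equivalence $A \component B$ iff $A \component_k B$ for some $k$ and then discarding the indices in each disjunct. The paper simply states this as ``Obvious from Proposition~\ref{k-comp-subst}''; your write-up spells out the same argument.
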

\begin{proof}
Obvious from Proposition~\ref{k-comp-subst}
\qed\CHECKED{2014/07/13}
\end{proof}

\begin{lemma}\label{k-comp-fix}
If \(A \component_k \fix{X}B\), then
there exists some \(A'\) such that \(A' \component_k B\) and
    \(A \eqtyp A'[\fix{X}B/X]\).
\end{lemma}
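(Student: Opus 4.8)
The plan is to induct on $k$, using the unfolding rule $\r{{\eqtyp}\mbox{-fix}}$ to turn the statement about $\fix{X}B$ into one about $B[\fix{X}B/X]$ and then feeding it to the substitution analysis of Proposition~\ref{k-comp-subst}. Before doing anything else I would peel off the degenerate case in which $\fix{X}B$ is a $\top$-variant, since the depth hypotheses that drive the induction can fail there.

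In that degenerate case, $\Canon{(\fix{X}B)} = \t$ by Definition~\ref{canon-def}, so from $A \component_k \fix{X}B$ Proposition~\ref{k-comp-congr} forces $\Canon{A} = \t$, whence $A \eqtyp \t$ by Proposition~\ref{geqtyp-canon}. I would then simply take $A' = B$: it satisfies $A' \component_k B$ by reflexivity and monotonicity (Propositions~\ref{k-comp-reflex} and \ref{k-comp-monotonic}), and $A \eqtyp \t \eqtyp \fix{X}B \eqtyp B[\fix{X}B/X] = A'[\fix{X}B/X]$ by Theorem~\ref{geqtyp-t-tvariant} together with $\r{{\eqtyp}\mbox{-fix}}$, which is exactly the required conclusion.

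For the main case, where $\fix{X}B$ is not a $\top$-variant, I would first unfold: since $\fix{X}B \eqtyp B[\fix{X}B/X]$ by $\r{{\eqtyp}\mbox{-fix}}$, the hypothesis gives $A \component_k B[\fix{X}B/X]$ by Proposition~\ref{eqtyp-k-comp}. Applying Proposition~\ref{k-comp-subst} with $C = \fix{X}B$ yields two cases. In case (a) there is some $A'$ with $A' \component_k B$ and $A \eqtyp A'[\fix{X}B/X]$, which is precisely the goal. In case (b) I obtain $\Idp(B,\,X) \le k$ and $A \component_{k-\Idp(B,\,X)} \fix{X}B$, the same shape of statement at a smaller index; here, because $\fix{X}B$ is a type expression $B$ is proper in $X$, so Propositions~\ref{PIdp-positive} and \ref{NIdp-positive} give $\PIdp(B,\,X) > 0$ and $\NIdp(B,\,X) > 0$, hence $\Idp(B,\,X) > 0$ and $k-\Idp(B,\,X) < k$. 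The induction hypothesis then supplies $A'$ with $A' \component_{k-\Idp(B,\,X)} B$ and $A \eqtyp A'[\fix{X}B/X]$, and $A' \component_k B$ follows by Proposition~\ref{k-comp-monotonic}.

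The main obstacle will be securing well-foundedness of the induction: the inductive step in case (b) consumes the strict decrease $k-\Idp(B,\,X) < k$, which in turn rests entirely on $\Idp(B,\,X) > 0$. This is exactly why the $\top$-variant case must be separated out and settled directly (there the decrease can fail), and why the positivity lemmas for the $\Impl$-depth of a properly formed, non-$\top$-variant recursive type are the load-bearing ingredients of the argument.
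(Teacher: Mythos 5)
Your proposal is correct and follows essentially the same route as the paper's proof: induction on \(k\), with the \tvariant{} case of \(\fix{X}B\) settled directly by taking \(A' = B\), and the main case handled by unfolding via \r{{\eqtyp}\mbox{-fix}}, applying Propositions~\ref{eqtyp-k-comp} and \ref{k-comp-subst}, and using \(\Idp(B,\,X) > 0\) (from Propositions~\ref{NIdp-positive} and \ref{PIdp-positive}) to justify the strict decrease in case~(b).
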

\begin{proof}
Suppose that \(A \component_k \fix{X}B\).
The proof proceeds by induction on \(k\).
If \(\fix{X}B \eqtyp \t\), then it suffices to let \(A' = B\).
In fact, \(A' \component_k B\) by Proposition~\ref{k-comp-reflex}.
Furthermore,
\(A \component_k \fix{X}B\) implies \(A \eqtyp \t\)
by Propositions~\ref{eqtyp-k-comp}, \ref{k-comp-congr} and
\ref{geqtyp-canon}; and hence, we get
\(A \eqtyp \fix{X}B \eqtyp A'[\fix{X}B/X]\)
by \r{{\eqtyp}\mbox{-fix}}.
Therefore, we only consider the case that \(\fix{X}B \not\eqtyp \t\),
that is, \(\fix{X}B\) is not a {\tvariant}, below.
Since \(\fix{X}B \eqtyp B[\fix{X}B/X]\),
by Propositions~\ref{eqtyp-k-comp} and \ref{k-comp-subst},
\begin{enumerate}[{\kern8pt}(a)]
\item there exists some \(A'\) such that \(A' \component_k B\) and
    \(A \eqtyp A'[\fix{X}B/X]\), or
\item \(\Idp(B,\,X) \le k\) and \(A \component_{k-\Idp(B,\,X)} \fix{X}B\).
\end{enumerate}
The proof is completed by showing that (b) also implies (a).
Suppose that (b) is the case.
Since \(\fix{X}B\) is not a {\tvariant},
we get \(\Idp(B,\,X) > 0\)
by Propositions~\ref{NIdp-positive} and \ref{PIdp-positive}.
Hence, by induction hypothesis,
\(A' \component_{k-\Idp(B,\,X)} B\), which implies
\(A' \component_k B\) by Proposition~\ref{k-comp-monotonic}, and
\(A \eqtyp A'[\fix{X}B/X]\) for some \(A'\).
\qed\CHECKED{2014/07/15}
\end{proof}

\begin{proposition}\label{comp-finite}
\(\Comp(A)/{\eqtyp}\) and \(\Comp(A)/{\peqtyp}\) are finite sets.
\end{proposition}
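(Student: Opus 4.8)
The plan is to prove by induction on the height $h(A)$ that $\Comp(A)/{\eqtyp}$ is a finite set, and then to obtain finiteness of $\Comp(A)/{\peqtyp}$ essentially for free. For the latter, recall that $B_1 \eqtyp B_2$ implies $B_1 \peqtyp B_2$ (remarked after Definition~\ref{peqtyp-def}), so the identity on $\Comp(A)$ descends to a surjection $\Comp(A)/{\eqtyp} \twoheadrightarrow \Comp(A)/{\peqtyp}$; the target is therefore finite once the source is. Thus everything reduces to the $\eqtyp$-statement, which I would attack by cases on the outermost constructor of $A$, using Proposition~\ref{comp-congr} to control the non-recursive constructors and Lemma~\ref{k-comp-fix} for the recursive one.

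I would first dispose of the case that $A$ is a {\tvariant}: then $\Canon{A} = \t$ by Definition~\ref{canon-def}, and the first clause of Proposition~\ref{comp-congr} forces $\Canon{G} = \t$ for every $G \component A$, whence $G \eqtyp \t$ by Proposition~\ref{geqtyp-canon}; so $\Comp(A)/{\eqtyp} = \{[\t]\}$. Assuming now that $A$ is not a {\tvariant}, if $A = X$ then $\Canon{A} = \O^0 X$ and the second clause of Proposition~\ref{comp-congr} gives $\Canon{G} = \O^m X$ with $m \le 0$, i.e. $G \eqtyp X$, so there is a single class. If $A = A_1 \Impl A_2$ then $\Canon{A} = A_1 \Impl A_2$, and the third clause of Proposition~\ref{comp-congr} yields, for each $G \component A$, either $G \eqtyp A_1 \Impl A_2$, or $G \component A_1$, or $G \component A_2$; hence $\Comp(A)/{\eqtyp} \subseteq \{[A_1\Impl A_2]\} \cup \Comp(A_1)/{\eqtyp} \cup \Comp(A_2)/{\eqtyp}$, which is finite by the induction hypothesis since $h(A_1), h(A_2) < h(A)$.

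The constructor $\O$ needs a little more care, because $\Canon{(\O A')}$ may have the form $\O^{n+1}(C \Impl D)$ with $C, D$ produced by unfolding a recursion inside $A'$, and hence not bounded in height by $A'$. Rather than decomposing the canonical form, I would write $\O A' = (\O Z)[A'/Z]$ for a fresh $Z$ and invoke Proposition~\ref{comp-subst}: every $G \component \O A'$ satisfies either $G \component A'$, or $G \eqtyp G'[A'/Z]$ for some $G' \component \O Z$. A direct computation (Proposition~\ref{comp-congr}, since $\Canon{(\O Z)} = \O^1 Z$) shows $\Comp(\O Z)/{\eqtyp} = \{[Z], [\O Z]\}$, so by Proposition~\ref{geqtyp-subst} the type $G'[A'/Z]$ is $\eqtyp$ either to $Z[A'/Z] = A'$ or to $\O Z[A'/Z] = \O A'$. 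Consequently $\Comp(\O A')/{\eqtyp} \subseteq \{[\O A']\} \cup \Comp(A')/{\eqtyp}$, finite by the induction hypothesis as $h(A') < h(\O A')$.

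The recursive case $A = \fix{X}B$ is where the finiteness really originates, and it is exactly what Lemma~\ref{k-comp-fix} delivers: every $G \component \fix{X}B$ is $\eqtyp A'[\fix{X}B/X]$ for some $A' \component B$. By the induction hypothesis $\Comp(B)/{\eqtyp}$ is finite (since $h(B) < h(A)$), and because $A'_1 \eqtyp A'_2$ implies $A'_1[\fix{X}B/X] \eqtyp A'_2[\fix{X}B/X]$ by Proposition~\ref{geqtyp-subst}, the assignment $[A'] \mapsto [A'[\fix{X}B/X]]$ is a well-defined map whose image contains $\Comp(\fix{X}B)/{\eqtyp}$, so the latter is finite. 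The one genuine conceptual obstacle, which this organisation is designed to sidestep, is that a naive recursion on canonical forms does not terminate: peeling $\Canon{A} = \O^n(C \Impl D)$ down to $\Comp(C)$ and $\Comp(D)$ fails for recursive $A$, precisely because $C$ and $D$ may exceed $A$ in size. Inducting on $h(A)$ and routing the $\fix$ and $\O$ constructors through Lemma~\ref{k-comp-fix} and Proposition~\ref{comp-subst} keeps every recursive appeal at strictly smaller height, so beyond assembling these cases no further heavy computation remains — the substitution and canonical-form bookkeeping has already been discharged in the cited results.
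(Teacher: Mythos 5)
Your proposal is correct and follows essentially the same route as the paper: induction on $h(A)$, with the $\Impl$ case handled by Proposition~\ref{comp-congr}, the $\O$ case by writing $\O A' = (\O Z)[A'/Z]$ and applying Proposition~\ref{comp-subst}, the $\fix$ case by Lemma~\ref{k-comp-fix}, and the $\peqtyp$ statement reduced to the $\eqtyp$ one via the implication $\eqtyp\;\Rightarrow\;\peqtyp$. The only cosmetic differences are your explicit separate treatment of {\tvariant}s (which the paper's case split absorbs into the general clauses of Proposition~\ref{comp-congr}) and the added discussion of why a naive recursion on canonical forms would not terminate.
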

\begin{proof}
It suffices to show that \(\Comp(A)/{\eqtyp}\) is finite,
since \(\eqtyp\) implies \(\peqtyp\).
The proof proceeds by induction on \(h(A)\), and by cases on the form of \(A\).

\Case{\(A = X\) for some \(X\).}
In this case, by Proposition~\ref{comp-congr},
\(B \component A\) if and only if \(B \eqtyp X\).
Therefore, \(\Comp(A)/{\eqtyp}\) is a singleton.

\Case{\(A = \O C\) for some \(C\).}
In this case, since \(\O C = (\O X)[C/X]\),
by Proposition~\ref{comp-subst},
\(B \component A\) implies
\begin{enumerate}[{\kern8pt}(a)]
\item there exists some \(B'\) such that \(B' \component \O X\) and
    \(B \eqtyp B'[C/X]\), or
\item \(B \component C\).
\end{enumerate}
Note that (a) implies \(B' \eqtyp \O X\) or \(B' \eqtyp X\)
by Proposition~\ref{comp-congr}.
Hence,
\(\Comp(A)/{\eqtyp}\) is finite
by Proposition~\ref{geqtyp-subst},
since so is \(\Comp(C)/{\eqtyp}\)
by induction hypothesis.

\Case{\(A = C \Impl D\) for some \(C\) and \(D\).}
Similarly, by Proposition~\ref{comp-congr},
\(B \component A\) implies either
(a) \(B \eqtyp A\), (b) \(B \component C\), or (c) \(B \component D\).
Hence, \(\Comp(A)/{\eqtyp}\) is finite since so are
\(\Comp(C)/{\eqtyp}\) and \(\Comp(D)/{\eqtyp}\)
by induction hypothesis.

\Case{\(A = \fix{X}C\) for some \(X\) and \(C\).}
In this case, by Lemma~\ref{k-comp-fix},
\(B \component A\) implies there exists some \(B'\) such that
\(B' \component C\) and \(B \eqtyp B'[A/X]\).
Hence, \(\Comp(A)/{\eqtyp}\) is finite since so is
\(\Comp(C)/{\eqtyp}\) by induction hypothesis.
\qed\CHECKED{2014/07/13}
\end{proof}

\end{appendices}

\ifdetail
\clearpage
\phantomsection
\addcontentsline{toc}{section}{Index}
\printindex
\fi 

\end{document}